\documentclass[unsortedaddress,superscriptaddress,pra,notitlepage]{revtex4}

\usepackage{amsmath,amssymb}
\usepackage{braket}
\usepackage{theorem}
\usepackage{color}
\usepackage{amsfonts}
\usepackage[mathscr]{eucal}
\usepackage[hidelinks]{hyperref}
\usepackage{graphicx}
\usepackage{xcolor}
\usepackage{tikz}
\usepackage{scalerel}
\usepackage{bm}

\usetikzlibrary{calc,decorations.pathreplacing}
\usetikzlibrary{arrows,shapes}

\addtolength{\evensidemargin}{0.05\textwidth}
\addtolength{\oddsidemargin}{0.02\textwidth}
\addtolength{\textwidth}{-0.1\textwidth}
\addtolength{\textheight}{0.03\textheight}

\newtheorem{definition}{Definition}[section]

\newtheorem{thm}[definition]{Theorem}
\newtheorem{prop}[definition]{Proposition}
\newtheorem{lemma}[definition]{Lemma}

\newtheorem{rem}[definition]{Remark}

\newtheorem{cor}[definition]{Corollary}

\newtheorem{example}[definition]{Example}

\newenvironment{proof}[1][Proof]{\begin{trivlist}
\item[\hskip \labelsep {\bfseries #1}]}{\hfill$\Box$\end{trivlist}}

\newcommand{\nn}{\nonumber \\}

\def\theta{\vartheta}
\def\hil{{\mathcal H}}
\def\kil{{\mathcal K}}

\def\A{{\mathcal A}}

\def\B{{\mathcal B}}

\def\cD{{\mathcal D}}
\def\E{{\mathcal E}}
\def\F{{\mathcal F}}

\def\I{{\mathcal I}}
\def\J{{\mathcal J}}

\def\M{\mathcal{M}}

\def\P{{\mathcal P}}
\def\S{{\mathcal S}}

\def\X{{\mathcal X}}
\def\Y{{\mathcal Y}}

\def\half{\frac{1}{2}}
\def\imp{\implies}
\def\pmi{\impliedby}
\def\ep{\varepsilon}
\def\bN{\mathbb{N}}
\def\bC{\mathbb{C}}

\def\bR{\mathbb{R}}
\def\bZ{\mathbb{Z}}

\def\bP{\mathbb{P}}
\def\bU{\mathbb{U}}
\def\bz{\left(}
\def\jz{\right)}

\def\inv{^{-1}}

\def\egy{\mathbf 1}
\def\map{\Phi}

\def\W{W}

\def\sa{\mathrm{sa}}

\def\what{\widehat}
\def\oll{\overline}

\def\trans{^{\mathrm{T}}}

\def\rho{\varrho}
\def\povm{\mathrm{POVM}}

\def\ptp{\mathrm{PTP}}

\def\cl{\mathrm{cl}}

\def\nn{\nonumber}

\def\cptp{\mathrm{CPTP}}

\def\pplus{\mathrm{P}^+}

\def\nw{^{*}}

\def\meas{\mathrm{meas}}

\def\fdd{^{[1]}}

\def\bary{\mathrm{b}}
\def\bal{\mathrm{left}}

\def\p{_{\ge 0}}
\def\pne{_{\gneq 0}}
\def\pp{_{>0}}
\def\bs{\max}
\def\um{\mathrm{Um}}
\def\valt{\cdot}

\def\qv{\mathbf{q}}
\def\qn{q_0}
\def\qo{q_1}
\def\BS{\max}
\def\Um{\mathrm{Um}}

\def\ac{\mathrm{ac}}
\def\sD{\tilde D}
\def\comm{\mathrm{cm}}
\def\first{\rho}
\def\second{\sigma}
\def\ft{\first}
\def\sd{\second}
\def\sigmasupp{S}

\newcommand{\ki}[1]{\emph{#1}}

\newcommand{\s}{\mbox{ }}
\newcommand{\ds}{\mbox{ }\mbox{ }}

\newcommand{\norm}[1]{\left\| #1\right\|}

\newcommand{\inner}[2]{\left\langle #1 , #2\right\rangle}

\newcommand{\vecc}[1]{\underline{#1}}

\newcommand{\diad}[2]{\left|#1\right\rangle\!\left\langle #2\right|}

\newcommand{\pr}[1]{\diad{#1}{#1}}
\newcommand{\wtilde}[1]{\widetilde{#1}}

\newcommand{\verteq}{\rotatebox{90}{$=$}}

\newcommand{\ch}[1]{W_{#1}}
\newcommand{\Wx}[1]{W_{#1}}
\newcommand{\persp}[1]{{#1}^{pp}}
\newcommand{\QP}[1]{Q^{\bary,\qv}_P(#1)}
\newcommand{\psiP}[1]{\psi^{\bary,\qv}_P(#1)}

\newcommand{\potimes}{\mathbin{\hat{\otimes}}}
\newcommand{\signed}[1]{\wtilde{#1}}
\newcommand{\acc}[2]{A(#1\,|\,#2)}

\renewcommand{\theenumi}{(\roman{enumi})}

\makeatletter
\renewcommand{\p@enumii}{}
\makeatother

\DeclareMathOperator{\id}{id}
\DeclareMathOperator{\Tr}{Tr}

\DeclareMathOperator{\supp}{supp}

\DeclareMathOperator{\ran}{ran}

\DeclareMathOperator{\spann}{span}

\DeclareMathOperator{\spec}{spec}

\DeclareMathOperator{\ccdots}{\cdot\ldots\cdot}

\DeclareMathOperator{\logn}{\widehat\log}

\DeclareMathOperator{\divv}{\Delta}

\DeclareMathOperator{\D}{\mathit{D}}
\DeclareMathOperator{\DU}{\mathit{D}^{Um}}
\DeclareMathOperator{\Q}{\mathit{Q}}
\DeclareMathOperator{\FD}{D}
\DeclareMathOperator{\rt}{\Gamma}
\DeclareMathOperator{\nlog}{\what\log}
\DeclareMathOperator*{\medcap}{\scalerel*{\cap}{\textstyle\sum}}


\bibliographystyle{plain}

\usepackage{subfiles}

\begin{document}

%

\title{Geometric relative entropies and barycentric R\'enyi divergences}

\author{Mil\'an Mosonyi}
\email{milan.mosonyi@gmail.com}

\affiliation{MTA-BME Lend\"ulet Quantum Information Theory
Research Group, Budapest University of Technology and
Economics, M\H uegyetem rkp.~3., H-1111 Budapest,
Hungary}

\affiliation{HUN-REN Alfr\'ed R\'enyi Institute of Mathematics, Re\'altanoda street 13-15, H-1053, Budapest}

\affiliation{Department of Analysis and Operations Research, Institute of Mathematics,
Budapest University of Technology and Economics,
M\H uegyetem rkp.~3., H-1111 Budapest, Hungary}

\author{Gergely Bunth}
\email{gbunthy@gmail.com}

\affiliation{MTA-BME Lend\"ulet Quantum Information Theory
Research Group, Budapest University of Technology and
Economics, M\H uegyetem rkp.~3., H-1111 Budapest,
Hungary}

\affiliation{HUN-REN Alfr\'ed R\'enyi Institute of Mathematics, Re\'altanoda street 13-15, H-1053, Budapest}

\author{P\'eter Vrana}
\email{vranap@math.bme.hu}

\affiliation{MTA-BME Lend\"ulet Quantum Information Theory
Research Group, Budapest University of Technology and
Economics, M\H uegyetem rkp.~3., H-1111 Budapest,
Hungary}

\affiliation{Department of Algebra and Geometry, Institute of Mathematics,
Budapest University of Technology and Economics,
M\H uegyetem rkp.~3., H-1111 Budapest, Hungary}

\begin{abstract}
\centerline{\textbf{Abstract}}
\vspace{.3cm}

We give systematic ways of defining monotone quantum relative entropies and (multi-variate)
quantum R\'enyi divergences starting from a set of 
monotone quantum relative entropies. 

Interestingly, despite its central importance in information theory, only 
two additive and monotone quantum extensions of the classical relative entropy have been 
known so far, the Umegaki and the Belavkin-Staszewski relative entropies, which are the minimal and the maximal ones, respectively, with these properties.
Using the Kubo-Ando weighted geometric means, we give a general procedure to construct monotone and additive quantum relative entropies from a given one with the same properties; in particular, when starting from the Umegaki relative entropy, this gives a new one-parameter family of 
monotone (even under positive trace-preserving (PTP) maps) and additive quantum relative entropies interpolating between the Umegaki and the Belavkin-Staszewski ones on full-rank states.

In a different direction, we use a generalization of a classical 
variational formula 
to define multi-variate quantum R\'enyi quantities 
corresponding to any finite set of quantum relative entropies
$(D^{q_x})_{x\in\X}$ and real weights $(P(x))_{x\in\X}$ summing to $1$, as 
\begin{align*}
Q_P^{\bary,\qv}((\rho_x)_{x\in\X}):=\sup_{\tau\ge 0}
\left\{\Tr\tau-\sum_xP(x)D^{q_x}(\tau\|\rho_x)\right\}.
\end{align*}
We analyze in detail the properties of the resulting quantity inherited from the generating set of quantum relative entropies; in particular, we show that monotone quantum relative entropies define monotone R\'enyi quantities whenever $P$ is a probability measure. 
With the proper normalization, the negative logarithm of the above quantity gives 
a quantum extension of the 
classical R\'enyi $\alpha$-divergence in the $2$-variable case
($\X=\{0,1\}$, $P(0)=\alpha$).
We show that 
if both $D^{\qn}$ and $D^{\qo}$ are lower semi-continuous, monotone, and additive quantum relative entropies,
and at least one of them is strictly larger than the Umegaki relative entropy then the resulting 
barycentric R\'enyi divergences are strictly between the log-Euclidean and the maximal
R\'enyi divergences, 
and hence they are different from any previously studied quantum R\'enyi divergence.
\vspace{.2cm}

\noindent \textit{Keywords}: Quantum R\'enyi divergences, quantum relative entropy, 
multi-variate weighted geometric means.
\vspace{.2cm}

\noindent \textit{Mathematics subject classification}: primary 81P17, secondary 15A99.

\end{abstract}

\maketitle

\tableofcontents

\section{Introduction}

Dissimilarity measures of states of a system (classical or quantum) play a fundamental role in information theory, statistical physics, computer science, and various other disciplines. 
Probably the most relevant such measures for information theory are the \ki{R\'enyi divergences},
defined for finitely supported probability distributions $\rho,\sigma$ on a set $\I$ as
\begin{align}\label{classical Renyi}
\D_{\alpha}(\rho\|\sigma):=
\frac{1}{\alpha-1}\log\sum_{i\in\I}\rho(i)^{\alpha}\sigma(i)^{1-\alpha}
\end{align}
where $\alpha\in[0,+\infty)\setminus\{1\}$ is a parameter. (For simplicity, here we assume all probability distributions and quantum states to have full support; we give the formulas for the general case in Section \ref{sec:qdiv}.)
See, for instance, \cite{Csiszar} for the role of the R\'enyi divergences 
and derived information measures (entropy, divergence radius, channel capacity)
in classical state discrimination, as well as source- and channel coding. The limit $\alpha\to 1$ yields the \ki{Kullback-Leibler divergence},
or \ki{relative entropy}
\begin{align}\label{KL}
\lim_{\alpha\to 1}D_{\alpha}(\rho\|\sigma)=
\D(\rho\|\sigma):=
\sum_{i\in\I}\left[\rho(i)\log\rho(i)-\rho(i)\log\sigma(i)\right].
\end{align}
Interestingly, the relative entropy in itself also determines the whole one-parameter family of R\'enyi divergences, as for every $\alpha\in[0,1)\cup(1,+\infty)$, 
\begin{align}\label{classical variational}
\D_{\alpha}(\rho\|\sigma)=\frac{1}{1-\alpha}\min_{\omega\in\P(\I)}\left\{\alpha D(\omega\|\rho)+(1-\alpha)D(\omega\|\sigma)\right\},
\end{align}
where the optimization is over all finitely supported probability distributions $\omega$ on 
$\I$ \cite{CsM2003}.

Due to the non-commutativity of quantum states, for any given 
$\alpha\in[0,+\infty)$, there are infinitely many 
quantum extensions of the classical R\'enyi $\alpha$-divergence for 
pairs of quantum states, e.g., the measured, the maximal \cite{Matsumoto_newfdiv}, 
or the R\'enyi $(\alpha,z)$-divergences \cite{AD}.
Of particular importance are the Petz-type \cite{P86} and the 
sandwiched \cite{Renyi_new,WWY} R\'enyi divergences, which appear as exact quantifiers of the trade-off relations between the operationally relevant quantities in a number of 
information-theoretic problems, including state discrimination, classical-quantum channel coding, entanglement manipulation, and more 
\cite{ANSzV,hayashi2002error,Hayashicq,jensen2019asymptotic,LiYao_Reliability_2021,
LiYao_Channel_simulation,LiYao2022,LiYao_EA_sc,LiYaoHayashi2023,Nagaoka,
MO-cqconv,MO-cqconv-cc}.
While so far only these two families of quantum R\'enyi divergences have found such explicit operational interpretations, 
it is nevertheless useful, for a number of different reasons, to consider other quantum 
extensions as well. Indeed, apart from their study being interesting from the purely 
mathematical point of view of matrix analysis, some of these quantities serve as useful tools 
in proofs to arrive at the operationally relevant R\'enyi information quantities in various 
problems;
see, e.g., the role played by the so-called log-Euclidean R\'enyi divergences
$D_{\alpha,+\infty}$ in determining the strong converse exponent in various problems
\cite{LiYao_Reliability_2021,LiYao2022,LiYao_EA_sc,LiYaoHayashi2023,MO-cqconv,MO-cqconv-cc}, or 
the family of R\'enyi divergences $D_{\alpha}^{\#}$ 
introduced in \cite{FawziFawzi2021}, 
where it was used to determine the strong converse exponent of binary channel discrimination. 

Of course, only quantum extensions with a number of good mathematical properties may be interesting for quantum information theory, the most important being monotonicity, 
i.e., that for any two states $\rho,\sigma$, and 
completely positive trace-preserving (CPTP) map 
$\map$, the data processing inequality (DPI)  $D_{\alpha}(\map(\rho)\|\map(\sigma))\le D_{\alpha}(\rho\|\sigma)$ holds; this in turn is strongly related to the convexity properties of these quantities. Another desirable property is additivity
$D_{\alpha}(\rho_1\otimes\rho_2\|\sigma_1\otimes\sigma_2)
=D_{\alpha}(\rho_1\|\sigma_1)+D_{\alpha}(\rho_2\|\sigma_2)$.
However, quantum R\'enyi divergences without these properties might still be useful; indeed, 
$D_{\alpha,+\infty}$ is additive, but not monotone for $\alpha>1$ (the range of $\alpha$ values for which it was used in \cite{MO-cqconv,MO-cqconv-cc}), and 
$D_{\alpha}^{\#}$ is monotone, but not additive. 
The study of the mathematical properties of various quantum R\'enyi divergences and related information quantities and trace functionals has been the subject of intensive research in matrix analysis, functional analysis, and operator algebras in the past several decades; see, e.g., 
\cite{Ando,Beigi,BST,CFL,CL-MII,FL,Hiai_concavity2001,
Hiai-convexity,Hiai_fdiv_standard,Hiai_fdiv_max,Hiai_fdiv_Springer,sc_vN,
JP,Jencova_rev16,Jencova_NCLp,Jencova_NCLpII,Kosaki1982,
Lieb-Ruskai,OP,
Petz_Properties1986,Petz_QE_vN,P86,Zhang2018} and references therein.

Quantum divergences with good mathematical properties also play an important role in the 
study of the problem of state convertibility, where the question is whether 
a set of states
$(\rho_x)_{x\in\X}$ can be mapped into another set of states
$(\rho_x')_{x\in\X}$ with a joint quantum operation. This problem can be studied in a 
large variety of settings; single-shot or asymptotic, exact or approximate, with or without catalysts, allowing arbitrary quantum operations or only those respecting some symmetry or being free operations of some resource theory, 
any combination of these, and more. 
Necessary conditions for convertibility can be obtained using 
multi-variate functions on quantum states with suitable mathematical properties; 
for instance, for exact single-shot convertibility,
$F((\rho_x)_{x\in\Y})\ge F((\rho_x')_{x\in\Y})$
has to hold for any function $F$ whose variables are indexed by a subset 
$\Y\subseteq\X$ and which is 
monotone non-increasing under the joint application of an allowed quantum operation
on its arguments; the same has to hold
also for multi-copy or catalytic single-shot convertibility, if $F$
is additionally additive on tensor products, and 
for asymptotic catalytic convertibility, if, moreover, 
$F$ is lower semi-continuous in its variables. 
(We refer to \cite{farooq2023asymptotic} for the precise definitions of the various 
versions of state convertibility.)
Many of the quantum R\'enyi divergences provide such functions on pairs of states
(i.e., $|\Y|=2$), 
including the maximal
R\'enyi $\alpha$-divergences \cite{Matsumoto_newfdiv} with $\alpha\in[0,2]$, 
and the R\'enyi $(\alpha,z)$-divergences \cite{AD} for certain values of 
$\alpha$ and $z$ \cite{Zhang2018}.
In the converse direction, sufficient conditions in terms of R\'enyi divergences have been given for the convertibility of pairs of commuting states
in \cite{Brandao_etal_secondlaws,jensen2019asymptoticmajorization,Klimesh2007,Mu_Econometrica2020,turgut2007catalytic}, and these have been extended very recently 
in \cite{farooq2023asymptotic} to a complete characterization 
of asymptotic as well as approximate catalytic convertibility 
between finite sets of commuting states 
in terms of the monotonicity of the multi-variate R\'enyi quantities
\begin{align}\label{multiQ classical}
Q_{\vecc{\alpha}}(\rho_1,\ldots,\rho_r):=\Tr(\rho_1^{\alpha_1}\cdot\ldots\cdot\rho_r^{\alpha_r}),
\end{align}
where $\alpha_1+\ldots+\alpha_r=1$ and either all of them are non-negative or exactly one of 
them is positive. No sufficient conditions, however, are known in the general noncommutative 
case.

Motivated by the above, in this paper we set out to give systematic ways 
to define quantum R\'enyi divergences with good mathematical properties
(in particular, monotonicity), for two and for more variables. 
Other approaches to define multi-variate quantum R\'enyi divergences have been proposed very recently in \cite{bunth2021equivariant,FuruyaLashkariOuseph2023}, which we briefly review in Section \ref{sec:nc gm}; here we only note that our approach is completely different from the previous ones.
Indeed, while multi-variate quantum R\'enyi divergences in \cite{bunth2021equivariant,FuruyaLashkariOuseph2023}
were defined by putting a non-commutative geometric mean
(an iterated Kubo-Ando weighted geometric mean in \cite{FuruyaLashkariOuseph2023})
of the arguments
into the second argument of a R\'enyi $(\alpha,z)$-divergence 
(sandwiched R\'enyi divergenc in \cite{bunth2021equivariant}),
here we start with a collection of quantum relative entropies, and 
define corresponding multi-variate quantum R\'enyi divergences via a variational expression.

The structure of the paper is as follows. In Section \ref{sec:prelim} we give the necessary mathematical preliminaries. In Section \ref{sec:multidiv} we discuss in detail the definition 
and various properties of general multi-variate quantum divergences. 
Sections \ref{sec:cl Renyi} and \ref{sec:qRenyi} contain brief reviews of the definitions and properties of the 
classical and the quantum R\'enyi divergences that we use in the paper.
Section \ref{sec:nc gm} gives a high-level overview of the various ways we propose to 
define new quantum R\'enyi divergences from given ones, of which we work out in detail 
two approaches in this paper. We also illustrate other possible approaches
in Figure \ref{fig:multiRenyi} in Appendix \ref{sec:multiRenyi}.

In Section \ref{sec:gamma relentr} we focus on quantum relative entropies, 
i.e., various quantum extensions of the classical relative entropy.
Interestingly, despite its central importance in information theory, only 
two additive and monotone quantum extensions of the classical relative entropy have been 
known so far, the Umegaki \cite{Umegaki} and the Belavkin-Staszewski \cite{BS} relative entropies, which are the minimal and the maximal with these properties, respectively
\cite{Matsumoto_newfdiv}. 
Here we give a general procedure to construct monotone and additive quantum relative entropies from a given one with the same properties; in particular, when starting from the Umegaki relative entropy, this gives a new one-parameter family of 
monotone (even under positive trace-preserving (PTP) maps) and additive quantum relative entropies interpolating between the Umegaki and the Belavkin-Staszewski ones on full-rank states.

In Section \ref{sec:barycentric} we use a generalization of the classical 
variational formula in \eqref{classical variational}
to define quantum extensions of the multi-variate R\'enyi quantities 
\eqref{multiQ classical} corresponding to any set of quantum relative entropies
$(D^{q_x})_{x\in\X}$ and real weights $(P(x))_{x\in\X}$ summing to $1$, as 
\begin{align}\label{multiQ Intro}
Q_P^{\bary,\qv}((\rho_x)_{x\in\X}):=\sup_{\tau\ge 0}
\left\{\Tr\tau-\sum_xP(x)D^{q_x}(\tau\|\rho_x)\right\}.
\end{align}
We analyze in detail the properties of the resulting quantity inherited from the generating set of quantum relative entropies; in particular, we show that monotone quantum relative entropies define monotone R\'enyi quantities whenever $P$ is a probability measure. 
We also show that the negative logarithm of $Q_P^{\bary,\qv}((\rho_x)_{x\in\X})$
is equal to the $P$-weighted (left) relative entropy radius of the 
$(\rho_x)_{x\in\X}$, i.e., 
\begin{align}\label{multipsi Intro}
-\log Q_P^{\bary,\qv}((\rho_x)_{x\in\X})=\inf_{\omega}
\sum_xP(x)D^{q_x}(\omega\|\rho_x),
\end{align}
where the infimum is taken over all states on the given Hilbert space, and therefore we 
call the quantities in \eqref{multiQ Intro} \ki{barycentric R\'enyi quantities}.
With the proper normalization, the quantities in \eqref{multipsi Intro} give quantum extensions of the 
classical R\'enyi divergences \eqref{classical Renyi} in the $2$-variable case
($\X=\{0,1\}$, $P(0)=\alpha$), which we denote by $D_{\alpha}^{\bary,\qv}$. 
In Section \ref{sec:ex} we study the relation of the resulting ($2$-variable) quantum R\'enyi 
divergences to the known ones. 
It has been shown in \cite{MO-cqconv} that if $D^{\qn}=D^{\qo}=\DU$ 
is the Umegaki relative entropy \cite{Umegaki} then $D_{\alpha}^{\bary,\qv}$
is equal to the log-Euclidean R\'enyi divergence $D_{\alpha,+\infty}$. We show that 
if both $D^{\qn}$ and $D^{\qo}$ are lower semi-continuous monotone and additive quantum relative entropies,
and at least one of them is strictly larger than $\DU$ then the resulting 
barycentric R\'enyi divergences are strictly between the log-Euclidean and the maximal
R\'enyi divergences, 
and hence they are different from any previously studied quantum R\'enyi divergences.
Figure \ref{fig:barycentric} in Appendix \ref{sec:ordering} illustrates how the 
new quantum relative entropies and ($2$-variable) R\'enyi divergences considered in this paper are related
to other such quantities considered in the literature before.

Appendices \ref{sec:jointconc} and \ref{sec:proof of lemma} contain proofs
relegated from the main body of the paper.

\section{Preliminaries}
\label{sec:prelim}

For a finite-dimensional Hilbert space $\hil$, let $\B(\hil)$ denote the set of all linear operators on $\hil$, 
and let $\B(\hil)_{\sa}$, $\B(\hil)_{\ge 0}$, $\B(\hil)_{\gneq 0}$, and 
$\B(\hil)_{>0}$ denote the set of 
self-adjoint, positive semi-definite (PSD), non-zero positive semi-definite, and positive definite operators, respectively. 
For an interval $J\subseteq\bR$, let 
$\B(\hil)_{\sa,J}:=\{A\in\B(\hil)_{\sa}:\spec(A)\subseteq J\}$, i.e., 
the set of self-adjoint operators on $\hil$ with all their eigenvalues in $J$.
Let $\S(\hil):=\{\rho\in\B(\hil)\p,\,\Tr\rho=1\}$ denote the set of \ki{density operators}, or \ki{states}. 
For an operator $X\in\B(\hil)$, 
\begin{align*}
\norm{X}_{\infty}:=\max\{\norm{X\psi}:\,\psi\in\hil,\,\norm{\psi}=1\}
\end{align*}
denotes the \ki{operator norm} of $X$ (i.e., its largest singular value). 

Similarly, for a finite set $\I$, we will use the notation $\F(\I):=\bC^{\I}$ for the set of 
complex-valued functions on $\I$, and 
$\F(\I)\p$, $\F(\I)\pne$, $\F(\I)\pp$ for the set of non-negative, non-negative and not constant zero, and strictly positive functions on $\I$. The set of probability density functions on $\I$ will be denoted by $\P(\I)$. When equipped with the maximum norm, 
$\F(\I)$ becomes a commutative $C^*$-algebra, which we denote by $\ell^{\infty}(\I)$. 
In the more general case when  $\I$ is an arbitrary non-empty set, we will also use the notations $\P_f(\I)$ for the set of 
finitely supported probability measures, and 
 $\P_f^{\pm}(\I)$ for the set of 
finitely supported signed probability measures on $\I$, i.e., 
\begin{align*}
\P_f^{\pm}(\I):=\left\{P\in\bR^{\I}:\,|\supp P|<+\infty,\,\sum_{i\in\I}P(i)=1\right\},\ds\ds
\supp P:=\{i\in\I:\,P(i)\ne 0\}.
\end{align*}
We also introduce the following subset of signed probability measures:
\begin{align*}
\P_{f,1}^{\pm}(\I):=
\left\{P\in\P_f^{\pm}(\I):\,
\exists\,i_+\in\I\text{ s.t. }P(i_+)>0\text{ and }P(i)\le 0,\,i\in\I\setminus\{i_+\} \right\},
\end{align*}
which plays an important role in the definition of multi-variate R\'enyi divergences; 
see Lemma \ref{lemma:cl Q uniqueness}. 

For any non-empty set $\X$, let 
\begin{align*}
\B(\X,\hil),\ds\ds
\B(\X,\hil)\p,\ds\ds
\B(\X,\hil)\pne,\ds\ds
\B(\X,\hil)\pp,\ds\ds
\S(\X,\hil),
\end{align*}
denote the set of functions mapping from $\X$ into 
$\B(\hil)$,
$\B(\hil)\p$, $\B(\hil)\pne$, $\B(\hil)\pp$, and
$\S(\hil)$, respectively. Elements of $\S(\X,\hil)$ are called \ki{classical-quantum channels}, or \ki{cq channels}, and we will use the terminology 
\ki{generalized classical-quantum channels}, or \ki{gcq channels}, 
for the elements of $\B(\X,\hil)\pne$.
We will normally use the notation $W=(W_x)_{x\in\X}$ to denote elements of 
$\B(\X,\hil)\p$.
We say that $W\in\B(\X,\hil)$ is \ki{classical} if 
there exists an orthonormal basis $(e_i)_{i\in\I}$ in $\hil$ such that 
$W_x=\sum_{i\in\I}\inner{e_i}{W_xe_i}\pr{e_i}$, $x\in\X$;
we call any such orthonormal basis a \ki{$W$-basis}. Equivalently, we may identify
$W$ with the collection of functions $((\wtilde W_x(i):=\inner{e_i}{W_xe_i})_{i\in\I})_{x\in\X}\in \F(\X,\I)$, where we use the notations
\begin{align*}
\F(\X,\I),\ds\ds
\F(\X,\I)\p,\ds\ds
\F(\X,\I)\pne,\ds\ds
\F(\X,\I)\pp,\ds\ds
\P(\X,\I),
\end{align*}
for the sets of functions mapping elements of $\X$ into functions $f_x\in\F(\I)$, 
$x\in\X$,
on the finite set $\I$, 
such that the $f_x$ are arbitrary/non-negative/non-negative and not constant zero/strictly positive/probability density functions on $\I$. 

Operations on elements of 
$\B(\X,\hil)$ are always meant pointwise; e.g., 
for any $\W,\W^{(1)},\W^{(2)}\in\B(\X,\hil)$, $V\in\B(\hil,\kil)$, and 
$\sigma\in\B(\kil)$, where $\kil$ is an arbitrary finite-dimensional Hilbert space,
\begin{align}
&V\W V^*:=(V\Wx{x}V^*)_{x\in\X},\ds\ds\\
&W^{(1)}\potimes W^{(2)}:=\bz \W^{(1)}_x\otimes\W^{(2)}_x\jz_{x\in\X},\ds\ds
\label{gqc tensor product soros}\\
&\W\potimes \sigma:=\W\otimes \sigma:=
(\Wx{x}\otimes \sigma)_{x\in\X}\,.\label{tensor with constant}
\end{align}
Note that here we only consider the (pointwise) tensor product of functions defined on the same set, and that this notion of tensor product is different from the one used to describe the 
parallel action of two cq channels, given by
\begin{align}\label{gqc tensor product}
W^{(1)}\otimes W^{(2)}:=\bz \W^{(1)}_{x_1}\otimes\W^{(2)}_{x_2}\jz_{(x_1,x_2)\in\X_1\times\X_2},
\end{align}
where $W^{(i)}\in\B(\X^{(i)},\hil^{(i)})$, $i=1,2$, and possibly 
$\X^{(1)}\ne \X^{(2)}$, $\hil^{(1)}\ne \hil^{(2)}$.
The tensor product in \eqref{tensor with constant} can be interpreted either in this setting, with $\X^{(1)}=\X$, $W^{(1)}=W$ and $\X^{(2)}=\{0\}$, $W^{(2)}_0=\sigma$, or as the pointwise 
tensor product between $W^{(1)}=W\in\B(\X,\hil)$ and the constant function
$W^{(2)}\in\B(\X,\hil)$, $W^{(2)}_x=\sigma$, $x\in\X$.

The set of projections on $\hil$ is denoted by 
$\bP(\hil):=\{P\in\B(\hil):\,P^2=P=P^*\}$.
For $P,Q\in\bP(\hil)$, the projection onto $(\ran P)\cap(\ran Q)$ is denoted by $P\wedge Q$.
For a sequence of projections $P_1,\ldots,P_r\in\B(\hil)$ summing to $I$, the corresponding 
\ki{pinching} operation is 
\begin{align*}
\B(\hil)\ni X\mapsto\sum_{i=1}^r P_iXP_i.
\end{align*}

For a self-adjoint operator $A$, let $P^A_a:=\egy_{\{a\}}(A)$ denote the spectral projection of $A$
corresponding to the singleton $\{a\}\subset\bR$. 
(Here and henceforth $\egy_H$ stands for the characteristic (or indicator) function
of a set $H$.)
The projection onto the support of $A$ is $\sum_{a\ne 0}P^A_a$; in particular, if $A$ is 
positive semi-definite, it is equal to $\lim_{\alpha\searrow 0}A^{\alpha}=:A^0$. In general, 
we follow the convention that real powers of a positive semi-definite operator $A$ are taken 
only on its support, i.e., for any $x\in\bR$, $A^x:=\sum_{a>0}a^x P^A_a$.
In particular, $A\inv:=\sum_{a>0}a\inv P^A_a$ stands for the generalized inverse of $A$,
and $A\inv A=AA\inv=A^0$. For $A\in\B(\hil)\p$ and a projection $P$ on $\hil$ we write
$A\in\B(P\hil)\p$ if $A^0\le P$. 

For two PSD operators $\rho$, $\sigma$, we write $\rho\perp\sigma$ if 
$\ran\rho\perp\ran\sigma$, which is equivalent to $\rho\sigma=0$, and further to 
$\inner{\rho}{\sigma}_{HS}:=\Tr\rho\sigma=0$, and to $\rho^0\sigma^0=0$.
In particular, it implies $\rho^0\wedge\sigma^0=0$, but not the other way around.

For two finite-dimensional Hilbert spaces $\hil,\kil$, we will use the notations 
$\ptp(\hil,\kil)$ and $\cptp(\hil,\kil)$ for the set of positive trace-preserving linear maps and 
the set of completely positive trace-preserving linear maps, respectively, from $\B(\hil)$ to $\B(\kil)$. We will also use the notation 
$\pplus(\hil,\kil)$ for the set of (positive) linear maps 
from $\B(\hil)$ to $\B(\kil)$ such that 
$\map(\rho)\in\B(\kil)\pne$ for all
$\rho\in\B(\hil)\pne$. We will also consider (completely) positive maps of the form
$\map:\,\B(\hil)\to\ell^{\infty}(\I)$ and 
$\map:\,\ell^{\infty}(\I)\to\B(\hil)$.

For a finite-dimensional Hilbert space $\hil$ and a natural number $n$, 
we denote by 
\begin{align*}
\povm(\hil,[n]):=\left\{M=(M_i)_{i=0}^{n-1}\in\B(\hil)\p^{[n]}:\,\sum_{i=0}^{n-1} M_i=I\right\}
\end{align*}
the set of \ki{$n$-outcome positive operator valued measures (POVMs)} on $\hil$, where
\begin{align*}
[n]:=\{0,\ldots,n-1\}.
\end{align*}
Any $M\in\povm(\hil,[n])$ determines a CPTP map $\M:\,\B(\hil)\to\ell^{\infty}([n])$ by 
\begin{align*}
\M(\valt):=\sum_{i=0}^{n-1}(\Tr M_i(\valt))\egy_{\{i\}}.
\end{align*}

For a differentiable function $f$ defined on an interval $J\subseteq\bR$, let 
$f\fdd:\,J\times J\to\bR$ be its \ki{first divided difference function}, defined as
\begin{align*}
f\fdd(a,b):=\begin{cases}
\frac{f(a)-f(b)}{a-b},&a\ne b,\\
f'(a),&a=b,
\end{cases}\ds\ds\ds a,b\in J.
\end{align*}
If $f$ is a continuously differentiable function on an open interval $J\subseteq\bR$ then for any finite-dimensional  Hilbert space $\hil$, $A\mapsto f(A)$ is Fr\'echet differentiable on $\B(\hil)_{\sa,J}$, and its 
Fr\'echet derivative $(Df)[A]$ at a point $A\in \B(\hil)_{\sa,J}$ is given by 
\begin{align}\label{opfunction derivative}
(Df)[A](Y)=\sum_{i,j=1}^rf\fdd(a_i,a_j)P_iYP_j,\ds\ds\ds Y\in\B(\hil)_{\sa},
\end{align}
for any $P_1,\ldots,P_r\in\bP(\hil)$ and $a_1,\ldots,a_r\in\bR$ such that 
$\sum_{i=1}^r P_i=I$, $\sum_{i=1}^r a_i P_i=A$.
See, e.g., \cite[Theorem V.3.3]{Bhatia} or \cite[Theorem 2.3.1]{Hiai_book}.

By $\log$ we denote the natural logarithm, and
we use two different extensions of it to $[0,+\infty]$, defined as
\begin{align*}
\log x:=\begin{cases}
-\infty,&x=0,\\
\log x,& x\in(0,+\infty),\\
+\infty,&x=+\infty,
\end{cases}
\ds\ds\ds
\nlog x:=\begin{cases}
0,&x=0,\\
\log x,& x\in(0,+\infty),\\
+\infty,&x=+\infty.
\end{cases}
\end{align*}
Throughout the paper we use the convention
\begin{align}\label{zero times infty}
0\cdot(\pm\infty):=0.
\end{align}

For a function $f:\,(0,+\infty)\to\bR$, the corresponding 
\ki{operator perspective function} \cite{Effros,ENG,HiaiMosonyi2017}
$\persp{f}$ is defined on pairs 
of positive definite operators $\rho,\sigma\in\B(\hil)\pp$ as 
\begin{align}\label{persp def}
\persp{f}(\rho,\sigma):=\sigma^{1/2}f\bz \sigma^{-1/2}\rho\sigma^{-1/2}\jz\sigma^{1/2},
\end{align}
and it is extended to pairs of positive semi-definite operators $\rho,\sigma$ as 
\begin{align}\label{persp extension}
\persp{f}(\rho,\sigma):=\lim_{\ep\searrow 0}\persp{f}(\rho+\ep I,\sigma+\ep I),
\end{align}
whenever the limit exists. 
It is easy to see that for the \ki{transpose function} $\tilde f(x):=xf(1/x)$, $x>0$, we have
\begin{align}\label{persp trans}
\persp{f}(\rho,\sigma)=\persp{\tilde f}(\sigma,\rho),
\end{align}
whenever both sides are well-defined. 
For any $\gamma\in(0,1)$,
the choice $f_{\gamma}:=\id_{[0,+\infty)}^{\gamma}$ gives the
\ki{Kubo-Ando $\gamma$-weighted geometric mean}, denoted by 
$\persp{(f_{\gamma})}(\rho,\sigma)=:\sigma\#_{\gamma}\rho$, $\rho,\sigma\in\B(\hil)\p$. 
We give a somewhat detailed exposition of these concepts for the purposes of the present paper in Appendix \ref{sec:oppersp}.
\medskip

The following is completely elementary; we state it explicitly because we use it 
multiple times in the paper.

\begin{lemma}\label{lemma:optimization}
For any $c\in\bR$, the supremum of the function 
$[0,+\infty)\ni t\mapsto t-t\log t-tc=:f_c(t)$ is $e^{-c}$,
attained uniquely at $t=e^{-c}$.
\end{lemma}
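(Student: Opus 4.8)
The plan is to treat this as a one-variable calculus exercise, being careful at the boundary point $t=0$ and about differentiability. First I would dispose of the case $t=0$: using the convention \eqref{zero times infty} we have $f_c(0)=0-0-0=0$, and since $e^{-c}>0$ this boundary value is never the supremum and in particular never equals the claimed maximum unless we are comparing it against interior values (which we will see dominate). So it suffices to maximize $f_c$ over the open half-line $(0,+\infty)$, where $f_c$ is smooth.

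Next I would compute the derivative on $(0,+\infty)$: since $\frac{d}{dt}(t\log t)=\log t+1$, we get $f_c'(t)=1-(\log t+1)-c=-\log t-c$, so $f_c'(t)=0$ precisely when $\log t=-c$, i.e. $t=e^{-c}$. Moreover $f_c'(t)>0$ for $t<e^{-c}$ and $f_c'(t)<0$ for $t>e^{-c}$, so $f_c$ is strictly increasing up to $e^{-c}$ and strictly decreasing after, which shows the critical point is the unique global maximizer on $(0,+\infty)$. Evaluating, $f_c(e^{-c})=e^{-c}-e^{-c}\log(e^{-c})-e^{-c}c=e^{-c}-e^{-c}(-c)-e^{-c}c=e^{-c}$. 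Finally, comparing with the boundary, $f_c(0)=0<e^{-c}=f_c(e^{-c})$, so the supremum over all of $[0,+\infty)$ equals $e^{-c}$ and is attained uniquely at $t=e^{-c}$.

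There is essentially no obstacle here; the only points requiring a word of care are (i) invoking the convention $0\cdot(\pm\infty)=0$ so that $f_c$ is genuinely well-defined and continuous at $t=0$ (in fact $t\log t\to 0$ as $t\searrow 0$, so continuity at $0$ holds regardless), and (ii) phrasing the monotonicity argument rather than merely exhibiting a critical point, since one wants \emph{uniqueness} of the maximizer and the fact that it is a global (not just local) maximum — strict monotonicity on each side of $e^{-c}$ gives both at once. Alternatively one could note that $-t\log t$ is strictly concave on $[0,+\infty)$ and the affine term $t-tc$ preserves strict concavity, so $f_c$ is strictly concave and its unique stationary point is the unique global maximizer; I would probably include this one-line concavity remark as well since strict concavity of $f_c$ is itself used later in the paper.
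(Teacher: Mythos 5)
Your proof is correct and follows essentially the same elementary calculus route as the paper, which simply computes $f_c'(t)=-\log t-c$, finds the unique critical point $t=e^{-c}$, and notes $f_c''(t)=-1/t<0$ to conclude. Your only variations — using the sign change of $f_c'$ instead of the second derivative, and explicitly checking the boundary value $f_c(0)=0<e^{-c}$ — are just slightly more detailed versions of the same argument.
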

\begin{proof}
We have $f'(t)=-\log t-c=0\iff t=e^{-c}$,  $f''(t)=-1/t<0$, $t\in(0,+\infty)$,
from which the statement follows immediately.
\end{proof}

The following minimax theorem is from \cite[Corollary A.2]{MH}.
\begin{lemma}\label{lemma:minimax2}
Let $X$ be a compact topological space, $Y$ be an ordered set, and let $f:\,X\times Y\to \bR\cup\{-\infty,+\infty\}$ be a function. Assume that
\smallskip

\s(i) $f(.\,,\,y)$ is lower semi-continuous for every $y\in Y$ and
\smallskip

(ii) \begin{minipage}[t]{15cm}
$f(x,.)$ is monotonic increasing for every $x\in X$, or
$f(x,.)$ is monotonic decreasing for every\\
 $x\in X$.
\end{minipage}
Then 
\begin{align}\label{minimax statement}
\inf_{x\in X}\sup_{y\in Y}f(x,y)=
\sup_{y\in Y}\inf_{x\in X}f(x,y),
\end{align}
and the infima in \eqref{minimax statement} can be replaced by minima.
\end{lemma}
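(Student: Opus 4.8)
The plan is to prove the one nontrivial inequality, $\inf_{x}\sup_{y}f(x,y)\le\sup_{y}\inf_{x}f(x,y)$, since the opposite inequality holds for an arbitrary function $f$: from $\inf_{x'}f(x',y)\le f(x,y)\le\sup_{y'}f(x,y')$ one obtains, taking $\sup$ over $y$ on the left-hand side and $\inf$ over $x$ on the right-hand side, that $\sup_{y}\inf_{x}f\le\inf_{x}\sup_{y}f$. First I would reduce to the increasing case: replacing the order on $Y$ by its opposite turns a monotonic decreasing $f(x,\cdot)$ into a monotonic increasing one and does not affect either side of \eqref{minimax statement}, so I may assume $f(x,\cdot)$ is monotonic increasing for every $x\in X$. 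I would also record that $g(x):=\sup_{y\in Y}f(x,y)$ is lower semi-continuous, because its superlevel sets $\{x:g(x)>\mu\}=\bigcup_{y\in Y}\{x:f(x,y)>\mu\}$ are open; since $X$ is compact this makes $g$, as well as each $f(\cdot,y)$, attain its infimum on $X$, which in particular yields the final assertion about replacing the infima by minima.

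Next I would argue by contradiction. Put $\lambda:=\sup_{y}\inf_{x}f(x,y)$ and assume $\mu_{0}:=\min_{x}g(x)>\lambda$; pick a real number $\mu$ with $\lambda<\mu<\mu_{0}$ (this is possible precisely because $\lambda<\mu_{0}$, after separately taking care of the cases $\lambda=-\infty$ and $\mu_{0}=+\infty$). Then $g(x)\ge\mu_{0}>\mu$ for every $x\in X$, so the sets $U_{y}:=\{x\in X:f(x,y)>\mu\}$, which are open by lower semi-continuity of $f(\cdot,y)$, form an open cover of $X$. By compactness finitely many of them suffice, $X=U_{y_{1}}\cup\ldots\cup U_{y_{n}}$.

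Finally I would collapse this finite family to a single index: let $y^{*}$ be an upper bound of $\{y_{1},\ldots,y_{n}\}$ in $Y$ (which exists since $Y$ is linearly ordered). For any $x\in X$, choosing $i$ with $x\in U_{y_{i}}$ gives $f(x,y_{i})>\mu$, and then $f(x,y^{*})\ge f(x,y_{i})>\mu$ by $y_{i}\le y^{*}$ and monotonicity of $f(x,\cdot)$. Hence $\inf_{x}f(x,y^{*})\ge\mu$, so that $\lambda\ge\inf_{x}f(x,y^{*})\ge\mu>\lambda$ --- a contradiction; therefore $\min_{x}\sup_{y}f(x,y)\le\lambda$, which is the desired inequality. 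I expect the only delicate points to be the bookkeeping with the values $\pm\infty$ and the justification that the relevant minima are attained; the genuinely load-bearing steps, and the only places where the hypotheses enter, are the extraction of a finite subcover (compactness together with lower semi-continuity) and the reduction of that subcover to the single element $y^{*}$ via the order and monotonicity --- this last step is also exactly what would fail if $Y$ were only partially ordered.
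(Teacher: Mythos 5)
Your proof is correct. Note that the paper itself gives no proof of this lemma — it is quoted from \cite[Corollary A.2]{MH} — and your argument (the trivial inequality, reduction to the increasing case, lower semi-continuity of $x\mapsto\sup_y f(x,y)$ giving attainment of the infima, and the finite-subcover step collapsed to a single $y^*$ via an upper bound and monotonicity) is essentially the standard proof of that cited result. Your remark that the single genuinely order-theoretic input is the existence of an upper bound for a finite subset of $Y$ (so total order, or merely upward directedness, is what is really used) is exactly the right reading of the hypothesis ``ordered set'' as it is used in the paper.
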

\medskip

The following might be known; however, we could not find a reference for it, so we provide a detailed proof. For simplicity, we include the Hausdorff property in the definition of a topological space, although the statement also holds without this.

\begin{lemma}\label{lemma:usc}
Let $X$ be a topological space, $Y$ be an arbitrary set, and 
$f:\,X\times Y\to\bR\cup\{\pm\infty\}$ be a function.
\begin{enumerate}
\item\label{usc1}
If $f(.,y)$ is lower semi-continuous for every $y\in Y$ then $\sup_{y\in Y}f(.,y)$ is lower semi-continuous.

\item\label{usc2}
If $Y$ is a compact topological space, and $f$ is lower semi-continuous on $X\times Y$ w.r.t.~the product topology, then 
$\inf_{y\in Y}f(.,y)$ is lower semi-continuous.

\item\label{usc3}
If $Y$ is a compact topological space, and $f$ is continuous on $X\times Y$ w.r.t.~the product topology, then 
$\inf_{y\in Y}f(.,y)$ and
$\sup_{y\in Y}f(.,y)$ are both continuous.
\end{enumerate}
\end{lemma}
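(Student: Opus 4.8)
The plan is to handle the three parts in the order given, since part \ref{usc3} is an immediate consequence of parts \ref{usc1} and \ref{usc2} applied to both $f$ and $-f$. For part \ref{usc1}, I would argue directly from the definition of lower semi-continuity via sublevel (or rather, superlevel) sets: a function $g$ on $X$ is lower semi-continuous iff $\{x:g(x)>c\}$ is open for every $c\in\bR$, equivalently iff $\{x:g(x)\le c\}$ is closed for every $c$. Writing $g:=\sup_{y\in Y}f(\cdot,y)$, one has $g(x)>c$ iff there exists $y\in Y$ with $f(x,y)>c$, so $\{g>c\}=\bigcup_{y\in Y}\{x:f(x,y)>c\}$, a union of open sets, hence open. (A small point to dispatch: the values $\pm\infty$ are allowed, so I would phrase lsc in the standard extended-real form, $\liminf_{x'\to x}g(x')\ge g(x)$, and note the superlevel-set characterization still holds with $c$ ranging over $\bR$.) This step is entirely routine.

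The heart of the lemma is part \ref{usc2}. Set $h:=\inf_{y\in Y}f(\cdot,y)$; I must show $\{x:h(x)>c\}$ is open for each $c\in\bR$, equivalently that $\{x:h(x)\le c\}$ is closed. Fix $x_0$ with $h(x_0)>c$. Since $h(x_0)=\inf_y f(x_0,y)>c$, for every $y\in Y$ we have $f(x_0,y)>c$; by lower semi-continuity of $f$ on $X\times Y$ at the point $(x_0,y)$, there are open neighborhoods $U_y\ni x_0$ in $X$ and $V_y\ni y$ in $Y$ such that $f>c$ on $U_y\times V_y$. The family $\{V_y\}_{y\in Y}$ is an open cover of the compact space $Y$, so it admits a finite subcover $V_{y_1},\dots,V_{y_n}$. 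Put $U:=\bigcap_{k=1}^n U_{y_k}$, an open neighborhood of $x_0$. Then for any $x\in U$ and any $y\in Y$, pick $k$ with $y\in V_{y_k}$; since $x\in U\subseteq U_{y_k}$ we get $(x,y)\in U_{y_k}\times V_{y_k}$, hence $f(x,y)>c$. Taking the infimum over $y$ gives $h(x)\ge c$ — but I actually want the strict inequality $h(x)>c$ to conclude $\{h>c\}$ is open. This is the one genuinely delicate point, and the standard fix is to run the argument with $c$ replaced by a strictly smaller constant: since $h(x_0)>c$, choose $c'$ with $c<c'<h(x_0)$ (if $h(x_0)=+\infty$ take any $c'>c$), apply the above with $c'$ in place of $c$ to obtain an open $U\ni x_0$ on which $f(\cdot,y)>c'$ for all $y$, whence $h\ge c'>c$ on $U$. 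Thus $\{h>c\}$ is open and $h$ is lower semi-continuous. I expect this off-by-$\epsilon$ subtlety in the compactness argument to be the only real obstacle; everything else is bookkeeping.

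For part \ref{usc3}, if $f$ is continuous on $X\times Y$ then both $f$ and $-f$ are lower semi-continuous on the product, and $f$ is in particular lower semi-continuous separately in $x$. By part \ref{usc2}, $\inf_y f(\cdot,y)$ is lower semi-continuous; by part \ref{usc2} applied to $-f$, $\inf_y(-f(\cdot,y))=-\sup_y f(\cdot,y)$ is lower semi-continuous, i.e.\ $\sup_y f(\cdot,y)$ is upper semi-continuous. Combining with part \ref{usc1} (which gives $\sup_y f(\cdot,y)$ lower semi-continuous), $\sup_y f(\cdot,y)$ is continuous; and applying part \ref{usc1} to $-f$ shows $\inf_y f(\cdot,y)$ is upper semi-continuous, so it too is continuous. (Here the Hausdorff assumption on $Y$ is not actually needed, as remarked before the statement; compactness is what matters, and in part \ref{usc1} no hypothesis on $Y$ is used at all.) This completes the proof.
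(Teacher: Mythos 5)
Your proof is correct, and for the substantive part \ref{usc2} it takes a genuinely different route from the paper. The paper argues with nets: for a net $x_i\to \oll x$ it picks, for each $i$, a minimizer $y_i$ of $f(x_i,\cdot)$ (which exists because a lower semi-continuous function on the compact space $Y$ attains its infimum), passes to a subnet realizing the liminf, extracts a further subnet along which the $y$'s converge to some $\oll y$ by compactness, and then invokes joint lower semi-continuity of $f$ at $(\oll x,\oll y)$ to get $\inf_y f(\oll x,y)\le f(\oll x,\oll y)\le\liminf_i\inf_y f(x_i,y)$. You instead work with superlevel sets and a tube-lemma-style finite subcover: for $x_0$ with $\inf_y f(x_0,y)>c$ you insert a strictly smaller threshold $c'$, cover $\{x_0\}\times Y$ by product neighborhoods on which $f>c'$, use compactness of $Y$ to reduce to finitely many, and intersect the $X$-factors; the off-by-$\epsilon$ device with $c'$ correctly turns the resulting bound $\ge c'$ into strict inequality $>c$, so $\{\inf_y f(\cdot,y)>c\}$ is open. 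Your argument is more elementary (no nets or subnets, no attainment of the infimum needed) and makes manifest that no separation axioms on $Y$ are used, while the paper's net argument is shorter to write given the paper's stated convention on nets and generalizes in the same way. Parts \ref{usc1} and \ref{usc3} are handled essentially as in the paper (the paper calls them trivial/immediate; your superlevel-set union for \ref{usc1} and the $\pm f$ bookkeeping for \ref{usc3} are exactly the intended reasoning).
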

\begin{proof}
The assertion in \ref{usc1} is trivial from the definition of lower semi-continuity.

To prove \ref{usc2}, let
$(x_i)_{i\in I}\subseteq X$ be a generalized sequence and $\oll x=\lim_{i\to\infty}x_i$.
Since $f(x_i,.)$ is  lower semi-continuous on $Y$, and $Y$ is compact, there exists a $y_i\in Y$ such that 
$\inf_{y\in Y}f(x_i,y)=f(x_i,y_i)$. 
Let 
$(f(x_{\alpha(j)},y_{\alpha(j)}))_{j\in J}$ be a subnet such that  
\begin{align*}
\liminf_{i\to\infty}f(x_i,y_i)=\lim_{j\to\infty}f(x_{\alpha(j)},y_{\alpha(j)}).
\end{align*}
Since $Y$ is compact, there exists a 
subnet $(y_{\alpha(\beta(k))})_{k\in K}\subseteq Y$ converging to some $\oll y\in Y$.
Then 
\begin{align*}
\inf_{y\in Y}f(\oll x,y)
&\le
f(\oll x,\oll y)
\le
\lim_{k\to\infty}f(x_{\alpha(\beta(k))},y_{\alpha(\beta(k))})
=
\lim_{j\to\infty}f(x_{\alpha(j)},y_{\alpha(j)})\\
&=
\liminf_{i\to\infty}f(x_i,y_i)
=
\liminf_{i\to\infty}\inf_{y\in Y}f(x_i,y),
\end{align*}
where the second inequality follows from the lower semi-continuity of $f$ on $X\times Y$. 

The assertion in \ref{usc3} follows immediately from \ref{usc1} and \ref{usc2}.
\end{proof}

We will also use Fekete's lemma, and a straightforward generalization of it, stated together below. 

\begin{lemma}\label{lemma:Fekete}
Let $a_n\in[-\infty,+\infty]$, $n\in\bN$, be a subadditive sequence, i.e., 
for every $n,m\in\bN$, $a_n+a_m$ is well-defined and 
$a_{n+m}\le a_n+a_m$. 
Then
\begin{align}\label{Fekete limit0}
\liminf_{n\to+\infty}\frac{1}{n}a_n=\inf_{n\in\bN}\frac{1}{n}a_n,
\end{align}
and the following are equivalent:
\begin{enumerate}
\item\label{Fekete1}
$a_n=+\infty$ for every $n\in\bN$, or there exists an $n_0\in\bN$ such that 
$a_n<+\infty$ for every $n\ge n_0$.
\item\label{Fekete2-1}
$a_n=+\infty$ for every $n\in\bN$, or there exists a $k\in\bN$ such that 
$a_k,a_{k+1}<+\infty$.
\item\label{Fekete2}
$a_n=+\infty$ for every $n\in\bN$, or there exist relative primes 
$k,m\in\bN$ such that 
$a_k,a_{m}<+\infty$. 
\item\label{Fekete3}
The sequence $(\frac{1}{n}a_n)_{n\in\bN}$ has a limit.
\end{enumerate}
In particular, \ref{Fekete1}--\ref{Fekete3} hold if the sequence is additionally monotone
non-increasing.
\end{lemma}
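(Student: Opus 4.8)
The plan is to prove the $\liminf$ identity \eqref{Fekete limit0} first, then establish the chain of equivalences, dealing with the extended-real values by splitting off the degenerate cases.

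For \eqref{Fekete limit0}, set $L:=\inf_{n}\frac1n a_n$. The bound $\liminf_n\frac1n a_n\ge L$ is immediate. For the reverse, fix any $m$ with $a_m<+\infty$ (if there is none, then $a_n=+\infty$ for all $n$ and the identity is trivial); iterating subadditivity gives $a_{km}\le ka_m$ for all $k$, which is unambiguous because inductively $a_{(k-1)m}\le(k-1)a_m<+\infty$, so no $(+\infty)+(-\infty)$ ever occurs. Hence $\frac1{km}a_{km}\le\frac1m a_m$, so $\liminf_n\frac1n a_n\le\frac1m a_m$, and taking the infimum over all such $m$ (using $\frac1m a_m=+\infty$ when $a_m=+\infty$) yields $\liminf_n\frac1n a_n\le L$.

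Next I would reduce the rest to one non-degenerate situation. The standing assumption that $a_n+a_m$ is always well-defined forbids $+\infty$ and $-\infty$ from both occurring. If every $a_n=+\infty$, then \ref{Fekete1}--\ref{Fekete3} all hold via their respective first clauses. If some $a_r=-\infty$, then no $+\infty$ occurs, so $a_n\le a_{n-r}+a_r=-\infty$ for all $n>r$, giving $\frac1n a_n\to-\infty$; again \ref{Fekete1}--\ref{Fekete3} hold (the second clause of \ref{Fekete1}, and the limit exists). So it remains to treat the case $a_n\in(-\infty,+\infty]$ for all $n$ with at least one $a_k$ finite.

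In that case I would run the cycle \ref{Fekete3}$\Rightarrow$\ref{Fekete1}$\Rightarrow$\ref{Fekete2-1}$\Rightarrow$\ref{Fekete2}$\Rightarrow$\ref{Fekete3}. The steps \ref{Fekete1}$\Rightarrow$\ref{Fekete2-1}$\Rightarrow$\ref{Fekete2} are bookkeeping ($n_0$ and $n_0+1$ witness \ref{Fekete2-1}, and consecutive integers are coprime), and \ref{Fekete3}$\Rightarrow$\ref{Fekete1} holds because along $n=jk$ one has $\frac1{jk}a_{jk}\le\frac1k a_k<+\infty$, forcing the limit to be $<+\infty$ and hence $a_n<+\infty$ for all large $n$. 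For \ref{Fekete2}$\Rightarrow$\ref{Fekete3}: the set $S:=\{n:a_n<+\infty\}$ is closed under addition by subadditivity, so if it contains coprime $k,m$ it contains the additive sub-semigroup of $\bN$ they generate, which contains every integer $\ge(k-1)(m-1)$; thus $a_n<+\infty$ for all $n\ge n_0:=\max\{1,(k-1)(m-1)\}$. Then the classical Fekete argument applies: for fixed $m'\ge n_0$ and $n$ large, write $n=qm'+r$ with $q\ge1$ and $r\in[n_0,n_0+m')$, so $a_r$ is real; subadditivity gives $\frac1n a_n\le\frac q n a_{m'}+\frac1n a_r$, and letting $n\to\infty$ (so $\frac q n\to\frac1{m'}$ and $\frac1n a_r\to0$, since the finitely many relevant $a_r$ are real) yields $\limsup_n\frac1n a_n\le\frac1{m'}a_{m'}$; taking the infimum over $m'\ge n_0$ and comparing with the trivial bound $\liminf_n\frac1n a_n\ge\inf_{m'\ge n_0}\frac1{m'}a_{m'}$ shows the limit exists (and, with \eqref{Fekete limit0}, equals $\inf_n\frac1n a_n$). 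Finally, if $(a_n)$ is monotone non-increasing then either all $a_n=+\infty$, or $a_k<+\infty$ for some $k$ and then $a_n\le a_k<+\infty$ for all $n\ge k$, so \ref{Fekete1} holds in either case and hence so do \ref{Fekete2-1}--\ref{Fekete3}. I expect the only real care to lie in the $\pm\infty$ bookkeeping (checking that every application of subadditivity is on a well-defined sum) and in the numerical-semigroup step for \ref{Fekete2}$\Rightarrow$\ref{Fekete1}, both of which are routine.
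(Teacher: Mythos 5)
Your proof is correct and follows essentially the same route as the paper: the same subadditivity-along-multiples argument for \eqref{Fekete limit0}, the same cycle of implications with \ref{Fekete2}$\Rightarrow$\ref{Fekete1} via representing all large $n$ as nonnegative combinations of the coprime $k,m$ (your Frobenius bound $(k-1)(m-1)$ versus the paper's cruder threshold $km$), and the classical division/remainder Fekete argument for \ref{Fekete1}$\Rightarrow$\ref{Fekete3}. The only difference is your more explicit bookkeeping of the $-\infty$ case, which the paper handles implicitly; this is fine but not a different method.
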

\begin{proof}
Obviously, $x:=\inf_{n\in\bN}\frac{1}{n}a_n\le\liminf_{n\to+\infty}\frac{1}{n}a_n$,
and the converse inequality holds trivially if $x=+\infty$.
Assume therefore that $x<+\infty$. Then for every $\ep>0$, there exists
an $n_{\ep}\in\bN$ such that $a_{n_{\ep}}/n_{\ep}<x+\ep$, and 
\begin{align*}
\liminf_{n\to+\infty}\frac{1}{n}a_n
\le
\liminf_{k\to+\infty}\frac{1}{kn_{\ep}}a_{kn_{\ep}}
\le
\frac{a_{n_{\ep}}}{n_{\ep}}\le x+\ep,
\end{align*}
where the second inequality follows from subadditivity. 
This completes the proof of \eqref{Fekete limit0}.

The implications \ref{Fekete1}$\imp$\ref{Fekete2-1}$\imp$\ref{Fekete2} are trivial, 
and \ref{Fekete3}$\imp$\ref{Fekete1} follows immediately in view of 
\eqref{Fekete limit0}.
If the sequence is monotone non-increasing then 
\ref{Fekete1} holds trivially.
The non-trivial part of \ref{Fekete2}$\imp$\ref{Fekete1} follows from subadditivity due to the fact that any 
$n\ge n_0:=km$ can be written as
$n=ak+bm$ with some $a,b\in\bN$. 
Indeed, the numbers $ak$, $a=0,\ldots,m-1$ all have different remainders modulo $m$, and hence any $n\in\bN$ can be written as $n=ak+bm$ with some $a\in\{0,\ldots,m-1\}$ and $b\in\bZ$, 
and if $n\ge km$ then necessarily $b\ge 0$. 

The implication \ref{Fekete1}$\imp$\ref{Fekete3} is well known, but we provide the simple proof for readers' convenience.
If $a_n=+\infty$ for every $n\in\bN$ then 
\ref{Fekete3} holds trivially. If $a_n<+\infty$ for every $n\ge n_0$ then the simple observation that for every $m\in\bN$ and $n\ge n_0$, 
\begin{align*}
\frac{1}{n}a_n
&\le \frac{1}{n}\left\lfloor\frac{n-n_0}{m}\right\rfloor a_m+\frac{1}{n}\max_{n_0\le i\le n_0+m-1}a_i
\le a_m\left\{\begin{array}{ll}
\bz\frac{1}{m}-\frac{n_0}{nm}\jz,&a_m\ge 0,\\
\s\\
\bz\frac{1}{m}-\frac{1}{n}\bz\frac{n_0}{m}+1\jz\jz,&a_m<0
\end{array}\right\}
+
\frac{1}{n}\max_{n_0\le i\le n_0+m-1}a_i,
\end{align*}
yields
\begin{align*}
\limsup_{n\to+\infty}\frac{1}{n}a_n\le 
\inf_{m\in\bN}\frac{1}{m}a_m,
\end{align*}
from which \ref{Fekete3} follows immediately due to 
$\inf_{m\in\bN}\frac{1}{m}a_m\le\liminf_{n\to+\infty}\frac{1}{n}a_n$.
\end{proof}

\begin{rem}
Note that the sequence $a_{2k+1}:=+\infty$, $a_{2k}:=0$, $k\in\bN$, is subadditive, but 
\begin{align*}
\inf_{n\in\bN}\frac{1}{n}{a_n}=\liminf_{n\to+\infty}\frac{1}{n}a_n=0<+\infty=
\limsup_{n\to+\infty}\frac{1}{n}a_n,
\end{align*}
giving a simple demonstration of the significance of the conditions
\ref{Fekete1}--\ref{Fekete3} in Lemma \ref{lemma:Fekete}. 
\end{rem}

\section{Quantum R\'enyi divergences}
\label{sec:qdiv}

\subsection{Classical and quantum divergences}
\label{sec:multidiv}

Let $\X$ be an arbitrary non-empty set. By an \ki{$\X$-variable quantum divergence} $\divv$
we mean a function on collections of matrices
\begin{align*}
\cD(\divv)\subseteq\cup_{d\in\bN}\,\B(\X,\bC^d),
\ds\ds
\divv:\,\cD(\divv)
\to\bR\cup\{\pm\infty\},
\end{align*}
that is \ki{invariant under isometries}, i.e., if $V:\,\bC^{d_1}\to\bC^{d_2}$ is an isometry then 
\begin{align*}
&V\bz\cD(\divv)\cap\B(\X,\bC^{d_1})\jz V^*\subseteq\cD(\divv),\ds\text{and}\ds\\
&\divv\bz V\W V^*\jz
=
\divv\bz \W\jz,\ds\ds\ds \W\in\cD(\divv)\cap\B(\X,\bC^{d_1}).
\end{align*}
Due to the isometric invariance, $\divv$ may be extended to collections of operators
on any finite-dimensional Hilbert space $\hil$, by 
\begin{align*}
&\cD_{\hil}(\divv):=\left\{W\in\B(\X,\hil):\,\exists\,V:\,\hil\to\bC^d\text{ isometry: }VWV^*\in\cD(\divv)\right\},\\
&\divv(\W):=\divv(V\W V^*),\ds\ds\ds \W\in\cD_{\hil}(\divv),
\end{align*}
where $V$ in the second line is any isometry $V:\,\hil\to\bC^d$ such that 
$VWV^*\in\cD(\divv)$. The isometric invariance of $\divv$ on $\cD(\divv)$ 
guarantees that this extension is well-defined, in the sense
that the value of $\divv(V\W V^*)$
is independent of the choice of $d$ and $V$. 
Clearly, this extension is again invariant under isometries, i.e.,
for any $\W\in\cD_{\hil}(\divv)$ and 
$V:\,\hil\to\kil$ isometry,
$VWV^*\in\cD_{\kil}(\divv)$, and
$\divv(V\W V^*)=\divv(W)$.
Note that this implies that $\divv$ is invariant under extensions by pure states, i.e., 
for any $W\in\cD_{\hil}(\divv)$ and unit vector $\psi\in\kil$, 
we have 
\begin{align}\label{pure tensor}
W\otimes\pr{\psi}\in\cD_{\hil\otimes\kil}(\divv)
\ds\ds\text{and}\ds\ds
\divv(\W\otimes\pr{\psi})=\divv(\W).
\end{align}

\begin{rem}\label{rem:divv partiso inv}
A quantum divergence $\divv$ is also automatically invariant under 
partial isometries in the sense that if 
$W\in\cD_{\hil}(\divv)$ and 
$V:\,\hil\to\kil$ is a partial isometry such that 
$(V^*V)W(V^*V)=W$ 
and 
$\tilde V^*W\tilde V\in\cD_{\tilde\hil}(\divv)$, where 
$\tilde V:\,\tilde\hil:=(\ker V)^{\perp}\to\hil$
is the natural embedding,
then 
\begin{align*}
\divv\bz V\W V^*\jz=\divv\bz \W\jz.
\end{align*}
Indeed, 
\begin{align*}
\divv(VWV^*)
&=
\divv(V(\tilde V\tilde V^*)W(\tilde V\tilde V^*)V^*)
=
\divv(\tilde V^*W\tilde V)
=
\divv(\tilde V\tilde V^*W\tilde V\tilde V^*)
=
\divv(W),
\end{align*}
where the first and the last equalities follow from $(\tilde V\tilde V^*)W(\tilde V\tilde V^*)=W$, since $\tilde V\tilde V^*=V^*V$,
the second equality from the invariance of $\divv$ under the isometry 
$V\tilde V$, due to the assumption that $\tilde V^*W\tilde V\in\cD_{\tilde\hil}(\divv)$,
and the third equality from the invariance of $\divv$ under the isometry 
$\tilde V$.
\end{rem}

Analogously, an $\X$-variable \ki{classical divergence} $\divv$ is a function
\begin{align*}
\cD(\divv)\subseteq\cup_{d\in\bN}\,\F(\X,[d]),\ds\ds\ds
\divv:\,\cD(\divv)
\to\bR\cup\{\pm\infty\},
\end{align*}
that is invariant under injective maps $f:\,[d]\to[d']$, in the sense that for any such map, and any $w=(w_x)_{x\in\X}\in\cD(\divv)$, 
\begin{align*}
f_*w\in \cD(\divv),\ds\ds\ds
\divv(f_*w)=\divv(w),
\end{align*}
where $f_*w:=(f_*w_x)_{x\in\X}$ with 
$(f_*w_x)(i):=w_x(f\inv(i))$, $i\in\ran f$, and $(f_*w_x)(i):=0$ otherwise.
A classical divergence can be uniquely extended to 
collections of functions 
on any non-empty finite set $\I$ as
\begin{align*}
&\cD_{\I}(\divv):=\left\{w\in\F(\X,\I):\,\exists\,f:\,\I\to[d]\text{ injective: }
f_*w\in\cD(\divv)\right\}\\
&\divv(w):=\divv(f_*w),\ds\ds\ds w\in\cD_{\I}(\divv),
\end{align*}
where $f$ in the second line is any injective map such that 
$f_*w\in\cD(\divv)$.
The invariance property of $\divv$ on $\cD(\divv)$ guarantees that 
this extension is well-defined, and also invariant under injective functions, i.e., 
for any $w\in\cD_{\I}(\divv)$ and injective function 
$f:\,\I\to\J$ into a finite set $\J$, 
$f_*w\in\cD_{\J}(\divv)$ and $\divv(f_*w)=\divv(w)$.

\begin{rem}
Divergences in quantum information theory primarily serve as statistically motivated 
measures of how far away a collection of (classical or quantum) states are from each other; 
in this case the natural domain is all possible collections of density operators 
on an arbitrary finite-dimensional Hilbert space $\hil$ indexed by 
a fixed set $\X$. More generally, one may allow (non-zero) PSD operators as arguments;  
this is motivated 
partly by the fact that formulas e.g., for the R\'enyi divergences extend naturally to that 
setting, while from a more operational point of view, subnormalized states, for instance, 
might model the outputs of probabilistic protocols. 
For technical convenience, one may also define a divergence only on
collections of PSD operators with equal supports, as is often done in matrix analysis; however, 
this is usually not well justified in quantum information theory,
and is normally only done as an intermediate step in the definition of a divergence before extending it to more general collections of PSD operators by some procedure.

Many of the conditions on the domain of a divergence considered below (e.g., convexity in a fixed dimension, closedness under tensor products, etc.) are automatically satisfied with the above 
described domains.
\end{rem}

\begin{rem}
An $\X$-variable divergence $\divv$ with $\X=[n]$ for some $n\in\bN$, 
is called an \ki{$n$-variable divergence} (classical or quantum). 
In particular, in the case $\X=[2]=\{0,1\}$, we call $\divv$ a 
\ki{a binary divergence}, and
use the notations $\rho:=W_0$ and $\sigma:=W_1$, and
\begin{align*}
\divv(\rho\|\sigma):=\divv(W_0\|W_1):=\divv((W_0,W_1)).
\end{align*}
\end{rem}

\begin{definition}\label{def:q-extension}
Let $\divv$ be an $\X$-variable classical divergence. 
We say that a quantum divergence $\divv^q$ defined on a domain 
$\cD(\divv^q)\supseteq\cD$ is a 
\ki{quantum $\divv$-divergence on $\cD$}, if for any classical
$W\in\cD$, and any orthonormal basis 
$(e_i)_{i\in\I}$ jointly diagonalizing all $W_x$, $x\in\X$,
\begin{align}\label{quantum ext}
\bz\bz\inner{e_i}{W_xe_i}\jz_{i\in\I}\jz_{x\in\X}\in\cD_{\I}(\divv)\ds\text{ and }\ds
\divv^q\bz W\jz
=
\divv\bz\bz\bz\inner{e_i}{W_xe_i}\jz_{i\in\I}\jz_{x\in\X}\jz.
\end{align}
We say that $\divv^q$ is a \ki{quantum extension} of 
$\divv$, if for any $d\in\bN$ and $w\in\cD(\divv)\cap\F(\X,[d])$,
\begin{align}\label{quantum ext2}
\bz\sum_{i=0}^{d-1}w_x(i)\pr{i}\jz_{x\in\X}\in\cD(\divv^q)\ds\text{ and }\ds
\divv^q\bz\bz\sum_{i=0}^{d-1}w_x(i)\pr{i}\jz_{x\in\X}\jz
=
\divv(w),
\end{align}
where $(\ket{i})_{i\in[d]}$ is the canonical orthonormal basis of $\bC^{[d]}$. 
\end{definition}

\begin{rem}
Clearly, the condition in Definition \ref{def:q-extension} 
for $\divv^q$ being a quantum extension of $\divv$ 
is equivalent to \eqref{quantum ext2} holding with 
$\cD_{\hil}(\divv^q)$ in place of $\cD(\divv^q)$, 
for any non-empty finite set $\I$, any
$w\in\cD_{\I}(\divv)$, and 
any orthonormal system $(\ket{i})_{i\in\I}$ 
in an arbitrary finite-dimensional Hilbert space $\hil$ with $\dim\hil\ge |\I|$.
\end{rem}

Any classical divergence $\divv$ 
has a unique extension to collections of jointly diagonalizable operators.
To see this, we will need the following simple observation:

\begin{lemma}\label{lemma:commuting extension}
Let $\W\in\B(\X,\hil)$ be classical, and let 
$(e_i)_{i\in\I}$ and $(f_i)_{i\in\I}$ be orthonormal bases jointly diagonalizing 
all $\W_x$. Then for any classical divergence $\divv$, 
\begin{align*}
\bz(\inner{e_i}{W_x e_i})_{i\in\I}\jz_{x\in\X}\in\cD_{\I}(\divv)\ds\iff\ds
\bz(\inner{f_i}{W_x f_i})_{i\in\I}\jz_{x\in\X}\in\cD_{\I}(\divv),
\end{align*}
and if $\bz(\inner{e_i}{W_x e_i})_{i\in\I}\jz_{x\in\X}\in\cD_{\I}(\divv)$ then 
\begin{align}\label{comm ext}
\divv\bz\bz(\inner{e_i}{W_x e_i})_{i\in\I}\jz_{x\in\X}\jz
=
\divv\bz\bz(\inner{f_i}{W_x f_i})_{i\in\I}\jz_{x\in\X}\jz.
\end{align}
\end{lemma}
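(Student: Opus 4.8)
The plan is to reduce the statement to the invariance of a classical divergence under injective (here, bijective) relabelings of the index set, which was already recorded just before the lemma. The only thing one needs is the observation that, although the two bases $(e_i)_{i\in\I}$ and $(f_i)_{i\in\I}$ may be genuinely different, the two families of diagonal entries they produce, $w^{e}:=\big((\inner{e_i}{W_xe_i})_{i\in\I}\big)_{x\in\X}$ and $w^{f}:=\big((\inner{f_i}{W_xf_i})_{i\in\I}\big)_{x\in\X}$, differ only by a permutation of $\I$. Once this is established, applying the invariance of $\divv$ to that permutation gives both the equivalence of the two membership conditions and equality \eqref{comm ext} at once.

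To produce the permutation, I would argue via the common eigenspaces. For $t\in\bC^{\X}$ set $\hil_t:=\bigcap_{x\in\X}\ker(W_x-t(x)I)$; this subspace is manifestly independent of any choice of basis. Since $W_x=\sum_{i}\inner{e_i}{W_xe_i}\pr{e_i}$, each $e_i$ is a simultaneous eigenvector with $W_xe_i=\inner{e_i}{W_xe_i}e_i$, and expanding an arbitrary vector in the basis $(e_i)$ shows that $\hil_t=\spann\{e_i:\ (\inner{e_i}{W_xe_i})_{x\in\X}=t\}$; the same holds with $(f_i)$ in place of $(e_i)$. Hence the orthogonal projection onto $\hil_t$ equals both $\sum_{i:\,(\inner{e_i}{W_xe_i})_{x}=t}\pr{e_i}$ and $\sum_{i:\,(\inner{f_i}{W_xf_i})_{x}=t}\pr{f_i}$, and comparing ranks yields
\[
\#\{i\in\I:\ (\inner{e_i}{W_xe_i})_{x\in\X}=t\}=\dim\hil_t=\#\{i\in\I:\ (\inner{f_i}{W_xf_i})_{x\in\X}=t\}
\]
for every $t\in\bC^{\X}$. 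Therefore, matching up within each $\hil_t$ the indices of the $e$'s with those of the $f$'s, one obtains a bijection $\pi:\I\to\I$ with $\inner{e_i}{W_xe_i}=\inner{f_{\pi(i)}}{W_xf_{\pi(i)}}$ for all $i\in\I$, $x\in\X$; equivalently, $w^{f}=\pi_*w^{e}$ and $w^{e}=(\pi^{-1})_*w^{f}$.

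Finally I would invoke that $\divv$ is invariant under injective maps between finite sets, in particular under $\pi:\I\to\I$ and under $\pi^{-1}:\I\to\I$: from $w^{f}=\pi_*w^{e}$ it follows that $w^{e}\in\cD_{\I}(\divv)$ implies $w^{f}\in\cD_{\I}(\divv)$ with $\divv(w^{f})=\divv(w^{e})$, and from $w^{e}=(\pi^{-1})_*w^{f}$ the converse implication and equality follow likewise; this is exactly the asserted equivalence together with \eqref{comm ext}. I do not expect a serious obstacle here; the only point that must be spelled out carefully is the identification $\hil_t=\spann\{e_i:\ \lambda^{e}(i)=t\}$, which is needed because $W$ is only assumed classical rather than self-adjoint — though each $W_x$ is automatically normal, being diagonal in an orthonormal basis, so the argument goes through verbatim.
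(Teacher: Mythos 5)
Your proof is correct and follows essentially the same route as the paper's: both arguments identify the multiplicity of each joint eigenvalue tuple with the (basis-independent) dimension of the corresponding joint eigenspace $\bigcap_{x}\ker(W_x-t(x)I)$, deduce a bijection of $\I$ matching the two families of diagonal entries, and conclude by the invariance of the classical divergence under bijective relabelings. The only difference is presentational: you spell out the identification $\hil_t=\spann\{e_i:\lambda^{e}(i)=t\}$ and treat both directions via $\pi$ and $\pi^{-1}$, which the paper leaves implicit.
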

\begin{proof}
For every $x\in\X$, let $(\lambda_{x,j})_{j\in\J_x}$ be the different eigenvalues of 
$\Wx{x}$. For every $\vecc{j}\in\times_{x\in\X}\J_x$, let 
\begin{align*}
\I_{\vecc{j}}:=\{i\in\I:\,\inner{e_i}{\Wx{x}e_i}=\lambda_{x,j_x},\,x\in\X\},\ds\ds
\I'_{\vecc{j}}:=\{i\in\I:\,\inner{f_i}{\Wx{x}f_i}=\lambda_{x,j_x},\,x\in\X\}. 
\end{align*}
Then 
\begin{align*}
|\I_{\vecc{j}}|=\dim\bz\cap_{x\in\X}\ker(\lambda_{x,j_x} I-W_x)\jz
=|\I'_{\vecc{j}}|,
\end{align*}
and therefore there exists a bijection $h:\,\I\to\I$ such that 
for all $\vecc{j}\in\times_{x\in\X}\J_x$, $i\in\I_{\vecc{j}}$ $\iff$
$h(i)\in\I'_{\vecc{j}}$.
The assertions then follow by the invariance of 
$\divv$ under bijections. 
\end{proof}

Consider now an $\X$-variable classical divergence $\divv$, and define
its \ki{commutative extension} $\divv^{\comm}$ as
\begin{align}
&\cD(\divv^{\comm}):=\bigcup_{d\in\bN}\left\{
\bz\sum_{i=0}^{d-1}w_x(i)\pr{e_i}\jz_{x\in\X}:\,
w\in\cD(\divv)\cap\F(\X,[d]),\,(e_i)_{i=0}^{d-1}\text{ ONB in }\bC^d\right\},\label{commext1}\\
&\divv^{\comm}\bz\bz\sum_{i=0}^{d-1}w_x(i)\pr{e_i}\jz_{x\in\X}\jz:=\divv(w),\ds\ds\ds
w\in\cD(\divv)\cap\F(\X,[d]),\,d\in\bN,\label{commext2}
\end{align}
where $(e_i)_{i=0}^{d-1}$ in \eqref{commext2} is an arbitrary orthonormal basis in $\bC^d$.
It follows from Lemma \ref{lemma:commuting extension} that
\eqref{commext2} is well-defined, and 
it is easy to see that $\divv^{\comm}$ is a quantum divergence that is a 
quantum extension of $\divv$ in the sense of Definition \ref{def:q-extension}; in fact, it is the unique quantum extension with the smallest domain among all quantum extensions. 
Clearly, \eqref{commext1}--\eqref{commext2} can be equivalently rewritten as
\begin{align}
&\cD(\divv^{\comm}):=\bigcup_{d\in\bN}\left\{W\in\B(\X,\bC^d):\,W\text{ classical},\,
((\inner{e_i}{W_xe_i})_{i\in[d]})_{x\in\X}\in\D(\divv)\right\},\\
&\divv^{\comm}(W):=\divv\bz((\inner{e_i}{W_xe_i})_{i\in[d]})_{x\in\X}\jz,\ds\ds\ds
W\in\cD(\divv^{\comm})\cap\B(\X,\bC^d),\,d\in\bN,
\end{align}
where $(e_i)_{i=0}^{d-1}$ in the above is any orthonormal basis 
jointly diagonalizing all $W_x$, $x\in\X$.

In the converse direction, if $\divv$ is any quantum divergence 
such that every $W\in\cD(\divv)$ is classical then 
\begin{align}
&\cD(\divv^{\cl}):=\bigcup_{d\in\bN}\left\{((\inner{e_i}{W_xe_i})_{i\in[d]})_{x\in\X}:\,W\in\cD(\divv)\cap\B(\X,\bC^d),\,(e_i)_{i=0}^{d-1}\text{ $W$-basis}\right\},
\label{commext3}\\
&\divv^{\cl}\bz((\inner{e_i}{W_xe_i})_{i\in[d]})_{x\in\X}\jz
:=\divv(W),\ds\ds\ds W\in\cD(\divv)\cap\B(\X,\bC^d),\,d\in\bN,
\label{commext4}
\end{align}
where $(e_i)_{i=0}^{d-1}$ in \eqref{commext4} is any $W$-basis, 
defines a classical divergence. It is easy to see that for any classical divergence 
$\divv$, $(\divv^{\comm})^{\cl}=\divv$, and for any quantum divergence $\divv$
with commutative domain, $(\divv^{\cl})^{\comm}=\divv$. 
Thus, classical divergences can be uniquely identified with quantum divergences with 
commutative domain, whence for the rest we will also call the latter
classical divergences. 
When we want to specify in which sense we use the terminology 
``classical divergence'', we might write 
``classical divergence on functions'' or ``classical divergence on operators''. 
Under the above identification, every classical divergence $\divv$ is 
identified with $\divv^{\comm}$, and hence for the rest
we will also denote $\divv^{\comm}$ by $\divv$.

Different quantum extensions of a classical divergence may be compared according to the following:
\begin{definition}\label{def:divorder}
For two $\X$-variable quantum divergences $\divv_1$ and $\divv_2$, we write
\begin{align}\label{divorder def1}
\divv_1\le\divv_2\ds\text{on}\ds \cD,\ds\ds\ds\text{if}\ds\ds\ds
\divv_1(W)\le\divv_2(W),\ds\ds W\in\cD,
\end{align}
where we assume that $\cD\subseteq\cD(\divv_1)\cap\cD(\divv_2)$.
When $\cD=\cD(\divv_1)\cap\cD(\divv_2)$, we omit it and write simply that 
$\divv_1\le\divv_2$.
\end{definition}

For $2$-variable divergences on pairs of non-zero PSD operators we also introduce the following strict ordering that will be useful when comparing quantum relative entropies and 
R\'enyi divergences:

\begin{definition}\label{def:divorder2}
Let $\divv_1,\divv_2$ be quantum divergences with 
$\cD(\divv_1)=\cD(\divv_2)=\cup_{d\in\bN}(\B(\bC^d)\pne\times\B(\bC^d)\pne)$.
We write
\begin{align}\label{divorder def2}
\divv_1<\divv_2\ds\ds\text{if}\ds\ds
\rho^0\le\sigma^0,\s\rho\sigma\ne\sigma\rho\ds\imp\ds
\divv_1(\rho\|\sigma)<\divv_2(\rho\|\sigma).
\end{align}
\end{definition}

\begin{rem}
For the rest, by expressions like 
``for all $W\in\B(\X,\hil)\p$'', or  
``for all $\rho,\sigma\in\B(\hil)\pne$'' we mean that 
the given property holds for any finite-dimensional Hilbert space $\hil$, and we write it out explicitly when something is only supposed to be valid for a specific Hilbert space.
\end{rem}

Before further discussing quantum extensions of classical divergences, we review a few important properties of general divergences. 
Note that by the identification of classical divergences and quantum divergences defined on jointly diagonalizable families of operators, the following considerations
apply to classical divergences as well; e.g., the following define
monotonicity, joint convexity, etc., also for classical divergences.

Let $\divv$ be an $\X$-variable quantum divergence. We say that $\divv$ is
\begin{itemize}
\item
\ki{non-negative} if 
$\divv(W)\ge 0$
for all collections of density operators $W\in\cD_{\hil}(\divv)\cap\S(\X,\hil)$, 
and it is 
\ki{strictly positive} if it is non-negative and 
$\divv(W)=0$ $\iff$ $W_x=W_y$, $x,y\in\X$,
again for density operators;

\item
\ki{monotone under a given map $\map:\,\B(\hil)\to\B(\kil)$} 
for some finite-dimensional Hilbert spaces $\hil,\kil$, 
if 
\begin{align*}
W\in\cD_{\hil}(\divv)\ds\imp\ds\map(W)\in\cD_{\kil}(\divv),\ds\ds
\divv(\map(W))\le\divv(W),
\end{align*}
where $\map(W):=(\map(W_x))_{x\in\X}$;
in particular, it is \ki{monotone under CPTP maps/PTP maps/pinchings} if 
monotonicity holds for any map in the given class for any two finite-dimensional Hilbert spaces $\hil,\kil$, and it is \ki{trace-monotone}, if 
\begin{align}\label{tracemon def}
W\in\cD_{\hil}(\divv)\ds\imp\ds \Tr W:=(\Tr W_x)_{x\in\X}\in\cD(\divv),\ds\ds
\divv(\Tr W)\le\divv(W);
\end{align}

\item
\ki{jointly convex} if for all $W^{(k)}\in\cD_{\hil}(\divv)$, $k\in[r]$, and 
probability distribution $(p_k)_{k\in[r]}$,
\begin{align}\label{jointly convex}
\sum_{k\in[r]} p_kW^{(k)}\in\cD_{\hil}(\divv),\ds\ds
\divv\Bigg(\sum_{k\in[r]} p_kW^{(k)}\Bigg)\le\sum_{k\in[r]} p_k\divv\big( W^{(k)}\big),
\end{align}
and it is jointly concave if $-\divv$ is jointly convex;

\item
\ki{additive}, if for all $W^{(k)}\in\cD_{\hil^{(k)}}(\divv)$, $k=1,2$,
\begin{align*}
W^{(1)}\potimes W^{(2)}\in\cD_{\hil^{(1)}\otimes \hil^{(2)}}(\divv),\ds\ds
\divv\big( W^{(1)}\potimes W^{(2)}\big)=
\divv\big( W^{(1)}\big)+
\divv\big( W^{(2)}\big),
\end{align*}
and \ki{subadditive (superadditive)} if 
LHS $\le$ RHS (LHS $\ge$ RHS) holds above;

\item
\ki{weakly additive}, if for all $W\in\cD_{\hil}(\divv)$,
\begin{align*}
W^{\potimes n}\in\cD_{\hil^{\otimes n}}(\divv),\ds\ds
\divv\big( W^{\potimes n}\big)=n\divv(W),\ds\ds\ds n\in\bN,
\end{align*}
and \ki{weakly subadditive (superadditive)} if LHS $\le$ RHS (LHS $\ge$ RHS) holds above;

\item
\ki{block subadditive}, if for any $W\in\cD_{\hil}(\divv)$,
and any sequence of projections
$P_0,\ldots,P_{r-1}\in\bP(\hil)$ summing to $I$,
if $P_iWP_i\in\cD_{\hil}(\divv)$, $i\in[r]$, then
\begin{align}\label{block subadd}
\divv\bz\sum_{i=0}^{r-1} P_i\W P_i\jz\le \sum_{i=0}^{r-1} \divv(P_i\W P_i).
\end{align}
Conversely, if the inequality in the above always holds in the opposite direction then 
$\divv$ is called \ki{block superadditive}, and if it is always an equality, then 
$\divv$ is \ki{block additive}.
\end{itemize}

\begin{rem}
We say that $\divv$ is \ki{strictly trace-monotone}, if equality in \eqref{tracemon def}
implies the existence of a state $\omega$ and numbers $\lambda_x$, $x\in\X$ such
that $W_x=\lambda_x\omega$, $x\in\X$.
\end{rem}

\begin{rem}
Note that if $W$ is jointly convex or jointly concave, then for any 
$W\in\cD_{\hil}(\divv)$ and any density operator $\sigma\in\S(\kil)$, 
$W\otimes\sigma\in\cD_{\hil\otimes\kil}(\divv)$, according to 
\eqref{pure tensor} and \eqref{jointly convex}.
\end{rem}

\begin{rem}\label{rem:scaling}
Note that $z\otimes X\mapsto zX$ gives a canonical identification between 
$\bC\otimes\B(\hil)$ and $\B(\hil)$. In particular,  
any additive quantum divergence $\divv$ 
satisfies the \ki{scaling law}
\begin{align}
\divv\bz(t_xW_x)_{x\in\X}\jz=\divv(t)+\divv(W),\ds\ds\ds
W\in\cD_{\hil}(\divv),\,t\in\cD(\divv)\cap\F(\X,[1]),\label{general scaling}
\end{align}
where $t=(t_x=t_x(0))_{x\in\X}$.
\end{rem}

\begin{rem}\label{rem:classical mon}
Note that in the classical case, 
monotonicity under PTP maps is equivalent to monotonicity under CPTP maps, and it means that 
for any stochastic matrix $T\in\bR^{d_1\times d_2}$
and for 
any $w\in\cD(\divv)\cap\F(\X,[d_1])$, we have 
$(\sum_{j\in[d_2]}\sum_{i\in[d_1]}w_x(i)T_{i,j}\egy_{\{j\}})_{x\in\X}\in\cD(\divv)$, and 
\begin{align*}
\divv\bz\bz\sum_{j\in[d_2]}\sum_{i\in[d_1]}w_x(i)T_{i,j}\egy_{\{j\}}\jz_{x\in\X}\jz
\le
\divv(w).
\end{align*}
Here, the matrix $T$ being stochastic means that  
$T_{i,j}\ge 0$ for all $i,j$, and for all $i$, $\sum_{j\in[d_2]}T_{i,j}=1$.
\end{rem}

A quantum divergence $\divv$ is called \ki{(positive) homogeneous}, if 
for every $W\in\cD(\divv)$ and $t\in(0,+\infty)$, 
$tW\in\cD(\divv)$, and
\begin{align*}
\divv\bz(t\Wx{x})_{x\in\X}\jz=t\divv\bz (\Wx{x})_{x\in\X}\jz.
\end{align*}
It is well known that a $2$-variable quantum divergence that is monotone 
non-decreasing under partial 
traces is jointly concave whenever it has the additional properties of block 
superadditivity and 
homogeneity. The extension to multi-variable divergences is straightforward; we give a detailed proof in Appendix \ref{sec:jointconc} for completeness.

%

\begin{lemma}\label{lemma:jointconc from mon}
(i) 
Let $\divv$ be an $\X$-variable quantum divergence that is homogeneous, and 
for every $\hil$, $\cD_{\hil}(\divv)$ is convex.
If, moreover, $\divv$ is block superadditive (subadditive) and
monotone non-decreasing (non-increasing) under partial traces 
then 
$\divv$ is jointly concave (convex) and jointly superadditive (subadditive).

(ii) 
Let $\divv$ be an $\X$-variable quantum divergence that
is stable under tensoring with the maximally mixed state, i.e., for any $W\in\cD_{\hil}(\divv)$ and any $\kil$, 
\begin{align*}
\divv(W\otimes(I_{\kil}/\dim\kil))=\divv(W).
\end{align*}
If, moreover, $\divv$ is jointly concave (convex) 
then for any $W\in\cD_{\hil}(\divv)$ and any CPTP map 
$\map:\,\B(\hil)\to\B(\kil)$ such that 
$\map(W)\in\cD_{\kil}(\divv)$, we have 
$\divv(\map(W))\ge\divv(W)$
($\divv(\map(W))\le\divv(W)$). 
\end{lemma}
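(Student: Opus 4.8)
The plan is to prove the lemma by the two standard reductions: part~(i) via a ``flag space'' (doubling) construction that turns a convex combination into a pinching followed by a partial trace, and part~(ii) via a Stinespring dilation together with the fact that a partial trace is a uniform average of unitary conjugations. In each part I only spell out the ``block superadditive / monotone non-decreasing / jointly concave'' case, the other being obtained by reversing every inequality. For part~(i), fix $W^{(1)},\dots,W^{(r)}\in\cD_{\hil}(\divv)$ and a probability vector $(p_k)_{k\in[r]}$, which after discarding zero weights may be assumed strictly positive. On $\hat\hil:=\hil\otimes\bC^{r}$ consider the pointwise block-diagonal family $\hat W_x:=\sum_{k}p_kW^{(k)}_x\otimes\pr{k}$. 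With $V_k\colon\hil\to\hat\hil$, $\psi\mapsto\psi\otimes\ket{k}$, the isometric embedding onto the $k$-th block, isometric invariance of $\divv$ gives $Z^{(k)}:=V_kW^{(k)}V_k^*=W^{(k)}\otimes\pr{k}\in\cD_{\hat\hil}(\divv)$ with $\divv(Z^{(k)})=\divv(W^{(k)})$, and since $\hat W=\sum_k p_kZ^{(k)}$ and $\cD_{\hat\hil}(\divv)$ is convex by hypothesis, $\hat W\in\cD_{\hat\hil}(\divv)$. Now apply block superadditivity to $\hat W$ with the mutually orthogonal projections $P_k:=I_{\hil}\otimes\pr{k}$ (here $P_k\hat W P_k=p_kZ^{(k)}\in\cD_{\hat\hil}(\divv)$ by homogeneity, so the hypotheses are met), and then use that the pointwise partial trace over the flag space $\bC^{r}$ maps $\hat W$ to $\sum_k p_kW^{(k)}$ together with monotonicity non-decreasing under partial traces, to obtain
\begin{align*}
\sum_k p_k\divv\big(W^{(k)}\big)\;=\;\sum_{k}\divv\big(p_kZ^{(k)}\big)\;\le\;\divv(\hat W)\;\le\;\divv\Big(\textstyle\sum_k p_kW^{(k)}\Big),
\end{align*}
where the first equality uses homogeneity and isometric invariance; this is joint concavity (the inequality in \eqref{jointly convex} reversed). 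Since $\divv$ is positively homogeneous, joint concavity is equivalent to joint superadditivity, $\divv(W^{(1)}+W^{(2)})=2\divv(\tfrac12W^{(1)}+\tfrac12W^{(2)})\ge\divv(W^{(1)})+\divv(W^{(2)})$ (and conversely), so this also settles the superadditivity claim.

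For part~(ii), let $\map\colon\B(\hil)\to\B(\kil)$ be CPTP with $\map(W)\in\cD_{\kil}(\divv)$, and fix a Stinespring dilation: a finite-dimensional $\hil_E$ of dimension $d_E$ and an isometry $V\colon\hil\to\kil\otimes\hil_E$ with $\map(X)=\Tr_{\hil_E}(VXV^*)$; we may assume $d_E\ge 2$, as otherwise $\map$ is an isometric conjugation and there is nothing to prove. Put $Z:=(VW_xV^*)_{x\in\X}\in\cD_{\kil\otimes\hil_E}(\divv)$, so $\divv(Z)=\divv(W)$ by isometric invariance. With $\{U_j\}_{j=1}^{d_E^{2}}$ the Heisenberg--Weyl unitary operator basis of $\B(\hil_E)$, the partial trace is realized as the uniform average
\begin{align*}
\frac{1}{d_E^{2}}\sum_{j=1}^{d_E^{2}}(I_{\kil}\otimes U_j)\,Z\,(I_{\kil}\otimes U_j)^*
&=(\Tr_{\hil_E}Z)\otimes\frac{I_{\hil_E}}{d_E}=\map(W)\otimes\frac{I_{\hil_E}}{d_E},
\end{align*}
which holds operator by operator (for $Z=A\otimes B$ it reduces to $\frac1{d_E^{2}}\sum_jU_jBU_j^*=\frac{\Tr B}{d_E}I_{\hil_E}$, and the general case follows by linearity). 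As each $I_{\kil}\otimes U_j$ is unitary, isometric invariance and joint concavity of $\divv$ (whose definition includes the relevant domain-closure, cf.~\eqref{jointly convex}) give
\begin{align*}
\divv\Big(\map(W)\otimes\tfrac{I_{\hil_E}}{d_E}\Big)
&=\divv\Big(\tfrac{1}{d_E^{2}}\sum_j (I_{\kil}\otimes U_j)Z(I_{\kil}\otimes U_j)^*\Big) \\
&\ge\tfrac{1}{d_E^{2}}\sum_j \divv\big((I_{\kil}\otimes U_j)Z(I_{\kil}\otimes U_j)^*\big)=\divv(Z).
\end{align*}
Finally, stability under tensoring with the maximally mixed state (applicable since $\map(W)\in\cD_{\kil}(\divv)$) gives $\divv(\map(W)\otimes(I_{\hil_E}/d_E))=\divv(\map(W))$, whence $\divv(\map(W))\ge\divv(Z)=\divv(W)$; the convex case follows upon reversing all the inequalities.

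The two ideas here (the doubling trick and the twirl representation of the partial trace) are standard, so the main obstacle is the bookkeeping: because $\divv$ is not assumed to be everywhere defined, at each step one must check that the operator just produced lies in the appropriate domain. This is exactly where the convexity-of-domain hypothesis enters in part~(i) — in combination with the fact that isometric invariance lets one extend an operator by zero to a larger space, which is what puts $Z^{(k)}$, hence $\hat W$, into $\cD_{\hat\hil}(\divv)$ — and where the domain-closure clause built into the definition of joint concavity enters in part~(ii). A secondary point worth getting right is that ``jointly superadditive'' is meant with respect to pointwise sums, which for a positively homogeneous divergence is merely a reformulation of joint concavity and so requires no separate argument.
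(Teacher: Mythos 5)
Your proof is correct and follows essentially the same route as the paper's: for (i) the flag-space (block-diagonal) construction combined with isometric invariance and convexity of the domain, then block superadditivity, homogeneity, and monotonicity under the partial trace over the flag system, with superadditivity deduced from concavity via homogeneity; for (ii) a Stinespring dilation, the Heisenberg--Weyl twirl realizing $\Tr_{E}(\cdot)\otimes I_E/d_E$, joint concavity, unitary invariance, and stability under tensoring with the maximally mixed state. The only differences are cosmetic (order of the inequality chain, the harmless reduction to $d_E\ge 2$), and your extra care with the domain bookkeeping matches what the paper leaves implicit.
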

\medskip

Any monotone classical divergence admits two canonical quantum extensions, the minimal and the maximal ones:

\begin{example}\label{ex:minmax extension}
For a classical divergence $\divv$, let
\begin{align}
&\cD(\divv^{\meas}):=\cup_{d\in\bN}\B(\X,\bC^d),\ds\ds\text{and for any}\ds W\in\cD(\divv^{\meas}),\\
&\divv^{\meas}(W):=\sup\{\divv\bz\M(W)\jz:\,M\in\povm(\bC^d,[m]),\,m\in\bN,\,
\M(W)=(\M(W_x))_{x\in\X}\in\cD(\divv)\}.
\label{meas div def}
\end{align}
It is easy to see that $\divv^{\meas}$ is a quantum divergence 
that is monotone under PTP maps, and if 
$\divv$ is monotone then $\divv^{\meas}$ is a quantum extension of 
$\divv$, called the \ki{measured}, or \ki{minimal} extension. 

As introduced in \cite{Matsumoto_newfdiv} in the $2$-variable case, a \ki{reverse test} for 
$W\in\B(\X,\hil)$
is a pair $(w,\rt)$ with $w\in\F(\X,\I)$ for some finite set $\I$, 
and $\rt:\,\ell^{\infty}(\I)\to\B(\hil)$ a (completely) positive trace-preserving map
such that $\rt(w)=W$.
For a classical divergence $\divv$, let
\begin{align}
&\cD(\divv^{\max}):=\cup_{d\in\bN}\B(\X,\bC^d),
\ds\ds\text{and for any}\ds W\in\cD(\divv^{\max}),\\
&\divv^{\max}(W):=\inf\left\{\divv(w):\,(w,\rt)\text{ is a reverse test for }W
\text{ with }
w\in\cD(\divv)\right\}.
\label{maxdiv def}
\end{align}
It is easy to see that $\divv^{\max}$ is a quantum divergence that is monotone under PTP maps, and if 
$\divv$ is monotone then $\divv^{\max}$
gives a quantum extension of $\divv$, called the \ki{maximal} extension.

Assume for the rest that $\divv$ is monotone, so that both 
$\divv^{\meas}$
and $\divv^{\max}$ are quantum extensions of $\divv$ that are monotone under PTP maps. 

It is straightforward to verify that for any 
quantum extension $\divv^q$ of $\divv$ that is monotone under CPTP maps, 
\begin{align*}
\divv^{\meas}(W)\le\divv^q(W)\le\divv^{\max}(W),\ds\ds\ds
W\in\cD(\divv^q)
\end{align*}
holds. 

It is also clear that if $\divv$ is additive then 
$\divv^{\max}$ is subadditive, and 
$\divv^{\meas}$ is superadditive, and 
the \ki{regularized measured} and the \ki{regularized maximal} $\divv$-divergences
\begin{align}
\oll{\divv}^{\meas}(W)
&:=
\sup_{n\in\bN}\frac{1}{n}\divv^{\meas}(W^{\potimes n}),
\label{reg meas div def}\\
\oll{\divv}^{\max}(W)&:=\inf_{n\in\bN}\frac{1}{n}\divv^{\max}(W^{\potimes n}),
\end{align}
are quantum extensions of $\divv$.
Here 
\begin{align*}
\cD(\oll{\divv}^{\meas}):=\cD(\oll{\divv}^{\max}):=\cup_{d\in\bN}\B(\X,\bC^d),
\end{align*}
i.e., $\oll{\divv}^{\meas}$ and $\oll{\divv}^{\max}$ are defined on the maximal possible domain.

Obviously, $\oll{\divv}^{\meas}$ and $\oll{\divv}^{\max}$ are monotone under CPTP maps, 
and for any 
quantum extension $\divv^q$ of $\divv$ that is monotone under CPTP maps and is additive, 
\begin{align*}
\oll{\divv}^{\meas}(W)\le\divv^q(W)\le\oll{\divv}^{\max}(W),\ds\ds\ds
W\in\cD(\divv^q)
\end{align*}
holds. 
\end{example}

\begin{rem}
As with any quantum divergence, all the above can be extended to the case where
the arguments are collections of operators on an arbitrary finite-dimensional Hilbert space $\hil$ rather than on $\bC^d$.
\end{rem}

\begin{rem}
If $\divv$ is monotone non-decreasing under stochastic maps then 
its measured and maximal versions are naturally defined for $-\divv$ instead, or equivalently, with inf and sup instead of sup and inf in 
\eqref{meas div def} and \eqref{maxdiv def}, respectively.
\end{rem}

\begin{prop}\label{prop:regmeas regmax as limits}
Let $\divv$ be a monotone and additive classical divergence. For any
$W\in\cup_{d\in\bN}\B(\X,\hil)$, the following are true:
\begin{align}
(i)\ds& \bN\ni n\mapsto \divv^{\meas}(W^{\potimes n})\ds\text{is monotone increasing and superadditive}.\label{meas superadd}\\
(ii)\ds&
\bN\ni n\mapsto \divv^{\max}(W^{\potimes n})\ds\text{is monotone increasing and subadditive}.\label{max subadd}\\
(iii)\ds&
\oll{\divv}^{\meas}(W)
=
\lim_{n\to+\infty}\frac{1}{n}\divv^{\meas}(W^{\potimes n}).\label{regmeas as limit}\\
(iv)\ds&
\text{Either}\ds
\divv^{\max}(W^{\potimes n})<+\infty,\ds n\in\bN,\ds\text{ or }\ds
\divv^{\max}(W^{\potimes n})=+\infty,\ds n\in\bN.\label{divmax (in)finite}\\
(v)\ds&
\oll{\divv}^{\max}(W)
=
\lim_{n\to+\infty}\frac{1}{n}\divv^{\max}(W^{\potimes n}).\label{regmax as limit}
\end{align}
\end{prop}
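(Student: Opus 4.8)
The idea is to reduce the whole proposition to four facts already available: (1) for additive $\divv$ the measured extension $\divv^{\meas}$ is superadditive and the maximal extension $\divv^{\max}$ is subadditive (Example~\ref{ex:minmax extension}); (2) both $\divv^{\meas}$ and $\divv^{\max}$ are monotone under CPTP maps, being monotone under PTP maps (Example~\ref{ex:minmax extension}); (3) Fekete's Lemma~\ref{lemma:Fekete}, in particular identity~\eqref{Fekete limit0} together with the fact that the equivalent conditions \ref{Fekete1}--\ref{Fekete3} hold for monotone non-increasing sequences; and (4) the definition \eqref{reg meas div def} of $\oll{\divv}^{\meas}(W)$ as $\sup_n\frac1n\divv^{\meas}(W^{\potimes n})$, and the analogous definition of $\oll{\divv}^{\max}(W)$ as an infimum. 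Writing $a_n:=\divv^{\meas}(W^{\potimes n})$ for part (i) and $a_n:=\divv^{\max}(W^{\potimes n})$ for part (ii), the super- resp.\ sub-additivity of $n\mapsto a_n$ is immediate from fact (1) applied to $W^{\potimes(n+m)}=W^{\potimes n}\potimes W^{\potimes m}$. For the monotone-increasing part I would apply fact (2) to the CPTP map $\Tr_{n+1}:\B(\hil^{\otimes(n+1)})\to\B(\hil^{\otimes n})$ that partial-traces the last tensor factor: since $\Tr_{n+1}(W_x^{\otimes(n+1)})=W_x^{\otimes n}$ (the $W_x$ being density operators; for general positive semidefinite arguments one additionally invokes the scaling law $\divv^{\bullet}((t_xW_x)_x)=\divv(t)+\divv^{\bullet}(W)$ that $\divv^{\meas},\divv^{\max}$ inherit from the additivity of $\divv$), we get $a_n=\divv^{\bullet}\bigl(\Tr_{n+1}(W^{\potimes(n+1)})\bigr)\le a_{n+1}$.

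For part (iii): since $a_n=\divv^{\meas}(W^{\potimes n})$ is superadditive, $b_n:=-a_n$ is subadditive, so \eqref{Fekete limit0} gives $\inf_n\frac1n b_n=\liminf_n\frac1n b_n$, i.e.\ $\sup_n\frac1n a_n=\limsup_n\frac1n a_n=\oll{\divv}^{\meas}(W)$; moreover, by part (i) the sequence $b_n$ is monotone non-increasing, so the final clause of Lemma~\ref{lemma:Fekete} yields that $\lim_n\frac1n b_n$ exists, hence $\lim_n\frac1n a_n$ exists and equals $\oll{\divv}^{\meas}(W)$. For part (iv), with $a_n:=\divv^{\max}(W^{\potimes n})$: if $a_1<+\infty$ then subadditivity forces $a_n\le na_1<+\infty$ for all $n$, while if $a_1=+\infty$ then the monotonicity in part (ii) forces $a_n\ge a_1=+\infty$ for all $n\ge1$; this is precisely the stated dichotomy. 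For part (v): $a_n=\divv^{\max}(W^{\potimes n})$ being subadditive, \eqref{Fekete limit0} gives $\inf_n\frac1n a_n=\liminf_n\frac1n a_n=\oll{\divv}^{\max}(W)$, and part (iv) implies condition~\ref{Fekete1} of Lemma~\ref{lemma:Fekete} for $(a_n)_n$, hence the equivalent condition~\ref{Fekete3} holds, so $\lim_n\frac1n a_n$ exists and equals $\oll{\divv}^{\max}(W)$.

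The only genuinely non-routine point is the two monotone-increasing assertions: super- or sub-additivity alone does not imply monotonicity in $n$ (e.g.\ $a_n=cn$ with $c<0$ is additive but decreasing), so one really needs the separate partial-trace argument above, and a little care is needed to see that tracing out a factor of $W_x^{\otimes(n+1)}$ returns exactly $W_x^{\otimes n}$ rather than a rescaled operator. This same monotonicity also controls the $\pm\infty$ bookkeeping, keeping the subadditive sequences in (iii) and (v) away from the indeterminate form $\infty-\infty$, and it is exactly the extra input beyond~\eqref{Fekete limit0} needed to guarantee that the regularized divergences are honest limits rather than only a $\limsup$ and a $\liminf$.
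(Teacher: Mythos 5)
Your proof is correct and follows essentially the same route as the paper: super/subadditivity from Example \ref{ex:minmax extension}, monotonicity in $n$ by relating $W^{\potimes n}$ to $W^{\potimes (n+1)}$ through the extra tensor factor, and then Fekete's Lemma \ref{lemma:Fekete} (using monotonicity to rule out the $\limsup\ne\liminf$ pathology and to get the dichotomy in (iv)). The only cosmetic difference is that you obtain monotonicity by invoking the CPTP-monotonicity of $\divv^{\meas},\divv^{\max}$ under the partial trace, whereas the paper works directly with the optimizers (extending a POVM by $I_{\hil^{\otimes(n'-n)}}$, resp.\ composing a reverse test with a partial trace) --- these are the same argument in different clothing, and the normalization subtlety you flag ($\Tr_{n+1}(W^{\potimes(n+1)})=W^{\potimes n}$ only for normalized $W_x$) is present, and left implicit, in the paper's proof as well, so it does not constitute a gap relative to it.
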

\begin{proof}
The superadditivity/subadditivity stated in \eqref{meas superadd} and \eqref{max subadd},
respectively,
are special cases of the more general superadditivity/subadditivity stated already in 
Example \ref{ex:minmax extension}.

Note that if there exists no POVM $M$ such that $\M(W^{\potimes n})\in\cD(\divv)$
then $\divv^{\meas}(W^{\potimes n})=-\infty$ by definition, and 
$\divv^{\meas}(W^{\potimes n'})\ge\divv^{\meas}(W^{\potimes n})$
holds trivially for any $n'\ge n$. 
Assume that $M\in\povm(\hil^{\otimes n},[m])$ is such that 
$\M(W^{\potimes n})\in\cD(\divv)$. Then for any $n'\ge n$, 
$M^{(n')}:=(M_i\otimes I_{\hil^{\otimes (n'-n)}})_{i\in[m]}\in\povm(\hil^{\otimes n'},[m])$,
and $\M^{[n']}(W^{\potimes n'})=\M(W^{\potimes n})\in\cD(\divv)$, whence
\begin{align*}
\divv^{\meas}(W^{\potimes n'})\ge
\divv\bz\M^{[n']}(W^{\potimes n'})\jz
=
\divv\bz\M(W^{\potimes n})\jz.
\end{align*}
Taking the supremum of the rightmost quantity over all 
 $M\in\povm(\hil^{\otimes n},[m])$ such that 
$\M(W^{\potimes n})\in\cD(\divv)$ yields 
$\divv^{\meas}(W^{\potimes n'})\ge\divv^{\meas}(W^{\potimes n})$. This proves the monotonicity in \eqref{meas superadd}. The identity in \eqref{regmeas as limit}
follows from \eqref{meas superadd} by Lemma \ref{lemma:Fekete}
applied to $a_n:=-\divv^{\meas}(W^{\potimes n})$.

To prove the monotonicity in \eqref{max subadd}, 
note again that if there exists no reverse test $(w^{(n)}, \rt^{(n)})$ for 
$W^{\potimes n}$ such that $w^{(n)}\in\cD(\divv)$ then 
$\divv^{\meas}(W^{\potimes n})=+\infty$ by definition, and thus
$\divv^{\meas}(W^{\potimes n'})\le\divv^{\meas}(W^{\potimes n})$
holds trivially for any $n'\le n$. Assume that for some $n>1$, 
$(w^{(n)}, \rt^{(n)})$ is a reverse test for 
$W^{\potimes n}$ such that $w^{(n)}\in\cD(\divv)$. Then for any $n'< n$, 
$(w^{(n')},\rt^{(n')}):=(w^{(n)}, \Tr_{[n'+1,\ldots,n]}\circ\rt^{(n)})$ is a reverse test for 
$W^{\potimes n'}$ such that $w^{(n')}\in\cD(\divv)$, whence
\begin{align*}
\divv^{\max}(W^{\potimes n'})\le
\divv\big( w^{(n')}\big)
=
\divv\big( w^{(n)}\big).
\end{align*}
Taking the infimum of the rightmost quantity over all reverse tests
$(w^{(n)}, \rt^{(n)})$ as above, we get that 
$\divv^{\max}(W^{\potimes n'})\le\divv^{\max}(W^{\potimes n})$, as desired. 
This completes the proof of \eqref{max subadd}.

If $\divv^{\max}(W^{\potimes n})<+\infty$ for every $n\in\bN$ then 
Lemma \ref{lemma:Fekete} applied to $a_n:=\divv^{\max}(W^{\potimes n})$ yields immediately 
\eqref{regmax as limit}, due to the subadditivity established in 
\eqref{max subadd}. Assume for the rest that there exists some $n\in\bN$ such that 
$\divv^{\max}(W^{\potimes n})=+\infty$. By the monotonicity established in 
\eqref{max subadd}, we have $\divv^{\max}(W^{\potimes n'})=+\infty$ for every $n'\ge n$, while
the subadditivity established in \eqref{max subadd} yields that 
$\divv^{\max}(W^{\potimes n'})=+\infty$ for every $n'\le n$. 
This completes the proof of \eqref{divmax (in)finite}, and also of 
\eqref{regmax as limit}, since the latter holds trivially
when $\divv^{\max}(W^{\potimes n})=+\infty$ for every $n\in\bN$. 
\end{proof}

\begin{cor}\label{cor:regmeas regmax weakly additive}
Let $\divv$ be a monotone and additive classical divergence. Then 
$\oll{\divv}^{\meas}$ and $\oll{\divv}^{\max}$ are 
CPTP-monotone and
weakly additive.
\end{cor}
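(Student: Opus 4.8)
The plan is to derive everything from what has already been recorded in Example \ref{ex:minmax extension} and Proposition \ref{prop:regmeas regmax as limits}, so the work is essentially bookkeeping. For the CPTP-monotonicity: recall that $\divv^{\meas}$ and $\divv^{\max}$ are monotone under PTP maps (Example \ref{ex:minmax extension}), hence in particular under CPTP maps. Given a CPTP map $\map:\,\B(\hil)\to\B(\kil)$, the map $\map^{\otimes n}$ is again CPTP, and since the pointwise tensor power satisfies $\map(W)^{\potimes n}=\bz\map(W_x)^{\otimes n}\jz_{x\in\X}=\bz\map^{\otimes n}\bz W_x^{\otimes n}\jz\jz_{x\in\X}=\map^{\otimes n}\bz W^{\potimes n}\jz$, applying CPTP-monotonicity of $\divv^{\meas}$ (resp.~$\divv^{\max}$) to $\map^{\otimes n}$ gives $\divv^{\meas}\bz\map(W)^{\potimes n}\jz\le\divv^{\meas}\bz W^{\potimes n}\jz$ (resp.~with $\divv^{\max}$). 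Dividing by $n$ and taking $\sup_n$ (resp.~$\inf_n$) yields $\oll{\divv}^{\meas}(\map(W))\le\oll{\divv}^{\meas}(W)$ and $\oll{\divv}^{\max}(\map(W))\le\oll{\divv}^{\max}(W)$. (The domain constraints are vacuous since $\cD(\oll{\divv}^{\meas})=\cD(\oll{\divv}^{\max})=\cup_{d}\B(\X,\bC^d)$, which is closed under pointwise tensor products and under CPTP maps after the usual isometric identification.)

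For weak additivity, the point is to use the limit characterizations \eqref{regmeas as limit} and \eqref{regmax as limit} instead of the defining $\sup_n/\inf_n$. Fix $k\in\bN$ and $W\in\B(\X,\hil)$. By \eqref{regmeas as limit} applied to $W^{\potimes k}$,
\[
\oll{\divv}^{\meas}\bz W^{\potimes k}\jz=\lim_{n\to+\infty}\frac{1}{n}\divv^{\meas}\bz(W^{\potimes k})^{\potimes n}\jz=\lim_{n\to+\infty}\frac{1}{n}\divv^{\meas}\bz W^{\potimes kn}\jz=k\cdot\lim_{n\to+\infty}\frac{1}{kn}\divv^{\meas}\bz W^{\potimes kn}\jz.
\]
By \eqref{regmeas as limit} applied to $W$ itself, the sequence $\bz\frac{1}{m}\divv^{\meas}(W^{\potimes m})\jz_{m\in\bN}$ converges to $\oll{\divv}^{\meas}(W)$, hence so does its subsequence indexed by $m=kn$; therefore $\oll{\divv}^{\meas}(W^{\potimes k})=k\,\oll{\divv}^{\meas}(W)$. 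The argument for $\oll{\divv}^{\max}$ is identical using \eqref{regmax as limit}, with one case distinction supplied by \eqref{divmax (in)finite}: if $\divv^{\max}(W^{\potimes m})<+\infty$ for all (sufficiently large) $m$, the subsequence argument goes through verbatim; if instead $\divv^{\max}(W^{\potimes m})=+\infty$ for all $m$, then also $\divv^{\max}(W^{\potimes kn})=+\infty$ for all $n$, so $\oll{\divv}^{\max}(W^{\potimes k})=+\infty=k\cdot(+\infty)=k\,\oll{\divv}^{\max}(W)$.

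There is no genuine obstacle here; the only thing that requires care is exactly the reduction just used — that $\oll{\divv}^{\meas}$ and $\oll{\divv}^{\max}$, although defined as a supremum and an infimum, are in fact limits (Proposition \ref{prop:regmeas regmax as limits}(iii),(v)), so that passing to the subsequence $m=kn$ does not change the value — together with keeping track of the $+\infty$ case for the maximal version via Proposition \ref{prop:regmeas regmax as limits}(iv). Everything else is immediate from the monotonicity and additivity already established for $\divv^{\meas}$ and $\divv^{\max}$ in Example \ref{ex:minmax extension}.
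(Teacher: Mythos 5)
Your proof is correct and follows the same route as the paper: the paper's (one-line) proof likewise gets CPTP-monotonicity directly from the monotonicity of $\divv^{\meas}$, $\divv^{\max}$ applied along tensor powers, and weak additivity from the limit formulas \eqref{regmeas as limit} and \eqref{regmax as limit}; you have simply written out the details (the identity $\map(W)^{\potimes n}=\map^{\otimes n}(W^{\potimes n})$ and the subsequence $m=kn$ argument, with the $+\infty$ case handled via \eqref{divmax (in)finite}).
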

\begin{proof}
CPTP-monotonicity is immediate from the definitions, and weak additivity follows immediately from \eqref{regmeas as limit} and \eqref{regmax as limit}.
\end{proof}

\begin{cor}\label{cor:regmeas regmax extremal}
Let $\divv$ be a monotone and additive classical divergence and 
$\cD\subseteq\cup_{d\in\bN}\B(\X,\bC^d)$ be closed under tensor products. 
Let $\divv^q$ be a CPTP-monotone quantum extension of  
$\divv$ on $\cD$.
Then 
\begin{align*}
\oll{\divv}^{\meas}(W)\le
\liminf_{n\to+\infty}\frac{1}{n}\divv^q(W^{\potimes n})
\le\limsup_{n\to+\infty}\frac{1}{n}\divv^q(W^{\potimes n})\le\oll{\divv}^{\max}(W)
,\ds\ds\ds
W\in\cD.
\end{align*}
In particular, if
$\divv$ is weakly additive then 
\begin{align*}
\oll{\divv}^{\meas}(W)\le\divv^{q}(W)\le\oll{\divv}^{\max}(W),\ds\ds\ds
W\in\cD.
\end{align*}
That is, $\oll{\divv}^{\meas}$ is the smallest, and 
$\oll{\divv}^{\max}$ is the largest weakly additive and CPTP-monotone quantum extension of 
$\divv$ on $\cD$.
\end{cor}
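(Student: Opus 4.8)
The plan is to obtain the chain of inequalities by combining the ``sandwich'' bound recorded in Example \ref{ex:minmax extension} with the limit formulas of Proposition \ref{prop:regmeas regmax as limits}, and then to specialize to the weakly additive case.

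First I would observe that, since $\cD$ is closed under tensor products and $W\in\cD$, we have $W^{\potimes n}\in\cD\subseteq\cD(\divv^q)$ for every $n\in\bN$; this is exactly the role of the hypothesis on $\cD$. Because $\divv^q$ is a CPTP-monotone quantum extension of $\divv$, the bound from Example \ref{ex:minmax extension} applies to each $W^{\potimes n}$, giving
\begin{align*}
\divv^{\meas}(W^{\potimes n})\le\divv^q(W^{\potimes n})\le\divv^{\max}(W^{\potimes n}),\qquad n\in\bN.
\end{align*}
Dividing by $n$ and letting $n\to+\infty$, and invoking \eqref{regmeas as limit} on the left and \eqref{regmax as limit} on the right (so that the outer quantities are genuine limits equal to the regularized divergences), I get
\begin{align*}
\oll{\divv}^{\meas}(W)
&=\lim_{n\to+\infty}\tfrac1n\divv^{\meas}(W^{\potimes n})
\le\liminf_{n\to+\infty}\tfrac1n\divv^q(W^{\potimes n})\\
&\le\limsup_{n\to+\infty}\tfrac1n\divv^q(W^{\potimes n})
\le\lim_{n\to+\infty}\tfrac1n\divv^{\max}(W^{\potimes n})
=\oll{\divv}^{\max}(W),
\end{align*}
the middle inequality being automatic. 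This is the first displayed assertion.

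For the ``in particular'' part, weak additivity of $\divv^q$ yields $\divv^q(W^{\potimes n})=n\divv^q(W)$, hence $\tfrac1n\divv^q(W^{\potimes n})=\divv^q(W)$ for every $n$, so both the $\liminf$ and the $\limsup$ in the chain collapse to $\divv^q(W)$, and therefore $\oll{\divv}^{\meas}(W)\le\divv^q(W)\le\oll{\divv}^{\max}(W)$. Finally, $\oll{\divv}^{\meas}$ and $\oll{\divv}^{\max}$ are themselves CPTP-monotone and weakly additive (Corollary \ref{cor:regmeas regmax weakly additive}) quantum extensions of $\divv$, defined on all of $\cup_{d\in\bN}\B(\X,\bC^d)\supseteq\cD$; so each lies in the class in which it is being asserted extremal, and the previous line shows it really is the smallest, resp.\ the largest, member of that class on $\cD$.

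The argument is essentially bookkeeping and I do not expect a genuine obstacle. The one point worth a little care is that one must use the \emph{convergence} of $\tfrac1n\divv^{\meas}(W^{\potimes n})$ and of $\tfrac1n\divv^{\max}(W^{\potimes n})$ supplied by Proposition \ref{prop:regmeas regmax as limits} — which ultimately rests on Fekete's lemma and the (in)finiteness dichotomy \eqref{divmax (in)finite} — rather than merely the $\inf$/$\sup$ definitions of the regularized quantities, since $\tfrac1n\divv^q(W^{\potimes n})$ is not known to be monotone and could a priori oscillate; and one should keep in mind that all values live in $[-\infty,+\infty]$, but since every inequality and limit above is stated in that generality, nothing further is needed.
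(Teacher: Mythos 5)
Your proof is correct and follows exactly the route the paper intends: sandwich $\divv^q(W^{\potimes n})$ between $\divv^{\meas}(W^{\potimes n})$ and $\divv^{\max}(W^{\potimes n})$ via Example \ref{ex:minmax extension} (using closedness of $\cD$ under tensor products), divide by $n$, pass to the limit with \eqref{regmeas as limit} and \eqref{regmax as limit}, and collapse the middle term by weak additivity, with Corollary \ref{cor:regmeas regmax weakly additive} supplying the extremality claim. No gaps.
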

\bigskip

We will furthermore consider properties that only 
concern one variable of a divergence. We formulate these only for the case when this is the second variable of a $2$-variable divergence; the definitions in the general case can be obtained by straightforward modifications.
In particular, we say that a $2$-variable quantum divergence
$\divv$ is
\begin{itemize}
\item
\ki{anti-monotone in the second argument (AM)}, if for 
all $\rho,\sigma_1,\sigma_2\in\B(\hil)$ such that 
$(\rho,\sigma_1),(\rho,\sigma_2)\in\cD_{\hil}(\divv)$,
\begin{align}\label{antimon}
\sigma_1\le\sigma_2\ds\imp\ds \divv(\rho\|\sigma_1)\ge\divv(\rho\|\sigma_2);
\end{align}

\item
\ki{weakly anti-monotone in the second argument}, if for any 
$(\rho,\sigma)\in\cD_{\hil}(\divv)$, we have 
$(\rho,\sigma+\ep I)\in\cD_{\hil}(\divv)$, $\ep\in(0,\kappa_{\rho,\sigma})$,
with some $\kappa_{\rho,\sigma}>0$, and
\begin{align}\label{weakmon}
[0,\kappa_{\rho,\sigma})\ni\ep\mapsto \divv(\rho\|\sigma+\ep I)\ds\text{is decreasing};
\end{align}

\item
\ki{regular}, if for any 
$(\rho,\sigma)\in\cD_{\hil}(\divv)$, we have 
$(\rho,\sigma+\ep I)\in\cD_{\hil}(\divv)$, $\ep\in(0,\kappa_{\rho,\sigma})$,
with some $\kappa_{\rho,\sigma}>0$, and
\begin{align}\label{regularity}
\divv(\rho\|\sigma)=\lim_{\ep\searrow 0}\divv(\rho\|\sigma+\ep I);
\end{align}

\item
\ki{strongly regular}, if for any $(\rho,\sigma)\in\cD_{\hil}(\divv)$, and any sequence of operators $(\sigma_n)_{n\in\bN}$ converging decreasingly to $\sigma$ such that 
$(\rho,\sigma_n)\in\cD_{\hil}(\divv)$, $n\in\bN$, we have
\begin{align}\label{strong regularity}
\divv(\rho\|\sigma)=\lim_{n\to+\infty}\divv(\rho\|\sigma_n);
\end{align}
\end{itemize}

\begin{rem}\label{rem:AM regular}
Note that 
\begin{align*}
\text{AM + regularity}\ds\ds\imp\ds\ds
\text{strong regularity.}
\end{align*}
Indeed, assume that $\divv$ is regular and anti-monotone in its second argument.
Let $(\sigma_n)_{n\in\bN}$ be a sequence of operators converging decreasingly to $\sigma$, 
such that 
$(\rho,\sigma)\in\cD_{\hil}(\divv)$, 
$(\rho,\sigma_n)\in\cD_{\hil}(\divv)$, $n\in\bN$. 
Then for every $n\in\bN$,
\begin{align*}
\sigma\le\sigma_n=\sigma+\sigma_n-\sigma\le\sigma+
\underbrace{\norm{\sigma_n-\sigma}_{\infty}}_{=:c_n}I
=
\sigma+c_nI,
\end{align*}
whence 
\begin{align*}
\divv(\rho\|\sigma)\ge \divv(\rho\|\sigma_n)\ge \divv(\rho\|\sigma+c_nI),\ds\ds\ds n\in\bN.
\end{align*}
By the regularity assumption, the RHS above tends to $\divv(\rho\|\sigma)$ as $n\to+\infty$,
whence also $\lim_{n\to+\infty}\divv(\rho\|\sigma_n)=\divv(\rho\|\sigma)$. 
Thus, $\divv$ is strongly regular.
\end{rem}

\begin{rem}\label{rem:classical definitions}
All the above properties may be defined also for classical divergences, by 
replacing the operators $\rho,\sigma$, etc. with functions, or equivalently, by
requiring the operators in the definitions to be jointly diagonalizable.
\end{rem}

\subsection{Classical R\'enyi divergences}
\label{sec:cl Renyi}

The classical divergences of particular importance to us are the $2$-variable divergences called
\ki{relative entropy}, or \ki{Kullback-Leibler divergence},
and the \ki{classical R\'enyi divergences}. 
For a finite set $\I$, and $\rho,\sigma\in\F(\I)\pne$ the relative entropy of 
$\rho$ and $\sigma$ is defined as
\begin{align}\label{KL def}
D(\rho\|\sigma):=
\begin{cases}
\sum_{i\in\I}\left[\rho(i)\nlog\rho(i)-\rho(i)\nlog\sigma(i)\right]
,& \supp\rho\subseteq\supp\sigma,\\
+\infty,&\text{otherwise}.
\end{cases}
\end{align}
The classical R\'enyi $\alpha$-divergences \cite{Renyi} are
defined for $\rho,\sigma\in\F(\I)\pne$ and $\alpha\in(0,1)\cup(1,+\infty)$ as
\begin{align}
D_{\alpha}(\rho\|\sigma)&:=
\frac{1}{\alpha-1}\underbrace{\log Q_{\alpha}(\rho\|\sigma)}_{=:\psi_{\alpha}(\rho\|\sigma)}-\frac{1}{\alpha-1}\log\sum_{i\in\I}\rho(i),
\label{classical Renyi def}\\
Q_{\alpha}(\rho\|\sigma)&:=
\lim_{\ep\searrow 0}\sum_{i\in\I}(\rho(i)+\ep)^{\alpha}(\sigma(i)+\ep)^{1-\alpha}\nn\\
&=\begin{cases}
\sum_{i\in\I}\rho(i)^{\alpha}\sigma(i)^{1-\alpha}&\alpha\in(0,1)\text{ or }\supp\rho\subseteq\supp\sigma,\\
+\infty,&\text{otherwise}.
\end{cases}\label{classical Q def}
\end{align}
For $\alpha\in\{0,1,+\infty\}$, the R\'enyi $\alpha$-divergence is defined by the corresponding limit, and it is easy to see that 
\begin{align}
D_0(\rho\|\sigma)&:=\lim_{\alpha\searrow 0}D_{\alpha}(\rho\|\sigma)=
-\log\underbrace{\sum_{i\in\supp\rho}\sigma(i)}_{=:Q_0(\rho\|\sigma)}+\log\sum_{i\in\I}\rho(i),\label{0-Renyi cl}\\
D_1(\rho\|\sigma)&:=\lim_{\alpha\to 1}D_{\alpha}(\rho\|\sigma)=
\frac{1}{\sum_{i\in\I}\rho(i)}D(\rho\|\sigma),\label{1-Renyi cl}\\
D_{+\infty}(\rho\|\sigma)&:=\lim_{\alpha\to +\infty}D_{\alpha}(\rho\|\sigma)=
\log\inf\{\lambda>0:\,\rho\le\lambda\sigma\}.
\label{classical Dmax}
\end{align}
In particular, the R\'enyi $1$-divergence is the same as the relative entropy up to normalization.

We extend the definitions 
of the R\'enyi divergences 
to the case when the second argument is zero as
\begin{align}\label{zero argument cl1}
D_{\alpha}(\rho\|0):=+\infty,\ds\ds\ds\rho\gneq 0,\ds\alpha\in[0,+\infty),
\end{align}
and the definition of the relative entropy to the case when one or both arguments are zero as
\begin{align}\label{zero argument cl2}
\D(0\|\sigma):=0\ds\ds\sigma\ge 0,\ds\ds\ds\ds\ds
\D(\rho\|0):=+\infty\ds\ds\rho\gneq 0.
\end{align}

For the study and applications of the (classical) R\'enyi divergences, the relevant quantity 
is actually $Q_{\alpha}$ (equivalently, $\psi_{\alpha}$); the normalizations 
in \eqref{classical Renyi def} are somewhat arbitrary, and are mainly relevant only for the limits in \eqref{0-Renyi cl}--\eqref{classical Dmax}. For instance, one could alternatively use the \ki{symmetrically normalized R\'enyi $\alpha$-divergences} defined for any $\rho,\sigma\in\F(\I)\pne$ and $\alpha\in(0,1)\cup(1,+\infty)$ as 
\begin{align*}
\sD_{\alpha}(\rho\|\sigma)
&:=
\frac{1}{\alpha(1-\alpha)}\left[
-\log Q_{\alpha}(\rho\|\sigma)+\alpha\log\sum_{i\in\I}\rho(i)
+(1-\alpha)\log\sum_{i\in\I}\sigma(i)\right]\\
&=-\frac{1}{\alpha(1-\alpha)}\log Q_{\alpha}\bz\frac{\rho}{\sum_{i\in\I}\rho(i)}\Bigg\|
\frac{\sigma}{\sum_{i\in\I}\sigma(i)}\jz.
\end{align*}
For $\alpha\in\{0,1\}$ these give
\begin{align*}
\sD_{1}(\rho\|\sigma)&:=
\lim_{\alpha\to 1}\sD_{\alpha}(\rho\|\sigma)=
D\bz\frac{\rho}{\sum_{i\in\I}\rho(i)}\Bigg\|
\frac{\sigma}{\sum_{i\in\I}\sigma(i)}\jz=
D_1(\rho\|\sigma)-\log\sum_{i\in\I}\rho(i)+\log\sum_{i\in\I}\sigma(i),\\
\sD_{0}(\rho\|\sigma)&:=
\lim_{\alpha\searrow 0}\sD_{\alpha}(\rho\|\sigma)=
D\bz\frac{\sigma}{\sum_{i\in\I}\sigma(i)}\Bigg\|\frac{\rho}{\sum_{i\in\I}\rho(i)}\jz
=
D_1(\sigma\|\rho)+\log\sum_{i\in\I}\rho(i)-\log\sum_{i\in\I}\sigma(i),
\end{align*}
while 
\begin{align*}
\sD_{+\infty}(\rho\|\sigma)&:=
\lim_{\alpha\to +\infty}\sD_{\alpha}(\rho\|\sigma)=0
\end{align*}
is not very interesting. 

As mentioned already in the Introduction, the R\'enyi $\alpha$-divergences
with $\alpha\in(0,1)\cup(1,+\infty)$ can be recovered from the relative entropy as
\begin{align}\label{classical variational2}
-\log Q_{\alpha}(\rho\|\sigma)=
\inf_{\omega}\left\{
\alpha D(\omega\|\rho)+(1-\alpha)D(\omega\|\sigma)\right\},
\end{align}
where the infimum is taken over all $\omega\in\P(\I)$ with 
$\supp\omega\subseteq\supp\rho$, and it is uniquely attained at 
\begin{align}\label{classical optimal state}
\omega_{\alpha}(\rho\|\sigma):=
\sum_{i\in S}\frac{\rho(i)^{\alpha}\sigma(i)^{1-\alpha}}{\sum_{j\in S}\rho(j)^{\alpha}\sigma(j)^{1-\alpha}}\egy_{\{i\}},
\end{align}
where $S:=\supp\rho\cap\supp\sigma$, 
provided that $\supp\rho\subseteq\supp\sigma$, 
or $\supp\rho\cap\supp\sigma\ne\emptyset$ and $\alpha\in(0,1)$.
The case $\alpha\in(0,1)$ was discussed in \cite{CsM2003} in the more general setting 
where $\I$ is not finite,
while the case $\alpha>1$ was discussed in the finite-dimensional quantum case in 
\cite{MO-cqconv-cc}; see also Section \ref{sec:barycentric} below.

It is natural to ask whether the concept of R\'enyi divergences can be generalized to more than two variables. 
Formulas \eqref{classical Q def} and \eqref{classical variational2} offer two different approaches to do that. In a very general setting, one may 
consider a set $\X$ equipped with a $\sigma$-algebra $\A$. Then for any measurable
$w\in\F(\X,\I)\pne$ and signed measure $P$ on $\A$ with $P(\X)=1$, one may consider
\begin{align}\label{cl multiR limit1}
\hat Q_P(w):=\lim_{\ep\searrow 0}\sum_{i\in\I}\exp\bz\int_{\X}\log(w_x(i)+\ep)\,dP(x)\jz,
\end{align}
or 
\begin{align}\label{cl multiR limit2}
\hat Q_P(w)
:=
\lim_{\ep\searrow 0}\sum_{i\in\I}
\exp\bz\int_{\X}\log((1-\ep)w_x(i)+\ep/|\I|)\,dP(x)\jz,
\end{align}
where the latter is somewhat more natural when the $w_x$ are probability density functions on $\I$.
In the most general case, various issues regarding the existence of the integrals and the limits arise, which are important from a mathematical, but not particularly relevant from a conceptual point, and hence for the rest we will restrict our attention to the case where 
$P$ is finitely supported. 
In that case the integrals always exist, and the 
$\ep\searrow 0$ limit can be easily determined as
\begin{align}
\hat Q_P(w)=
\sum_{i\in\I}\bz\Bigg(\prod_{x:\,w_x(i)>0}w_x(i)^{P(x)}\Bigg)\cdot
\left\{\begin{array}{ll}
0,&\text{if }\sum_{x:\,w_{x}(i)=0}P(x)>0,\\
1,&\text{if } \sum_{x:\,w_{x}(i)=0}P(x)=0,\\
+\infty,&\text{if } \sum_{x:\,w_{x}(i)=0}P(x)<0,
\end{array}\right\}\jz
\label{cl multiR1}
\end{align}
independently of whether \eqref{cl multiR limit1} or \eqref{cl multiR limit2} is used.

Alternatively, one may define
\begin{align*}
\tilde Q_P^{\bary,\cl}(w):=\sup_{\tau\in[0,+\infty)^{\I}}\left\{\sum_{i\in\I}\tau(i)-\int_{\X}D(\tau\|w_x)\,dP(x)\right\},
\end{align*}
which is well-defined at least when $P$ is a probability measure, all the 
$w_x$ are probability density functions, and
$\X\ni x\mapsto D(\tau\|w_x)$ is measurable.
Again, we restrict to the case when $P$ is finitely supported, but allow it to be a signed probability measure, in which case we use a slight modification of the above to define
\begin{align}\label{cl multiR2}
Q_P^{\bary,\cl}(w):=
\sup_{\substack{\tau\in[0,+\infty)^{\I}\\ \supp\tau\subseteq\bigcap_{x:\,P(x)>0}\supp w_x}}
\left\{\sum_{i\in\I}\tau(i)-\sum_{x\in\X}P(x)D(\tau\|w_x)\right\}.
\end{align}
We will show in Section \ref{sec:defs} that this is equivalent to \eqref{classical variational2} 
when $\X=\{0,1\}$, $P(0)=\alpha\in(0,1)\cup(1,+\infty)$. 
It is not too difficult to see that with the definition in \eqref{cl multiR2}, we have 
\begin{align*}
Q_P^{\bary,\cl}(w)=+\infty\ds\iff\ds
\bigcap_{x:\,P(x)>0}\supp w_x\not\subseteq\bigcap_{x:\,P(x)<0}\supp w_x;
\end{align*}
see Proposition \ref{prop:multi psi bounds} for the more general quantum case.
Thus, while \eqref{cl multiR1} and \eqref{cl multiR2} coincide when 
$P$ is a probability measure, they may differ when $P$ can take negative values. 
The following is easy to verify from \eqref{cl multiR1} and \eqref{cl multiR2}:

\begin{lemma}\label{lemma:cl Q uniqueness}
Let $P\in\P_f^{\pm}(\X)$ and $w\in\F(\X,\I)\pne$, and assume that at least one of the following holds true:
\begin{enumerate}
\item
$\supp w_x=\supp w_{x'}$, $x,x'\in\supp P$;
\item\label{Q uniqueness ii}
$P\in\P_f(\X)\cup\P_{f,1}^{\pm}(\X)$, i.e., 
either $P(x)\ge 0$, $x\in\X$, or 
there exists a unique $x_+$ with $P(x_+)>0$ and 
$P(x)\le 0$, $x\in\X\setminus\{x_+\}$. 
\end{enumerate}
Then $\hat Q_P(w)=Q_P^{\bary,\cl}(w)$. 
\end{lemma}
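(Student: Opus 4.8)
The plan is to compute both sides of the claimed identity explicitly from \eqref{cl multiR1} and \eqref{cl multiR2}, and compare. Write $T_+:=\bigcap_{x:\,P(x)>0}\supp w_x$ and $T_-:=\bigcap_{x:\,P(x)<0}\supp w_x$, with the convention that an empty intersection equals $\I$. Since the relative entropies expand coordinatewise, both the constraint $\supp\tau\subseteq T_+$ and the objective $\sum_i\tau(i)-\sum_xP(x)D(\tau\|w_x)$ in $Q_P^{\bary,\cl}(w)$ decouple over the coordinates $i\in\I$, so the supremum is a sum over $i\in T_+$ of one-variable suprema over $\tau(i)\in[0,+\infty)$.

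First I would treat the infinite case: if $T_+\not\subseteq T_-$, pick $i_0\in T_+\setminus T_-$ and take $\tau:=\egy_{\{i_0\}}$; this is feasible, each term with $P(x)>0$ is finite (as $i_0\in\supp w_x$ gives $D(\tau\|w_x)=-\log w_x(i_0)<+\infty$), while $-P(x_0)D(\tau\|w_{x_0})=+\infty$ for some $x_0$ with $P(x_0)<0$, so the objective is $+\infty$ and $Q_P^{\bary,\cl}(w)=+\infty$. If $T_+\subseteq T_-$, then $w_x(i)>0$ for every $i\in T_+$ and every $x\in\supp P$; using $\sum_xP(x)=1$, the $i$-th summand of the objective becomes $\tau(i)-\tau(i)\log\tau(i)+\tau(i)\sum_xP(x)\log w_x(i)$, whose supremum over $\tau(i)$ is $\prod_{x\in\supp P}w_x(i)^{P(x)}$ by Lemma \ref{lemma:optimization} (with $c=-\sum_xP(x)\log w_x(i)$), uniquely attained. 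Hence $Q_P^{\bary,\cl}(w)=\sum_{i\in T_+}\prod_{x\in\supp P}w_x(i)^{P(x)}$ in this case. For $\hat Q_P(w)$ I would use that $\sum_{x:\,w_x(i)=0}P(x)=1-\sum_{x:\,w_x(i)>0}P(x)$, so the $i$-th summand in \eqref{cl multiR1} is the (always finite, positive) product $\prod_{x:\,w_x(i)>0}w_x(i)^{P(x)}$ times $0$, $1$, or $+\infty$ according to whether $\sum_{x:\,w_x(i)=0}P(x)$ is positive, zero, or negative.

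It then remains to check, under each hypothesis, that the two agree. Under (i), the supports $\supp w_x$ ($x\in\supp P$) coincide with a common set $S$, so $T_+=T_-=S$; for $i\in S$ the coefficient in \eqref{cl multiR1} is $1$, and for $i\notin S$ the set $\{x:\,w_x(i)=0\}$ contains all of $\supp P$, giving $\sum_{x:\,w_x(i)=0}P(x)=1>0$ and coefficient $0$, so $\hat Q_P(w)=\sum_{i\in S}\prod_{x\in\supp P}w_x(i)^{P(x)}=Q_P^{\bary,\cl}(w)$. Under (ii) with $P\in\P_f(\X)$ all $P(x)\ge0$, so $\sum_{x:\,w_x(i)=0}P(x)\ge0$, vanishing exactly when $i\in T_+$, while $T_-=\I\supseteq T_+$; both sides again equal $\sum_{i\in T_+}\prod_{x\in\supp P}w_x(i)^{P(x)}$.

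The last case, (ii) with $P\in\P_{f,1}^{\pm}(\X)$ and its unique $x_+$, is where the signed-measure bookkeeping needs care, and is the only part I expect to require attention. Here $\sum_{x\ne x_+}P(x)\le0$ forces $P(x_+)\ge1$. If $T_+=\supp w_{x_+}\not\subseteq T_-$, then $Q_P^{\bary,\cl}(w)=+\infty$ by the first step; for $i_0\in T_+\setminus T_-$ the set $\{x:\,w_x(i_0)=0\}$ omits $x_+$ but meets $\{x:\,P(x)<0\}$, hence consists of indices with $P\le0$ and at least one $P<0$, so $\sum_{x:\,w_x(i_0)=0}P(x)<0$ and the corresponding summand of $\hat Q_P(w)$ is $+\infty$. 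If $T_+\subseteq T_-$, then for $i\in T_+$ only indices $x$ with $P(x)=0$ can satisfy $w_x(i)=0$, so the coefficient is $1$, while for $i\notin T_+$ we have $w_{x_+}(i)=0$, hence $\sum_{x:\,w_x(i)=0}P(x)=P(x_+)+\sum_{x\ne x_+:\,w_x(i)=0}P(x)\ge P(x_+)+\sum_{x\ne x_+}P(x)=1>0$ and the coefficient is $0$; in both subcases $\hat Q_P(w)=Q_P^{\bary,\cl}(w)$. Apart from this coordinate-by-coordinate sign tracking, everything is a routine application of Lemma \ref{lemma:optimization}.
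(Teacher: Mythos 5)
Your proof is correct and follows exactly the route the paper intends: the paper gives no written proof of this lemma (it only remarks that the statement ``is easy to verify from \eqref{cl multiR1} and \eqref{cl multiR2}''), and your coordinatewise evaluation of \eqref{cl multiR2} via Lemma \ref{lemma:optimization}, followed by the sign bookkeeping against the explicit formula \eqref{cl multiR1}, is precisely that verification (the same optimization appears in the quantum commutative setting in Lemma \ref{lemma:commuting divrad}). In particular, your treatment of the $\P_{f,1}^{\pm}(\X)$ case --- using that dropping nonpositive terms can only increase $\sum_{x\ne x_+:\,w_x(i)=0}P(x)$, so that the coefficient in \eqref{cl multiR1} is $0$ for $i\notin\supp w_{x_+}$ and the two sides also agree when $T_+\not\subseteq T_-$ (both being $+\infty$) --- is the only delicate point, and you handle it correctly.
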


\begin{definition}\label{def:classical QP}
For any $P\in\P_f^{\pm}(\X)$ and any $w\in\F(\X,\I)\pne$ such that 
$\hat Q_P(w)=Q_P^{\bary,\cl}(w)$, we call this common value the 
\ki{multi-variate R\'enyi $Q_P$ of $w$}, and denote it by $Q_P(w)$.

For any $P\in\bz\P_f(\X)\cup\P_{f,1}^{\pm}(\X)\jz\setminus\{\egy_{\{x\}}:\,x\in\X\}$, 
we define the 
\ki{(symmetrically normalized) classical $P$-weighted R\'enyi-divergence}
of $w\in\F(\X,\I)\pne$ as 
\begin{align*}
D_P(w)&:=\frac{1}{\prod_{x\in\X}(1-P(x))}
\bz -\log Q_P(w)+\sum_{x\in\X}P(x)\log\sum_{i\in\I}w_x(i)\jz\\
&=
\frac{1}{\prod_{x\in\X}(1-P(x))}
\bz -\log Q_P\bz\bz\frac{w_x}{\sum_{i\in\I}w_x(i)}\jz_{x\in\X}\jz\jz.
\end{align*}
In this case we also define
\begin{align*}
\signed{Q}_P(w):=s(P)Q_P(w),
\end{align*}
where
\begin{align*}
s(P):=\begin{cases}
-1,&P\in\P_f(\X),\\
1,&P\in \P_{f,1}^{\pm}(\X).
\end{cases}
\end{align*}
\end{definition}

Lemma \ref{lemma:cl Q uniqueness} and 
\eqref{cl multiR1} yield immediately the following:
\begin{cor}
Let $P\in\P_f(\X)\cup\P_{f,1}^{\pm}(\X)$ and $w\in\F(\X,\I)\pne$. Then 
\begin{align}\label{cl multiR smoothing}
Q_P(w)=\lim_{\ep\searrow 0}Q_P(w+\ep)=\lim_{\ep\searrow 0}Q_P((1-\ep)w+\ep/|\I|).
\end{align}
\end{cor}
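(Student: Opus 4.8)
The plan is to reduce the statement to the explicit formula \eqref{cl multiR1} for $\hat Q_P$, using that Lemma \ref{lemma:cl Q uniqueness} identifies $\hat Q_P$ with $Q_P$ in all cases that occur here. First I would note that for every $\ep\in(0,1)$ the perturbed tuples $w+\ep=(w_x+\ep)_{x\in\X}$ and $(1-\ep)w+\ep/|\I|=((1-\ep)w_x+\ep/|\I|)_{x\in\X}$ consist of strictly positive functions on $\I$, so in particular all their supports are equal (to $\I$). Hence Lemma \ref{lemma:cl Q uniqueness}(i) applies to them and gives $Q_P(w+\ep)=\hat Q_P(w+\ep)=\sum_{i\in\I}\prod_{x\in\X}(w_x(i)+\ep)^{P(x)}$ and likewise $Q_P((1-\ep)w+\ep/|\I|)=\hat Q_P((1-\ep)w+\ep/|\I|)=\sum_{i\in\I}\prod_{x\in\X}((1-\ep)w_x(i)+\ep/|\I|)^{P(x)}$; here the case distinction in \eqref{cl multiR1} is vacuous since no argument vanishes. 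On the other hand $P\in\P_f(\X)\cup\P_{f,1}^{\pm}(\X)$ is exactly the hypothesis of Lemma \ref{lemma:cl Q uniqueness}(ii), so $\hat Q_P(w)=Q_P^{\bary,\cl}(w)=Q_P(w)$. It therefore suffices to show that the two $\ep$-families above converge to $\hat Q_P(w)$ as $\ep\searrow 0$.

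The second step is a termwise passage to the limit. Fix $i\in\I$ and split $\X$ into $\X_i^+:=\{x:w_x(i)>0\}$ and $\X_i^0:=\{x:w_x(i)=0\}$. For $x\in\X_i^+$ the factor $(w_x(i)+\ep)^{P(x)}$ (respectively $((1-\ep)w_x(i)+\ep/|\I|)^{P(x)}$) converges to $w_x(i)^{P(x)}$ by continuity of $t\mapsto t^{P(x)}$ on $(0,+\infty)$, so the finite product over $\X_i^+$ tends to $\prod_{x\in\X_i^+}w_x(i)^{P(x)}$. For $x\in\X_i^0$ the factor is $\ep^{P(x)}$ (respectively $(\ep/|\I|)^{P(x)}$), so the product over $\X_i^0$ equals $\ep^{\sum_{x\in\X_i^0}P(x)}$ (respectively $(\ep/|\I|)^{\sum_{x\in\X_i^0}P(x)}$), which tends to $0$, to $1$, or to $+\infty$ according as $\sum_{x\in\X_i^0}P(x)$ is positive, zero (this includes the case $\X_i^0=\emptyset$, i.e.\ the empty product equals $1$), or negative. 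Multiplying, the $i$-th summand of either $\ep$-family converges to exactly the $i$-th summand of $\hat Q_P(w)$ as written in \eqref{cl multiR1}; here one should keep in mind the convention \eqref{zero times infty} together with the empty-product convention in the degenerate case $\X_i^0=\X$, in which $\sum_{x\in\X_i^0}P(x)=\sum_{x\in\X}P(x)=1>0$, so that both sides give the summand $0$.

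Finally I would sum over the finite set $\I$. If every summand of $\hat Q_P(w)$ is finite, then the limit of a finite sum is the sum of the limits, so $\hat Q_P(w+\ep)\to\hat Q_P(w)$ and the same for the other family; if some summand of $\hat Q_P(w)$ equals $+\infty$, then the corresponding nonnegative $\ep$-summand already diverges to $+\infty$ and, being a lower bound for the whole (nonnegative) sum, forces $\hat Q_P(w+\ep)\to+\infty=\hat Q_P(w)$, and again the same for the other family. I do not anticipate any genuine obstacle: the only points needing a line of care are checking that the perturbed arguments have full support (so that Lemma \ref{lemma:cl Q uniqueness} applies with no case analysis) and bookkeeping the empty-sum/empty-product and $0\cdot(\pm\infty)$ conventions in the degenerate summands; everything else is a routine limit of a finite sum of elementary functions of $\ep$.
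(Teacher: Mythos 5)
Your proposal is correct and follows essentially the same route as the paper: identify $Q_P$ with $\hat Q_P$ via Lemma \ref{lemma:cl Q uniqueness} and observe that the $\ep$-smoothed quantities are exactly the expressions whose $\ep\searrow 0$ limit defines $\hat Q_P(w)$, evaluated as in \eqref{cl multiR1}. The only difference is that you spell out the elementary termwise limit computation (and the $0\cdot(\pm\infty)$ / empty-product bookkeeping) that the paper treats as immediate, which is fine.
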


\begin{rem}
In the case when $P$ is a probability measure, $Q_P(w)$ was introduced in 
classical decision theory, and called the 
\ki{$P$-weighted 
Hellinger transform of $w$};
see \cite{Strasser1985}. The case where $P(x)>0$ for exactly one $x$ was very recently considered in \cite{Mu_Econometrica2020} in the context of (classical) Blackwell dominance of experiments, 
and in
\cite{farooq2023asymptotic} in the case where all $w$ are strictly positive, in the context of 
classical state convertibility. 
\end{rem}

Note that in the case when $\X=\{0,1\}$ and $\alpha:=P(0)\in(0,1)\cup(1,+\infty)$,
condition \ref{Q uniqueness ii} in Lemma \ref{lemma:cl Q uniqueness} is always satisfied, and we have
\begin{align*}
Q_P(w)=Q_{\alpha}(w_0\|w_1).
\end{align*}
That is, the multi-variate $Q_P$ give
multi-variate extensions of the $Q_{\alpha}$ quantities.

\begin{rem}
Note that when $\X=\{0,1\}$ and $\alpha:=P(0)=0$, neither 
$\hat Q_P(w)$ nor $Q_P^{\bary,\cl}(w)$ coincides with $Q_{0}(w_0\|w_1)$
in general. The reason for this in the case of 
$\hat Q_P(w)$ is that the limits $\ep\searrow 0$ and 
$\alpha\searrow 0$ are not interchangeable, while in the case of 
$Q_P^{\bary,\cl}(w)$, it is clear that it only depends on 
$(w_x)_{x\in\supp P}$, while $Q_0$ depends on $w_0$ (or at least its support) 
even though $0\notin\supp P=\supp(0,1)=\{1\}$.
\end{rem}

Recall that classical divergences can be identified with quantum divergences defined on 
commuting operators; in particular, monotonicity under (completely) positive trace-preserving maps makes sense for the former. 
For the purposes of applications, it is monotone divergences that are relevant, 
and monotonicity is closely related to joint convexity.
The following is easy to verify; see, e.g., 
the supplementary material to \cite[Lemma 8]{Mu_Econometrica2020}.

\begin{lemma}\label{lemma:classical QP jointly convex}
Let $P$ be a finitely supported signed probability measure on $\X$. The following are equivalent:
\begin{enumerate}
\item\label{cl multiR convex1}
$\signed{Q}_P$ is jointly convex on $\F(\X,\I)\pne$ for any/some finite non-empty $\I$; 
\item\label{cl multiR convex2}
$\signed{Q}_P$ is jointly convex on $\F(\X,\I)\pp$ for any/some finite non-empty $\I$; 
\item\label{cl multiR convex3}
$P\in\P_f(\X)\cup\P_{f,1}^{\pm}(\X)$. 
\end{enumerate}
\end{lemma}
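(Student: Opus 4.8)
The plan is to establish the cycle (iii) $\Rightarrow$ (i) $\Rightarrow$ (ii) $\Rightarrow$ (iii); the ``any/some $\I$'' quantifier then takes care of itself, since I would prove (iii) $\Rightarrow$ (i) for \emph{every} finite non-empty $\I$, the implication (i) $\Rightarrow$ (ii) with the \emph{same} $\I$ (just restrict from the convex set $\F(\X,\I)\pne$ to its convex subset $\F(\X,\I)\pp$), and (ii) $\Rightarrow$ (iii) starting from joint convexity on \emph{some} $\I$. Everything reduces to one elementary fact about the single-site power-product function $\phi_p(a):=\prod_{x\in\X}a_x^{p_x}$ on $(0,+\infty)^{\X}$, where $p\in\bR^{\X}$ is finitely supported with $\sum_x p_x=1$: $\phi_p$ is concave if $p_x\ge 0$ for all $x$, is convex if exactly one $p_x$ is positive, and is \emph{neither} convex nor concave as soon as at least two of the $p_x$ are positive. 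I would get concavity in the first case from $1$-homogeneity together with superadditivity, the latter being the weighted AM--GM inequality after dividing $\prod_x(a_x+b_x)^{p_x}\ge\prod_x a_x^{p_x}+\prod_x b_x^{p_x}$ by its left-hand side; convexity in the second case by writing $\phi_p$ as the (scalar) perspective $a_{x_+}\,g\big((a_x/a_{x_+})_{x\ne x_+}\big)$ of the convex function $g(y)=\exp\big(\sum_{x\ne x_+}|p_x|(-\log y_x)\big)$, where $x_+$ is the unique index with $p_{x_+}>0$, and invoking that the perspective of a convex function is jointly convex; and non-convexity in the third case again from $1$-homogeneity, since then even subadditivity fails — taking $a,b$ all-ones in every coordinate except $a_{x_1}=b_{x_2}=\ep$ for two indices with $p_{x_1},p_{x_2}>0$ gives $\phi_p(a)+\phi_p(b)=\ep^{p_{x_1}}+\ep^{p_{x_2}}\to 0$ while $\phi_p(a+b)\to 2^{\,1-p_{x_1}-p_{x_2}}>0$ as $\ep\searrow 0$.

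For (iii) $\Rightarrow$ (i): given $P\in\P_f(\X)\cup\P_{f,1}^{\pm}(\X)$, I would apply the smoothing identity \eqref{cl multiR smoothing} to write $Q_P(w)=\lim_{\ep\searrow 0}\sum_{i\in\I}\prod_{x}(w_x(i)+\ep)^{P(x)}$ for every $w\in\F(\X,\I)\pne$. For each fixed $\ep>0$ the $i$-th summand is the composition of the affine map $w\mapsto(w_x(i)+\ep)_{x\in\X}\in[\ep,+\infty)^{\X}$ with $\phi_P$, so $w\mapsto s(P)\sum_i\prod_x(w_x(i)+\ep)^{P(x)}$ is a finite sum of affine pullbacks of $s(P)\phi_P$; and $s(P)\phi_P$ is convex on $(0,+\infty)^{\X}$ by the elementary fact ($s(P)=-1$ with $\phi_P$ concave when $P\ge 0$, and $s(P)=1$ with $\phi_P$ convex when $P\in\P_{f,1}^{\pm}(\X)$). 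Since a pointwise limit of convex functions is convex, $\signed Q_P$ is jointly convex on $\F(\X,\I)\pne$ for every $\I$, which in particular also yields (ii) for every $\I$.

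For (ii) $\Rightarrow$ (iii) I would argue by contraposition. If $P\notin\P_f(\X)\cup\P_{f,1}^{\pm}(\X)$ then, since $\sum_x P(x)=1$, $P$ has at least two strictly positive values $P(x_1),P(x_2)>0$. Fixing any finite non-empty $\I$ and any $i_0\in\I$, I restrict $\signed Q_P$ to the convex slice of $\F(\X,\I)\pp$ on which $w_x(i)=1$ for all $i\ne i_0$ and all $x$; there $Q_P(w)=\hat Q_P(w)=\phi_P\big((w_x(i_0))_x\big)+(|\I|-1)$, which by the elementary fact is neither convex nor concave in $(w_x(i_0))_x$, so no choice of the global sign makes $\signed Q_P$ convex on this slice, hence not on $\F(\X,\I)\pp$. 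This closes the cycle. The only genuinely non-routine ingredient is the trichotomy for $\phi_p$, together with the observation that $\phi_P$ is discontinuous on the boundary of $[0,+\infty)^{\X}$ precisely when $P$ takes negative values, so joint convexity on all of $\F(\X,\I)\pne$ has to be reached through the $\ep\searrow 0$ limit rather than by a direct boundary argument — which is exactly the point where hypothesis (iii) enters, via \eqref{cl multiR smoothing}.
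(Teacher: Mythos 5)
Your argument is sound in outline and is genuinely more self-contained than the paper's proof: the paper disposes of the equivalence (ii)$\iff$(iii) by citing \cite[Lemma 8]{Mu_Econometrica2020} and of (i)$\iff$(ii) by the smoothing identity \eqref{cl multiR smoothing}, whereas you re-prove the content of the cited lemma yourself --- concavity of the weighted geometric mean $\phi_P$ via $1$-homogeneity plus superadditivity (weighted AM--GM), convexity in the one-positive-weight case via the perspective construction, failure of convexity otherwise via a failure of subadditivity --- and you use the smoothing identity exactly where it is needed, namely to push convexity of $s(P)\phi_P$ from the open orthant to all of $\F(\X,\I)\pne$ in the step (iii)$\Rightarrow$(i). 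Your quantifier bookkeeping around the cycle (iii)$\Rightarrow$(i)$\Rightarrow$(ii)$\Rightarrow$(iii) does deliver all the ``any/some'' versions, and the possibility $Q_P(w)=+\infty$ on $\F(\X,\I)\pne$ for $P\in\P_{f,1}^{\pm}(\X)$ is harmless in the pointwise-limit argument since all quantities involved are nonnegative. What your route buys is independence from the external reference; what the paper's route buys is brevity.

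Two points need repair, both local. First, your trichotomy for $\phi_p$ is mis-stated: ``neither convex nor concave as soon as at least two of the $p_x$ are positive'' contradicts your first clause (if all $p_x\ge 0$ with two of them positive, $\phi_p$ \emph{is} concave); what you need, and what the contrapositive of (ii)$\Rightarrow$(iii) actually supplies, is ``at least two positive \emph{and} at least one negative weight''. Second, your counterexample ($a_{x_1}=b_{x_2}=\ep$, all other coordinates equal to $1$) only shows that subadditivity, hence convexity, fails; it says nothing about concavity. Since Definition \ref{def:classical QP} assigns no sign $s(P)$ when $P\notin\P_f(\X)\cup\P_{f,1}^{\pm}(\X)$, your reading of (ii) --- that \emph{no} choice of global sign makes $Q_P$ convex on the slice --- is the right one, but it then obliges you to also rule out concavity of $\phi_P$. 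That is a one-line addition: if $P(x_3)<0$, the restriction of $\phi_P$ to the line on which only the $x_3$-coordinate varies is $t\mapsto C\,t^{P(x_3)}$ with $C>0$, which is strictly convex, so $\phi_P$ is not concave (equivalently, superadditivity fails, e.g.\ with $a$ all ones except $a_{x_3}=\ep$ and $b$ all ones). With these two fixes the proof is complete.
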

\begin{proof}
The equivalence of \ref{cl multiR convex2} and \ref{cl multiR convex3}
is due to \cite[Lemma 8]{Mu_Econometrica2020}, and the equivalence of 
\ref{cl multiR convex2} and \ref{cl multiR convex1} is immediate from 
\eqref{cl multiR smoothing}.
\end{proof}

\begin{cor}
For any $P\in\P_f(\X)\cup\P_{f,1}^{\pm}(\X)$, $\signed{Q}_P$ is 
jointly convex and
monotone under positive trace-preserving maps, and 
$D_P$ is monotone under positive trace-preserving maps whenever
$P\in\bz\P_f(\X)\cup\P_{f,1}^{\pm}(\X)\jz\setminus\{\egy_{\{x\}}:\,x\in\X\}$.
\end{cor}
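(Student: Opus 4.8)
The plan is to reduce both statements to Lemma~\ref{lemma:classical QP jointly convex}, which already delivers joint convexity of $\signed{Q}_P$ for $P\in\P_f(\X)\cup\P_{f,1}^{\pm}(\X)$, combined with the mechanism of Lemma~\ref{lemma:jointconc from mon}(ii): a jointly convex divergence that is stable under tensoring with the maximally mixed state is monotone non-increasing under CPTP maps, hence — since classical divergences are quantum divergences on commuting families, and by Remark~\ref{rem:classical mon} — under PTP maps. So the first thing I would verify is that $\signed{Q}_P$ (equivalently $Q_P$) is \emph{stable under tensoring with the maximally mixed state}: with $u:=\egy_{[n]}/n$, the explicit formula \eqref{cl multiR1} shows that the coordinate $(i,k)$ of $w\potimes u$ contributes $\bigl(\prod_{x:\,w_x(i)>0}w_x(i)^{P(x)}\bigr)\,n^{-\sum_{x:\,w_x(i)>0}P(x)}$ times exactly the $\{0,1,+\infty\}$-valued factor attached to the coordinate $i$ of $w$ (that factor depends only on $\sum_{x:\,w_x(i)=0}P(x)$, which is unchanged because $1/n>0$); since $\sum_{x:\,w_x(i)>0}P(x)=1-\sum_{x:\,w_x(i)=0}P(x)$, a coordinate with finite nonzero weight picks up the factor $n\cdot n^{-1}=1$ after summing over $k\in[n]$, while a coordinate contributing $0$ (resp.\ $+\infty$) still contributes $0$ (resp.\ $+\infty$), so $Q_P(w\potimes u)=Q_P(w)$. (The product $\prod_{x:\,w_x(i)>0}w_x(i)^{P(x)}$ is a strictly positive real, so $0\cdot(+\infty)=0$ never interferes.)

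Granting this, Lemma~\ref{lemma:jointconc from mon}(ii) gives $\signed{Q}_P(\map(W))\le\signed{Q}_P(W)$ for any CPTP map $\map$ whose image is still a commuting family; applying it to the canonical CPTP extension of an arbitrary stochastic matrix $T$ (measure in one orthonormal basis, then prepare in another), which maps classical families to classical families, yields $\signed{Q}_P(Tw)\le\signed{Q}_P(w)$, i.e.\ PTP-monotonicity of $\signed{Q}_P$. (Alternatively one can argue directly, without Lemma~\ref{lemma:jointconc from mon}: every stochastic map factors as the $Q_P$-invariant ``refinement'' $i\mapsto(i,j)$ with probability $T_{i,j}$, followed by the deterministic coarse-graining $(i,j)\mapsto j$, and the latter is handled by joint convexity together with invariance under relabeling and the scaling identity $Q_P(\tfrac a2,\tfrac a2,\dots)=Q_P(a,\dots)$.)

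For the monotonicity of $D_P$ I would use that a stochastic map preserves total masses, $\sum_i(Tw_x)(i)=\sum_i w_x(i)$ for every $x$, so that the definition of $D_P$ gives
\begin{align*}
D_P(Tw)-D_P(w)=\frac{1}{\prod_{x\in\X}(1-P(x))}\bigl(-\log Q_P(Tw)+\log Q_P(w)\bigr),
\end{align*}
and then it only remains to track signs. If $P\in\P_f(\X)$ with $P\ne\egy_{\{x\}}$, then $0\le P(x)<1$ for all $x$, so $\prod_x(1-P(x))>0$, while $s(P)=-1$ turns PTP-monotonicity of $\signed{Q}_P$ into $Q_P(Tw)\ge Q_P(w)$, making the bracket $\le 0$; hence $D_P(Tw)\le D_P(w)$. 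If instead $P\in\P_{f,1}^{\pm}(\X)$ with $P\ne\egy_{\{x_+\}}$, then $P(x_+)>1$ and $P(x)\le 0<1$ for $x\ne x_+$, so exactly one factor of $\prod_x(1-P(x))$ is negative and $\prod_x(1-P(x))<0$, while $s(P)=+1$ gives $Q_P(Tw)\le Q_P(w)$, making the bracket $\ge 0$; again $D_P(Tw)\le D_P(w)$. (The masses are finite and strictly positive, so no $\infty-\infty$ occurs, and the degenerate cases $Q_P\in\{0,+\infty\}$ are covered by reading the inequalities in $[-\infty,+\infty]$.)

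The hard part, I expect, is the sign bookkeeping in the last step — simultaneously matching the sign of $\prod_x(1-P(x))$, the value $s(P)$, and the direction of the $\signed{Q}_P$-inequality in the two cases $\P_f(\X)$ and $\P_{f,1}^{\pm}(\X)$, and confirming that the excluded measures $\egy_{\{x\}}$ are precisely the ones for which $\prod_x(1-P(x))$ vanishes — together with checking that the passage in the second step from CPTP-monotonicity to classical PTP-monotonicity really is covered by Remark~\ref{rem:classical mon} and the identification of classical divergences with quantum divergences on commuting families.
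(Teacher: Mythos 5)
Your proposal is correct and follows essentially the same route as the paper: joint convexity from Lemma \ref{lemma:classical QP jointly convex}, plus homogeneity/block additivity/stability under tensoring with a (maximally mixed) state fed into Lemma \ref{lemma:jointconc from mon} to get PTP-monotonicity of $\signed{Q}_P$, from which monotonicity of $D_P$ follows; the paper simply declares these verifications ``straightforward,'' while you carry them out (the stability computation from \eqref{cl multiR1}, the sign bookkeeping with $s(P)$ and $\prod_x(1-P(x))$, and the preservation of the masses $\sum_i w_x(i)$). The subtlety you flag at the end — that Lemma \ref{lemma:jointconc from mon} is stated for quantum divergences while the classical convexity of Lemma \ref{lemma:classical QP jointly convex} only covers jointly diagonalizable families — is implicitly present in the paper's proof too, and your parenthetical refinement-plus-coarse-graining argument (using exactly the homogeneity and block additivity the paper lists) is the clean way to close it.
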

\begin{proof}
It is straightforward to verify that $\signed{Q}_P$ is homogeneous, block additive, and 
stable under tensoring with an arbitrary state, and hence the assertion follows from 
Lemma \ref{lemma:classical QP jointly convex} and 
Lemma \ref{lemma:jointconc from mon}.
\end{proof}

\subsection{Quantum R\'enyi divergences}
\label{sec:qRenyi}

In this section we give a brief review of the ($2$-variable) quantum R\'enyi divergences most commonly used in the literature, which will also play an important role in the rest of the paper. We will discuss various ways to define multi-variate quantum R\'enyi divergences in 
Sections \ref{sec:nc gm} and \ref{sec:barycentric}.

\begin{definition}\label{def:quantumdiv}
For any $\alpha\in[0,+\infty]$,
by a \ki{quantum R\'enyi $\alpha$-divergence} we mean a 
quantum divergence that is a 
quantum extension of the classical R\'enyi $\alpha$-divergence. 
Similarly, by a \ki{quantum relative entropy} we mean a quantum extension of the relative entropy. 
\end{definition}

\begin{rem}\label{rem:D-Q bijection}
Note that for any $\alpha\in[0,1)\cup(1,+\infty)$, there is an obvious bijection between 
quantum extensions of $Q_{\alpha}$ and quantum extensions of $D_{\alpha}$, 
given by
\begin{align*}
D_{\alpha}^q(\rho\|\sigma)&=\frac{1}{\alpha-1}\log Q_{\alpha}^q(\rho\|\sigma)-\frac{1}{\alpha-1}\log\Tr\rho,\\
Q_{\alpha}^q(\rho\|\sigma)&=(\Tr\rho)\exp\bz(\alpha-1)D_{\alpha}^q(\rho\|\sigma)\jz.
\end{align*}
\end{rem}

\begin{rem}\label{rem:zero argument q}
Since $0$ commutes with any other operator, any quantum R\'enyi $\alpha$-divergence 
$D_{\alpha}^q$ must satisfy
\begin{align}\label{zero argument q1}
D_{\alpha}^q(\rho\|0)=+\infty,\ds\ds\ds\rho\gneq 0,
\end{align}
according to \eqref{zero argument cl1}, and 
any quantum relative  entropy $D^q$ must satisfy
\begin{align}\label{zero argument q2}
\D^q(0\|\sigma)=0\ds\ds\sigma\ge 0,\ds\ds\ds\ds\ds
\D^q(\rho\|0)=+\infty\ds\ds\rho\gneq 0,
\end{align}
according to \eqref{zero argument cl2}.

Since these values are fixed by definition, in the discussion 
of different quantum R\'enyi divergences and relative entropies below,
it is sufficient to consider non-zero arguments most of the time. 
\end{rem}

\begin{rem}
Note that there is a bijection between quantum extensions of the R\'enyi $1$-divergence 
and quantum extensions of the relative entropy, given in one direction by 
$D^q(\rho\|\sigma):=(\Tr\rho)D_1^q(\rho\|\sigma)$, and 
in the other direction by $D_1^q(\rho\|\sigma):=D^q(\rho\|\sigma)/\Tr\rho$, 
for any non-zero $\rho$. 
\end{rem}

The following examples of quantum R\'enyi $\alpha$-divergences are well studied in the literature. We review them in some detail for later use.

\begin{example}\label{ex:az Renyi}
For any $\alpha\in[0,1)\cup(1,+\infty)$ and $z\in(0,+\infty)$, 
the \ki{R\'enyi $(\alpha,z)$-divergence} 
of $\rho,\sigma\in\B(\hil)\pne$ is defined as
\cite{AD}
\begin{align}
D_{\alpha,z}(\rho\|\sigma)&:=\frac{1}{\alpha-1}\log Q_{\alpha,z}(\rho\|\sigma)
-\frac{1}{\alpha-1}\log\Tr\rho,\nn\\
Q_{\alpha,z}(\rho\|\sigma)&:=
\begin{cases}
\Tr\bz\rho^{\frac{\alpha}{2z}}\sigma^{\frac{1-\alpha}{z}}
\rho^{\frac{\alpha}{2z}}\jz^z,&\alpha\in[0,1)\text{ or }
\rho^0\le\sigma^0,\\
+\infty,&\text{otherwise.}
\end{cases}
\label{az def2}
\end{align}
It is easy to see that it
defines a quantum R\'enyi $\alpha$-divergence in the sense of Definition 
\ref{def:quantumdiv}.
$D_{\alpha,1}(\rho\|\sigma)$ is called the \ki{Petz-type} (or \ki{standard})
R\'enyi $\alpha$-divergence \cite{P86} of $\rho$ and $\sigma$, and 
$D\nw_{\alpha}(\rho\|\sigma):=D_{\alpha,\alpha}(\rho\|\sigma)$ their \ki{sandwiched R\'enyi $\alpha$-divergence} \cite{Renyi_new,WWY}.
The limit
\begin{align}
D_{\alpha,+\infty}(\rho\|\sigma)
&:=
\lim_{z=\to+\infty}D_{\alpha,z}(\rho\|\sigma)\\
&=
\begin{cases}
\frac{1}{\alpha-1}\log\underbrace{\Tr Pe^{\alpha P(\nlog\rho)P+(1-\alpha)P(\nlog\sigma)P}}_{=:Q_{\alpha,+\infty}(\rho\|\sigma))}
-\frac{1}{\alpha-1}\log\Tr\rho,&\alpha\in(0,1)\text{ or }\rho^0\le\sigma^0,\\
+\infty,&\text{otherwise},
\end{cases}\label{D alpha infty}
\end{align}
where $P:=\rho^0\wedge\sigma^0$, is also a quantum R\'enyi 
$\alpha$-divergence, often referred to as the \ki{log-Euclidean} R\'enyi 
$\alpha$-divergence \cite{AD,HP_GT,MO}.
It is known \cite{LinTomamichel15,Hiai_Mosonyi_cont_2023} that for any function $z:\,(1-\delta,1+\delta)\to(0,+\infty]$ such that 
$\liminf_{\alpha\to 1}z(\alpha)>0$, and for any
$\rho,\sigma\in\B(\hil)\pne$, 
\begin{align*}
\lim_{\alpha\to 1}D_{\alpha,z(\alpha)}(\rho\|\sigma)=\frac{1}{\Tr\rho}\DU(\rho\|\sigma)
=:D_1^{\Um}(\rho\|\sigma),
\end{align*}
where the Umegaki relative entropy $\DU(\rho\|\sigma)$ is defined 
as
\begin{align}\label{Umegaki def}
\DU(\rho\|\sigma):=
\begin{cases}
\Tr(\rho\nlog\rho-\rho\nlog\sigma),
& \rho^0\le\sigma^0,\\
+\infty,&\text{otherwise.}
\end{cases}
\end{align}
In particular, for any $z\in(0,+\infty]$, we define
$D_{1,z}(\rho\|\sigma):=D_1^{\Um}(\rho\|\sigma)$.

The R\'enyi $(\alpha,z)$-divergence is strictly positive 
for $\alpha\in(0,1)$ and $z\ge\max\{\alpha,1-\alpha\}$, and for 
$\alpha\ge 1$ and $z>0$; see
\cite[Corollary A.31]{MO-cqconv-cc}.
The range of 
$(\alpha,z)$-values for which $D_{\alpha,z}$ is monotone under CPTP maps was 
studied in a series of works 
\cite{Beigi,CFL_AD_conjecture,FL,Hiai-convexity,P86}, and was finally characterized completely 
in \cite{Zhang2018}. 
It is clear from their definitions that for every $\alpha\in(0,+\infty)$
and $z\in(0,+\infty]$, the R\'enyi $(\alpha,z)$-divergence is additive on tensor products.
\end{example}

\begin{example}\label{ex:meas Renyi}
For any $\alpha\in[0,+\infty]$ and $\rho,\sigma\in\B(\hil)\pne$, their 
\ki{measured R\'enyi $\alpha$-divergence} $D_{\alpha}^{\meas}(\rho\|\sigma)$ is defined as 
a special case of \eqref{reg meas div def} with $\divv=D_{\alpha}$, and their \ki{measured relative entropy}
$D^{\meas}(\rho\|\sigma)=(\Tr\rho)D_1^{\meas}(\rho\|\sigma)$ is also a special case of 
\eqref{meas div def} with $\divv=D$. We have 
$D_0^{\meas}=D_{0,1}$,  
$D_{1/2}^{\meas}=D_{1/2,1/2}$ (see \cite[Chapter 9]{NC}) and
for any $\rho,\sigma\in\B(\hil)\pne$, 
\begin{align}\label{maxrelentr}
D_{+\infty}^{\meas}(\rho\|\sigma)=D_{+\infty}\nw(\rho\|\sigma):=
D_{+\infty,+\infty}(\rho\|\sigma):=\lim_{\alpha\to+\infty}D_{\alpha,\alpha}(\rho\|\sigma)
=\log\inf\{\lambda\ge 0:\,\rho\le\lambda\sigma\},
\end{align}
where the last quantity was introduced in \cite{Datta} under the name 
\ki{max-relative entropy}, and its equality to the limit above has been shown in 
\cite[Theorem 5]{Renyi_new}.
No explicit expression is known for $D_{\alpha}^{\meas}$ for other $\alpha$ values.

Similarly, the \ki{regularized measured R\'enyi $\alpha$-divergence} 
$\oll D_{\alpha}^{\meas}(\rho\|\sigma)$ is obtained as a 
special case of \eqref{reg meas div def}. Surprisingly, it has a closed formula
for every $\alpha\in[0,+\infty]$, given by 
\begin{align}
\oll{D}_{\alpha}^{\meas}(\rho\|\sigma)
=
\begin{cases}
D_{\alpha,\alpha}(\rho\|\sigma),&\alpha\in[1/2,+\infty],\\
\frac{\alpha}{1-\alpha}D_{1-\alpha,1-\alpha}(\sigma\|\rho)
+\frac{1}{\alpha-1}\log\frac{\Tr\rho}{\Tr\sigma}
=
D_{\alpha,1-\alpha}(\rho\|\sigma),&\alpha\in(0,1/2),\\
D_{\alpha,1-\alpha}(\rho\|\sigma),&\alpha=0;\\
\end{cases}
\label{regmeasured explicit}
\end{align}
see \cite{HP} for $\alpha=1$, \cite{MO} for $\alpha\in(1,+\infty)$, and \cite{HT14}
for $\alpha=(1/2,1)$;
the last expression for $\alpha\in(0,1/2)$ above was first observed by P\'eter Vrana in August 2022, to the best of 
our knowledge. The case $\alpha=0$ follows from $D_{0}^{\meas}=D_{0,1}$ and the additivity of the latter.

For every $\alpha\in(0,1)$, strict positivity of 
$D_{\alpha}^{\meas}$ is immediate from the strict positivity of the classical R\'enyi
$\alpha$-divergence, which is a straightforward corollary of H\"older's inequality,
and strict positivity of 
$D_{\alpha}^{\meas}$ for $\alpha\in[1,+\infty]$ follows from this and the easily verifiable fact that $\alpha\mapsto D_{\alpha}^{\meas}$ is monotone increasing. 
Strict positivity of $\oll{D}_{\alpha}^{\meas}$
follows from $D_{\alpha}^{\meas}\le \oll{D}_{\alpha}^{\meas}$.

For any $\alpha\in[0,+\infty]$, the measured R\'enyi $\alpha$-divergence is superadditive 
on tensor products
(see Example \ref{ex:minmax extension}), 
but not additive unless $\alpha\in\{0,1/2,+\infty\}$; see, e.g., \cite[Remark 4.27]{HiaiMosonyi2017}
and \cite[Proposition III.13]{MH-testdiv} for the latter. On the other hand, 
for every $\alpha\in[0,+\infty]$, the regularized measured 
R\'enyi $\alpha$-divergence is not only weakly additive 
(see Corollary \ref{cor:regmeas regmax weakly additive}),
but even additive on tensor products, according to Example \ref{ex:az Renyi}
and \eqref{regmeasured explicit}.

Monotonicity of 
$D_{\alpha}^{\meas}$ under PTP maps and of
$\oll{D}_{\alpha}^{\meas}$ 
under CPTP maps
is obvious by definition for every $\alpha\in[0,+\infty]$ 
(see Example \ref{ex:minmax extension}). Moreover,
$\oll{D}_{\alpha}^{\meas}$, $\alpha\in[0,+\infty]$, are monotone even under PTP maps, according to \cite{Beigi,Jencova_NCLpII,MHR} and \eqref{regmeasured explicit}.
In particular, the Umegaki relative entropy $\DU$ is monotone under PTP maps,
as observed in \cite{MHR}.
\end{example}

In the following example, we use (properties of) the absolutely continuous part of PSD operators and the Kubo-Ando geometric means; see Appendix \ref{sec:oppersp} for details.
Moreover, the quantities below are special cases of the maximal $f$-divergences discussed in Appendix \ref{sec:maxfdiv}. 

\begin{example}\label{ex:max Renyi}
For any $\alpha\in[0,+\infty]$ and $\rho,\sigma\in\B(\hil)\pne$, their 
\ki{maximal R\'enyi $\alpha$-divergence} $D_{\alpha}^{\max}(\rho\|\sigma)$ is defined as 
a special case of \eqref{maxdiv def} with $\divv=D_{\alpha}$, and their \ki{maximal relative entropy}
$D^{\max}(\rho\|\sigma)=(\Tr\rho)D_1^{\max}(\rho\|\sigma)$ is also a special case of 
\eqref{maxdiv def} with $\divv=D$.

Let $\rho_{\sigma,\mathrm{ac}}:=
\max\{0\le C\le \rho:\,C^0\le\sigma^0\}=\sigmasupp\rho \sigmasupp-\sigmasupp\rho (\sigmasupp^{\perp}\rho \sigmasupp^{\perp})\inv \rho \sigmasupp$
be the \ki{absolutely continuous part} of $\rho$ w.r.t.~$\sigma$ \cite{AndersonTrapp1975}, where 
$\sigmasupp:=\sigma^0$, and let  
$\lambda_i$, $i\in[r]$, be the different
eigenvalues of $\sigma^{-1/2}\rho_{\sigma,\mathrm{ac}}\sigma^{-1/2}$ with corresponding 
spectral projections $E_i$.
Let $\I:=[r]\cup\{r+1\}$, let $\tau_0\in\S(\hil)$ be arbitrary, and 
\begin{align*}
\tau_1:=\begin{cases}
\frac{\rho-\rho_{\sigma,\mathrm{ac}}}{\Tr(\rho-\rho_{\sigma,\mathrm{ac}})},&\rho^0\not\le\sigma^0,\\
\tau_0,&\text{otherwise.}
\end{cases}
\end{align*}
According to \cite{Matsumoto_newfdiv}, 
\begin{align}
&\hat p(i):=\begin{cases}
\lambda_i\Tr \sigma E_i,&i\in[r],\\
\Tr(\rho-\rho_{\sigma,\mathrm{ac}}),&i=r+1,
\end{cases}\ds\ds
\hat q(i):=
\begin{cases}
\Tr \sigma E_i,&i\in[r],\\
0,&i=r+1,
\end{cases}\label{optimal reverse test}\\
&\hat\rt(\egy_{\{i\}}):=\begin{cases}
\frac{\sigma^{1/2}E_i\sigma^{1/2}}{\Tr\sigma E_i},&i\in[r],\,\Tr\sigma E_i\ne 0,\\
\tau_0,&i\in[r],\,\Tr\sigma E_i=0,\\
\tau_1,&i=r+1,
\end{cases}\label{optimal reverse test gamma}
\end{align}
is a reverse test for $(\rho,\sigma)$ that is 
optimal for every $D_{\alpha}^{\max}(\rho\|\sigma)$, $\alpha\in[0,2]\cup\{+\infty\}$, and 
\begin{align}
Q_{\alpha}^{\max}(\rho\|\sigma)&=\what Q_{\alpha}(\rho\|\sigma):=Q_{\alpha}(\hat p\|\hat q)
=
\Tr\persp{(f_{\alpha})}(\rho\|\sigma)\label{max Q explicit}\\
&=
\begin{cases}
\Tr\sigma \bz\sigma^{-1/2}\rho_{\sigma,\mathrm{ac}}\sigma^{-1/2}\jz^{\alpha}
=\Tr\sigma\#_{\alpha}\rho,
&\alpha\in[0,1),\\
\Tr\sigma \bz\sigma^{-1/2}\rho\sigma^{-1/2}\jz^{\alpha}=\Tr\sigma\#_{\alpha}\rho,&\alpha\in(1,2],\,
\rho^0\le\sigma^0,\\
+\infty,&\alpha\in(1,2],\,
\rho^0\nleq\sigma^0,
\end{cases}
\label{max Renyi explicit}\\
D^{\max}(\rho\|\sigma)&=
(\Tr\rho)D_1^{\max}(\rho\|\sigma)=
D(\hat p\|\hat q)
=
\Tr\persp{\eta}(\rho,\sigma)\label{BS def1}\\
&=
\begin{cases}
\Tr\sigma^{1/2}\rho\sigma^{-1/2}\nlog\bz\sigma^{-1/2}\rho\sigma^{-1/2}\jz=
\Tr\rho\nlog\bz\rho^{1/2}\sigma\inv\rho^{1/2}\jz,&\rho^0\le\sigma^0,\\
+\infty,&\text{otherwise},
\end{cases}\label{BS def2}\\
D_{+\infty}^{\max}(\rho\|\sigma)&=D_{+\infty}(\hat p\|\hat q)=D_{+\infty}\nw(\rho\|\sigma),
\label{maxRenyi infty}
\end{align}
where $f_{\alpha}:=\id_{[0,+\infty)}^{\alpha}$, $\eta(x):=x\log x$, $x\ge 0$.
The expression in \eqref{BS def2} is called the \ki{Belavkin-Staszewski relative entropy} 
\cite{BS}.
Note that the optimal reverse test above is independent of $\alpha$; 
in particular, 
\begin{align*}
D_1^{\max}(\rho\|\sigma)=\lim_{\alpha\to 1}D_{\alpha}^{\max}(\rho\|\sigma)=\frac{1}{\Tr\rho}D^{\max}(\rho\|\sigma)
\end{align*}
holds for any $\rho,\sigma\in\B(\hil)\pne$, according to \eqref{1-Renyi cl}.

No explicit expression is known for $D_{\alpha}^{\max}$ when $\alpha\in(2,+\infty)$,
in which case the above reverse test is known not to be optimal. 
Indeed, also for $\alpha\in(2,+\infty)$, we have 
$\what Q_{\alpha}(\rho\|\sigma):=Q_{\alpha}(\hat p\|\hat q)=\Tr\sigma \bz\sigma^{-1/2}\rho\sigma^{-1/2}\jz^{\alpha}$ when $\rho^0\le\sigma^0$, and 
$\what Q_{\alpha}(\rho\|\sigma):=Q_{\alpha}(\hat p\|\hat q)=+\infty$, otherwise. However, 
since $f_{\alpha}$ is not operator convex for $\alpha>2$, 
this quantity is not convex in $\rho$, and therefore not CPTP-monotone, either, according to 
Lemma \ref{lemma:jointconc from mon}, while $Q_{\alpha}^{\max}$ is (C)PTP-monotone.

Strict positivity of $D_{\alpha}^{\max}$ for all $\alpha\in(0,+\infty]$ follows from that of $D_{\alpha}^{\meas}$ and the inequality
$D_{\alpha}^{\meas}\le D_{\alpha}^{\max}$, which is due to the monotonicity of the
classical R\'enyi divergences under stochastic maps.

According to Corollary \ref{cor:maxfdiv regularity} (see also 
 \cite{Hiai_fdiv_max,HiaiMosonyi2017,Matsumoto_newfdiv}), 
for any $\rho,\sigma\in\B(\hil)\pne$ and $\alpha\in(0,2]$, we have
\begin{align}\label{max Renyi smooth}
D_{\alpha}^{\max}(\rho\|\sigma)=
\lim_{\ep\searrow 0}D_{\alpha}^{\max}(\rho+\ep I\|\sigma+\ep I)
=
\lim_{\ep\searrow 0}D_{\alpha}^{\max}(\rho\|\sigma+\ep I);
\end{align}
in particular, $D_{\alpha}^{\max}$ is regular in its second argument.
For any $\alpha\in[0,2]$, $D_{\alpha}^{\max}$ is anti-monotone in its second argument, according to 
Corollary \ref{cor:maxfdiv regularity} and the operator monotonicity of 
$(1-\alpha)\id_{(0,+\infty)}^{1-\alpha}$ and of $\eta$ (for $\alpha=1$).
By Remark \ref{rem:AM regular}, $D_{\alpha}^{\max}$ is also strongly regular in its second argument for every $\alpha\in(0,2]$.

It is immediate from their definition that $D_{\alpha}^{\max}$, $\alpha\in[0,+\infty]$, are subadditive on tensor products (see Example \ref{ex:minmax extension}). 
For $\alpha\in[0,2]\cup\{+\infty\}$, $D_{\alpha}^{\max}$ is even additive, 
according to \eqref{max Renyi explicit} and \eqref{mean tensor}
for $\alpha\in[0,2]$, and to \eqref{maxrelentr} and \eqref{maxRenyi infty} for $\alpha=+\infty$.
However, additivity 
of $D_{\alpha}^{\max}$ is not known for $\alpha\in(2,+\infty)$. 
In particular, we have
\begin{align*}
\oll{D}_{\alpha}^{\max}\begin{cases}
=D_{\alpha}^{\max},&\alpha\in[0,2]\cup\{+\infty\},\\
\le D_{\alpha}^{\max},&\alpha\in(2,+\infty).
\end{cases}
\end{align*}
\end{example}

\begin{rem}
Note that, with the notations of Example \ref{ex:max Renyi},
\begin{align}\label{0 max Renyi1}
\lim_{\alpha\searrow 0}Q_{\alpha}^{\max}(\rho\|\sigma)
=
\Tr\sigma(\sigma^{-1/2}\rho_{\sigma,\mathrm{ac}}\sigma^{-1/2})^0
=
\Tr\sigma\sum_{i:\,\lambda_i>0}E_i,
\end{align}
while 
\begin{align}\label{0 max Renyi2}
Q_0^{\max}(\rho\|\sigma)=Q_0(\hat p\|\hat q)=\sum_i\hat p(i)^0\hat q(i)
=
\sum_{i:\,\lambda_i\Tr\sigma E_i>0}\Tr\sigma E_i
=
\Tr\sigma\sum_{i:\,\lambda_i\Tr\sigma E_i>0}E_i.
\end{align}
Since 
\begin{align*}
E_i\sigma^{-1/2}\rho_{\sigma,\mathrm{ac}}\sigma^{-1/2}E_i=\lambda_i E_i,
\end{align*}
we see that $\lambda_i>0$ $\imp$ $E_i\not\perp\sigma^0$ $\iff$ $\Tr\sigma E_i>0$, 
and hence \eqref{0 max Renyi1} and \eqref{0 max Renyi2} are equal to each other, 
i.e., 
\begin{align*}
\lim_{\alpha\searrow 0}D_{\alpha}^{\max}(\rho\|\sigma)
=
D_0^{\max}(\rho\|\sigma),\ds\ds\ds\rho,\sigma\in\B(\hil)\pne.
\end{align*}
\end{rem}

\begin{example}
For any quantum R\'enyi $\alpha$-divergence $D_{\alpha}^q$, its \ki{regularization} on a pair 
$\rho,\sigma\in\B(\hil)\pne$ is defined as 
\begin{align*}
\oll{D}_{\alpha}^q(\rho\|\sigma)
:=
\lim_{n\to+\infty}\frac{1}{n}D_{\alpha}^q(\rho^{\otimes n}\|\sigma^{\otimes n}),
\end{align*}
whenever the limit exists. If the limit exists for all 
$\rho,\sigma\in\B(\hil)\pne$, then $\oll{D}_{\alpha}^q$ is a quantum 
R\'enyi $\alpha$-divergence that is weakly additive, and if $D_{\alpha}^q$ is monotone under CPTP maps then so is $\oll{D}_{\alpha}^q$.
\end{example}

\begin{rem}
Note that if $D_1^q$ is an additive quantum $1$-divergence then 
the corresponding quantum relative entropy $D^q$ is not additive; instead, it satisfies
\begin{align*}
D^q(\rho_1\otimes\rho_2\|\sigma_1\otimes\sigma_2)
=
(\Tr\rho_2)D^q(\rho_1\|\sigma_1)+
(\Tr\rho_1)D^q(\rho_2\|\sigma_2)
\end{align*}
for any $\rho_k,\sigma_k\in\B(\hil_k)\pne$, $k=1,2$. Thus, the natural notion of regularization for a quantum relative entropy $D^q$ 
on a pair $\rho,\sigma\in\B(\hil)\pne$ is 
\begin{align*}
\oll{D}(\rho\|\sigma):=(\Tr\rho)\oll{D}_1^q(\rho\|\sigma),
\end{align*}
which is well-defined whenever $\oll{D}_1^q(\rho\|\sigma)$ is. 
Clearly, if $\oll{D}(\rho\|\sigma)$ is well-defined for all $\rho,\sigma\in\B(\hil)\pne$
then it gives a quantum relative entropy that is weakly additive, and if 
$D^q$ is monotone under CPTP maps then so is $\oll{D}^q$.
\end{rem}

\begin{rem}\label{rem:relentr maxmin}
According to Remark \ref{ex:minmax extension}, for any given 
$\alpha\in[0,+\infty]$, and any quantum R\'enyi $\alpha$-divergence 
$D_{\alpha}^q$ that is monotone under CPTP maps, 
\begin{align}\label{Renyi maxmin}
D_{\alpha}^{\meas}\le D_{\alpha}^q\le D_{\alpha}^{\max}.
\end{align}
If the regularization of $D_{\alpha}^q$ is well-defined then we further have 
\begin{align}\label{Renyi maxmin2}
\oll{D}_{\alpha}^{\meas}\le \oll{D}_{\alpha}^q\le \oll{D}_{\alpha}^{\max};
\end{align}
in particular, this is the case if $D_{\alpha}^q$ is additive, when we also have 
$\oll{D}_{\alpha}^q=D_{\alpha}^q$.

Likewise, for any quantum relative entropy $D^q$ that is monotone under CPTP maps, 
\begin{align}\label{relentr maxmin}
D^{\meas}\le D^q\le D^{\max},
\end{align}
and if the regularization of $D^q$ is well-defined then we further have 
\begin{align}\label{relentr maxmin2}
\DU=\oll{D}^{\meas}\le \oll{D}^q\le \oll{D}^{\max}=D^{\max};
\end{align}
in particular, this is the case if $D_{1}^q$ is additive, when we also have 
$\oll{D}^q=D^q$.

It is also known that 
\begin{align}\label{relentr strict inequalities}
D^{\meas}<\DU<D^{\max};
\end{align}
see \cite[Theorem 4.18]{HiaiMosonyi2017} for the first inequality 
(also \cite{BFT_variational} for a slightly weaker statement), and
\cite[Theorem 4.3]{HiaiMosonyi2017} for the second inequality.
\end{rem}

\begin{rem}
Note that $D_{+\infty}^{\meas}=\oll{D}_{+\infty}^{\meas}=
\oll{D}_{+\infty}^{\max}=
D_{+\infty}^{\max}$ is the unique quantum extension of $D_{+\infty}$ that is monotone under 
(completely) positive trace-preserving maps, as it was observed in 
\cite{TomamichelBook}, and this unique extension also happens to be additive. 
On the other hand, for any other $\alpha\in[0,+\infty)$, there are infinitely many different monotone and additive
quantum R\'enyi $\alpha$-divergences; see, e.g., Example \ref{ex:az Renyi}.
\end{rem}

\begin{rem}
According to Remark \ref{rem:scaling}, any additive quantum R\'enyi $\alpha$-divergence 
$D_{\alpha}^q$ satisfies the scaling law
\begin{align}\label{Renyi scaling}
D_{\alpha}^q(t\rho\|s\sigma)=D_{\alpha}^q(\rho\|\sigma)+D_{\alpha}(t\|s)
=
D_{\alpha}^q(\rho\|\sigma)+\log t-\log s.
\end{align}
In particular, this holds for $D_{\alpha,z}$, $\alpha\in[0,+\infty)$, $z\in(0,+\infty]$, 
and $D_{\alpha}^{\max}$, $\alpha\in[0,2]\cup\{+\infty\}$. It is easy to verify that 
$D_{\alpha}^{\max}$ also satisfies \eqref{Renyi scaling} 
for every $\alpha\in(2,+\infty)$, where additivity is not known, and 
$D_{\alpha}^{\meas}$ also satisfies \eqref{Renyi scaling}
for every $\alpha\in[0,+\infty]$, even though they are not additive unless 
$\alpha\in\{0,1/2,+\infty\}$. 

Note that a quantum R\'enyi $1$-divergence $D_1^q$ satisfies the scaling law \eqref{Renyi scaling} if and only if the corresponding quantum relative entropy $D^q$ 
satisfies the scaling law
\begin{align}\label{relentr scaling}
D^q(t\rho\|s\sigma)=tD^q(\rho\|\sigma)+(\Tr\rho)D(t\|s),
\end{align}
which in turn equivalent to 
\begin{align}
\D^q(t\rho\|\sigma)&=(t\log t)\Tr\rho+t\D^q(\rho\|\sigma),\label{scaling1}\\
\D^q(\rho\|s\sigma)&=\D^q(\rho\|\sigma)-(\log s)\Tr\rho.\label{scaling2}
\end{align}
\end{rem}

\begin{rem}\label{rem:tracemon}
By definition, a quantum R\'enyi $\alpha$-divergence $D_{\alpha}^q$
is trace-monotone, if 
\begin{align}\label{Renyi tracemon}
D_{\alpha}^q(\rho\|\sigma)\ge D_{\alpha}(\Tr\rho\|\Tr\sigma)\ds(=\log\Tr\rho-\log\Tr\sigma)
\end{align}
for any $\rho,\sigma\in\B(\hil)\pne$, 
and it is strictly trace-monotone if 
equality holds in \eqref{Renyi tracemon} if and only if $\rho=\sigma$.
Likewise, a quantum relative entropy $D^q$ is 
trace-monotone, if 
\begin{align}\label{relentr tracemon}
D^q(\rho\|\sigma)\ge D(\Tr\rho\|\Tr\sigma)\ds(=(\Tr\rho)\log\Tr\rho-(\Tr\rho)\log\Tr\sigma).
\end{align}
for any $\rho,\sigma\in\B(\hil)\pne$,
and it is strictly trace-monotone if 
equality holds in  \eqref{relentr tracemon} if and only if $\rho=\sigma$.
Obviously, any trace-monotone R\'enyi $\alpha$-divergence or relative entropy is non-negative. Moreover, it is easy to see that if  
a quantum R\'enyi $\alpha$-divergence $D_{\alpha}^q$ satisfies the scaling law 
\eqref{Renyi scaling} then it is non-negative (strictly positive) if and only if it is 
(strictly) trace-monotone, 
and similarly, if a quantum relative entropy $D^q$ satisfies the scaling law 
\eqref{relentr scaling} then it is non-negative (strictly positive)
if and only if it is (strictly)
trace-monotone.
\end{rem}

\begin{rem}\label{rem:relentr pos}
If a quantum relative entropy $D^q$ satisfies the trace monotonicity 
\eqref{relentr tracemon} then for any $\tau,\sigma\in\B(\hil)\pne$,
\begin{align}
\D^q(\tau\|\sigma)\ge -(\Tr\tau)\log\frac{\Tr\sigma}{\Tr\tau}\ge \Tr\tau\bz 1-\frac{\Tr\sigma}{\Tr\tau}\jz=\Tr\tau-\Tr\sigma, \label{Dq lower bound}
\end{align}
and equality holds everywhere when $\tau=\sigma$.
As an immediate consequence of this, for any $\sigma\in\B(\hil)\pne$,
\begin{align}
\Tr\sigma
&=
\max_{\tau\in\B(\hil)\p}\{\Tr\tau-\D^q(\tau\|\sigma)\}
=
\max_{\tau\in\B(\sigma^0\hil)\p}\{\Tr\tau-\D^q(\tau\|\sigma)\},\label{Tropp variational}\\
\log\Tr\sigma
&=
\max_{\tau\in\B(\hil)\pne}\left\{\log\Tr\tau-\frac{1}{\Tr\tau}\D^q(\tau\|\sigma)\right\}
=
\max_{\tau\in\B(\sigma^0\hil)\pne}\left\{\log\Tr\tau-\frac{1}{\Tr\tau}\D^q(\tau\|\sigma)\right\}.
\label{Tropp variational2}
\end{align}
Note that 
$\tau$ is a maximizer for \eqref{Tropp variational} if and only if 
$\Tr\tau=\Tr\sigma$ and $D^q(\tau\|\sigma)=0$
(since the second inequality in \eqref{Dq lower bound} holds as an equality 
if and only if $\Tr\tau=\Tr\sigma$), and if 
$D^q$ also satisfies the scaling property \eqref{relentr scaling} then  
$\tau$ is a maximizer for \eqref{Tropp variational2} if and only if 
$D^q\bz\frac{\tau}{\Tr\tau}\big\|\frac{\sigma}{\Tr\sigma}\jz=0$.
If $D^q$ is strictly trace monotone then $\tau=\sigma$ is the unique maximizer 
for all the expressions in \eqref{Tropp variational}--\eqref{Tropp variational2}.

The variational formula \eqref{Tropp variational} has already been noted in \cite[Lemma 6]{Tropp}
in the case $D^q=\DU$.
\end{rem}

\begin{rem}\label{rem:relentr lsc}
It is easy to see from their definitions that 
$D^{\meas}$, $\DU$, and $D^{\max}$
are all
regular and anti-monotone in their second argument (due to the operator monotonicity of $\log$ and operator anti-monotonicity of the inverse \cite{Bhatia}), i.e., 
\begin{align}\label{antimon regular}
D^q(\rho\|\sigma+\ep I)\nearrow 
D^q(\rho\|\sigma)\ds\ds\text{as}\ds\ds \ep\searrow 0.
\end{align}
By Remark \ref{rem:AM regular}, they are also strongly regular.
It is clear from \eqref{Umegaki def} and \eqref{BS def2} that for any fixed $\ep>0$, 
$\B(\hil)\pne^2\ni(\rho,\sigma)\mapsto D^q(\rho\|\sigma+\ep I)$ is continuous when $q=\Um$ or $q=\max$. Hence, by \eqref{antimon regular}, $\DU$ and $D^{\max}$ are both jointly lower semi-continuous in their arguments. In particular, the classical relative entropy is jointly lower semi-continuous,
whence $D^{\meas}$, as the supremum of lower semi-continuous functions, is also 
jointly lower semi-continuous.
\end{rem}

\begin{rem}
It is clear from \eqref{Umegaki def} and \eqref{BS def2} that $\DU$ and $\D^{\max}$ are block additive.
For $\D^{\meas}$, we have block subadditivity.
Indeed, let $\rho_k,\sigma_k\in\B(\hil_k)\pne$, $k=1,2$.
For any $(M_i)_{i=0}^{n-1}\in\povm(\hil_1,[n])$, $(N_i)_{i=0}^{n-1}\in\povm(\hil_2,[n])$,
\begin{align*}
&D\bz(\Tr \rho_1 M_i)_{i=0}^{n-1}\big\|(\Tr \sigma_1 M_i)_{i=0}^{n-1}\jz
+
D\bz(\Tr \rho_2  N_i)_{i=0}^{n-1}\big\|(\Tr \sigma_2N_i)_{i=0}^{n-1}\jz\\
&\ds\ge
D\bz (\Tr \rho_1 M_i+\Tr \rho_2 N_i)_{i=0}^{n-1}\big\|(\Tr \sigma_1M_i+\Tr \sigma_2 N_i)_{i=0}^{n-1}  \jz\\
&\ds=
D\bz (\Tr (\rho_1\oplus\rho_2)(M_i\oplus N_i))_{i=0}^{n-1} \big\|(\Tr (\sigma_1\oplus\sigma_2)
(M_i\oplus N_i))_{i=0}^{n-1}  \jz,
\end{align*}
where the inequality follows from the joint subadditivity of the relative entropy
(a consequence of joint convexity and homogeneity).
Taking the supremum over 
$(M_i)_{i=0}^{n-1}\in\povm(\hil_1,[n])$, $(N_i)_{i=0}^{n-1}\in\povm(\hil_2,[n])$, and then over $n$, we get 
\begin{align*}
&D^{\meas}(\rho_1\|\sigma_1)+D^{\meas}(\rho_2\|\sigma_2)\\
&\ds\ge 
\sup_{n\in\bN}\sup_{M,N}
D\bz (\Tr (\rho_1\oplus\rho_2)(M_i\oplus N_i))_{i=0}^{n-1} \big\|(\Tr (\sigma_1\oplus\sigma_2)
(M_i\oplus N_i))_{i=0}^{n-1}  \jz\\
&\ds=
\sup_{n\in\bN}\sup_{T\in\povm(\hil_1\oplus\hil_2,[n])}
D\bz((\Tr(\rho_1\oplus\rho_2)T_i)_{i=0}^{n-1}\big\|(\Tr(\sigma_1\oplus\sigma_2)T_i)_{i=0}^{n-1}\jz\\
&\ds=
D^{\meas}(\rho_1\oplus\rho_2\|\sigma_1\oplus\sigma_2).
\end{align*}
The first equality above follows from the simple fact that for any
$T\in\B(\hil_1\oplus\hil_2)$, 
$\Tr(\rho_1\oplus\rho_2)T=\Tr(\rho_1\oplus\rho_2)(T_1\oplus T_2)$, where 
$T_k=P_kTP_k$, with $P_k$ the projection onto $\hil_k$ in 
$\hil_1\oplus\hil_2$.
\end{rem}

\begin{example}\label{ex:minmax multi}
The measured and the maximal extensions of the multi-variate R\'enyi divergences can
be obtained as special cases of Example \ref{ex:minmax extension}. In detail, 
for any $P\in\P_f(\X)\cup\P_{f,1}^{\pm}(\X)$,
\begin{align*}
\signed{Q}_P^{\meas}(W)&:=\sup\{\signed{Q}_P(\M(W)):\,M\in\povm(\hil,[n]),\,n\in\bN\},\\
\signed{Q}_P^{\max}(W)&:=\inf\{\signed{Q}_P(w):\,(w,\rt)\text{ reverse test for }w\},
\ds\ds\ds\ds\ds\ds W\in\B(\X,\hil)\pne,
\end{align*}
give the \ki{measured} and the \ki{maximal} $P$-weighted R\'enyi 
$\signed{Q}_P$-divergences.
Clearly, $\signed{Q}_P^{\meas}$ is the smallest and 
$\signed{Q}_P^{\max}$ is the largest monotone quantum extension of the classical $Q_P$.

The measured and the maximal $P$-weighted R\'enyi divergences can be expressed as
\begin{align*}
D_P^{\meas}(W)&:=\sup\left\{D_P(\M(W)):\,M\in\povm(\hil,[n]),\,n\in\bN\right\}\\
&=
\frac{1}{\prod_{x\in\X}(1-P(x))}
\bz -\log s(P)\signed{Q}_P^{\meas}\bz\bz\frac{W_x}{\Tr W_x}\jz_{x\in\X}\jz\jz,\\
D_P^{\max}(W)&:=\inf\left\{D_P(w):\,(w,\rt)\text{ reverse test for }w\right\}\\
&=
\frac{1}{\prod_{x\in\X}(1-P(x))}
\bz -\log s(P)\signed{Q}_P^{\max}\bz\bz\frac{W_x}{\Tr W_x}\jz_{x\in\X}\jz\jz
\end{align*}
for any $P\in\P_f(\X)\cup\P_{f,1}^{\pm}(\X)\setminus\{\egy_{\{x\}}:\,x\in\X\}$
and
$W\in\B(\X,\hil)\pne$. It is clear that $D_P^{\meas}$ is superadditive and 
$D_P^{\max}$ is subadditive under tensor products, and hence
their regularized versions are given as
\begin{align*}
\oll{D}_P^{\meas}&:=
\sup_{n\in\bN}\frac{1}{n}D_P^{\meas}(W^{\potimes n})
=
\lim_{n\to+\infty}\frac{1}{n}D_P^{\meas}(W^{\potimes n}),\\
\oll{D}_P^{\max}&:=\inf_{n\in\bN}\frac{1}{n}D_P^{\max}(W^{\potimes n})
=
\lim_{n\to+\infty}\frac{1}{n}D_P^{\max}(W^{\potimes n}).
\end{align*}
(See Proposition \ref{prop:regmeas regmax as limits} for the equalities.)
Clearly, $D_P^{\meas}$ is the smallest and $D_P^{\max}$ is the largest quantum extension of 
the classical $D_P$ that is monotone under (completely) positive trace-preserving maps,
and if $\oll{D}^q_P(W):=\lim_n \frac{1}{n}D^q_P(W^{\potimes n})$ exists 
for some monotone quantum extension of 
$D_P$ and some $W\in\B(\X,\hil)\pne$ then 
\begin{align*}
\oll{D}_P^{\meas}(W)\le \oll{D}^q_P(W)\le\oll{D}^{\max}_P(W).
\end{align*}
\end{example}

\subsection{Weighted geometric means and induced divergences}
\label{sec:nc gm}

For a collection of functions $w\in\F(\X,\I)\p$ on a finite set $\I$, and a finitely supported probability distribution $P\in\P_f(\X)$, the $P$-weighted geometric mean 
of $w$ is naturally defined as
\begin{align}\label{classical geom mean}
G_P(w):=\bz\prod_{x\in\supp P}w_x(i)^{P(x)}\jz_{i\in\I}
=
\bz\exp\bz\sum_{x\in\X}P(x)\log w_x(i)\jz\jz_{i\in\I},
\end{align}
where in the second expression we use the convention \eqref{zero times infty}.
This can be extended in the obvious way to the case where $P\in\P_f^{\pm}(\X)$ is a signed probability measure, provided that $w_x(i)>0$ for all $i\in\I$ whenever $P(x)<0$. 
Extending this notion to non-commutative variables has been the subject of intensive research in matrix analysis; without completeness, we refer to 
\cite{AC2011,AndoLiMathias2004,BGJ2019,Bhatia_Holbrook2006,BJL2018,HiaiLim2020,KimLee2015,
Lawson_Lim2011,LawsonLim2014,Lim_Palfia2012,Moakher_matrixmean,Petz_Temesi2005,
PV_Hellinger} and references therein. 
Normally, the definition of a matrix geometric mean includes a number of desirable properties 
(e.g., monotonicity in the arguments or joint concavity, at least when all weights are positive), whereas for our purposes the following minimalistic definition is more suitable:

\begin{definition}\label{def:nc geom multi-mean}
For a non-empty set $\X$ and a finitely supported signed probability measure
$P\in\P_f^{\pm}(\X)$, a 
\ki{non-commutative $P$-weighted geometric mean} is a
function 
\begin{align*}
G_{P}^q:\,
\cD(G_{P}^q):=\cup_{d\in\bN}\left\{W\in\B(\X,\bC^d)\p:\,W_x^0=W_y^0,\,x,y\in\supp P\right\}
\to \cup_{d\in\bN}\B(\bC^d)\p,
\end{align*}
such that 
\begin{enumerate}
\item
$W\in\cD(G_{P}^q)\cap\B(\X,\bC^d)\p$ $\imp$
$G_P^q(W)\in\B(\bC^d)\p$;
\item
$W,\tilde W\in\cD(G_{P}^q)$, $W_x=\tilde W_x$, $x\in\supp P$ \ds$\imp$\ds
$G_P^q(W)=G_P^q(\tilde W)$;
\item
$G_{P}^q$
is covariant under isometries, i.e., if $V:\,\bC^{d_1}\to\bC^{d_2}$ is an isometry then 
\begin{align*}
G_{P}^q\bz VWV^*\jz=VG_{P}^q(W)V^*,\ds\ds\ds
W\in \cD(G_P^q)\cap\B(\X,\bC^{d_1})\p;
\end{align*}
\item
if $W\in\cD(G_{P}^q)$ is such that $W_x=\sum_{i=1}^d w_x(i)\pr{i}$, $x\in\X$,
are diagonal in the same ONB $(\ket{i})_{i=1}^d$ then 
\begin{align*}
G_{P}^q\bz W \jz=\sum_{i=1}^d G_P(w)(i)\pr{i},
\end{align*}
with $G_P(w)$ given in \eqref{classical geom mean}.
\end{enumerate}
If a non-commutative $P$-weighted geometric mean $G_P^q$
is \ki{regular} in the sense that 
\begin{align*}
G_P^q(W)=\lim_{\ep\searrow 0}G_P^q\bz(W_x+\ep W_P^0)_{x\in\X}\jz,\ds\ds\ds
W\in\cD(G_P^q),
\end{align*}
where 
\begin{align*}
W_P^0:=\bigvee_{x\in\supp P}W_x^0,
\end{align*}
then it is automatically extended to collections of PSD operators with possibly different supports as
\begin{align}\label{geommean regularization}
G_P^q(W):=\lim_{\ep\searrow 0}G_P^q\bz(W_x+\ep W_P^0)_{x\in\X}\jz,
\end{align}
for any $W\in\B(\X,\bC^d)$ such that the limit above exists;
the collection of all such $W$ is denoted by 
$\oll{\cD}(G_P^q)$. 
\end{definition}

\begin{rem}
Using the isometric invariance, 
a non-commutative $P$-weighted geometric mean
can be uniquely extended to any
$W\in\B(\X,\hil)\p$ such that 
$W_x^0=W_y^0$, $x,y\in\supp P$ (we denote the collection of all such 
$W$ by $\cD_{\hil}(G_P^q)$),
and then further extended using
\eqref{geommean regularization}, provided that $G_P^q$ is regular;
we denote this extension also by $G_{P}^q$, and  
the set of $W\in\B(\X,\hil)\p$ for which it is well-defined by 
$\oll{\cD}_{\hil}(G_P^q)$. 
\end{rem}

\begin{rem}\label{rem:geommean partiso inv}
A non-commutative geometric mean $G_P^q$ is also automatically covariant under 
partial isometries on $\oll{\cD}_{\hil}(G_P^q)$
in the sense that if 
$W\in\oll{\cD}_{\hil}(G_P^q)$ and 
$V:\,\hil\to\kil$ is a partial isometry such that 
$W_P^0\le V^*V$ then 
\begin{align*}
G_P^q\bz V\W V^*\jz=VG_P^q(\W)V^*.
\end{align*}
The proof in the case $W\in\cD_{\hil}(G_P^q)$
goes by a straightforward modification of the argument in Remark 
\ref{rem:divv partiso inv}, and the extension to the case
$W\in\oll{\cD}_{\hil}(G_P^q)$ is obvious.
\end{rem}

\begin{rem}
In the $2$-variable case, i.e., when $\X=\{0,1\}$, we identify $P$ 
with the number $\gamma:=P(0)$, and use the notation 
$G_{\gamma}^q(W_0\|W_1)$ instead of $G_{P}^q(W)$. 
\end{rem}

\begin{example}
For any $\gamma\in\bR$, $z\in(0,+\infty)$, and $\rho,\sigma\in\B(\hil)\pne$ with 
$\rho^0=\sigma^0$, let 
\begin{align*}
G_{\gamma,z}(\rho\|\sigma)
&:=
\bz\rho^{\frac{\gamma}{2z}}\sigma^{\frac{1-\gamma}{z}}\rho^{\frac{\gamma}{2z}}\jz^z,\\
\wtilde G_{\gamma,z}(\rho\|\sigma)
&:=
\bz\sigma^{\frac{1-\gamma}{2z}}\rho^{\frac{\gamma}{z}}\sigma^{\frac{1-\gamma}{2z}}\jz^z,\\
\what G_{\gamma,1}(\rho\|\sigma)
&:=\sigma\#_{\gamma}\rho:=\sigma^{1/2}\bz\sigma^{-1/2}\rho\sigma^{-1/2}\jz^{\gamma}\sigma^{1/2},\\
\what G_{\gamma,z}(\rho\|\sigma)
&:=
\bz\sigma^{\frac{1}{z}}\#_{\gamma}\rho^{\frac{1}{z}}\jz^z,\\
\what G_{\gamma,+\infty}(\rho\|\sigma)
&:=\lim_{z\to+\infty}\what G_{\gamma,z}(\rho,\sigma)
=
\rho^0 e^{\gamma \logn\rho+(1-\gamma)\logn\sigma}.
\end{align*}
For $\gamma\in(0,1)$, $\#_{\gamma}$ is the Kubo-Ando $\gamma$-weighted geometric mean 
\cite{KA} (see Section \ref{sec:gamma relentr} for a more detailed exposition), and 
the last equality is due to  \cite[Lemma 3.3]{HP_GT}.
It is easy to see that these all define $2$-variable non-commutative $\gamma$-weighted geometric means. The last one of the above quantities can be 
immediately extended to more than two variables as 
\begin{align}\label{multi log mean}
\what G_{P,+\infty}(W):=S e^{\sum_xP(x)\logn W_x},
\end{align}
where $S=W_x^0$, $x\in\supp P$.
\end{example}

For any non-commutative $P$-weighted geometric mean $G_{P}^q$, 
\begin{align}\label{multiQ from G}
Q_P^{q}(W):=Q_P^{G_{P}^q}(W):=\Tr G_{P}^q(W),\ds\ds\ds W\in\oll{\cD}_{\hil}(G_P^q)
\end{align}
defines an extension of the classical $Q_{P}$ to 
$\oll{\cD}_{\hil}(G_P^q)$ for any finite-dimensional Hilbert space $\hil$; 
see Definition \ref{def:classical QP} for the definition of the classical quantity. 
This may potentially be further extended by 
\begin{align}\label{multi geommean Renyi extension}
Q_P^{q}(W):=Q_P^{G_{P}^q}(W):=
\lim_{\ep\searrow 0}\Tr G_{P}^q((W_x+\ep W_P^0)_{x\in\X}),
\end{align}
provided that the limit exists; this is obviously the case when 
$W\in\oll{\cD}_{\hil}(G_P^q)$, but there might be other $W$ for which the 
limit in \eqref{multi geommean Renyi extension} exists, even though the limit in 
\eqref{geommean regularization} does not.
Among others, the R\'enyi $(\alpha,z)$-divergences, 
the log-Euclidean R\'enyi divergences and the maximal R\'enyi divergences 
can be given in this way for the following parameter ranges:
\begin{align}
&Q_{\alpha,z}(\rho\|\sigma)=Q_{\alpha}^{G_{\alpha,z}}(\rho\|\sigma)=Q_{\alpha}^{\tilde G_{\alpha,z}}(\rho\|\sigma),
& &\alpha\in(0,1)\cup(1,+\infty),\ds z\in(0,+\infty),\\
&Q_{\alpha,+\infty}(\rho\|\sigma)=
Q_{\alpha}^{\what G_{\alpha,+\infty}}(\rho\|\sigma),
& &\alpha\in(0,1)\cup(1,+\infty),\\
&Q_{\alpha}^{\max}(\rho\|\sigma)=
Q_{\alpha}^{\what G_{\alpha,1}}(\rho\|\sigma),& & 
\label{Q max}
\alpha\in(0,1)\cup(1,2],
\end{align}
for any $\rho,\sigma\in\B(\hil)\pne$; see Examples \ref{ex:az Renyi} and \ref{ex:max Renyi}, 
and \cite{MO-cqconv}.
The equality in \eqref{Q max} follows from \eqref{max Renyi explicit} and the fact that 
$\sigma\#_{\alpha}\rho=\sigma\#_{\alpha}\rho_{\sigma,\ac}$, $\alpha\in(0,1)$.

\begin{rem}
For every $\alpha\in(0,1)$, 
$[1,+\infty]\ni z\mapsto Q_{\alpha}^{\what G_{\alpha,z}}(\rho\|\sigma)$ interpolates monotone increasingly between $Q_{\alpha}^{\max}(\rho\|\sigma)$ and $Q_{\alpha,+\infty}(\rho\|\sigma)$, according to \cite[Corollary 2.4]{AndoHiai1994}, and hence
\begin{align*}
[1,+\infty]\ni z\mapsto 
D_{\alpha}^{\what G_{\alpha}^{\alpha,z}}(\rho\|\sigma):=\frac{1}{\alpha-1}\log Q_{\alpha}^{\what G_{\alpha,z}}(\rho\|\sigma)-
\frac{1}{\alpha-1}\log\Tr\rho
\end{align*}
interpolates monotone decreasingly between $D_{\alpha}^{\max}(\rho\|\sigma)$ and $D_{\alpha,+\infty}(\rho\|\sigma)$.
\end{rem}

Next, we explore some ways 
to define (potentially) new quantum R\'enyi divergences and relative entropies from 
combining some given quantum R\'enyi divergences/relative entropies
with some non-commutative weighted geometric means. 

The simplest way to do so is to take any $\Y$-variable 
$\tilde P$-weighted quantum R\'enyi quantity 
$Q_{\tilde P}^q$, for every $y\in\Y$, a
$P^{(y)}\in\P_f^{\pm}(\X)$ and 
a non-commutative $P^{(y)}$-weighted geometric mean
$G_{P^{(y)}}^{q_y}$,
and define
\begin{align*}
Q_{P}^{q,\qv}(W):=
Q_{\tilde P}^q\bz\bz G_{P^{(y)}}^{q_y}(W)\jz_{y\in\Y}\jz,\ds\ds\ds
W\in\bigcap_{y\in\supp \tilde P}\oll{\cD}(G_{P^{(y)}}^{q_y}),
\end{align*}
where 
\begin{align*}
P(x):=\sum_{y\in\supp \tilde P}\tilde P(y)P^{(y)}(x),\ds\ds\ds x\in\X.
\end{align*}
This can be further extended as in \eqref{multi geommean Renyi extension}, or as
\begin{align*}
Q_{P}^{q,\qv}(W):=
\lim_{\ep\searrow 0}Q_{\tilde P}^q\bz\bz G_{P^{(y)}}^{q_y}(W+\ep W_{P^{(y)}}^0)\jz_{y\in\Y}\jz,\ds\ds\ds
W\in\bigcup_{d\in\bN}\B(\X,\bC^d)\p,
\end{align*}
provided that the limit exists.

Such quantities have been defined before in \cite{FuruyaLashkariOuseph2023} for $\X=[n+1]$ as
\begin{align*}
Q_P^{\gamma,z}(W):=Q_{P(0),z}\bz W_0\Bigg\|\bz W_1^{\frac{P(1)}{z\kappa_1}}\#_{\gamma_1}
\bz W_2^{\frac{P(2)}{z\kappa_2}}\#_{\gamma_2}\ldots\bz W_{n-1}^{\frac{P(n-1)}{z\kappa_{n-1}}}\#_{\gamma_{n-1}}W_n^{\frac{P(n)}{z\kappa_n}}\jz\jz\jz^{\frac{z}{1-P(0)}}\jz,
\end{align*}
where $P\in\P([n+1])$
is a probability distribution, $\gamma_i\in(0,1)$, $i\in\{1,\ldots,n-1\}$ are arbitrary parameters,
$\gamma_n:=0$, and $\kappa_i:=\gamma_1\ccdots\gamma_{i-1}\cdot(1-\gamma_i)$, $i=1,\ldots,n$.
Here, $\Y=[2]$, $P^{(0)}=\egy_{\{0\}}$, $P^{(1)}(0)=0$,
$P^{(1)}(k)=P(k)/(1-P(0))$, $k=1,\ldots,n$.
For the range of parameter values for which these quantities are monotone under CPTP maps, see \cite{FuruyaLashkariOuseph2023}.

Similar quantities were considered in \cite{bunth2021equivariant} in the study of relative 
submajorization of (subnormalized) state families; namely 
\begin{align*}
D_{\alpha,\alpha}(W_0\|G_P^q(W_1,\ldots,W_{n+1})),
\end{align*}
where $D_{\alpha,\alpha}$ is the sandwiched R\'enyi $\alpha$-divergence, and 
$G_P^q$ is any non-commutative geometric mean satisfying a number of desirable properties.
(In fact, \cite{bunth2021equivariant} dealt with a more general setting where 
$P$ need not be finitely supported, and hence infinitely many $W_x$ may enter the expression in the second argument of $D_{\alpha,\alpha}$.)
A new family of ($2$-variable) quantum R\'enyi $\alpha$-divergences was also proposed in 
\cite{bunth2021equivariant} as
\begin{align*}
D^{*,\gamma}_{\alpha}(\rho\|\sigma):=
\frac{1}{1-\gamma}D_{\frac{\alpha-\gamma}{1-\gamma},\frac{\alpha-\gamma}{1-\gamma}}
\bz\rho\|\sigma\#_{\gamma}\rho\jz,
\end{align*}
where $\alpha>1$ and $\gamma\in(0,1)$. This concept can be immediately generalized as
\begin{align}\label{gamma-weighted Renyi}
D_{\alpha}^{\qo,G^{\qn}_{\gamma}}(\rho\|\sigma):=\frac{1}{1-\gamma}D^{\qo}_{\frac{\alpha-\gamma}{1-\gamma}}
\bz\rho\|G^{\qn}_{\gamma}(\rho\|\sigma)\jz
\end{align}
for any $\alpha\in(0,+\infty)$ and $\gamma\in(0,\min\{1,\alpha\})$, 
where $D_{\frac{\alpha-\gamma}{1-\gamma}}^{\qo}$ is a quantum R\'enyi 
$\frac{\alpha-\gamma}{1-\gamma}$-divergence, and 
$G^{\qn}_{\gamma}$ is an arbitrary non-commutative $\gamma$-weighted geometric mean.
The special case $\alpha=1$ gives that any quantum relative entropy $D^{\qo}$ and any 
non-commutative $\gamma$-weighted geometric mean $G^{\qn}_{\gamma}$ with some $\gamma\in(0,1)
$ defines a quantum relative entropy via
\begin{align}\label{gamma-weighted relentr}
D^{\qo,G^{\qn}_{\gamma}}(\rho\|\sigma):=\frac{1}{1-\gamma}D^{\qo}(\rho\|G^{\qn}_{\gamma}(\rho\|\sigma)).
\end{align}
In Section \ref{sec:gamma relentr} we will study in detail the relative entropies of the form 
\eqref{gamma-weighted relentr} when $G^{\qn}_{\gamma}=\#_{\gamma}$
is the Kubo-Ando $\gamma$-weighted geometric mean.

\begin{rem}
If $D_{\frac{\alpha-\gamma}{1-\gamma}}^{\qo}$ satisfies the scaling law \eqref{Renyi scaling}
then for any $\rho,\sigma\in\B(\hil)\pne$ such that 
$Q_{G^{\qn}_{\gamma}(\rho\|\sigma)}=\Tr G^{\qn}_{\gamma}(\rho\|\sigma)\ne 0$, 
\eqref{gamma-weighted Renyi} can be rewritten as 
\begin{align*}
D_{\alpha}^{\qo,G^{\qn}_{\gamma}}(\rho\|\sigma)
&=\frac{1}{1-\gamma}\left[D^{\qo}_{\frac{\alpha-\gamma}{1-\gamma}}
\bz\frac{\rho}{\Tr\rho}\Big\|\frac{G^{\qn}_{\gamma}(\rho\|\sigma)}{\Tr G^{\qn}_{\gamma}(\rho\|\sigma)}\jz
+\log \Tr\rho-\log \Tr G^{\qn}_{\gamma}(\rho\|\sigma)\right]\\
&=
\frac{1}{1-\gamma}D^{\qo}_{\frac{\alpha-\gamma}{1-\gamma}}
\bz\frac{\rho}{\Tr\rho}\Big\|\frac{G^{\qn}_{\gamma}(\rho\|\sigma)}{\Tr G^{\qn}_{\gamma}(\rho\|\sigma)}\jz+D_{\gamma}^{G^{\qn}_{\gamma}}(\rho\|\sigma).
\end{align*}
Likewise, if $D^{\qo}$ satisfies the scaling law \eqref{relentr scaling}, and 
$\Tr G^{\qn}_{\gamma}(\rho\|\sigma)\ne 0$, then \eqref{gamma-weighted relentr} 
can be rewritten as 
\begin{align}
D^{\qo,G^{\qn}_{\gamma}}(\rho\|\sigma)
&=
\frac{\Tr\rho}{1-\gamma}\left[
D^{\qo}\bz\frac{\rho}{\Tr\rho}\Big\|\frac{G^{\qn}_{\gamma}(\rho\|\sigma)}{\Tr G^{\qn}_{\gamma}(\rho\|\sigma)}\jz+\log\Tr\rho-\log\Tr G^{\qn}_{\gamma}(\rho\|\sigma)\right]\\
&=
(\Tr\rho)\left[
\frac{1}{1-\gamma}D^{\qo}\bz\frac{\rho}{\Tr\rho}\Big\|\frac{G^{\qn}_{\gamma}(\rho\|\sigma)}{\Tr G^{\qn}_{\gamma}(\rho\|\sigma)}\jz
+D_{\gamma}^{G^{\qn}_{\gamma}}(\rho\|\sigma)\right].
\label{gamma-weighted relentr2}
\end{align}
In particular, if $D_{\gamma}^{G^{\qn}_{\gamma}}$ is strictly positive and
$D^{\qo}_{\frac{\alpha-\gamma}{1-\gamma}}$
(resp.~$D^{\qo}$) is non-negative, then 
$D_{\alpha}^{\qo,G^{\qn}_{\gamma}}$ (resp.~$D^{\qo,G^{\qn}_{\gamma}}$)
is strictly positive.
\end{rem}
\medskip

Another way of obtaining multi-variate quantum R\'enyi divergences using non-commutative geometric means is by starting with quantum relative entropies $D^{\qv}:=(D^{q_x})_{x\in\X}$
and non-commutative $P$-weighted geometric means
$G_P^{\tilde\qv}:=(G_{P}^{\tilde q_x})_{x\in\X}$,
and defining
\begin{align}\label{geometric Renyi}
-\log Q_P^{\tilde\qv,\qv}(W)
&:=
\sum_{x\in\X}P(x)D^{q_x}\bz\frac{G_P^{\tilde q_x}(W)}{\Tr G_P^{\tilde q_x}(W)}\Bigg\|W_x\jz.
\end{align}
One can easily verify that for a classical $W$ this indeed reduces to the classical 
$-\log Q_P(W)$.
It was shown in \cite[Theorem 3.6]{MO} that for $D^{\qn}=D^{\qo}=\DU$ and
$G^{\qn}_{\alpha}=G^{\qo}_{\alpha}=\what G_{\alpha,+\infty}$, the quantum R\'enyi $\alpha$-divergence obtained by \eqref{geometric Renyi} 
is equal to $D_{\alpha,+\infty}$ in \eqref{D alpha infty}. We give an extension of this to the multi-variate case in Proposition \ref{prop:Um Hellinger}. 
We also show in Proposition \ref{prop:max Renyi as geometric} below that in the case 
$D^{\qn}=D^{\qo}=D^{\max}$ and
$G^{\qn}_{\alpha}=G^{\qo}_{\alpha}=\what G_{\alpha,1}$, the quantum R\'enyi $\alpha$-divergence obtained by \eqref{geometric Renyi} 
is equal to $D_{\alpha}^{\max}$.

Instead of writing normalized geometric means in the first arguments of the quantum 
relative entropies in \eqref{geometric Renyi}, one may optimize over arbitrary states, which leads to 
\begin{align}\label{barycentyric def0}
-\log Q_{P}^{\bary,\qv}(W)
&:=
\inf_{\omega\in\S(\hil)}\sum_xP(x) D^{q_x}\bz\omega\|W_x\jz.
\end{align}
(When the $W_x$ may have different supports, we take the infimum on a restricted set of states; see Section \ref{sec:defs} for the precise definition.)
This again gives a multivariate quantum R\'enyi divergence,
which we call the \ki{barycentric R\'enyi $\alpha$-divergence}
corresponding to $D^{\qv}$. Moreover, as we show in Section \ref{sec:defs},
\eqref{barycentyric def0} can be equivalently written as
\begin{align}\label{barycentyric def00}
Q_{P}^{\bary,\qv}(W)
&:=
\sup_{\tau\in\B(\hil)\pne}\left\{\Tr\tau-\sum_xP(x) D^{q_x}\bz\tau\|W_x\jz\right\},
\end{align}
(the precise form again given in Section \ref{sec:defs}), and 
any optimal $\tau$ (if exists) gives a multi-variate $P$-weighted geometric mean 
$G_P^{D^{\qv}}(W)$
of $W$, for which 
\begin{align}
Q_{P}^{\bary,\qv}(W)=\Tr G_P^{D^{\qv}}(W),
\end{align}
(see Section \ref{sec:baryR is quantumR}), whence this approach can be formally seen as a special case of \eqref{multiQ from G}.
However, the initial data in the two approaches are different:
a non-commutative $P$-weighted geometric mean in \eqref{multiQ from G},
and a collection of relative entropies in 
\eqref{barycentyric def0}. Note also that while every 
non-commutative $P$-weighted geometric mean $G_P^q$ defines a quantum extension of $Q_P$ by
\eqref{multiQ from G}, it is not clear whether every quantum R\'enyi quantity
$Q_P^q$ defines a
non-commutative $P$-weighted geometric mean $G_P^q$ for which 
\eqref{multiQ from G} holds, and hence this is a non-trivial feature of the barycentric 
R\'enyi divergences.

The main goal of this paper is the detailed study of the barycentric R\'enyi divergences
(in Sections \ref{sec:barycentric} and \ref{sec:ex}.)
We note that these give new quantum R\'enyi divergences even in the $2$-variable case;
indeed, we show in Section \ref{sec:ex}
that under very mild conditions on $D^{\qn},D^{\qo}$ and 
$\rho,\sigma$,
$D_{\alpha,+\infty}(\rho\|\sigma)<D_{\alpha}^{\bary,\qv}(\rho\|\sigma)<D_{\alpha}^{\max}(\rho\|\sigma)$ holds for every $\alpha\in(0,1)$. 

Other approaches to non-commutative geometric means and multi-variate R\'enyi divergences
are outlined in Figure \ref{fig:multiRenyi}.

\section{$\gamma$-weighted geometric relative entropies}
\label{sec:gamma relentr}

Recall the definition of the Kubo-Ando weighted geometric means $\#_{\gamma}$ from Appendix \ref{sec:KA}.
The following is a special case of \eqref{gamma-weighted relentr}:

\begin{definition}
Let $D^q$ be a quantum relative entropy. For 
every $\gamma\in[0,1)$, and 
every $\rho,\sigma\in\B(\hil)\pne$, let 
\begin{align*}
D^{q,\#_\gamma}(\rho\|\sigma):=
\begin{cases}
\frac{1}{1-\gamma}D^q(\rho\|\sigma\#_{\gamma}\rho),&\gamma\in(0,1),\\
D^q(\rho\|\sigma),&\gamma=0.
\end{cases}
\end{align*}
We call $D^{q,\#_\gamma}$ the 
\ki{$\gamma$-weighted geometric relative entropy derived from $D^q$}, or
\ki{$\gamma$-weighted geometric $D^q$} for short.
\end{definition}

\begin{rem}
Note that by \eqref{wmean support}, 
$\sigma\#_{\gamma}\rho=0$ might happen even if both $\rho$ and $\sigma$
are quantum states (thus, in particular, are non-zero); in this case the value 
of $D^{q,\#_\gamma}(\rho\|\sigma)$ is $+\infty$, according to 
Remark \ref{rem:zero argument q}.
\end{rem}

\begin{rem}\label{rem:weighted relentr sum expression}
Assume that $D^q$ satisfies the scaling property \eqref{relentr scaling}.
Then for any $\rho,\sigma\in\B(\hil)\pne$ such that 
$\rho^0\wedge\sigma^0\ne 0$, 
\begin{align*}
D^{q,\#_{\gamma}}(\rho\|\sigma)=
(\Tr\rho)\left[
\frac{1}{1-\gamma}D^q\bz\frac{\rho}{\Tr\rho}\Big\|\frac{\sigma\#_{\gamma}\rho}{\Tr \sigma\#_{\gamma}\rho}\jz
+D_{\gamma}^{\max}(\rho\|\sigma)\right],\ds\ds\ds\gamma\in(0,1).
\end{align*}
This is a special case of \eqref{gamma-weighted relentr2}.
\end{rem}

It is clear that for any 
$\gamma\in(0,1)$, $D^{q,\#_\gamma}$ is a quantum relative entropy.
In the following we show how certain properties of $D^q$ are inherited by $D^{q,\#_\gamma}$.
\begin{enumerate}
\item
If $D_1^q$ is additive on tensor products then so is $D_1^{q,\#_\gamma}$, due to 
\eqref{mean tensor}.

\item
If $D^q$ is block subadditive/superadditive/additive, then so are $D^{q,\#_\gamma}$, 
$\gamma\in(0,1)$.
This follows immediately from the fact that for any sequence of projections 
$P_1,\ldots,P_r$ summing to $I$, we have
$\bz\sum_{i=1}^r P_i\rho P_i\jz\#_{\gamma}\bz\sum_{i=1}^r P_i\sigma P_i\jz
=
\sum_{i=1}^r(P_i\rho P_i)\#_{\gamma}(P_i\sigma P_i)$.

\item\label{gamma weighted support}
If $D^q$ satisfies the support condition
\begin{align*}
D^q(\rho\|\sigma)<+\infty\ds\iff\ds\rho^0\le\sigma^0
\end{align*}
then so do $D^{q,\#_{\gamma}}$, $\gamma\in(0,1)$. Indeed, 
by \eqref{wmean support}, $(\sigma\#_{\gamma}\rho)^0=\rho^0\wedge\sigma^0$, and clearly, 
$\rho^0\le\rho^0\wedge\sigma^0$ $\iff$ $\rho^0\le\sigma^0$, whence
$D^{q,\#_\gamma}(\rho\|\sigma)<+\infty$ $\iff$ $\rho^0\le\sigma^0$. 

\item
If $D^q$ satisfies 
either of the scaling laws \eqref{scaling1} or \eqref{scaling2} then 
so do $D^{q,\#_{\gamma}}$, $\gamma\in(0,1)$. As a consequence, if 
$D^q$ satisfies 
the scaling law \eqref{relentr scaling}
then so do $D^{q,\#_{\gamma}}$, $\gamma\in(0,1)$. 
These can be verified by straightforward computations, which we omit.

\item\label{weighted trmon imp strict pos}
If $D^q$ satisfies trace monotonicity \eqref{relentr tracemon} then 
$D^{q,\#_{\gamma}}$, $\gamma\in(0,1)$, also satisfy trace monotonicity \eqref{relentr tracemon}; moreover, they are strictly positive. Indeed,
if $\sigma\#_{\gamma}\rho\ne 0$ then 
\begin{align}
D^{q,\#_\gamma}(\rho\|\sigma)
&=
\frac{1}{1-\gamma}D^q(\rho\|\sigma\#_{\gamma}\rho)
\label{weighted strict pos4}\\
&\ge
\frac{1}{1-\gamma}\big[(\Tr\rho)\log\Tr\rho-(\Tr\rho)\log
\underbrace{\Tr(\sigma\#_{\gamma}\rho)}_{\le(\Tr\rho)^{\gamma}(\Tr\sigma)^{1-\gamma}}
\big]
\label{weighted strict pos3}\\
&\ge
\frac{1}{1-\gamma}[(\Tr\rho)\log\Tr\rho-(\Tr\rho)\log\bz(\Tr\rho)^{\gamma}(\Tr\sigma)^{1-\gamma}\jz]\label{weighted strict pos2}\\
&=(\Tr\rho)\log\Tr\rho-(\Tr\rho)\log\Tr\sigma,\label{weighted strict pos}
\end{align}
where the first inequality follows from 
the assumed trace monotonicity of $D^q$, and the second inequality follows from
Lemma \ref{lemma:KA Holder}. 
If $\sigma\#_{\gamma}\rho=0$ then
either $\rho=0$ and thus 
$D^{q,\#_{\gamma}}(\rho\|\sigma)=
\frac{1}{1-\gamma}D^q(0\|0)=0=D(\Tr\rho\|\Tr\sigma)$, or 
$\rho\ne 0$, whence
$D^{q,\#_\gamma}(\rho\|\sigma)=
\frac{1}{1-\gamma}
D^q(\rho\|0)=+\infty\ge D(\Tr\rho\|\Tr\sigma)$ holds trivially. 
This shows that $D^{q,\#_{\gamma}}$, $\gamma\in(0,1)$, satisfy trace monotonicity, and hence they are also non-negative, according to Remark \ref{rem:tracemon}.

Assume now that $\Tr\rho=\Tr\sigma=1$ and $\rho\ne\sigma$.
If $\rho^0\wedge\sigma^0=0$ then 
$D^{q,\#_\gamma}(\rho\|\sigma)=
\frac{1}{1-\gamma}
D^q(\rho\|0)=+\infty>0$ holds trivially. Otherwise 
\eqref{weighted strict pos4}--\eqref{weighted strict pos} hold, 
and the inequality in 
\eqref{weighted strict pos2} is strict, according to Lemma \ref{lemma:KA Holder}, 
whence $D^{q,\#_{\gamma}}(\rho\|\sigma)>0$ for every $\gamma\in(0,1)$. This proves that 
$D^{q,\#_{\gamma}}$, $\gamma\in(0,1)$, are strictly positive. 

\item\label{scaling 2 imp strict pos}
If $D^q$ satisfies the scaling law \eqref{scaling2} and it is non-negative then 
$D^{q,\#_{\gamma}}$, $\gamma\in(0,1)$, are strictly positive. Indeed, let 
$\rho,\sigma\in\S(\hil)$ be unequal states. 
If $\sigma\#_{\gamma}\rho=0$ then 
$D^{q,\#_{\gamma}}(\rho\|\sigma)
=\frac{1}{1-\gamma}D^q(\rho\|0)=+\infty>0$ is obvious. Assume for the rest that 
 $\sigma\#_{\gamma}\rho\ne 0$. Then
\begin{align*}
(1-\gamma)D^{q,\#_{\gamma}}(\rho\|\sigma)
=D^q\bz\rho\Big\|\frac{\sigma\#_{\gamma}\rho}{\Tr \sigma\#_{\gamma}\rho}
\Tr \sigma\#_{\gamma}\rho\jz
=
\underbrace{D^q\bz\rho\Big\|\frac{\sigma\#_{\gamma}\rho}{\Tr \sigma\#_{\gamma}\rho}\jz}_{\ge 0}
-(\Tr\rho)\log\underbrace{\Tr\sigma\#_{\gamma}\rho}_{<1}
>0,
\end{align*} 
where the strict inequality is due to Lemma \ref{lemma:KA Holder}.

\item\label{item:strong regularity}
If $D^q$ is strongly regular then so are $D^{q,\#_\gamma}$, $\gamma\in(0,1)$. This follows immediately from the fact that if $\sigma_n$, $n\in\bN$, is a sequence of PSD operators converging monotone decreasingly to $\sigma$ then $\sigma_n\#_{\gamma}\rho$ converges monotone decreasingly to 
$\sigma\#_{\gamma}\rho$; see \eqref{geom moncont}.

\item\label{item:anti-monotone}
If $D^q$ is anti-monotone in its second argument, then so are 
$D^{q,\#_\gamma}$, $\gamma\in(0,1)$. Indeed, this follows immediately from \eqref{wmean montone}.

\item\label{item:CPTP-mon}
Assume that $D^q$ is anti-monotone in its second argument. 
If $\rho,\sigma\in\B(\hil)\p$ and $\map:\,\B(\hil)\to\B(\kil)$ is a positive linear map 
such that 
\begin{align*}
D^q(\map(\rho)\|\map(\sigma))\le D^q(\rho\|\sigma)\ds\ds\text{then}\ds\ds
D^{q,\#_{\gamma}}(\map(\rho)\|\map(\sigma))\le D^{q,\#_{\gamma}}(\rho\|\sigma),\ds
\gamma\in(0,1).
\end{align*}
Indeed,
\begin{align*}
D^{q,\#_\gamma}(\map(\rho)\|\map(\sigma))
=
\frac{1}{1-\gamma}D^q(\map(\rho)\|\underbrace{\map(\sigma)\#_{\gamma}\map(\rho)}_{\ge
\map(\sigma\#_{\gamma}\rho)})
&\le
\frac{1}{1-\gamma}D^q(\map(\rho)\|\map(\sigma\#_{\gamma}\rho))\\
&\le
\frac{1}{1-\gamma}D^q(\rho\|\sigma\#_{\gamma}\rho),
\end{align*}
where the first inequality follows from Lemma \ref{lemma:geommean monotone}, the second 
inequality from anti-monotonicity of $D^q$ in its second argument, and the last inequality
from the monotonicity of $D^q$ under the given class of maps.
In particular, $D^{q,\#_{\gamma}}$, $\gamma\in(0,1)$, are monotone under any class of positive maps under which $D^q$ is monotone. 

\item
Since the $\gamma$-weighted geometric means are jointly concave in their arguments
\cite{KA}, it is easy to see that 
if $D^q$ is anti-monotone in its second argument and it is jointly convex, then so 
are $D^{q,\#_\gamma}$, $\gamma\in(0,1)$. 

\item\label{weighted ordering}
For any two quantum relative entropies $D^{q_1},D^{q_2}$,
\begin{align}
D^{q_1}\le D^{q_2}\ds\ds\imp\ds\ds D^{q_1,\#_{\gamma}}\le D^{q_2,\#_{\gamma}},\ds\ds\ds
\gamma\in(0,1),\label{geom ordering1}
\end{align}
which follows immediately by definition. Moreover, 
\begin{align}
&\forall\,\rho,\sigma\in\B(\hil)\pp,\,\rho\sigma\ne\sigma\rho:\,
D^{q_1}(\rho\|\sigma)< D^{q_2}(\rho\|\sigma)\nn\\
&\ds\ds\imp\ds\ds 
\forall\,\rho,\sigma\in\B(\hil)\pp,\,\rho\sigma\ne\sigma\rho:\,
D^{q_1,\#_{\gamma}}(\rho\|\sigma)< D^{q_2,\#_{\gamma}}(\rho\|\sigma),\ds\ds\ds
\gamma\in(0,1).
\label{geom ordering2}
\end{align}
Indeed, for this we only need that for any $\rho,\sigma\in\B(\hil)\pp$,
$\rho\sigma\ne\sigma\rho$ $\imp$
$\rho(\sigma\#_{\gamma}\rho)\ne(\sigma\#_{\gamma}\rho)\rho$.
This is straightforward to 
verify; indeed, by \eqref{gamma mean3},
\begin{align*}
\rho(\sigma\#_{\gamma}\rho)=(\sigma\#_{\gamma}\rho)\rho
&\ds\iff\ds
\rho \rho^{1/2}\bz \rho^{-1/2}\sigma\rho^{-1/2}\jz^{1-\gamma}\rho^{1/2}
=\rho^{1/2}\bz \rho^{-1/2}\sigma\rho^{-1/2}\jz^{1-\gamma}\rho^{1/2}
\rho\\
&\ds\iff\ds
\rho\bz \rho^{-1/2}\sigma\rho^{-1/2}\jz^{1-\gamma}
=
\bz \rho^{-1/2}\sigma\rho^{-1/2}\jz^{1-\gamma}\rho\\
&\ds\iff\ds
\rho\bz \rho^{-1/2}\sigma\rho^{-1/2}\jz=\bz \rho^{-1/2}\sigma\rho^{-1/2}\jz\rho\\
&\ds\iff\ds
\rho\sigma=\sigma\rho.
\end{align*}
\end{enumerate}

\begin{rem}
By Remark \ref{rem:weighted relentr sum expression} and the strict positivity of $D_{\gamma}^{\max}$, if 
$D^q$ satisfies the scaling law \eqref{relentr scaling} and it is non-negative then 
$D^{q,\#_{\gamma}}$ is strictly positive. Note that 
\ref{weighted trmon imp strict pos} above establishes strict positivity of 
$D^{q,\#_{\gamma}}$ under a slightly weaker condition;  
see Remark \ref{rem:tracemon}.
\end{rem}

\begin{rem}
According to Remark \ref{rem:relentr maxmin}, if 
$D^{q,\#_{\gamma}}$ is monotone under 
CPTP maps for some $\gamma\in(0,1)$
 then 
\begin{align*}
D^{\meas}\le D^{q,\#_{\gamma}}\le D^{\max},
\end{align*}
and if $D^{q,\#_{\gamma}}$ is also additive on tensor products then 
\begin{align}\label{Dgamma bounds}
\DU\le D^{q,\#_{\gamma}}\le D^{\max}.
\end{align}
In particular,
\begin{align}\label{DU gamma larger}
\DU\le D^{\Um,\#_{\gamma}}\le D^{\max,\#_{\gamma}}\le D^{\max},
\end{align}
where we also used \ref{weighted ordering} above.
In particular, \eqref{Dgamma bounds}--\eqref{DU gamma larger} hold for every $\gamma\in(0,1)$ when 
$D^q$ is monotone under CPTP maps and anti-monotone in its second argument, according to 
\ref{item:CPTP-mon}.
\end{rem}

\begin{lemma}\label{lemma:Dq gamma as limit}
Assume that $D^q$ is anti-monotone in its second argument and regular. 
Let $\rho,\sigma\in\B(\hil)\p$ be the limits of monotone decreasing sequences
$(\rho_{\ep})_{\ep>0}$ and $(\sigma_{\ep})_{\ep>0}$, respectively. Then, for
every $\gamma\in(0,1)$, 
\begin{align}
\frac{1}{1-\gamma}D^q(\rho\|\sigma_{\ep}\#_{\gamma}\rho_{\ep})\nearrow 
D^{q,\#_\gamma}(\rho\|\sigma)\ds\ds\text{as}\ds\ds
\ep\searrow 0,\label{Dq gamma as limit}\\
D^{q,\#_\gamma}(\rho\|\sigma_{\ep})\nearrow D^{q,\#_\gamma}(\rho\|\sigma)
\ds\ds\text{as}\ds\ds
\ep\searrow 0.\label{Dq gamma as limit2}
\end{align}
\end{lemma}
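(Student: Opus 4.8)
The plan is to prove the two monotone-convergence statements \eqref{Dq gamma as limit} and \eqref{Dq gamma as limit2} together, exploiting anti-monotonicity and regularity of $D^q$, plus the monotone continuity of the Kubo-Ando geometric mean recorded in \eqref{geom moncont}. The key structural facts I would invoke are: (a) $(\sigma_n)$ decreasing to $\sigma$ implies $\sigma_n\#_\gamma\rho_m$ is decreasing in $n$ for fixed $m$, and $\sigma_n\#_\gamma\rho_n\searrow\sigma\#_\gamma\rho$ as $\ep\searrow0$ by joint monotonicity and the continuity statement \eqref{geom moncont}; (b) $D^q$ being anti-monotone in its second argument means that replacing the second argument by something larger makes $D^q$ smaller, so all the quantities in \eqref{Dq gamma as limit}--\eqref{Dq gamma as limit2} are monotone increasing in $\ep\searrow0$, i.e. the limits exist (possibly $+\infty$); and (c) $D^q$ is regular, so $D^q(\rho\|\cdot)$ recovers its value at $\sigma\#_\gamma\rho$ as a limit along $\sigma\#_\gamma\rho+\delta I$. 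Combining (c) with (b) via Remark~\ref{rem:AM regular} (AM + regularity $\Rightarrow$ strong regularity) should give exactly that $D^q(\rho\|\cdot)$ is continuous along any decreasing sequence converging to $\sigma\#_\gamma\rho$, which is what is needed.

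Concretely, for \eqref{Dq gamma as limit2}: fix $\gamma\in(0,1)$. First handle the case $\sigma\#_\gamma\rho\neq0$. Since $\sigma_\ep\searrow\sigma$, anti-monotonicity of $\#_\gamma$ in the relevant variable (see \eqref{wmean montone} / Lemma~\ref{lemma:geommean monotone}) gives $\sigma_\ep\#_\gamma\rho\searrow\sigma\#_\gamma\rho$ as $\ep\searrow0$, the convergence being monotone decreasing by \eqref{geom moncont}. Then $D^{q,\#_\gamma}(\rho\|\sigma_\ep)=\tfrac{1}{1-\gamma}D^q(\rho\|\sigma_\ep\#_\gamma\rho)$ is monotone increasing as $\ep\searrow0$ because $D^q$ is anti-monotone in its second argument and $\sigma_\ep\#_\gamma\rho\ge\sigma\#_\gamma\rho$ forces the inequality $D^q(\rho\|\sigma_\ep\#_\gamma\rho)\le D^q(\rho\|\sigma\#_\gamma\rho)$, with the family increasing. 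Strong regularity of $D^q$ (guaranteed by Remark~\ref{rem:AM regular}) applied to the decreasing sequence $(\sigma_\ep\#_\gamma\rho)_\ep$ with limit $\sigma\#_\gamma\rho$ then yields $D^q(\rho\|\sigma_\ep\#_\gamma\rho)\nearrow D^q(\rho\|\sigma\#_\gamma\rho)$, which is \eqref{Dq gamma as limit2}. In the degenerate case $\sigma\#_\gamma\rho=0$: by \eqref{wmean support} this means $\rho^0\wedge\sigma^0=0$; if $\rho=0$ both sides are $0$; if $\rho\neq0$ then $D^{q,\#_\gamma}(\rho\|\sigma)=+\infty$, and one needs $D^{q,\#_\gamma}(\rho\|\sigma_\ep)\to+\infty$, which again follows from anti-monotonicity since $\sigma_\ep\#_\gamma\rho\searrow0$ with $\rho\neq0$, combined with $D^q(\rho\|\tau)\to D^q(\rho\|0)=+\infty$ as $\tau\searrow0$ — here I would appeal to strong regularity once more (the sequence $\sigma_\ep\#_\gamma\rho$ decreases to $0$) together with the fixed value $D^q(\rho\|0)=+\infty$ from \eqref{zero argument q2}.

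For \eqref{Dq gamma as limit} the argument is the same but with both arguments perturbed: $\sigma_\ep\#_\gamma\rho_\ep\searrow\sigma\#_\gamma\rho$ by joint monotone continuity of $\#_\gamma$ (this is exactly \eqref{geom moncont} in both slots, or two applications of Lemma~\ref{lemma:geommean monotone}), so anti-monotonicity of $D^q$ in its second argument makes $\tfrac{1}{1-\gamma}D^q(\rho\|\sigma_\ep\#_\gamma\rho_\ep)$ increasing as $\ep\searrow0$, and strong regularity of $D^q$ identifies the limit with $\tfrac{1}{1-\gamma}D^q(\rho\|\sigma\#_\gamma\rho)=D^{q,\#_\gamma}(\rho\|\sigma)$; the $\sigma\#_\gamma\rho=0$ case is disposed of verbatim as above. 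One subtlety worth a sentence: in \eqref{Dq gamma as limit} the first argument of $D^q$ is held fixed at $\rho$ (not $\rho_\ep$), so no continuity in the first argument is needed, which is fortunate since we only assumed properties of $D^q$ in its second variable.

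The main obstacle I anticipate is making sure that the monotone-continuity input for the geometric mean is being used in the form actually proved earlier — i.e. that $\sigma_n\searrow\sigma$, $\rho_n\searrow\rho$ implies $\sigma_n\#_\gamma\rho_n\searrow\sigma\#_\gamma\rho$ (monotone decreasing, not merely convergent) — since the whole proof pivots on this plus strong regularity; if \eqref{geom moncont} is only stated for one slot varying, I would run it twice via an intermediate term $\sigma_n\#_\gamma\rho_n\ge\sigma\#_\gamma\rho_n\ge\sigma\#_\gamma\rho$ and use monotonicity of $D^q$ along the resulting decreasing chain. Everything else is a routine combination of anti-monotonicity, the support identity \eqref{wmean support}, and Remark~\ref{rem:AM regular}.
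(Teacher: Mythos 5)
Your proof is correct and takes essentially the same route as the paper's: deduce strong regularity from anti-monotonicity plus regularity via Remark \ref{rem:AM regular}, use \eqref{geom moncont} to get that $\sigma_\ep\#_\gamma\rho_\ep$ and $\sigma_\ep\#_\gamma\rho$ converge monotone decreasingly to $\sigma\#_\gamma\rho$, and conclude. The only differences are cosmetic: you spell out the increasing monotonicity of the $D^q$-values and the degenerate case $\sigma\#_\gamma\rho=0$, which the paper leaves as immediate, and your worry about \eqref{geom moncont} is unfounded since it is stated with both arguments varying simultaneously (also, what you call ``anti-monotonicity of $\#_\gamma$'' is just its joint monotonicity \eqref{wmean montone}).
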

\begin{proof}
By the assumptions and Remark \ref{rem:AM regular}, $D^q$ is strongly regular. 
By \eqref{geom moncont}, both 
$\sigma_{\ep}\#_{\gamma}\rho_{\ep}$ and 
$\sigma_{\ep}\#_{\gamma}\rho$ converge monotone decreasingly to 
$\sigma\#_{\gamma}\rho$ as $\ep\searrow 0$.
From these, the assertions follow immediately.
\end{proof}

\begin{rem}
\eqref{Dq gamma as limit2} is a special case of the 
anti-monotonicity and
strong regularity stated in
\ref{item:anti-monotone} and \ref{item:strong regularity} above.
\end{rem}

\begin{cor}\label{cor:joint lsc}
If $D^q$ is anti-monotone in its second argument, regular, and jointly lower semi-continuous 
in its arguments, then $D^{q,\#_\gamma}$ has the same properties for every $\gamma\in(0,1)$.
\end{cor}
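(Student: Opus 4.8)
The plan is to verify the three asserted properties of $D^{q,\#_\gamma}$ in turn, reducing each to facts already recorded above and, for the main one, to the monotone-limit formula of Lemma \ref{lemma:Dq gamma as limit}. Anti-monotonicity of $D^{q,\#_\gamma}$ in its second argument is nothing new: it is exactly item \ref{item:anti-monotone}, which uses only the anti-monotonicity of $D^q$ together with the operator monotonicity of $\sigma\mapsto\sigma\#_\gamma\rho$. Regularity of $D^{q,\#_\gamma}$ follows from \eqref{Dq gamma as limit2} of Lemma \ref{lemma:Dq gamma as limit} applied with the approximants $\sigma_\ep:=\sigma+\ep I$, which decrease monotonically to $\sigma$: this gives $D^{q,\#_\gamma}(\rho\|\sigma+\ep I)\nearrow D^{q,\#_\gamma}(\rho\|\sigma)$ as $\ep\searrow0$, which is \eqref{regularity}; the hypotheses required by that lemma (anti-monotonicity and regularity of $D^q$) are precisely those assumed here. (Equivalently, by Remark \ref{rem:AM regular} $D^q$ is strongly regular, so item \ref{item:strong regularity} makes $D^{q,\#_\gamma}$ strongly regular, hence in particular regular.)

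The substantive part is joint lower semi-continuity, and the idea is to write $D^{q,\#_\gamma}$ as a supremum of jointly lower semi-continuous functions. Applying \eqref{Dq gamma as limit} with $\rho_\ep:=\rho+\ep I$ and $\sigma_\ep:=\sigma+\ep I$ gives, for all $\rho,\sigma\in\B(\hil)\p$,
\begin{align*}
D^{q,\#_\gamma}(\rho\|\sigma)=\sup_{\ep>0}\frac{1}{1-\gamma}\,D^q\bz\rho\,\big\|\,(\sigma+\ep I)\#_\gamma(\rho+\ep I)\jz.
\end{align*}
Fix $\ep>0$. The map $(\rho,\sigma)\mapsto\bz\rho,\,(\sigma+\ep I)\#_\gamma(\rho+\ep I)\jz$ is continuous on $\B(\hil)\p^2$, because $\sigma+\ep I\ge\ep I$ keeps all the operations entering $(\sigma+\ep I)\#_\gamma(\rho+\ep I)=(\sigma+\ep I)^{1/2}\bz(\sigma+\ep I)^{-1/2}(\rho+\ep I)(\sigma+\ep I)^{-1/2}\jz^{\gamma}(\sigma+\ep I)^{1/2}$ (shift, inversion, square root, the power $\gamma\in(0,1)$ via continuous functional calculus, and multiplication) continuous. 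Composing this continuous map with the jointly lower semi-continuous $D^q$ shows that each summand $\frac{1}{1-\gamma}D^q\bz\rho\,\big\|\,(\sigma+\ep I)\#_\gamma(\rho+\ep I)\jz$ is jointly lower semi-continuous in $(\rho,\sigma)$. Since a supremum of lower semi-continuous functions is lower semi-continuous (cf.\ Lemma \ref{lemma:usc}\ref{usc1}), $D^{q,\#_\gamma}$ is jointly lower semi-continuous, which finishes the argument.

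The only step that is not routine bookkeeping is the continuity, for fixed $\ep>0$, of the perturbed geometric mean $(\rho,\sigma)\mapsto(\sigma+\ep I)\#_\gamma(\rho+\ep I)$: this is exactly where the $+\ep I$ regularization built into \eqref{Dq gamma as limit} is indispensable, since $\#_\gamma$ itself is not jointly continuous on all of $\B(\hil)\p^2$ (the rank can drop in the limit), and the whole lower-semicontinuity argument rests on being able to push that regularization through the limit.
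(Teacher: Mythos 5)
Your proof is correct and follows essentially the same route as the paper: anti-monotonicity and regularity are delegated to the previously established items, and joint lower semi-continuity is obtained by writing $D^{q,\#_\gamma}$, via Lemma \ref{lemma:Dq gamma as limit}, as the supremum over $\ep>0$ of $\frac{1}{1-\gamma}D^q(\rho\|(\sigma+\ep I)\#_\gamma(\rho+\ep I))$, each term being jointly lower semi-continuous by continuity of the regularized geometric mean and the assumed joint lower semi-continuity of $D^q$. The only difference is that you spell out the continuity of the fixed-$\ep$ map in more detail than the paper, which simply invokes continuity of functional calculus.
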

\begin{proof}
Anti-monotonicity and (strong) regularity have been covered in 
\ref{item:anti-monotone} and \ref{item:strong regularity} above
(see also Remark \ref{rem:AM regular}).
For every $\ep>0$, 
$(\rho,\sigma)\mapsto (\sigma+\ep I)\#_{\gamma}(\rho+\ep I)$ is continuous
(due to the continuity of functional calculus), and thus
$(\rho,\sigma)\mapsto \frac{1}{1-\gamma}D^q(\rho\|(\sigma+\ep I)\#_{\gamma}(\rho+\ep I))$
is lower semi-continuous. Thus, by \eqref{Dq gamma as limit}, 
$(\rho,\sigma)\mapsto D^{q,\#_\gamma}(\rho\|\sigma)$ is the supremum of lower semi-continuous functions, and therefore is itself lower semi-continuous.
\end{proof}

\begin{example}\label{ex:Dmax gamma sregular}
Combining Remark \ref{rem:relentr lsc} and Lemma \ref{lemma:Dq gamma as limit} yields that 
\eqref{Dq gamma as limit}--\eqref{Dq gamma as limit2}
hold for $D^q=D^{\meas},D^{\Um},D^{\max}$, and by 
Corollary \ref{cor:joint lsc}, 
$D^{\meas,\#_{\gamma}}$, 
$D^{\Um,\#_{\gamma}}$, and
$D^{\max,\#_{\gamma}}$, are jointly lower semi-continuous in their arguments for every 
$\gamma\in(0,1)$.
\end{example}
\medskip

For a given relative entropy $D^q$, the $\gamma$-weighted relative entropies $D^{q,\#_{\gamma}}$, $\gamma\in(0,1)$, give (potentially) new quantum 
relative entropies. On the other hand, as Proposition \ref{prop:iteration} below shows, 
iterating this procedure does not give further quantum relative entropies.
We will need the following simple lemma for the proof.

\begin{lemma}\label{lemma:gamma mean comp}
For any $\rho,\sigma\in\B(\hil)\p$ and any $\gamma_1,\gamma_2\in(0,1)$,
\begin{align}\label{gamma mean comp}
\bz\sigma\#_{\gamma_2}\rho\jz\#_{\gamma_1}\rho
&=
\sigma\#_{1-(1-\gamma_1)(1-\gamma_2)}\rho.
\end{align}
\end{lemma}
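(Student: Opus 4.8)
The plan is to reduce everything to scalars via the standard transfer trick for Kubo-Ando means. Recall from the definition $\sigma\#_{\gamma}\rho=\persp{(f_{\gamma})}(\rho,\sigma)$ with $f_{\gamma}=\id^{\gamma}$, so that on positive definite operators $\sigma\#_{\gamma}\rho=\sigma^{1/2}(\sigma^{-1/2}\rho\sigma^{-1/2})^{\gamma}\sigma^{1/2}$. First I would treat the case $\rho,\sigma\in\B(\hil)\pp$, and reduce to it afterwards by the regularity/continuity of the weighted geometric means (equation \eqref{geom moncont}) applied to $\rho_{\ep}:=\rho+\ep I$, $\sigma_{\ep}:=\sigma+\ep I$ — both sides of \eqref{gamma mean comp} are then limits of the corresponding positive-definite expressions, using that $\#_{\gamma}$ is monotone and continuous along decreasing sequences.

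So assume $\rho,\sigma>0$. Write $A:=\sigma^{-1/2}\rho\sigma^{-1/2}$, so $\sigma\#_{\gamma_2}\rho=\sigma^{1/2}A^{\gamma_2}\sigma^{1/2}=:X$. The key computation is to expand $X\#_{\gamma_1}\rho=X^{1/2}(X^{-1/2}\rho X^{-1/2})^{\gamma_1}X^{1/2}$. The clean way to do this is to use the transformer identity $C(\sigma\#_{\gamma}\rho)C^{*}=(C\sigma C^{*})\#_{\gamma}(C\rho C^{*})$ for invertible $C$, which holds for Kubo-Ando means and is certainly available from the Appendix material on $\#_{\gamma}$. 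Conjugating everything by $C:=\rho^{-1/2}$: we get $\rho^{-1/2}(\sigma\#_{\gamma_2}\rho)\rho^{-1/2}=(\rho^{-1/2}\sigma\rho^{-1/2})\#_{\gamma_2}I=(\rho^{-1/2}\sigma\rho^{-1/2})^{\gamma_2}$, using $B\#_{\gamma}I=B^{\gamma}$. Call $B:=\rho^{-1/2}\sigma\rho^{-1/2}$, so $\rho^{-1/2}(\sigma\#_{\gamma_2}\rho)\rho^{-1/2}=B^{\gamma_2}$. Now apply $\#_{\gamma_1}$ with $\rho$: again conjugating by $\rho^{-1/2}$,
\begin{align*}
\rho^{-1/2}\bigl((\sigma\#_{\gamma_2}\rho)\#_{\gamma_1}\rho\bigr)\rho^{-1/2}
=\bigl(\rho^{-1/2}(\sigma\#_{\gamma_2}\rho)\rho^{-1/2}\bigr)\#_{\gamma_1}I
=(B^{\gamma_2})^{\gamma_1}=B^{\gamma_1\gamma_2}.
\end{align*}
On the other hand, for any $\gamma\in(0,1)$, $\rho^{-1/2}(\sigma\#_{\gamma}\rho)\rho^{-1/2}=B^{\gamma}$; in particular with $\gamma=1-(1-\gamma_1)(1-\gamma_2)=\gamma_1+\gamma_2-\gamma_1\gamma_2$ we would need $B^{\gamma_1+\gamma_2-\gamma_1\gamma_2}$, which is \emph{not} equal to $B^{\gamma_1\gamma_2}$ — so I must conjugate by $\sigma^{-1/2}$ instead on the right-hand side. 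Doing that: $\sigma^{-1/2}(\sigma\#_{\gamma}\rho)\sigma^{-1/2}=I\#_{\gamma}(\sigma^{-1/2}\rho\sigma^{-1/2})=A^{\gamma}$ where $A=B^{-1}$ up to the conjugation relating $A$ and $B$ (indeed $A$ and $B^{-1}$ are related by $A=\sigma^{-1/2}\rho\sigma^{-1/2}$, $B^{-1}=\rho^{1/2}\sigma^{-1}\rho^{1/2}$, which are conjugate, hence have the same spectrum, but are not literally equal). The cleanest bookkeeping is therefore to conjugate \emph{both} sides of \eqref{gamma mean comp} by $\rho^{-1/2}$ throughout: LHS becomes $B^{\gamma_1\gamma_2}$ as computed, and RHS $\sigma\#_{1-(1-\gamma_1)(1-\gamma_2)}\rho$ becomes $B^{1-(1-\gamma_1)(1-\gamma_2)}$ — but $1-(1-\gamma_1)(1-\gamma_2)\neq\gamma_1\gamma_2$ in general, contradiction. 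This signals that the correct reading of the identity requires the convention $\sigma\#_{\gamma}\rho$ with $\gamma$ as the weight on $\rho$ and $1-\gamma$ on $\sigma$; re-examining, the nested mean $(\sigma\#_{\gamma_2}\rho)\#_{\gamma_1}\rho$ keeps $\rho$ in the \emph{second} slot both times, so conjugation by $\sigma^{-1/2}$ (not $\rho^{-1/2}$) is the right move, giving $\sigma^{-1/2}(\sigma\#_{\gamma_2}\rho)\sigma^{-1/2}=A^{\gamma_2}$ and then, since the outer mean's first argument is $\sigma\#_{\gamma_2}\rho$ whose $\sigma^{-1/2}$-conjugate is $A^{\gamma_2}$ but whose relation to the outer $\sigma$-slot is broken, one instead conjugates the outer mean by $(\sigma\#_{\gamma_2}\rho)^{-1/2}$. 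I will carry out this bookkeeping carefully using the transformer rule and the spectral identity $(B^{s})^{t}=B^{st}$, arriving at the exponent $1-(1-\gamma_1)(1-\gamma_2)$.

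The cleanest route, and the one I would actually write, bypasses the conjugation juggling: diagonalize simultaneously in the commuting case and then reduce the general case to it. Specifically, the identity $g_{\gamma_1}\circ g_{\gamma_2}=g_{1-(1-\gamma_1)(1-\gamma_2)}$ where $g_{\gamma}(x,y):=x\#_{\gamma}y$-type relations for scalars reads $((y/x)^{\gamma_2})$-weighted composition; for positive reals $a,b$ one checks $(b\#_{\gamma_2}a)\#_{\gamma_1}a = a^{1-(1-\gamma_1)(1-\gamma_2)}b^{(1-\gamma_1)(1-\gamma_2)} = b\#_{1-(1-\gamma_1)(1-\gamma_2)}a$ by a direct exponent computation. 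Then invoke the known fact (Appendix on Kubo-Ando means) that $\sigma\#_{\gamma}\rho$ for $\rho,\sigma>0$ is characterized by $T(\sigma\#_{\gamma}\rho)T^{*}=(T\sigma T^{*})\#_{\gamma}(T\rho T^{*})$ together with its value on commuting pairs — or more directly, use that both sides of \eqref{gamma mean comp} are the operator perspective $\persp{h}(\rho,\sigma)$ for the scalar function $h(t)=(t^{\gamma_2})\#_{\gamma_1}$-composite, which by the scalar computation equals $t^{1-(1-\gamma_1)(1-\gamma_2)}=f_{1-(1-\gamma_1)(1-\gamma_2)}(t)$; the only nontrivial point is that the perspective of a composite built from perspectives is the perspective of the composite scalar function, which follows because $\persp{f}(\rho,\sigma)=\sigma^{1/2}f(\sigma^{-1/2}\rho\sigma^{-1/2})\sigma^{1/2}$ and $f(\sigma^{-1/2}\rho\sigma^{-1/2})$ is an honest function of one self-adjoint operator. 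Finally extend to general PSD $\rho,\sigma$ by \eqref{geom moncont} as above. \textbf{The main obstacle} is purely notational: correctly tracking which argument of the outer mean carries which weight, so that the exponents genuinely compose to $1-(1-\gamma_1)(1-\gamma_2)$ rather than $\gamma_1\gamma_2$ or $\gamma_1+\gamma_2-\gamma_1\gamma_2$ with the wrong roles; the scalar/commuting reduction makes this transparent and I would lead with it.
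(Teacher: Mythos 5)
Your second ("cleanest") route is, once repaired, essentially the paper's argument — commuting/scalar reduction plus a congruence step, followed by the monotone-limit extension via \eqref{geom moncont} — but as written the proposal has two concrete problems. First, your initial conjugation computation is wrong under the paper's convention \eqref{gamma mean1}: since the weight $\gamma$ sits on the second-written argument, $B\#_{\gamma}I=B^{1/2}\bz B^{-1/2}IB^{-1/2}\jz^{\gamma}B^{1/2}=B^{1-\gamma}$, not $B^{\gamma}$. With the correct exponent, the $\rho^{-1/2}$-conjugation route you abandoned actually closes: with $B:=\rho^{-1/2}\sigma\rho^{-1/2}$ one has $\rho^{-1/2}\bz\sigma\#_{\gamma_2}\rho\jz\rho^{-1/2}=B^{1-\gamma_2}$, hence the conjugated left-hand side of \eqref{gamma mean comp} is $\bz B^{1-\gamma_2}\jz^{1-\gamma_1}=B^{(1-\gamma_1)(1-\gamma_2)}$, which is exactly the conjugated right-hand side. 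This is precisely the paper's proof, carried out by inserting the flipped formula \eqref{gamma mean0} twice, so no contradiction arises and no transformer identity is needed; your sentence promising to "carry out this bookkeeping carefully" is a placeholder where the paper has the actual three-line computation.

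Second, the lifting step in your fallback route is under-justified. What you need is $\bz\sigma^{1/2}A^{\gamma_2}\sigma^{1/2}\jz\#_{\gamma_1}\bz\sigma^{1/2}A\sigma^{1/2}\jz=\sigma^{1/2}\bz A^{\gamma_2}\#_{\gamma_1}A\jz\sigma^{1/2}$ with $A=\sigma^{-1/2}\rho\sigma^{-1/2}$, i.e.\ congruence invariance of $\#_{\gamma_1}$ under $T=\sigma^{1/2}$. This is a standard Kubo-Ando property, but it is not stated in the paper's Appendix \ref{sec:KA} (contrary to your claim), and your alternative justification — that "the perspective of a composite built from perspectives is the perspective of the composite scalar function" by functional calculus — does not by itself establish it, because the outer mean is based at $\sigma\#_{\gamma_2}\rho$ rather than at $\sigma$; moving the base is exactly the congruence step. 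Either prove the transformer identity (a short computation using the unitary $(T\sigma T^*)^{-1/2}T\sigma^{1/2}$ for invertible $T$), or, simpler and what the paper does, avoid it entirely by writing both nested means in the flipped form \eqref{gamma mean0} at base $\rho$. Your scalar identity $(b\#_{\gamma_2}a)\#_{\gamma_1}a=b\#_{1-(1-\gamma_1)(1-\gamma_2)}a$ and the extension from $\B(\hil)\pp$ to general PSD operators by monotone decreasing limits via \eqref{geom moncont} are correct and coincide with the paper's treatment.
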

\begin{proof}
First, assume that $\rho,\sigma$ are positive definite. Then the statement follows as
\begin{align}
\bz\sigma\#_{\gamma_2}\rho\jz\#_{\gamma_1}\rho
&=
\rho^{1/2}\bz \rho^{-1/2}\bz\sigma\#_{\gamma_2}\rho\jz\rho^{-1/2}\jz^{1-\gamma_1}\rho^{1/2}
\nn\\
&=
\rho^{1/2}\bz \rho^{-1/2}\bz
\rho^{1/2}\bz \rho^{-1/2}\sigma\rho^{-1/2}\jz^{1-\gamma_2}\rho^{1/2}\jz
\rho^{-1/2}\jz^{1-\gamma_1}\rho^{1/2}\nn\\
&=
\rho^{1/2}\bz \rho^{-1/2}\sigma\rho^{-1/2}\jz^{(1-\gamma_1)(1-\gamma_2)}\rho^{1/2}\nn\\
&=
\sigma\#_{1-(1-\gamma_1)(1-\gamma_2)}\rho.
\label{gamma mean comp proof1}
\end{align}
Next, we consider the general case. 
Since $(\sigma+\ep I)\#_{\gamma_2}(\rho+\ep I)$
converges to $\sigma\#_2\rho$ monotone decreasingly, the continuity of $\#_{\gamma_1}$
under monotone decreasing sequences (see \eqref{geom moncont}) yields
\begin{align*}
\bz\sigma\#_{\gamma_2}\rho\jz\#_{\gamma_1}\rho
&=\lim_{\ep\searrow 0}\bz(\sigma+\ep I)\#_{\gamma_2}(\rho+\ep I)\jz\#_{\gamma_1}(\rho+\ep I)\\
&=\lim_{\ep\searrow 0}(\sigma+\ep I)\#_{1-(1-\gamma_1)(1-\gamma_2)}(\rho+\ep I)\\
&=
\sigma\#_{1-(1-\gamma_1)(1-\gamma_2)}\rho,
\end{align*}
where the second equality is by \eqref{gamma mean comp proof1}, and the last equality is by definition.
\end{proof}

\begin{prop}\label{prop:iteration}
For any quantum relative entropy $D^q$, and any $\gamma_1,\gamma_2\in(0,1)$, 
\begin{align*}
D^{(q,\#_{\gamma_1}),\#_{\gamma_2}}
&=
D^{q,\#_{1-(1-\gamma_1)(1-\gamma_2)}}.
\end{align*}
\end{prop}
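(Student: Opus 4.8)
The plan is to unfold both sides of the claimed identity using the definition of $D^{q,\#_\gamma}$ and then reduce everything to the composition rule for Kubo-Ando geometric means established in Lemma \ref{lemma:gamma mean comp}. First I would fix $\rho,\sigma\in\B(\hil)\pne$ and $\gamma_1,\gamma_2\in(0,1)$, and write $\gamma:=1-(1-\gamma_1)(1-\gamma_2)$, noting that $\gamma\in(0,1)$ as well, so that $D^{q,\#_{\gamma}}$ is defined. By definition, $D^{(q,\#_{\gamma_1}),\#_{\gamma_2}}(\rho\|\sigma)=\frac{1}{1-\gamma_2}D^{q,\#_{\gamma_1}}(\rho\|\sigma\#_{\gamma_2}\rho)$, and then applying the definition of $D^{q,\#_{\gamma_1}}$ once more to the pair $(\rho,\sigma\#_{\gamma_2}\rho)$ gives $\frac{1}{(1-\gamma_1)(1-\gamma_2)}D^{q}\bz\rho\,\|\,(\sigma\#_{\gamma_2}\rho)\#_{\gamma_1}\rho\jz$. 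By Lemma \ref{lemma:gamma mean comp}, the second argument equals $\sigma\#_{\gamma}\rho$, and since $\frac{1}{(1-\gamma_1)(1-\gamma_2)}=\frac{1}{1-\gamma}$, this is precisely $D^{q,\#_{\gamma}}(\rho\|\sigma)$.

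The only genuine subtlety is bookkeeping around the degenerate cases in the definition of $D^{q,\#_\gamma}$, namely the convention $D^{q,\#_\gamma}(\rho\|\sigma):=D^q(\rho\|\sigma)$ when $\gamma=0$ (which does not arise here since all the weights involved lie in $(0,1)$) and the behaviour when the inner geometric mean vanishes. If $\sigma\#_{\gamma_2}\rho=0$, then on the left-hand side $D^{q,\#_{\gamma_1}}(\rho\|0)=\frac{1}{1-\gamma_1}D^q(\rho\|0)$, which is $0$ if $\rho=0$ and $+\infty$ if $\rho\neq 0$ by Remark \ref{rem:zero argument q}; on the right-hand side, Lemma \ref{lemma:gamma mean comp} still gives $\sigma\#_{\gamma}\rho=(\sigma\#_{\gamma_2}\rho)\#_{\gamma_1}\rho=0\#_{\gamma_1}\rho=0$ (one checks $0\#_{\gamma_1}\rho=0$ directly from the perspective-function definition, or notes that $(\sigma\#_{\gamma}\rho)^0=\rho^0\wedge\sigma^0$ so $\sigma\#_\gamma\rho=0\iff\rho^0\wedge\sigma^0=0\iff\sigma\#_{\gamma_2}\rho=0$), so $D^{q,\#_{\gamma}}(\rho\|\sigma)=\frac{1}{1-\gamma}D^q(\rho\|0)$ takes the same value. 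Thus the identity holds on the nose in all cases, and the factor $\frac1{1-\gamma_2}\cdot\frac1{1-\gamma_1}=\frac1{1-\gamma}$ matches the degenerate values as well ($0$ maps to $0$, $+\infty$ to $+\infty$).

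The main obstacle, such as it is, is simply to make sure Lemma \ref{lemma:gamma mean comp} is applied to the correct pair of operators and in the correct order: the lemma reads $\bz\sigma\#_{\gamma_2}\rho\jz\#_{\gamma_1}\rho=\sigma\#_{1-(1-\gamma_1)(1-\gamma_2)}\rho$, so I must verify that the iterated construction $D^{(q,\#_{\gamma_1}),\#_{\gamma_2}}$ indeed places $\sigma\#_{\gamma_2}\rho$ (the outer mean, weight $\gamma_2$) into the second slot first, and then takes $\#_{\gamma_1}$ of that against $\rho$. This is exactly what the nested definitions produce, since in forming $(D^{q})^{\#_{\gamma_1}}$ at the pair $(\rho,\tau)$ one uses $\tau\#_{\gamma_1}\rho$, and here $\tau=\sigma\#_{\gamma_2}\rho$. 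Hence no reindexing trap arises, and the proof is a two-line computation once the definitions and Lemma \ref{lemma:gamma mean comp} are invoked.

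\begin{proof}
Fix $\rho,\sigma\in\B(\hil)\pne$ and $\gamma_1,\gamma_2\in(0,1)$, and let
$\gamma:=1-(1-\gamma_1)(1-\gamma_2)\in(0,1)$. Unfolding the definition twice,
\begin{align*}
D^{(q,\#_{\gamma_1}),\#_{\gamma_2}}(\rho\|\sigma)
&=
\frac{1}{1-\gamma_2}D^{q,\#_{\gamma_1}}\bz\rho\,\big\|\,\sigma\#_{\gamma_2}\rho\jz
=
\frac{1}{(1-\gamma_1)(1-\gamma_2)}D^{q}\bz\rho\,\big\|\,\bz\sigma\#_{\gamma_2}\rho\jz\#_{\gamma_1}\rho\jz\\
&=
\frac{1}{1-\gamma}D^{q}\bz\rho\,\big\|\,\sigma\#_{\gamma}\rho\jz
=
D^{q,\#_{\gamma}}(\rho\|\sigma),
\end{align*}
where the third equality uses Lemma \ref{lemma:gamma mean comp} together with the identity
$(1-\gamma_1)(1-\gamma_2)=1-\gamma$. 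When $\sigma\#_{\gamma_2}\rho=0$ the same chain of equalities remains valid: by Lemma \ref{lemma:gamma mean comp} (or directly, since $(\sigma\#_{\gamma}\rho)^0=\rho^0\wedge\sigma^0=(\sigma\#_{\gamma_2}\rho)^0=0$) we also have $\sigma\#_{\gamma}\rho=0$, and both sides equal $\frac{1}{1-\gamma}D^q(\rho\|0)$, which is $0$ for $\rho=0$ and $+\infty$ for $\rho\ne 0$ by Remark \ref{rem:zero argument q}. Since $\rho$ and $\sigma$ were arbitrary, the claimed identity of quantum relative entropies follows.
\end{proof}
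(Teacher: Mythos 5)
Your proof is correct and follows essentially the same route as the paper's: unfold the definition of the $\gamma$-weighted geometric relative entropy twice and apply Lemma \ref{lemma:gamma mean comp}, matching the prefactors via $(1-\gamma_1)(1-\gamma_2)=1-\gamma$. The extra paragraph on the degenerate case $\sigma\#_{\gamma_2}\rho=0$ is a harmless (and correct) bit of bookkeeping that the paper leaves implicit in its conventions.
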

\begin{proof}
For any $\rho,\sigma\in\B(\hil)\p$,
\begin{align*}
D^{(q,\#_{\gamma_1}),\#_{\gamma_2}}(\rho\|\sigma)
&=
\frac{1}{1-\gamma_2}D^{q,\#_{\gamma_1}}(\rho\|\sigma\#_{\gamma_2}\rho)\\
&=
\frac{1}{1-\gamma_2}\cdot\frac{1}{1-\gamma_1}
D^{q}(\rho\|(\sigma\#_{\gamma_2}\rho)\#_{\gamma_1}\rho)\\
&=
\frac{1}{1-\gamma_2}\cdot\frac{1}{1-\gamma_1}
D^{q}(\rho\|(\sigma\#_{1-(1-\gamma_1)(1-\gamma_2)}\rho)\\
&=
D^{q,\#_{1-(1-\gamma_1)(1-\gamma_2)}}(\rho\|\sigma),
\end{align*}
where all equalities apart from the third one are by definition, and the third equality 
is due to Lemma \ref{lemma:gamma mean comp}.
\end{proof}
\medskip

Next, we study the $\gamma$-weighted geometric relative entropies corresponding to the 
largest and the smallest additive and CPTP-monotone quantum relative entropies, 
i.e., to $D^{\max}$ and $\DU$. The first case turns out to be trivial, while 
in the second case we get that the $\gamma$-weighted geometric Umegaki relative entropies essentially 
interpolate between $\DU$ and $D^{\max}$ in a monotone increasing way.

\begin{prop}\label{prop:Dmax stable}
For any $\gamma\in(0,1)$, 
\begin{align*}
D^{\max,\#_{\gamma}}=D^{\max}.
\end{align*}
\end{prop}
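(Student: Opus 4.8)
The claim is that the $\gamma$-weighted geometric Belavkin--Staszewski relative entropy coincides with $D^{\max}$ itself. By definition, for $\rho,\sigma\in\B(\hil)\pne$ with $\gamma\in(0,1)$,
\[
D^{\max,\#_{\gamma}}(\rho\|\sigma)=\frac{1}{1-\gamma}D^{\max}(\rho\|\sigma\#_{\gamma}\rho),
\]
so everything reduces to computing $D^{\max}(\rho\|\sigma\#_{\gamma}\rho)$ using the closed formula \eqref{BS def2}: $D^{\max}(\rho\|\tau)=\Tr\rho\nlog(\rho^{1/2}\tau\inv\rho^{1/2})$ when $\rho^0\le\tau^0$, and $+\infty$ otherwise. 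The first thing I would check is the support condition. By \eqref{wmean support}, $(\sigma\#_{\gamma}\rho)^0=\rho^0\wedge\sigma^0$, so $\rho^0\le(\sigma\#_{\gamma}\rho)^0$ holds iff $\rho^0\le\sigma^0$; and by the support property of $D^{\max}$ (property \ref{gamma weighted support} with $q=\max$), $D^{\max}(\rho\|\sigma)<+\infty$ iff $\rho^0\le\sigma^0$ as well. So the two sides are simultaneously $+\infty$, and it remains to treat the finite case $\rho^0\le\sigma^0$.

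**The computation.** Restrict to the support of $\sigma$, so that we may assume $\rho,\sigma\in\B(\hil)\pp$ are positive definite (the general finite case follows by restricting to $\sigma^0\hil$, using isometric invariance of both divergences and of $\#_\gamma$). Using \eqref{gamma mean3}, $\sigma\#_{\gamma}\rho=\rho^{1/2}(\rho^{-1/2}\sigma\rho^{-1/2})^{1-\gamma}\rho^{1/2}$, hence
\[
(\sigma\#_{\gamma}\rho)\inv=\rho^{-1/2}\bz\rho^{-1/2}\sigma\rho^{-1/2}\jz^{-(1-\gamma)}\rho^{-1/2},
\]
and therefore
\[
\rho^{1/2}(\sigma\#_{\gamma}\rho)\inv\rho^{1/2}
=\bz\rho^{-1/2}\sigma\rho^{-1/2}\jz^{-(1-\gamma)}
=\bz\bz\rho^{-1/2}\sigma\rho^{-1/2}\jz\inv\jz^{1-\gamma}
=\bz\rho^{1/2}\sigma\inv\rho^{1/2}\jz^{1-\gamma},
\]
where the last step uses that $(\rho^{-1/2}\sigma\rho^{-1/2})\inv=\rho^{1/2}\sigma\inv\rho^{1/2}$ for positive definite operators. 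Plugging into \eqref{BS def2} and using $\nlog(X^{1-\gamma})=(1-\gamma)\nlog X$ for positive definite $X$,
\[
D^{\max}(\rho\|\sigma\#_{\gamma}\rho)
=\Tr\rho\nlog\bz\bz\rho^{1/2}\sigma\inv\rho^{1/2}\jz^{1-\gamma}\jz
=(1-\gamma)\Tr\rho\nlog\bz\rho^{1/2}\sigma\inv\rho^{1/2}\jz
=(1-\gamma)D^{\max}(\rho\|\sigma).
\]
Dividing by $1-\gamma$ gives $D^{\max,\#_{\gamma}}(\rho\|\sigma)=D^{\max}(\rho\|\sigma)$, completing the finite case.

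**Remarks on pitfalls.** The computation is essentially routine once one has the closed form \eqref{BS def2} and \eqref{gamma mean3}; the only place to be careful is the reduction from general non-zero PSD operators to positive definite ones. I would handle this by first disposing of the $\rho^0\not\le\sigma^0$ case via the support argument above, then, when $\rho^0\le\sigma^0$, compressing everything to $\kil:=\sigma^0\hil$: on $\kil$, the restriction of $\sigma$ is positive definite, $\rho$ lives inside $\B(\kil)\p$, and $\#_\gamma$, $D^{\max}$ all commute with this compression (isometric invariance, plus $(\sigma\#_\gamma\rho)$ being supported in $\kil$). I do not even need $\rho$ itself to be positive definite for the algebra above — $\rho^{-1/2}$ means the generalized inverse on $\rho^0\hil$ and one works on $\rho^0\hil$ throughout — but it is cleanest to note that both sides are anti-monotone and regular in $\sigma$ (Remark \ref{rem:relentr lsc} and property \ref{item:anti-monotone}), so the identity for $\sigma+\ep I$ in place of $\sigma$ passes to the limit $\ep\searrow 0$; one may thus assume $\sigma$ positive definite from the start and only worry about $\rho$, whose support issues are harmless. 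I would expect the write-up to be short: state the support dichotomy, do the $\#_\gamma$-inverse identity, invoke $\nlog(X^{1-\gamma})=(1-\gamma)\nlog X$, done.
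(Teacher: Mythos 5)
Your support dichotomy and your computation in the invertible case coincide with the paper's own proof: the paper likewise disposes of $\rho^0\nleq\sigma^0$ via \eqref{wmean support} and the support property of $D^{\max}$, and for invertible $\rho,\sigma$ performs exactly the manipulation $\rho^{1/2}(\sigma\#_{\gamma}\rho)\inv\rho^{1/2}=\bz\rho^{1/2}\sigma\inv\rho^{1/2}\jz^{1-\gamma}$ followed by pulling the exponent out of the logarithm. The genuine gap is in your reduction of the remaining case, $\rho^0\le\sigma^0$ with $\rho$ singular (compressing to $\sigma^0\hil$ only makes $\sigma$ invertible, it does nothing for $\rho$).

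Your claim that the algebra survives with $\rho^{-1/2}$ read as a generalized inverse is false: when $\rho^0\lneq\sigma^0$ the correct formula is \eqref{KA explicit2}, $\sigma\#_{\gamma}\rho=\rho^{1/2}\bz\rho^{-1/2}\sigma_{\rho,\ac}\rho^{-1/2}\jz^{1-\gamma}\rho^{1/2}$, and $\rho^0\sigma\rho^0\ne\sigma_{\rho,\ac}$ in general. Concretely, for $\rho=\pr{\psi}$ with $\psi$ a unit vector and $\sigma$ positive definite, Remark \ref{rem:DU Dmax interpolation} gives $\sigma\#_{\gamma}\pr{\psi}=\pr{\psi}\inner{\psi}{\sigma\inv\psi}^{\gamma-1}$, whereas your generalized-inverse formula would give $\pr{\psi}\inner{\psi}{\sigma\psi}^{1-\gamma}$; these differ unless $\psi$ is an eigenvector of $\sigma$, and feeding the wrong mean into \eqref{BS def2} yields $-(1-\gamma)\log\inner{\psi}{\sigma\psi}$ instead of the correct value $(1-\gamma)\log\inner{\psi}{\sigma\inv\psi}=(1-\gamma)D^{\max}(\pr{\psi}\|\sigma)$. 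Your fallback of regularizing only $\sigma$ does not close this either, because $\rho$ sits in both slots of $D^{\max}(\rho\|\sigma\#_{\gamma}\rho)$: anti-monotonicity and regularity in the second argument let you send $\sigma+\ep I\searrow\sigma$, but they do not justify replacing the singular $\rho$ inside the geometric mean by $\rho+\ep I$ while keeping $\rho$ fixed in the first argument. That interchange is precisely what the paper invokes to finish: by Lemma \ref{lemma:Dq gamma as limit} (see Example \ref{ex:Dmax gamma sregular}), $\frac{1}{1-\gamma}D^{\max}(\rho\|(\sigma+\ep I)\#_{\gamma}(\rho+\ep I))\nearrow D^{\max,\#_{\gamma}}(\rho\|\sigma)$ as $\ep\searrow 0$, after which the invertible computation and an elementary limit give $\Tr\rho\log(\rho^{1/2}\sigma\inv\rho^{1/2})=D^{\max}(\rho\|\sigma)$. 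Alternatively you could stay purely algebraic by combining \eqref{KA explicit2} with the Schur-complement identity $\rho^0\sigma\inv\rho^0=(\sigma_{\rho,\ac})\inv$, so that $\rho^{1/2}\sigma\inv\rho^{1/2}=\rho^{1/2}\sigma_{\rho,\ac}\inv\rho^{1/2}$, but as written your treatment of singular $\rho$ does not work.
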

\begin{proof}
We need to show that for any $\rho,\sigma\in\B(\hil)\pne$, 
\begin{align*}
\frac{1}{1-\gamma}D^{\max}(\rho\|\sigma\#_{\gamma}\rho)=
D^{\max}(\rho\|\sigma).
\end{align*}
If $\rho^0\nleq\sigma^0$ then both sides are $+\infty$ (see \ref{gamma weighted support} above), and hence for the rest we assume that $\rho^0\le\sigma^0$.

Assume first that $\rho$ and $\sigma$ are invertible. Then
\begin{align}
\frac{1}{1-\gamma}D^{\max}(\rho\|\sigma\#_{\gamma}\rho)
&=
\frac{1}{1-\gamma}\Tr\rho\log\bz\rho^{1/2}(\sigma\#_{\gamma}\rho)\inv\rho^{1/2}\jz\nn\\
&=
\frac{1}{1-\gamma}\Tr\rho\log\bz\rho^{1/2}
(\rho^{-1/2}(\rho^{1/2}\sigma\inv\rho^{1/2})^{1-\gamma}\rho^{-1/2})\rho^{1/2}\jz\nn\\
&=
\frac{1}{1-\gamma}\Tr\rho\log(\rho^{1/2}\sigma\inv\rho^{1/2})^{1-\gamma}\nn\\
&=
\Tr\rho\log(\rho^{1/2}\sigma\inv\rho^{1/2})
=
D^{\max}(\rho\|\sigma),
\label{Dmax gamma proof1}
\end{align}
where the first equality is due to \eqref{BS def2}, 
the second equality follows from \eqref{gamma mean3}, and the rest are obvious.
If we only assume that $\rho^0\le\sigma^0$ then we have
\begin{align*}
D^{\max,\#_{\gamma}}(\rho\|\sigma)
&=
\lim_{\ep\searrow 0}
\frac{1}{1-\gamma}D^{\max}(\rho\|(\sigma+\ep I)\#_{\gamma}(\rho+\ep I))\\
&=
\lim_{\ep\searrow 0}\Tr\rho\log\bz(\rho+\ep I)^{1/2}(\sigma+\ep I)\inv(\rho+\ep I)^{1/2}\jz\\
&=
\Tr\rho\log(\rho^{1/2}\sigma\inv\rho^{1/2})
=
D^{\max}(\rho\|\sigma),
\end{align*}
where the first equality was stated in Example \ref{ex:Dmax gamma sregular}, the second equality follows
as in
\eqref{Dmax gamma proof1}, the third equality is easy to verify, and the last equality 
follows by \eqref{BS def2}.
\end{proof}

\begin{lemma}\label{lemma:monotone in gamma}
Assume that $D^q$ is a quantum relative entropy such that 
$D^q\le D^{q,\#_{\gamma}}$ for every $\gamma\in(0,1)$. Then 
$(0,1)\ni\gamma\mapsto D^{q,\#_{\gamma}}$ is monotone increasing. 
\end{lemma}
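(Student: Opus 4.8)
The key observation is that Proposition \ref{prop:iteration} lets us write any $D^{q,\#_{\gamma_2}}$ with $\gamma_2>\gamma_1$ as an iterated weighted geometric relative entropy built on top of $D^{q,\#_{\gamma_1}}$. Concretely, given $0<\gamma_1<\gamma_2<1$, I would solve $1-(1-\gamma_1)(1-\delta)=\gamma_2$ for $\delta$, obtaining
\begin{align*}
\delta=\frac{\gamma_2-\gamma_1}{1-\gamma_1}\in(0,1),
\end{align*}
where $\delta\in(0,1)$ precisely because $\gamma_1<\gamma_2<1$. Then Proposition \ref{prop:iteration} gives
\begin{align*}
D^{q,\#_{\gamma_2}}=D^{q,\#_{1-(1-\gamma_1)(1-\delta)}}=D^{(q,\#_{\gamma_1}),\#_{\delta}}.
\end{align*}

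Now set $q':=q,\#_{\gamma_1}$, so $D^{q'}=D^{q,\#_{\gamma_1}}$ is itself a quantum relative entropy (as remarked right after the list of inherited properties in Section \ref{sec:gamma relentr}). The hypothesis of the lemma, applied in the form $D^q\le D^{q,\#_{\delta}}$ for \emph{every} $\delta\in(0,1)$, is a statement about $D^q$; but the hypothesis is stated to hold for the given relative entropy, and I need the analogous inequality $D^{q'}\le D^{q',\#_{\delta}}$. The cleanest route is to observe that the monotonicity $D^{q_1}\le D^{q_2}\imp D^{q_1,\#_{\gamma}}\le D^{q_2,\#_{\gamma}}$ (item \ref{weighted ordering} in the list, equation \eqref{geom ordering1}) combined with $D^q\le D^{q,\#_{\gamma_1}}$ gives, for any $\delta\in(0,1)$,
\begin{align*}
D^{q,\#_{\delta}}\le D^{(q,\#_{\gamma_1}),\#_{\delta}}=D^{q',\#_{\delta}}.
\end{align*}
Chaining this with the hypothesis $D^q\le D^{q,\#_{\delta}}$ yields $D^q\le D^{q',\#_{\delta}}$, but that is not quite $D^{q'}\le D^{q',\#_{\delta}}$; so instead I would argue directly: apply \eqref{geom ordering1} to the hypothesis inequality $D^q\le D^{q,\#_{\delta'}}$ with $\gamma$-weighting $\gamma_1$ to get $D^{q,\#_{\gamma_1}}\le D^{(q,\#_{\delta'}),\#_{\gamma_1}}$, and use Proposition \ref{prop:iteration} (commuting the two weights, since $1-(1-\gamma_1)(1-\delta')=1-(1-\delta')(1-\gamma_1)$) to rewrite the right-hand side as $D^{(q,\#_{\gamma_1}),\#_{\delta'}}=D^{q',\#_{\delta'}}$. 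This gives exactly $D^{q'}\le D^{q',\#_{\delta'}}$ for all $\delta'\in(0,1)$.

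With that in hand, apply the hypothesis-form inequality to $q'$ at parameter $\delta$:
\begin{align*}
D^{q,\#_{\gamma_1}}=D^{q'}\le D^{q',\#_{\delta}}=D^{(q,\#_{\gamma_1}),\#_{\delta}}=D^{q,\#_{\gamma_2}},
\end{align*}
which is the desired monotonicity. The only mild subtlety — the "main obstacle," such as it is — is making sure that the hypothesis "$D^q\le D^{q,\#_\gamma}$ for every $\gamma\in(0,1)$" transfers from $q$ to the derived relative entropy $q'=q,\#_{\gamma_1}$; this is handled by the computation in the previous paragraph, using \eqref{geom ordering1} and the commutativity of weight composition in Proposition \ref{prop:iteration}. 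Everything else is the arithmetic $1-(1-\gamma_1)(1-\delta)=\gamma_2$ and bookkeeping.
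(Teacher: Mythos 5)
Your argument is correct, and it rests on the same two ingredients as the paper's own proof: the reparametrization $\delta=\frac{\gamma_2-\gamma_1}{1-\gamma_1}$ and the composition law for the weighted geometric means (Lemma \ref{lemma:gamma mean comp}, which you access through Proposition \ref{prop:iteration}). The difference is organizational. The paper works pointwise: it writes $D^{q,\#_{\gamma_1}}(\rho\|\sigma)=\frac{1}{1-\gamma_1}D^q(\rho\|\sigma\#_{\gamma_1}\rho)$, applies the hypothesis $D^q\le D^{q,\#_{\delta}}$ at the pair $(\rho,\sigma\#_{\gamma_1}\rho)$, and then collapses $(\sigma\#_{\gamma_1}\rho)\#_{\delta}\rho=\sigma\#_{\gamma_2}\rho$; no transfer of the hypothesis to the derived relative entropy $q'=(q,\#_{\gamma_1})$ is ever needed. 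You instead work at the level of the operations on relative entropies, applying the order preservation \eqref{geom ordering1} to the hypothesis inequality and commuting the weights via Proposition \ref{prop:iteration}. Note that once you have $D^{q,\#_{\gamma_1}}\le D^{(q,\#_{\delta}),\#_{\gamma_1}}=D^{q,\#_{1-(1-\delta)(1-\gamma_1)}}=D^{q,\#_{\gamma_2}}$ (your transfer computation instantiated at $\delta'=\delta$), you are already done, so the detour through ``the hypothesis holds for $q'$'' and its re-application in the final step is redundant; indeed, unwinding that single application of \eqref{geom ordering1} reproduces exactly the paper's pointwise chain of (in)equalities. So the proposal is correct and every lemma it invokes is established in the paper, it is just a touch more roundabout than the direct computation, and the transfer issue you flag as the main obstacle disappears once the argument is streamlined in this way.
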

\begin{proof}
Let $0<\gamma_1\le\gamma_2<1$, and let $\gamma:=\frac{\gamma_2-\gamma_1}{1-\gamma_1}$,
so that $1-\gamma=\frac{1-\gamma_2}{1-\gamma_1}$. For any $\rho,\sigma\in\B(\hil)\p$, 
\begin{align*}
D^{q,\#_{\gamma_1}}(\rho\|\sigma)
&=
\frac{1}{1-\gamma_1}D^q(\rho\|\sigma\#_{\gamma_1}\rho)\\
&\le
\frac{1}{1-\gamma_1}D^{q,\#_{\gamma}}(\rho\|\sigma\#_{\gamma_1}\rho)\\
&=
\frac{1}{1-\gamma_1}\frac{1}{1-\gamma}D^q(\rho\|(\sigma\#_{\gamma_1}\rho)\#_{\gamma}\rho)\\
&=
\frac{1}{1-\gamma_1}\frac{1}{1-\gamma}D^q(\rho\|(\sigma\#_{\gamma_2}\rho)\\
&=
\frac{1}{1-\gamma_2}D^q(\rho\|(\sigma\#_{\gamma_2}\rho)\\
&=
D^{q,\#_{\gamma_2}}(\rho\|\sigma),
\end{align*}
where the first, the second, and the last equalities are by definition, 
the inequality is by assumption, 
while the third 
and the fourth equalities follow from Lemma \ref{lemma:gamma mean comp} and the definition of $\gamma$.
\end{proof}

\begin{prop}\label{prop:DU Dmax interpolation}
Let $\rho,\sigma\in\B(\hil)\pne$. Then 
\begin{align}\label{DU gamma mon}
(0,1)\ni\gamma\mapsto D^{\Um,\#_{\gamma}}(\rho\|\sigma)\ds\ds\text{is monotone increasing,}
\end{align}
and if $\rho,\sigma$ are positive definite then 
\begin{align}
&D^{\Um,\#_{\gamma}}(\rho\|\sigma)\searrow \DU(\rho\|\sigma),
\ds\ds\ds\text{as}\ds \gamma\searrow 0,\label{DU gamma limit0}\\
&D^{\Um,\#_{\gamma}}(\rho\|\sigma)\nearrow D^{\max}(\rho\|\sigma),
\ds\ds\ds\text{as}\ds \gamma\nearrow 1.\label{DU gamma limit1}
\end{align}
\end{prop}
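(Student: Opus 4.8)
The monotonicity claim \eqref{DU gamma mon} follows immediately from Lemma \ref{lemma:monotone in gamma} once we know $\DU \le D^{\Um,\#_{\gamma}}$ for every $\gamma \in (0,1)$, and this is exactly the content of \eqref{DU gamma larger} (valid because $\DU$ is monotone under CPTP maps and anti-monotone in its second argument, so item \ref{item:CPTP-mon} applies). So the work is in the two limit statements, where we may now assume $\rho, \sigma$ positive definite. In that regime $\sigma \#_\gamma \rho = \rho^{1/2}(\rho^{-1/2}\sigma\rho^{-1/2})^{1-\gamma}\rho^{1/2}$ is invertible, and by \eqref{BS def2},
\begin{align*}
D^{\Um,\#_{\gamma}}(\rho\|\sigma)
= \frac{1}{1-\gamma}\DU\bz\rho\,\big\|\,\sigma\#_{\gamma}\rho\jz
= \frac{1}{1-\gamma}\Tr\rho\bz\log\rho - \log(\sigma\#_{\gamma}\rho)\jz.
\end{align*}
Using $\log(\sigma\#_\gamma\rho) = \frac12\log\rho + (1-\gamma)\log(\rho^{-1/2}\sigma\rho^{-1/2})\rho^{1/2}$... more carefully, it is cleaner to substitute $X := \rho^{-1/2}\sigma\rho^{-1/2} > 0$ and write $\sigma\#_\gamma\rho = \rho^{1/2}X^{1-\gamma}\rho^{1/2}$, so that
\begin{align*}
(1-\gamma)D^{\Um,\#_{\gamma}}(\rho\|\sigma)
= \Tr\rho\log\rho - \Tr\rho\log\bz\rho^{1/2}X^{1-\gamma}\rho^{1/2}\jz.
\end{align*}

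For the $\gamma \searrow 0$ limit \eqref{DU gamma limit0}: as $\gamma \to 0$, $X^{1-\gamma} \to X$ in norm (continuity of functional calculus on the fixed compact spectrum of $X$), hence $\rho^{1/2}X^{1-\gamma}\rho^{1/2} \to \rho^{1/2}X\rho^{1/2} = \sigma$, and the logarithm is continuous on the positive definite cone, so the right-hand side tends to $\Tr\rho\log\rho - \Tr\rho\log\sigma = \DU(\rho\|\sigma)$; dividing by $1-\gamma \to 1$ gives the claim. Monotonicity of the convergence is already guaranteed by \eqref{DU gamma mon}. For the $\gamma \nearrow 1$ limit \eqref{DU gamma limit1}: here $1-\gamma \to 0$, and $X^{1-\gamma} \to X^0 = I$ (since $X$ invertible), so the leading term $\Tr\rho\log\rho - \Tr\rho\log(\rho^{1/2}X^{1-\gamma}\rho^{1/2})$ itself tends to $\Tr\rho\log\rho - \Tr\rho\log\rho = 0$, and we have a $0/0$ situation. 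The right move is to write $X^{1-\gamma} = \exp((1-\gamma)\log X)$ and expand: $\log(\rho^{1/2}X^{1-\gamma}\rho^{1/2}) = \log\rho + (1-\gamma)\,(Df)[\,\cdot\,]$-type correction, or more robustly, apply L'Hôpital / differentiate at $\gamma = 1$. Concretely, set $g(\gamma) := \Tr\rho\log\rho - \Tr\rho\log(\rho^{1/2}X^{1-\gamma}\rho^{1/2})$; then $g(1) = 0$ and the limit equals $-g'(1) = \frac{d}{ds}\big|_{s=0}\Tr\rho\log(\rho^{1/2}X^{s}\rho^{1/2})$. Using the Fréchet derivative formula \eqref{opfunction derivative} for $\log$ at $\rho$ (i.e., $\frac{d}{ds}|_{s=0}\log(\rho + sA) = (D\log)[\rho](A)$) together with $\frac{d}{ds}|_{s=0}\rho^{1/2}X^s\rho^{1/2} = \rho^{1/2}(\log X)\rho^{1/2} = \rho^{1/2}(\log(\rho^{-1/2}\sigma\rho^{-1/2}))\rho^{1/2}$, and the cyclicity identity $\Tr\rho\,(D\log)[\rho](A) = \Tr A$ (valid because $\sum_{ij} (\log)^{[1]}(a_i,a_j)\,a_i$ summed against $P_i A P_j$ collapses by symmetry — this is the standard fact $\Tr\rho\,(D\log)[\rho](A)=\Tr(\rho^{-1}\rho)A$-style computation), one gets $-g'(1) = \Tr\rho^{1/2}(\log(\rho^{-1/2}\sigma\rho^{-1/2}))\rho^{1/2} \cdot$... actually $= \Tr\rho\,(D\log)[\rho]\bz\rho^{1/2}(\log X)\rho^{1/2}\jz = \Tr\rho^{1/2}(\log X)\rho^{1/2} = \Tr\rho\log(\rho^{-1/2}\sigma\rho^{-1/2})$. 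But $-D^{\max}(\rho\|\sigma) = -\Tr\rho\log(\rho^{1/2}\sigma^{-1}\rho^{1/2}) = \Tr\rho\log(\rho^{-1/2}\sigma\rho^{-1/2})$ by \eqref{BS def2} and a unitary-similarity identity relating $\rho^{1/2}\sigma^{-1}\rho^{1/2}$ and $(\rho^{-1/2}\sigma\rho^{-1/2})^{-1}$ (they have the same trace against the appropriate weight — or simply note $\Tr\rho\log(\rho^{1/2}\sigma^{-1}\rho^{1/2}) = -\Tr\rho\log(\rho^{-1/2}\sigma\rho^{-1/2})$ via the general fact that for positive definite $A,B$, $\Tr A\log(A^{1/2}B^{-1}A^{1/2}) + \Tr A\log(A^{-1/2}BA^{-1/2}) = 0$, which follows from $\log(A^{1/2}B^{-1}A^{1/2}) = -A^{1/2}(\text{stuff})$... this needs a short lemma). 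Hence the limit is $\frac{1}{1-\gamma}\big|_{\text{formally}} \cdot$ — no: $D^{\Um,\#_\gamma} = g(\gamma)/(1-\gamma)$, $g(1)=0$, so $\lim_{\gamma\to1} g(\gamma)/(1-\gamma) = \lim (g(1)-g(\gamma))/(1-\gamma) = g'(1)\cdot(-1)^{-1}$... $= -g'(1) = D^{\max}(\rho\|\sigma)$, giving \eqref{DU gamma limit1}.

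The main obstacle is the $\gamma \nearrow 1$ limit: it is a genuine $0/0$ indeterminacy requiring a first-order expansion of an operator function, and one must handle the derivative of $\gamma \mapsto \log(\rho^{1/2}X^{1-\gamma}\rho^{1/2})$ carefully — differentiating inside the trace is justified by smoothness on the compact spectrum, and the key simplification is the trace identity $\Tr\rho\,(D\log)[\rho](Y) = \Tr Y$ together with a small lemma identifying $-\Tr\rho\log(\rho^{-1/2}\sigma\rho^{-1/2})$ with the Belavkin–Staszewski expression $\Tr\rho\log(\rho^{1/2}\sigma^{-1}\rho^{1/2})$. An alternative, possibly cleaner route avoiding divided-difference bookkeeping: invoke monotonicity \eqref{DU gamma mon} so that $\lim_{\gamma\nearrow1} D^{\Um,\#_\gamma}(\rho\|\sigma) = \sup_{\gamma} D^{\Um,\#_\gamma}(\rho\|\sigma)$ exists, note it is $\le D^{\max}(\rho\|\sigma)$ by \eqref{DU gamma larger}, and for the reverse inequality use the integral representation or the fact that $\frac{1}{1-\gamma}(\log\rho - \log(\sigma\#_\gamma\rho))$ is, after the substitution, exactly $\frac{1}{1-\gamma}$ times a quantity whose $\gamma\to1$ behavior is pinned down by the concavity/operator-convexity of $t\mapsto t^{1-\gamma}$ — but the derivative computation is the most transparent, so I would carry that out.
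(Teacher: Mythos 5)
Your proposal follows essentially the same route as the paper: the monotonicity claim via Lemma \ref{lemma:monotone in gamma} together with $\DU\le D^{\Um,\#_{\gamma}}$ is exactly the paper's argument, the $\gamma\searrow 0$ limit is handled identically by continuity of functional calculus, and the $\gamma\nearrow 1$ limit is recognized in both as the derivative $-\frac{d}{d\gamma}\DU(\rho\|\sigma\#_{\gamma}\rho)\big\vert_{\gamma=1}$ computed through the Fr\'echet derivative formula \eqref{opfunction derivative}. The only real difference is the parametrization: the paper writes $\sigma\#_{\gamma}\rho=\sigma^{1/2}(\sigma^{-1/2}\rho\sigma^{-1/2})^{\gamma}\sigma^{1/2}$ and works with the spectral projections of $\rho$, while you use $\sigma\#_{\gamma}\rho=\rho^{1/2}X^{1-\gamma}\rho^{1/2}$ with $X=\rho^{-1/2}\sigma\rho^{-1/2}$, which lets the identity $\Tr\rho\,(D\log)[\rho](A)=\Tr A$ collapse the divided-difference sum a bit more cleanly; and the ``short lemma'' you were unsure about is immediate, since $(\rho^{-1/2}\sigma\rho^{-1/2})^{-1}=\rho^{1/2}\sigma^{-1}\rho^{1/2}$ gives $\log(\rho^{1/2}\sigma^{-1}\rho^{1/2})=-\log(\rho^{-1/2}\sigma\rho^{-1/2})$ as an operator identity.

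One bookkeeping slip to fix: with $u(s):=\Tr\rho\log(\rho^{1/2}X^{s}\rho^{1/2})$ and $g(\gamma)=\Tr\rho\log\rho-u(1-\gamma)$, the chain rule gives $g'(1)=u'(0)$, so the limit is $-g'(1)=-u'(0)=-\Tr\rho\log X=\Tr\rho\log\bz\rho^{1/2}\sigma^{-1}\rho^{1/2}\jz=D^{\max}(\rho\|\sigma)$. You instead wrote $-g'(1)=+u'(0)$, computed $u'(0)=\Tr\rho\log(\rho^{-1/2}\sigma\rho^{-1/2})=-D^{\max}(\rho\|\sigma)$, and then still concluded $-g'(1)=D^{\max}(\rho\|\sigma)$; as written the chain is internally inconsistent (it asserts both $-g'(1)=-D^{\max}$ and $-g'(1)=D^{\max}$). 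The two sign errors cancel in the final statement, so the conclusion and all the key ingredients are right, but the intermediate identity $-g'(1)=\frac{d}{ds}\big\vert_{s=0}u(s)$ should be corrected to $-g'(1)=-\frac{d}{ds}\big\vert_{s=0}u(s)$ before the argument is written up.
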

\begin{proof}
Since $\DU$ is anti-monotone in its second argument and monotone under CPTP maps, so are $D^{\Um,\#_{\gamma}}$, $\gamma\in(0,1)$, according to \ref{item:CPTP-mon} above, and hence, by \eqref{DU gamma larger}, 
$\DU\le D^{\Um,\#_{\gamma}}$, $\gamma\in(0,1)$. By Lemma \ref{lemma:monotone in gamma}, 
$(0,1)\ni \gamma\mapsto D^{\Um,\#_{\gamma}}$ is monotone increasing. 

Let $\rho,\sigma\in\B(\hil)\pp$. Then \eqref{DU gamma limit0} follows simply by the continuity of functional calculus, so we only have to prove \eqref{DU gamma limit1}.
Note that the definition of $\sigma\#_{\gamma}\rho$ in \eqref{gamma mean1},
and hence also the definition of $\DU(\rho\|\sigma\#_{\gamma}\rho)$
make sense for any $\gamma\in\bR$, and both are (infinitely many times) differentiable 
functions of $\gamma$. Moreover, 
$\DU(\rho\|\sigma\#_{1}\rho)=\DU(\rho\|\rho)=0$.
Hence,
\begin{align*}
\lim_{\gamma\nearrow 1}D^{\Um,\#_{\gamma}}(\rho\|\sigma)
&=
-\lim_{\gamma\nearrow 1}\frac{D^{\Um}(\rho\|\sigma\#_{\gamma}\rho)-D^{\Um}(\rho\|\sigma\#_1\rho)}{\gamma-1}\\
&=
-\frac{d}{d\gamma}\DU(\rho\|\sigma\#_{\gamma}\rho)\Big\vert_{\gamma=1}
=
-\frac{d}{d\gamma}\bz\Tr\rho\log\rho-\Tr\rho\log(\sigma\#_{\gamma}\rho)\jz\Big\vert_{\gamma=1}
\\
&=
\Tr\rho\, (D\log)[\sigma\#_{1}\rho]\bz\frac{d}{d\gamma}(\sigma\#_{\gamma}\rho)\Big\vert_{\gamma=1}\jz\\
&=
\sum_{i,j}\log^{[1]}(\lambda_i,\lambda_j)\Tr \rho P_i
\underbrace{\sigma^{1/2}(\sigma^{-1/2}\rho\sigma^{-1/2})}_{
=\rho\sigma^{-1/2}}\log(\sigma^{-1/2}\rho\sigma^{-1/2})\sigma^{1/2}P_j\\
&=
\sum_{i,j}\log^{[1]}(\lambda_i,\lambda_j)
\Tr \underbrace{P_j\rho P_i\rho}_{=\delta_{i,j}\lambda_i^2 P_i}
\sigma^{-1/2}
\log(\sigma^{-1/2}\rho\sigma^{-1/2})\sigma^{1/2}\\
&=
\sum_{i}\underbrace{\log^{[1]}(\lambda_i,\lambda_i)\lambda_i^2}_{=\lambda_i}
\Tr P_i\sigma^{-1/2}
\log(\sigma^{-1/2}\rho\sigma^{-1/2})\sigma^{1/2}\\
&=
\Tr\sigma^{1/2}\rho\sigma^{-1/2}
\log(\sigma^{-1/2}\rho\sigma^{-1/2})\\
&=
D^{\max}(\rho\|\sigma),
\end{align*}
where $P_1,\ldots,P_r\in\bP(\hil)$ and $\lambda_1,\ldots,\lambda_r\in(0,+\infty)$ are such that 
$\sum_{i=1}^rP_i=I$, $\sum_{i=1}^r\lambda_i P_i=\rho$, and
in the fifth equality we used \eqref{opfunction derivative}, while the rest of the steps are straightforward.
\end{proof}

\begin{cor}
Let $D^q$ be a quantum relative entropy that is monotone under CPTP maps and additive on tensor products. For any $\rho,\sigma\in\B(\hil)\pp$,
\begin{align}\label{general q limit 1}
\lim_{\gamma\nearrow 1}D^{q,\#_{\gamma}}(\rho\|\sigma)=
D^{\max}(\rho\|\sigma).
\end{align}
If, moreover, $D^q$ is continuous on $\B(\hil)\pp\times\B(\hil)\pp$ then for any $\rho,\sigma\in\B(\hil)\pp$,
\begin{align}\label{general q limit 2}
\lim_{\gamma\searrow 0}D^{q,\#_{\gamma}}(\rho\|\sigma)=
D^{q}(\rho\|\sigma).
\end{align}
In particular, $(D^{q,\#_{\gamma}})_{\gamma\in(0,1)}$ continuously interpolates between 
$D^q$ and $D^{\max}$ when the arguments are restricted to be invertible.
\end{cor}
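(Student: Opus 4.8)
The plan is to obtain both limits by sandwiching $D^{q,\#_{\gamma}}$ between quantities whose $\gamma$-asymptotics are already known, supplemented by one elementary continuity argument; essentially all the analytic work has already been carried out in Propositions \ref{prop:Dmax stable} and \ref{prop:DU Dmax interpolation}.

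First I would prove \eqref{general q limit 1}. Since $D^q$ is CPTP-monotone and additive on tensor products, Remark \ref{rem:relentr maxmin} (in particular \eqref{relentr maxmin2}, which also gives $\oll{D}^q=D^q$) yields the two-sided bound $\DU\le D^q\le D^{\max}$. Applying the order-preservation property \eqref{geom ordering1} to each of these inequalities produces $D^{\Um,\#_{\gamma}}\le D^{q,\#_{\gamma}}\le D^{\max,\#_{\gamma}}$ for every $\gamma\in(0,1)$, and Proposition \ref{prop:Dmax stable} identifies the rightmost term with $D^{\max}$. Hence, for all $\rho,\sigma\in\B(\hil)\pp$ and all $\gamma\in(0,1)$,
\begin{align*}
D^{\Um,\#_{\gamma}}(\rho\|\sigma)\le D^{q,\#_{\gamma}}(\rho\|\sigma)\le D^{\max}(\rho\|\sigma).
\end{align*}
By \eqref{DU gamma limit1} of Proposition \ref{prop:DU Dmax interpolation} the left-hand side increases to $D^{\max}(\rho\|\sigma)$ as $\gamma\nearrow 1$, whereas the right-hand side is the constant $D^{\max}(\rho\|\sigma)$, which is finite since $\rho,\sigma$ are invertible; the squeeze theorem then forces $\lim_{\gamma\nearrow 1}D^{q,\#_{\gamma}}(\rho\|\sigma)=D^{\max}(\rho\|\sigma)$.

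Next I would prove \eqref{general q limit 2} by a short direct argument. For invertible $\rho,\sigma$ one has $\sigma\#_{\gamma}\rho=\sigma^{1/2}(\sigma^{-1/2}\rho\sigma^{-1/2})^{\gamma}\sigma^{1/2}$ by \eqref{persp def} and the definition of $\#_{\gamma}$; since $\sigma^{-1/2}\rho\sigma^{-1/2}$ is positive definite, $\gamma\mapsto(\sigma^{-1/2}\rho\sigma^{-1/2})^{\gamma}$ is continuous at $\gamma=0$ with value $I$, so $\sigma\#_{\gamma}\rho$ stays in $\B(\hil)\pp$ for $\gamma$ near $0$ and converges to $\sigma$ as $\gamma\searrow 0$. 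Because $D^{q,\#_{\gamma}}(\rho\|\sigma)=\frac{1}{1-\gamma}D^q(\rho\|\sigma\#_{\gamma}\rho)$ and $\frac{1}{1-\gamma}\to 1$, the assumed joint continuity of $D^q$ on $\B(\hil)\pp\times\B(\hil)\pp$ gives $D^{q,\#_{\gamma}}(\rho\|\sigma)\to D^q(\rho\|\sigma)$. The concluding ``in particular'' assertion then follows from \eqref{general q limit 1} and \eqref{general q limit 2} together with the continuity of $\gamma\mapsto D^{q,\#_{\gamma}}(\rho\|\sigma)$ on $(0,1)$, which is immediate from the continuity of $D^q$ and of $\gamma\mapsto\sigma\#_{\gamma}\rho$.

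I do not expect a genuine obstacle here. The one point that deserves care is that the two-sided bound $\DU\le D^q\le D^{\max}$ really does use the additivity hypothesis: for a merely CPTP-monotone $D^q$ one only gets $D^{\meas}\le D^q\le D^{\max}$ from \eqref{relentr maxmin}, which is not tight enough at the $\gamma\nearrow 1$ endpoint, since it is precisely the lower bound $\DU\le D^q$ that makes the squeeze work via Proposition \ref{prop:DU Dmax interpolation}. Everything else is bookkeeping.
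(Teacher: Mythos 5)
Your proof is correct and follows essentially the same route as the paper: the sandwich $D^{\Um,\#_{\gamma}}\le D^{q,\#_{\gamma}}\le D^{\max,\#_{\gamma}}=D^{\max}$ obtained from \eqref{relentr maxmin2}, \eqref{geom ordering1} and Proposition \ref{prop:Dmax stable}, squeezed via \eqref{DU gamma limit1}, plus the continuity of $\gamma\mapsto\sigma\#_{\gamma}\rho$ at $\gamma=0$ for the second limit. Your explicit remark that additivity is exactly what upgrades the lower bound from $D^{\meas}$ to $\DU$ is a nice clarification, but the argument itself matches the paper's.
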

\begin{proof}
By \eqref{relentr maxmin2} and the preservation of ordering stated in \ref{weighted ordering} above, we have
\begin{align*}
D^{\Um,\#_{\gamma}}(\rho\|\sigma)\le
D^{q,\#_{\gamma}}(\rho\|\sigma)
\le
D^{\max,\#_{\gamma}}(\rho\|\sigma)
=
D^{\max}(\rho\|\sigma),
\end{align*}
where the last equality follows from Proposition \ref{prop:Dmax stable}. 
Taking the limit $\gamma\nearrow 1$ and using \eqref{DU gamma limit1} yields
\eqref{general q limit 1}. The limit in \eqref{general q limit 2} is obvious 
from the assumed continuity and that $\lim_{\gamma\searrow 0}\rho\#_{\gamma}\sigma=\sigma$ when $\rho$ and $\sigma$ are invertible.
\end{proof}

\begin{rem}\label{rem:DU Dmax interpolation}
Let $\sigma\in\B(\hil)\pne$ and $\psi\in\ran\sigma$ be a unit vector.
Then 
\begin{align*}
\sigma\#_{\gamma}\pr{\psi}
=
\sigma^{1/2}\bz\sigma^{-1/2}\pr{\psi}\sigma^{-1/2}\jz^{\gamma}\sigma^{1/2}
=
\pr{\psi}\inner{\psi}{\sigma\inv\psi}^{\gamma-1},
\end{align*}
whence
\begin{align*}
D^{\Um,\#_{\gamma}}(\pr{\psi}\|\sigma)
&=
-\frac{1}{1-\gamma}\Tr\pr{\psi}\log\bz\pr{\psi}\inner{\psi}{\sigma\inv\psi}^{\gamma-1}\jz\\
&=
\log\inner{\psi}{\sigma\inv\psi}
=\Tr\pr{\psi}\log\bz\pr{\psi}\sigma\inv\pr{\psi}\jz\\
&=
D^{\max}(\pr{\psi}\|\sigma)
\end{align*}
for every $\gamma\in(0,1)$,
while
\begin{align*}
\DU(\pr{\psi}\|\sigma)=
-\Tr\pr{\psi}\log\sigma=\inner{\psi}{(\log\sigma\inv)\psi}.
\end{align*}
Thus, we get that 
\begin{align*}
\DU(\pr{\psi}\|\sigma)\le D^{\Um,\#_{\gamma}}(\pr{\psi}\|\sigma)
=
D^{\max}(\pr{\psi}\|\sigma),\ds\ds\ds\gamma\in(0,1),
\end{align*}
and the inequality is strict if $\psi$ is not an eigenvector of $\sigma$.
In particular, this shows that the condition that $\rho$ and $\sigma$ are invertible cannot be completely 
omitted in \eqref{DU gamma limit0}.
\end{rem}

\begin{rem}
Obviously, $[0,1]\ni t\mapsto D^{(t)}:=(1-t)\DU+tD^{\max}$ interpolates continuously and monotone increasingly between $\DU$ and $D^{\max}$, and the same is true for 
$[0,1]\ni\gamma\mapsto D^{\Um,\#_{\gamma}}$, according to 
Proposition \ref{prop:DU Dmax interpolation}. 
The two families, however, are different. Indeed, 
the example in Remark \ref{rem:DU Dmax interpolation} shows that if a unit vector
$\psi\in\ran \sigma$ is not an eigenvector of $\sigma$ then 
\begin{align*}
D^{(t)}(\pr{\psi}\|\sigma)<D^{\max}(\pr{\psi}\|\sigma)=
D^{\Um,\#_{\gamma}}(\pr{\psi}\|\sigma),\ds\ds\ds t,\gamma\in(0,1).
\end{align*}
Since $D^q(\rho\|\sigma)=\lim_{\ep\searrow 0}D^q(\rho+\ep I\|\sigma+\ep I)$ holds for 
both $D^q=\DU$ and $D^q=D^{\max}$ and any $\rho,\sigma\in\B(\hil)\pne$, the above argument also shows that for any $t,\gamma\in(0,1)$ there exist invertible $\rho,\sigma$ such that 
$D^{(t)}(\rho\|\sigma)<D^{\Um,\#_{\gamma}}(\rho\|\sigma)$.
\end{rem}

\section{Barycentric R\'enyi divergences}
\label{sec:barycentric}

In the rest of the paper (i.e., in the present section and in Section \ref{sec:ex}) we 
use the term ``quantum relative entropy'' in a more restrictive (though still very general)
sense than in the previous sections. Namely, a quantum divergence $D^q$ will be called a quantum relative entropy if, on top of being a quantum extension of the classical relative entropy, it is also non-negative, it satisfies the scaling law \eqref{relentr scaling}, and the following 
\ki{support condition}:
\begin{align}\label{supp const}
D^q(\rho\|\sigma)<+\infty\ds\iff\ds \rho^0\le\sigma^0.
\end{align}
Note that by Remark \ref{rem:tracemon}, any quantum relative entropy in the above sense is also trace-monotone. In particular, no quantum relative entropy can take the value $-\infty$.

\begin{example}\label{ex:qrelative entropies}
It is easy to verify that that 
$\DU$, $D^{\meas}$ and $D^{\max}$ are 
all quantum relative entropies in the above more restrictive sense.
\end{example}

\subsection{Definitions}
\label{sec:defs}

\begin{definition}\label{defin:weighted Q}
Let $W\in\B(\X,\hil)\pne$ be a gcq channel, let $P\in\P_f^{\pm}(\X)$ and
\begin{align*}
S_+:=
\bigwedge\limits_{x:\,P(x)>0}W_x^0,\ds\ds\ds
S_-:=
\bigwedge\limits_{x:\,P(x)<0}W_x^0,
\end{align*}
and for every $x\in \X$, let $\D^{q_x}$ be a quantum relative entropy.
We define
\begin{align}
\QP{W}&:=
\sup_{\tau\in\B(S_+\hil)_{\ge 0}}\left\{\Tr\tau-\sum_{x\in\X}P(x)\D^{q_x}(\tau\|\ch{x})\right\},\label{weighted Q}\\
\psiP{W}&:=\log \QP{W},\label{gcq psi}\\
R_{\D^{\qv},\bal}(W,P)&:=
\inf_{\omega\in\S(S_+\hil)}\sum_{x\in\X}P(x)\D^{q_x}(\omega\|\ch{x}).
\label{left divrad}
\end{align}
Here, $\qv:=(q_x)_{x\in\X}$, 
$\D^{\qv}:=(D^{q_x})_{x\in\X}$, and 
$R_{\D^{\qv},\bal}(W,P)$ is the \ki{$P$-weighted left $\D^{\qv}$-radius of $W$}.
We call any $\omega$ attaining the infimum in \eqref{left divrad}
a \ki{$P$-weighted left $D^{\qv}$-center for $W$.}

When $P\notin\{\egy_{\{x\}}:\,x\in\X\}$, we also define the 
\ki{$P$-weighted barycentric R\'enyi-divergence of $W$ corresponding to $D^{\qv}$} as
\begin{align*}
D_P^{\bary,\qv}(W):=\frac{1}{\prod_{x\in\X}(1-P(x))}\bz-\log Q_P^{\bary,\qv}\bz\bz\frac{W_x}{\Tr W_x}\jz_{x\in\X}\jz\jz.
\end{align*}
\end{definition}

\begin{rem}
Since we almost exclusively consider only left divergence radii and left divergence centers in this paper, we will normally omit ``left'' from the terminology.
\end{rem}

\begin{rem}
Note that by definition, 
\begin{align*}
P(x)\ge 0,\,x\in\X\ds\imp\ds S_-=I,\ds\ds\ds
P(x)\le 0,\,x\in\X\ds\imp\ds S_+=I.
\end{align*}
\end{rem}

\begin{definition}\label{def:barycentric Renyi}
Let $D^{\qv}=(D^{\qn},D^{\qo})$ be quantum relative entropies. 
For any two non-zero PSD operators $\rho,\sigma\in\B(\hil)_{\gneq 0}$, 
and any $\alpha\in[0,+\infty)$, let 
\begin{align}
\Q_{\alpha}^{\bary,\qv}(\rho\|\sigma)&:=\sup_{\tau\in\B(\rho^0\hil)_{\ge 0}}\left\{\Tr\tau-
\alpha\D^{\qn}(\tau\|\rho)-(1-\alpha)\D^{\qo}(\tau\|\sigma)\right\},
\label{barycentric Qalpha def}\\
\psi_{\alpha}^{\bary,\qv}(\rho\|\sigma)&:=\log\Q_{\alpha}^{\bary,\qv}(\rho\|\sigma),
\label{barycentric psialpha def}\\
\D_{\alpha}^{\bary,\qv}(\rho\|\sigma)&:=
\frac{1}{\alpha-1}\log\Q_{\alpha}^{\bary,\qv}(\rho\|\sigma)-\frac{1}{\alpha-1}\log\Tr\rho,
\label{barycentric Dalpha def}
\end{align}
where we define the last quantity only for $\alpha\in[0,1)\cup(1,+\infty)$.
$\D_{\alpha}^{\bary,\qv}(\rho\|\sigma)$ is called the 
\ki{barycentric R\'enyi $\alpha$-divergence} of $\rho$ and $\sigma$
corresponding to $D^{\qv}$. 
\end{definition}

\begin{rem}
When $D^{\qn}=\D^{\qo}=D^q$, we will use the simpler notation 
$D_{\alpha}^{\bary,q}$ instead of 
$D_{\alpha}^{\bary,(\qn,\qo)}$.
\end{rem}

\begin{rem}
It is easy to see that when $P$ is a probability measure, the supremum in 
\eqref{weighted Q} and the infimum in \eqref{left divrad} can be equivalently taken over
$\B(\hil)\p$ and $\S(\hil)$, respectively, 
i.e., 
\begin{align}
\QP{W}
&=
\sup_{\tau\in\B(\hil)\p}\left\{\Tr\tau-\sum_{x\in\X}P(x)\D^{q_x}(\tau\|\ch{x})\right\},
\label{barycentric Qalpha def support}\\
R_{\D^{\qv},\bal}(W,P)
&=
\inf_{\omega\in\S(\hil)}\sum_{x\in\X}P(x)\D^{q_x}(\omega\|\ch{x}).
\label{barycentric Dalpha def support}
\end{align}
Likewise, in the $2$-variable case we have 
\begin{align}
\Q_{\alpha}^{\bary,\qv}(\rho\|\sigma)
&=
\sup_{\tau\in\B(\hil)_{\ge 0}}\left\{\Tr\tau-
\alpha\D^{\qn}(\tau\|\rho)-(1-\alpha)\D^{\qo}(\tau\|\sigma)\right\}\\
&=
\sup_{\tau\in\B((\rho^0\wedge\sigma^0)\hil)_{\ge 0}}\left\{\Tr\tau-
\alpha\D^{\qn}(\tau\|\rho)-(1-\alpha)\D^{\qo}(\tau\|\sigma)\right\},\ds\ds\ds
\alpha\in(0,1).
\label{barycentric Qalpha def3}
\end{align}
In the general case, the restriction $\tau^0\le S_+$ is introduced to avoid the appearance
of infinities of opposite signs in $\sum_{x\in\X}P(x)\D^{q_x}(\tau\|\ch{x})$.
In the $2$-variable case \eqref{barycentric Qalpha def}, the restriction 
$\tau^0\le\rho^0$ also serves to guarantee that  $Q_{\alpha}^{\bary,\qv}$ is a quantum 
extension of $Q_{\alpha}^{\cl}$ for $\alpha>1$, which would not be true, for instance, 
if it was replaced by $\tau^0\le\rho^0\wedge\sigma^0$; see, 
e.g., Corollary \ref{cor:Renyi bq infty}.
\end{rem}

\begin{rem}
Note that with the choice 
$\X=\{0,1\}$, $W_0=\rho$, $W_1=\sigma$, and $P(0)=\alpha$,
\eqref{barycentric Qalpha def} and \eqref{barycentric psialpha def}
are special cases of \eqref{weighted Q} and \eqref{gcq psi}, respectively, when 
$\alpha\in(0,1)$, and we will show in Lemma \ref{lemma:psi rep} that also 
\eqref{barycentric Dalpha def} is a special case of \eqref{left divrad} in this case. 
When $\alpha=0$, the restriction $\tau^0\le S_+$ in 
\eqref{weighted Q} would give $\tau^0\le\sigma^0$, while we use 
$\tau^0\le\rho^0$ in \eqref{barycentric Qalpha def}. The reason for 
this is to guarantee the continuity of $D_{\alpha}^{\bary,q}$ at $0$; see 
Proposition \ref{prop:limits}.
\end{rem}

\begin{rem}\label{rem:psi 1}
Note that \eqref{barycentric Qalpha def} can be seen as a $2$-variable extension of the variational formula
\eqref{Tropp variational}. In particular, 
we have
\begin{align}\label{psi proof 2}
Q^{\bary,\qv}_{1}(\rho\|\sigma)
=
\max_{\tau\in\B(\rho^0\hil)\p}\{\Tr\tau-\D^{\qn}(\tau\|\rho)\}=\Tr\rho,
\end{align}
where the first equality is by definition \eqref{barycentric Qalpha def}, 
and the second equality is due to \eqref{Tropp variational}.
Thus, 
\begin{align}\label{psi 1}
\psi^{\bary,\qv}_{1}(\rho\|\sigma)=\log\Tr\rho, 
\end{align}
and for every $\alpha\in[0,1)\cup(1,+\infty)$,
\begin{align}
\D_{\alpha}^{\bary,\qv}(\rho\|\sigma)
&=
\frac{1}{\alpha-1}\log\Q_{\alpha}^{\bary,\qv}(\rho\|\sigma)-\frac{1}{\alpha-1}\log\Tr\rho\nn\\
&=\frac{\psi_{\alpha}^{\bary,\qv}(\rho\|\sigma)-\psi_{1}^{\bary,\qv}(\rho\|\sigma)}{\alpha-1}\,.
\label{barycentric Dalpha def2}
\end{align}
By Remark \ref{rem:relentr pos}, the maximum in \eqref{psi proof 2} is attained at $\tau$ 
if and only if $\Tr\tau=\Tr\rho$ and $D^{\qn}(\tau\|\rho)=0$;
in particular, 
$\tau=\rho$ is the unique maximizer in \eqref{psi proof 2}
when $D^{\qn}$ is strictly positive.

At $\alpha=0$, \eqref{barycentric Qalpha def} and \eqref{Tropp variational} give
\begin{align}\label{barycentric 0-Renyi}
\sigma^0\le\rho^0\ds\imp\ds Q^{\bary,\qv}_{0}(\rho\|\sigma)=\Tr\sigma\ds\imp\ds
D^{\bary,\qv}_{0}(\rho\|\sigma)=\log\Tr\rho-\log\Tr\sigma.
\end{align}
In general the above equalities do not hold; see Proposition \ref{prop:barycentric nonneg}.
\end{rem}

\begin{lemma}\label{lemma:psi rep}
(i) In the setting of Definition \ref{defin:weighted Q},
\begin{align}
-\log Q_P^{\bary,\qv}(W)
&=
R_{\D^{\qv},\bal}(W,P).
\label{psi optimization}
\end{align}
Moreover, if $S_+\le S_-$ then 
a $\tau\in\B(S_+\hil)\p$ is optimal in \eqref{weighted Q} if and only if
\begin{align}\label{Tr argmax=max}
\QP{W}=\Tr\tau
\ds\ds\ds\text{and}\ds\ds\ds
\sum_{x\in\X}P(x)\D^{q_x}(\tau\|\ch{x})=0,
\end{align}
and if $\tau\ne 0$ is optimal in \eqref{weighted Q} then $\omega:=\tau/\Tr\tau$ is optimal in \eqref{left divrad}. 
Conversely, for any $\omega$ that is optimal in \eqref{left divrad},
$\tau:=e^{-R_{\D^{\qv},\bal}(W,P)}\omega$
is optimal in \eqref{weighted Q}.

(ii) In the setting of Definition \ref{def:barycentric Renyi},
\begin{align}
-\log Q_{\alpha}^{\bary,\qv}(\rho\|\sigma)&=\inf_{\omega\in\S(\rho^0\hil)}
\{\alpha\D^{\qn}(\omega\|\rho)+(1-\alpha)\D^{\qo}(\omega\|\sigma)\},
\ds\ds\alpha\in[0,+\infty).\label{psi alpha}
\end{align}
Assume for the rest that $\alpha\in[0,1]$ or $\rho^0\le\sigma^0$. Then $\tau$ is optimal in \eqref{barycentric Qalpha def} if and only if
\begin{align}\label{Tr argmax=max2}
Q_{\alpha}^{\bary,\qv}(\rho\|\sigma)=\Tr\tau
\ds\ds\ds\text{and}\ds\ds\ds
\alpha\D^{\qn}(\tau\|\rho)+(1-\alpha)\D^{\qo}(\tau\|\sigma)=0,
\end{align}
and if $\tau\ne 0$ is optimal in \eqref{barycentric Qalpha def} then $\omega:=\tau/\Tr\tau$ is optimal in 
\eqref{psi alpha}. Conversely, for any $\omega$ that is optimal in \eqref{psi alpha},
$\tau:=e^{\psi_{\alpha}^{\bary,\qv}(\rho\|\sigma)}\omega$ optimal in \eqref{barycentric Qalpha def}.
\end{lemma}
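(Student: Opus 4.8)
The plan is to prove (i) first and then deduce (ii) as the special case $\X=\{0,1\}$, $W_0=\rho$, $W_1=\sigma$, $P(0)=\alpha$, with a small separate argument for the parameter ranges that fall outside the scope of (i) (namely $\alpha\notin(0,1)$).

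\textbf{Proof of (i).} The key observation is the variational formula in Remark \ref{rem:relentr pos}: for any quantum relative entropy $D^q$ and any $\sigma\in\B(\hil)\pne$,
\begin{align*}
\log\Tr\sigma=\max_{\tau\in\B(\sigma^0\hil)\pne}\left\{\log\Tr\tau-\frac{1}{\Tr\tau}D^q(\tau\|\sigma)\right\},
\end{align*}
with the maximum attained exactly at those $\tau$ for which $D^q(\tau/\Tr\tau\,\|\,\sigma/\Tr\sigma)=0$ (using that $D^q$ satisfies the scaling law \eqref{relentr scaling}). First I would rewrite the supremum in \eqref{weighted Q} by splitting each $\tau\in\B(S_+\hil)\p$ as $\tau=t\omega$ with $t=\Tr\tau\ge 0$ and $\omega\in\S(S_+\hil)$ (the case $\tau=0$ contributes $0$ and can be handled separately), so that
\begin{align*}
Q_P^{\bary,\qv}(W)=\sup_{\omega\in\S(S_+\hil)}\sup_{t\ge 0}\left\{t-t\sum_{x}P(x)D^{q_x}(\omega\|W_x)\right\},
\end{align*}
where I have used the scaling law \eqref{scaling1} in the form $D^{q_x}(t\omega\|W_x)=t\log t\cdot(\Tr\omega)+tD^{q_x}(\omega\|W_x)=t\log t+tD^{q_x}(\omega\|W_x)$, together with $\sum_x P(x)=1$, which makes the $t\log t$ terms collect into $t\log t$ and produces exactly the function $f_c(t)=t-t\log t-tc$ of Lemma \ref{lemma:optimization} with $c=\sum_x P(x)D^{q_x}(\omega\|W_x)$. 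Wait — I should be careful: the inner expression is $t-\sum_x P(x)D^{q_x}(t\omega\|W_x)=t-t\log t-t\sum_x P(x)D^{q_x}(\omega\|W_x)=f_c(t)$. By Lemma \ref{lemma:optimization} the inner supremum equals $e^{-c}=\exp\!\big(-\sum_x P(x)D^{q_x}(\omega\|W_x)\big)$, attained uniquely at $t=e^{-c}$. Taking the outer supremum over $\omega$ and using monotonicity of $\exp$ gives
\begin{align*}
Q_P^{\bary,\qv}(W)=\exp\!\left(-\inf_{\omega\in\S(S_+\hil)}\sum_{x}P(x)D^{q_x}(\omega\|W_x)\right)=e^{-R_{\D^{\qv},\bal}(W,P)},
\end{align*}
which is \eqref{psi optimization}. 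I must check the edge cases: if $c=+\infty$ for every $\omega$ (which, by the support condition \eqref{supp const}, happens precisely when no $\omega\in\S(S_+\hil)$ satisfies $\omega^0\le W_x^0$ for all $x$ with $P(x)\ne 0$ — but $S_+\le W_x^0$ already guarantees the $P(x)>0$ part, so the obstruction is on the $P(x)<0$ side), the supremum over $t$ is $0$ (attained in the limit $t\to 0$, matching $0\cdot(+\infty)=0$), consistent with $Q_P^{\bary,\qv}(W)=e^{-\infty}=0$; if $c$ can be $-\infty$, i.e. some $D^{q_x}(\omega\|W_x)=-\infty$, this cannot occur since quantum relative entropies never take $-\infty$ (noted explicitly after \eqref{supp const}); if $c<0$ but finite the formula still holds. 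This justifies the constraint $S_+\le S_-$ being the relevant hypothesis for the optimality characterization but not for \eqref{psi optimization} itself.

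\textbf{Optimality characterization in (i).} Assume $S_+\le S_-$. For the "if and only if" in \eqref{Tr argmax=max}: suppose $\tau\in\B(S_+\hil)\p$ attains the supremum. Writing $\tau=t\omega$ as above, optimality forces (by the two-step supremum just performed) that $\omega$ is a minimizer of $\omega\mapsto\sum_x P(x)D^{q_x}(\omega\|W_x)$ and $t=e^{-R}$, so $\Tr\tau=t=e^{-R}=Q_P^{\bary,\qv}(W)$, giving the first equation in \eqref{Tr argmax=max}; moreover $\sum_x P(x)D^{q_x}(\tau\|W_x)=\Tr\tau-f_c(t)=e^{-R}-e^{-R}=0$ using that at the optimizer $f_c(t)-t=-t\log t-tc$ evaluated at $t=e^{-c}$ gives $-e^{-c}c-(-c e^{-c})$... let me just note $\Tr\tau - Q_P=0$ and $Q_P-\sum_x P(x)D^{q_x}(\tau\|W_x)=Q_P$ from the definition, so $\sum_x P(x)D^{q_x}(\tau\|W_x)=0$. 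Conversely, if both equations in \eqref{Tr argmax=max} hold then $\Tr\tau-\sum_x P(x)D^{q_x}(\tau\|W_x)=Q_P^{\bary,\qv}(W)-0=Q_P^{\bary,\qv}(W)$, so $\tau$ is optimal. The correspondence $\tau\leftrightarrow\omega=\tau/\Tr\tau$ (for $\tau\ne 0$) and $\omega\mapsto\tau=e^{-R}\omega$ between optimizers of \eqref{weighted Q} and \eqref{left divrad} then follows directly from the two-step decomposition and the uniqueness of the optimal $t$ in Lemma \ref{lemma:optimization}.

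\textbf{Deducing (ii).} For $\alpha\in(0,1)$, part (ii) is literally the case $\X=\{0,1\}$, $W_0=\rho$, $W_1=\sigma$, $P(0)=\alpha$ of (i): here $P$ is a probability measure, $S_+=\rho^0\wedge\sigma^0$, and the supremum over $\B((\rho^0\wedge\sigma^0)\hil)\p$ in \eqref{barycentric Qalpha def} equals the supremum over $\B(\rho^0\hil)\p$ by \eqref{barycentric Qalpha def3} (which I would either cite or quickly re-derive: enlarging from $\rho^0\wedge\sigma^0$ to $\rho^0$ cannot increase the value since any $\tau$ with $\tau^0\not\le\sigma^0$ gives $D^{\qo}(\tau\|\sigma)=+\infty$ by \eqref{supp const} hence objective $-\infty$), and the infimum over $\S(\rho^0\hil)$ in \eqref{psi alpha} likewise reduces to $\S((\rho^0\wedge\sigma^0)\hil)=\S(S_+\hil)$. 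For $\alpha=0$ and $\alpha>1$ I would run the same $\tau=t\omega$ decomposition directly on \eqref{barycentric Qalpha def} — the scaling arguments are identical and only use that $D^{\qn},D^{\qo}$ are quantum relative entropies in the restrictive sense and that $\alpha+(1-\alpha)=1$ — obtaining $Q_\alpha^{\bary,\qv}(\rho\|\sigma)=\exp(-\inf_{\omega\in\S(\rho^0\hil)}\{\alpha D^{\qn}(\omega\|\rho)+(1-\alpha)D^{\qo}(\omega\|\sigma)\})$, which is \eqref{psi alpha}; the extra hypothesis "$\alpha\in[0,1]$ or $\rho^0\le\sigma^0$" is exactly what guarantees the objective $\alpha D^{\qn}+(1-\alpha)D^{\qo}$ does not mix $+\infty$ with $-\infty$ — which here just means it never hits $+\infty$ on one term while being finite, but since relative entropies are $\ge$ (something finite) it's automatic; the real role of the hypothesis is that for $\alpha>1$ with $\rho^0\le\sigma^0$ the constraint $\tau^0\le\rho^0$ already implies $\tau^0\le\sigma^0$ so that $(1-\alpha)D^{\qo}(\tau\|\sigma)$ is finite and the $(1-\alpha)$ coefficient being negative does not create a spurious $-\infty\cdot$ issue. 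The optimality statement \eqref{Tr argmax=max2} and the optimizer correspondence with \eqref{psi alpha} follow verbatim as in (i), now with $e^{-R}=Q_\alpha^{\bary,\qv}(\rho\|\sigma)$, i.e. $\tau=e^{\psi_\alpha^{\bary,\qv}(\rho\|\sigma)}\omega$.

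\textbf{Main obstacle.} The computations themselves are routine once the $\tau=t\omega$ substitution is in place; the only genuine care needed is bookkeeping of infinities. Specifically, the step where I collect $\sum_x P(x)\,D^{q_x}(t\omega\|W_x)=t\log t+t\sum_x P(x)D^{q_x}(\omega\|W_x)$ via the scaling law \eqref{scaling1} requires that the individual terms $D^{q_x}(\omega\|W_x)$ be simultaneously well-defined as elements of $(-\infty,+\infty]$ and that their $P$-weighted sum not be of the form $(+\infty)+(-\infty)$; this is precisely what the support constraint $\tau^0\le S_+$ (equivalently $\omega^0\le S_+$), the hypothesis $S_+\le S_-$ in the optimality part, and — in (ii) — the hypothesis "$\alpha\in[0,1]$ or $\rho^0\le\sigma^0$" are designed to ensure, and I would state this verification explicitly rather than gloss over it. The convention $0\cdot(\pm\infty)=0$ from \eqref{zero times infty} is what makes the boundary case $t=0$ (and the case where the infimum in the exponent is $+\infty$) come out consistently.
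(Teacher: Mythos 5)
Your core argument is the same as the paper's: decompose $\tau=t\omega$ with $t=\Tr\tau$, use the scaling law \eqref{scaling1} to reduce the inner problem to $f_c(t)=t-t\log t-tc$ with $c=c(\omega)=\sum_xP(x)D^{q_x}(\omega\|W_x)$, apply Lemma \ref{lemma:optimization}, and then specialize to get (ii) (the paper treats $\alpha\in(0,+\infty)$ directly as the two-point case of (i) and does $\alpha=0$ by hand; your direct rerun for $\alpha=0$ and $\alpha>1$ is equivalent). The optimality characterization and the $\tau\leftrightarrow\omega$ correspondence are also handled as in the paper.

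However, your treatment of the degenerate cases is wrong in substance, even though the identity \eqref{psi optimization} happens to survive. When the ``obstruction is on the $P(x)<0$ side'', i.e.\ there is an $x$ with $P(x)<0$ and $\omega^0\nleq W_x^0$, the support condition \eqref{supp const} gives $D^{q_x}(\omega\|W_x)=+\infty$, and multiplying by the \emph{negative} weight $P(x)$ yields $P(x)D^{q_x}(\omega\|W_x)=-\infty$; hence $c(\omega)=-\infty$, not $+\infty$, and the inner supremum over $t$ is $+\infty$, not $0$. So in the case $S_+\ne 0$, $S_+\nleq S_-$ one gets $Q_P^{\bary,\qv}(W)=+\infty$ and $R_{\D^{\qv},\bal}(W,P)=-\infty$ (this is exactly the paper's second case), which is the opposite of your claim $Q_P=0$, $R=+\infty$; relatedly, your remark that ``$c=-\infty$ cannot occur since relative entropies never take $-\infty$'' overlooks precisely this mechanism, which is the only source of infinities here since $\omega^0\le S_+\le W_x^0$ for all $x$ with $P(x)>0$ forces those terms to be finite. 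Note also that the sum $c(\omega)$ is never of the ill-defined form $(+\infty)+(-\infty)$ for $\omega\in\S(S_+\hil)$, so no separate well-definedness argument is needed there. Finally, the genuinely separate case you should state is $S_+=0$: then $\S(S_+\hil)=\emptyset$, the only admissible $\tau$ in \eqref{weighted Q} is $\tau=0$ (giving $Q_P=0$ by \eqref{zero argument q2} and the convention \eqref{zero times infty}), and the infimum in \eqref{left divrad} over the empty set is $+\infty$; your decomposition $\tau=t\omega$ simply does not apply there, and your parenthetical about ``$c=+\infty$ for every $\omega$'' does not cover it. With these two cases corrected (as in the paper's proof), the rest of your argument goes through.
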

\begin{proof}
(i)
Assume first that $S_+=0$. Then the only admissible $\tau\in\B(\S_+\hil)\p$ 
in \eqref{weighted Q} is 
$\tau=0$, whence $\QP{W}=0$, according to \eqref{zero argument q2}, and thus 
$\psiP{W}=-\infty$. On the other hand, the infimum in \eqref{left divrad} is taken over the empty set, and hence it is equal to $+\infty$. Thus, \eqref{psi optimization} and 
\eqref{Tr argmax=max} hold. 

Assume next that $S_+\ne 0$. 
If there exists an $x\in\X$ such that $P(x)<0$ and $S_+\nleq W_x^0$ then 
taking $\tau:=\omega:=S_+/\Tr S_+$ yields
\begin{align*}
\QP{W}\ge
\Tr\tau-\underbrace{\sum_{x:\,P(x)>0}P(x)\D^{q_x}(\tau\|\ch{x})}_{\in\bR}
-\underbrace{\sum_{x:\,P(x)<0}P(x)\D^{q_x}(\tau\|\ch{x})}_{=-\infty}=+\infty,
\end{align*}
and
\begin{align*}
R_{\D^{\qv},\bal}(W,P)\le
\underbrace{\sum_{x:\,P(x)>0}P(x)\D^{q_x}(\omega\|\ch{x})}_{\in\bR}
+\underbrace{\sum_{x:\,P(x)<0}P(x)\D^{q_x}(\omega\|\ch{x})}_{=-\infty}=-\infty,
\end{align*}
(where we used that $D^{q_x}$ does not take the value $-\infty$), whence 
\eqref{psi optimization} holds.

Finally, if $0\ne S_+\le S_-$ 
then the proof follows easily from representing a positive semi-definite operator 
$\tau\in\B(S_+\hil)\p$ as a pair $(\omega,t)\in\S(S_+\hil)\times[0,+\infty)$. 
Indeed, we have
\begin{align}
\QP{W}&=
\sup_{\omega\in\S(S_+\hil)}\sup_{t\in[0,+\infty)}\left\{\Tr t\omega
-\sum_{x\in\X}P(x)\D^{q_x}(t\omega\|\ch{x})\right\}\nonumber\\
&=
\sup_{\omega\in\S(S_+\hil)}\sup_{t\in[0,+\infty)}\Big\{t-t\log t-t
\underbrace{\sum_{x\in\X}P(x)\D^{q_x}(\omega\|\ch{x})}_{=:c(\omega)}\Big\},
\label{psi proof 1}
\end{align}
where the first equality is by definition, and the second equality follows from 
the scaling property \eqref{scaling1}.
Note that $c(\omega)\ne\pm\infty$ by assumption, and 
the inner supremum in \eqref{psi proof 1} is equal to 
$e^{-c(\omega)}$, attained at $t=e^{-c(\omega)}$, according to Lemma \ref{lemma:optimization}.
From these, all the remaining assertions in (i) follow immediately.

The assertions in (ii) are special cases of the corresponding ones in (i)
when $\alpha\in(0,+\infty)$ (also taking into 
account \eqref{barycentric Qalpha def3} when $\alpha\in(0,1)$).
The case $\alpha=0$ can be verified analogously to the above; we omit the easy 
details. 
\end{proof}

\begin{rem}
Clearly, when $\alpha>1$ and $\rho^0\nleq\sigma^0$ then 
the set of optimal $\tau$ operators in \eqref{barycentric Qalpha def} is exactly 
$\{\tau\in\B(\rho^0\hil)\pne:\,\tau^0\nleq\sigma^0\}$, and the set of optimal $\omega$ states in 
\eqref{psi alpha} is exactly 
$\{\omega\in\S(\rho^0\hil):\,\omega^0\nleq\sigma^0\}$.
\end{rem}

\begin{cor}
Assume that $S_+\le S_-$. Then 
\begin{align*}
\QP{W}=\max\left\{\Tr\tau:\,\tau\in\B(S_+\hil)\p,\,\sum_{x\in\X}P(x)\D^{q_x}(\tau\|\ch{x})=0\right\}.
\end{align*}
Likewise, if $\alpha\in[0,1]$ or $\rho^0\le\sigma^0$, 
then 
\begin{align*}
Q_{\alpha}^{\bary,\qv}(\rho\|\sigma)
=
\max\left\{\Tr\tau:\,\tau\in \B(\rho^0\hil)\p,\,
\alpha \D^{\qn}(\tau\|\rho)+(1-\alpha)\D^{\qo}(\tau\|\sigma)=0\right\}.
\end{align*}
\end{cor}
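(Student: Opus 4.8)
The plan is to derive both claimed identities directly from Lemma \ref{lemma:psi rep}. For the first identity, assume $S_+\le S_-$. Lemma \ref{lemma:psi rep}(i) tells us that a $\tau\in\B(S_+\hil)\p$ is optimal in \eqref{weighted Q} if and only if $\QP{W}=\Tr\tau$ and $\sum_{x\in\X}P(x)\D^{q_x}(\tau\|\ch{x})=0$. First I would argue that an optimizer exists: when $S_+=0$ the only admissible $\tau$ is $0$, which trivially satisfies $\sum_x P(x)D^{q_x}(0\|W_x)=0$ by \eqref{zero argument q2} and gives $\QP{W}=0$; when $0\ne S_+\le S_-$, the representation $\tau=t\omega$ with $\omega\in\S(S_+\hil)$, $t\ge 0$ from the proof of Lemma \ref{lemma:psi rep}(i) shows $\QP{W}=\sup_{\omega}e^{-c(\omega)}$ with $c(\omega):=\sum_x P(x)D^{q_x}(\omega\|W_x)\in\bR$ finite (using $S_+\le S_-$ so no $-\infty$ appears, and that the $D^{q_x}$ never take $-\infty$); the supremum over the compact set $\S(S_+\hil)$ of the lower semicontinuity... actually one only needs that the value $\QP{W}$ is attained by \emph{some} $\tau$, which follows once $R_{\D^{\qv},\bal}(W,P)=-\psiP{W}$ is attained — but to be safe I would instead note that whether or not the infimum defining the radius is attained, the set $\{\Tr\tau:\tau\in\B(S_+\hil)\p,\ \sum_x P(x)D^{q_x}(\tau\|W_x)=0\}$ always contains a maximal element realizing $\QP{W}$: indeed, if $\omega_n$ is a minimizing sequence for $c(\omega)$, by compactness of $\S(S_+\hil)$ and lower semicontinuity of each $D^{q_x}(\cdot\|W_x)$ (which holds for $D^{\meas},\DU,D^{\max}$; in the general setting one uses that the infimum over a compact set of an extended-real function bounded below is a limit of values, and passing to $\tau_n=e^{-c(\omega_n)}\omega_n$ with a convergent subsequence $\omega_n\to\omega$ gives $c(\omega)\le\liminf c(\omega_n)$, hence $\Tr\tau=e^{-c(\omega)}\ge\limsup\Tr\tau_n=\QP{W}$, forcing equality). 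Then by Lemma \ref{lemma:psi rep}(i) this $\tau$ lies in the feasible set on the right-hand side, so the $\max$ is $\ge\QP{W}$; conversely any feasible $\tau$ on the right has $\Tr\tau=\Tr\tau-\sum_x P(x)D^{q_x}(\tau\|W_x)\le\QP{W}$ by \eqref{weighted Q}, so the $\max$ is $\le\QP{W}$.

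For the second identity (with $\alpha\in[0,1]$ or $\rho^0\le\sigma^0$), the argument is the same, now invoking the optimality characterization \eqref{Tr argmax=max2} from Lemma \ref{lemma:psi rep}(ii): any $\tau\in\B(\rho^0\hil)\p$ with $\alpha D^{\qn}(\tau\|\rho)+(1-\alpha)D^{\qo}(\tau\|\sigma)=0$ satisfies $\Tr\tau\le Q_{\alpha}^{\bary,\qv}(\rho\|\sigma)$ by \eqref{barycentric Qalpha def}, and an optimal $\tau$ realizing $Q_{\alpha}^{\bary,\qv}(\rho\|\sigma)$ exists and lies in this feasible set, so the $\max$ equals $Q_{\alpha}^{\bary,\qv}(\rho\|\sigma)$. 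The case $\alpha\in(0,1)$ additionally uses \eqref{barycentric Qalpha def3} to identify the constraint set $\B((\rho^0\wedge\sigma^0)\hil)\p$ with $\B(\rho^0\hil)\p$ subject to finiteness of the objective; the cases $\alpha=1$ and $\alpha=0$ with $\sigma^0\le\rho^0$ are covered by \eqref{psi proof 2} and \eqref{barycentric 0-Renyi} respectively, where $\tau=\rho$ resp.\ $\tau=\sigma$ is an explicit feasible maximizer.

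The main obstacle I anticipate is the existence of an optimizer: \eqref{weighted Q} is a supremum that need not a priori be attained without some semicontinuity/compactness input. The cleanest route is to reduce, via the scaling law \eqref{scaling1} exactly as in the proof of Lemma \ref{lemma:psi rep}(i), to maximizing $\omega\mapsto e^{-c(\omega)}$ over the compact set $\S(S_+\hil)$ (resp.\ $\S(\rho^0\hil)$), and then to use that each $\omega\mapsto D^{q_x}(\omega\|W_x)$ is lower semicontinuous — a property already recorded for the relative entropies of interest in Remark \ref{rem:relentr lsc} and Example \ref{ex:Dmax gamma sregular}, and which for a general quantum relative entropy in the restrictive sense of Section \ref{sec:barycentric} can be assumed as part of the standing hypotheses whenever the corollary is applied. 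Granting this, $c$ is lower semicontinuous on a compact set, hence attains its infimum, and the corresponding $\tau=e^{-c(\omega)}\omega$ (or $\tau=0$ when $S_+=0$) is the desired maximizer. Everything else is a one-line application of the inequality $\Tr\tau-\sum_x P(x)D^{q_x}(\tau\|W_x)\le\QP{W}$ together with the vanishing-of-the-sum characterization from Lemma \ref{lemma:psi rep}.
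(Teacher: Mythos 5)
Your core argument is the paper's own: both identities are read off from the optimality characterizations \eqref{Tr argmax=max} and \eqref{Tr argmax=max2} in Lemma \ref{lemma:psi rep}, combined with the trivial bound $\Tr\tau=\Tr\tau-\sum_{x}P(x)\D^{q_x}(\tau\|\ch{x})\le \QP{W}$ for every feasible $\tau$; the paper's proof is exactly this one-line citation, and your handling of the degenerate case $S_+=0$ and of the explicit maximizers at $\alpha=1$ (via \eqref{Tropp variational}) and at $\alpha=0$ with $\sigma^0\le\rho^0$ is correct. Where you diverge is the lengthy attempt to secure attainment of the maximum, and there one step would fail as written: the inequality $c(\omega)\le\liminf_n c(\omega_n)$ along a convergent minimizing subsequence is nothing but lower semi-continuity of $\omega\mapsto\sum_xP(x)\D^{q_x}(\omega\|\ch{x})$, which is not among the standing assumptions on a quantum relative entropy in Section \ref{sec:barycentric} (quantum extension, non-negativity, scaling law, support condition), so your parenthetical ``general setting'' argument is circular and attainment is not free. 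Your fallback of postulating lower semi-continuity reproduces exactly the sufficient condition the paper records in a remark shortly after these results, but it is an added hypothesis that the corollary does not state. Note, however, that the paper's own proof does not address attainment either: what the one-liner (equivalently, your easy direction together with the scaling-law description of the feasible set as $\{e^{-c(\omega)}\omega:\,\omega\in\S(S_+\hil)\}\cup\{0\}$) yields unconditionally is the equality of values, the maximum being attained precisely when an optimizer of \eqref{weighted Q} exists. So: same approach, correct core, with your extra existence discussion being either redundant (relative to what the paper actually proves) or dependent on a semi-continuity hypothesis you would need to add explicitly.
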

\begin{proof}
Immediate from the characterizations of the optimal $\tau$ in \eqref{Tr argmax=max} and \eqref{Tr argmax=max2}.
\end{proof}

\begin{rem}
Note that in the case $S_+\le S_-$, the condition $\sum_{x\in\X}P(x)\D^{q_x}(\tau\|\ch{x})=0$ is necessary for the optimality of $\tau$, but not sufficient. Indeed, it is easy to see from the scaling property \eqref{scaling1} that 
\begin{align*}
&\left\{\tau\in\B(S_+\hil)\p:\,\sum_{x\in\X}P(x)\D^{q_x}(\tau\|\ch{x})=0\right\}\\
&\ds\ds=
\left\{\exp\bz-\sum_{x\in\X}P(x)\D^{q_x}(\tau\|\ch{x})\jz\omega:\,\omega\in\S(S_+\hil)\right\}\cup\{0\}.
\end{align*}
On the other hand, each $\tau\in\B(S_+\hil)\pne$ with
$\sum_{x\in\X}P(x)\D^{q_x}(\tau\|\ch{x})=0$ has the extremality property
\begin{align*}
\Tr(\lambda\tau)-\sum_{x\in\X}P(x)\D^{q_x}(\lambda\tau\|\ch{x})
&=
(\lambda-\lambda\log\lambda)\Tr\tau\\
&<\Tr\tau=\Tr(\tau)-\sum_{x\in\X}P(x)\D^{q_x}(\tau\|\ch{x})
\end{align*}
for every $\lambda\in(0,1)\cup(1,+\infty)$, where the first equality is again due to the scaling property \eqref{scaling1}.
\end{rem}

\begin{rem}
Under the conditions given in Lemma \ref{lemma:psi rep},
for the supremum in \eqref{weighted Q} to be a maximum, it is sufficient if
the infimum in \eqref{left divrad} is a minimum.
For the latter, a natural sufficient condition is that each $D^{q_x}$ with $x\in\supp P$
is lower semi-continuous in its first argument (when $P$ is a probability measure),
or continuous in its first argument with its support dominated by the support of a
fixed second argument (when $P$ can take negative values),
since the domain of optimization,
namely, $\S(S_+\hil)$, is a compact set. 

Examples of quantum relative entropies that are lower semi-continuous in their first argument (in fact, in both of their arguments), include
$D^{\meas}$, $\DU$, and their $\gamma$-weighted versions, as well
$D^{\max}$, and obviously, all possible convex combinations of these. 
$\DU$ and $D^{\max}$ are also clearly continuous in their first argument when its support is dominated by the support of a fixed second argument. 
\end{rem}

\begin{rem}\label{rem:Dalpha rep}
Using 
Lemma \ref{lemma:psi rep}, we get 
that for every $\alpha\in[0,1)\cup(1,+\infty)$,
\begin{align}
\D_{\alpha}^{\bary,\qv}(\rho\|\sigma)
&=
\frac{1}{\alpha-1}\log\Q_{\alpha}^{\bary,\qv}(\rho\|\sigma)-\frac{1}{\alpha-1}\log\Tr\rho
\label{Renyi bq1}\\
&=
\frac{1}{1-\alpha}
\inf_{\omega\in\S(\rho^0\hil)}\{\alpha\D^{\qn}(\omega\|\rho)
+(1-\alpha)\D^{\qo}(\omega\|\sigma)\}
-\frac{1}{\alpha-1}\log\Tr\rho\label{Renyi bq2}\\
&=
\frac{1}{1-\alpha}
\inf_{\omega\in\S(\rho^0\hil)}\left\{\alpha\D^{\qn}\bz\omega\Big\|\frac{\rho}{\Tr\rho}\jz
+(1-\alpha)\D^{\qo}\bz\omega\|\sigma\jz\right\}
+\log\Tr\rho\label{Renyi bq3}\\
&=\frac{1}{1-\alpha}
\inf_{\omega\in\S(\rho^0\hil)}\left\{\alpha\D^{\qn}\bz\omega\Big\|\frac{\rho}{\Tr\rho}\jz
+(1-\alpha)\D^{\qo}\bz\omega\Big\|\frac{\sigma}{\Tr\sigma}\jz\right\}
+\log\Tr\rho-\log\Tr\sigma,\label{Renyi bq4}
\end{align}
where the first equality is by definition,
the second equality follows from \eqref{psi alpha},
and the third and the fourth equalities from the scaling law
\eqref{scaling2}. Moreover, for $\alpha\in(0,1)$, the infimum can be 
taken over $\S(\hil)$, i.e., 
\begin{align}
\D_{\alpha}^{\bary,\qv}(\rho\|\sigma)
&=
\frac{1}{1-\alpha}
\inf_{\omega\in\S(\hil)}\{\alpha\D^{\qn}(\omega\|\rho)
+(1-\alpha)\D^{\qo}(\omega\|\sigma)\}
-\frac{1}{\alpha-1}\log\Tr\rho\label{Renyi bq6}\\
&=\frac{1}{1-\alpha}
\inf_{\omega\in\S(\hil)}\left\{\alpha\D^{\qn}\bz\omega\Big\|\frac{\rho}{\Tr\rho}\jz
+(1-\alpha)\D^{\qo}\bz\omega\Big\|\frac{\sigma}{\Tr\sigma}\jz\right\}
+\log\Tr\rho-\log\Tr\sigma,\label{Renyi bq5}
\end{align}
because if $\omega^0\nleq\rho^0$ then 
$\D^{\qn}(\omega\|\rho)=\D^{\qn}\bz\omega\Big\|\frac{\rho}{\Tr\rho}\jz=+\infty$.
The situation is different for $\alpha=0$; see, e.g., \eqref{Renyi bq limit0}.

The above formulas explain the term
``barycentric R\'enyi divergence''.
\end{rem}

\begin{definition}
For $\alpha\in(0,1)$, any $\omega$ attaining the infimum in \eqref{Renyi bq2} will be called an
\ki{$\alpha$-weighted (left) $D^{\qv}$-center for $(\rho,\sigma)$}. 
\end{definition}

\subsection{Barycentric R\'enyi divergences are quantum R\'enyi divergences}
\label{sec:baryR is quantumR}

In this section we show that the barycentric R\'enyi $\alpha$-divergences
are quantum R\'enyi divergences for every $\alpha\in(0,1)$, provided that the defining 
quantum relative entropies are monotone under pinchings. This latter condition does not pose a serious restriction; indeed, 
all the concrete quantum relative entropies that we consider in this paper 
(e.g., measured, Umegaki, maximal, and the $\gamma$-weighted versions of these)
are monotone
under PTP maps, and hence also under pinchings.

Isometric invariance holds even without this mild restriction, and also for 
$\alpha>1$:

\begin{lemma}\label{lemma:isoinv}
All the quantities in \eqref{weighted Q}--\eqref{barycentric Dalpha def}
are invariant under isometries, and hence they are all quantum divergences.
\end{lemma}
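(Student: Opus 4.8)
The plan is to check isometric invariance directly from the definitions, reducing everything to the isometric invariance of each quantum relative entropy $D^{q_x}$ (which holds by Definition~\ref{def:q-extension}, since they are quantum divergences) together with the behaviour of the operators $S_+$, $S_-$, $\rho^0$, $\sigma^0$ under conjugation by an isometry. First I would fix an isometry $V:\hil\to\kil$ and a gcq channel $W\in\B(\X,\hil)\pne$, and record the elementary facts that $(VW_xV^*)^0=VW_x^0V^*$, that $\bigwedge_x VW_x^0V^*=V\bigl(\bigwedge_x W_x^0\bigr)V^*$, and hence that the operators $S_+,S_-$ built from $VWV^*$ are $VS_+V^*$, $VS_-V^*$. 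Consequently $\B((VS_+V^*)\kil)\p=\{V\tau V^*:\tau\in\B(S_+\hil)\p\}$, i.e.\ conjugation by $V$ is a bijection between the feasible sets in \eqref{weighted Q} for $W$ and for $VWV^*$.

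Next I would use $\Tr(V\tau V^*)=\Tr\tau$ and the isometric invariance of each $D^{q_x}$, namely $D^{q_x}(V\tau V^*\|VW_xV^*)=D^{q_x}(\tau\|W_x)$, to conclude that the objective function in \eqref{weighted Q} is unchanged under the substitution $\tau\mapsto V\tau V^*$; taking the supremum over the (bijectively identified) feasible sets gives $Q_P^{\bary,\qv}(VWV^*)=Q_P^{\bary,\qv}(W)$, and then \eqref{gcq psi} and \eqref{left divrad} follow immediately (for the latter, again using that $\S((VS_+V^*)\kil)=\{V\omega V^*:\omega\in\S(S_+\hil)\}$). For the two-variable quantities in Definition~\ref{def:barycentric Renyi} the argument is identical, now using that $(V\rho V^*)^0=V\rho^0V^*$ so the constraint $\tau^0\le\rho^0$ transforms correctly; this handles \eqref{barycentric Qalpha def}, \eqref{barycentric psialpha def}, and then \eqref{barycentric Dalpha def} and $D_P^{\bary,\qv}$ since these are obtained from the $Q$-quantities by operations ($\log$, affine rescaling by $\alpha$, normalisation by traces) that only involve $\Tr\rho,\Tr\sigma$, which are themselves isometry-invariant. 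Finally, invariance under isometries is exactly the defining property of a quantum divergence (the domain closure condition $V(\cD\cap\B(\X,\bC^{d_1}))V^*\subseteq\cD$ is immediate here because the domain is all of $\cup_d\B(\X,\bC^d)\pne$, which is manifestly closed under isometric conjugation), so the last clause of the statement follows.

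There is essentially no hard part: the only thing to be slightly careful about is the bookkeeping of supports and the meet operation $\wedge$ under conjugation by a (non-surjective) isometry, and the fact that one must verify the feasible set is mapped \emph{onto} the new feasible set, not merely into it — but this is clear since $V$ restricts to a unitary from $\hil$ onto $\ran V$ and $S_+\hil$ onto $(VS_+V^*)\kil$. I would therefore present the proof compactly, proving the gcq-channel case \eqref{weighted Q}--\eqref{left divrad} in detail and remarking that Definition~\ref{def:barycentric Renyi} is the special case $\X=\{0,1\}$ up to the cosmetic replacement of $S_+$ by $\rho^0$ at $\alpha\in\{0\}\cup[1,\infty)$, which is handled the same way.

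\begin{proof}
Let $V:\,\hil\to\kil$ be an isometry. For any $A\in\B(\hil)\p$ we have $(VAV^*)^0=VA^0V^*$, and for projections $P_1,\dots,P_r\in\bP(\hil)$ one has $\bigwedge_{i}(VP_iV^*)=V\bigl(\bigwedge_i P_i\bigr)V^*$, since $\ran(VP_iV^*)=V(\ran P_i)$ and $V$ is injective. Hence, writing $W':=VWV^*=(VW_xV^*)_{x\in\X}$ and letting $S_+',S_-'$ be the operators associated with $W'$ as in Definition~\ref{defin:weighted Q}, we get $S_+'=VS_+V^*$ and $S_-'=VS_-V^*$. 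In particular $V$ restricts to a unitary from $S_+\hil$ onto $S_+'\kil$, so $\tau\mapsto V\tau V^*$ is a bijection from $\B(S_+\hil)\p$ onto $\B(S_+'\kil)\p$ and from $\S(S_+\hil)$ onto $\S(S_+'\kil)$.

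For $\tau\in\B(S_+\hil)\p$ we have $\Tr(V\tau V^*)=\Tr\tau$, and since each $D^{q_x}$ is a quantum divergence it is invariant under isometries, so $D^{q_x}(V\tau V^*\|VW_xV^*)=D^{q_x}(\tau\|W_x)$ for every $x\in\X$ (both sides taking the value $+\infty$ simultaneously by the support condition \eqref{supp const} and $(VW_xV^*)^0=VW_x^0V^*$). Therefore, for every $\tau\in\B(S_+\hil)\p$,
\begin{align*}
\Tr(V\tau V^*)-\sum_{x\in\X}P(x)D^{q_x}(V\tau V^*\|VW_xV^*)
=\Tr\tau-\sum_{x\in\X}P(x)D^{q_x}(\tau\|W_x).
\end{align*}
Taking the supremum over the bijectively identified feasible sets gives $Q_P^{\bary,\qv}(W')=Q_P^{\bary,\qv}(W)$, hence $\psi_P^{\bary,\qv}(W')=\psi_P^{\bary,\qv}(W)$ by \eqref{gcq psi}; the same computation with $\tau$ replaced by a state $\omega\in\S(S_+\hil)$ and the infimum over $\S(S_+'\kil)=V\S(S_+\hil)V^*$ gives $R_{D^{\qv},\bal}(W',P)=R_{D^{\qv},\bal}(W,P)$. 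Since $\Tr(VW_xV^*)=\Tr W_x$, the quantity $D_P^{\bary,\qv}$ is also invariant, being obtained from $Q_P^{\bary,\qv}$ by $\log$, by the affine rescaling $1/\prod_x(1-P(x))$, and by replacing each $W_x$ with $W_x/\Tr W_x$ (which commutes with conjugation by $V$).

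The quantities in Definition~\ref{def:barycentric Renyi} are the special case $\X=\{0,1\}$, $W_0=\rho$, $W_1=\sigma$, $P(0)=\alpha$, with the only difference that the constraint in \eqref{barycentric Qalpha def} is $\tau^0\le\rho^0$ rather than $\tau^0\le S_+$; since $(V\rho V^*)^0=V\rho^0V^*$, conjugation by $V$ is again a bijection between the feasible sets for $(\rho,\sigma)$ and for $(V\rho V^*,V\sigma V^*)$, and the identical computation gives the invariance of $Q_{\alpha}^{\bary,\qv}$, $\psi_{\alpha}^{\bary,\qv}$, and hence of $D_{\alpha}^{\bary,\qv}$ (using $\Tr(V\rho V^*)=\Tr\rho$).

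Finally, each of these quantities is defined on a domain of the form $\cup_{d\in\bN}\B(\X,\bC^d)\pne$ (or $\cup_{d}(\B(\bC^d)\pne\times\B(\bC^d)\pne)$ in the two-variable case), which is closed under conjugation by isometries; together with the invariance just proved, this shows that all the quantities in \eqref{weighted Q}--\eqref{barycentric Dalpha def} are quantum divergences.
\end{proof}
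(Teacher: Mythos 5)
Your proof is correct and follows essentially the same route as the paper's: identify $\tilde S_+=VS_+V^*$, observe that conjugation by $V$ is a bijection between the feasible sets, and invoke trace preservation together with the isometric invariance of each $D^{q_x}$; the paper merely spells this out for $Q_P^{\bary,\qv}$ and remarks that the other quantities follow, while you carry the bookkeeping through all of \eqref{weighted Q}--\eqref{barycentric Dalpha def} explicitly.
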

\begin{proof}
We prove the statement only for  $\Q_P^{\bary,\qv}$, as for the other quantities it 
either follows from that, or the proof goes the same way.
Let $W\in\B(\X,\hil)\pne$ be a gcq channel, $P\in\P_f^{\pm}(\X)$, and 
$V:\,\hil\to\kil$ be an isometry. 
Obviosuly, $\tilde S_+:=\bigwedge_{x:\,P(x)>0}(VW_xV^*)^0=V(\bigwedge_{x:\,P(x)>0}W_x^0)V^*
=VS_+V^*$, and for any $\tau\in\B(\tilde S_+\kil)\p$ there exists a unique 
$\hat\tau\in\B(S_+\hil)\p$ such that $\tau=V\hat\tau V^*$. Thus,
\begin{align}\label{isoinv proof 1}
\Q_P^{\bary,\qv}(VWV^*)&=
\sup_{\tau\in\B(\tilde S_+\kil)\p}
\Bigg\{\Tr\tau-\sum_{x\in\X}P(x)\D^{q_x}(\tau\|V\ch{x}V^*)\Bigg\}\\
&=
\sup_{\hat\tau\in\B(S_+\hil)\p}
\Bigg\{\Tr V\hat\tau V^*-\sum_{x\in\X}P(x)\D^{q_x}(V\hat\tau V^*\|V\ch{x}V^*)\Bigg\}\\
&=
\sup_{\hat\tau\in\B(S_+\hil)\p}
\Bigg\{\Tr \hat\tau-\sum_{x\in\X}P(x)\D^{q_x}(\hat\tau \|\ch{x})\Bigg\}\\
&=
\Q_P^{\bary,\qv}(W),
\end{align}
where the third  equality follows by the isometric invariance of the relative entropies. 
\end{proof}

\bigskip

Recall that $D^q$ is said to be monotone under pinchings if 
\begin{align*}
D^q\bz\sum_{i=1}^r P_i\rho P_i\Big\|\sum_{i=1}^r P_i\sigma P_i\jz
\le
D^q(\rho\|\sigma)
\end{align*}
for any $\rho,\sigma\in\B(\hil)\pne$ and $P_1,\ldots,P_r\in\bP(\hil)$ such that 
$\sum_{i=1}^r P_i=I$. 

\begin{lemma}\label{lemma:commuting divrad}
Let $W\in\B(\X,\hil)\pne$ be a gcq channel that is classical on the support of some 
$P\in\P_f(\X)$, i.e., 
there exists an ONB $(e_i)_{i=0}^{d-1}$ in $\hil$ such that 
$W_x=\sum_{i=0}^{d-1}\wtilde W_x(i)\pr{e_i}$, 
where $\wtilde W_x(i):=\inner{e_i}{W_xe_i}$, $i\in[d]$, 
$x\in\supp P$.
If all $D^{q_x}$, $x\in\supp P$, are monotone under pinchings then 
\begin{align}\label{Q(W,P) classical}
\QP{W}=\sum_{i\in \tilde S}\prod_{x\in\supp P}\wtilde W_x(i)^{P(x)},
\end{align}
where $\displaystyle{\tilde S:=\medcap_{x\in\supp P}\supp \wtilde W_x}$
and $\supp \wtilde W_x=\{i\in[d]:\,\wtilde W_x(i)>0\}$; moreover, 
there exists a unique optimal $\tau$ in \eqref{weighted Q}, given by 
\begin{align}\label{commuting tau}
\tau_P^{\qv}(W):=\tau_P(\wtilde W):=\sum_{i\in \tilde S}\pr{e_i}\prod_{x\in\supp P}\wtilde W_x(i)^{P(x)}\ds.
\end{align}
\end{lemma}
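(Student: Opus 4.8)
The plan is to reduce the non-commutative supremum in \eqref{weighted Q} to a purely classical optimization, using the pinching $\Phi_P(X):=\sum_{i=0}^{d-1}\pr{e_i}X\pr{e_i}$ onto the diagonal in the basis $(e_i)_{i=0}^{d-1}$ that jointly diagonalizes all $W_x$, $x\in\supp P$. Since $P$ is a probability measure, $S_+=\bigwedge_{x:\,P(x)>0}W_x^0=\bigwedge_{x\in\supp P}W_x^0$ and $S_-=I$, so $S_+\le S_-$ and the conditions of Lemma \ref{lemma:psi rep}(i) hold; in particular $\QP{W}=\sup_{\tau\in\B(S_+\hil)\p}\{\Tr\tau-\sum_xP(x)D^{q_x}(\tau\|W_x)\}$. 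First I would argue the supremum is unchanged if we restrict to $\tau$ that are diagonal in $(e_i)_{i=0}^{d-1}$: given any admissible $\tau$, the pinched operator $\Phi_P(\tau)$ satisfies $\Tr\Phi_P(\tau)=\Tr\tau$, and by monotonicity of each $D^{q_x}$ under pinchings together with $\Phi_P(W_x)=W_x$ for $x\in\supp P$, we get $D^{q_x}(\Phi_P(\tau)\|W_x)\le D^{q_x}(\tau\|W_x)$ for every $x\in\supp P$; since $P(x)\ge0$ this gives $\sum_xP(x)D^{q_x}(\Phi_P(\tau)\|W_x)\le\sum_xP(x)D^{q_x}(\tau\|W_x)$, hence the objective does not decrease. (One must also check $\Phi_P(\tau)^0\le S_+$, which holds because $\Phi_P(\tau)$ is supported on the span of those $e_i$ with $\langle e_i,\tau e_i\rangle>0$, a subspace of $\ran S_+$ once we also note $\langle e_i,\tau e_i\rangle=0$ whenever $e_i\notin\ran S_+$ — this uses $\tau^0\le S_+$ for the original $\tau$.)

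Next, for $\tau$ diagonal with entries $t(i):=\langle e_i,\tau e_i\rangle\ge0$ supported inside $\ran S_+$, the quantity $\sum_xP(x)D^{q_x}(\tau\|W_x)$ becomes classical: each $D^{q_x}$ restricted to commuting pairs equals the classical relative entropy $D(\tilde t\|\wtilde W_x)$ by Definition \ref{def:q-extension} and the uniqueness of the commutative extension (Lemma \ref{lemma:commuting extension}). So the supremum collapses to
\begin{align*}
\QP{W}=\sup_{\substack{t\in[0,+\infty)^{[d]}\\ \supp t\subseteq \tilde S}}\left\{\sum_{i}t(i)-\sum_{x\in\supp P}P(x)D(t\|\wtilde W_x)\right\},
\end{align*}
where the restriction $\supp t\subseteq\tilde S=\medcap_{x\in\supp P}\supp\wtilde W_x$ appears because any $i$ with $t(i)>0$ but $\wtilde W_x(i)=0$ for some $x\in\supp P$ forces $D(t\|\wtilde W_x)=+\infty$ while the corresponding positive-$P(x)$ term makes the objective $-\infty$ (using $P(x)\ge0$); and conversely $\tilde S\subseteq\ran S_+$. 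This right-hand side is exactly the classical $Q_P^{\bary,\cl}(\wtilde W)$ from \eqref{cl multiR2}, which by Lemma \ref{lemma:cl Q uniqueness} (condition (i) applies on $\supp P$ after noting that only the common refinement matters, or more directly condition \ref{Q uniqueness ii} since $P$ is a probability measure) equals $\hat Q_P(\wtilde W)$, and \eqref{cl multiR1} with $P$ a probability measure gives precisely $\sum_{i\in\tilde S}\prod_{x\in\supp P}\wtilde W_x(i)^{P(x)}$. This establishes \eqref{Q(W,P) classical}.

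For the uniqueness of the optimal $\tau$, I would combine two facts. On the diagonal, Lemma \ref{lemma:optimization} applied after the $(\omega,t)$-reparametrization (as in the proof of Lemma \ref{lemma:psi rep}(i)) shows the classical optimum is attained uniquely at $t(i)=\prod_{x\in\supp P}\wtilde W_x(i)^{P(x)}$ for $i\in\tilde S$ and $t(i)=0$ otherwise — i.e., at $\tau_P(\wtilde W)$; this uses that the classical relative entropy is strictly convex in its first argument on the relevant support, so $c(\omega)=\sum_xP(x)D(\omega\|\wtilde W_x)$ is strictly minimized at a unique $\omega$. To upgrade uniqueness from diagonal $\tau$ to all $\tau\in\B(S_+\hil)\p$, I would use that the pinching step is strict: if $\tau$ is optimal and not diagonal, then for at least one $x\in\supp P$ with $P(x)>0$ the inequality $D^{q_x}(\Phi_P(\tau)\|W_x)\le D^{q_x}(\tau\|W_x)$ must be strict — here I would invoke that every $D^{q_x}$ under consideration is lower semi-continuous and monotone under pinchings with equality forcing $\tau=\Phi_P(\tau)$; more robustly, one observes directly that if $\tau$ is optimal then by Lemma \ref{lemma:psi rep}(i) we have $\Tr\tau=\QP{W}$ and $\sum_xP(x)D^{q_x}(\tau\|W_x)=0$, while $\Tr\Phi_P(\tau)=\Tr\tau=\QP{W}$ and $\sum_xP(x)D^{q_x}(\Phi_P(\tau)\|W_x)\le 0$, so $\Phi_P(\tau)$ is also optimal, hence equals $\tau_P(\wtilde W)$ by diagonal uniqueness; then $\Tr\tau=\Tr\tau_P(\wtilde W)$ and non-negativity of each $D^{q_x}(\tau\|W_x)$ plus $\sum_xP(x)D^{q_x}(\tau\|W_x)=0$ with $P(x)\ge 0$ forces $D^{q_x}(\tau\|W_x)=0$ for all $x\in\supp P$, which combined with the support constraint $\tau^0\le S_+\le W_x^0$ and strict positivity/faithfulness of the relative entropy on normalized states pins down $\tau=\tau_P(\wtilde W)$. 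The main obstacle I anticipate is this last uniqueness argument: making precise why a non-diagonal optimal $\tau$ cannot occur requires either a faithfulness (strict positivity up to scaling) property of the $D^{q_x}$ — which holds for $D^{\meas},\DU,D^{\max}$ and their $\gamma$-weighted versions but is not assumed in general — or a strict pinching inequality; I would handle it via the faithfulness route, noting that $D^{q_x}(\tau\|W_x)=0$ together with $\Tr\tau=\Tr\tau_P(\wtilde W)$ and the scaling law \eqref{scaling2} reduces to the equality case of the classical Pinsker-type bound $D(\tau\|W_x)\ge -(\Tr\tau)\log(\Tr W_x/\Tr\tau)$, forcing $\tau$ proportional to $W_x$ for each $x\in\supp P$, hence $\tau$ diagonal.
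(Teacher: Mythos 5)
Your derivation of \eqref{Q(W,P) classical} is essentially the paper's own argument: pinch with $\E(\cdot)=\sum_i\pr{e_i}(\cdot)\pr{e_i}$, use monotonicity of the $D^{q_x}$ under pinchings together with $\E(W_x)=W_x$ and $P(x)\ge 0$ to restrict the supremum in \eqref{weighted Q} to $\tau$ diagonal in $(e_i)_i$ and supported in $\tilde S$, then evaluate the resulting classical optimization. The paper finishes the classical step by optimizing each diagonal entry separately with Lemma \ref{lemma:optimization}, while you route it through $Q_P^{\bary,\cl}$ and Lemma \ref{lemma:cl Q uniqueness}; this is an equivalent finish, and your verification that the pinched $\tau$ stays supported in $S_+$ is correct. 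The uniqueness of the optimizer among diagonal $\tau$ is also fine.

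The gap is in your attempt to upgrade uniqueness from diagonal $\tau$ to arbitrary $\tau\in\B(S_+\hil)\p$. The step ``non-negativity of each $D^{q_x}(\tau\|W_x)$ plus $\sum_xP(x)D^{q_x}(\tau\|W_x)=0$ with $P(x)\ge 0$ forces $D^{q_x}(\tau\|W_x)=0$ for all $x\in\supp P$'' is false: non-negativity of a quantum relative entropy is only required on pairs of density operators, and an optimal $\tau$ is not normalized ($\Tr\tau=\QP{W}$). Already in the commutative case with $\X=\{0,1\}$, $P(0)=\alpha$, the unique optimizer $\tau_\alpha=\sum_i\rho(i)^{\alpha}\sigma(i)^{1-\alpha}\pr{e_i}$ satisfies $D(\tau_\alpha\|\rho)=(1-\alpha)\sum_i\tau_\alpha(i)\bigl(\log\sigma(i)-\log\rho(i)\bigr)\ne 0$ in general; only the $P$-weighted sum vanishes. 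This also sinks the fallback via the equality case of \eqref{Dq lower bound}: even if each individual term did vanish, that bound's equality case does not force $\tau$ proportional to $W_x$ unless $D^{q_x}$ is strictly trace-monotone, which is not among the hypotheses (and which you yourself note is not assumed). What optimality of a general $\tau$ actually gives, combining $\E(\tau)=\tau_P(\wtilde W)$, $\Tr\E(\tau)=\Tr\tau$, pinching monotonicity and $P\ge 0$, is $D^{q_x}(\tau\|W_x)=D^{q_x}(\E(\tau)\|W_x)$ for every $x\in\supp P$, i.e.\ equality in the pinching data processing inequality; concluding $\tau=\E(\tau)$ from this would require an equality condition for that DPI or strict positivity of the $D^{q_x}$, neither of which is assumed in the lemma. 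The paper does not attempt this equality-case analysis: its proof restricts the supremum to diagonal $\tau$ via the pinching inequality and obtains the value and the unique optimizer directly from the entrywise scalar optimization of Lemma \ref{lemma:optimization}. If you want to keep your stronger route, you would need to add a strictness hypothesis (e.g.\ strict positivity of the $D^{q_x}$, as holds for $D^{\meas}$, $\DU$, $D^{\max}$) or verify the DPI equality case for the specific relative entropies in play.
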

\begin{proof}
If $S_+=0$ then $\QP{W}=0$, and the RHS of \eqref{Q(W,P) classical} is an empty sum, whence the equality in \eqref{Q(W,P) classical} holds trivially. 

Thus, for the rest we assume that $S_+\ne 0$. 
Let $\E(\valt):=\sum_{i=0}^{d-1}\pr{e_i}(\valt)\pr{e_i}$ be the pinching corresponding to 
the joint eigenbasis of the $W_x$, $x\in\supp P$, guaranteed by the classicality assumption. For any $\tau\in\B(S_+\hil)\p$,
\begin{align*}
\Tr\tau-\sum_{x\in\X}P(x)\underbrace{\D^{q_x}(\tau\|\ch{x})}_{\ge \D^{q_x}(\E(\tau)\|\E(\ch{x}))}
&\le
\underbrace{\Tr\tau}_{=\Tr\E(\tau)}-\sum_{x\in\X}P(x)\D^{q_x}(\E(\tau)\|\underbrace{\E(\ch{x})}_{=\ch{x}})\\
&=
\Tr\E(\tau)-\sum_{x\in\X}P(x)\D^{q_x}(\E(\tau)\|\ch{x})),
\end{align*}
where the inequality follows from the monotonicity of the $D^{q_x}$ under pinchings.
Thus, the supremum in \eqref{weighted Q} can be restricted to $\tau$ operators that can be written as 
$\tau=\sum_{i=1}^d\tilde\tau(i)\pr{e_i}$ with some $\tilde\tau(i)\in[0,+\infty)$, $i\in[d]$.
Clearly, $\tau^0\le S_+$ is equivalent to $\supp\tilde\tau\subseteq\tilde S$.
For any such $\tau$,
\begin{align*}
\Tr\tau-\sum_{x\in\X}P(x)\D^{q_x}(\tau\|\ch{x})
&=
\Tr\tau-\sum_{x\in\X}P(x)
\sum_{i\in \tilde S}[\tilde\tau(i)\log\tilde\tau(i)-\tilde\tau(i)\log\ch{x}(i)]\\
&=
\sum_{i\in \tilde S}\Big[\tilde\tau(i)-\tilde\tau(i)\log\tilde\tau(i)+\tilde\tau(i)\sum_{x\in\supp P}P(x)\log\wtilde W_x(i)
\Big].
\end{align*}
The supremum of this over all such $\tau$ is
\begin{align*}
\sum_{i\in \tilde S}e^{\sum_{x\in\supp P}P(x)\log\wtilde W_x(i)}=
\sum_{i\in \tilde S}\prod_{x\in\supp P}\wtilde W_x(i)^{P(x)},
\end{align*}
which is uniquely attained at the $\tau=\tau_P^{\qv}(W)$ given in \eqref{commuting tau},
according to Lemma \ref{lemma:optimization}.
This proves \eqref{Q(W,P) classical}.
\end{proof}

\begin{cor}\label{cor:commuting radius}
In the setting of Lemma \ref{lemma:commuting divrad},
the $P$-weighted left $D^{\qv}$-radius of $W$ can be given explicitly as 
\begin{align*}
R_{\D^{\qv},\bal}(W,P)=-\log \sum_{i\in \tilde S}\prod_{x\in\supp P}\wtilde W_x(i)^{P(x)},
\end{align*}
and if $\tilde S\ne \emptyset$ then there is a unique 
$P$-weighted left $D^{\qv}$-center for $W$,
given by 
\begin{align}\label{classical center}
\omega^{\qv}_P(W):=\frac{\tau^{\qv}(W,P)}{\Tr\tau^{\qv}(W,P)}=\sum_{i\in \tilde S}\pr{e_i}\frac{\prod_{x\in\supp P}\wtilde W_x(i)^{P(x)}}{\sum_{j\in \tilde S}
\prod_{x\in\supp P}\wtilde W_x(j)^{P(x)}}=:\omega_P(\wtilde W).
\end{align}
\end{cor}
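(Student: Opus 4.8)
The plan is to obtain this as an immediate consequence of Lemma \ref{lemma:commuting divrad} and the optimality transfer recorded in Lemma \ref{lemma:psi rep}(i). First I would observe that since $P\in\P_f(\X)$ is a genuine probability measure, every $x$ with $P(x)\neq 0$ has $P(x)>0$, so $S_-=I$ and the hypothesis $S_+\le S_-$ of Lemma \ref{lemma:psi rep}(i) holds automatically. Moreover, in the $W$-basis $(e_i)_{i=0}^{d-1}$ provided by the classicality assumption one has $W_x^0=\sum_{i\in\supp\wtilde W_x}\pr{e_i}$ for every $x\in\supp P$, hence $S_+=\bigwedge_{x\in\supp P}W_x^0=\sum_{i\in\tilde S}\pr{e_i}$; in particular $S_+\neq 0$ if and only if $\tilde S\neq\emptyset$.

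Next I would simply combine the identity $-\log \QP{W}=R_{\D^{\qv},\bal}(W,P)$ from \eqref{psi optimization} with the explicit evaluation of $\QP{W}$ from \eqref{Q(W,P) classical} to read off
\begin{align*}
R_{\D^{\qv},\bal}(W,P)=-\log \QP{W}=-\log\sum_{i\in\tilde S}\prod_{x\in\supp P}\wtilde W_x(i)^{P(x)},
\end{align*}
which is the asserted radius formula. It also holds when $\tilde S=\emptyset$: then $S_+=0$, so the infimum in \eqref{left divrad} runs over the empty set $\S(S_+\hil)$ and equals $+\infty$, matching $-\log 0$.

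For the center, assume $\tilde S\neq\emptyset$. Then the operator $\tau_P^{\qv}(W)$ of \eqref{commuting tau} is non-zero, and by Lemma \ref{lemma:commuting divrad} it is the \emph{unique} maximizer in \eqref{weighted Q}; hence by Lemma \ref{lemma:psi rep}(i) the normalized state $\omega_P^{\qv}(W):=\tau_P^{\qv}(W)/\Tr\tau_P^{\qv}(W)$ is a $P$-weighted left $D^{\qv}$-center for $W$, and dividing \eqref{commuting tau} by $\Tr\tau_P^{\qv}(W)=\QP{W}$ produces the displayed formula for $\omega_P^{\qv}(W)$. Conversely, by the last assertion of Lemma \ref{lemma:psi rep}(i), any center $\omega$ yields a maximizer $\tau:=e^{-R_{\D^{\qv},\bal}(W,P)}\omega$ of \eqref{weighted Q}, which by the uniqueness just invoked must coincide with $\tau_P^{\qv}(W)$; since $\Tr\omega=1$ this forces $\omega=\tau_P^{\qv}(W)/\Tr\tau_P^{\qv}(W)$, giving uniqueness of the center. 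There is no serious obstacle here: the only care needed is the support bookkeeping of the first paragraph (that $P$ being a probability measure yields $S_-=I$, and that $\tilde S\neq\emptyset$ is exactly the non-degeneracy condition $S_+\neq 0$), while all the analytic content is already contained in Lemmas \ref{lemma:commuting divrad} and \ref{lemma:psi rep}.
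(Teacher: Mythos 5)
Your proof is correct and follows the same route as the paper, which simply deduces the corollary from Lemma \ref{lemma:psi rep} and Lemma \ref{lemma:commuting divrad}; your added bookkeeping ($S_-=I$ for a probability measure $P$, $S_+=\sum_{i\in\tilde S}\pr{e_i}$, and the uniqueness transfer between optimal $\tau$ and the center) just makes explicit what the paper leaves as "immediate."
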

\begin{proof}
Immediate from Lemmas \ref{lemma:psi rep} and \ref{lemma:commuting divrad}.
\end{proof}

Lemma \ref{lemma:commuting divrad} yields immediately the following:

\begin{cor}\label{cor:commuting Qalpha}
Assume that $\rho,\sigma\in\B(\hil)\pne$ commute, and hence can be written as
$\rho=\sum_{i=1}^d\tilde\rho(i)\pr{e_i}$,
$\sigma=\sum_{i=1}^d\tilde\sigma(i)\pr{e_i}$,
in some ONB $(e_i)_{i=1}^d$. If $D^{\qn}$ and $D^{\qo}$ are monotone under pinchings then 
\begin{align*}
Q_{\alpha}^{\bary,\qv}(\rho\|\sigma)=
Q_{\alpha}(\tilde\rho\|\tilde\sigma)=
\sum_{i=1}^d\tilde\rho(i)^{\alpha}\tilde\sigma(i)^{1-\alpha},\ds\ds\ds\alpha\in(0,1),
\end{align*}
and there exists a unique 
optimal $\tau$ in \eqref{barycentric Qalpha def}, 
given by 
\begin{align}\label{optimal comm tau}
\tau^{\qv}_{\alpha}(\rho\|\sigma):=
\tau_{\alpha}(\tilde\rho\|\tilde\sigma):=\sum_{i=1}^d\pr{e_i}\tilde\rho(i)^{\alpha}\tilde\sigma(i)^{1-\alpha}.
\end{align}
\end{cor}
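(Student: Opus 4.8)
The plan is to derive Corollary \ref{cor:commuting Qalpha} directly from Lemma \ref{lemma:commuting divrad} by instantiating the latter in the two-variable setting. First I would set $\X=\{0,1\}$, $W_0=\rho$, $W_1=\sigma$, and choose $P\in\P_f(\X)$ with $P(0)=\alpha\in(0,1)$, $P(1)=1-\alpha$; since $\alpha\in(0,1)$, this $P$ is a probability measure with $\supp P=\{0,1\}$. As already noted in the excerpt (see the remark following \eqref{barycentric Qalpha def3}), for $\alpha\in(0,1)$ we have $S_+=\rho^0\wedge\sigma^0$ and the supremum in \eqref{barycentric Qalpha def} may equally well be taken over $\B((\rho^0\wedge\sigma^0)\hil)\p$, so that $Q_\alpha^{\bary,\qv}(\rho\|\sigma)$ coincides with $Q_P^{\bary,\qv}(W)$ from \eqref{weighted Q}. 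The commutativity hypothesis says precisely that $W$ is classical on $\supp P=\{0,1\}$ in the sense of Lemma \ref{lemma:commuting divrad}, with the common ONB $(e_i)_{i=1}^d$ and $\wtilde W_0=\tilde\rho$, $\wtilde W_1=\tilde\sigma$.

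Next I would simply read off the conclusion. By hypothesis $D^{\qn}$ and $D^{\qo}$ are monotone under pinchings, so Lemma \ref{lemma:commuting divrad} applies and gives
\begin{align*}
Q_\alpha^{\bary,\qv}(\rho\|\sigma)=Q_P^{\bary,\qv}(W)=\sum_{i\in\tilde S}\wtilde W_0(i)^{\alpha}\wtilde W_1(i)^{1-\alpha}=\sum_{i\in\tilde S}\tilde\rho(i)^{\alpha}\tilde\sigma(i)^{1-\alpha},
\end{align*}
where $\tilde S=\supp\tilde\rho\cap\supp\tilde\sigma$. To match the stated formula $\sum_{i=1}^d\tilde\rho(i)^{\alpha}\tilde\sigma(i)^{1-\alpha}$, I would invoke the convention recorded early in the paper that real powers of a nonnegative number (operator) are taken only on the support, so that a term with $\tilde\rho(i)=0$ or $\tilde\sigma(i)=0$ contributes $0$; hence summing over $i\in\tilde S$ is the same as summing over all $i\in[d]$. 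This also identifies the value with $Q_\alpha(\tilde\rho\|\tilde\sigma)$ via \eqref{classical Q def} (for $\alpha\in(0,1)$ the support condition there is vacuous). Finally, Lemma \ref{lemma:commuting divrad} also furnishes the unique optimal $\tau$, namely $\tau_P^{\qv}(W)=\sum_{i\in\tilde S}\pr{e_i}\tilde\rho(i)^{\alpha}\tilde\sigma(i)^{1-\alpha}$, which by the same support convention equals $\sum_{i=1}^d\pr{e_i}\tilde\rho(i)^{\alpha}\tilde\sigma(i)^{1-\alpha}=\tau^{\qv}_{\alpha}(\rho\|\sigma)$, as claimed.

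There is essentially no hard step here; the corollary is a direct specialization. The only minor point requiring care is the bookkeeping between the index set $\tilde S$ appearing in Lemma \ref{lemma:commuting divrad} and the full index set $[d]$ in the statement of the corollary, which is handled by the $0\cdot(\pm\infty)=0$ and "powers on the support" conventions; and making explicit that for $\alpha\in(0,1)$ the set $S_+$ reduces to $\rho^0\wedge\sigma^0$ so that Definition \ref{def:barycentric Renyi} genuinely is the $|\X|=2$ case of Definition \ref{defin:weighted Q}. I do not anticipate any real obstacle.
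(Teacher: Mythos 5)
Your proposal is correct and follows the paper's own route: the paper derives Corollary \ref{cor:commuting Qalpha} as an immediate specialization of Lemma \ref{lemma:commuting divrad} with $\X=\{0,1\}$, $W_0=\rho$, $W_1=\sigma$, $P(0)=\alpha$, using exactly the identification of domains noted around \eqref{barycentric Qalpha def3}. Your extra bookkeeping about $\tilde S$ versus $[d]$ (terms with a zero entry vanish since $\alpha,1-\alpha>0$) is the same harmless observation implicit in the paper's statement.
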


As a special case of Corollary \ref{cor:commuting radius}, we get the following:

\begin{cor}
In the setting of Corollary \ref{cor:commuting Qalpha}, if 
$\rho^0\wedge\sigma^0\ne 0$ then for every $\alpha\in(0,1)$
there exists a unique 
$\alpha$-weighted $D^{\qv}$-center for $(\rho,\sigma)$,
given by 
\begin{align}\label{classical Hellinger}
\omega_{\alpha}^{\qv}(\rho\|\sigma):=\frac{\tau^{\qv}_{\alpha}(\rho\|\sigma)}{\Tr \tau^{\qv}_{\alpha}(\rho\|\sigma)}=
\sum_{i=1}^d\pr{e_i}\frac{\tilde\rho(i)^{\alpha}\tilde\sigma(i)^{1-\alpha}}{\sum_{j=1}^d
\tilde\rho(j)^{\alpha}\tilde\sigma(j)^{1-\alpha}}
=:\omega_{\alpha}(\tilde\rho\|\tilde\sigma).
\end{align}
\end{cor}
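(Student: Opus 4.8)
The plan is to deduce this as the two--variable specialization of Corollary~\ref{cor:commuting radius}. I would take $\X=\{0,1\}$, $W_0=\rho$, $W_1=\sigma$, and $P:=\alpha\egy_{\{0\}}+(1-\alpha)\egy_{\{1\}}$; since $\alpha\in(0,1)$ this is a genuine probability measure with $\supp P=\{0,1\}$, and the common eigenbasis $(e_i)_{i=1}^d$ of $\rho$ and $\sigma$ witnesses that $W$ is classical on $\supp P$ in the sense of Lemma~\ref{lemma:commuting divrad}, with $\wtilde W_0=\tilde\rho$ and $\wtilde W_1=\tilde\sigma$. As $D^{\qn}$ and $D^{\qo}$ are monotone under pinchings by hypothesis, Lemma~\ref{lemma:commuting divrad}, and hence Corollary~\ref{cor:commuting radius}, apply.

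The first genuine step is to match the index sets. The set $\tilde S=\supp\tilde\rho\cap\supp\tilde\sigma$ of Corollary~\ref{cor:commuting radius} satisfies, since $\rho$ and $\sigma$ commute, $\rho^0\wedge\sigma^0=\sum_{i\in\tilde S}\pr{e_i}$, so $\tilde S\ne\emptyset$ is precisely the running hypothesis $\rho^0\wedge\sigma^0\ne 0$. Granting this, Corollary~\ref{cor:commuting radius} already produces a unique $P$-weighted left $\D^{\qv}$-center for $W$, namely $\omega_P^{\qv}(W)=\omega_P(\wtilde W)$, and substituting $\prod_{x\in\supp P}\wtilde W_x(i)^{P(x)}=\tilde\rho(i)^{\alpha}\tilde\sigma(i)^{1-\alpha}$ rewrites it as exactly the operator $\omega_{\alpha}(\tilde\rho\|\tilde\sigma)$ displayed in \eqref{classical Hellinger} (the terms with $i\notin\tilde S$ contribute nothing, since $0^{\alpha}=0^{1-\alpha}=0$ for $\alpha\in(0,1)$, so summing over $\tilde S$ and over $[d]$ give the same operator).

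The only point beyond bookkeeping is that an $\alpha$-weighted $D^{\qv}$-center for $(\rho,\sigma)$ is by definition a minimizer of the infimum in \eqref{Renyi bq2}, taken a priori over $\S(\rho^0\hil)$, whereas a $P$-weighted left $\D^{\qv}$-center minimizes the functional in \eqref{left divrad} over $\S(S_+\hil)$ with $S_+=\rho^0\wedge\sigma^0$. I would reconcile the two using the support argument already in Remark~\ref{rem:Dalpha rep}: for $\alpha\in(0,1)$, any $\omega$ with $\omega^0\nleq\rho^0\wedge\sigma^0$ forces one of $D^{\qn}(\omega\|\rho)$, $D^{\qo}(\omega\|\sigma)$ to equal $+\infty$ by the support condition \eqref{supp const}, so the two problems have the same minimizers; and since the prefactor $\frac{1}{1-\alpha}$ is positive and the additive $\log\Tr\rho$ term is independent of $\omega$, minimizing \eqref{Renyi bq2} is equivalent to minimizing $\sum_{x}P(x)D^{q_x}(\omega\|W_x)$, i.e.\ to computing $R_{\D^{\qv},\bal}(W,P)$. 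Hence the two notions of center coincide, and both the uniqueness and the explicit formula transfer from Corollary~\ref{cor:commuting radius}. (Alternatively, one can bypass this by starting from Corollary~\ref{cor:commuting Qalpha}, which already yields the unique optimal $\tau$ in \eqref{barycentric Qalpha def}, and then invoking the bijection between optimal $\tau$'s and optimal $\omega$'s in part (ii) of Lemma~\ref{lemma:psi rep}, whose normalization $\omega=\tau/\Tr\tau$ delivers \eqref{classical Hellinger} directly.) There is essentially no hard step here: the entire content is the identification of these two optimization problems, everything else being immediate from the cited statements.
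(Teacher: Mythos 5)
Your proposal is correct and follows essentially the paper's own route: the paper proves this corollary as "immediate from Corollary \ref{cor:commuting Qalpha} and Lemma \ref{lemma:psi rep}", which is precisely your parenthetical alternative, while your primary route via Corollary \ref{cor:commuting radius} is just the same specialization of Lemma \ref{lemma:commuting divrad} and Lemma \ref{lemma:psi rep} phrased in the multi-variate setting. The extra care you take in reconciling the optimization domains $\S(\rho^0\hil)$ versus $\S((\rho^0\wedge\sigma^0)\hil)$ via the support condition \eqref{supp const} is sound and makes explicit a step the paper leaves implicit.
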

\begin{proof}
Immediate from Corollary \ref{cor:commuting Qalpha} and Lemma \ref{lemma:psi rep}.
\end{proof}

\begin{rem}
Note that $\tau_P^{\qv}(W)$ in \eqref{commuting tau} and 
$\omega^{\qv}_P(W)$ in \eqref{classical center}
are independent of $D^{\qv}$, as long as all $D^{q_x}$, $x\in\supp P$,
are monotone
under pinchings. Likewise,
$\tau_{\alpha}^{\qv}(\rho\|\sigma)$ in \eqref{optimal comm tau} and 
$\omega_{\alpha}^{\qv}(\rho\|\sigma)$ in \eqref{classical Hellinger}
are independent of $D^{\qn}$ and $D^{\qo}$, as long as both of them are monotone
under pinchings.
\end{rem}

Lemmas \ref{lemma:isoinv} and \ref{lemma:commuting divrad}, and Corollary \ref{cor:commuting Qalpha} together give the following:

\begin{prop}\label{prop:barycentric quantum Renyi}
If $D^{q_x}$, $x\in\supp P$, are quantum relative entropies that are 
monotone under 
pinchings then $Q_P^{\bary,\qv}$ is a quantum extension 
(in the sense of Definition \ref{def:quantumdiv}) of the classical $Q_P$
given in Definition \ref{def:classical QP}.

Likewise, if $D^{\qn}$ and $D^{\qo}$ are two quantum relative entropies that are monotone under 
pinchings then for every $\alpha\in(0,1)$ 
the corresponding barycentric R\'enyi $\alpha$-divergence 
$D_{\alpha}^{\bary,\qv}$ is a quantum R\'enyi $\alpha$-divergence
in the sense of Definition \ref{def:quantumdiv}.
\end{prop}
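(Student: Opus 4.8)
The plan is to assemble the three results the proposition is built from: isometric invariance (Lemma~\ref{lemma:isoinv}), the closed-form evaluation on classical families (Lemma~\ref{lemma:commuting divrad}), and its binary specialization (Corollary~\ref{cor:commuting Qalpha}). First I would record that by Lemma~\ref{lemma:isoinv} each of $Q_P^{\bary,\qv}$, $Q_\alpha^{\bary,\qv}$ and (for $\alpha\in(0,1)$, where $\prod_x(1-P(x))=\alpha(1-\alpha)\neq 0$) $D_\alpha^{\bary,\qv}$ is invariant under isometries, hence a quantum divergence. Then, by the remark following Definition~\ref{def:q-extension} together with Lemma~\ref{lemma:commuting extension}, to prove that these are quantum \emph{extensions} of the corresponding classical divergences it suffices to evaluate them on families of operators diagonal in a single orthonormal basis and to check that the value equals the classical divergence applied to the tuple of diagonal entries.

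Next I would carry out that evaluation for $Q_P^{\bary,\qv}$. Take $W$ classical on $\supp P$, so $W_x=\sum_i\wtilde W_x(i)\pr{e_i}$ for $x\in\supp P$ in some ONB $(e_i)_i$; since each $D^{q_x}$ with $x\in\supp P$ is monotone under pinchings, Lemma~\ref{lemma:commuting divrad} gives $Q_P^{\bary,\qv}(W)=\sum_{i\in\tilde S}\prod_{x\in\supp P}\wtilde W_x(i)^{P(x)}$ with $\tilde S=\medcap_{x\in\supp P}\supp\wtilde W_x$. I would then compare this with the explicit formula \eqref{cl multiR1} for $\hat Q_P$: when $P$ is a probability measure the ``$<0$'' alternative there cannot occur, and the bracket equals $1$ exactly for $i\in\tilde S$ and $0$ otherwise, so the right-hand side above is precisely $\hat Q_P((\wtilde W_x)_x)=Q_P((\wtilde W_x)_x)$ in the sense of Definition~\ref{def:classical QP}. (If one wishes to allow $P$ signed, one restricts to the cases of Lemma~\ref{lemma:cl Q uniqueness}, e.g.\ equal supports of the $W_x$, $x\in\supp P$, where the same computation again identifies the value with the classical $Q_P$.) Hence $Q_P^{\bary,\qv}$ agrees with $Q_P^{\comm}$ on jointly diagonalizable families, i.e.\ it is a quantum extension of $Q_P$.

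For the second assertion, fix $\alpha\in(0,1)$ and assume $D^{\qn},D^{\qo}$ are pinching-monotone. Corollary~\ref{cor:commuting Qalpha} then gives, for commuting $\rho=\sum_i\wtilde\rho(i)\pr{e_i}$, $\sigma=\sum_i\wtilde\sigma(i)\pr{e_i}$, that $Q_\alpha^{\bary,\qv}(\rho\|\sigma)=\sum_i\wtilde\rho(i)^\alpha\wtilde\sigma(i)^{1-\alpha}=Q_\alpha(\wtilde\rho\|\wtilde\sigma)$, the classical $Q_\alpha$ of the spectral data, so $Q_\alpha^{\bary,\qv}$ is a quantum extension of the classical $Q_\alpha$. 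Finally I would invoke the bijection of Remark~\ref{rem:D-Q bijection} between quantum extensions of $Q_\alpha$ and of $D_\alpha$: since the defining relation \eqref{barycentric Dalpha def} for $D_\alpha^{\bary,\qv}$ is exactly that bijection applied to $Q_\alpha^{\bary,\qv}$, it follows that $D_\alpha^{\bary,\qv}$ is a quantum extension of the classical R\'enyi $\alpha$-divergence, i.e.\ a quantum R\'enyi $\alpha$-divergence in the sense of Definition~\ref{def:quantumdiv}.

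The only genuinely non-automatic step is the identification in the second paragraph: matching the closed form produced by the optimization in Lemma~\ref{lemma:commuting divrad} with the classical $Q_P$ as it is actually defined, and, relatedly, keeping track of which class $P$ must lie in for ``the classical $Q_P$'' to be meaningful at all (Definition~\ref{def:classical QP} and Lemma~\ref{lemma:cl Q uniqueness}). Everything else is a direct quotation of the cited lemmas, so I expect the write-up to be short.
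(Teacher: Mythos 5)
Your proposal is correct and follows essentially the same route as the paper, whose proof consists precisely of combining Lemma~\ref{lemma:isoinv}, Lemma~\ref{lemma:commuting divrad}, and Corollary~\ref{cor:commuting Qalpha}; your explicit matching of the closed form with \eqref{cl multiR1} and the use of Remark~\ref{rem:D-Q bijection} simply spell out the details the paper leaves implicit.
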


\begin{rem}
Note that in the classical case the barycentric R\'enyi $\alpha$-divergence is equal to the 
unique classical R\'enyi $\alpha$-divergence also for $\alpha>1$; see
\eqref{classical variational2}.
On the other hand, if $D^{\qn}\ne D^{\qo}$ then it may happen that $D_{\alpha}^{\bary,\qv}$ is not a quantum 
R\'enyi $\alpha$-divergence for some $\alpha>1$; see Remark \ref{rem:notqRenyi}.
\end{rem}
\medskip

Note that for a fixed $i\in\cap_{x\in\supp P}\supp \wtilde W_x$, the expression 
$\prod_{x\in\supp P}\wtilde W_x(i)^{P(x)}$ in \eqref{Q(W,P) classical} is the weighted geometric mean 
of $(\wtilde W_x(i))_{x\in\supp P}$ with weights $(P(x))_{x\in\supp P}$.
This motivates the following:

\begin{definition}\label{def:Dq-mean}
If $D^{\qv}$, $W\in\B(\X,\hil)\pne$, and $P\in\P_f^{\pm}(\X)$ are such that there exists a unique optimizer
$\tau=:\tau^{\qv}_P(W)$ in \eqref{weighted Q} then this $\tau$ is called the 
$P$-weighted $D^{\qv}$-geometric mean of $W$, and is also 
denoted by $G_P^{D^{\qv}}(W):=\tau^{\qv}_P(W)$.

Similarly, if there exists a unique optimizer $\tau=:\tau_{\alpha}^{\qv}(\rho\|\sigma)$ in \eqref{barycentric Qalpha def} then it is called the 
$\alpha$-weighted $D^{\qv}$-geometric mean of $\rho$ and $\sigma$, and it is also denoted by 
$G^{D^{\qv}}_{\alpha}(\rho\|\sigma):=\tau_{\alpha}^{\qv}(\rho\|\sigma)$.
\end{definition}

\begin{rem}
Note that if $G_P^{D^{\qv}}(W)$ exists then by definition and by Lemma \ref{lemma:psi rep},
\begin{align*}
Q_P^{\bary,\qv}(W)=\Tr G_P^{D^{\qv}}(W)=Q_P^{G_P^{D^{\qv}}}(W)
\end{align*}
(in particular, if 
$G^{D^{\qv}}_{\alpha}(\rho\|\sigma)$ exists then 
$Q^{\bary,\qv}_{\alpha}(\rho\|\sigma)=\Tr G^{D^{\qv}}_{\alpha}(\rho\|\sigma)$),
which can be seen as a special case of \eqref{multiQ from G}.
\end{rem}

In classical statistics, the family of states 
$(\omega_{\alpha}(\tilde\rho\|\tilde\sigma))_{\alpha\in(0,1)}$ given in 
\eqref{classical Hellinger} is called the \ki{Hellinger arc}. 
(Note that 
if $\tilde\rho$ and $\tilde\sigma$ are probability distributions with equal supports then 
the Hellinger arc connects them in the sense that
$\lim_{\alpha\searrow 0}\omega_{\alpha}(\tilde\rho\|\tilde\sigma)=\tilde\rho$, 
$\lim_{\alpha\nearrow 1}\omega_{\alpha}(\tilde\rho\|\tilde\sigma)=\tilde\sigma$.)
This motivates the following:

\begin{definition}
Assume that $\rho,\sigma\in\B(\hil)\pne$ and $D^{\qn}$, $D^{\qo}$ are such that 
for every $\alpha\in(0,1)$ there exists a unique 
$\alpha$-weighted $D^{\qv}$-center $\omega_{\alpha}^{\qv}(\rho\|\sigma)$
for $(\rho,\sigma)$.
Then 
$(\omega_{\alpha}^{\qv}(\rho\|\sigma))_{\alpha\in(0,1)}$
is called the \ki{$D^{\qv}$-Hellinger arc} for $\rho$ and $\sigma$.

More generally, if 
$W$ and $D^{\qv}$ are such that 
for every $P\in\P_f(\X)$
there exists a unique 
$P$-weighted $D^{\qv}$-center $\omega_{P}^{\qv}(W)$ for $W$ 
then we call  
$(\omega_{P}^{\qv}(W))_{P\in\P_f(\X)}$
the \ki{$D^{\qv}$-Hellinger body} for $W$.
\end{definition}

\begin{rem}\label{rem:mean divcenter relations}
Note that by Lemma \ref{lemma:psi rep}, for given $P\in\P_f^{\pm}(\X)$, 
$W\in\B(\X,\hil)\pne$, and $D^{\qv}$, 
there exists a unique non-zero $P$-weighted $D^{\qv}$-geometric mean 
$G_P^{D^{\qv}}(W)=\tau_P^{\qv}(W)$ if and only if there exists a unique 
$P$-weighted $D^{\qv}$-center $\omega_P^{\qv}(W)$ for $W$, and in this case we have
\begin{align*}
\omega_P^{\qv}(W)&=\frac{G_P^{D^{\qv}}(W)}{\Tr G_P^{D^{\qv}}(W)},\\
Q_P^{\bary,\qv}(W)&=\Tr G_P^{D^{\qv}}(W),\\
0&=\sum_{x\in\X}P(x)D^{q_x}(G_P^{D^{\qv}}(W)\|W_x),\\
-\log Q_P^{\bary,\qv}(W)&=\sum_{x\in\X}P(x)D^{q_x}(\omega_P^{\qv}(W)\|W_x).
\end{align*}
\end{rem}
\medskip

The following is a multi-variate generalization of \cite[Theorem 3.6]{MO-cqconv}:

\begin{prop}\label{prop:Um Hellinger}
Let $P\in\P_f^{\pm}(\X)$, $W\in\B(\X,\hil)\pne$, and for every $x\in\supp P$, let $D^{q_x}=\DU$. 
Assume that $0\ne S_+\le S_-$. 
Then there exist a unique 
$P$-weighted $\DU$-geometric mean 
$G_P^{\DU}(W):=\tau^{\Um}_P(W)$
of $W$ and a unique $P$-weighted $\DU$-center
$\omega_P^{\Um}$ for $W$, given by 
\begin{align}
G_P^{\DU}(W)=\tau^{\um}_P(W)
&=
S_+e^{\sum_{x\in\supp P}P(x)S_+(\nlog W_x)S_+},\\
\omega^{\um}_P(W)
&=
\frac{G_P^{\DU}(W)}{\Tr G_P^{\DU}(W)}=
\frac{S_+e^{\sum_{x\in\supp P}P(x)S_+(\nlog W_x)S_+}}{\Tr S_+e^{\sum_{x\in\supp P}P(x)S_+(\nlog W_x)S_+}}\s,\label{Umegaki Hellinger}
\end{align}
respectively, and 
\begin{align*}
-\log Q_P^{\bary,\Um}(W)=-\log\Tr G_P^{\DU}(W)=\sum_{x\in\X}\DU(\omega^{\um}_P(W)\|W_x).
\end{align*}
\end{prop}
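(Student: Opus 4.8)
The plan is to reduce the statement to an unconstrained optimization over a single self-adjoint operator variable, exploiting the explicit formula \eqref{Umegaki def} for the Umegaki relative entropy together with its scaling property \eqref{scaling1}. By Lemma~\ref{lemma:psi rep}(i), since $0\ne S_+\le S_-$, a $\tau\in\B(S_+\hil)\p$ is optimal in \eqref{weighted Q} if and only if $\QP{W}=\Tr\tau$ and $\sum_{x\in\X}P(x)\DU(\tau\|W_x)=0$, and the optimal left $\DU$-center is then $\omega=\tau/\Tr\tau$. So it suffices to produce the unique maximizer of
\begin{align*}
\tau\in\B(S_+\hil)\p\ \mapsto\ \Tr\tau-\sum_{x\in\X}P(x)\DU(\tau\|W_x),
\end{align*}
show it is non-zero, and compute the value of the functional there.

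First I would restrict to $\tau$ with $\tau^0=S_+$ (the boundary cases $\tau=0$ and $\tau^0\lneq S_+$ can be handled at the end by a limiting/support argument, noting that for $x$ with $P(x)<0$ we need $\tau^0\le S_-$, which holds since $S_+\le S_-$, so all the relative entropies are finite and real-valued). For such $\tau$, writing everything inside $S_+\hil$, we have $\DU(\tau\|W_x)=\Tr\tau(\nlog\tau-S_+(\nlog W_x)S_+)$ when $S_+\le W_x^0$ for $x\in\supp P$ with $P(x)>0$, and one checks the restriction to $S_+\hil$ is legitimate for the negative-weight terms too because $S_+\le S_-\le W_x^0$ there. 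Substituting and using $\sum_x P(x)=1$, the functional becomes
\begin{align*}
\Tr\tau-\Tr\tau\nlog\tau+\Tr\tau\Bigl(\sum_{x\in\supp P}P(x)S_+(\nlog W_x)S_+\Bigr)
=\Tr\bigl[\tau-\tau\log\tau+\tau H\bigr],
\end{align*}
where $H:=\sum_{x\in\supp P}P(x)S_+(\nlog W_x)S_+$ is a fixed self-adjoint operator on $S_+\hil$. Now I would diagonalize $\tau$ and $H$ is not simultaneously diagonalizable with it in general — so instead I appeal to the operator-version of Lemma~\ref{lemma:optimization}: the map $\tau\mapsto\Tr[\tau-\tau\log\tau+\tau H]$ on positive operators on the finite-dimensional space $S_+\hil$ is strictly concave (the Hessian of $-\Tr\tau\log\tau$ is negative definite) and its gradient vanishes exactly when $-\log\tau+H=0$, i.e.\ $\tau=e^{H}$. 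This is precisely the scalar computation of Lemma~\ref{lemma:optimization} lifted to operators via the spectral calculus of the single operator $-\log\tau+H$; alternatively, one writes $\Tr[\tau-\tau\log\tau+\tau H]=-D(\tau\|e^{H})+\Tr e^{H}$ and invokes non-negativity of the Umegaki/classical relative entropy (Klein's inequality), which is strict unless $\tau=e^{H}$. Either way the unique maximizer is $\tau=e^{H}=S_+e^{\sum_{x\in\supp P}P(x)S_+(\nlog W_x)S_+}$, which is manifestly non-zero since $S_+\ne 0$, and the maximum value is $\Tr e^{H}=\Tr G_P^{\DU}(W)=\QP{W}$.

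Having identified $\tau_P^{\um}(W)=e^H$ as the unique optimizer in \eqref{weighted Q}, the remaining assertions follow mechanically: $\omega_P^{\um}(W)=e^H/\Tr e^H$ is the unique left $\DU$-center by Lemma~\ref{lemma:psi rep}(i); $Q_P^{\bary,\Um}(W)=\Tr e^H=\Tr G_P^{\DU}(W)$; and $-\log Q_P^{\bary,\Um}(W)=R_{\DU,\bal}(W,P)=\sum_{x\in\X}P(x)\DU(\omega_P^{\um}(W)\|W_x)$ by \eqref{psi optimization} together with the fact that $\omega_P^{\um}(W)$ attains the infimum. The main obstacle, such as it is, is the passage from the boundary-free case $\tau^0=S_+$ to all of $\B(S_+\hil)\p$: one must verify that $\tau=0$ and $\tau$ with $\tau^0\lneq S_+$ are never optimal. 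For $\tau=0$ the value is $0<\Tr e^H$; for $\tau^0\lneq S_+$, the same concavity/Klein-inequality argument applied on the subspace $\tau^0\hil$ shows the supremum over that face equals $\Tr P_0 e^{P_0 H P_0}$ with $P_0=\tau^0$, and operator convexity of $e^{(\cdot)}$ (or a direct pinching argument using that $\DU$ is monotone under pinchings, hence the full-support optimum dominates any face optimum) gives $\Tr P_0 e^{P_0 H P_0}\le\Tr e^H$ with equality only if $P_0=S_+$. This last comparison is the one genuinely non-routine point; everything else is bookkeeping with the scaling law and the spectral calculus.
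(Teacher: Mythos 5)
Your proof is correct and is essentially the paper's own argument: the key step in both is that for every $\tau\in\B(S_+\hil)\p$ the weighted sum collapses to a single Umegaki term, $\sum_xP(x)\DU(\tau\|W_x)=\DU(\tau\|S_+e^{H})$ with $H=\sum_{x\in\supp P}P(x)S_+(\nlog W_x)S_+$, after which the paper simply invokes the variational formula \eqref{Tropp variational} together with strict trace-monotonicity of $\DU$ (Remark \ref{rem:relentr pos}) to get the unique maximizer $\tau=S_+e^{H}$, and Remark \ref{rem:mean divcenter relations} for the remaining identities. Two small points: your displayed ``identity'' should read $\Tr[\tau-\tau\log\tau+\tau H]=\Tr\tau-D(\tau\|e^{H})$, the bound $\le\Tr e^{H}$ then following from Klein's inequality $D(\tau\|\sigma)\ge\Tr\tau-\Tr\sigma$ (the paper's \eqref{Dq lower bound}), and since this argument already applies verbatim to every $\tau\in\B(S_+\hil)\p$ (including $\tau=0$ and $\tau^0\lneq S_+$), the separate interior/boundary split and the comparison $\Tr P_0e^{P_0HP_0}\le\Tr e^{H}$ are unnecessary.
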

\begin{proof}
Note that for $\sigma:=S_+e^{\sum_{x\in\supp P}P(x)S_+(\nlog W_x)S_+}$ and any 
$\tau\in\B(S_+\hil)\p$, we have 
\begin{align}
\DU(\tau\|\sigma)
&=
\Tr\tau\log\tau-\Tr\tau\nlog\bz S_+e^{\sum_{x\in\supp P}P(x)S_+(\nlog W_x)S_+}\jz
\nn\\
&=
\Tr\tau\log\tau-\Tr\tau\sum_{x\in\supp P}P(x)\nlog W_x
\nn\\
&=
\sum_{x\in\supp P}P(x)\underbrace{\bz\Tr\tau\log\tau-\Tr\tau\nlog W_x\jz}_{=\DU(\tau\|W_x)}.
\label{DU center proof1}
\end{align}
Thus,
\begin{align}
Q_P^{\bary,\Um}(W)
&=
\sup_{\tau\in\B(S_+\hil)\p}\left\{\Tr\tau-\sum_{x\in\supp P}P(x) \DU(\tau\|W_x)\right\}
\nn\\
&=
\max_{\tau\in\B(S_+\hil)\p}\{\Tr\tau-\DU(\tau\|\sigma)\}
\label{DU center proof2}\\
&=
\Tr\sigma,\nn
\end{align}
where the first equality is by definition, the second equality is by \eqref{DU center proof1}, and
last equality is due to \eqref{Tropp variational}. 
Moreover, since $\DU$ is strictly trace monotone (see, e.g., 
\cite[Proposition A.4]{HMPB}), Remark \ref{rem:relentr pos} yields that 
$\tau=\sigma$ is the unique state attaining the maximum 
in \eqref{DU center proof2}. This proves the assertion about the 
$P$-weighted $\DU$-geometric mean, and the rest of the assertions follow from this according to Remark \ref{rem:mean divcenter relations}.
\end{proof}

\begin{cor}
Let $D^{q_x}=\DU$, $x\in\X$, and let $W\in\B(\X,\hil)\pne$ be such that 
$\wedge_{x\in\X_0}W_x^0\ne 0$ for any finite subset $W_0\subseteq W$. Then 
the $\DU$-Hellinger body for $W$ exists. 
\end{cor}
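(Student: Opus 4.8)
The plan is to show that the hypothesis supplies, for every finite $P\in\P_f(\X)$, precisely the ingredient needed to invoke Proposition~\ref{prop:Um Hellinger}, namely that the relevant support projection $S_+$ is nonzero. Recall that a $\DU$-Hellinger body for $W$ means that for every $P\in\P_f(\X)$ there is a \emph{unique} $P$-weighted $\DU$-center $\omega_P^{\qv}(W)$ for $W$. So first I would fix an arbitrary $P\in\P_f(\X)$ and set, as in Definition~\ref{defin:weighted Q}, $S_+:=\bigwedge_{x:\,P(x)>0}W_x^0$ and $S_-:=\bigwedge_{x:\,P(x)<0}W_x^0$. Since $P$ is a probability measure, $P(x)\ge 0$ for all $x$, hence by the Remark following Definition~\ref{def:barycentric Renyi} we have $S_-=I$, so trivially $S_+\le S_-$. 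Moreover $\supp P$ is a finite subset of $\X$, so $S_+=\bigwedge_{x\in\supp P}W_x^0$ is exactly the projection $\wedge_{x\in\X_0}W_x^0$ with $\X_0:=\supp P$ (identifying a finite subset of indices with the corresponding finite subfamily $W_0\subseteq W$). By hypothesis $\wedge_{x\in\X_0}W_x^0\ne 0$, so $0\ne S_+\le S_-$.

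Next I would simply apply Proposition~\ref{prop:Um Hellinger}: with $D^{q_x}=\DU$ for all $x\in\supp P$ (which holds since $D^{q_x}=\DU$ for all $x\in\X$) and the condition $0\ne S_+\le S_-$ just verified, that proposition yields a unique $P$-weighted $\DU$-center $\omega_P^{\um}(W)$ for $W$, given explicitly by \eqref{Umegaki Hellinger}. Since $P\in\P_f(\X)$ was arbitrary, the family $(\omega_P^{\qv}(W))_{P\in\P_f(\X)}$ of unique $P$-weighted $\DU$-centers is well-defined, which is exactly the definition of the $\DU$-Hellinger body for $W$.

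There is essentially no obstacle here: the corollary is a direct bookkeeping consequence of Proposition~\ref{prop:Um Hellinger}, once one observes two things — that a probability measure forces $S_-=I$ (hence $S_+\le S_-$ automatically), and that the hypothesis $\wedge_{x\in\X_0}W_x^0\ne 0$ for every finite subfamily is precisely the nonvanishing of $S_+$ for every finitely supported $P$. The only point requiring the smallest care is the translation between "finite subset $\X_0$ of the index set $\X$" and "finite subfamily $W_0\subseteq W$"; once that identification is made explicit, uniqueness of the center (for each $P$) comes for free from Proposition~\ref{prop:Um Hellinger}, which in turn rests on the strict trace monotonicity of $\DU$.
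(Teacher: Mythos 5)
Your proposal is correct and is exactly the argument the paper intends: the corollary is stated as an immediate consequence of Proposition~\ref{prop:Um Hellinger}, and your verification that $S_-=I$ (since $P\in\P_f(\X)$ is a probability measure) and $S_+=\bigwedge_{x\in\supp P}W_x^0\ne 0$ (by the hypothesis applied to the finite set $\supp P$) is precisely the bookkeeping needed to invoke it for every $P$.
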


\begin{rem}\label{rem:Um geomeric mean}
Note that for $P\in\P_f(\X)$, 
\begin{align*}
G_P^{\DU}(W)=\what G_{P,+\infty}(W),
\end{align*}
where the latter was defined in 
\eqref{multi log mean}.
\end{rem}

\begin{rem}\label{rem:Umebary=a-infty}
Note that in the $2$-variable case, 
\begin{align*}
D_{\alpha}^{\bary,\Um}(\rho\|\sigma)=D_{\alpha,+\infty}(\rho\|\sigma),\ds\ds\ds
\alpha\in(0,1)\cup(1,+\infty),
\end{align*}
where the latter was given in \eqref{D alpha infty}. This was already proved in
\cite[Theorem 3.6]{MO-cqconv}.
\end{rem}

\subsection{Homogeneity and scaling}

Note that the normalized relative entropies
$D^{q_0}_1$ and $D^{\qo}_1$ 
satisfy the scaling property \eqref{Renyi scaling} by assumption. This property is inherited by all the 
corresponding barycentric R\'enyi divergences $D_{\alpha}^{\qv}$. 
More generally, we have the following:

\begin{lemma}\label{lemma:barycentric sclaing}
For any $P\in\P_f^{\pm}(\X)$, 
any gcq channel $W\in\B(\X,\hil)\pne$ and any $t\in(0,+\infty)^{\X}$, 
\begin{align}
Q_P^{\bary,\qv}\bz(t_xW_x)_{x\in\X}\jz&=\bz\prod_{x\in\supp P}t_x^{P(x)}\jz\QP{W},
\label{multiQ scaling}\\
-\log Q_P^{\bary,\qv}\bz(t_xW_x)_{x\in\X}\jz
&=
-\log Q_P^{\bary,\qv}(W)-\sum_xP(x)\log t_x.
\label{multipsi scaling}
\end{align}
In particular, $Q_{\alpha}^{\bary,\qv}$ is homogeneous.
\end{lemma}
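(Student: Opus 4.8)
The plan is to prove the scaling identity \eqref{multiQ scaling} directly from the definition \eqref{weighted Q} of $Q_P^{\bary,\qv}$ using the scaling law \eqref{scaling2} for the generating relative entropies, and then to derive \eqref{multipsi scaling} and homogeneity as immediate corollaries. First I would observe that replacing each $W_x$ by $t_xW_x$ does not change the supports $W_x^0$, hence $S_+$ and $S_-$ are unchanged and the supremum in \eqref{weighted Q} is still over $\tau\in\B(S_+\hil)\p$. For such a $\tau$, applying \eqref{scaling2} to each term gives $D^{q_x}(\tau\|t_xW_x)=D^{q_x}(\tau\|W_x)-(\log t_x)\Tr\tau$, and one must be careful to note this holds even when the value is $+\infty$ (which happens precisely when $\tau^0\nleq W_x^0$, independently of $t_x$), so no cancellation issues arise.

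The key computation is then
\begin{align*}
\sum_{x\in\X}P(x)D^{q_x}(\tau\|t_xW_x)
&=\sum_{x\in\X}P(x)D^{q_x}(\tau\|W_x)-\Bigl(\sum_{x\in\X}P(x)\log t_x\Bigr)\Tr\tau,
\end{align*}
using $\sum_x P(x)=1$ is \emph{not} needed here; rather one uses $\sum_x P(x)(\log t_x)\Tr\tau = (\sum_x P(x)\log t_x)\Tr\tau$ and the convention \eqref{zero times infty} for the terms with $P(x)=0$. Substituting into \eqref{weighted Q} yields
\begin{align*}
Q_P^{\bary,\qv}\bz(t_xW_x)_{x\in\X}\jz
&=\sup_{\tau\in\B(S_+\hil)\p}\left\{\Tr\tau-\sum_{x\in\X}P(x)D^{q_x}(\tau\|W_x)+\Bigl(\sum_{x\in\X}P(x)\log t_x\Bigr)\Tr\tau\right\}\\
&=\sup_{\tau\in\B(S_+\hil)\p}\left\{\Bigl(1+\log u\Bigr)\Tr\tau-\sum_{x\in\X}P(x)D^{q_x}(\tau\|W_x)\right\},
\end{align*}
where $u:=\prod_{x\in\supp P}t_x^{P(x)}=\exp\bz\sum_{x}P(x)\log t_x\jz$. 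To evaluate this, substitute $\tau=u\tau'$; since $\tau\mapsto u\tau$ is a bijection of $\B(S_+\hil)\p$, and using $D^{q_x}(u\tau'\|W_x)=u D^{q_x}(\tau'\|W_x)+(u\log u)\Tr\tau'$ from \eqref{scaling1} together with $\Tr(u\tau')=u\Tr\tau'$, a short manipulation shows the bracketed expression becomes $u\{\Tr\tau'-\sum_x P(x)D^{q_x}(\tau'\|W_x)\}$, whence the whole supremum equals $u\cdot Q_P^{\bary,\qv}(W)$. This is \eqref{multiQ scaling}.

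Then \eqref{multipsi scaling} follows by taking $-\log$ of \eqref{multiQ scaling} (valid whether $Q_P^{\bary,\qv}(W)$ is zero, finite positive, or $+\infty$, since $u\in(0,+\infty)$), and homogeneity of $Q_{\alpha}^{\bary,\qv}$ is the special case $\X=\{0,1\}$, $t_0=t_1=t$, for which $\prod_{x\in\supp P}t_x^{P(x)}=t^{\alpha}t^{1-\alpha}=t$ when $\alpha\in(0,1)$ (so $\supp P=\{0,1\}$); the cases $\alpha=0,1$ are handled by the convention that $S_+=\rho^0$ and the analogous one-term computation. I do not anticipate a serious obstacle here; the only point requiring care is bookkeeping with the value $+\infty$ and the convention $0\cdot(\pm\infty)=0$ when $P(x)=0$, to make sure the scaling laws \eqref{scaling1}--\eqref{scaling2} are applied only where they are valid and that the substitution $\tau=u\tau'$ genuinely preserves the admissible set and all term values.
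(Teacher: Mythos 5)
Your proof is correct, and it takes a slightly different route from the paper's. The paper proves \eqref{multipsi scaling} first, by combining the representation \eqref{psi optimization} of $-\log Q_P^{\bary,\qv}(W)$ as the $P$-weighted relative entropy radius $\inf_{\omega\in\S(S_+\hil)}\sum_xP(x)D^{q_x}(\omega\|W_x)$ (Lemma \ref{lemma:psi rep}) with the scaling law \eqref{scaling2}: since the infimum runs over normalized states, each term shifts by the constant $-P(x)\log t_x$ with no change of variables needed, and \eqref{multiQ scaling} then follows by exponentiating. You instead work directly with the supremum definition \eqref{weighted Q} over unnormalized $\tau$, which forces you to use both \eqref{scaling1} and \eqref{scaling2} and the substitution $\tau=u\tau'$ with $u=\prod_x t_x^{P(x)}$; in that substitution the cancellation of the $(u\log u)\Tr\tau'$ terms does use $\sum_xP(x)=1$ (your earlier remark that this is not needed applies only to pulling out $\sum_xP(x)\log t_x$, so it would be worth stating the cancellation explicitly). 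The trade-off is that your argument is self-contained from the definition and handles the extended-real bookkeeping (supports unchanged under positive scaling, $\pm\infty$ terms preserved) in place, whereas the paper's route delegates exactly this bookkeeping to Lemma \ref{lemma:psi rep} and is consequently shorter. Your treatment of the "in particular" clause, including the separate one-term computation at $\alpha=0$ where the domain restriction is $\tau^0\le\rho^0$ rather than $\tau^0\le S_+$, is a sensible addition that the paper leaves implicit.
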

\begin{proof}
\eqref{multipsi scaling} is 
straightforward to verify from 
\eqref{psi optimization} and the scaling law \eqref{scaling2},
and \eqref{multiQ scaling} follows immediately from it.
\end{proof}

\begin{cor}\label{cor:barycentric sclaing}
The barycentric R\'enyi divergences satisfy the scaling law \eqref{Renyi scaling}, i.e., 
\begin{align}\label{barycentric scaling}
\D^{\bary,\qv}_{\alpha}(t\rho\|s\sigma)
=
\D^{\bary,\qv}_{\alpha}(\rho\|\sigma)+\log t-\log s,
\end{align}
for every 
$\rho,\sigma\in\B(\hil)\pne$, $t,s\in(0,+\infty)$, $\alpha\in[0,+\infty]$.
\end{cor}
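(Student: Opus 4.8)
The plan is to deduce the claimed scaling law from the scaling identity for the multi-variate quantities already proved in Lemma~\ref{lemma:barycentric sclaing}. Fix $\rho,\sigma\in\B(\hil)\pne$, $t,s\in(0,+\infty)$, and treat first the range $\alpha\in(0,1)\cup(1,+\infty)$. Specialize Definition~\ref{defin:weighted Q} to $\X=\{0,1\}$, $W=(\rho,\sigma)$, $P=(\alpha,1-\alpha)$. The first step is to observe that the two-variable quantity $Q_{\alpha}^{\bary,\qv}(\rho\|\sigma)$ of Definition~\ref{def:barycentric Renyi} equals $Q_P^{\bary,\qv}(W)$: for $\alpha>1$ this is immediate since then $S_+=W_0^0=\rho^0$, so the constraint $\tau^0\le S_+$ in \eqref{weighted Q} is precisely the constraint $\tau^0\le\rho^0$ in \eqref{barycentric Qalpha def}; for $\alpha\in(0,1)$ it is exactly the content of \eqref{barycentric Qalpha def3}.

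Next, since replacing $\rho$ by $t\rho$ and $\sigma$ by $s\sigma$ leaves all supports (hence $S_+$) unchanged, I would apply \eqref{multiQ scaling} with $t_0=t$, $t_1=s$ (note $\supp P=\{0,1\}$ for $\alpha\in(0,1)\cup(1,+\infty)$) to get $Q_{\alpha}^{\bary,\qv}(t\rho\|s\sigma)=t^{\alpha}s^{1-\alpha}\,Q_{\alpha}^{\bary,\qv}(\rho\|\sigma)$. Substituting this into the defining formula \eqref{barycentric Dalpha def} and using $\log\Tr(t\rho)=\log t+\log\Tr\rho$, the computation collapses to the elementary identity
\begin{align*}
\frac{1}{\alpha-1}\bz(\alpha-1)\log t+(1-\alpha)\log s\jz=\log t-\log s ,
\end{align*}
which is \eqref{barycentric scaling}. (Alternatively, the same conclusion is immediate from the representation \eqref{Renyi bq4}: its infimum term depends on $\rho,\sigma$ only through $\rho/\Tr\rho$ and $\sigma/\Tr\sigma$ and is therefore scale-invariant, while the additive term $\log\Tr\rho-\log\Tr\sigma$ transforms in the required way.)

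It then remains to handle $\alpha=0$ and the endpoint values $\alpha=1$, $\alpha=+\infty$. For $\alpha=0$ the constraint in \eqref{barycentric Qalpha def} is $\tau^0\le\rho^0$ rather than $\tau^0\le S_+=\sigma^0$, so Lemma~\ref{lemma:barycentric sclaing} does not apply verbatim; instead I would argue directly as in the proof of Lemma~\ref{lemma:psi rep}, writing $\tau=t'\omega$ with $\omega\in\S(\rho^0\hil)$ and $t'\ge 0$, and using the scaling laws \eqref{scaling1}--\eqref{scaling2} together with Lemma~\ref{lemma:optimization} to obtain $Q_{0}^{\bary,\qv}(t\rho\|s\sigma)=s\,Q_{0}^{\bary,\qv}(\rho\|\sigma)$; feeding this into \eqref{barycentric Dalpha def} gives \eqref{barycentric scaling} at $\alpha=0$. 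The cases $\alpha=1$ and $\alpha=+\infty$ follow by letting $\alpha\to1$ and $\alpha\to+\infty$ in the identity just established, invoking the continuity of $\alpha\mapsto D_{\alpha}^{\bary,\qv}$ from Proposition~\ref{prop:limits}.

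Overall the argument is routine once Lemma~\ref{lemma:barycentric sclaing} is in hand, and I do not expect a real obstacle; the only point needing a little care is matching the support constraint $\tau^0\le\rho^0$ used in Definition~\ref{def:barycentric Renyi} with the constraint $\tau^0\le S_+$ of Definition~\ref{defin:weighted Q}, which is exactly what forces the separate treatment of the ranges $\alpha\in(0,1)$ and $\alpha=0$.
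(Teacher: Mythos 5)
Your proof is correct and follows essentially the same route as the paper, whose argument is exactly "immediate from Lemma \ref{lemma:barycentric sclaing}, or alternatively from \eqref{Renyi bq4}"; your extra care with the support constraint at $\alpha=0$ and with the endpoints is sound. One minor simplification: since $D_1^{\bary,\qv}$ and $D_{\infty}^{\bary,\qv}$ are \emph{defined} as the limits \eqref{Db 1 def} (which exist by the monotonicity of Proposition \ref{prop:monconvex}), the endpoint cases follow by passing to the limit in the identity without invoking the extra hypotheses of Proposition \ref{prop:limits}.
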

\begin{proof}
Immediate from Lemma \ref{lemma:barycentric sclaing}, or alternatively, from
\eqref{Renyi bq4}.
\end{proof}

\subsection{Monotonicity in $\alpha$ and limiting values}
\label{sec:limits}

Monotonicity in the parameter $\alpha$ is a characteristic property 
of the classical R\'enyi divergences, which is inherited by the measured, the regularized measured, and the maximal R\'enyi divergences, and it also holds for the Petz-type R\'enyi divergences. 
The representations in Lemma \ref{lemma:psi rep} and Remark \ref{rem:Dalpha rep} 
show that barycentric R\'enyi divergences have the same monotonicity property.

\begin{prop}\label{prop:monconvex}
(i) For any $W\in\B(\X,\hil)\pne$, the maps 
\begin{align*}
P\mapsto Q_P^{\bary,\qv}(W)\ds\ds\text{and}\ds\ds
P\mapsto \log Q_P^{\bary,\qv}(W)
\end{align*}
are convex, and 
\begin{align*}
P \mapsto R_{\D^{\qv},\bal}(W,P)
\end{align*}
is concave, on $\P_f^{\pm}(\X)$. 

(ii)
For any fixed $\rho,\sigma\in\B(\hil)\pne$, 
\begin{align*}
&\alpha\mapsto\log Q_{\alpha}^{\bary,\qv}(\rho\|\sigma)\ds\ds\text{is convex on }[0,+\infty),\ds\text{and}\\
&\alpha\mapsto\D_{\alpha}^{\bary,\qv}(\rho\|\sigma)\ds\ds\text{is monotone increasing on }
[0,1)\cup(1,+\infty).
\end{align*}
\end{prop}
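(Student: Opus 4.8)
The plan is to deduce both assertions from the variational representations established in Lemma \ref{lemma:psi rep} and Remark \ref{rem:Dalpha rep}, together with the general convexity statement in part (i) of Proposition \ref{prop:monconvex}. Throughout, fix $\rho,\sigma\in\B(\hil)\pne$ and write $\psi(\alpha):=\psi_{\alpha}^{\bary,\qv}(\rho\|\sigma)=\log Q_{\alpha}^{\bary,\qv}(\rho\|\sigma)$, regarded as a function of $\alpha\in[0,+\infty)$.

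\textbf{Convexity of $\alpha\mapsto\psi(\alpha)$.} First I would observe that the two-variable setting of Definition \ref{def:barycentric Renyi} is, for $\alpha\in(0,1)$, the special case of Definition \ref{defin:weighted Q} with $\X=\{0,1\}$, $W_0=\rho$, $W_1=\sigma$, $P(0)=\alpha$, $P(1)=1-\alpha$, so that $\psi(\alpha)=\log Q_P^{\bary,\qv}(W)$ for this particular $P$. Part (i) of the proposition (already to be proved, or assumable here as stated) gives convexity of $P\mapsto\log Q_P^{\bary,\qv}(W)$ on $\P_f^{\pm}(\{0,1\})$; restricting this convex function to the affine line $\alpha\mapsto(\alpha,1-\alpha)$ yields convexity of $\psi$ on $(0,1)$. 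For $\alpha\in(1,+\infty)$ the same argument applies, using the restriction $\tau^0\le\rho^0=S_+$ which is exactly the one appearing in \eqref{barycentric Qalpha def}; I would note that $\psi(\alpha)=\sup_{\tau\in\B(\rho^0\hil)\p}\{\Tr\tau-\alpha D^{q_0}(\tau\|\rho)-(1-\alpha)D^{q_1}(\tau\|\sigma)\}$ expresses $Q_{\alpha}^{\bary,\qv}$ as a supremum over $\tau$ of affine functions of $\alpha$, hence $\log$ of it — no, more directly: $Q_{\alpha}^{\bary,\qv}$ itself is a supremum of functions of $\alpha$ of the form $t-t\log t-t(\alpha c_0+(1-\alpha)c_1)$ after the reparametrization $\tau=t\omega$ used in the proof of Lemma \ref{lemma:psi rep}, and for each fixed $\omega$ this equals $\exp(-\alpha c_0-(1-\alpha)c_1)$ with $c_j=D^{q_j}(\omega\|W_j)$; thus $Q_{\alpha}^{\bary,\qv}=\sup_\omega e^{-\alpha c_0(\omega)-(1-\alpha)c_1(\omega)}$ is a supremum of log-convex (indeed log-affine) functions of $\alpha$, hence log-convex, i.e. $\psi$ is convex. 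This single computation covers all $\alpha\in(0,+\infty)\setminus\{1\}$ at once and, by continuity (or directly since $\alpha=1$ is an interior point and the $\sup$-of-log-affine argument includes it), all of $(0,+\infty)$; the value at $\alpha=0$ needs the separate treatment of the restriction $\tau^0\le\rho^0$ but poses no difficulty. I would also record $\psi(1)=\log\Tr\rho$ from \eqref{psi 1}.

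\textbf{Monotonicity of $\alpha\mapsto D_{\alpha}^{\bary,\qv}$.} By \eqref{barycentric Dalpha def2},
\begin{align*}
D_{\alpha}^{\bary,\qv}(\rho\|\sigma)=\frac{\psi(\alpha)-\psi(1)}{\alpha-1}
\end{align*}
for $\alpha\in[0,1)\cup(1,+\infty)$, which is precisely the slope of the chord of the convex function $\psi$ between the points $1$ and $\alpha$. It is a standard fact that for a convex function $\psi$ on an interval and a fixed point $1$ in that interval, the map $\alpha\mapsto\frac{\psi(\alpha)-\psi(1)}{\alpha-1}$ is monotone increasing on the interval minus $\{1\}$; I would either cite this or give the two-line argument (for $\alpha_1<\alpha_2$ with both $\ne 1$, the three-slopes / three-chords lemma for convex functions applied to the triple $\{1,\alpha_1,\alpha_2\}$ in the appropriate order gives the claim, handling the cases $\alpha_1<\alpha_2<1$, $\alpha_1<1<\alpha_2$, $1<\alpha_1<\alpha_2$ uniformly). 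This immediately yields that $\alpha\mapsto D_{\alpha}^{\bary,\qv}(\rho\|\sigma)$ is monotone increasing on $[0,1)\cup(1,+\infty)$.

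\textbf{Main obstacle.} The only real point requiring care is the convexity of $\psi$ \emph{including} the endpoint behaviour and the gluing of the ranges $(0,1)$ and $(1,\infty)$: for $\alpha\in(0,1)$ the supremum in \eqref{barycentric Qalpha def} is over $\tau$ with $\tau^0\le\rho^0\wedge\sigma^0$ while for $\alpha\ge 1$ it is over $\tau^0\le\rho^0$, and at $\alpha=0$ it is again $\tau^0\le\rho^0$; so the feasible set is not literally the same for all $\alpha$. The clean way around this is the uniform formula $Q_{\alpha}^{\bary,\qv}(\rho\|\sigma)=\sup_{\tau\in\B(\rho^0\hil)\p}\{\Tr\tau-\alpha D^{q_0}(\tau\|\rho)-(1-\alpha)D^{q_1}(\tau\|\sigma)\}$ valid for all $\alpha\in[0,+\infty)$ — this is exactly what Definition \ref{def:barycentric Renyi} and the remark after it (equation \eqref{barycentric Qalpha def3}) assert, the $\alpha\in(0,1)$ case being seen to agree because $D^{q_1}(\tau\|\sigma)=+\infty$ when $\tau^0\not\le\sigma^0$, which with the positive coefficient $1-\alpha>0$ makes such $\tau$ non-optimal. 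Once this single feasible set is fixed, the $\sup$-of-log-affine argument runs uniformly and convexity of $\psi$ on all of $[0,+\infty)$ (continuity at $0$ can be invoked, or one simply notes convexity is a condition checkable on the open interval plus lower semicontinuity at the endpoint, or that Proposition \ref{prop:limits}, referenced in the paper, gives continuity at $0$) follows, completing the proof.
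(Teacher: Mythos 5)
Your treatment of part (ii) is correct and is essentially the paper's own argument: the paper likewise reads off convexity of $\alpha\mapsto\log Q_{\alpha}^{\bary,\qv}(\rho\|\sigma)$ directly from \eqref{psi alpha} as a supremum of functions affine in $\alpha$ (which is exactly your $\sup_{\omega}e^{-\alpha c_0(\omega)-(1-\alpha)c_1(\omega)}$ obtained from the $\tau=t\omega$ reparametrization of Lemma \ref{lemma:psi rep}), and then obtains monotonicity of $D_{\alpha}^{\bary,\qv}$ from \eqref{barycentric Dalpha def2} by the increasing-chord property of convex functions, i.e.\ your three-chords argument. Your "main obstacle" is in fact not an obstacle at all: Definition \ref{def:barycentric Renyi} takes the supremum in \eqref{barycentric Qalpha def} over $\tau\in\B(\rho^0\hil)_{\ge 0}$ uniformly for every $\alpha\in[0,+\infty)$ (the restriction to $\rho^0\wedge\sigma^0$ in \eqref{barycentric Qalpha def3} is only an equivalent rewriting for $\alpha\in(0,1)$), and \eqref{psi alpha} is stated for all $\alpha\in[0,+\infty)$ with the single feasible set $\S(\rho^0\hil)$, so no gluing of parameter ranges is needed.

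The genuine gap is part (i), which you never prove. You invoke "part (i) of the proposition (already to be proved, or assumable here as stated)" to get convexity of $\psi$ on $(0,1)$; but (i) is itself part of the statement, so this is circular, and although your eventual sup-of-log-affine argument for (ii) does not need (i), nowhere in the proposal is there an argument for the convexity of $P\mapsto Q_P^{\bary,\qv}(W)$ and $P\mapsto\log Q_P^{\bary,\qv}(W)$, or the concavity of $P\mapsto R_{\D^{\qv},\bal}(W,P)$, on $\P_f^{\pm}(\X)$. The paper settles (i) with the same idea you use for (ii): by \eqref{left divrad}, $R_{\D^{\qv},\bal}(W,P)$ is an infimum over $\omega$ of the maps $P\mapsto\sum_{x}P(x)\D^{q_x}(\omega\|W_x)$, which are affine in $P$, hence it is concave; by \eqref{psi optimization} this gives convexity of $P\mapsto\log Q_P^{\bary,\qv}(W)$; and convexity of $P\mapsto Q_P^{\bary,\qv}(W)$ then follows by composing with the increasing convex function $\exp$ (a step your proposal also omits). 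If you carry this out and want to be more careful than the paper's two-line proof, you should additionally comment on the fact that the optimization domain $\S(S_+\hil)$ in \eqref{left divrad} depends on $P$ through the set $\{x:\,P(x)>0\}$, so the "infimum of affine functions over a fixed index set" picture needs a word of justification when comparing measures with different sign patterns.
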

\begin{proof}
(i)
By the definition in \eqref{left divrad}, 
$P \mapsto R_{\D^{\qv},\bal}(W,P)$ is the infimum of affine functions in $P$,
and hence it is concave. By \eqref{psi optimization}, this implies the convexity of 
$P\mapsto \log Q_P^{\bary,\qv}(W)$, from which the convexity of 
$P\mapsto Q_P^{\bary,\qv}(W)$ follows immediately.

(ii)
By \eqref{psi alpha}, $\alpha\mapsto\log Q_{\alpha}^{\bary,\qv}(\rho\|\sigma)$ 
is the supremum of affine functions, and hence convex.
(The convexity on $(0,1)$ also follows as a special case of the above.)
The monotonicity of $\alpha\mapsto\D_{\alpha}^{\bary,\qv}(\rho\|\sigma)$ 
follows from this convexity by \eqref{barycentric Dalpha def2}.
\end{proof}

Let us introduce the following limiting values:
\begin{align}
\D^{\bary,\qv}_{1}(\rho\|\sigma)&:=
\sup_{\alpha\in(0,1)}\D^{\bary,\qv}_{\alpha}(\rho\|\sigma)=
\lim_{\alpha\nearrow 1}\D^{\bary,\qv}_{\alpha}(\rho\|\sigma),
\label{Db 1 def}\\
\D^{\bary,\qv}_{1^+}(\rho\|\sigma)&:=
\inf_{\alpha>1}\D^{\bary,\qv}_{\alpha}(\rho\|\sigma)=
\lim_{\alpha\searrow 1}\D^{\bary,\qv}_{\alpha}(\rho\|\sigma),\\
\D^{\bary,q}_{\infty}(\rho\|\sigma)&:=
\sup_{\alpha>1}\D^{\bary,q}_{\alpha}(\rho\|\sigma)=
\lim_{\alpha\nearrow +\infty}\D^{\bary,q}_{\alpha}(\rho\|\sigma),
\end{align}
where the equalities follow from the monotonicity established in Proposition \ref{prop:monconvex}. Using the representations in Remark \ref{rem:Dalpha rep}, it is easy to show the following:

\begin{prop}\label{prop:limits}
Let $\rho,\sigma\in\B(\hil)\pne$.
\smallskip

\noindent (i)
We have
\begin{align}
\D^{\bary,\qv}_{0}(\rho\|\sigma)&=
\inf_{\alpha\in(0,1)}\D^{\bary,\qv}_{\alpha}(\rho\|\sigma)=
\lim_{\alpha\searrow 0}\D^{\bary,\qv}_{\alpha}(\rho\|\sigma)\label{Renyi bq limit00}\\
&=
\inf_{\omega\in\S(\rho^0\hil)}\left\{\D^{\qo}\bz\omega\Big\|\frac{\sigma}{\Tr\sigma}\jz\right\}
+\log\Tr\rho-\log\Tr\sigma,
\label{Renyi bq limit0}\\
\D^{\bary,\qv}_{\infty}(\rho\|\sigma)
&=
\sup_{\omega\in\S(\rho^0\hil)}\left\{\D^{\qo}\bz\omega\Big\|\frac{\sigma}{\Tr\sigma}\jz
-\D^{\qn}\bz\omega\Big\|\frac{\rho}{\Tr\rho}\jz\right\}
+\log\Tr\rho-\log\Tr\sigma\label{Renyi bq limit infty1}\\
&=
\sup_{\omega\in\S(\rho^0\hil)}\{\D^{\qo}(\omega\|\sigma)-\D^{\qn}(\omega\|\rho)\}.
\label{Renyi bq limit infty2}
\end{align}

\noindent (ii)
If $\D^{\qn}$ is strictly positive and $\D^{\qn}\big(\valt\big\|\frac{\rho}{\Tr\rho}\big)$ 
and $D^{\qo}\bz\valt\|\sigma\jz$ are both
lower semi-continuous on $\S(\rho^0\hil)$
then 
\begin{align}\label{Renyi bq limit11}
\D^{\bary,\qv}_{1}(\rho\|\sigma)=\frac{1}{\Tr\rho}\D^{\qo}(\rho\|\sigma).
\end{align}
If, moreover, $\D^{\qo}(\valt\|\sigma)$ is
upper semi-continuous on $\S(\rho^0\hil)$ then 
\begin{align}
\D^{\bary,\qv}_{1}=\D^{\bary,\qv}_{1^+}(\rho\|\sigma).
\label{Renyi bq limit12}
\end{align}
\end{prop}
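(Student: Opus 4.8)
I would prove Proposition \ref{prop:limits} in two parts, following exactly the structure of its statement: first the limiting values at $\alpha=0$ and $\alpha=\infty$ in part (i), then the identifications of $D_1^{\bary,\qv}$ and $D_{1^+}^{\bary,\qv}$ in part (ii). The common thread is that everything should be read off from the variational representations of $D_\alpha^{\bary,\qv}$ collected in Remark \ref{rem:Dalpha rep}, together with the monotonicity in $\alpha$ already established in Proposition \ref{prop:monconvex}, so that each ``limit'' is automatically an infimum or supremum over $\alpha$.

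\textbf{Part (i).} For the value at $\alpha=0$, start from \eqref{Renyi bq4}, i.e.
\begin{align*}
D_{\alpha}^{\bary,\qv}(\rho\|\sigma)
=\frac{1}{1-\alpha}\inf_{\omega\in\S(\rho^0\hil)}\left\{\alpha D^{\qn}\bz\omega\Big\|\tfrac{\rho}{\Tr\rho}\jz+(1-\alpha)D^{\qo}\bz\omega\Big\|\tfrac{\sigma}{\Tr\sigma}\jz\right\}+\log\Tr\rho-\log\Tr\sigma,
\end{align*}
and let $\alpha\searrow 0$. Because $D^{\qn}\geq 0$ and $D^{\qo}\geq 0$, for each fixed $\omega$ the bracket is monotone in $\alpha$ in a controlled way; dividing by $1-\alpha$ and taking $\alpha\searrow 0$ gives pointwise convergence of the integrand to $D^{\qo}(\omega\|\sigma/\Tr\sigma)$, and one checks (using $0\le\alpha D^{\qn}(\omega\|\rho/\Tr\rho)$ and that the infimum over the compact set $\S(\rho^0\hil)$ of the $\alpha=0$ functional is already the claimed value) that the infimum converges to $\inf_\omega D^{\qo}(\omega\|\sigma/\Tr\sigma)$. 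The monotonicity from Proposition \ref{prop:monconvex} upgrades ``$\lim$'' to ``$\inf_{\alpha\in(0,1)}$'', giving \eqref{Renyi bq limit00}--\eqref{Renyi bq limit0}; here the restriction to $\tau^0\le\rho^0$ in \eqref{barycentric Qalpha def} at $\alpha=0$ (rather than $\tau^0\le\sigma^0$) is exactly what makes this limit continuous, as flagged in the remark after Definition \ref{def:barycentric Renyi}. For $\alpha\to\infty$, I would instead use the form
\begin{align*}
D_{\alpha}^{\bary,\qv}(\rho\|\sigma)=\frac{1}{1-\alpha}\inf_{\omega\in\S(\rho^0\hil)}\{\alpha D^{\qn}(\omega\|\rho)+(1-\alpha)D^{\qo}(\omega\|\sigma)\}-\frac{1}{\alpha-1}\log\Tr\rho
\end{align*}
from \eqref{Renyi bq2}: rewrite $\frac{1}{1-\alpha}[\alpha D^{\qn}+(1-\alpha)D^{\qo}]=\frac{\alpha}{1-\alpha}D^{\qn}-D^{\qo}$ and note $\frac{\alpha}{1-\alpha}\to -1$, so the infimum tends to $\inf_\omega\{-D^{\qn}(\omega\|\rho)+D^{\qo}(\omega\|\sigma)\}\cdot(\pm)$; being careful with signs, $D_{\alpha}^{\bary,\qv}\to\sup_\omega\{D^{\qo}(\omega\|\sigma)-D^{\qn}(\omega\|\rho)\}$, which is \eqref{Renyi bq limit infty2}, and applying the scaling law \eqref{scaling2} to both $D^{\qn}$ and $D^{\qo}$ converts this into the normalized form \eqref{Renyi bq limit infty1}. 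Again monotonicity in $\alpha$ identifies the limit with the supremum over $\alpha>1$.

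\textbf{Part (ii).} This is the more delicate part. By definition \eqref{Db 1 def}, $D_1^{\bary,\qv}(\rho\|\sigma)=\lim_{\alpha\nearrow 1}D_{\alpha}^{\bary,\qv}(\rho\|\sigma)=\sup_{\alpha\in(0,1)}D_{\alpha}^{\bary,\qv}(\rho\|\sigma)$. Using \eqref{Renyi bq3},
\begin{align*}
D_{\alpha}^{\bary,\qv}(\rho\|\sigma)=\frac{1}{1-\alpha}\inf_{\omega\in\S(\rho^0\hil)}\left\{\alpha D^{\qn}\bz\omega\Big\|\tfrac{\rho}{\Tr\rho}\jz+(1-\alpha)D^{\qo}(\omega\|\sigma)\right\}+\log\Tr\rho,
\end{align*}
I would apply the minimax lemma, Lemma \ref{lemma:minimax2}, to exchange $\sup_{\alpha\in(0,1)}$ (or rather $\sup$ over $y$ in an appropriate monotone parametrization) with $\inf_{\omega\in\S(\rho^0\hil)}$: the set $\S(\rho^0\hil)$ is compact, lower semi-continuity of $\omega\mapsto\alpha D^{\qn}(\omega\|\rho/\Tr\rho)+(1-\alpha)D^{\qo}(\omega\|\sigma)$ is the hypothesis, and monotonicity in the $\alpha$-variable needs to be checked. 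After the exchange, for fixed $\omega$ the inner supremum over $\alpha\nearrow 1$ of $\frac{\alpha D^{\qn}(\omega\|\rho/\Tr\rho)+(1-\alpha)D^{\qo}(\omega\|\sigma)}{1-\alpha}$ is $+\infty$ unless $D^{\qn}(\omega\|\rho/\Tr\rho)=0$, which by strict positivity of $D^{\qn}$ forces $\omega=\rho/\Tr\rho$; evaluating at that $\omega$ gives the limiting value $D^{\qo}(\rho/\Tr\rho\,\|\,\sigma)=\frac{1}{\Tr\rho}D^{\qo}(\rho\|\sigma)$ after using the scaling law for $D^{\qo}$, which is \eqref{Renyi bq limit11}. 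For \eqref{Renyi bq limit12} I would show $D_{1^+}^{\bary,\qv}\le D_1^{\bary,\qv}$ always (monotonicity plus the sandwich inequality $D_1^{\bary,\qv}\le D_{1^+}^{\bary,\qv}$ would then force equality), and the reverse $\ge$ via the same minimax argument run from the $\alpha>1$ side using \eqref{Renyi bq2}: there, adding upper semi-continuity of $D^{\qo}(\cdot\|\sigma)$ makes $\omega\mapsto\frac{\alpha}{\alpha-1}D^{\qn}(\omega\|\rho)-D^{\qo}(\omega\|\sigma)$ lower semi-continuous, so the infimum is attained and continuous in $\alpha$, yielding $\lim_{\alpha\searrow 1}=$ the value at the formal ``$\alpha=1$'' limit, which again picks out $\omega=\rho/\Tr\rho$.

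\textbf{Main obstacle.} The crux is the minimax interchange in part (ii) and, relatedly, justifying that the limit $\alpha\to 1$ of the infimum equals the infimum of the limit. The functionals are only assumed lower (or upper) semi-continuous, not continuous, the parameter interval $(0,1)$ is open, and the objective $\frac{1}{1-\alpha}(\cdots)$ blows up as $\alpha\to 1$, so some care is needed to cast the problem in a form where Lemma \ref{lemma:minimax2} applies cleanly — e.g.\ reparametrizing so the $\alpha$-dependence is monotone, or working with $\beta=\alpha/(1-\alpha)$ and a rescaled objective. The strict positivity hypothesis on $D^{\qn}$ is essential precisely to make the ``limiting optimizer'' unique ($\omega=\rho/\Tr\rho$); without it the argument would only give an inequality. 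I expect the $\alpha=0$ and $\alpha=\infty$ cases of part (i) to be essentially routine once the representations of Remark \ref{rem:Dalpha rep} are in hand, with the only subtlety being the bookkeeping of which support restriction ($\tau^0\le\rho^0$ vs.\ $\tau^0\le\sigma^0$) is in force at the endpoints.
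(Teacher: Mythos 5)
Your proposal is correct and follows essentially the same route as the paper: the endpoint values in (i) are obtained from the variational formulas of Remark \ref{rem:Dalpha rep} together with the monotonicity in $\alpha$ from Proposition \ref{prop:monconvex} (so the limits become $\inf_{\alpha}$/$\sup_{\alpha}$, which trivially commute with the $\inf_\omega$/$\sup_\omega$, and non-negativity of $D^{\qn}$ kills the $\frac{\alpha}{1-\alpha}D^{\qn}$ term), and the $\alpha\to 1$ statements in (ii) are proved exactly as you plan, by exchanging $\sup_\alpha$ (resp.\ $\inf_\alpha$) with $\inf_\omega$ (resp.\ $\sup_\omega$) via Lemma \ref{lemma:minimax2}, with strict positivity of $D^{\qn}$ forcing $\omega=\rho/\Tr\rho$ and the scaling law giving $\frac{1}{\Tr\rho}D^{\qo}(\rho\|\sigma)$. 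The obstacle you anticipate is in fact harmless, since Lemma \ref{lemma:minimax2} only needs $Y$ to be an ordered set and allows extended-real values (monotonicity of $\alpha\mapsto\frac{\alpha}{1-\alpha}$ suffices, no reparametrization); the only additional step in the paper's proof of \eqref{Renyi bq limit12} is to dispose of the case $\rho^0\nleq\sigma^0$ separately (both sides being $+\infty$) before applying the minimax argument on $\S(\rho^0\hil)$, where all relevant quantities are finite.
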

\begin{proof}
(i)
The second equality in \eqref{Renyi bq limit00} is obvious from the monotonicity established in Proposition \ref{prop:monconvex}, and the rest of the equalities in 
\eqref{Renyi bq limit00}--\eqref{Renyi bq limit0}
follow as
\begin{align*}
\inf_{\alpha\in(0,1)}\D^{\bary,\qv}_{\alpha}(\rho\|\sigma)
&=
\inf_{\alpha\in(0,1)}
\inf_{\omega\in\S(\rho^0\hil)}\left\{\frac{\alpha}{1-\alpha}\D^{\qn}\bz\omega\Big\|\frac{\rho}{\Tr\rho}\jz+\D^{\qo}\bz\omega\Big\|\frac{\sigma}{\Tr\sigma}\jz\right\}
+\log\Tr\rho-\log\Tr\sigma\\
&=
\inf_{\omega\in\S(\rho^0\hil)}\inf_{\alpha\in(0,1)}\left\{\frac{\alpha}{1-\alpha}
\D^{\qn}\bz\omega\Big\|\frac{\rho}{\Tr\rho}\jz+\D^{\qo}\bz\omega\Big\|\frac{\sigma}{\Tr\sigma}\jz\right\}
+\log\Tr\rho-\log\Tr\sigma\\
&=\inf_{\omega\in\S(\rho^0\hil)}\left\{\D^{\qo}\bz\omega\Big\|\frac{\sigma}{\Tr\sigma}\jz\right\}
+\log\Tr\rho-\log\Tr\sigma\\
&=\inf_{\omega\in\S(\rho^0\hil)}\left\{\D^{\qo}\bz\omega\Big\|\sigma\jz\right\}
+\log\Tr\rho\\
&=
\D^{\bary,\qv}_{0}(\rho\|\sigma),
\end{align*}
where the first equality is due to \eqref{Renyi bq4}, the second equality is trivial,
the third equality follows from the non-negativity of $\D^{\qn}$, 
the fourth equality follows from the scaling law \eqref{scaling2}, 
and the last equality follows from \eqref{psi alpha}.

The equalities in \eqref{Renyi bq limit infty1}--\eqref{Renyi bq limit infty2} follow as
\begin{align*}
\sup_{\alpha>1}\D^{\bary,\qv}_{\alpha}(\rho\|\sigma)
&=
\sup_{\alpha>1}
\sup_{\omega\in\S(\rho^0\hil)}\left\{\frac{\alpha}{1-\alpha}\D^{\qn}\bz\omega\Big\|\frac{\rho}{\Tr\rho}\jz+\D^{\qo}\bz\omega\Big\|\frac{\sigma}{\Tr\sigma}\jz\right\}
+\log\Tr\rho-\log\Tr\sigma\\
&=
\sup_{\omega\in\S(\rho^0\hil)}\sup_{\alpha>1}\left\{\frac{\alpha}{1-\alpha}\D^{\qn}\bz\omega\Big\|\frac{\rho}{\Tr\rho}\jz+\D^{\qo}\bz\omega\Big\|\frac{\sigma}{\Tr\sigma}\jz\right\}
+\log\Tr\rho-\log\Tr\sigma\\
&=\sup_{\omega\in\S(\rho^0\hil)}\left\{\D^{\qo}\bz\omega\Big\|\frac{\sigma}{\Tr\sigma}\jz
-\D^{\qn}\bz\omega\Big\|\frac{\rho}{\Tr\rho}\jz\right\}
+\log\Tr\rho-\log\Tr\sigma\\
&=
\sup_{\omega\in\S(\rho^0\hil)}\left\{\D^{\qo}\bz\omega\|\sigma\jz
-\D^{\qn}\bz\omega\|\rho\jz\right\},
\end{align*}
where the first equality is due to 
\eqref{Renyi bq4}, the second one is trivial, the third one follows from the non-negativity 
of $\D^{\qn}\bz\omega\Big\|\frac{\rho}{\Tr\rho}\jz$, and the last equality is due to 
the scaling property \eqref{scaling2}.

(ii) The equality in \eqref{Renyi bq limit11} follows as
\begin{align*}
\sup_{\alpha\in(0,1)}\D^{\bary,\qv}_{\alpha}(\rho\|\sigma)
&=
\sup_{\alpha\in(0,1)}
\inf_{\omega\in\S(\rho^0\hil)}\left\{\frac{\alpha}{1-\alpha}\D^{\qn}\bz\omega\Big\|\frac{\rho}{\Tr\rho}\jz+\D^{\qo}\bz\omega\|\sigma\jz\right\}
+\log\Tr\rho\\
&=
\inf_{\omega\in\S(\rho^0\hil)}\sup_{\alpha\in(0,1)}\left\{\frac{\alpha}{1-\alpha}\D^{\qn}\bz\omega\Big\|\frac{\rho}{\Tr\rho}\jz+\D^{\qo}\bz\omega\|\sigma\jz\right\}
+\log\Tr\rho\\
&=\D^{\qo}\bz\frac{\rho}{\Tr\rho}\Big\|\sigma\jz+\log\Tr\rho
=
\frac{1}{\Tr\rho}\D^{\qo}(\rho\|\sigma),
\end{align*}
where the first equality is due to \eqref{Renyi bq3}, and the second one follows from the minimax theorem in Lemma \ref{lemma:minimax2}, using the fact that $\alpha\mapsto \frac{\alpha}{1-\alpha}$ is monotone increasing on $(0,1)$.
The third equality follows from the fact that $\sup_{\alpha\in(0,1)}\frac{\alpha}{1-\alpha}=+\infty$, and that 
$\D^{\qn}$ is strictly positive, and the last equality is immediate from the scaling law \eqref{scaling1}.

Finally, to prove \eqref{Renyi bq limit12}, first note that if 
$\rho^0\nleq\sigma^0$ then $\D^{\bary,\qv}_{\alpha}(\rho\|\sigma)=+\infty$ for every $\alpha>1$; 
indeed, by \eqref{Renyi bq2},
\begin{align}\label{bary infty for a>1}
\D^{\bary,\qv}_{\alpha}(\rho\|\sigma)\ge
\frac{\alpha}{1-\alpha}\underbrace{\D^{\qn}(\rho\|\rho)}_{=0}
+\underbrace{\D^{\qo}(\rho\|\sigma)}_{=+\infty}
-\frac{1}{\alpha-1}\log\Tr\rho=+\infty.
\end{align}
(See also Corollary \ref{cor:Renyi bq infty}.)
Hence,
\begin{align*}
\D^{\bary,\qv}_{1^+}(\rho\|\sigma)=\inf_{\alpha>1}\D^{\bary,\qv}_{\alpha}(\rho\|\sigma)=
\inf_{\alpha>1}+\infty=+\infty=\frac{1}{\Tr\rho}\D^{\qo}(\rho\|\sigma)
=\D^{\bary,\qv}_{1}(\rho\|\sigma),
\end{align*}
where the fourth equality is due to the support condition \eqref{supp const}, and the 
last equality is \eqref{Renyi bq limit11}.
Hence, for the rest we assume that $\rho^0\le\sigma^0$, so that 
$\D^{\qn}\big(\omega\big\|\frac{\rho}{\Tr\rho}\big)$ and 
$\D^{\qo}(\omega\|\sigma)$ are both 
finite for $\omega\in\S(\rho^0\hil)$.
Moreover, by assumption, 
$\omega\mapsto\frac{\alpha}{1-\alpha}\D^{\qn}\big(\omega\big\|\frac{\rho}{\Tr\rho}\big)
+\D^{\qo}\bz\omega\|\sigma\jz$ is upper semi-continous on 
$\S(\rho^0\hil)$ for every $\alpha>1$. 
Then 
\begin{align*}
\D^{\bary,\qv}_{1^+}(\rho\|\sigma)&=
\inf_{\alpha>1}\D^{\bary,\qv}_{\alpha}(\rho\|\sigma)\\
&=
\inf_{\alpha>1}
\sup_{\omega\in\S(\rho^0\hil)}\left\{\frac{\alpha}{1-\alpha}\D^{\qn}\bz\omega\Big\|\frac{\rho}{\Tr\rho}\jz+\D^{\qo}\bz\omega\|\sigma\jz\right\}
+\log\Tr\rho\\
&=
\sup_{\omega\in\S(\rho^0\hil)}\inf_{\alpha>1}
\left\{\frac{\alpha}{1-\alpha}\D^{\qn}\bz\omega\Big\|\frac{\rho}{\Tr\rho}\jz
+\D^{\qo}\bz\omega\|\sigma\jz\right\}
+\log\Tr\rho\\
&=\D^{\qo}\bz\frac{\rho}{\Tr\rho}\Big\|\sigma\jz+\log\Tr\rho
=
\frac{1}{\Tr\rho}\D^{\qo}(\rho\|\sigma)=\D^{\bary,\qv}_{1}(\rho\|\sigma),
\end{align*}
where the second equality is due to \eqref{Renyi bq3}, the third one follows from the 
minimax theorem in Lemma \ref{lemma:minimax2},
the fourth equality follows from the strict positivity of $\D^{\qn}$ and the fact that 
$\inf_{\alpha>1}\alpha/(1-\alpha)=-\infty$, the fifth one is due to the scaling law
\eqref{scaling1}, and the last equality is \eqref{Renyi bq limit11}.
\end{proof}

\begin{rem}\label{rem:notqRenyi}
Clearly, for any $\rho\in\B(\hil)\pne$ and any 
quantum R\'enyi $\alpha$-divergence $D_{\alpha}^q$, 
$D_{\alpha}^q(\rho\|\rho)=0$.
On the other hand, if $D^{\qn}$ and $D^{\qo}$ are such that 
for any two invertible non-commuting $\omega_1,\omega_2\in\B(\hil)\pp$,
$D^{\qo}(\omega_1\|\omega_2)>D^{\qn}(\omega_1\|\omega_2)$, then 
by \eqref{Renyi bq limit infty2},
$D_{\infty}^{\bary,\qv}(\rho\|\rho)>0$, $\rho\in\B(\hil)\pp$, and hence
$D_{\alpha}^{\bary,\qv}$ is not a quantum R\'enyi $\alpha$-divergence
for any large enough $\alpha$. For instance, this is the case if
$D^{\qn}=\DU$ and $D^{\qo}=D^{\BS}$;
see, e.g., \cite[Proposition 4.7.]{HiaiMosonyi2017}.

We leave open the question whether in the above setting for every $\alpha>1$ there exists a $\rho\in\B(\hil)\p$ such that $D_{\infty}^{\bary,\qv}(\rho\|\rho)>0$.
\end{rem}

\begin{rem}\label{rem:dual}
It is well known and easy to verify that for commuting states $\rho$ and $\sigma$, 
the unique R\'enyi $\alpha$-divergences satisfy
\begin{align*}
Q_{\alpha}(\rho\|\sigma)=Q
_{1-\alpha}(\sigma\|\rho),
\ds\ds\ds\alpha\in(0,1).
\end{align*}
As a consequence, if 
$D_{1-\alpha}^q$ is a quantum R\'enyi $(1-\alpha)$-divergence for some 
$\alpha\in(0,1)$ then
\begin{align*}
\tilde D_{\alpha}^q(\rho\|\sigma)
&:=
\frac{1}{1-\alpha}\left[
\alpha D_{1-\alpha}^q(\sigma\|\rho)+\log\Tr\rho-\log\Tr\sigma
\right]
\end{align*}
defines a quantum R\'enyi $\alpha$-divergence. Given a collection 
$(D_{\alpha}^q)_{\alpha\in(0,1)}$ of quantum R\'enyi $\alpha$-divergences, 
we call $(\tilde D_{\alpha}^q)_{\alpha\in(0,1)}$ the \ki{dual collection}.
The measured, the regularized measured and the maximal R\'enyi divergences are easily seen to be self-dual, 
as are the 
Petz-type (or standard) R\'enyi divergences 
(see Section \ref{sec:qRenyi} for the definitions).

For the barycentric R\'enyi divergences, it is straightforward to verify from 
\eqref{Renyi bq6} that
\begin{align*}
\tilde\D^{\bary,(\qn,\qo)}_{\alpha}(\rho\|\sigma)
=
\D^{\bary,(\qo,\qn)}_{\alpha}(\rho\|\sigma). 
\end{align*}
In particular, $(\D_{\alpha}^{\bary,(\qn,\qo)})_{\alpha\in(0,1)}$ is self-dual when $D^{\qn}=D^{\qo}$. 

Combining this duality with \eqref{Renyi bq limit11} we obtain
that if $D^{\qo}$ is strictly positive and 
$\D^{\qo}\big(\valt\big\|\frac{\rho}{\Tr\rho}\big)$ 
and $D^{\qn}\bz\valt\|\sigma\jz$ are both
lower semi-continuous on $\S(\rho^0\hil)$
then
\begin{align}
\frac{1}{\Tr\rho}D^{\qn}(\rho\|\sigma)
&=
\lim_{\alpha\nearrow 1} D_{\alpha}^{\bary,(\qo,\qn)}(\rho\|\sigma)
=
\lim_{\alpha\nearrow 1}\tilde D_{\alpha}^{\bary,(\qn,\qo)}(\rho\|\sigma)\nn\\
&=
\lim_{\alpha\searrow 0}
\frac{1}{\alpha}\left[
(1-\alpha) D_{\alpha}^{\bary,(\qn,\qo)}(\sigma\|\rho)+\log\Tr\rho-\log\Tr\sigma
\right].\label{lim at 0}
\end{align}
Due to this and Proposition \ref{prop:limits}, 
if both $D^{\qn}$ and $D^{\qo}$ are strictly positive and 
lower semi-continuous in their first variable, then 
they can both be recovered from $(D_{\alpha}^{\bary,\qv})_{\alpha\in(0,1)}$ by taking limits at $\alpha\searrow 0$ and at $\alpha\nearrow 1$, respectively.
In particular, if $(D^{\qn},D^{\qo})\ne (D^{\tilde q_0},D^{\tilde q_1})$
then $D_{\alpha}^{\bary,(\qn,\qo)}\ne D_{\alpha}^{\bary,(\tilde q_0,\tilde q_1)}$
for $\alpha$ close enough to $0$ from above or to $1$ from below.
\end{rem}

\subsection{Non-negativity and finiteness}

Here we show that the barycentric R\'enyi divergences are pseudo-distances in the sense that 
$D_{\alpha}^{\bary,\qv}$ is non-negative for every $\alpha\in[0,+\infty]$, and 
it is strictly positive for every $\alpha\in(0,+\infty]$ under some mild conditions on 
$D^{\qn}$ and $D^{\qo}$. 

We start more generally with giving bounds on the multi-variate R\'enyi quantities in Definition \ref{defin:weighted Q}, for which the following easy observation will be useful.

\begin{lemma}
Let $D^q$ be a quantum relative entropy. 
If $D^q$ is trace-monotone then for any $\sigma\in\B(\hil)\pne$ and 
any state $\omega\in\S(\hil)$, 
\begin{align}\label{relentr lower bound}
-\log\Tr\sigma\le D^q(\omega\|\sigma).
\end{align}
If $D^q$ is anti-monotone in its second argument then 
\begin{align}\label{relentr bounds}
D^q(\omega\|\sigma)\begin{cases}
\le -\log\lambda_{\min}(\sigma)
+\Tr\omega\log\omega\le -\log\lambda_{\min}(\sigma),&\omega^0\le\sigma^0,\\
=+\infty,&\text{otherwise,}
\end{cases}
\end{align}
where $\lambda_{\min}(\sigma)$ denotes the smallest non-zero eigenvalue of $\sigma$.
\end{lemma}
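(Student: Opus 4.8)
The statement collects two inequalities about a single quantum relative entropy $D^q$ evaluated on a state $\omega$ and a non-zero PSD operator $\sigma$. Both are meant to follow almost directly from the axioms bundled into the notion of ``quantum relative entropy'' in the sense of Section~\ref{sec:barycentric} (non-negativity, the scaling law \eqref{relentr scaling}, the support condition \eqref{supp const}, hence trace-monotonicity by Remark~\ref{rem:tracemon}) together with the extra hypothesis of anti-monotonicity in the second argument. I would prove the two displays separately.

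For \eqref{relentr lower bound}: since $D^q$ is trace-monotone, \eqref{relentr tracemon} gives $D^q(\omega\|\sigma)\ge D(\Tr\omega\|\Tr\sigma)=(\Tr\omega)\log\Tr\omega-(\Tr\omega)\log\Tr\sigma$, and with $\Tr\omega=1$ this is exactly $-\log\Tr\sigma$. (Equivalently one can invoke the chain \eqref{Dq lower bound} from Remark~\ref{rem:relentr pos} with $\tau=\omega$.) This is a one-line deduction.

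For \eqref{relentr bounds}: the case $\omega^0\nleq\sigma^0$ is immediate from the support condition \eqref{supp const}, which forces $D^q(\omega\|\sigma)=+\infty$. When $\omega^0\le\sigma^0$, write $\lambda:=\lambda_{\min}(\sigma)$ for the smallest non-zero eigenvalue; then on $\supp\sigma$ we have $\sigma\ge\lambda\sigma^0$, hence $\sigma\ge\lambda\omega^0$. The key step is to compare $D^q(\omega\|\sigma)$ with $D^q(\omega\|\lambda\omega^0)$ using anti-monotonicity in the second argument: from $\lambda\omega^0\le\sigma$ (as operators, both supported in $\sigma^0\hil$) we get $D^q(\omega\|\sigma)\le D^q(\omega\|\lambda\omega^0)$. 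Now $\omega^0$ restricted to its range is the identity, so by isometric invariance and the scaling law \eqref{scaling2} one computes $D^q(\omega\|\lambda\omega^0)=D^q(\omega\|\omega^0)-(\log\lambda)\Tr\omega=D^q(\omega\|\omega^0)-\log\lambda$; and $D^q(\omega\|\omega^0)$ is the quantum relative entropy of $\omega$ against the identity on its support, which equals the classical $-H(\omega)=\Tr\omega\log\omega$ since $\omega$ and $\omega^0$ commute and $D^q$ is a quantum extension of the classical relative entropy (Definition~\ref{def:q-extension}). Combining, $D^q(\omega\|\sigma)\le -\log\lambda+\Tr\omega\log\omega$, and $\Tr\omega\log\omega\le 0$ gives the final bound $\le-\log\lambda_{\min}(\sigma)$.

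The only mild subtlety—the ``hard part,'' such as it is—is the identification $D^q(\omega\|\omega^0)=\Tr\omega\log\omega$: one must be careful that $\omega^0$ is the support projection, apply the isometric invariance to pass to $\B(\omega^0\hil)$ where $\omega^0$ becomes the identity, and then use that $D^q$ restricted to a commuting pair coincides with the classical relative entropy (with the convention that powers/logs are taken on supports). Once that is in hand, everything else is the anti-monotonicity step plus the scaling law, both granted by the standing hypotheses.
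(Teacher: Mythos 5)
Your proof is correct and follows essentially the same route as the paper: trace-monotonicity for the lower bound, the support condition for the infinite case, and then anti-monotonicity plus the scaling law \eqref{scaling2} plus the reduction to the classical relative entropy on a commuting pair for the upper bound. The only (cosmetic) difference is that you compare against $\lambda_{\min}(\sigma)\omega^0$ while the paper compares against $\lambda_{\min}(\sigma)\sigma^0$; both give $D^q(\omega\|\cdot)=\Tr\omega\log\omega$ after rescaling, so the arguments coincide.
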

\begin{proof}
The inequality in \eqref{relentr lower bound} is immediate from the trace monotonicity of 
$D^q$, and $\omega^0\nleq\sigma^0\imp D^q(\omega\|\sigma)=+\infty$ 
in \eqref{relentr bounds}
follows from the support condition \eqref{supp const}. Hence, we are left to prove
the upper bound in \eqref{relentr bounds}.

By assumption, $\D^q$ is anti-monotone in its second argument, whence
$\sigma\ge\lambda_{\min}(\sigma)\sigma^0$ implies that 
\begin{align*}
\D^q(\omega\|\sigma)
&\le\D^q(\omega\|\lambda_{\min}(\sigma)\sigma^0)\\
&=
-\log\lambda_{\min}(\sigma)+\underbrace{\D^q(\omega\|\sigma^0)}_{=D(\omega\|\sigma^0)}\\
&=
-\log\lambda_{\min}(\sigma)+
\underbrace{D(\omega\|\sigma^0)}_{=\Tr\omega\log\omega}\\
&=
-\log\lambda_{\min}(\sigma)
+\Tr\omega\log\omega,
\end{align*}
where $D$ is the unique extension of the classical relative entropy to commuting pairs of operators (see Lemma \ref{lemma:commuting extension}),
the first equality follows from the scaling property 
\eqref{scaling2}, the second equality from the fact that 
$\omega$ and $\sigma^0$ commute due to the assumption that $\omega^0\le\sigma^0$,
and the last equality is by the definition of $D$.
\end{proof}

\begin{prop}\label{prop:multi psi bounds}
(i) Let $P\in\P_f(\X)$. Then, for any $W\in\B(\X,\hil)\pne$, 
\begin{align}\label{psi lower bound}
-\sum_xP(x)\log\Tr\Wx{x}
\le
-\psi_P^{\bary,\qv}(W),
\end{align}
and if each $D^{q_x}$, $x\in\supp P$, is anti-monotone in its second argument then 
\begin{align}\label{psi upper bounds}
-\psi_P^{\bary,\qv}(W)
\begin{cases}
\le\sum_x P(x)(-\log\lambda_{\min}(\Wx{x}))<+\infty,&S_+\ne 0,\\
=+\infty,&\text{otherwise.}
\end{cases}
\end{align}
In particular, 
\begin{align}\label{multipsi infty}
-\psi_P^{\bary,\qv}(W)=+\infty\ds\iff\ds S_+=0\ds\iff\ds
Q_P^{\bary,\qv}(W)=0.
\end{align}

(ii) Assume that $P\in\P_f^{\pm}(\X)$ is such that $P(x)<0$ for some $x\in\X$. 
If for each $x$ such that $P(x)>0$, $D^{q_x}$ is anti-monotone in its second argument  
then 
\begin{align}\label{psi lower bound2}
\sum_{x:\,P(x)>0}P(x)\log\lambda_{\min}(\Wx{x})
+\sum_{x:\,P(x)<0}P(x)\log\Tr\Wx{x}
\le
\psi_P^{\bary,\qv}(W).
\end{align}
If for each $x$ such that $P(x)<0$, $D^{q_x}$ is anti-monotone in its second argument  
then 
\begin{align}\label{psi upper bounds2}
\psi_P^{\bary,\qv}(W)
\begin{cases}
\le
\sum_{x:\,P(x)>0} P(x)\log\Tr \Wx{x}
+
\sum_{x:\,P(x)<0} P(x)\log\lambda_{\min}(\Wx{x}),&S_+\le S_-,\\
=+\infty,&\text{otherwise;}
\end{cases}
\end{align}
in particular, 
\begin{align}\label{multipsi infty2}
\psi_P^{\bary,\qv}(W)=+\infty\ds\iff\ds S_+\nleq S_-\ds\iff\ds
Q_P^{\bary,\qv}(W)=+\infty.
\end{align}
\end{prop}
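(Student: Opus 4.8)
The plan is to prove part (i) first, then deduce part (ii) by a similar direct argument, since the two halves are structurally parallel (positive weights controlled from below by the trace lower bound \eqref{relentr lower bound}, negative weights by the same bound after flipping sign). For part (i), recall that $P\in\P_f(\X)$, so $S_-=I$ and the supremum in \eqref{weighted Q} runs over $\tau\in\B(S_+\hil)\p$. For the lower bound \eqref{psi lower bound} on $-\psi_P^{\bary,\qv}(W)$, equivalently the upper bound $\psi_P^{\bary,\qv}(W)\le\sum_xP(x)\log\Tr W_x$, I would use the representation $-\log Q_P^{\bary,\qv}(W)=R_{\D^{\qv},\bal}(W,P)=\inf_{\omega\in\S(S_+\hil)}\sum_xP(x)D^{q_x}(\omega\|W_x)$ from Lemma \ref{lemma:psi rep}(i), together with the trace-monotonicity bound \eqref{relentr lower bound}, which gives $D^{q_x}(\omega\|W_x)\ge-\log\Tr W_x$ for every $\omega\in\S(\hil)$ and every $x$. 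Summing against the probability weights $P(x)\ge 0$ yields $\sum_xP(x)D^{q_x}(\omega\|W_x)\ge-\sum_xP(x)\log\Tr W_x$ for every such $\omega$, hence the same bound for the infimum, which is exactly \eqref{psi lower bound}.

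For the upper bound \eqref{psi upper bounds}: if $S_+=0$ then the supremum in \eqref{weighted Q} is over $\{0\}$ alone, so $Q_P^{\bary,\qv}(W)=0$ (using \eqref{zero argument q2}) and $-\psi_P^{\bary,\qv}(W)=+\infty$. If $S_+\ne 0$, pick any fixed state $\omega_0\in\S(S_+\hil)$, e.g.\ $\omega_0:=S_+/\Tr S_+$; then $\omega_0^0\le S_+\le W_x^0$ for every $x\in\supp P$, so by the anti-monotonicity bound \eqref{relentr bounds} we have $D^{q_x}(\omega_0\|W_x)\le-\log\lambda_{\min}(W_x)+\Tr\omega_0\log\omega_0\le-\log\lambda_{\min}(W_x)<+\infty$. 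Plugging $\omega_0$ into the infimum defining $R_{\D^{\qv},\bal}(W,P)$ gives $R_{\D^{\qv},\bal}(W,P)\le\sum_xP(x)(-\log\lambda_{\min}(W_x))<+\infty$, which is \eqref{psi upper bounds}. The equivalence \eqref{multipsi infty} then follows by combining the two cases: $S_+=0\Leftrightarrow Q_P^{\bary,\qv}(W)=0\Leftrightarrow-\psi_P^{\bary,\qv}(W)=+\infty$, noting that when $S_+\ne 0$ the upper bound just established keeps $-\psi_P^{\bary,\qv}(W)$ finite.

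For part (ii), with $P(x)<0$ for some $x$, I would split $\X$ into $\{x:P(x)>0\}$, $\{x:P(x)<0\}$, and $\{x:P(x)=0\}$ (the last contributing nothing), and again work from $\psi_P^{\bary,\qv}(W)=-R_{\D^{\qv},\bal}(W,P)=-\inf_{\omega\in\S(S_+\hil)}\sum_xP(x)D^{q_x}(\omega\|W_x)=\sup_{\omega\in\S(S_+\hil)}\left(-\sum_xP(x)D^{q_x}(\omega\|W_x)\right)$. For the positive-weight terms, $-P(x)D^{q_x}(\omega\|W_x)$ is bounded above, using anti-monotonicity of $D^{q_x}$ via \eqref{relentr bounds}, by $-P(x)\log\lambda_{\min}(W_x)$... wait, sign check: $P(x)>0$ and $D^{q_x}(\omega\|W_x)\ge-\log\Tr W_x$, so $-P(x)D^{q_x}(\omega\|W_x)\le P(x)\log\Tr W_x$; and $P(x)<0$ with $D^{q_x}\ge-\log\Tr W_x$ gives $-P(x)D^{q_x}(\omega\|W_x)\le -P(x)(-\log\Tr W_x)=P(x)\log\Tr W_x$ — so actually the bound \eqref{psi lower bound2} uses anti-monotonicity on the positive-weight part (to get the finite upper bound $-\log\lambda_{\min}$) as a lower bound for $\psi$, and \eqref{psi upper bounds2} uses it on the negative-weight part. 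Concretely, for \eqref{psi lower bound2} I would test the supremum with $\omega_0:=S_+/\Tr S_+$ again (here $\omega_0^0\le S_+\le W_x^0$ for $P(x)>0$, so \eqref{relentr bounds} applies and gives $D^{q_x}(\omega_0\|W_x)\le-\log\lambda_{\min}(W_x)$; hence $-P(x)D^{q_x}(\omega_0\|W_x)\ge P(x)\log\lambda_{\min}(W_x)$), while for $P(x)<0$ we only need the trace lower bound \eqref{relentr lower bound}, $D^{q_x}(\omega_0\|W_x)\ge-\log\Tr W_x$, giving $-P(x)D^{q_x}(\omega_0\|W_x)\ge P(x)\log\Tr W_x$; summing yields \eqref{psi lower bound2}. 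For \eqref{psi upper bounds2}: if $S_+\nleq S_-$ then, as in the proof of Lemma \ref{lemma:psi rep}(i), there is an $x$ with $P(x)<0$ and $S_+\nleq W_x^0$, and testing the supremum with $\omega_0=S_+/\Tr S_+$ gives $D^{q_x}(\omega_0\|W_x)=+\infty$ by the support condition \eqref{supp const}, so $-P(x)D^{q_x}(\omega_0\|W_x)=+\infty$ and hence $\psi_P^{\bary,\qv}(W)=+\infty$. If $S_+\le S_-$, then for arbitrary $\omega\in\S(S_+\hil)$ we have $\omega^0\le S_+\le W_x^0$ for every $x\in\supp P$, so anti-monotonicity of the negative-weight $D^{q_x}$ gives $D^{q_x}(\omega\|W_x)\le-\log\lambda_{\min}(W_x)$, i.e.\ $-P(x)D^{q_x}(\omega\|W_x)\le P(x)\log\lambda_{\min}(W_x)$ (sign flips since $P(x)<0$), while $D^{q_x}(\omega\|W_x)\ge-\log\Tr W_x$ for positive-weight $x$ gives $-P(x)D^{q_x}(\omega\|W_x)\le P(x)\log\Tr W_x$; summing over $x$ and taking the supremum over $\omega$ yields \eqref{psi upper bounds2}. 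The equivalence \eqref{multipsi infty2} is then immediate: $S_+\nleq S_-$ forces $\psi=+\infty$, and $S_+\le S_-$ keeps it finite by the bound just shown. The main obstacle, such as it is, is bookkeeping the signs in the negative-weight terms and making sure the "test state" $\omega_0=S_+/\Tr S_+$ has support inside all the relevant $W_x^0$ — both of which are handled by the hypotheses $S_+\ne 0$ (resp.\ $S_+\le S_-$) and by mirroring the case analysis already carried out in Lemma \ref{lemma:psi rep}(i).
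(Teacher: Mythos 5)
Your proof is correct and takes essentially the same route as the paper, whose proof simply cites the same three ingredients you use: the representation \eqref{psi optimization}, the trace-monotonicity lower bound \eqref{relentr lower bound}, and the anti-monotonicity upper bound \eqref{relentr bounds}, evaluated at the test state $S_+/\Tr S_+$ with the same case analysis on $S_+$. (Like the paper's own argument, your derivation of \eqref{psi lower bound2} implicitly requires $S_+\ne 0$, but that is an edge case of the statement itself rather than a defect of your proof.)
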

\begin{proof}
The inequalities in \eqref{psi lower bound}, \eqref{psi upper bounds},
\eqref{psi lower bound2}, and \eqref{psi upper bounds2}
are obvious from \eqref{psi optimization}, \eqref{relentr lower bound}, and
\eqref{relentr bounds}, and \eqref{multipsi infty} 
and \eqref{multipsi infty2} follow immediately. 
\end{proof}

As a special case, we get the following characterization of the finiteness of the 
$2$-variable barycentric R\'enyi divergences. This gives an easy way to show that certain 
quantum R\'enyi divergences cannot be represented as barycentric R\'enyi divergences, as we show in Section \ref{sec:finiteness}.

\begin{cor}\label{cor:Renyi bq infty}
We have
\begin{align}\label{Renyi bq infty}
\D^{\bary,\qv}_{\alpha}(\rho\|\sigma)=+\infty
\begin{cases}
\iff\rho^0\wedge\sigma^0=0,&\text{when }\alpha\in[0,1),\\
\pmi\rho^0\nleq\sigma^0,&\text{when }\alpha>1.
\end{cases}
\end{align}
If $\D^{\qo}$ is anti-monotone in its second argument then the one-sided implication above is also an equivalence.
\end{cor}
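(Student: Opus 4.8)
The plan is to reduce the statement, for each value of $\alpha$, to a criterion for when $Q_\alpha^{\bary,\qv}(\rho\|\sigma)$ equals $0$ or $+\infty$, and then to read that criterion off from Proposition~\ref{prop:multi psi bounds} and Lemma~\ref{lemma:psi rep}, applied in the $2$-variable setting $\X=\{0,1\}$, $W_0=\rho$, $W_1=\sigma$, $P(0)=\alpha$. The elementary reduction is as follows. Since $\psi_1^{\bary,\qv}(\rho\|\sigma)=\log\Tr\rho$ is finite by \eqref{psi 1}, the identity \eqref{barycentric Dalpha def2} shows that for $\alpha\in(0,1)$ one has $\D_\alpha^{\bary,\qv}(\rho\|\sigma)=+\infty$ iff $Q_\alpha^{\bary,\qv}(\rho\|\sigma)=0$, and for $\alpha>1$ one has $\D_\alpha^{\bary,\qv}(\rho\|\sigma)=+\infty$ iff $Q_\alpha^{\bary,\qv}(\rho\|\sigma)=+\infty$; and \eqref{barycentric Dalpha def} gives directly $\D_0^{\bary,\qv}(\rho\|\sigma)=-\log Q_0^{\bary,\qv}(\rho\|\sigma)+\log\Tr\rho$ (which is finite whenever $Q_0^{\bary,\qv}(\rho\|\sigma)\neq 0$, as $Q_0^{\bary,\qv}(\rho\|\sigma)\le\Tr\sigma$ by \eqref{Tropp variational}), so $\D_0^{\bary,\qv}(\rho\|\sigma)=+\infty$ iff $Q_0^{\bary,\qv}(\rho\|\sigma)=0$.

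For $\alpha\in(0,1)$, $Q_\alpha^{\bary,\qv}(\rho\|\sigma)$ coincides with the quantity $Q_P^{\bary,\qv}$ of Definition~\ref{defin:weighted Q} for the probability measure $P=(\alpha,1-\alpha)$ (compare \eqref{barycentric Qalpha def3} with \eqref{weighted Q}; here $S_+=\rho^0\wedge\sigma^0$), so \eqref{multipsi infty} of Proposition~\ref{prop:multi psi bounds}(i) gives $Q_\alpha^{\bary,\qv}(\rho\|\sigma)=0$ iff $\rho^0\wedge\sigma^0=0$, which is the asserted equivalence. For $\alpha=0$ I would instead use the representation \eqref{psi alpha} of Lemma~\ref{lemma:psi rep}(ii), which at $\alpha=0$ reads $-\log Q_0^{\bary,\qv}(\rho\|\sigma)=\inf_{\omega\in\S(\rho^0\hil)}\D^{\qo}(\omega\|\sigma)$; by the support condition \eqref{supp const} this infimum equals $+\infty$ precisely when no state is supported in $\ran\rho^0\cap\ran\sigma^0$, i.e.\ precisely when $\rho^0\wedge\sigma^0=0$. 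This settles the range $\alpha\in[0,1)$.

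For $\alpha>1$ we have $P=(\alpha,1-\alpha)$ with $P(1)<0$, so $S_+=\rho^0$, $S_-=\sigma^0$, and again $Q_\alpha^{\bary,\qv}(\rho\|\sigma)=Q_P^{\bary,\qv}$. The implication $\rho^0\nleq\sigma^0\imp Q_\alpha^{\bary,\qv}(\rho\|\sigma)=+\infty$ is unconditional: taking $\tau=\rho^0/\Tr\rho^0$ in \eqref{barycentric Qalpha def}, the support condition \eqref{supp const} gives $\D^{\qn}(\tau\|\rho)<+\infty$ while $\D^{\qo}(\tau\|\sigma)=+\infty$, and since $1-\alpha<0$ the objective is $+\infty$ (the same computation as in the proof of Lemma~\ref{lemma:psi rep}(i)); hence $\D_\alpha^{\bary,\qv}(\rho\|\sigma)=+\infty$, which is the one-sided implication ``$\pmi$''. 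For the converse under the extra hypothesis that $\D^{\qo}$ is anti-monotone in its second argument, I would invoke \eqref{psi upper bounds2} of Proposition~\ref{prop:multi psi bounds}(ii): the only index with negative weight is $x=1$ with $D^{q_1}=\D^{\qo}$, so if $\rho^0\le\sigma^0$, i.e.\ $S_+\le S_-$, then $\psi_\alpha^{\bary,\qv}(\rho\|\sigma)\le\alpha\log\Tr\rho+(1-\alpha)\log\lambda_{\min}(\sigma)<+\infty$, whence $\D_\alpha^{\bary,\qv}(\rho\|\sigma)<+\infty$; combined with the previous implication this yields the equivalence for $\alpha>1$.

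I expect no serious obstacle here, since the real content has already been packaged into Proposition~\ref{prop:multi psi bounds} and Lemma~\ref{lemma:psi rep}. The only points requiring care are the bookkeeping of which support restriction on $\tau$ (namely $\tau^0\le\rho^0$ versus $\tau^0\le\rho^0\wedge\sigma^0$) is in force for $\alpha=0$, for $\alpha\in(0,1)$, and for $\alpha>1$, and cleanly separating the unconditional direction for $\alpha>1$ from the one that genuinely uses anti-monotonicity of $\D^{\qo}$.
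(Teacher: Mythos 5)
Your proof is correct and takes essentially the same route as the paper, whose entire argument is to cite Proposition \ref{prop:multi psi bounds} for $\alpha\in(0,+\infty)$ and the representation \eqref{Renyi bq4} (equivalently \eqref{psi alpha}) for $\alpha=0$ — exactly the reductions you carry out. The only difference is that you make the unconditional implication for $\alpha>1$ explicit by evaluating the objective at $\tau=\rho^0/\Tr\rho^0$, a detail the paper leaves inside ``immediate from Proposition \ref{prop:multi psi bounds}'' (and which also appears as \eqref{bary infty for a>1} in the proof of Proposition \ref{prop:limits}).
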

\begin{proof}
The case $\alpha\in(0,+\infty)$ is immediate from Proposition \ref{prop:multi psi bounds}.
The case $\alpha=0$ follows 
immediately from \eqref{Renyi bq4}.
\end{proof}

Finally, we turn to the strict positivity of the $2$-variable barycentric R\'enyi divergences.
\renewcommand{\theenumi}{(\alph{enumi})}

\begin{prop}\label{prop:barycentric nonneg}
Let $\rho,\sigma\in\B(\hil)\pne$.
\smallskip

\noindent(i) For every $\alpha\in[0,+\infty]$, 
\begin{align}
\D^{\bary,\qv}_{\alpha}(\rho\|\sigma)\ge \log\Tr\rho-\log\Tr\sigma.\label{Renyi bq strictpos}
\end{align}

\noindent(ii) If $\sigma^0\le\rho^0$ then 
\eqref{Renyi bq strictpos} holds with equality for $\alpha=0$. 
If $D^{\qo}$ is strictly positive and 
$D^{\qo}(\valt\|\sigma)$ is lower semi-continuous on $\S(\rho^0\hil)$ 
then $\sigma^0\le\rho^0$
is also necessary for equality in \eqref{Renyi bq strictpos} for $\alpha=0$.
\smallskip

\noindent(iii)  
We have \ref{lower bound sat1}$\imp$\ref{lower bound sat2}$\pmi$\ref{lower bound sat3}$\pmi$\ref{lower bound sat4} and \ref{lower bound sat1}$\imp$\ref{lower bound sat3}
in the following:
\begin{enumerate}
\item\label{lower bound sat1}
Equality holds in \eqref{Renyi bq strictpos} for every $\alpha\in[0,+\infty]$.

\item\label{lower bound sat2}
Equality holds in \eqref{Renyi bq strictpos} for some $\alpha\in(0,+\infty]$.

\item\label{lower bound sat3}
Equality holds in \eqref{Renyi bq strictpos} for every $\alpha\in[0,1]$.

\item\label{lower bound sat4}
$\rho/\Tr\rho=\sigma/\Tr\sigma$.
\end{enumerate}

If $D^{\qn}$ and $D^{\qo}$ are strictly positive, and
$D^{\qn}(\valt\|\rho/\Tr\rho)$ and $D^{\qo}(\valt\|\sigma/\Tr\sigma)$ are lower semi-continuous on 
$\S(\rho^0\hil)$, then we have
\ref{lower bound sat1}$\imp$\ref{lower bound sat2}$\iff$\ref{lower bound sat3}$\iff$\ref{lower bound sat4}.

On the other hand, if $D^{\qn}=D^{\qo}$ then the implication 
\ref{lower bound sat4}$\imp$\ref{lower bound sat1} holds.
\end{prop}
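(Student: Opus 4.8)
The plan is to prove the chain of implications in Proposition~\ref{prop:barycentric nonneg}(iii) by exploiting the variational representations in Remark~\ref{rem:Dalpha rep} together with the scaling laws and the strict positivity and trace monotonicity of the defining relative entropies.

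\textbf{Part (i).} First I would establish the universal lower bound \eqref{Renyi bq strictpos}. For $\alpha\in(0,1)$, use \eqref{Renyi bq5}: each term $\alpha D^{\qn}(\omega\|\rho/\Tr\rho)+(1-\alpha)D^{\qo}(\omega\|\sigma/\Tr\sigma)$ is non-negative since $D^{\qn},D^{\qo}$ are non-negative (being quantum relative entropies in the restrictive sense of Section~\ref{sec:barycentric}), hence the infimum over $\omega$ is non-negative, and dividing by $1-\alpha>0$ and adding $\log\Tr\rho-\log\Tr\sigma$ gives the claim. For $\alpha>1$, use \eqref{Renyi bq4}: now $1/(1-\alpha)<0$, but the bracket is still non-negative, so the whole infimum term is $\le 0$; however one rewrites $\alpha D^{\qn}(\omega\|\rho/\Tr\rho)+(1-\alpha)D^{\qo}(\omega\|\sigma/\Tr\sigma)=D^{\qo}(\omega\|\sigma/\Tr\sigma)-\alpha[D^{\qo}(\omega\|\sigma/\Tr\sigma)-D^{\qn}(\omega\|\rho/\Tr\rho)]$ is not obviously signed — instead the cleanest route is to invoke trace monotonicity: by Remark~\ref{rem:tracemon}, any quantum relative entropy in the restrictive sense is trace-monotone, and since $D_\alpha^{\bary,\qv}$ satisfies the scaling law \eqref{Renyi scaling} (Corollary~\ref{cor:barycentric sclaing}) and Proposition~\ref{prop:barycentric quantum Renyi} makes it a quantum R\'enyi divergence for $\alpha\in(0,1)$, one gets non-negativity there; for $\alpha\ge 1$ use the monotonicity in $\alpha$ from Proposition~\ref{prop:monconvex}(ii) together with the value at $\alpha=1$, namely $D_1^{\bary,\qv}(\rho\|\sigma)=\frac{1}{\Tr\rho}D^{\qo}(\rho\|\sigma)\ge \log\Tr\rho-\log\Tr\sigma$ by \eqref{relentr lower bound} applied to $D^{\qo}$, which gives \eqref{Renyi bq strictpos} for all $\alpha\ge 1$. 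The case $\alpha=0$ is handled via \eqref{Renyi bq limit0}: it equals $\inf_\omega D^{\qo}(\omega\|\sigma/\Tr\sigma)+\log\Tr\rho-\log\Tr\sigma$, and the infimum is $\ge -\log\Tr(\sigma/\Tr\sigma)=0$ by trace monotonicity of $D^{\qo}$.

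\textbf{Part (ii).} For $\sigma^0\le\rho^0$, \eqref{barycentric 0-Renyi} already gives equality at $\alpha=0$. Conversely, if $D^{\qo}$ is strictly positive and $D^{\qo}(\cdot\|\sigma)$ is lower semi-continuous on $\S(\rho^0\hil)$ (a compact set), the infimum in \eqref{Renyi bq limit0}, i.e.\ $\inf_{\omega\in\S(\rho^0\hil)}D^{\qo}(\omega\|\sigma/\Tr\sigma)$, is attained at some $\omega_*$; equality in \eqref{Renyi bq strictpos} forces this infimum to be $0$, so $D^{\qo}(\omega_*\|\sigma/\Tr\sigma)=0$, whence $\omega_*=\sigma/\Tr\sigma$ by strict positivity, and in particular $\sigma^0=\omega_*^0\le\rho^0$.

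\textbf{Part (iii).} The easy implications: \ref{lower bound sat1}$\imp$\ref{lower bound sat2} and \ref{lower bound sat1}$\imp$\ref{lower bound sat3} are trivial. For \ref{lower bound sat4}$\imp$\ref{lower bound sat3}: if $\rho/\Tr\rho=\sigma/\Tr\sigma=:\omega_0$, then for $\alpha\in[0,1]$ plug $\omega=\omega_0$ into \eqref{Renyi bq5}; both relative-entropy terms vanish (any quantum relative entropy is a quantum extension of the classical one, so $D^q(\omega_0\|\omega_0)=0$), giving an upper bound equal to $\log\Tr\rho-\log\Tr\sigma$, which combined with (i) yields equality. For \ref{lower bound sat4}$\imp$\ref{lower bound sat2}: same substitution in \eqref{Renyi bq4} for $\alpha>1$ — but here the coefficient $1/(1-\alpha)$ is negative, so $\omega=\omega_0$ gives the \emph{upper} bound $\frac{1}{1-\alpha}\cdot 0+\log\Tr\rho-\log\Tr\sigma$ only if the infimum is non-negative; one needs the infimum over $\omega$ of $\alpha D^{\qn}(\omega\|\rho/\Tr\rho)+(1-\alpha)D^{\qo}(\omega\|\sigma/\Tr\sigma)$ to equal $0$, which holds because the expression is $\le 0$ at $\omega_0$ and we need it $\ge 0$ — actually for $\alpha>1$ one should instead use \eqref{Renyi bq limit infty2}-type reasoning or simply note monotonicity in $\alpha$ plus the value at $\alpha=1$; I would argue: by \ref{lower bound sat4} and \eqref{Renyi bq limit11} (valid when $D^{\qn}$ strictly positive, relevant continuity holds — in the case $D^{\qn}=D^{\qo}$ one uses the unconditional $D_1^{\bary,\qv}=\frac{1}{\Tr\rho}D^{\qo}(\rho\|\sigma)=\log\Tr\rho-\log\Tr\sigma$), and since $\alpha\mapsto D_\alpha^{\bary,\qv}$ is non-decreasing (Prop.~\ref{prop:monconvex}) and bounded below by $\log\Tr\rho-\log\Tr\sigma$, for $\alpha\ge 1$ we'd need an upper bound matching; this is where the hypothesis $D^{\qn}=D^{\qo}=D^q$ is essential, since then by self-duality (Remark~\ref{rem:dual}) and \eqref{Renyi bq limit infty2}, $D_\infty^{\bary,q}(\rho\|\sigma)=\sup_\omega\{D^q(\omega\|\sigma)-D^q(\omega\|\rho)\}$, and with $\rho/\Tr\rho=\sigma/\Tr\sigma$ the scaling law \eqref{scaling2} makes $D^q(\omega\|\sigma)-D^q(\omega\|\rho)=\log\Tr\rho-\log\Tr\sigma$ identically, so equality holds at $\alpha=+\infty$, hence (by monotonicity and (i)) at every $\alpha\in[1,+\infty]$ — this gives \ref{lower bound sat4}$\imp$\ref{lower bound sat1} in the case $D^{\qn}=D^{\qo}$. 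For the reverse implications in the strictly-positive lsc case: \ref{lower bound sat2}$\imp$\ref{lower bound sat4}, assume equality for some $\alpha_0\in(0,1]$; by the minimax/attainment argument the optimal $\omega_*$ satisfies $\alpha_0 D^{\qn}(\omega_*\|\rho/\Tr\rho)+(1-\alpha_0)D^{\qo}(\omega_*\|\sigma/\Tr\sigma)=0$ with both terms $\ge0$, forcing each to vanish, so $\omega_*=\rho/\Tr\rho$ and $\omega_*=\sigma/\Tr\sigma$ by strict positivity, giving \ref{lower bound sat4}; for $\alpha_0>1$, equality in \eqref{Renyi bq strictpos} combined with \eqref{bary infty for a>1} forces $\rho^0\le\sigma^0$, then examine \eqref{Renyi bq limit infty2} — the supremum being $\le\log\Tr\rho-\log\Tr\sigma$ forces $D^{\qo}(\omega\|\sigma)-D^{\qn}(\omega\|\rho)\le\log\Tr\rho-\log\Tr\sigma$ for all $\omega\in\S(\rho^0\hil)$; taking $\omega=\rho/\Tr\rho$ and using scaling gives $D^{\qo}(\rho/\Tr\rho\|\sigma/\Tr\sigma)\le 0$, whence $\rho/\Tr\rho=\sigma/\Tr\sigma$. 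Finally \ref{lower bound sat3}$\iff$\ref{lower bound sat4} follows from \ref{lower bound sat4}$\imp$\ref{lower bound sat3} (shown above) and \ref{lower bound sat3}$\imp$\ref{lower bound sat2}$\imp$\ref{lower bound sat4}.

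\textbf{Main obstacle.} The delicate point is the $\alpha>1$ régime, where the negative prefactor $1/(1-\alpha)$ inverts the direction of the optimization and one cannot simply substitute a test state into \eqref{Renyi bq4} to get an upper bound. Here I expect to lean on \eqref{Renyi bq limit infty2}, the duality of Remark~\ref{rem:dual}, and crucially the hypothesis $D^{\qn}=D^{\qo}$ (for the full implication \ref{lower bound sat4}$\imp$\ref{lower bound sat1}), being careful that \eqref{Renyi bq limit11} requires semicontinuity hypotheses that are only assumed in the second clause of~(iii). Keeping straight exactly which hypotheses are in force in each of the three asserted implication-clauses — the unconditional one, the strictly-positive-lsc one, and the $D^{\qn}=D^{\qo}$ one — is the bookkeeping that needs the most care.
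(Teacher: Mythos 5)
Most of your argument runs along the same lines as the paper's proof, but there is one genuine gap, in the implication \ref{lower bound sat2}$\imp$\ref{lower bound sat4} when the equality in \eqref{Renyi bq strictpos} is assumed only at a \emph{finite} $\alpha_0\in(1,+\infty)$. There you argue that "the supremum in \eqref{Renyi bq limit infty2} is $\le\log\Tr\rho-\log\Tr\sigma$", but that bound is the statement that equality holds at $\alpha=+\infty$, which does \emph{not} follow from equality at a finite $\alpha_0$: by the monotonicity of Proposition \ref{prop:monconvex} one only gets $\D^{\bary,\qv}_{\infty}\ge\D^{\bary,\qv}_{\alpha_0}$, i.e.\ the supremum could well exceed $\log\Tr\rho-\log\Tr\sigma$. (Your argument is fine for $\alpha_0=+\infty$ itself, where it is essentially the test-state bound with $\omega=\rho/\Tr\rho$.) The correct and much shorter route — the one the paper takes — is the downward propagation you already use elsewhere: equality at some $\alpha_0\in(0,+\infty]$ together with part (i) and the monotonicity of $\alpha\mapsto\D_\alpha^{\bary,\qv}$ forces equality at every $\beta\in[0,\alpha_0]$, in particular at some $\beta\in(0,1)$, where the infimum in \eqref{Renyi bq4} is zero; lower semi-continuity then gives an attaining $\omega_0$, and strict positivity of both $D^{\qn}$ and $D^{\qo}$ forces $\omega_0=\rho/\Tr\rho=\sigma/\Tr\sigma$. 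The same reduction also cures the endpoint $\alpha_0=1$ of your "$(0,1]$" case, where your argument as written only forces $\omega_*=\rho/\Tr\rho$ because the coefficient $1-\alpha_0$ vanishes (and \eqref{Renyi bq4} is not literally the definition at $\alpha=1$).

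A smaller bookkeeping issue: in part (i), for $\alpha\ge 1$ you invoke the identity $\D_1^{\bary,\qv}(\rho\|\sigma)=\frac{1}{\Tr\rho}\D^{\qo}(\rho\|\sigma)$ from \eqref{Renyi bq limit11}, which requires strict positivity and semi-continuity hypotheses not assumed in (i). This detour is unnecessary: since $\D_1^{\bary,\qv}$ is by definition \eqref{Db 1 def} the supremum over $\alpha\in(0,1)$, the bound \eqref{Renyi bq strictpos} for $\alpha\in[0,1)$ plus monotonicity in $\alpha$ already yields it unconditionally for all $\alpha\in[1,+\infty]$, which is exactly how the paper argues. The remaining parts — (ii), \ref{lower bound sat4}$\imp$\ref{lower bound sat3}, and \ref{lower bound sat4}$\imp$\ref{lower bound sat1} under $D^{\qn}=D^{\qo}$ via \eqref{Renyi bq limit infty2} and the squeeze between $\alpha$ and $+\infty$ — match the paper's proof.
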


\begin{proof}
(i) The inequality in \eqref{Renyi bq strictpos} for $\alpha\in[0,1)$ is immediate from 
\eqref{Renyi bq4} and the non-negativity of $D^{\qn}$ and $D^{\qo}$. 
From this 
\eqref{Renyi bq strictpos} follows also for $\alpha\in[1,+\infty]$, using the monotonicity 
in Proposition \ref{prop:monconvex}. 

(ii) Let $\alpha=0$. If $\sigma^0\le\rho^0$ then 
\eqref{Renyi bq strictpos}
holds with equality, according to \eqref{barycentric 0-Renyi}. Assume now that $D^{\qo}(\valt\|\sigma)$ is lower semi-continuous on 
$\S(\rho^0\hil)$, so that there exists an $\omega_0\in\S(\rho^0\hil)$ such that 
$\inf_{\omega\in\S(\rho^0\hil)}\D^{\qo}\big(\omega\big\|\frac{\sigma}{\Tr\sigma}\big)=\D^{\qo}\big(\omega_0\big\|\frac{\sigma}{\Tr\sigma}\big)$.
Now, if $D^{\qo}$ is strictly positive then, by
\eqref{Renyi bq limit0}, 
$D_0^{\bary,\qv}(\rho\|\sigma)>\log\Tr\rho-\log\Tr\sigma$ unless
$\omega_0=\sigma/\Tr\sigma$, which in turn implies that $\sigma^0=\omega_0^0\le\rho^0$. 
\smallskip

(iii)
The implications 
\ref{lower bound sat1}$\imp$\ref{lower bound sat2}$\pmi$\ref{lower bound sat3} 
and \ref{lower bound sat1}$\imp$\ref{lower bound sat3} are obvious. If 
$\rho/\Tr\rho=\sigma/\Tr\sigma$ then choosing 
$\omega:=\rho/\Tr\rho$ yields that the infimum in 
\eqref{Renyi bq4} is zero, and hence equality holds in \eqref{Renyi bq strictpos}, for every $\alpha\in[0,1)$, and also for $\alpha=1$, by taking the limit.
This proves \ref{lower bound sat4}$\imp$\ref{lower bound sat3}.

Next, we prove \ref{lower bound sat2}$\imp$\ref{lower bound sat4}
under the assumption that 
$D^{\qn}$ and $D^{\qo}$ are strictly positive, and
$D^{\qn}(\valt\|\rho/\Tr\rho)$ and $D^{\qo}(\valt\|\sigma/\Tr\sigma)$ are lower semi-continuous on $\S(\rho^0\hil)$.
Note that equality in \eqref{Renyi bq strictpos} for some 
$\alpha\in(0,+\infty]$ implies equality 
for every $\beta\in[0,\alpha]$, according to (i) and 
the monotonicity in 
Proposition \ref{prop:monconvex}.
In particular, the 
infimum in \eqref{Renyi bq4} is zero. By the semi-continuity assumptions,
there exists an $\omega_0$ where that infimum is attained.
Strict positivity of $D^{\qn}$ and $D^{\qo}$ then implies
$\omega_0=\rho/\Tr\rho$ and $\omega_0=\sigma/\Tr\sigma$, proving 
\ref{lower bound sat4}.

Finally, if $D^{\qn}=D^{\qo}$ and $\rho/\Tr\rho=\sigma/\Tr\sigma$ then 
\eqref{Renyi bq strictpos}, \eqref{Renyi bq limit infty1}, and the monotonicity 
established in Proposition \ref{prop:monconvex}, yield that for every $\alpha\in[0,+\infty]$,
\begin{align*}
\log\Tr\rho-\log\Tr\sigma\le 
D_{\alpha}^{\bary,(\qn,\qo)}(\rho\|\sigma)\le
D_{\infty}^{\bary,(\qn,\qo)}(\rho\|\sigma)=
\log\Tr\rho-\log\Tr\sigma,
\end{align*}
proving 
the implication 
\ref{lower bound sat4}$\imp$\ref{lower bound sat1}.
\end{proof}

\begin{rem}
Note that by Remark \ref{rem:notqRenyi},
the condition $D^{\qn}=D^{\qo}$ cannot be completely omitted for
the implication 
\ref{lower bound sat4}$\imp$\ref{lower bound sat1} in (iii) to hold.
\end{rem}

\renewcommand{\theenumi}{(\roman{enumi})}

Proposition \ref{prop:barycentric nonneg} yields immediately the following:

\begin{cor}
For every $\alpha\in[0,+\infty]$, $D_{\alpha}^{\bary,\qv}$ is non-negative. If 
$D^{\qn}$ and $D^{\qo}$ are both strictly positive and lower semi-continuous in their 
first arguments then $D_{\alpha}^{\bary,\qv}$ is strictly positive for every 
$\alpha\in(0,+\infty]$.
\end{cor}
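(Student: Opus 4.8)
The plan is to derive both claims directly from Proposition~\ref{prop:barycentric nonneg}, so that no new computation is required; the only work is to reformulate its conclusions, which are phrased in terms of the scalar bound $\log\Tr\rho-\log\Tr\sigma$, in the language of ``non-negative'' and ``strictly positive'' divergences from Section~\ref{sec:multidiv}. Recall that for a $2$-variable quantum divergence these properties are tested on pairs of \emph{density} operators, where $\Tr\rho=\Tr\sigma=1$, so the bound $\log\Tr\rho-\log\Tr\sigma$ simply collapses to $0$.

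For non-negativity I would fix $\rho,\sigma\in\S(\hil)$ and invoke Proposition~\ref{prop:barycentric nonneg}(i), which gives
\[
D_{\alpha}^{\bary,\qv}(\rho\|\sigma)\ \ge\ \log\Tr\rho-\log\Tr\sigma\ =\ 0,\qquad \alpha\in[0,+\infty].
\]
Since the barycentric R\'enyi divergences are quantum divergences (Lemma~\ref{lemma:isoinv}), this is exactly the assertion that $D_{\alpha}^{\bary,\qv}$ is non-negative. Note that this step uses only the non-negativity built into the notion of quantum relative entropy employed in this section, not pinching-monotonicity of $D^{\qn},D^{\qo}$.

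For strict positivity, assume in addition that $D^{\qn},D^{\qo}$ are strictly positive and lower semi-continuous in their first arguments, and fix $\alpha\in(0,+\infty]$. For density operators $\rho,\sigma$, the equality $D_{\alpha}^{\bary,\qv}(\rho\|\sigma)=0$ is, because $\log\Tr\rho-\log\Tr\sigma=0$, precisely the statement that equality holds in \eqref{Renyi bq strictpos} at this $\alpha$, i.e.\ condition~\ref{lower bound sat2} of Proposition~\ref{prop:barycentric nonneg}(iii). Under the present hypotheses that proposition gives \ref{lower bound sat2}$\imp$\ref{lower bound sat4}, hence $\rho/\Tr\rho=\sigma/\Tr\sigma$, and since both are states, $\rho=\sigma$; combined with non-negativity this yields the faithfulness implication $D_{\alpha}^{\bary,\qv}(\rho\|\sigma)=0\Rightarrow\rho=\sigma$. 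The reverse implication follows from \ref{lower bound sat4}$\imp$\ref{lower bound sat3} (equality in \eqref{Renyi bq strictpos} for all $\alpha\in[0,1]$), and from \ref{lower bound sat4}$\imp$\ref{lower bound sat1} in the case $D^{\qn}=D^{\qo}$, which covers all $\alpha\in(0,+\infty]$.

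I do not expect a genuine obstacle; the only point demanding care is the corner $\alpha>1$ with $D^{\qn}\ne D^{\qo}$, where $D_{\alpha}^{\bary,\qv}(\rho\|\rho)$ may be strictly positive (cf.\ Remark~\ref{rem:notqRenyi}), so that there the content of the corollary is exactly the faithfulness implication $D_{\alpha}^{\bary,\qv}(\rho\|\sigma)=0\imp\rho=\sigma$, which, as above, holds unconditionally once $D^{\qn},D^{\qo}$ are strictly positive and lower semi-continuous in their first variable.
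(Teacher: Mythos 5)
Your proof is correct and is essentially the paper's argument: the corollary is stated there as an immediate consequence of Proposition \ref{prop:barycentric nonneg}, and your write-up simply spells out parts (i) and (iii) of that proposition restricted to density operators, where the bound $\log\Tr\rho-\log\Tr\sigma$ vanishes. Your closing remark about the corner $\alpha>1$ with $D^{\qn}\ne D^{\qo}$, where only the faithfulness implication $D_{\alpha}^{\bary,\qv}(\rho\|\sigma)=0\imp\rho=\sigma$ can be meant (cf.\ Remark \ref{rem:notqRenyi}), goes slightly beyond what the paper makes explicit but matches the intended reading.
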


\begin{rem}
Using the scaling property \eqref{barycentric scaling}, \eqref{Renyi bq strictpos} can be 
rewritten equivalently as
\begin{align*}
D_{\alpha}^{\bary,\qv}\bz\frac{\rho}{\Tr\rho}\Big\|\frac{\sigma}{\Tr\sigma}\jz
\ge 0,\ds\ds\ds\alpha\in[0,+\infty].
\end{align*}
\end{rem}

\begin{rem}
Note that \eqref{Renyi bq strictpos} tells exactly that the trace-monotonicity of 
$D^{\qn}$ and $D^{\qo}$ is inherited by all the generated barycentric R\'enyi divergences
$D_{\alpha}^{\bary,\qv}$, $\alpha\in[0,+\infty]$.

Since the barycentric R\'enyi divergences satisfy the scaling property \eqref{Renyi scaling} according to Corollary \ref{cor:barycentric sclaing}, 
their non-negativity is in fact equivalent to trace-monotonicity, as noted in
Remark \ref{rem:tracemon}.
\end{rem}

\subsection{Monotonicity under CPTP maps and joint convexity}

One of the most important properties of quantum divergences is monotonicity under quantum 
operations (i.e., CPTP maps).
Many of the important quantum divergences are monotone under more general trace-preserving
maps, e.g., dual Schwarz maps in the case of Petz-type R\'enyi divergences for 
$\alpha\in[0,2]$ \cite{P86}, or 
PTP maps in the case of the sandwiched R\'enyi divergences for $\alpha\ge 1/2$ 
\cite{Beigi,Jencova_NCLpII,MHR},
and the measured as well as the maximal R\'enyi divergences for $\alpha\in[0,+\infty]$, by definition.
It is easy to see that for $\alpha\in[0,1]$, 
the barycentric R\'enyi $\alpha$-divergences are monotone under the same class
of PTP maps as their generating quantum relative entropies. 
More generally, we have the following:

\begin{prop}\label{prop:weighted PTP-mon}
If all $D^{q_x}$, $x\in\X$, are monotone non-increasing
under a trace non-decreasing positive map $\map\in\pplus(\hil,\kil)$ then
$Q_P^{\bary,\qv}$ is 
monotone non-decreasing, and 
$R_{D^{\qv},\bal}$ is 
monotone non-increasing under $\map$, i.e., for every 
gcq channel $W\in\B(\X,\hil)\pne$ and every $P\in\P_f(\X)$, 
\begin{align}
Q^{\bary,\qv}_P\bz \map(W)\jz&\ge Q^{\bary,\qv}_P\bz W\jz,\label{Qmon}\\
R_{D^{\qv},\bal}\bz \map(W),P\jz
&\le
R_{D^{\qv},\bal}\bz W,P\jz.\label{Rmon}
\end{align}
\end{prop}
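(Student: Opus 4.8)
The plan is to reduce the statement to the variational representation in \eqref{psi optimization}, namely $-\log Q_P^{\bary,\qv}(W)=R_{\D^{\qv},\bal}(W,P)$, and to prove \eqref{Rmon} directly; the inequality \eqref{Qmon} is then immediate since $-\log$ is order-reversing. So it suffices to show that for every $P\in\P_f(\X)$ and every gcq channel $W\in\B(\X,\hil)\pne$,
\begin{align*}
\inf_{\omega'\in\S(\tilde S_+\kil)}\sum_{x\in\X}P(x)\D^{q_x}(\omega'\|\map(W_x))
\le
\inf_{\omega\in\S(S_+\hil)}\sum_{x\in\X}P(x)\D^{q_x}(\omega\|W_x),
\end{align*}
where $\tilde S_+=\bigwedge_{x:\,P(x)>0}\map(W_x)^0$ and $S_+=\bigwedge_{x:\,P(x)>0}W_x^0$.

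\textbf{Key steps.} First I would fix an arbitrary test state $\omega\in\S(S_+\hil)$ and consider $\omega':=\map(\omega)$. Since $\map$ is trace non-decreasing and positive, $\Tr\map(\omega)\ge\Tr\omega=1$; here one should be slightly careful because $\map$ need not be trace-preserving, so $\map(\omega)$ need not be a state. Two ways around this: either first treat $\map$ trace-preserving (the case stated in many applications) and then note that the general trace non-decreasing case follows because adding mass to the second argument only decreases each $\D^{q_x}$ via the support/scaling conventions — or, more cleanly, work at the level of \eqref{weighted Q} with unnormalized $\tau$, where the homogeneity/scaling laws \eqref{multiQ scaling}--\eqref{multipsi scaling} handle the normalization automatically. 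I would take the latter route: use \eqref{weighted Q}, pick $\tau\in\B(S_+\hil)\p$ feasible, set $\tau':=\map(\tau)$, observe $\tau'\in\B(\tilde S_+\kil)\p$ since $\map$ positive forces $\map(\tau)^0\le\map(W_x)^0$ whenever $\tau^0\le W_x^0$, and then estimate
\begin{align*}
\Tr\tau'-\sum_{x\in\X}P(x)\D^{q_x}(\tau'\|\map(W_x))
\ge
\Tr\tau-\sum_{x\in\X}P(x)\D^{q_x}(\tau\|W_x),
\end{align*}
using $\Tr\map(\tau)\ge\Tr\tau$ and, term by term, the hypothesis $\D^{q_x}(\map(\tau)\|\map(W_x))\le\D^{q_x}(\tau\|W_x)$ together with $P(x)\ge 0$. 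Taking the supremum over feasible $\tau$ gives \eqref{Qmon}, and \eqref{Rmon} follows by applying $-\log$ and \eqref{psi optimization}. One must also check the boundary case $S_+=0$: then $Q_P^{\bary,\qv}(W)=0$ by \eqref{multipsi infty}, so \eqref{Qmon} holds trivially, and $R_{\D^{\qv},\bal}(W,P)=+\infty$, so \eqref{Rmon} holds trivially.

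\textbf{Main obstacle.} The only genuine subtlety is the interplay between $\map$ not being trace-preserving and the need for $\tau'=\map(\tau)$ to be an admissible argument, i.e.\ $\tau'^0\le\tilde S_+$; this requires knowing that monotonicity of each $\D^{q_x}$ under $\map$ is only invoked on pairs where the finiteness (support) condition \eqref{supp const} is respected, which it is, since $\tau^0\le S_+\le W_x^0$ for $x\in\supp P$ with $P(x)>0$ forces $\map(\tau)^0\le\map(W_x)^0$. For indices $x$ with $P(x)=0$ the corresponding term drops out by the convention $0\cdot(\pm\infty)=0$, so no issue arises there. Everything else is a direct term-by-term comparison, and the scaling/homogeneity of $Q_P^{\bary,\qv}$ (Lemma \ref{lemma:barycentric sclaing}) absorbs the extra trace mass contributed by $\Tr\map(\tau)\ge\Tr\tau$.
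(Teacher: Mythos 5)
Your proposal is correct and takes essentially the same route as the paper: compare the two $Q$-values through the variational formula \eqref{weighted Q} by inserting $\map(\tau)$ as a test operator, using $\Tr\map(\tau)\ge\Tr\tau$ and the term-by-term monotonicity $D^{q_x}(\map(\tau)\|\map(W_x))\le D^{q_x}(\tau\|W_x)$ together with $P(x)\ge 0$, and then deducing \eqref{Rmon} from \eqref{Qmon} via $-\log Q_P^{\bary,\qv}=R_{D^{\qv},\bal}$. The only cosmetic difference is that you keep the supremum restricted to $\B(S_+\hil)\p$ and verify $\map(\tau)^0\le\tilde S_+$ (which indeed follows from $\tau\le c\,W_x$ and positivity), whereas the paper sidesteps this by using the equivalent unrestricted supremum over $\B(\hil)\p$ from \eqref{barycentric Qalpha def support}, valid since $P$ is a probability measure.
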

\begin{proof}
We have
\begin{align*}
Q^{\bary,\qv}_P\bz \map(W)\jz&=
\sup_{\tau\in\B(\kil)\p}\left\{
\Tr\tau-\sum_{x\in\X}P(x)D^{q_x}(\tau\|\map(W_x))\right\}\\
&\ge 
\sup_{\tilde\tau\in\B(\hil)\p}\Big\{
\underbrace{\Tr\map(\tilde\tau)}_{\ge\Tr\tilde\tau}-
\sum_{x\in\X}P(x)\underbrace{D^{q_x}(\map(\tilde\tau)\|\map(W_x))}_{
\le D^{q_x}(\tilde\tau\|W_x)}\Big\}\\
&\ge
\sup_{\tilde\tau\in\B(\hil)\p}\left\{
\Tr\tilde\tau-\sum_{x\in\X}P(x)D^{q_x}(\tilde\tau\|W_x)\right\}\\
&=
Q^{\bary,\qv}_P\bz W\jz,
\end{align*}
where the equalities are by definition \eqref{weighted Q} and by \eqref{barycentric Qalpha def support}, the first inequality is obvious, and the second one follows from the assumptions. 
This proves \eqref{Qmon}, and \eqref{Rmon} follows immediately by 
Lemma \ref{lemma:psi rep}.
\end{proof}

\begin{prop}\label{prop:PTP monotonicity}
If $D^{\qn}$ and $D^{\qo}$ are monotone 
under a trace non-decreasing map $\map\in\pplus(\hil,\kil)$ then
for every $\alpha\in[0,1]$, $Q_{\alpha}^{\bary,\qv}$ 
is monotone non-decreasing 
under $\map$, i.e., for every 
$\rho,\sigma\in\B(\hil)\pne$, 
\begin{align}
Q_{\alpha}^{\bary,\qv}(\map(\rho)\|\map(\sigma))
&\ge
Q_{\alpha}^{\bary,\qv}(\rho\|\sigma).\label{Qmon bary}
\end{align}
If, moreover, $\map$ is trace-preserving, then for every $\alpha\in[0,1]$, 
$D_{\alpha}^{\bary,\qv}$
is monotone non-increasing under $\map$, i.e., for every 
$\rho,\sigma\in\B(\hil)\pne$, 
\begin{align}\label{Dmon bary}
D_{\alpha}^{\bary,\qv}(\map(\rho)\|\map(\sigma))
&\le
D_{\alpha}^{\bary,\qv}(\rho\|\sigma).
\end{align}
Vice versa, if  $D^{\qn}$ and $D^{\qo}$ are strictly positive, and lower semi-continuous 
in their first variables, 
and $D_{\alpha}^{\bary,\qv}$, $\alpha\in(0,1)$, are monotone non-increasing 
under a trace-preserving positive map $\map\in\pplus(\hil,\kil)$, then 
$D^{\qn}$ and $D^{\qo}$ are monotone non-increasing under the same map.
\end{prop}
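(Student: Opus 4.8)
The forward direction is essentially a repeat of the argument in Proposition \ref{prop:weighted PTP-mon}, specialized to the $2$-variable case with $P(0)=\alpha\in[0,1]$ (so that $P$ is a probability measure and $S_+=I$ by the remark following Definition \ref{defin:weighted Q}); the only thing to be careful about is the support restriction $\tau^0\le\rho^0$ in \eqref{barycentric Qalpha def}. First I would note that when $\alpha\in(0,1)$ we may use \eqref{barycentric Qalpha def support}, i.e.\ optimize $\tau$ freely over $\B(\hil)\p$, since any $\tau$ with $\tau^0\nleq\rho^0$ contributes $-\infty$ via $\alpha D^{\qn}(\tau\|\rho)=+\infty$. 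Then for any $\tilde\tau\in\B(\hil)\p$, push it forward: $\Tr\map(\tilde\tau)\ge\Tr\tilde\tau$ (trace non-decreasing), and $D^{q_x}(\map(\tilde\tau)\|\map(W_x))\le D^{q_x}(\tilde\tau\|W_x)$ for $x\in\{0,1\}$ by the assumed monotonicity; taking the supremum over $\tilde\tau$ gives \eqref{Qmon bary}. The cases $\alpha=0$ and $\alpha=1$ follow by the explicit formulas \eqref{psi 1}, \eqref{barycentric 0-Renyi} together with \eqref{Tropp variational}, or simply by taking limits using Proposition \ref{prop:limits}; alternatively $\alpha=1$ reduces to the known monotonicity of $D^{\qo}$ via \eqref{psi 1}. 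For \eqref{Dmon bary}, when $\map$ is trace-preserving we have $\Tr\map(\rho)=\Tr\rho$, and then \eqref{Qmon bary} combined with the normalization $D_{\alpha}^{\bary,\qv}(\rho\|\sigma)=\frac{1}{\alpha-1}\log Q_{\alpha}^{\bary,\qv}(\rho\|\sigma)-\frac{1}{\alpha-1}\log\Tr\rho$ gives the claim, the sign of $\frac{1}{\alpha-1}$ being handled by the fact that $\log$ is increasing and $\alpha-1<0$ on $(0,1)$ (with the endpoints $\alpha=0,1$ again done separately or by continuity).

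For the converse, the idea is to use the limiting formulas in Proposition \ref{prop:limits} to extract $D^{\qn}$ and $D^{\qo}$ from the family $(D_{\alpha}^{\bary,\qv})_{\alpha\in(0,1)}$, and then transport the monotonicity through the limit. Concretely, fix $\rho,\sigma\in\B(\hil)\pne$ with $\map$ trace-preserving. The hypothesis gives $D_{\alpha}^{\bary,\qv}(\map(\rho)\|\map(\sigma))\le D_{\alpha}^{\bary,\qv}(\rho\|\sigma)$ for every $\alpha\in(0,1)$. Taking $\alpha\searrow 0$ and using \eqref{Renyi bq limit0} together with $\Tr\map(\rho)=\Tr\rho$, $\Tr\map(\sigma)=\Tr\sigma$, we obtain
\begin{align*}
\inf_{\omega\in\S(\map(\rho)^0\kil)}D^{\qo}\!\bz\omega\Big\|\tfrac{\map(\sigma)}{\Tr\sigma}\jz
\le
\inf_{\omega\in\S(\rho^0\hil)}D^{\qo}\!\bz\omega\Big\|\tfrac{\sigma}{\Tr\sigma}\jz.
\end{align*}
Taking instead $\alpha\nearrow 1$ and using \eqref{Renyi bq limit11} (valid under the strict positivity and lower semi-continuity hypotheses) gives $\frac{1}{\Tr\map(\rho)}D^{\qo}(\map(\rho)\|\map(\sigma))\le\frac{1}{\Tr\rho}D^{\qo}(\rho\|\sigma)$, i.e.\ $D^{\qo}(\map(\rho)\|\map(\sigma))\le D^{\qo}(\rho\|\sigma)$, which is exactly the monotonicity of $D^{\qo}$. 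To get the monotonicity of $D^{\qn}$, I would invoke the duality of Remark \ref{rem:dual}: $\tilde D_{\alpha}^{\bary,(\qn,\qo)}=D_{\alpha}^{\bary,(\qo,\qn)}$, and monotonicity of $D_{\alpha}^{\bary,(\qn,\qo)}$ under $\map$ is equivalent, via the defining relation $\tilde D_{\alpha}^q(\rho\|\sigma)=\frac{1}{1-\alpha}[\alpha D_{1-\alpha}^q(\sigma\|\rho)+\log\Tr\rho-\log\Tr\sigma]$ and trace-preservation, to monotonicity of $D_{1-\alpha}^{\bary,(\qn,\qo)}$ (with the roles of $\rho,\sigma$ swapped), hence to monotonicity of $D_{\beta}^{\bary,(\qo,\qn)}$ for $\beta\in(0,1)$. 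Applying the already-established $\alpha\nearrow 1$ argument to the family $(D_{\beta}^{\bary,(\qo,\qn)})_\beta$ then yields monotonicity of $D^{\qn}$, via \eqref{Renyi bq limit11} applied with $(q_0,q_1)$ replaced by $(q_1,q_0)$; equivalently one may read this off directly from \eqref{lim at 0} in Remark \ref{rem:dual}, which expresses $\frac{1}{\Tr\rho}D^{\qn}(\rho\|\sigma)$ as an $\alpha\searrow 0$ limit of (an affine function of) $D_\alpha^{\bary,(\qn,\qo)}(\sigma\|\rho)$.

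The main obstacle is not any single computation but bookkeeping around supports and the domain of optimization: in the converse direction the infima in \eqref{Renyi bq limit0}--\eqref{Renyi bq limit11} are over state sets that depend on $\rho^0$ and on $\map(\rho)^0$, and one must check that the limiting identity \eqref{Renyi bq limit11} can indeed be applied at both $\map(\rho),\map(\sigma)$ and $\rho,\sigma$ — this is where the hypotheses ``$D^{\qn},D^{\qo}$ strictly positive and lower semi-continuous in their first variables'' are used, precisely to guarantee Proposition \ref{prop:limits}(ii) applies (the lower semi-continuity of $\omega\mapsto D^{q}(\omega\|\cdot)$ on the relevant compact simplex being what makes the minimax step and the attainment of the infimum go through). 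I expect the $\alpha\nearrow 1$ route via \eqref{Renyi bq limit11} to be the cleanest and to make the $\alpha\searrow 0$ computation unnecessary for the final statement; I include the latter only as an alternative. One should also note that the converse only claims monotonicity of $D^{\qn}$ and $D^{\qo}$ under $\map$ (not under completely positive or other extensions), so no further structure on $\map$ beyond positive and trace-preserving is needed or used.
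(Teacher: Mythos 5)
Your proposal is correct and follows essentially the same route as the paper: the forward direction is the push-forward argument of Proposition \ref{prop:weighted PTP-mon} specialized to two variables (with the endpoint cases $\alpha=0,1$ handled by a trivial modification or by the limits of Proposition \ref{prop:limits}), \eqref{Dmon bary} then follows from \eqref{Qmon bary} and trace preservation, and the converse is extracted exactly as in the paper from \eqref{Renyi bq limit11} and \eqref{lim at 0} (equivalently, the duality of Remark \ref{rem:dual}). Your write-up merely supplies details the paper leaves implicit; the only cosmetic slip is calling the $\alpha=1$ case a reduction to monotonicity of $D^{\qo}$, when \eqref{psi 1} makes it trivial.
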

\begin{proof}
Proposition \ref{prop:weighted PTP-mon} yields \eqref{Qmon bary} 
as a special case when $\alpha\in(0,1]$, and the case $\alpha=0$ follows by a trivial modification of the proof. 
From this, \eqref{Dmon bary} follows immediately. 
The last assertion follows due to \eqref{Renyi bq limit11} and \eqref{lim at 0}.
\end{proof}

\begin{rem}
By assumption, both $D^{\qn}$ and $D^{\qo}$ are trace-monotone, and hence so are 
$D_{\alpha}^{\bary,\qv}$, $\alpha\in[0,1]$, according to 
Proposition \ref{prop:PTP monotonicity} above. This gives an alternative proof of \eqref{Renyi bq strictpos}.
\end{rem}

\begin{rem}
The above proof for Proposition \ref{prop:weighted PTP-mon} only works when 
$P$ is a probability measure (i.e., there is no $x$ such that $P(x)<0$), which translates to 
$\alpha\in[0,1]$ in Proposition \ref{prop:PTP monotonicity}.
These conditions cannot be removed in general; indeed, it was shown in 
\cite[Lemma 3.17]{MO-cqconv} that $D^{\bary,\Um}_{\alpha}$ is not monotone under CPTP maps
(in fact, not even under pinchings) for 
any $\alpha>1$, even though $\DU$ is monotone. 
\end{rem}

\begin{prop}\label{prop:bary jointconv}
Let $P\in\P_f(\X)$, and 
assume that at least one of the following holds:
\begin{enumerate}
\item\label{jointconv sufficient1}
$D^{q_x}$, $x\in\supp P$, are monotone under partial traces 
and are block subadditive. 
\item\label{jointconv sufficient3}
$D^{q_x}$, $x\in\supp P$, are jointly convex in their variables. 
\end{enumerate}
Then $\W\mapsto Q_P^{\bary,\qv}(W)$ and 
$\W\mapsto \psi_P^{\bary,\qv}(W)$
are concave, and 
$\W\mapsto R_{D^{\qv},\bal}\bz W,P\jz$ is convex 
on $\B(\X,\hil)\pne$. 
\end{prop}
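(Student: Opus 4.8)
The cleanest route goes through the supremum representation of $Q_P^{\bary,\qv}$ rather than the infimum representation of $R_{D^{\qv},\bal}$, since an infimum of convex functions need not be convex whereas a supremum of concave functions always is; $\log$ and $-\log$ will then transfer concavity of $Q_P^{\bary,\qv}$ to concavity of $\psi_P^{\bary,\qv}$ and to convexity of $R_{D^{\qv},\bal}(\cdot,P)$.

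First I would reduce both hypotheses to the single statement that, for every $x\in\supp P$, the map $\sigma\mapsto D^{q_x}(\tau\|\sigma)$ is convex (with values in $(-\infty,+\infty]$) for each fixed $\tau$. Under \ref{jointconv sufficient3} this is immediate, as joint convexity implies separate convexity. Under \ref{jointconv sufficient1} I would invoke Lemma \ref{lemma:jointconc from mon}(i): a quantum relative entropy in the restrictive sense of this section is homogeneous (put $s=t$ in the scaling law \eqref{relentr scaling}, giving $D^{q_x}(t\rho\|t\sigma)=tD^{q_x}(\rho\|\sigma)$), and its domain — all pairs of non-zero PSD operators on $\hil$, with value $+\infty$ where $\rho^0\not\le\sigma^0$ by the support condition \eqref{supp const} — is convex; hence block subadditivity together with monotonicity under partial traces forces each such $D^{q_x}$ to be jointly convex, in particular convex in its second argument.

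Next, since $P\in\P_f(\X)$, I would use the representation \eqref{barycentric Qalpha def support},
\begin{align*}
Q_P^{\bary,\qv}(W)=\sup_{\tau\in\B(\hil)\p}\left\{\Tr\tau-\sum_{x\in\X}P(x)D^{q_x}(\tau\|W_x)\right\},
\end{align*}
whose domain of optimization no longer depends on $W$ (the constraint $\tau^0\le S_+$ is automatically enforced, since $D^{q_x}(\tau\|W_x)=+\infty$ whenever $\tau^0\not\le W_x^0$ and $P(x)>0$). For each fixed $\tau$ only the indices $x\in\supp P$ contribute — the rest drop out by the convention \eqref{zero times infty} — and there $P(x)>0$, so by the previous step $W\mapsto\Tr\tau-\sum_{x\in\supp P}P(x)D^{q_x}(\tau\|W_x)$ is concave on $\B(\X,\hil)\pne$, with values in $[-\infty,+\infty)$; no $\infty-\infty$ arises because no $D^{q_x}$ takes the value $-\infty$. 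A pointwise supremum of concave functions is concave, so $W\mapsto Q_P^{\bary,\qv}(W)$ is concave. Then $\psi_P^{\bary,\qv}=\log Q_P^{\bary,\qv}$ is concave because $\log$ is concave and non-decreasing on $[0,+\infty)$ (the value $-\infty$, which by \eqref{multipsi infty} occurs exactly when $S_+=0$, causes no difficulty), and $R_{D^{\qv},\bal}(\cdot,P)=-\psi_P^{\bary,\qv}$ by \eqref{psi optimization} is therefore convex.

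I expect the only genuine obstacle to be the bookkeeping around the $W$-dependent support $S_+$ appearing in \eqref{weighted Q} and \eqref{left divrad}: it is precisely the identity \eqref{barycentric Qalpha def support}, which needs $P$ to be an honest probability measure, that removes this dependence and legitimizes the "supremum of concave functions" argument; the remaining concavity/convexity manipulations, and the reduction of hypothesis \ref{jointconv sufficient1} via Lemma \ref{lemma:jointconc from mon}, are routine.
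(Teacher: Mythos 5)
There is a genuine gap at the decisive step. After correctly reducing both hypotheses (via Lemma \ref{lemma:jointconc from mon} and the scaling law) to convexity properties of the $D^{q_x}$, you deliberately keep only \emph{separate} convexity in the second argument, and then conclude concavity of $W\mapsto Q_P^{\bary,\qv}(W)$ by invoking the principle that ``a pointwise supremum of concave functions is concave''. That principle is false: a supremum of \emph{convex} functions is convex, but a supremum of concave functions need not be concave (e.g.\ $|x|=\max\{x,-x\}$ is the supremum of two affine, hence concave, functions and is convex, not concave). With only separate concavity of $W\mapsto \Tr\tau-\sum_x P(x)D^{q_x}(\tau\|W_x)$ for each fixed $\tau$, taking the supremum over $\tau$ gives you no concavity in $W$, so as written the argument does not establish the claim.

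The fix is available within your own proof: under either hypothesis you have in fact established \emph{joint} convexity of each $D^{q_x}$ in $(\tau,W_x)$ (it is assumed in \ref{jointconv sufficient3}, and it is what Lemma \ref{lemma:jointconc from mon} delivers under \ref{jointconv sufficient1}), so you should not discard it. Then, since $P(x)\ge 0$, the map $(\tau,W)\mapsto \Tr\tau-\sum_x P(x)D^{q_x}(\tau\|W_x)$ is jointly concave on $\B(\hil)\p\times\B(\X,\hil)\pne$, and the correct tool is the standard fact that the \emph{partial} supremum over one variable of a jointly concave function is concave in the remaining variable: for $W=\lambda W^{(1)}+(1-\lambda)W^{(2)}$ and any $\tau_1,\tau_2$, plug in $\tau=\lambda\tau_1+(1-\lambda)\tau_2$ and take suprema over $\tau_1,\tau_2$. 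This is exactly how the paper argues, and your use of the $W$-independent optimization domain from \eqref{barycentric Qalpha def support} (legitimate since $P$ is a probability measure), together with the final transfer of concavity through $\log$ and the identity \eqref{psi optimization} for $R_{D^{\qv},\bal}$, is fine and needs no change.
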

\begin{proof}
Since all $D^{q_x}$ satisfy the scaling law 
\eqref{relentr scaling}, they are in particular homogeneous, and thus, 
by Lemma \ref{lemma:jointconc from mon},
\ref{jointconv sufficient1} implies \ref{jointconv sufficient3}.
Assume therefore  \ref{jointconv sufficient3}. Then 
\begin{align*}
\B(\hil)\times\B(\X,\hil)\pne\ni(\tau,W)\mapsto\Tr\tau-\sum_x P(x)D^{q_x}(\tau\|\Wx{x})
\end{align*}
is jointly concave in $\tau$ and $W$, and hence its supremum over  
$\tau$ is concave in $\W$.
This proves the concavity of $\W\mapsto Q_P^{\bary,\qv}(W)$.
The concavity of 
$\W\mapsto \psi_P^{\bary,\qv}(W)$ then follows immediately, which in 
turn implies the convexity of $\W\mapsto R_{D^{\qv},\bal}\bz W,P\jz$
due to \eqref{psi optimization}.
\end{proof}

As a special case, we get the following:

\begin{cor}
Let $\alpha\in(0,1)$, and 
assume that at least one of the following holds:
\begin{enumerate}
\item\label{jointconv sufficient1-2}
$D^{q_x}$, $x\in\{0,1\}$, are monotone under partial traces 
and are block subadditive. 
\item\label{jointconv sufficient3-2}
$D^{q_x}$, $x\in\{0,1\}$, are jointly convex in their variables. 
\end{enumerate}
Then $(\rho,\sigma)\mapsto Q_{\alpha}^{\bary,\qv}(\rho,\sigma)$ and 
$(\rho,\sigma)\mapsto \psi_{\alpha}^{\bary,\qv}(\rho,\sigma)$
are concave, and 
$(\rho,\sigma)\mapsto D_{\alpha}^{\bary,\qv}(\rho,\sigma)$ is convex 
on $\B(\X,\hil)\pne $. 
The same conclusions hold  
a) if $\alpha=0$ and even if the properties in \ref{jointconv sufficient1-2}
or  \ref{jointconv sufficient3-2} are assumed only for $D^{q_1}$,
b) for $D_1^{\bary,\qv}$, 
c) for $Q_1^{\bary,\qv}$ and $\psi_1^{\bary,\qv}$ without any assumption on 
$D^{\qn}$ and $D^{\qo}$.
\end{cor}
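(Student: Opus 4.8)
The plan is to read off everything from Proposition~\ref{prop:bary jointconv} specialized to $\X=\{0,1\}$, with only minor bookkeeping for the boundary values $\alpha\in\{0,1\}$. For $\alpha\in(0,1)$ put $P:=\alpha\egy_{\{0\}}+(1-\alpha)\egy_{\{1\}}\in\P_f(\{0,1\})$, so $\supp P=\{0,1\}$, and take $W_0=\rho$, $W_1=\sigma$. Then by \eqref{barycentric Qalpha def3} and \eqref{psi alpha} one has $Q_\alpha^{\bary,\qv}(\rho\|\sigma)=Q_P^{\bary,\qv}(W)$, $\psi_\alpha^{\bary,\qv}(\rho\|\sigma)=\psi_P^{\bary,\qv}(W)$, and $\inf_{\omega\in\S(\rho^0\hil)}\{\alpha D^{\qn}(\omega\|\rho)+(1-\alpha)D^{\qo}(\omega\|\sigma)\}=R_{D^{\qv},\bal}(W,P)$ (for $\alpha\in(0,1)$ the restriction $\tau^0\le\rho^0$ in \eqref{barycentric Qalpha def} is equivalent to $\tau^0\le\rho^0\wedge\sigma^0=S_+$, since any $\tau$ with $\tau^0\nleq\sigma^0$ makes the objective $-\infty$). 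Hypotheses~\ref{jointconv sufficient1-2}/\ref{jointconv sufficient3-2} for $D^{\qn},D^{\qo}$ are exactly hypotheses~\ref{jointconv sufficient1}/\ref{jointconv sufficient3} of Proposition~\ref{prop:bary jointconv} for $(D^{q_x})_{x\in\supp P}$, so that proposition gives the concavity of $Q_\alpha^{\bary,\qv}$ and $\psi_\alpha^{\bary,\qv}$ and the convexity of $R_{D^{\qv},\bal}(\cdot,P)$ on $\B(\X,\hil)\pne$. The convexity of $(\rho,\sigma)\mapsto D_\alpha^{\bary,\qv}(\rho\|\sigma)$ then follows from the concavity of $\psi_\alpha^{\bary,\qv}$ via \eqref{barycentric Dalpha def2} (equivalently, from \eqref{Renyi bq2}, \eqref{psi optimization} and the scaling law \eqref{scaling2}).

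For the boundary values I would argue as follows. Case a), $\alpha=0$: now $P=\egy_{\{1\}}$ is excluded from Proposition~\ref{prop:bary jointconv}, and \eqref{barycentric Qalpha def} keeps the support constraint $\tau^0\le\rho^0$ rather than $\tau^0\le\sigma^0$, so one has to rerun the argument of the proposition instead of quoting it. Since the coefficient of $D^{\qn}$ is $0$ and $0\cdot(+\infty)=0$, one has $Q_0^{\bary,\qv}(\rho\|\sigma)=\sup_{\tau\in\B(\rho^0\hil)\p}\{\Tr\tau-D^{\qo}(\tau\|\sigma)\}$, so only $D^{\qo}$ enters; under~\ref{jointconv sufficient1-2} for $D^{\qo}$, Lemma~\ref{lemma:jointconc from mon}(i) applies (the scaling law~\eqref{relentr scaling} makes $D^{\qo}$ homogeneous, and its domain is convex) and gives joint convexity of $D^{\qo}$, which is~\ref{jointconv sufficient3-2}. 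Hence $(\tau,\sigma)\mapsto\Tr\tau-D^{\qo}(\tau\|\sigma)$ is jointly concave, and since $\{(\tau,\rho):\tau^0\le\rho^0\}$ is convex (the support of a convex combination of PSD operators is the join of their supports, which is monotone in the operators), the partial supremum over $\tau$ is concave in $(\rho,\sigma)$; this yields the concavity of $Q_0^{\bary,\qv}$ and $\psi_0^{\bary,\qv}$ and the convexity of $D_0^{\bary,\qv}$ as before. Case b): $D_1^{\bary,\qv}=\sup_{\alpha\in(0,1)}D_\alpha^{\bary,\qv}$ by \eqref{Db 1 def}, a supremum of convex functions, hence convex. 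Case c): $Q_1^{\bary,\qv}(\rho\|\sigma)=\Tr\rho$ and $\psi_1^{\bary,\qv}(\rho\|\sigma)=\log\Tr\rho$ by \eqref{psi proof 2}--\eqref{psi 1}, which are concave in $(\rho,\sigma)$ regardless of $D^{\qn},D^{\qo}$ (these identities use only \eqref{Tropp variational}, valid for every trace-monotone relative entropy).

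There is no genuinely hard step here: the whole statement is a specialization of already-proved multivariate results. The one point needing a little care is the $\alpha=0$ case, where Proposition~\ref{prop:bary jointconv} cannot be quoted directly because the supremum in \eqref{barycentric Qalpha def} runs over the $\rho$-dependent convex set $\{\tau:\tau^0\le\rho^0\}$ rather than a fixed domain; one then invokes the elementary fact that the partial supremum of a jointly concave function over the sections of a convex set is concave.
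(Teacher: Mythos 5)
Your proof is correct and takes essentially the same route as the paper's: the case $\alpha\in(0,1)$ is the specialization of Proposition \ref{prop:bary jointconv} to $P(0)=\alpha$, case b) follows by taking the supremum over $\alpha\in(0,1)$ in \eqref{Db 1 def}, case c) is Remark \ref{rem:psi 1}, and case a) is handled by rerunning the proposition's argument with only $D^{\qo}$ entering. The extra details you spell out (the equivalence of the two support constraints for $\alpha\in(0,1)$, the convexity of the $\rho$-dependent constraint set at $\alpha=0$, and the trace-monotonicity behind \eqref{Tropp variational} for case c)) are precisely what the paper leaves implicit, and your passage from concavity of $\psi_{\alpha}^{\bary,\qv}$ to convexity of $D_{\alpha}^{\bary,\qv}$ treats the $\log\Tr\rho$ normalization term with the same brevity as the paper's ``immediate from the proposition''.
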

\begin{proof}
The case $\alpha\in(0,1)$ is immediate from Proposition \ref{prop:bary jointconv}, and the case b) follows from it according to the definition of $D_1^{\bary,\qv}$ in 
\eqref{Db 1 def}.
The case c) is trivial from Remark \ref{rem:psi 1}, and the case a)
follows in the same way as in the proof of Proposition \ref{prop:bary jointconv}.
\end{proof}

\subsection{Lower semi-continuity and regularity}

Note that when $\X$ is finite then $\B(\X,\hil)=\B(\hil)^{\X}$ is a finite-dimensional vector 
space, and hence there exists a unique norm topology on it, which is what we implicitly 
refer to in statements about (semi-)continuity on this space.

\begin{prop}\label{prop:multiR lsc}
If $P\in\P_f(\X)$ and $D^{q_x}$, $x\in\supp P$, are all jointly lower semi-continuous in their variables then $W\mapsto Q_P^{\bary,\qv}(W)$ is upper semi-continuous, and 
$W\mapsto R_{\D^{\qv},\bal}(W,P)$ is lower semi-continuous on 
$\B(\supp P,\hil)\pne$.
\end{prop}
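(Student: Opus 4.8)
The plan is to exploit the variational representation in \eqref{left divrad}, namely
\[
R_{\D^{\qv},\bal}(W,P)=\inf_{\omega\in\S(S_+\hil)}\sum_{x\in\X}P(x)\D^{q_x}(\omega\|\ch{x}),
\]
together with the general semicontinuity facts collected in Lemma \ref{lemma:usc}. The key point is that on the domain $\B(\supp P,\hil)\pne$ all the supports $W_x^0$, $x\in\supp P$, are full (indeed $S_+=I$ when restricted to the Hilbert space spanned by $W$), so the feasible set $\S(S_+\hil)=\S(\hil)$ does not vary with $W$; this removes the only source of trouble and reduces the statement to an infimum of a jointly lower semi-continuous function over a fixed compact set.

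First I would set
\[
F(\omega,W):=\sum_{x\in\X}P(x)\D^{q_x}(\omega\|\ch{x}),\qquad
\omega\in\S(\hil),\ W\in\B(\supp P,\hil)\pne .
\]
Since $P\in\P_f(\X)$ has $P(x)\ge 0$ and $\supp P$ is finite, $F$ is a finite non-negative linear combination of the maps $(\omega,W)\mapsto\D^{q_x}(\omega\|W_x)$, each of which is jointly lower semi-continuous by hypothesis; hence $F$ is jointly lower semi-continuous on $\S(\hil)\times\B(\supp P,\hil)\pne$ with respect to the product topology. (Here one should note the harmless point that $\D^{q_x}(\omega\|W_x)$ can equal $+\infty$, which is allowed in Lemma \ref{lemma:usc}, and that no term can equal $-\infty$ since quantum relative entropies in the present sense never take the value $-\infty$.) Now apply Lemma \ref{lemma:usc}\eqref{usc2} with the compact space being $Y=\S(\hil)$ (compact because $\hil$ is finite-dimensional) and $X=\B(\supp P,\hil)\pne$: the conclusion is that
\[
W\mapsto \inf_{\omega\in\S(\hil)}F(\omega,W)=R_{\D^{\qv},\bal}(W,P)
\]
is lower semi-continuous on $\B(\supp P,\hil)\pne$. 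Finally, by \eqref{psi optimization} we have $-\log Q_P^{\bary,\qv}(W)=R_{\D^{\qv},\bal}(W,P)$, so $W\mapsto -\log Q_P^{\bary,\qv}(W)$ is lower semi-continuous, and therefore $W\mapsto Q_P^{\bary,\qv}(W)=e^{-R_{\D^{\qv},\bal}(W,P)}$ is upper semi-continuous (composition with the continuous decreasing map $t\mapsto e^{-t}$ from $[-\infty,+\infty]$ to $[0,+\infty]$ turns lower semi-continuity into upper semi-continuity).

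I do not expect a serious obstacle here; the only things to be careful about are (a) checking that the restriction to $\B(\supp P,\hil)\pne$ really does make the feasible set $\S(S_+\hil)$ constant in $W$ — one invokes the convention, stated after Definition \ref{defin:weighted Q}, that the relevant Hilbert space is the one carrying $W$, so that $W_x^0=I$ for $x\in\supp P$ and hence $S_+=I$; and (b) making sure the joint lower semi-continuity hypothesis on the $\D^{q_x}$ is used in the product-topology sense required by Lemma \ref{lemma:usc}\eqref{usc2}, which is exactly what "jointly lower semi-continuous in their variables" means. If one wished to be slightly more careful about the $\pm\infty$ bookkeeping, one could instead first establish the claim on the subset where all $\D^{q_x}(\omega\|W_x)$ are finite and then note that elsewhere the relevant quantities are $+\infty$ (for $Q_P^{\bary,\qv}$, $0$), but this is not necessary given that Lemma \ref{lemma:usc} already allows extended-real-valued functions.
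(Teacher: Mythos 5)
Your route is the paper's own: show that $(\omega,W)\mapsto\sum_{x}P(x)\D^{q_x}(\omega\|W_x)$ is jointly lower semi-continuous, apply Lemma \ref{lemma:usc}\,\ref{usc2} with the compact set $\S(\hil)$ to get lower semi-continuity of the infimum, and pass to $Q_P^{\bary,\qv}=e^{-R_{\D^{\qv},\bal}}$ via \eqref{psi optimization}. The only flaw is your point (a): $\B(\supp P,\hil)\pne$ does \emph{not} consist of full-support collections --- the subscript $\pne$ only says each $W_x$ is a non-zero PSD operator, so $S_+=\bigwedge_{x:P(x)>0}W_x^0$ can be a proper projection and does vary with $W$, and there is no convention after Definition \ref{defin:weighted Q} that lets you take $W_x^0=I$; "restricting to the Hilbert space spanned by $W$" would also destroy the continuity statement, since supports can jump under limits. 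The correct (and immediate) justification for replacing $\S(S_+\hil)$ by the fixed compact set $\S(\hil)$ is \eqref{barycentric Dalpha def support}: because $P\in\P_f(\X)$ has non-negative weights, no term can be $-\infty$, and $\D^{q_x}(\omega\|W_x)=+\infty$ whenever $\omega^0\nleq W_x^0$ with $P(x)>0$ by the support condition \eqref{supp const}, so states with $\omega^0\nleq S_+$ contribute $+\infty$ and the two infima coincide. With that one substitution your proof is exactly the paper's.
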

\begin{proof}
By assumption, 
\begin{align*}
\B(\supp P,\hil)\pne\times\S(\hil)\ni(\W,\omega)\mapsto
\sum_{x\in\X}P(x)D^{q_x}(\omega\|\Wx{x})
\end{align*}
is lower semi-continuous, and hence, by Lemma \ref{lemma:usc},
its infimum over $\omega\in\S(\hil)$ is lower semi-continuous, too, 
proving the assertion about $R_{\D^{\qv},\bal}$. 
The assertion about $Q_P^{\bary,\qv}$ then follows immediately due to 
\eqref{psi optimization}.
\end{proof}

\begin{cor}\label{cor:barycentric Renyi lsc}
If $D^{\qn}$ and $D^{\qo}$ are jointly lower semi-continuous in their variables then 
for any $\alpha\in(0,1]$, 
$Q_{\alpha}^{\bary,\qv}$ is jointly upper semi-continous and 
$D_{\alpha}^{\bary,\qv}$ is jointly lower semi-continous in their arguments.
\end{cor}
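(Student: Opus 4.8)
The plan is to derive this as a direct specialization of Proposition \ref{prop:multiR lsc} to the two-variable case, taking $\X=\{0,1\}$, $W_0=\rho$, $W_1=\sigma$, and $P(0)=\alpha$, $P(1)=1-\alpha$. For $\alpha\in(0,1)$, this $P$ is a genuine probability measure on $\{0,1\}$, so $P\in\P_f(\X)$ and Proposition \ref{prop:multiR lsc} applies verbatim: if $D^{\qn}$ and $D^{\qo}$ are jointly lower semi-continuous in their variables, then $(\rho,\sigma)\mapsto Q_{\alpha}^{\bary,\qv}(\rho\|\sigma)$ is jointly upper semi-continuous and $(\rho,\sigma)\mapsto R_{\D^{\qv},\bal}$ — which by \eqref{psi optimization} equals $-\log Q_{\alpha}^{\bary,\qv}$ — is jointly lower semi-continuous. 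Since $D_{\alpha}^{\bary,\qv}(\rho\|\sigma)=\frac{1}{1-\alpha}(-\log Q_{\alpha}^{\bary,\qv}(\rho\|\sigma)+\alpha\log\Tr\rho+(1-\alpha)\log\Tr\sigma)$ and $\alpha\in(0,1)$ makes the factor $\frac{1}{1-\alpha}$ positive while $(\rho,\sigma)\mapsto \log\Tr\rho$ and $(\rho,\sigma)\mapsto\log\Tr\sigma$ are continuous on $\B(\hil)\pne$, lower semi-continuity of $-\log Q_{\alpha}^{\bary,\qv}$ transfers to lower semi-continuity of $D_{\alpha}^{\bary,\qv}$.

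The only step requiring a separate (easy) argument is the endpoint $\alpha=1$. Here $D_1^{\bary,\qv}$ is defined in \eqref{Db 1 def} as $D_1^{\bary,\qv}(\rho\|\sigma)=\sup_{\alpha\in(0,1)}D_{\alpha}^{\bary,\qv}(\rho\|\sigma)$. A supremum of lower semi-continuous functions is lower semi-continuous (this is exactly item \ref{usc1} of Lemma \ref{lemma:usc}), so $D_1^{\bary,\qv}$ inherits joint lower semi-continuity from the family $(D_{\alpha}^{\bary,\qv})_{\alpha\in(0,1)}$ just established. For the corresponding statement about $Q_1^{\bary,\qv}$: by \eqref{psi 1} we have $Q_1^{\bary,\qv}(\rho\|\sigma)=\Tr\rho$, which is in fact continuous, hence trivially upper semi-continuous; alternatively one notes $\log Q_1^{\bary,\qv}=\psi_1^{\bary,\qv}=\inf_{\alpha\in(0,1)}\psi_{\alpha}^{\bary,\qv}$ is not needed since the direct formula suffices.

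I do not anticipate a genuine obstacle here; the content is entirely in Proposition \ref{prop:multiR lsc} (whose proof in turn rests on Lemma \ref{lemma:usc}\ref{usc2}, applied to the infimum over the compact set $\S(\hil)$), and the corollary is a bookkeeping exercise in translating the multi-variate statement to the binary case and handling the $\alpha=1$ limit. The one point to be careful about is that the normalization constants $\log\Tr\rho$, $\log\Tr\sigma$ entering the passage from $Q_{\alpha}^{\bary,\qv}$ to $D_{\alpha}^{\bary,\qv}$ are continuous and the prefactor $1/(1-\alpha)$ is positive on $(0,1)$, so no sign flip converts lower into upper semi-continuity; for $\alpha\in(0,1)$ everything is consistent. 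Thus the proof is:

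\begin{proof}
Fix $\alpha\in(0,1)$ and set $\X=\{0,1\}$, $P(0)=\alpha$, $P(1)=1-\alpha$, so that $P\in\P_f(\X)$. With $W_0=\rho$, $W_1=\sigma$, Proposition \ref{prop:multiR lsc} gives that $(\rho,\sigma)\mapsto Q_{\alpha}^{\bary,\qv}(\rho\|\sigma)$ is jointly upper semi-continuous and $(\rho,\sigma)\mapsto R_{\D^{\qv},\bal}((\rho,\sigma),P)=-\log Q_{\alpha}^{\bary,\qv}(\rho\|\sigma)$ is jointly lower semi-continuous on $\B(\hil)\pne^2$, where the last equality is \eqref{psi optimization}. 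By \eqref{barycentric Dalpha def} combined with \eqref{psi alpha}, or directly from \eqref{Renyi bq2},
\begin{align*}
D_{\alpha}^{\bary,\qv}(\rho\|\sigma)
=
\frac{1}{1-\alpha}\bz-\log Q_{\alpha}^{\bary,\qv}(\rho\|\sigma)\jz
+\log\Tr\rho.
\end{align*}
Since $\frac{1}{1-\alpha}>0$ and $(\rho,\sigma)\mapsto\log\Tr\rho$ is continuous on $\B(\hil)\pne^2$, the right-hand side is jointly lower semi-continuous, proving the claim for $\alpha\in(0,1)$.

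For $\alpha=1$, recall from \eqref{psi 1} that $Q_1^{\bary,\qv}(\rho\|\sigma)=\Tr\rho$, which is continuous, hence (trivially) jointly upper semi-continuous. By \eqref{Db 1 def}, $D_1^{\bary,\qv}(\rho\|\sigma)=\sup_{\alpha\in(0,1)}D_{\alpha}^{\bary,\qv}(\rho\|\sigma)$, a supremum of jointly lower semi-continuous functions by the case already treated, and therefore jointly lower semi-continuous by Lemma \ref{lemma:usc}\ref{usc1}.
\end{proof}
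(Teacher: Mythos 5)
Your proof follows the paper's own route exactly: specialize Proposition \ref{prop:multiR lsc} to $\X=\{0,1\}$, $P(0)=\alpha$ for $\alpha\in(0,1)$, then handle $\alpha=1$ via $Q_1^{\bary,\qv}(\rho\|\sigma)=\Tr\rho$ and the representation of $D_1^{\bary,\qv}$ as a supremum of lower semi-continuous functions per \eqref{Db 1 def} and Lemma \ref{lemma:usc}. The only blemish is the displayed identity, which by \eqref{barycentric Dalpha def} should read $D_{\alpha}^{\bary,\qv}(\rho\|\sigma)=\frac{1}{1-\alpha}\bz-\log Q_{\alpha}^{\bary,\qv}(\rho\|\sigma)\jz+\frac{1}{1-\alpha}\log\Tr\rho$ (you dropped the factor $\frac{1}{1-\alpha}$ on the $\log\Tr\rho$ term); this is harmless since that term is continuous with any constant prefactor, so the conclusion stands.
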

\begin{proof}
The case $\alpha\in(0,1)$ follows as a special case of Proposition \ref{prop:multiR lsc}.
For $\alpha=1$, $Q_1^{\bary,\qv}(\rho\|\sigma)=\Tr\rho$ by \eqref{psi proof 2}, and continuity holds trivially, while 
$D_{1}^{\bary,\qv}$ is the supremum of lower semi-continuous functions according to 
the above and \eqref{Db 1 def}, and hence is itself lower semi-continuous. 
\end{proof}

\begin{prop}\label{prop:multi smoothing}
Let $P\in\P_f(\X)$ and assume that 
$D^{q_x}$, $x\in\supp P$, are weakly anti-monotone in their second arguments. Then 
\begin{align}\label{smoothing monotonicity2}
(0,+\infty)\ni\ep\mapsto Q_P^{\bary,\qv}\bz(\Wx{x}+\ep I)_{x\in\X}\jz
\ds\ds\text{is monotone increasing,}\\
(0,+\infty)\ni\ep\mapsto R_{D^{\qv},\bal}^{\bary}\bz(\Wx{x}+\ep I)_{x\in\X}\jz
\ds\ds\text{is monotone decreasing}\label{smoothing monotonicity3}
\end{align}
for any $\W\in\B(\X,\hil)\pne$.
If, moreover, $D^{q_x}$, $x\in\supp P$, are regular in the sense of \eqref{regularity}, and 
lower semi-continuous in their first arguments, then 
for any $\W\in\B(\X,\hil)\pne$,
\begin{align}
Q_P^{\bary,\qv}(W)
&=
\lim_{\ep\searrow 0}Q_P^{\bary,\qv}\bz(\Wx{x}+\ep I)_{x\in\X}\jz
=
\inf_{\ep>0}Q_P^{\bary,\qv}\bz(\Wx{x}+\ep I)_{x\in\X}\jz,
\label{multi Q smoothing}\\
R_{D^{\qv},\bal}^{\bary}(W)
&=
\lim_{\ep\searrow 0}R_{D^{\qv},\bal}^{\bary}\bz(\Wx{x}+\ep I)_{x\in\X}\jz
=\sup_{\ep> 0}R_{D^{\qv},\bal}^{\bary}\bz(\Wx{x}+\ep I)_{x\in\X}\jz.
\label{R smoothing}
\end{align}
\end{prop}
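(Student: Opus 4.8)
The plan is to prove the two monotonicity statements first, then deduce the limit identities. For \eqref{smoothing monotonicity2} and \eqref{smoothing monotonicity3}, fix $0<\ep_1\le\ep_2$; then $W_x+\ep_1 I\le W_x+\ep_2 I$ for every $x$, and since $P\in\P_f(\X)$ is a probability measure, $P(x)\ge 0$ for all $x$. Weak anti-monotonicity of each $D^{q_x}$ in its second argument (applied with the fixed second argument $W_x+\ep_1 I$ and the increment $\ep_2-\ep_1$, which is admissible once $\ep_2-\ep_1$ is small; for larger increments one iterates, or simply notes that weak anti-monotonicity combined with the scaling property \eqref{scaling2} forces genuine monotonicity of $\ep\mapsto D^{q_x}(\omega\|W_x+\ep I)$ on all of $(0,+\infty)$) gives $D^{q_x}(\omega\|W_x+\ep_2 I)\le D^{q_x}(\omega\|W_x+\ep_1 I)$ for every state $\omega\in\S(\hil)$. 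Summing against $P(x)\ge 0$ and taking the infimum over $\omega\in\S(\hil)$ in the representation \eqref{barycentric Dalpha def support} of $R_{D^{\qv},\bal}$ yields \eqref{smoothing monotonicity3}; \eqref{smoothing monotonicity2} then follows via $-\log Q_P^{\bary,\qv}(W)=R_{D^{\qv},\bal}(W,P)$ from \eqref{psi optimization}. (Here I use that with $\ep>0$ all $W_x+\ep I$ are invertible, so $S_+=I$ and the infimum in \eqref{left divrad} is over $\S(\hil)$.)

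For the limit identities \eqref{multi Q smoothing} and \eqref{R smoothing}, the monotonicity just established shows the limits exist and equal the infimum/supremum, so only the equality with the value at the original $W$ remains. Because $\sup$ and $\inf$ are reciprocal under $-\log$, it suffices to prove \eqref{R smoothing}, i.e.
\[
R_{D^{\qv},\bal}(W,P)=\lim_{\ep\searrow 0}R_{D^{\qv},\bal}\bz(W_x+\ep I)_{x\in\X},P\jz.
\]
The inequality $\le$ is immediate from \eqref{smoothing monotonicity3} together with the fact that each $D^{q_x}(\omega\|W_x+\ep I)\le D^{q_x}(\omega\|W_x)$ (weak anti-monotonicity again, or rather the full monotonicity we derived), so that $R_{D^{\qv},\bal}((W_x+\ep I)_x,P)\le R_{D^{\qv},\bal}(W,P)$ for every $\ep>0$; taking $\ep\searrow 0$ gives $\le$. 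The reverse inequality is the substantive direction.

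For $\ge$, I would argue as follows. Pick, for each $\ep>0$, an optimizer $\omega_\ep\in\S(\hil)$ for $R_{D^{\qv},\bal}((W_x+\ep I)_x,P)$ — this exists because each $D^{q_x}(\,\cdot\,\|W_x+\ep I)$ is lower semi-continuous in its first argument and $\S(\hil)$ is compact, so the finite sum $\sum_x P(x)D^{q_x}(\,\cdot\,\|W_x+\ep I)$ attains its infimum. By compactness of $\S(\hil)$, choose a sequence $\ep_n\searrow 0$ with $\omega_{\ep_n}\to\omega_0\in\S(\hil)$. For any fixed $\delta>0$ and $n$ large enough that $\ep_n\le\delta$, monotonicity in the second argument gives $D^{q_x}(\omega_{\ep_n}\|W_x+\ep_n I)\ge D^{q_x}(\omega_{\ep_n}\|W_x+\delta I)$, hence
\[
R_{D^{\qv},\bal}\bz(W_x+\ep_n I)_x,P\jz
=\sum_x P(x)D^{q_x}(\omega_{\ep_n}\|W_x+\ep_n I)
\ge\sum_x P(x)D^{q_x}(\omega_{\ep_n}\|W_x+\delta I).
\]
Now let $n\to\infty$: the right-hand side, being $\sum_x P(x)$ times functions lower semi-continuous in the first argument (each $D^{q_x}(\,\cdot\,\|W_x+\delta I)$ is l.s.c.\ by hypothesis) evaluated along $\omega_{\ep_n}\to\omega_0$, has $\liminf$ at least $\sum_x P(x)D^{q_x}(\omega_0\|W_x+\delta I)\ge R_{D^{\qv},\bal}((W_x+\delta I)_x,P)$. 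Thus $\lim_{n}R_{D^{\qv},\bal}((W_x+\ep_n I)_x,P)\ge R_{D^{\qv},\bal}((W_x+\delta I)_x,P)$ for every $\delta>0$, and letting $\delta\searrow 0$ while invoking regularity \eqref{regularity} of each $D^{q_x}$ — which gives $D^{q_x}(\omega_0\|W_x)=\lim_{\delta\searrow 0}D^{q_x}(\omega_0\|W_x+\delta I)$, hence $\lim_{\delta\searrow 0}R_{D^{\qv},\bal}((W_x+\delta I)_x,P)\ge\sum_x P(x)D^{q_x}(\omega_0\|W_x)\ge R_{D^{\qv},\bal}(W,P)$ by the definition of $R$ as an infimum — we obtain $\lim_n R_{D^{\qv},\bal}((W_x+\ep_n I)_x,P)\ge R_{D^{\qv},\bal}(W,P)$. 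Since the limit over $\ep\searrow 0$ exists by monotonicity, it equals the limit along $(\ep_n)$, and \eqref{R smoothing} follows; \eqref{multi Q smoothing} is its reformulation under $-\log$. The main obstacle is this last double-limit interchange: one must be careful that the optimizers $\omega_\ep$ need not converge to a minimizer of $R_{D^{\qv},\bal}(W,P)$ itself (indeed that infimum may not be attained), which is precisely why the argument passes through the intermediate parameter $\delta$ and uses lower semi-continuity of $D^{q_x}$ in its \emph{first} argument together with regularity in the \emph{second}, rather than any continuity of the value function.
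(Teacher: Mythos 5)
Your argument is correct, and it reaches the paper's conclusion by re-proving, by hand, the exchange of limits that the paper obtains by citation. The paper's proof dismisses \eqref{smoothing monotonicity2}--\eqref{smoothing monotonicity3} and the second equalities in \eqref{multi Q smoothing}--\eqref{R smoothing} as obvious, and then establishes the first equality in \eqref{R smoothing} in one chain: $\sup_{\ep>0}R_{D^{\qv},\bal}\bz(W_x+\ep I)_x,P\jz=\sup_{\ep>0}\inf_{\omega}\sum_xP(x)D^{q_x}(\omega\|W_x+\ep I)=\inf_{\omega}\sup_{\ep>0}\sum_xP(x)D^{q_x}(\omega\|W_x+\ep I)=\inf_{\omega}\sum_xP(x)D^{q_x}(\omega\|W_x)$, where the swap is exactly Lemma \ref{lemma:minimax2} (compactness of the state space, lower semi-continuity in $\omega$, monotonicity in $\ep$) and the last step is regularity \eqref{regularity}; \eqref{multi Q smoothing} then follows via \eqref{psi optimization}, just as you argue. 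Your subsequence-of-minimizers argument with the auxiliary parameter $\delta$ is in substance a direct proof of that sup--inf interchange in this special case (it is the same mechanism as in the proof of the cited minimax lemma): the citation buys brevity, your version buys self-containedness and makes explicit where l.s.c.\ in the first argument and regularity in the second enter. Two small points of bookkeeping in your write-up are fine on inspection: the intermediate inequality $\lim_nR\bz(W_x+\ep_n I)_x,P\jz\ge R\bz(W_x+\delta I)_x,P\jz$ is just the already-established monotonicity and is not what you actually need (what you need, and do have, is $\lim_n R\bz(W_x+\ep_n I)_x,P\jz\ge\sum_xP(x)D^{q_x}(\omega_0\|W_x+\delta I)$ for every $\delta$, which after $\delta\searrow0$ and \eqref{regularity} gives the bound by $\sum_xP(x)D^{q_x}(\omega_0\|W_x)\ge R_{D^{\qv},\bal}(W,P)$, using \eqref{barycentric Dalpha def support} to pass from $\S(\hil)$ back to the domain of \eqref{left divrad}).

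One caveat concerns your parenthetical justification of the monotonicity in $\ep$: the claim that weak anti-monotonicity \eqref{weakmon} ``combined with the scaling property \eqref{scaling2} forces genuine monotonicity of $\ep\mapsto D^{q_x}(\omega\|W_x+\ep I)$ on all of $(0,+\infty)$'' is not substantiated (the scaling law rescales $W_x$ inside the second argument as well, so it does not convert the local statement into a global one, and iterating the local decrease does not by itself give a global decrease without some uniformity or continuity). However, this global monotonicity for each fixed $\omega$ is precisely what the paper itself also uses --- both for \eqref{smoothing monotonicity2}--\eqref{smoothing monotonicity3}, which it calls obvious, and as the monotonicity hypothesis when applying Lemma \ref{lemma:minimax2} --- i.e., it is the intended reading of the hypothesis. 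So this does not constitute a gap relative to the paper's own proof; simply drop the unjustified parenthetical or state the monotonicity of $\ep\mapsto D^{q_x}(\omega\|W_x+\ep I)$ as the content of the weak anti-monotonicity assumption.
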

\begin{proof}
The monotonicity assertions in \eqref{smoothing monotonicity2}--\eqref{smoothing monotonicity3} are obvious, as are the second equalities in 
\eqref{multi Q smoothing}--\eqref{R smoothing}. 
The first equality in \eqref{R smoothing} follows as 
\begin{align*}
&\sup_{\ep> 0}R_{D^{\qv},\bal}^{\bary}\bz(\Wx{x}+\ep I)_{x\in\X}\jz\\
&\ds=
\sup_{\ep> 0}\inf_{\omega\in\S(S_+\hil)}\sum_xP(x)D^{q_x}(\omega\|\Wx{x}+\ep I)\\
&\ds=
\inf_{\omega\in\S(S_+\hil)}\sup_{\ep> 0}\sum_xP(x)D^{q_x}(\omega\|\Wx{x}+\ep I)\\
&\ds=
\inf_{\omega\in\S(S_+\hil)}\sum_xP(x)D^{q_x}(\omega\|\Wx{x})\\
&\ds= R_{D^{\qv},\bal}^{\bary}(W),
\end{align*}
where the first and the last equalities are by definition, 
the second equality follows from the minimax theorem in Lemma \ref{lemma:minimax2},
and the third equality from the regularity of the $D^{q_x}$. 
From this, the first equality in \eqref{multi Q smoothing}
follows by \eqref{psi optimization}.
\end{proof}

In the $2$-variable case we have the following:

\begin{cor}\label{cor:smoothing}
Assume that $D^{\qn}$ and $D^{\qo}$ are weakly anti-monotone in their second arguments. Then 
\begin{align}\label{smoothing monotonicity}
(0,+\infty)\ni\ep\mapsto Q_{\alpha}^{\bary,\qv}(\rho+\ep I\|\sigma+\ep I)
\ds\ds\text{is monotone increasing }
\end{align}
for any $\rho,\sigma\in\B(\hil)\pne$ and $\alpha\in[0,1]$. 
If, moreover, $D^{\qn}$ and $D^{\qo}$ are regular in the sense of \eqref{regularity}, and 
lower semi-continuous in their first arguments, then 
for any $\rho,\sigma\in\B(\hil)\pne$,
\begin{align}
Q_{\alpha}^{\bary,\qv}(\rho\|\sigma)
&=
\lim_{\ep\searrow 0}Q_{\alpha}^{\bary,\qv}(\rho+\ep I\|\sigma+\ep I)
=
\inf_{\ep>0}Q_{\alpha}^{\bary,\qv}(\rho+\ep I\|\sigma+\ep I),
&\alpha\in(0,1],
\label{Qalpha smoothing}\\
D_{\alpha}^{\bary,\qv}(\rho\|\sigma)
&=
\lim_{\ep\searrow 0}D_{\alpha}^{\bary,\qv}(\rho+\ep I\|\sigma+\ep I),
&\alpha\in(0,1).
\label{Dalpha smoothing}
\end{align}
\end{cor}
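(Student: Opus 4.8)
The plan is to derive Corollary \ref{cor:smoothing} from Proposition \ref{prop:multi smoothing} by specializing to $\X=\{0,1\}$ with $P(0)=\alpha$, together with a short separate argument to convert the statement about $Q_\alpha^{\bary,\qv}$ into the stated limit for $D_\alpha^{\bary,\qv}$.

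First I would observe that for $\alpha\in(0,1)$ the quantities $Q_{\alpha}^{\bary,\qv}(\rho\|\sigma)$ and $R_{D^{\qv},\bal}(W,P)$ with $W=(W_0,W_1)=(\rho,\sigma)$ and $P(0)=\alpha$ coincide by \eqref{barycentric Qalpha def3} and the fact that $P$ is then a probability measure; in particular $-\log Q_{\alpha}^{\bary,\qv}(\rho\|\sigma)=R_{D^{\qv},\bal}((\rho,\sigma),P)$ by Lemma \ref{lemma:psi rep}(ii). Hence the monotonicity \eqref{smoothing monotonicity} for $\alpha\in(0,1)$ is exactly \eqref{smoothing monotonicity2} (note $(\rho+\ep I,\sigma+\ep I)=((W_x+\ep I)_{x\in\X})$), and likewise \eqref{Qalpha smoothing} for $\alpha\in(0,1)$ is \eqref{multi Q smoothing}, once we check that $D^{\qn},D^{\qo}$ being weakly anti-monotone (resp.\ also regular and lower semi-continuous in their first argument) is precisely the hypothesis needed on $D^{q_x}$, $x\in\supp P=\{0,1\}$. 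For the endpoint $\alpha=1$ we have $Q_1^{\bary,\qv}(\rho\|\sigma)=\Tr\rho$ by \eqref{psi proof 2}, so both \eqref{smoothing monotonicity} and \eqref{Qalpha smoothing} hold trivially since $\Tr(\rho+\ep I)=\Tr\rho+\ep\dim\hil$ is increasing in $\ep$ and tends to $\Tr\rho$ as $\ep\searrow 0$. The remaining endpoint $\alpha=0$ is the case where the definition \eqref{barycentric Qalpha def} uses $\tau^0\le\rho^0$ rather than $\tau^0\le\sigma^0$, so it is not literally an instance of \eqref{weighted Q} with $P=\egy_{\{1\}}$; here I would instead argue directly from \eqref{barycentric 0-Renyi} and Remark \ref{rem:Dalpha rep}, but actually the simplest route is to note that $Q_0^{\bary,\qv}(\rho\|\sigma)$ equals $\sup_{\tau\in\B(\rho^0\hil)\p}\{\Tr\tau-D^{\qo}(\tau\|\sigma)\}$, and weak anti-monotonicity of $D^{\qo}$ in its second argument gives that $\ep\mapsto\{\Tr\tau-D^{\qo}(\tau\|\sigma+\ep I)\}$ is increasing pointwise in $\tau$, hence so is the supremum; this yields \eqref{smoothing monotonicity} at $\alpha=0$. (The statement \eqref{Qalpha smoothing} is only claimed for $\alpha\in(0,1]$, so nothing is needed at $\alpha=0$ there.)

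Next I would prove \eqref{Dalpha smoothing} for $\alpha\in(0,1)$. By \eqref{Renyi bq1},
\begin{align*}
D_{\alpha}^{\bary,\qv}(\rho+\ep I\|\sigma+\ep I)
=\frac{1}{\alpha-1}\log Q_{\alpha}^{\bary,\qv}(\rho+\ep I\|\sigma+\ep I)
-\frac{1}{\alpha-1}\log\Tr(\rho+\ep I).
\end{align*}
As $\ep\searrow 0$, the first term converges to $\frac{1}{\alpha-1}\log Q_{\alpha}^{\bary,\qv}(\rho\|\sigma)$ by \eqref{Qalpha smoothing}, provided that $Q_{\alpha}^{\bary,\qv}(\rho\|\sigma)>0$ so that $\log$ is continuous there; but $Q_{\alpha}^{\bary,\qv}(\rho\|\sigma)=0$ would force $\rho^0\wedge\sigma^0=0$ by \eqref{multipsi infty} (applied with $P(0)=\alpha\in(0,1)$, where $S_+=\rho^0\wedge\sigma^0$), in which case $D_{\alpha}^{\bary,\qv}(\rho\|\sigma)=+\infty$ by Corollary \ref{cor:Renyi bq infty} and, since $Q_{\alpha}^{\bary,\qv}(\rho+\ep I\|\sigma+\ep I)\searrow 0$ along a decreasing sequence while staying positive for $\ep>0$, $D_{\alpha}^{\bary,\qv}(\rho+\ep I\|\sigma+\ep I)\to+\infty$ as well, so \eqref{Dalpha smoothing} still holds. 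The second term clearly tends to $\frac{1}{\alpha-1}\log\Tr\rho$. Adding the two limits gives $D_{\alpha}^{\bary,\qv}(\rho\|\sigma)$ by \eqref{Renyi bq1}, which is \eqref{Dalpha smoothing}.

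The main point requiring care is the bookkeeping at the boundary values of $\alpha$ and the sign conventions: one must check that specializing Proposition \ref{prop:multi smoothing} to two variables with $P(0)=\alpha$ genuinely reproduces the claimed hypotheses (weak anti-monotonicity, regularity, lower semi-continuity in the first argument of each of $D^{\qn},D^{\qo}$), and that the degenerate support case $\rho^0\wedge\sigma^0=0$ is consistent with the limit statements — this is the step where the difference between the $\tau^0\le\rho^0$ and $\tau^0\le S_+$ conventions could in principle cause trouble, but for $\alpha\in(0,1)$ they agree by \eqref{barycentric Qalpha def3}, and for $\alpha\in\{0,1\}$ the quantities are given by the explicit formulas above. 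Everything else is a direct translation of Proposition \ref{prop:multi smoothing}. $\Box$
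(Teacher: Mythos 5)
Your proof is correct and follows essentially the same route as the paper: the paper likewise obtains \eqref{smoothing monotonicity} directly from the definitions, gets \eqref{Qalpha smoothing}--\eqref{Dalpha smoothing} for $\alpha\in(0,1)$ as special cases of Proposition \ref{prop:multi smoothing}, and treats $\alpha=1$ via \eqref{psi proof 2}. Your extra care at $\alpha=0$ and in the degenerate case $\rho^0\wedge\sigma^0=0$ (where $Q_{\alpha}^{\bary,\qv}\to 0$ and $D_{\alpha}^{\bary,\qv}\to+\infty$) just makes explicit what the paper leaves implicit.
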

\begin{proof}
The monotonicity assertion in \eqref{smoothing monotonicity} is again obvious by definition,
and the equalities in \eqref{Qalpha smoothing}--\eqref{Dalpha smoothing} 
for $\alpha\in(0,1)$
follow as special cases of 
\eqref{multi Q smoothing}--\eqref{R smoothing}, while
\eqref{Qalpha smoothing} for $\alpha=1$ is trivial from \eqref{psi proof 2}.
\end{proof}

\begin{rem}\label{rem:notregular at 0}
Note that \eqref{Qalpha smoothing} does not hold for $\alpha=0$ in general. Indeed, for commuting $\rho$ and $\sigma$, $Q_0^{\bary,\qv}(\rho\|\sigma)=\Tr\rho^0\sigma$, while 
$\lim_{\ep\searrow 0}Q_{\alpha}^{\bary,\qv}(\rho+\ep I\|\sigma+\ep I)=\Tr\sigma$, which are different whenever $\rho^0\not\geq\sigma^0$. This immediately shows that 
\eqref{Dalpha smoothing} does not hold for $\alpha=0$ in general.
On the other hand, \eqref{Dalpha smoothing} trivially holds for $\alpha=1$ whenever 
\eqref{Renyi bq limit11} is satisfied and $D^{\qo}$ has the regularity property
$D^{\qo}(\rho\|\sigma)=\lim_{\ep\searrow 0}D^{\qo}(\rho+\ep I\|\sigma+\ep I)$; this holds, for instance, for $\DU$ and $D^{\max}$, as well as for $D^{\Um,\#_{\gamma}}$, $\gamma\in(0,1)$. 
\end{rem}

\begin{rem}\label{rem:notmonotone}
By \eqref{smoothing monotonicity}, 
$\ep\mapsto\frac{1}{\alpha-1}\log Q_{\alpha}^{\bary,\qv}(\rho+\ep I\|\sigma+\ep I)$
is monotone decreasing, while the same need not hold for 
$\ep\mapsto D_{\alpha}^{\bary,\qv}(\rho+\ep I\|\sigma+\ep I)$. 
Intuitively, the reason for this is that 
$\frac{1}{1-\alpha}\log\Tr(\rho+\ep I)$ is monotone
increasing in $\ep$. As a concrete example, consider the
$1$-dimensional commutative case
$\rho=r\in\B(\bC)\pne$ and $\sigma=s\in\B(\bC)\pne$, so that  
\begin{align*}
D_{\alpha}^{\bary,\qv}(\rho+\ep I\|\sigma+\ep I)=
D_{\alpha}(\rho+\ep I\|\sigma+\ep I)=
\log\frac{r+\ep}{s+\ep}
=
\log\bz 1+\frac{r-s}{s+\ep}\jz,
\end{align*}
which is monotone increasing in $\ep$ if $r<s$, and monotone decreasing if $r>s$.
\end{rem}

\begin{prop}\label{prop:smoothing2}
Assume that $D^{\qo}$ is weakly anti-monotone in its second argument. Then 
\begin{align}
&(0,+\infty)\ni\ep\mapsto Q_{\alpha}^{\bary,\qv}(\rho\|\sigma+\ep I)
\ds\ds\text{is monotone increasing},\\
&(0,+\infty)\ni\ep\mapsto D_{\alpha}^{\bary,\qv}(\rho\|\sigma+\ep I)
\ds\ds\text{is monotone decreasing}\label{barycentric antimon}
\end{align}
for any $\rho,\sigma\in\B(\hil)\pne$ and $\alpha\in[0,1]$. 
If, moreover, $D^{\qo}$ is regular in the sense of \eqref{regularity}, and 
lower semi-continuous in its first argument, then 
for any $\rho,\sigma\in\B(\hil)\pne$,
\begin{align}
Q_{\alpha}^{\bary,\qv}(\rho\|\sigma)
&=
\lim_{\ep\searrow 0}Q_{\alpha}^{\bary,\qv}(\rho\|\sigma+\ep I)
=
\inf_{\ep>0}Q_{\alpha}^{\bary,\qv}(\rho\|\sigma+\ep I),
&\alpha\in[0,1],
\label{Qalpha smoothing2}\\
D_{\alpha}^{\bary,\qv}(\rho\|\sigma)
&=
\lim_{\ep\searrow 0}D_{\alpha}^{\bary,\qv}(\rho\|\sigma+\ep I)
=
\sup_{\ep>0}D_{\alpha}^{\bary,\qv}(\rho\|\sigma+\ep I)
&\alpha\in[0,1].
\label{Dalpha smoothing2}
\end{align}
\end{prop}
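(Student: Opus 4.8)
The plan is to follow the template of Corollary \ref{cor:smoothing} and Proposition \ref{prop:multi smoothing}: monotonicity will be immediate from weak anti-monotonicity of $D^{\qo}$, and the limit identities will be obtained by exchanging an infimum over the compact state set $\S(\rho^0\hil)$ with a supremum over $\epsilon>0$, using the minimax theorem Lemma \ref{lemma:minimax2} together with the regularity of $D^{\qo}$. The one genuinely new feature compared with Corollary \ref{cor:smoothing} is that perturbing only the second argument leaves the admissible set $\S(\rho^0\hil)$ (equivalently the support constraint $\tau^0\le\rho^0$) untouched, which is precisely why the statement now also covers $\alpha=0$; compare Remark \ref{rem:notregular at 0}.

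First I would prove the monotonicity \eqref{barycentric antimon}. For $\alpha\in[0,1]$ we have $1-\alpha\ge 0$, and weak anti-monotonicity of $D^{\qo}$ in its second argument makes $\epsilon\mapsto D^{\qo}(\tau\|\sigma+\epsilon I)$ monotone decreasing on $(0,+\infty)$ — a priori only on an initial interval $[0,\kappa_{\tau,\sigma})$, but chaining the property about successive base points $\sigma+\epsilon_0 I$ extends it to all of $(0,+\infty)$. Hence for each fixed $\tau\in\B(\rho^0\hil)\p$ the function $\epsilon\mapsto\Tr\tau-\alpha D^{\qn}(\tau\|\rho)-(1-\alpha)D^{\qo}(\tau\|\sigma+\epsilon I)$ is monotone increasing, and taking the supremum over $\tau$ in \eqref{barycentric Qalpha def} preserves this, giving the claimed monotonicity of $\epsilon\mapsto Q_\alpha^{\bary,\qv}(\rho\|\sigma+\epsilon I)$. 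For $\alpha\in[0,1)$ the definition \eqref{barycentric Dalpha def}, the sign $\frac1{\alpha-1}<0$, and the fact that $\Tr\rho$ is unaffected by the perturbation give the monotone decrease of $\epsilon\mapsto D_\alpha^{\bary,\qv}(\rho\|\sigma+\epsilon I)$; the case $\alpha=1$ then follows from $D_1^{\bary,\qv}=\sup_{\beta\in(0,1)}D_\beta^{\bary,\qv}$ (see \eqref{Db 1 def}).

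Next, assuming in addition that $D^{\qo}$ is regular and lower semi-continuous in its first argument, I would prove \eqref{Qalpha smoothing2} and \eqref{Dalpha smoothing2}. By the monotonicity just shown, $\lim_{\epsilon\searrow 0}Q_\alpha^{\bary,\qv}(\rho\|\sigma+\epsilon I)=\inf_{\epsilon>0}Q_\alpha^{\bary,\qv}(\rho\|\sigma+\epsilon I)$, so it suffices to identify this value with $Q_\alpha^{\bary,\qv}(\rho\|\sigma)$. Using the representation \eqref{psi alpha} I write $-\log Q_\alpha^{\bary,\qv}(\rho\|\sigma+\epsilon I)=\inf_{\omega\in\S(\rho^0\hil)}f(\omega,\epsilon)$ with $f(\omega,\epsilon):=\alpha D^{\qn}(\omega\|\rho)+(1-\alpha)D^{\qo}(\omega\|\sigma+\epsilon I)$ (the first term absent when $\alpha=0$). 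Now $\S(\rho^0\hil)$ is compact, $f(\cdot,\epsilon)$ is lower semi-continuous for each $\epsilon$ (from lower semi-continuity of $D^{\qo}(\cdot\|\sigma+\epsilon I)$, together with that of $D^{\qn}(\cdot\|\rho)$ when $\alpha>0$), and $f(\omega,\cdot)$ is monotone decreasing on $(0,+\infty)$ by weak anti-monotonicity of $D^{\qo}$; so Lemma \ref{lemma:minimax2} gives $\sup_{\epsilon>0}\inf_\omega f(\omega,\epsilon)=\inf_\omega\sup_{\epsilon>0}f(\omega,\epsilon)$. Since $\sup_{\epsilon>0}D^{\qo}(\omega\|\sigma+\epsilon I)=\lim_{\epsilon\searrow 0}D^{\qo}(\omega\|\sigma+\epsilon I)=D^{\qo}(\omega\|\sigma)$ by monotonicity in $\epsilon$ and regularity \eqref{regularity}, the inner supremum equals $\alpha D^{\qn}(\omega\|\rho)+(1-\alpha)D^{\qo}(\omega\|\sigma)$, so the right-hand side is $-\log Q_\alpha^{\bary,\qv}(\rho\|\sigma)$ by \eqref{psi alpha} again; this proves \eqref{Qalpha smoothing2} for $\alpha\in(0,1)$ and for $\alpha=0$, while $\alpha=1$ is trivial because $Q_1^{\bary,\qv}(\rho\|\sigma')=\Tr\rho$ for every $\sigma'$ (Remark \ref{rem:psi 1}). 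Finally \eqref{Dalpha smoothing2} follows for $\alpha\in(0,1)$ and $\alpha=0$ by applying $-\frac{1}{1-\alpha}\log(\cdot)$ to \eqref{Qalpha smoothing2} and adding the $\epsilon$-independent term $\log\Tr\rho$ as in \eqref{barycentric Dalpha def}; and for $\alpha=1$ by \eqref{Db 1 def} after exchanging the two (monotone) suprema $\sup_{\beta\in(0,1)}\sup_{\epsilon>0}=\sup_{\epsilon>0}\sup_{\beta\in(0,1)}$.

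The step needing the most care is the minimax exchange: checking that $f(\cdot,\epsilon)$ is really lower semi-continuous (this is where the semi-continuity hypothesis on the generating relative entropies is used and, as in Corollary \ref{cor:smoothing}, cannot be entirely dispensed with), and that $\lim_{\epsilon\searrow 0}D^{\qo}(\omega\|\sigma+\epsilon I)=D^{\qo}(\omega\|\sigma)$ holds with the limit being an honest supremum rather than merely a $\liminf$. A secondary point of care is the bookkeeping at the endpoint $\alpha=0$: one must keep the admissible set equal to $\S(\rho^0\hil)$ (not $\S((\rho^0\wedge\sigma^0)\hil)$ or $\S(\hil)$), which is exactly the asymmetry that makes \eqref{Qalpha smoothing2}–\eqref{Dalpha smoothing2} valid at $\alpha=0$ here, in contrast to Corollary \ref{cor:smoothing}.
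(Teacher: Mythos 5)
Your proof is correct and takes essentially the same route as the paper's, whose own proof simply refers back to Proposition \ref{prop:multi smoothing} and Corollary \ref{cor:smoothing} (monotonicity from weak anti-monotonicity of $D^{\qo}$, then the minimax Lemma \ref{lemma:minimax2} combined with regularity to exchange $\inf_{\omega\in\S(\rho^0\hil)}$ and $\sup_{\ep>0}$, with the $\alpha=1$ cases of \eqref{barycentric antimon} and \eqref{Dalpha smoothing2} obtained via \eqref{Db 1 def}), and you correctly identify why the unchanged support constraint makes $\alpha=0$ admissible here. Note only that your use of lower semi-continuity of $D^{\qn}(\cdot\|\rho)$ for $\alpha\in(0,1)$ is not literally among the stated hypotheses (which concern $D^{\qo}$ alone), but it is exactly what the referenced proofs of Proposition \ref{prop:multi smoothing} and Corollary \ref{cor:smoothing} assume, so your argument coincides with the one the paper intends.
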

\begin{proof}
The proof is essentially the same as for Proposition \ref{prop:multi smoothing} 
and Corollary \ref{cor:smoothing}, and hence we omit
most of it, and only mention that the $\alpha=1$ case in 
\eqref{barycentric antimon} and in \eqref{Dalpha smoothing2}
follow from the respective statements for $\alpha\in[0,1)$ using 
\eqref{Db 1 def}.
\end{proof}

\subsection{Finiteness and non-examples}
\label{sec:finiteness}

Corollary \ref{cor:Renyi bq infty} gives an easily verifiable condition 
for a quantum R\'enyi $\alpha$-divergence not being a barycentric R\'enyi $\alpha$-divergence, as follows: 

\begin{prop}\label{prop:notbary}
Let $D_{\alpha}^q$ be a quantum R\'enyi $\alpha$-divergence for some $\alpha\in(0,1)$
with the property that $D_{\alpha}^q(\rho\|\sigma)=+\infty$ $\iff$ $\rho\perp\sigma$.
Then there exist no quantum relative entropies $D^{\qn}$ and $D^{\qo}$ with which 
$D_{\alpha}^q=D_{\alpha}^{\bary,\qv}$.
\end{prop}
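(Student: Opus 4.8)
The plan is to argue by contradiction, using the support/finiteness characterization of barycentric R\'enyi divergences established in Corollary \ref{cor:Renyi bq infty} together with the assumed finiteness pattern of $D_\alpha^q$. Suppose for some $\alpha\in(0,1)$ we had $D_\alpha^q=D_\alpha^{\bary,\qv}$ for some quantum relative entropies $D^{\qn},D^{\qo}$. By Corollary \ref{cor:Renyi bq infty}, for $\alpha\in[0,1)$ we have $D_\alpha^{\bary,\qv}(\rho\|\sigma)=+\infty$ $\iff$ $\rho^0\wedge\sigma^0=0$, i.e.\ $\rho^0$ and $\sigma^0$ have no common nonzero subspace. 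On the other hand, the hypothesis on $D_\alpha^q$ says $D_\alpha^q(\rho\|\sigma)=+\infty$ $\iff$ $\rho\perp\sigma$, i.e.\ $\rho^0\sigma^0=0$ (equivalently $\ran\rho\perp\ran\sigma$). So the contradiction will come from exhibiting a pair $(\rho,\sigma)$ for which $\rho^0\wedge\sigma^0=0$ but $\rho^0\sigma^0\neq 0$ — such a pair makes $D_\alpha^{\bary,\qv}$ infinite while $D_\alpha^q$ is finite.

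First I would recall (it is stated in the Preliminaries, in the paragraph on $\rho\perp\sigma$) that $\rho^0\wedge\sigma^0=0$ is strictly weaker than $\rho^0\sigma^0=0$: orthogonality of ranges implies trivial intersection but not conversely. Then the main step is to write down an explicit two-dimensional example. Take $\hil=\bC^2$ with orthonormal basis $\{e_0,e_1\}$, let $\rho:=\pr{e_0}$, and let $\sigma:=\pr{\psi}$ where $\psi:=(e_0+e_1)/\sqrt2$. Then $\rho^0=\pr{e_0}$, $\sigma^0=\pr{\psi}$, these are rank-one projections onto distinct (non-orthogonal) lines, so $(\ran\rho)\cap(\ran\sigma)=\{0\}$, hence $\rho^0\wedge\sigma^0=0$; but $\rho^0\sigma^0=\pr{e_0}\cdot\pr{\psi}=\inner{e_0}{\psi}\ketbra{e_0}{\psi}\neq 0$ since $\inner{e_0}{\psi}=1/\sqrt2\neq0$, so $\rho\not\perp\sigma$. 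For this pair, $D_\alpha^{\bary,\qv}(\rho\|\sigma)=+\infty$ by Corollary \ref{cor:Renyi bq infty}, whereas $D_\alpha^q(\rho\|\sigma)<+\infty$ by the assumed equivalence. This contradicts $D_\alpha^q=D_\alpha^{\bary,\qv}$, completing the proof.

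I expect essentially no serious obstacle here; the content is entirely in the juxtaposition of the two finiteness criteria, one of which (for the barycentric family) is the already-proven Corollary \ref{cor:Renyi bq infty}, and the other of which is the hypothesis on $D_\alpha^q$. The only thing to be careful about is that Corollary \ref{cor:Renyi bq infty} is stated for quantum relative entropies in the restrictive sense of Section \ref{sec:barycentric} (non-negative, satisfying the scaling law, and obeying the support condition \eqref{supp const}), which is precisely the standing assumption on $D^{\qn},D^{\qo}$ in the statement of Proposition \ref{prop:notbary}; so the corollary applies directly with no extra work, and one does not even need the anti-monotonicity refinement since the relevant direction for $\alpha\in(0,1)$ is already a genuine equivalence. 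Here is the write-up.

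\begin{proof}
Suppose, for contradiction, that $D_{\alpha}^q=D_{\alpha}^{\bary,\qv}$ for some quantum relative entropies $D^{\qn},D^{\qo}$ and some $\alpha\in(0,1)$. Consider $\hil=\bC^2$ with orthonormal basis $\{e_0,e_1\}$, and put
\begin{align*}
\rho:=\pr{e_0},\ds\ds\ds \sigma:=\pr{\psi},\ds\ds\ds \psi:=\tfrac{1}{\sqrt 2}(e_0+e_1).
\end{align*}
Then $\rho^0=\pr{e_0}$ and $\sigma^0=\pr{\psi}$ are rank-one projections onto the distinct lines $\bC e_0$ and $\bC\psi$. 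Since these lines are different, $(\ran\rho)\cap(\ran\sigma)=\{0\}$, so $\rho^0\wedge\sigma^0=0$. On the other hand, $\inner{e_0}{\psi}=1/\sqrt 2\ne 0$, so $\rho^0\sigma^0=\inner{e_0}{\psi}\diad{e_0}{\psi}\ne 0$, i.e.\ $\rho\not\perp\sigma$.

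By Corollary \ref{cor:Renyi bq infty} applied with $\alpha\in(0,1)$, the condition $\rho^0\wedge\sigma^0=0$ gives $D_{\alpha}^{\bary,\qv}(\rho\|\sigma)=+\infty$. By the assumed property of $D_{\alpha}^q$, the condition $\rho\not\perp\sigma$ gives $D_{\alpha}^q(\rho\|\sigma)<+\infty$. This contradicts $D_{\alpha}^q=D_{\alpha}^{\bary,\qv}$, and the proof is complete.
\end{proof}
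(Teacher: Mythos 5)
Your proof is correct and follows essentially the same route as the paper: both invoke Corollary \ref{cor:Renyi bq infty} to get $D_{\alpha}^{\bary,\qv}(\rho\|\sigma)=+\infty$ on a pair with $\rho^0\wedge\sigma^0=0$ but $\rho\not\perp\sigma$, contradicting the finiteness of $D_{\alpha}^q$ there; the only difference is that the paper just remarks such pairs exist in any dimension larger than $1$, whereas you write out an explicit two-dimensional example.
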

\begin{proof}
Let $\rho,\sigma\in\B(\hil)\pne$ be such that $\rho^0\wedge\sigma^0=0$ and 
$\rho\not\perp\sigma$. Then 
\begin{align}\label{notbary proof}
D_{\alpha}^{\bary,\qv}(\rho\|\sigma)=+\infty>D_{\alpha}^{q}(\rho\|\sigma)
\end{align}
for any quantum relative entropies $D^{\qn}$ and $D^{\qo}$, according to 
Corollary \ref{cor:Renyi bq infty}.
Since such pairs exist in any dimension larger than $1$, we get that 
$D_{\alpha}^{\bary,\qv}\ne D_{\alpha}^q$.
\end{proof}

\begin{cor}
$D_{\alpha,z}$ is not a barycentric R\'enyi $\alpha$-divergence for any 
$\alpha\in(0,1)$ and $z\in(0,+\infty)$. 
\end{cor}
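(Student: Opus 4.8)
The plan is to deduce this from Proposition \ref{prop:notbary}. By Example \ref{ex:az Renyi}, $D_{\alpha,z}$ is a quantum R\'enyi $\alpha$-divergence for every $\alpha\in(0,1)$ and $z\in(0,+\infty)$, so it suffices to check that $D_{\alpha,z}$ meets the hypothesis of that proposition, i.e., that $D_{\alpha,z}(\rho\|\sigma)=+\infty$ holds exactly when $\rho\perp\sigma$. Since $\rho\perp\sigma$ (meaning $\ran\rho\perp\ran\sigma$, equivalently $\rho\sigma=0$) is strictly stronger than $\rho^0\wedge\sigma^0=0$, Corollary \ref{cor:Renyi bq infty} then shows that any barycentric R\'enyi $\alpha$-divergence $D_\alpha^{\bary,\qv}$ already takes the value $+\infty$ on some pair $(\rho,\sigma)$ with $\rho^0\wedge\sigma^0=0$ but $\rho\not\perp\sigma$ (such pairs exist in every dimension $\ge 2$), on which $D_{\alpha,z}$ is finite; hence $D_{\alpha,z}\ne D_\alpha^{\bary,\qv}$.

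The finiteness criterion is a short computation. For $\alpha\in(0,1)$, \eqref{az def2} gives $Q_{\alpha,z}(\rho\|\sigma)=\Tr A^z$ with $A:=\rho^{\frac{\alpha}{2z}}\sigma^{\frac{1-\alpha}{z}}\rho^{\frac{\alpha}{2z}}\in\B(\hil)\p$. Since $\alpha-1<0$, by definition $D_{\alpha,z}(\rho\|\sigma)=+\infty$ iff $Q_{\alpha,z}(\rho\|\sigma)=0$ iff $A^z=0$ iff $A=0$ (using $z>0$). Writing $A=B^*B$ with $B:=\sigma^{\frac{1-\alpha}{2z}}\rho^{\frac{\alpha}{2z}}$ reduces this to $B=0$; and since $\tfrac{\alpha}{2z}>0$ and $\tfrac{1-\alpha}{2z}>0$, one has $\ran\bz\rho^{\frac{\alpha}{2z}}\jz=\ran\rho$ and $\ker\bz\sigma^{\frac{1-\alpha}{2z}}\jz=\ker\sigma=(\ran\sigma)^\perp$, so $B=0$ iff $\ran\rho\subseteq(\ran\sigma)^\perp$, i.e., iff $\rho\perp\sigma$.

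There is essentially no obstacle here: the argument is routine once Proposition \ref{prop:notbary} and Corollary \ref{cor:Renyi bq infty} are available. The only point deserving care is the distinction between the two support conditions $\rho\perp\sigma$ and $\rho^0\wedge\sigma^0=0$, which is exactly what drives the separation: a barycentric R\'enyi $\alpha$-divergence diverges already when $\ran\rho\cap\ran\sigma=\{0\}$, whereas $D_{\alpha,z}$ with finite $z$ diverges only when the two ranges are genuinely orthogonal. By contrast, the log-Euclidean divergence $D_{\alpha,+\infty}=D_\alpha^{\bary,\Um}$ is a barycentric R\'enyi divergence (Remark \ref{rem:Umebary=a-infty}), consistent with the fact that this corollary rules out only $z\in(0,+\infty)$.
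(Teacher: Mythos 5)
Your proposal is correct and follows exactly the paper's route: the paper also deduces the corollary from Proposition \ref{prop:notbary} after noting (as "obvious by definition") that $D_{\alpha,z}(\rho\|\sigma)=+\infty$ iff $\rho\perp\sigma$ for $\alpha\in(0,1)$, $z\in(0,+\infty)$. Your explicit verification of this finiteness criterion via $Q_{\alpha,z}(\rho\|\sigma)=\Tr(B^*B)^z$ with $B=\sigma^{\frac{1-\alpha}{2z}}\rho^{\frac{\alpha}{2z}}$ is a fine, slightly more detailed spelling-out of the same step.
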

\begin{proof}
It is obvious by definition that for any $\alpha\in(0,1)$ and $z\in(0,+\infty)$,
and any $\rho,\sigma\in\B(\hil)\pne$, 
$D_{\alpha,z}(\rho\|\sigma)=+\infty$ $\iff$ $\rho\perp\sigma$, and hence the assertion follows immediately from Proposition \ref{prop:notbary}.
\end{proof}

\begin{cor}\label{cor:measured notbary}
The measured R\'enyi $\alpha$-divergence $D_{\alpha}^{\meas}$ is not a barycentric 
R\'enyi $\alpha$-divergence for any $\alpha\in(0,1)$. 
\end{cor}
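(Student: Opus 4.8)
The plan is to reduce everything to Proposition~\ref{prop:notbary}. Since $D_{\alpha}^{\meas}$ is a quantum R\'enyi $\alpha$-divergence (Example~\ref{ex:meas Renyi}), it suffices to verify that for every $\alpha\in(0,1)$ it satisfies the dichotomy $D_{\alpha}^{\meas}(\rho\|\sigma)=+\infty$ if and only if $\rho\perp\sigma$; then Proposition~\ref{prop:notbary} immediately gives that no pair of quantum relative entropies $D^{\qn},D^{\qo}$ can satisfy $D_{\alpha}^{\meas}=D_{\alpha}^{\bary,\qv}$.

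For the direction $\rho\perp\sigma\implies D_{\alpha}^{\meas}(\rho\|\sigma)=+\infty$: from $\rho^{0}\sigma^{0}=0$ we get $\sigma\rho^{0}=0$, so completing $\{\rho^{0},I-\rho^{0}\}$ to a two-outcome POVM $M$ yields $\M(\rho)=(\Tr\rho)\egy_{\{0\}}$ and $\M(\sigma)=(\Tr\sigma)\egy_{\{1\}}$, which have disjoint supports. Since $\rho\neq 0$ and, for $\alpha\in(0,1)$, the classical R\'enyi divergence $D_{\alpha}(p\|q)$ equals $+\infty$ exactly when $\supp p\cap\supp q=\emptyset$ (because $Q_{\alpha}(p\|q)$ is always finite and vanishes precisely then, cf.~\eqref{classical Q def}--\eqref{classical Renyi def}), we obtain $D_{\alpha}^{\meas}(\rho\|\sigma)\geq D_{\alpha}(\M(\rho)\|\M(\sigma))=+\infty$.

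The converse, $\rho\not\perp\sigma\implies D_{\alpha}^{\meas}(\rho\|\sigma)<+\infty$, is the step I expect to be the main obstacle, because $D_{\alpha}^{\meas}$ has no closed form and estimating $\inf_{M}Q_{\alpha}(\M(\rho)\|\M(\sigma))$ directly is awkward. Instead I would bound from above by the regularized measured divergence: by Proposition~\ref{prop:regmeas regmax as limits} the sequence $n\mapsto D_{\alpha}^{\meas}(\rho^{\otimes n}\|\sigma^{\otimes n})$ is superadditive, hence $D_{\alpha}^{\meas}(\rho\|\sigma)\leq\oll{D}_{\alpha}^{\meas}(\rho\|\sigma)$, and by the explicit formula \eqref{regmeasured explicit} one has $\oll{D}_{\alpha}^{\meas}(\rho\|\sigma)=D_{\alpha,z(\alpha)}(\rho\|\sigma)$ with $z(\alpha)=\alpha$ for $\alpha\in[1/2,1)$ and $z(\alpha)=1-\alpha$ for $\alpha\in(0,1/2)$, so $z(\alpha)\in(0,1)$. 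As already used in the proof of the preceding corollary (and immediate from Example~\ref{ex:az Renyi}), for $\alpha\in(0,1)$ the quantity $Q_{\alpha,z}(\rho\|\sigma)=\Tr\bz\rho^{\alpha/2z}\sigma^{(1-\alpha)/z}\rho^{\alpha/2z}\jz^{z}$ is finite and strictly positive whenever $\sigma^{0}\rho^{0}\neq 0$, i.e.\ whenever $\rho\not\perp\sigma$, so $D_{\alpha,z(\alpha)}(\rho\|\sigma)<+\infty$. Hence $D_{\alpha}^{\meas}(\rho\|\sigma)<+\infty$, which completes the dichotomy and, via Proposition~\ref{prop:notbary}, the corollary.
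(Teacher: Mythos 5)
Your proof is correct, and its skeleton coincides with the paper's: both reduce the corollary to Proposition~\ref{prop:notbary} by establishing that for $\alpha\in(0,1)$ and $\rho,\sigma\in\B(\hil)\pne$ one has $D_{\alpha}^{\meas}(\rho\|\sigma)=+\infty$ if and only if $\rho\perp\sigma$, and both handle the direction $\rho\perp\sigma\Rightarrow+\infty$ with the same two-outcome measurement $(\rho^0,\,I-\rho^0)$. The only genuine difference is in the finiteness direction. The paper disposes of it in one line via $D_{\alpha}^{\meas}(\rho\|\sigma)\le D_{\alpha,1}(\rho\|\sigma)<+\infty$, which follows from the data-processing inequality of the Petz-type R\'enyi $\alpha$-divergence under measurements \cite{P86}; this is elementary and needs no further input. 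You instead pass through the regularized measured divergence: $D_{\alpha}^{\meas}\le\oll{D}_{\alpha}^{\meas}$ (superadditivity, or simply the sup over $n$ in the definition) together with the closed form \eqref{regmeasured explicit}, identifying $\oll{D}_{\alpha}^{\meas}=D_{\alpha,z(\alpha)}$ with $z(\alpha)\in(0,1)$, whose finiteness for $\rho\not\perp\sigma$ is immediate from the formula for $Q_{\alpha,z}$. This route is also valid within the paper's framework, and it even yields a slightly sharper bound (since $D_{\alpha,z}$ is nondecreasing in $z$ for $\alpha\in(0,1)$, one has $D_{\alpha,z(\alpha)}\le D_{\alpha,1}$), but it leans on the considerably deeper regularization theorems behind \eqref{regmeasured explicit}, whereas a single application of measurement monotonicity of $D_{\alpha,1}$ suffices; either way the dichotomy is established and Proposition~\ref{prop:notbary} finishes the argument.
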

\begin{proof}
According to Proposition \ref{prop:notbary} we only need to prove that for any 
$\alpha\in(0,1)$ and any $\rho,\sigma\in\B(\hil)\pne$, 
$D_{\alpha}^{\meas}(\rho\|\sigma)=+\infty$ $\iff$ $\rho\perp\sigma$.
This is well known and easy to verify, but we give the details for the readers' convenience. 
If $\rho\perp\sigma$ then the measurement $M_0:=\rho^0$, $M_1:=I-\rho^0$
gives $D_{\alpha}^{\meas}(\rho\|\sigma)\ge D_{\alpha}(\M(\rho)\|\M(\sigma))=+\infty$.
If $\rho\not\perp\sigma$ then we have $D_{\alpha}^{\meas}(\rho\|\sigma)
\le D_{\alpha,1}(\rho\|\sigma)<+\infty$, where the first inequality is due to the 
monotonicity of the Petz-type R\'enyi $\alpha$-divergence under measurements
\cite{P86}.
\end{proof}

One might have the impression that the strict inequality in \eqref{notbary proof} 
is the result of some pathology, and would not happen if the operators had full support, 
and both R\'enyi divergences took finite values on them. This, however, is not the case, at least if we assume some mild and very natural continuity and regularity properties of $D_{\alpha}^q$ and 
the quantum relative entropies $D^{\qn}$ and $D^{\qo}$. 

Indeed, Proposition \ref{prop:notbary} and 
Corollary \ref{cor:barycentric Renyi lsc} yield the following:

\begin{cor}\label{cor:notbary}
Let $D_{\alpha}^q$ be a quantum R\'enyi $\alpha$-divergence for some $\alpha\in(0,1)$, such that 
$D_{\alpha}^q(\rho\|\sigma)=+\infty$ $\iff$ $\rho\perp\sigma$, and assume that 
$D_{\alpha}^q$ is jointly continuous in its arguments.
Let $D^{\qn}$ and $D^{\qo}$ be quantum relative entropies that are
jointly lower semi-continuous in their arguments.
Then for any two $\rho_0,\sigma_0\in\B(\hil)\pne$ such that $\rho_0^0\wedge\sigma_0^0=0$ 
and
$\rho_0\not\perp\sigma_0$, and for any $\rho,\sigma\in\B(\hil)\pne$ in a neighbourhood of 
$(\rho_0,\sigma_0)$,
\begin{align}\label{notbary strict ineq1}
D_{\alpha}^{\bary,\qv}(\rho\|\sigma)>D_{\alpha}^q(\rho\|\sigma),
\end{align} 
and $D_{\alpha}^{\bary,\qv}(\rho\|\sigma)<+\infty$ for any pair of invertible elements in the neighbourhood.
In particular, $D_{\alpha}^q\ne D_{\alpha}^{\bary,\qv}$.
\end{cor}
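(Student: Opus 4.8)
The plan is to obtain the statement by combining three facts that are already at hand: the finiteness dichotomy for barycentric R\'enyi divergences in Corollary \ref{cor:Renyi bq infty}, the joint lower semi-continuity of $D_{\alpha}^{\bary,\qv}$ from Corollary \ref{cor:barycentric Renyi lsc} (applicable because $D^{\qn}$ and $D^{\qo}$ are assumed jointly lower semi-continuous), and the assumed joint continuity of $D_{\alpha}^q$. No new estimate is required; the argument is a soft limiting one.

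First I would evaluate both quantities at the base point $(\rho_0,\sigma_0)$. Since $\rho_0^0\wedge\sigma_0^0=0$ and $\alpha\in(0,1)\subseteq[0,1)$, Corollary \ref{cor:Renyi bq infty} gives $D_{\alpha}^{\bary,\qv}(\rho_0\|\sigma_0)=+\infty$, regardless of the choice of the quantum relative entropies $D^{\qn},D^{\qo}$. On the other hand, since $\rho_0\not\perp\sigma_0$, the defining property of $D_{\alpha}^q$ gives $D_{\alpha}^q(\rho_0\|\sigma_0)<+\infty$; write $c:=D_{\alpha}^q(\rho_0\|\sigma_0)\in\bR$.

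Next I would pass to a neighbourhood. Joint lower semi-continuity of $D_{\alpha}^{\bary,\qv}$ at $(\rho_0,\sigma_0)$, combined with $D_{\alpha}^{\bary,\qv}(\rho_0\|\sigma_0)=+\infty$, forces $D_{\alpha}^{\bary,\qv}(\rho\|\sigma)\to+\infty$ as $(\rho,\sigma)\to(\rho_0,\sigma_0)$, so there is a neighbourhood $U_1$ of $(\rho_0,\sigma_0)$ on which $D_{\alpha}^{\bary,\qv}(\rho\|\sigma)>c+1$. Joint continuity of $D_{\alpha}^q$ at $(\rho_0,\sigma_0)$ (note that $\{\rho\not\perp\sigma\}$ is open, so $D_{\alpha}^q$ is finite and continuous near the base point) yields a neighbourhood $U_2$ on which $D_{\alpha}^q(\rho\|\sigma)<c+1$. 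On $U:=U_1\cap U_2$ we then have $D_{\alpha}^{\bary,\qv}(\rho\|\sigma)>c+1>D_{\alpha}^q(\rho\|\sigma)$, which is \eqref{notbary strict ineq1}.

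Finally, if $\rho,\sigma$ are invertible then $\rho^0=\sigma^0=I$, hence $\rho^0\wedge\sigma^0=I\neq 0$, and Corollary \ref{cor:Renyi bq infty} gives $D_{\alpha}^{\bary,\qv}(\rho\|\sigma)<+\infty$; since invertible pairs are dense, $U$ contains such pairs, and for them both divergences are finite but unequal, so $D_{\alpha}^q\neq D_{\alpha}^{\bary,\qv}$ (using that a base point $(\rho_0,\sigma_0)$ with $\rho_0^0\wedge\sigma_0^0=0$, $\rho_0\not\perp\sigma_0$ exists on every $\hil$ with $\dim\hil\ge 2$, for instance a pair of non-orthogonal rank-one projections). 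The only point requiring a moment's thought is the passage from ``lower semi-continuous with value $+\infty$ at a point'' to ``diverges to $+\infty$ near that point'', which is the elementary observation that $\liminf_{(\rho,\sigma)\to(\rho_0,\sigma_0)}D_{\alpha}^{\bary,\qv}(\rho\|\sigma)\ge D_{\alpha}^{\bary,\qv}(\rho_0\|\sigma_0)=+\infty$; there is no substantial obstacle.
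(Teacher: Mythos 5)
Your argument is correct and follows essentially the same route as the paper: the paper also combines Corollary \ref{cor:Renyi bq infty} (giving $D_{\alpha}^{\bary,\qv}(\rho_0\|\sigma_0)=+\infty$ and finiteness at invertible pairs) with the lower semi-continuity of $D_{\alpha}^{\bary,\qv}$ and the continuity of $D_{\alpha}^q$, merely packaging the neighbourhood argument as the open set $\{(\rho,\sigma):\,D_{\alpha}^q(\rho\|\sigma)<M<D_{\alpha}^{\bary,\qv}(\rho\|\sigma)\}$ for a finite $M>D_{\alpha}^q(\rho_0\|\sigma_0)$ instead of your two neighbourhoods with threshold $c+1$. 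The only difference is cosmetic, so nothing further is needed.
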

\begin{proof}
Let $M>D_{\alpha}^q(\rho_0\|\sigma_0)$ be a finite number.  
By the (semi-)continuity assumptions, 
$\{(\rho,\sigma)\in\B(\hil)\pne^2:\,D_{\alpha}^q(\rho\|\sigma)<M<D_{\alpha}^{\bary,\qv}(\rho\|\sigma)\}$ is an open subset of $\B(\hil)\p^2$
containing $(\rho_0,\sigma_0)$, and 
for any of its elements $(\rho,\sigma)$, the inequality \eqref{notbary strict ineq1} holds.
The assertion about the invertible pairs is obvious from 
Corollary \ref{cor:Renyi bq infty}.
\end{proof}

Likewise, Proposition \ref{prop:notbary} and Corollary \ref{cor:smoothing} yield the following:

\begin{cor}\label{cor:notbary2}
Let 
$D_{\alpha}^q$ be a quantum R\'enyi $\alpha$-divergence for some $\alpha\in(0,1)$, such that 
$D_{\alpha}^q(\rho\|\sigma)=+\infty$ $\iff$ $\rho\perp\sigma$, and 
such that $D_{\alpha}^q$ is regular in the sense that 
$D_{\alpha}^q(\rho\|\sigma)=\lim_{\ep\searrow 0}D_{\alpha}^q(\rho+\ep I\|\sigma+\ep I)$ for any $\rho,\sigma\in\B(\hil)\pne$.
Let $D^{\qn}$ and $D^{\qo}$ be quantum relative entropies 
that are lower semi-continuous in their first arguments,
weakly anti-monotone in their second arguments, and regular. 
Then for any $\rho_0,\sigma_0\in\B(\hil)\pne$ such that $\rho_0^0\wedge\sigma_0^0=0$ 
and
$\rho_0\not\perp\sigma_0$, and for any $\ep> 0$ small enough, 
\begin{align}\label{notbary strict2}
+\infty>D_{\alpha}^{\bary,\qv}(\rho_0+\ep I\|\sigma_0+\ep I)
>D_{\alpha}^q(\rho_0+\ep I\|\sigma_0+\ep I).
\end{align} 
In particular, $D_{\alpha}^q\ne D_{\alpha}^{\bary,\qv}$.
\end{cor}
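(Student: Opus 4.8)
The plan is to combine Corollary \ref{cor:Renyi bq infty}, which pins down exactly when the barycentric R\'enyi divergence is infinite, with the smoothing identity \eqref{Dalpha smoothing} from Corollary \ref{cor:smoothing}, which expresses the barycentric divergence as the $\ep\searrow 0$ limit of its regularizations whenever the generating relative entropies are weakly anti-monotone in the second argument, regular, and lower semi-continuous in the first argument --- precisely the hypotheses imposed on $D^{\qn}$ and $D^{\qo}$ in the statement.

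First I would fix $\rho_0,\sigma_0\in\B(\hil)\pne$ with $\rho_0^0\wedge\sigma_0^0=0$ and $\rho_0\not\perp\sigma_0$; such pairs exist as soon as $\dim\hil\ge 2$, e.g.\ two rank-one projections $\pr{\psi},\pr{\phi}$ with $\psi,\phi$ linearly independent but not orthogonal. Since $\alpha\in(0,1)$ and $\rho_0^0\wedge\sigma_0^0=0$, Corollary \ref{cor:Renyi bq infty} gives $D_{\alpha}^{\bary,\qv}(\rho_0\|\sigma_0)=+\infty$, whereas the assumption on $D_{\alpha}^q$ together with $\rho_0\not\perp\sigma_0$ gives $D_{\alpha}^q(\rho_0\|\sigma_0)<+\infty$.

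Next I would transport these two facts to the regularized pairs. The hypotheses on $D^{\qn},D^{\qo}$ are exactly those of Corollary \ref{cor:smoothing}, so \eqref{Dalpha smoothing} yields $\lim_{\ep\searrow 0}D_{\alpha}^{\bary,\qv}(\rho_0+\ep I\|\sigma_0+\ep I)=D_{\alpha}^{\bary,\qv}(\rho_0\|\sigma_0)=+\infty$, while the assumed regularity of $D_{\alpha}^q$ gives $\lim_{\ep\searrow 0}D_{\alpha}^q(\rho_0+\ep I\|\sigma_0+\ep I)=D_{\alpha}^q(\rho_0\|\sigma_0)<+\infty$. Comparing the two limits produces an $\ep_0>0$ such that $D_{\alpha}^{\bary,\qv}(\rho_0+\ep I\|\sigma_0+\ep I)>D_{\alpha}^q(\rho_0+\ep I\|\sigma_0+\ep I)$ for every $\ep\in(0,\ep_0)$. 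Finiteness of the left-hand side for every $\ep>0$ is immediate from Corollary \ref{cor:Renyi bq infty}, since $\rho_0+\ep I$ and $\sigma_0+\ep I$ are invertible, hence $(\rho_0+\ep I)^0\le(\sigma_0+\ep I)^0$ and $(\rho_0+\ep I)^0\wedge(\sigma_0+\ep I)^0=I\ne 0$. This establishes \eqref{notbary strict2}, and the strict inequality on the constructed pair shows $D_{\alpha}^q\ne D_{\alpha}^{\bary,\qv}$.

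There is no serious obstacle here: the result is essentially a bookkeeping exercise stacking Corollaries \ref{cor:Renyi bq infty} and \ref{cor:smoothing}. The only points needing a moment's care are checking that the regularity/semicontinuity/anti-monotonicity hypotheses on $D^{\qn},D^{\qo}$ really match those invoked by \eqref{Dalpha smoothing}, and exhibiting the pair $(\rho_0,\sigma_0)$ with $\rho_0^0\wedge\sigma_0^0=0$ but $\rho_0\not\perp\sigma_0$; both are routine.
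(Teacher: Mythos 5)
Your proposal is correct and follows essentially the same route as the paper: both proofs combine Corollary \ref{cor:Renyi bq infty} (infiniteness at $(\rho_0,\sigma_0)$ with $\rho_0^0\wedge\sigma_0^0=0$, finiteness for invertible arguments) with the smoothing identity \eqref{Dalpha smoothing} of Corollary \ref{cor:smoothing} and the assumed regularity of $D_{\alpha}^q$, and then compare the two $\ep\searrow 0$ limits to get the strict inequality for small $\ep$. The paper phrases this via an intermediate finite threshold $M$ separating the two quantities, but that is only a cosmetic difference from your direct comparison of limits.
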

\begin{proof}
Let $M>D_{\alpha}^q(\rho_0\|\sigma_0)$ be a finite number.  
By Corollary \ref{cor:Renyi bq infty} and Corollary
\ref{cor:smoothing}, there exists some $\ep_1>0$ such that 
$+\infty>D_{\alpha}^{\bary,\qv}(\rho_0+\ep I\|\sigma_0+\ep I)>M$ for every 
$0<\ep\le \ep_1$. By the assumed regularity of $D_{\alpha}^q$, 
there exists some $\ep_2>0$ such that 
$D_{\alpha}^q(\rho_0+\ep I\|\sigma_0+\ep I)<M$ for every $0<\ep<\ep_2$. Hence
\eqref{notbary strict2} holds for every $0<\ep<\min\{\ep_1,\ep_2\}$.
\end{proof}

\begin{example}
For every $\alpha\in(0,1)$ and $z\in(0,+\infty)$, $D_{\alpha,z}$ satisfies the conditions in 
Corollaries \ref{cor:notbary} and \ref{cor:notbary2}, and hence both corollaries 
apply to it.
\end{example}

\begin{example}
As discussed above, for every $\alpha\in(0,1)$, $D_{\alpha}^{\meas}(\rho\|\sigma)=+\infty$
$\iff$ $\rho\perp\sigma$. Hence, Corollaries \ref{cor:notbary} and \ref{cor:notbary2}
can be applied to $D_{\alpha}^{\meas}$ if it is jointly continuous in its arguments. 
To show this, let us fix a finite-dimensional Hilbert space $\hil$ and an 
orthonormal system $(e_i)_{i=1}^d$ in it. Let $\bU(\hil)$ denote the set of unitaries on 
$\hil$. Continuity of the classical R\'enyi $\alpha$-divergence $D_{\alpha}$ yields that the function 
\begin{align*}
\B(\hil)\pne\times\B(\hil)\pne\times\bU(\hil)\ni(\rho,\sigma,U)
\mapsto
D_{\alpha}\bz\bz\inner{e_i}{U^*\rho Ue_i}\jz_{i=1}^d\Big\|\bz\inner{e_i}{U^*\rho Ue_i}\jz_{i=1}^d\jz
\end{align*}
is continuous. Hence,
\begin{align}\label{projective measured}
D_{\alpha}^{\meas}(\rho\|\sigma)=\sup_{U\in\bU(\hil)}D_{\alpha}\bz\bz\inner{e_i}{U^*\rho Ue_i}\jz_{i=1}^d\Big\|\bz\inner{e_i}{U^*\rho Ue_i}\jz_{i=1}^d\jz
\end{align}
is jointly continuous in $\rho$ and $\sigma$ according to Lemma \ref{lemma:usc}.
(For the equality in \eqref{projective measured}, see \cite[Theorem 4]{BFT_variational}.)
\end{example}

\section{Examples of barycentric R\'enyi divergences}
\label{sec:ex}

In this section we consider the relations among various known quantum R\'enyi 
$\alpha$-divergences and barycentric R\'enyi $\alpha$-divergences obtained from 
specific quantum relative entropies. 
Our main results in this respect are Corollary \ref{cor:strictly between Umand max} and Theorem \ref{thm:maxRenyi strictly larger}; we illustrate these in Figure \ref{fig:barycentric}.

\subsection{General relations}

Recall the definition of the ordering of quantum divergences given in Definition 
\ref{def:divorder}, and the definition of strict ordering of $2$-variable 
quantum divergences defined on pairs of non-zero PSD operators
from Definition \ref{def:divorder2}.

We start with two simple observations.
The first one is trivial by definition. 

\begin{lemma}\label{lemma:trivorder}
If $P\in\P_f(\X)$ and
$D^{q_x}\le D^{\tilde q_x}$, $x\in\supp P$, then 
\begin{align}\label{trivorder2}
-\log Q_{P}^{\bary,\qv}\le
-\log Q_{P}^{\bary,\tilde\qv}.
\end{align}
In particular, if $\X=\{0,1\}$ then
\begin{align}\label{trivorder}
D^{\qn}\le D^{\tilde q_0},\s
D^{\qo}\le D^{\tilde q_1}
\ds\imp\ds
D^{\bary,\qv}_{\alpha}\le D^{\bary,\tilde\qv}_{\alpha},\ds\ds\ds\alpha\in[0,1].
\end{align}
\end{lemma}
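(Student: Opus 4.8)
\textbf{Proof plan for Lemma \ref{lemma:trivorder}.}

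The plan is to reduce everything to the variational formula \eqref{psi optimization}, namely
\[
-\log Q_{P}^{\bary,\qv}(W)=R_{D^{\qv},\bal}(W,P)=\inf_{\omega\in\S(S_+\hil)}\sum_{x\in\X}P(x)D^{q_x}(\omega\|\ch{x}),
\]
which holds for any $P\in\P_f(\X)$ (here $S_-=I$, so the condition $S_+\le S_-$ is automatic and the formula is valid unconditionally). First I would fix an arbitrary gcq channel $W\in\B(\X,\hil)\pne$ and note that the support projection $S_+=\bigwedge_{x:\,P(x)>0}W_x^0$ depends only on $W$, not on the choice of relative entropies, so the optimization domain $\S(S_+\hil)$ is the same for both sides of \eqref{trivorder2}. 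For any fixed state $\omega\in\S(S_+\hil)$ and any $x\in\supp P$ we have $\omega^0\le S_+\le W_x^0$, hence by the hypothesis $D^{q_x}\le D^{\tilde q_x}$ (and since $P(x)\ge 0$),
\[
P(x)D^{q_x}(\omega\|\ch{x})\le P(x)D^{\tilde q_x}(\omega\|\ch{x}).
\]
Summing over $x\in\X$ (only the $x\in\supp P$ contribute) gives the pointwise inequality $\sum_x P(x)D^{q_x}(\omega\|\ch{x})\le\sum_x P(x)D^{\tilde q_x}(\omega\|\ch{x})$ for every admissible $\omega$, and taking the infimum over $\omega\in\S(S_+\hil)$ yields $R_{D^{\qv},\bal}(W,P)\le R_{D^{\tilde\qv},\bal}(W,P)$, i.e., \eqref{trivorder2}.

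For the second assertion, I would specialize to $\X=\{0,1\}$ and $\alpha=P(0)\in[0,1]$, so that $P$ is indeed a probability measure and the previous paragraph applies. By Remark \ref{rem:Dalpha rep} (formula \eqref{Renyi bq2}), for $\alpha\in[0,1)\cup(1,+\infty)$,
\[
D_{\alpha}^{\bary,\qv}(\rho\|\sigma)=\frac{1}{1-\alpha}\bigl(-\log Q_{\alpha}^{\bary,\qv}(\rho\|\sigma)\bigr)-\frac{1}{\alpha-1}\log\Tr\rho,
\]
and the normalization term $\tfrac{1}{\alpha-1}\log\Tr\rho$ is independent of the relative entropies. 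For $\alpha\in[0,1)$ the factor $\tfrac{1}{1-\alpha}$ is positive, so \eqref{trivorder2} (applied with $W_0=\rho$, $W_1=\sigma$, $P(0)=\alpha$) immediately gives $D_{\alpha}^{\bary,\qv}(\rho\|\sigma)\le D_{\alpha}^{\bary,\tilde\qv}(\rho\|\sigma)$ whenever $D^{\qn}\le D^{\tilde q_0}$ and $D^{\qo}\le D^{\tilde q_1}$. The remaining endpoint $\alpha=1$ follows by taking the limit $\alpha\nearrow 1$: by \eqref{Db 1 def}, $D_{1}^{\bary,\qv}=\sup_{\alpha\in(0,1)}D_{\alpha}^{\bary,\qv}$, and a supremum of a family of functions that are pointwise dominated is again dominated, so $D_{1}^{\bary,\qv}\le D_{1}^{\bary,\tilde\qv}$ as well.

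There is essentially no obstacle here; the only point requiring a moment's care is making sure the support projection $S_+$ genuinely depends only on $W$ (so that the two infima are over the same set) and that the hypothesis $D^{q_x}\le D^{\tilde q_x}$ is being used at arguments $(\omega,W_x)$ with $\omega^0\le W_x^0$, which is exactly where the ordering of quantum divergences in Definition \ref{def:divorder} is meaningful and where both sides are finite. Everything else is monotonicity of $\inf$ and $\sup$ under pointwise inequalities, plus the sign of the normalizing factor $1/(1-\alpha)$ on $[0,1)$.
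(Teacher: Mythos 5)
Your argument is correct and is essentially the paper's: the paper dismisses \eqref{trivorder2} as trivial by definition, since in the supremum formula \eqref{weighted Q} the domain $\B(S_+\hil)_{\ge 0}$ depends only on $W$ and, with $P(x)\ge 0$, each term $-P(x)D^{q_x}(\tau\|W_x)$ only increases when $D^{\tilde q_x}$ is replaced by the smaller $D^{q_x}$; your route through the divergence-radius formula \eqref{psi optimization} of Lemma \ref{lemma:psi rep} is the same monotonicity argument read on the infimum side, and the passage to $D_{\alpha}^{\bary,\qv}$ via \eqref{Renyi bq2} and the endpoint $\alpha=1$ via \eqref{Db 1 def} is fine.

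One small imprecision: at $\alpha=0$ the two-variable quantity \eqref{barycentric Qalpha def} is \emph{not} literally the $P(0)=0$ case of the multi-variate one, because the supremum there is over $\tau^0\le\rho^0$ rather than $\tau^0\le S_+=\sigma^0$ (the paper points this out in the remark following Definition \ref{def:barycentric Renyi}), so you cannot simply "apply \eqref{trivorder2} with $P(0)=\alpha$" at $\alpha=0$. The fix is immediate: use the two-variable variational formula \eqref{psi alpha}, whose optimization domain $\S(\rho^0\hil)$ again does not depend on the relative entropies, and run the same pointwise comparison there.
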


The following is also easy to verify:

\begin{prop}\label{prop:trivorder}
Let $P\in\P_f(\X)$, and let $D^{q_x}$, $x\in\supp P$, be monotone under CPTP maps. Then 
\begin{align}\label{trivial monotone ordering2}
-\log Q_P^{\meas}
\le 
-\log Q_P^{\bary,\meas}\le
-\log Q_P^{\bary,\qv}\le
-\log Q_P^{\bary,\max}
\le -\log Q_P^{\max}.
\end{align}
In particular, if $D^{\qn}$ and $D^{\qo}$ are quantum relative entropies that are monotone under CPTP maps then 
\begin{align}\label{trivial monotone ordering}
D_{\alpha}^{\meas}\le D_{\alpha}^{\bary,\meas}\le
D_{\alpha}^{\bary,\qv}\le
D_{\alpha}^{\bary,\max}
\le D_{\alpha}^{\max},\ds\ds\ds\alpha\in[0,1].
\end{align}
\end{prop}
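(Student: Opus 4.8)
The plan is to prove the chain \eqref{trivial monotone ordering2} inequality by inequality, working from the outside in, and then read off \eqref{trivial monotone ordering} as the special case $\X=\{0,1\}$, $P(0)=\alpha\in[0,1]$, using the relation between $-\log Q_P^{\bary,\qv}$ and $D_\alpha^{\bary,\qv}$ encoded in Definition \ref{defin:weighted Q} together with Remark \ref{rem:Dalpha rep}. Since the two outermost inequalities and the two innermost inequalities are structurally different, I would split the argument into three observations.

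First, the middle inequality $-\log Q_P^{\bary,\meas}\le -\log Q_P^{\bary,\qv}\le -\log Q_P^{\bary,\max}$ is immediate from Lemma \ref{lemma:trivorder}: by Remark \ref{rem:relentr maxmin} (see \eqref{relentr maxmin}), any CPTP-monotone quantum relative entropy $D^{q_x}$ satisfies $D^{\meas}\le D^{q_x}\le D^{\max}$, so taking $D^{\tilde q_x}=D^{\max}$ in one application and $D^{q_x}=D^{\meas}$ in another gives both halves. (Here I am using that $D^{\meas}$, $\DU$, $D^{\max}$ are quantum relative entropies in the restrictive sense of Section \ref{sec:barycentric}, which is Example \ref{ex:qrelative entropies}, so that $D_P^{\bary,\meas}$ and $D_P^{\bary,\max}$ are well-defined.)

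Second, for the two outer inequalities I would show $-\log Q_P^{\meas}\le -\log Q_P^{\bary,\meas}$ and $-\log Q_P^{\bary,\max}\le -\log Q_P^{\max}$. For the left one, I would use the variational formula \eqref{left divrad}: for any POVM $M$ with $\M(W)\in\cD(Q_P)$ and any state $\omega$, the classical chain rule gives $\sum_x P(x)D(\M(\omega)\|\M(W_x))\le \sum_x P(x)D^{\meas}(\omega\|W_x)$ since $D^{\meas}(\omega\|W_x)=\sup_N D(\mathcal N(\omega)\|\mathcal N(W_x))$; taking the infimum over $\omega\in\S(S_+\hil)$ and then the supremum over $M$ shows $R_{D^{\meas},\bal}(W,P)\ge -\log Q_P^{\meas}(W)$, which by \eqref{psi optimization} is exactly $-\log Q_P^{\bary,\meas}(W)\ge -\log Q_P^{\meas}(W)$. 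Alternatively, and more cleanly, this is a direct consequence of Proposition \ref{prop:trivorder}-type reasoning: $Q_P^{\bary,\meas}$ is a CPTP-monotone quantum extension of $Q_P$ (Proposition \ref{prop:barycentric quantum Renyi} plus Proposition \ref{prop:PTP monotonicity}), hence lies above the minimal extension $Q_P^{\meas}$ by the extremality statement in Example \ref{ex:minmax multi}. For the right inequality, dually: $Q_P^{\bary,\max}$ is a CPTP-monotone quantum extension of $Q_P$, hence lies below the maximal extension $Q_P^{\max}$, again by Example \ref{ex:minmax multi}; equivalently, given any reverse test $(w,\rt)$ for $W$, monotonicity of each $D^{q_x}=D^{\max}$ under the CPTP map $\rt$ shows $D^{\max}(\rt(\tau')\|W_x)\le D(\tau'\|w_x)$ for $\tau'$ with the appropriate support, and optimizing gives $R_{D^{\max},\bal}(W,P)\le -\log Q_P(w)$, hence $-\log Q_P^{\bary,\max}(W)\le -\log Q_P^{\max}(W)$.

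Third, I would deduce \eqref{trivial monotone ordering} from \eqref{trivial monotone ordering2}. For $\alpha\in(0,1)$ this is the case $\X=\{0,1\}$, $P(0)=\alpha$: each $-\log Q_P^{\bary,\cdots}$ equals $(1-\alpha)D_\alpha^{\bary,\cdots}(\rho/\Tr\rho\|\sigma/\Tr\sigma)$ after the symmetric normalization in Definition \ref{defin:weighted Q}, and since $1-\alpha>0$ the ordering is preserved; for $\alpha=1$ one takes the limit $\alpha\nearrow 1$ using \eqref{Db 1 def}, noting that the measured and maximal R\'enyi divergences and the barycentric ones all have the appropriate monotone limits, and for $\alpha=0$ one uses \eqref{Renyi bq limit0}. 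The main obstacle I anticipate is being careful about the restricted supports $S_+$ (and the domain $\S(S_+\hil)$ vs.\ $\S(\hil)$) when comparing the barycentric quantities with the measured and maximal ones, since $Q_P^{\meas}$ and $Q_P^{\max}$ are defined via POVMs and reverse tests without any such support restriction; one must check that the support constraint in \eqref{weighted Q}–\eqref{left divrad} does not spoil the inequalities, which it does not because for a probability measure $P$ the constraint $\tau^0\le S_+$ is automatically compatible with the finiteness region of $\sum_x P(x)D^{q_x}(\tau\|W_x)$ (cf.\ the Remark following \eqref{barycentric Qalpha def} and \eqref{barycentric Qalpha def support}).
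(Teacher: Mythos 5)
Your proposal is correct and follows essentially the same route as the paper: the middle inequalities come from Lemma \ref{lemma:trivorder} together with $D^{\meas}\le D^{q_x}\le D^{\max}$ (Remark \ref{rem:relentr maxmin}), and the outer ones from the fact that $Q_P^{\bary,\meas}$ and $Q_P^{\bary,\max}$ are CPTP-monotone quantum extensions of the classical $Q_P$ (Propositions \ref{prop:barycentric quantum Renyi} and \ref{prop:weighted PTP-mon}), hence sandwiched by the extremal measured/maximal extensions of Example \ref{ex:minmax multi}. Your explicit variational arguments and the limit-based treatment of $\alpha\in\{0,1\}$ are harmless elaborations of what the paper compresses into ``follow the same way.''
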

\begin{proof}
The second and the third inequalities in \eqref{trivial monotone ordering2} 
are immediate from \eqref{trivorder2} and \eqref{relentr maxmin}.
Since $D^{\meas}$ and $D^{\max}$ are monotone under CPTP maps, so are 
$-\log Q_P^{\bary,\meas}$ and $-\log Q_P^{\bary,\max}$ as well, 
according to 
Proposition \ref{prop:weighted PTP-mon}.
Hence, the first and the last inequalities in 
\eqref{trivial monotone ordering2} follow immediately from 
Example \ref{ex:minmax multi} and Proposition \ref{prop:barycentric quantum Renyi}.
The inequalities in \eqref{trivial monotone ordering} follow the same way.
\end{proof}

We have seen in Corollary \ref{cor:measured notbary} that the first inequality in \eqref{trivial monotone ordering} is not an equality, i.e., 
for any $\alpha\in(0,1)$, 
the smallest barycentric R\'enyi $\alpha$-divergence generated by
CPTP-monotone quantum relative entropies is above and not equal to the smallest 
CPTP-monotone quantum R\'enyi $\alpha$-divergence. We will show in Section \ref{sec:maximal Renyi} that the same 
happens ``at the top of the spectrum'', i.e., the last inequality in 
\eqref{trivial monotone ordering} is not an equality, either.

One might expect that the strict  ordering of relative entropies yields 
a strict ordering of the corresponding variational R\'enyi divergences. This, however, is not true in complete generality, as Example \ref{ex:equality} below shows.
On the other hand, 
$D_{\alpha}^{\bary,\qv}(\rho\|\sigma)<D_{\alpha}^{\bary,\tilde\qv}(\rho\|\sigma)$
might nevertheless hold if some extra conditions are imposed on 
the inputs, as we show in Sections \ref{sec:Um and smaller}--\ref{sec:maximal Renyi} below.
Here we make the following general observation:

\begin{lemma}\label{lemma:strict ineq for barycentric}
Let $P\in\P_f(\X)\setminus\{\egy_{\{x\}}:\,x\in\X\}$ and $D^{\qv},D^{\tilde\qv}$ be such that 
$D^{q_x}\le D^{\tilde q_x}$, $x\in\supp P$, and 
$D^{q_y}<D^{\tilde q_y}$ for some $y\in\supp P$. 
Let $W\in\B(\X,\hil)\pne$,
and assume that one of the following holds:
\begin{enumerate}
\item\label{strict ineq cond1}
There exists a $P$-weighted $D^{\tilde\qv}$-center for $W$
that does not commute with $W_y$.
\item\label{strict ineq cond2}
There exists a $P$-weighted $D^{\tilde\qv}$-center 
for $W$, and no state that commutes with $W_y$ is a 
$P$-weighted $D^{\qv}$-center for $W$.
\end{enumerate}
Then $-\log Q_P^{\bary,\qv}(W)<-\log Q_P^{\bary,\tilde\qv}(W)$.
\end{lemma}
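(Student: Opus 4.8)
The strategy is to exhibit a single test state that separates the two radii. Let $\omega^*$ be a $P$-weighted $D^{\tilde\qv}$-center for $W$, which exists by hypothesis in both cases \ref{strict ineq cond1} and \ref{strict ineq cond2}; that is,
\begin{align*}
-\log Q_P^{\bary,\tilde\qv}(W)=R_{D^{\tilde\qv},\bal}(W,P)=\sum_{x\in\X}P(x)D^{\tilde q_x}(\omega^*\|W_x).
\end{align*}
First I would write down the obvious chain: by the definition of the left radius as an infimum over states,
\begin{align*}
-\log Q_P^{\bary,\qv}(W)=R_{D^{\qv},\bal}(W,P)
\le \sum_{x\in\X}P(x)D^{q_x}(\omega^*\|W_x)
\le \sum_{x\in\X}P(x)D^{\tilde q_x}(\omega^*\|W_x)
=-\log Q_P^{\bary,\tilde\qv}(W),
\end{align*}
where the second inequality uses $D^{q_x}\le D^{\tilde q_x}$ termwise together with $P(x)\ge0$ (here $P\in\P_f(\X)$). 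So the content is entirely in upgrading one of these two inequalities to a strict one.

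Second, I would split on the two cases. Under \ref{strict ineq cond1}, $\omega^*$ does not commute with $W_y$, so $\omega^*$ and $W_y$ have $\omega^{*0}\le W_y^0$ only if finiteness holds, but in any case the strict ordering $D^{q_y}<D^{\tilde q_y}$ from Definition \ref{def:divorder2} applies precisely because $\omega^*$ and $W_y$ fail to commute (after possibly noting $\omega^{*0}\le S_+\le W_y^0$, so the support hypothesis of Definition \ref{def:divorder2} is met, $W_y$ playing the role of $\sigma$ and $\omega^*$ of $\rho$); hence $P(y)D^{q_y}(\omega^*\|W_y)<P(y)D^{\tilde q_y}(\omega^*\|W_y)$ strictly, and the second inequality in the chain is strict, giving the claim. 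Under \ref{strict ineq cond2}, I argue instead that the first inequality is strict: if it were an equality then $\omega^*$ would itself be a $P$-weighted $D^{\qv}$-center for $W$ (since it already achieves the value $R_{D^{\tilde\qv},\bal}(W,P)\ge R_{D^{\qv},\bal}(W,P)$ as an upper bound for the $D^{\qv}$-radius, and equality forces it to be a minimizer of $\sum_x P(x)D^{q_x}(\cdot\|W_x)$ too); but by the hypothesis of \ref{strict ineq cond2}, if $\omega^*$ does not commute with $W_y$ we are back in case \ref{strict ineq cond1}, and if it does commute with $W_y$ then it cannot be a $P$-weighted $D^{\qv}$-center, a contradiction. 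Either way strictness follows.

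I anticipate the main subtlety is the bookkeeping around supports and finiteness: one must make sure that the quantities $D^{q_x}(\omega^*\|W_x)$ and $D^{\tilde q_x}(\omega^*\|W_x)$ are genuinely finite (equivalently $\omega^{*0}\le W_x^0$) for every $x\in\supp P$, so that the comparison $P(x)D^{q_x}(\omega^*\|W_x)<P(x)D^{\tilde q_x}(\omega^*\|W_x)$ is not a vacuous ``$+\infty<+\infty$''. This is where the support condition \eqref{supp const} and the restriction $\omega^*\in\S(S_+\hil)$ in Definition \ref{defin:weighted Q} are used: a center necessarily has finite $P$-weighted radius (otherwise $Q_P^{\bary,\tilde\qv}(W)=0$ and the statement is about $+\infty$ on the right, which by \eqref{trivorder2} forces the left side to be $+\infty$ too, making the inequality non-strict — so implicitly the lemma's hypotheses, via ``there exists a center'', should be read as guaranteeing finiteness, and I would make this explicit). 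Once finiteness of each term is secured, the strict ordering from Definition \ref{def:divorder2} plugs in cleanly and the argument closes.
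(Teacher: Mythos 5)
Your proof is correct and takes essentially the same route as the paper: evaluate the radius chain at a $P$-weighted $D^{\tilde\qv}$-center $\omega^*$, get strictness at the $y$-term from the strict ordering of Definition \ref{def:divorder2} (the support hypothesis holds since $\omega^*\in\S(S_+\hil)$ with $S_+\le W_y^0$, and $P(y)>0$ because $y\in\supp P$ and $P\in\P_f(\X)$), and in case \ref{strict ineq cond2} reduce to case \ref{strict ineq cond1} when $\omega^*$ fails to commute with $W_y$ while otherwise using that a commuting state cannot minimize the $D^{\qv}$-objective, which forces strictness at the infimum step. Your contradiction packaging of case \ref{strict ineq cond2} and the explicit finiteness remark (existence of a center gives $S_+\ne 0$, so all terms are finite by \eqref{supp const}) are just reorganizations of the paper's own steps.
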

\begin{proof}
Assume \ref{strict ineq cond1}, and let $\omega$ be 
a $P$-weighted $D^{\tilde\qv}$-center for $W$
that does not commute with $W_y$. Then 
\begin{align*}
-\log Q_P^{\bary,\tilde\qv}(W)
&=
\sum_{x\in\X}P(x)D^{\tilde q_x}(\omega\|W_x)\\
&=
\underbrace{D^{\tilde q_y}(\omega\|W_y)}_{> D^{q_y}(\omega\|W_y)}+
\sum_{x\in\X\setminus\{y\}}P(x)\underbrace{D^{\tilde q_x}(\omega\|W_x)}_{\ge D^{q_x}(\omega\|W_x)}\\
&>
\sum_{x\in\X}P(x)D^{q_x}(\omega\|W_x)\\
&\ge
-\log Q_P^{\bary,\qv}(W),
\end{align*}
where the first equality and the last inequality are by definition,
and the strict inequality follows from the assumption that 
$\omega W_y\ne W_y\omega$ and that $D^{q_y}<D^{\tilde q_y}$.

Assume now \ref{strict ineq cond2}, and let $\tilde\omega$ be a  
$P$-weighted $D^{\tilde\qv}$-center for $W$. If 
$\tilde\omega$ does not commute with $W_y$ then 
$-\log Q_P^{\bary,\qv}(W)<-\log Q_P^{\bary,\tilde\qv}(W)$ by the previous point. 
If $\tilde\omega$ commutes with $W_y$ then 
it cannot be a
$P$-weighted $D^{\qv}$-center for $W$ by assumption, and hence there exists an 
$\omega\in\S(S_+\hil)$ such that 
\begin{align*}
\sum_{x\in\X}P(x)D^{q_x}(\tilde\omega\|W_x)
>
\sum_{x\in\X}P(x)D^{q_x}(\omega\|W_x).
\end{align*}
Thus,
\begin{align*}
-\log Q_P^{\bary,\tilde\qv}(W)
&=
\sum_{x\in\X}P(x)D^{\tilde q_x}(\tilde\omega\|W_x)\\
&\ge
\sum_{x\in\X}P(x)D^{q_x}(\tilde\omega\|W_x)\\
&>
\sum_{x\in\X}P(x)D^{q_x}(\omega\|W_x)\\
&\ge
-\log Q_P^{\bary,\qv}(W),
\end{align*}
proving the assertion.
\end{proof}

\begin{rem}
The condition in \ref{strict ineq cond2} of 
Lemma \ref{lemma:strict ineq for barycentric}
that there exists a $P$-weighted $D^{\tilde\qv}$-center 
for $W$ is very mild; indeed, it is satisfied whenever 
for every $x\in\supp P$, $D^{q_x}$ is lower semi-continuous
in its first variable.
\end{rem}

\subsection{Umegaki relative entropy and smaller/larger relative entropies}
\label{sec:Um and smaller}

\begin{prop}\label{prop:Umegaki and smaller}
Let $D^{\qn},D^{\qo}\le \DU$, and assume that 
at least one of them is strictly smaller than $\DU$. 
Then for any two non-commuting invertible positive operators
$\rho,\sigma\in\B(\hil)\pp$, 
\begin{align}\label{Um strict inequality}
D_{\alpha}^{\bary,\qv}(\rho\|\sigma)<
D_{\alpha}^{\bary,\um}(\rho\|\sigma)\ds (=D_{\alpha,+\infty}(\rho\|\sigma)),
\ds\ds\ds \alpha\in(0,1).
\end{align}
In particular, if $D^{\meas}\le D^{\qn},D^{\qo}\le \DU$ then 
for any two non-commuting invertible positive operators
\begin{align}\label{Um strict inequality2}
D_{\alpha}^{\bary,\meas}(\rho\|\sigma)\le\left\{
\begin{array}{ll}
D_{\alpha}^{\bary,(\meas,\qo)}(\rho\|\sigma)\\
D_{\alpha}^{\bary,(\qn,\meas)}(\rho\|\sigma)
\end{array}
\right\}
<D_{\alpha}^{\bary,\Um}(\rho\|\sigma),\ds\ds\ds \alpha\in(0,1).
\end{align}
\end{prop}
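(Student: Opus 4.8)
The strategy is to reduce the statement to an application of Lemma \ref{lemma:strict ineq for barycentric}, using the known identity $D_{\alpha}^{\bary,\um}=D_{\alpha,+\infty}$ from Remark \ref{rem:Umebary=a-infty} (equivalently \cite[Theorem 3.6]{MO-cqconv}) to identify the right-hand side of \eqref{Um strict inequality}. The key point is that for $\alpha\in(0,1)$ the quantity $D_{\alpha}^{\bary,\qv}(\rho\|\sigma)$ is, up to the normalization constant $\frac{1}{1-\alpha}$ and additive terms depending only on traces, the $P$-weighted $D^{\qv}$-radius of $(\rho,\sigma)$ with $P=(\alpha,1-\alpha)$; see \eqref{Renyi bq2}. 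Since $\DU$ is lower semi-continuous in its first argument (Remark \ref{rem:relentr lsc}), the infimum defining $R_{\D^{\um},\bal}((\rho,\sigma),P)$ is attained, so there exists an $\alpha$-weighted $\DU$-center $\omega$ for $(\rho,\sigma)$; moreover, by Proposition \ref{prop:Um Hellinger} (with $\X=\{0,1\}$, $W_0=\rho$, $W_1=\sigma$, $S_+=\rho^0\wedge\sigma^0=I$ since $\rho,\sigma$ are invertible), this center is the explicit operator $\omega^{\um}_{\alpha}(\rho\|\sigma)=\frac{e^{\alpha\logn\rho+(1-\alpha)\logn\sigma}}{\Tr e^{\alpha\logn\rho+(1-\alpha)\logn\sigma}}$.

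First I would verify condition \ref{strict ineq cond1} of Lemma \ref{lemma:strict ineq for barycentric}: the $\alpha$-weighted $\DU$-center $\omega^{\um}_{\alpha}(\rho\|\sigma)$ does not commute with $\rho$ (and does not commute with $\sigma$). Indeed, $\omega^{\um}_{\alpha}$ is a positive multiple of the exponential of $\alpha\log\rho+(1-\alpha)\log\sigma$; if it commuted with $\rho$, then $\alpha\log\rho+(1-\alpha)\log\sigma$ would commute with $\log\rho$, hence with $\rho$, which would force $\log\sigma$ to commute with $\rho$, i.e.\ $\rho\sigma=\sigma\rho$, contradicting the hypothesis. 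Symmetrically it does not commute with $\sigma$. Now take $D^{\tilde q_0}=D^{\tilde q_1}=\DU$ and let $(D^{\qn},D^{\qo})$ be the given pair; the hypotheses $D^{\qn},D^{\qo}\le\DU$ and at least one strict inequality (in the sense of Definition \ref{def:divorder2}, $D^{q_y}<\DU$ for $y=0$ or $y=1$) are exactly what Lemma \ref{lemma:strict ineq for barycentric} requires, with $P=(\alpha,1-\alpha)$, $\alpha\in(0,1)$. Since $\omega^{\um}_{\alpha}(\rho\|\sigma)$ is a $P$-weighted $D^{\tilde\qv}$-center for $W=(\rho,\sigma)$ not commuting with $W_y$, condition \ref{strict ineq cond1} holds, and the lemma gives $-\log Q_P^{\bary,\qv}(W)<-\log Q_P^{\bary,\um}(W)$, i.e.\ $R_{\D^{\qv},\bal}(W,P)<R_{\D^{\um},\bal}(W,P)$. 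Translating back through \eqref{Renyi bq2} (the normalization $\frac{1}{1-\alpha}>0$ and the trace terms are common to both sides) yields $D_{\alpha}^{\bary,\qv}(\rho\|\sigma)<D_{\alpha}^{\bary,\um}(\rho\|\sigma)=D_{\alpha,+\infty}(\rho\|\sigma)$, which is \eqref{Um strict inequality}.

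For the second assertion \eqref{Um strict inequality2}: when $D^{\meas}\le D^{\qn},D^{\qo}\le\DU$, the weak inequalities $D_{\alpha}^{\bary,\meas}\le D_{\alpha}^{\bary,(\meas,\qo)}$ and $D_{\alpha}^{\bary,\meas}\le D_{\alpha}^{\bary,(\qn,\meas)}$ on non-commuting invertible pairs follow from Lemma \ref{lemma:trivorder} (monotonicity of $-\log Q_P^{\bary,\qv}$ in the generating relative entropies, for $\alpha\in[0,1]$). The strict inequality on the right is the case of \eqref{Um strict inequality} applied to the pairs $(\meas,\qo)$ and $(\qn,\meas)$ respectively: in each case one of the two generating relative entropies is $D^{\meas}$, and $D^{\meas}<\DU$ by \eqref{relentr strict inequalities}, so the hypothesis ``at least one is strictly smaller than $\DU$'' is satisfied, giving $D_{\alpha}^{\bary,(\meas,\qo)}(\rho\|\sigma)<D_{\alpha}^{\bary,\Um}(\rho\|\sigma)$ and likewise for $(\qn,\meas)$.

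\textbf{Main obstacle.} The one genuinely delicate point is ensuring a $P$-weighted $D^{\tilde\qv}$-center actually exists and is the explicit log-Euclidean state, so that the non-commutativity argument can be run; this is where invertibility of $\rho,\sigma$ is used (it makes $S_+=I\ne 0$ so Proposition \ref{prop:Um Hellinger} applies and pins down the center uniquely). If one only knew lower semi-continuity without the explicit formula, one would still get \emph{some} center, and one would then need to argue that \emph{no} center commutes with $W_y$ --- i.e.\ invoke condition \ref{strict ineq cond2} instead --- which requires knowing the center is unique; Proposition \ref{prop:Um Hellinger} furnishes exactly this uniqueness, so the explicit route via condition \ref{strict ineq cond1} is the clean one. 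Everything else is bookkeeping with the normalization in \eqref{Renyi bq2}.
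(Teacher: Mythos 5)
Your proposal is correct and follows essentially the same route as the paper's own proof: identify the explicit $\alpha$-weighted $\DU$-center from Proposition \ref{prop:Um Hellinger} (equivalently \eqref{Umegaki Hellinger}), show it cannot commute with $\rho$ or $\sigma$ unless $\rho\sigma=\sigma\rho$, and then apply Lemma \ref{lemma:strict ineq for barycentric} via condition \ref{strict ineq cond1}, with the second assertion obtained from Lemma \ref{lemma:trivorder} and $D^{\meas}<\DU$ from \eqref{relentr strict inequalities}. No gaps.
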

\begin{proof}
For the equality in \eqref{Um strict inequality} see Remark \ref{rem:Umebary=a-infty}.
Recall the form of the $\alpha$-weighted $\DU$-center
$\omega_{\alpha}:=\omega_{\alpha}^{\um}(\rho\|\sigma)$
in \eqref{Umegaki Hellinger}.
It is easy to see that if it 
commutes with $\rho$ or $\sigma$ then $\rho$ and $\sigma$ have to commute with each other. Indeed, 
assume that $\omega_{\alpha}$ commutes with $\rho$; then $\rho$ also commutes with any function of 
$\omega_{\alpha}$, in particular, with $(1-\alpha)\log\rho+\alpha\log\sigma$, and hence it also commutes with $\sigma$. The same argument works if $\omega_{\alpha}$ commutes with $\sigma$. 
The strict inequality in \eqref{Um strict inequality} follows from this by Lemma \ref{lemma:strict ineq for barycentric} (using condition \ref{strict ineq cond1}).
The strict inequality in \eqref{Um strict inequality2} follows from 
\eqref{Um strict inequality} and
\eqref{relentr strict inequalities}.
Finally, the first inequality in \eqref{Um strict inequality2} is obvious from 
Lemma \ref{lemma:trivorder} and \eqref{relentr strict inequalities}.
\end{proof}

\begin{rem}\label{rem:Umegaki and smaller}
Note that due to the assumption that $\rho,\sigma>0$
and that in this case $\omega_{\alpha}^{\Um}(\rho\|\sigma)>0$,
the strict inequality condition
$D^{\qn}<\DU$ or $D^{\qo}<\DU$ in Proposition \ref{prop:Umegaki and smaller} can be weakened 
to the assumption that for any strictly positive $\omega\in\S(\hil)$ not commuting with 
$\rho$ and $\sigma$, $D^{\qn}(\omega\|\rho)<\DU(\omega\|\rho)$
or $D^{\qo}(\omega\|\sigma)<\DU(\omega\|\sigma)$ holds.
In particular, it is sufficient to assume that 
$D^{\qn}(\omega_1\|\omega_2)<\DU(\omega_1\|\omega_2)$ for every 
non-commuting invertible $\omega_1,\omega_2$, or
$D^{\qo}(\omega_1\|\omega_2)<\DU(\omega_1\|\omega_2)$ for every 
non-commuting invertible $\omega_1,\omega_2$.
\end{rem}

\begin{prop}\label{prop:Umegaki and larger}
Let $D^{\qn},D^{\qo}\ge \DU$, and assume that 
at least one of them is strictly larger than $\DU$. 
Let $\rho,\sigma\in\B(\hil)\pp$
be non-commuting invertible positive operators 
and $\alpha\in(0,1)$ be such that there exists an $\alpha$-weighted 
$D^{\qv}$-center for $(\rho,\sigma)$. Then 
\begin{align}\label{Um strict inequality3}
D_{\alpha}^{\bary,\qv}(\rho\|\sigma)>
D_{\alpha}^{\bary,\um}(\rho\|\sigma)\ds (=D_{\alpha,+\infty}(\rho\|\sigma)).
\end{align}
\end{prop}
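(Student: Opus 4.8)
\textbf{Proof strategy for Proposition \ref{prop:Umegaki and larger}.}
The plan is to mirror the argument used for Proposition \ref{prop:Umegaki and smaller}, but with the roles of the two orderings reversed, and with the key extra input being the explicit form of the Umegaki center from Proposition \ref{prop:Um Hellinger}. First I would record that since $\rho,\sigma>0$ we have $\rho^0\wedge\sigma^0=I$, so the barycentric quantities are defined via the unrestricted infimum over all states, and that $D_{\alpha}^{\bary,\Um}(\rho\|\sigma)=D_{\alpha,+\infty}(\rho\|\sigma)$ by Remark \ref{rem:Umebary=a-infty}. Next I would invoke Lemma \ref{lemma:strict ineq for barycentric} applied in the two-variable case $\X=\{0,1\}$, $P(0)=\alpha$, with $D^{\tilde q_x}\ge D^{q_x}$ replaced here by the pair $(\DU,\DU)\le(D^{\qn},D^{\qo})$ — i.e.\ in the notation of that lemma the ``smaller'' collection $\qv$ is $(\Um,\Um)$ and the ``larger'' one $\tilde\qv$ is $(\qn,\qo)$, and by hypothesis at least one coordinate is strict, say $D^{\qo}>\DU$ (the other case is symmetric, using the duality $\tilde D^{\bary,(\qn,\qo)}_\alpha=D^{\bary,(\qo,\qn)}_\alpha$ from Remark \ref{rem:dual} if one wants to reduce both cases to one).

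To apply Lemma \ref{lemma:strict ineq for barycentric} I need to verify one of its two sufficient conditions with $y$ the index of the strict coordinate. Condition \ref{strict ineq cond1} asks for a $P$-weighted $D^{\tilde\qv}$-center for $W$ not commuting with $W_y$; but here $\tilde\qv=(\qn,\qo)$, so a $D^{\tilde\qv}$-center is exactly an $\alpha$-weighted $D^{\qv}$-center for $(\rho,\sigma)$ in the sense of the present proposition, whose existence is the standing hypothesis. So the crux is to show that no such center $\omega$ can commute with $W_y$ (which is $\rho$ or $\sigma$). The clean way is to observe: if $\omega$ commutes with $\sigma$, then — by the extremality/stationarity characterization of the center together with the explicit Umegaki case — one can force $\rho$ and $\sigma$ to commute, contradicting the hypothesis that they do not. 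Concretely, one route is condition \ref{strict ineq cond2}: if the hypothetical $D^{\qv}$-center $\tilde\omega$ commutes with $W_y$, I would show it cannot also be an $\alpha$-weighted $\DU$-center, because the unique $\DU$-center is $\omega_\alpha^{\Um}(\rho\|\sigma)=$ normalized $e^{\alpha\log\rho+(1-\alpha)\log\sigma}$ by Proposition \ref{prop:Um Hellinger} (or $\what G_{\alpha,+\infty}$ by Remark \ref{rem:Um geomeric mean}), and — exactly as in the proof of Proposition \ref{prop:Umegaki and smaller} — if $\omega_\alpha^{\Um}$ commuted with $\rho$ it would commute with $\alpha\log\rho+(1-\alpha)\log\sigma$ hence with $\sigma$, so $\rho\sigma=\sigma\rho$, contradiction; thus $\omega_\alpha^{\Um}$ commutes with neither $\rho$ nor $\sigma$. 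Therefore a state commuting with $W_y$ is never a $\DU$-center, which is precisely condition \ref{strict ineq cond2} with the ``smaller'' collection taken to be $(\Um,\Um)$.

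Putting these together, Lemma \ref{lemma:strict ineq for barycentric} gives $-\log Q_P^{\bary,\Um}(W)<-\log Q_P^{\bary,\qv}(W)$, and dividing by $\prod_x(1-P(x))=(1-\alpha)\alpha\cdot(\text{positive})$ — more precisely, using the definition $D_{\alpha}^{\bary,\qv}(\rho\|\sigma)=\frac{1}{\alpha-1}\log Q_{\alpha}^{\bary,\qv}(\rho\|\sigma)-\frac{1}{\alpha-1}\log\Tr\rho$ and the analogous identity \eqref{barycentric Dalpha def2} together with $\psi_1^{\bary,\qv}=\psi_1^{\bary,\Um}=\log\Tr\rho$ — translates the $Q$-inequality into $D_{\alpha}^{\bary,\qv}(\rho\|\sigma)>D_{\alpha}^{\bary,\Um}(\rho\|\sigma)$ for $\alpha\in(0,1)$, since for $\alpha<1$ the map $t\mapsto\frac{1}{1-\alpha}\log t$ preserves strict inequalities. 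The main obstacle I anticipate is the bookkeeping in invoking Lemma \ref{lemma:strict ineq for barycentric} with $\DU$ playing the role of the \emph{smaller} divergence (the lemma is phrased with a generic pair $D^{\qv}\le D^{\tilde\qv}$, so one must be careful to instantiate $\qv\leftrightarrow(\Um,\Um)$, $\tilde\qv\leftrightarrow(\qn,\qo)$ and check the existence hypotheses land on the correct side); once that is set up correctly, the commutativity dichotomy for the explicit Umegaki center is routine and already appears verbatim in the proof of Proposition \ref{prop:Umegaki and smaller}.
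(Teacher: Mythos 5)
Your proof is correct and follows essentially the same route as the paper: the paper's own proof invokes Lemma \ref{lemma:strict ineq for barycentric} via condition \ref{strict ineq cond2}, instantiated with the smaller collection $(\Um,\Um)$ and the larger collection $(\qn,\qo)$, together with the fact (established in the proof of Proposition \ref{prop:Umegaki and smaller} via Proposition \ref{prop:Um Hellinger}) that the unique Umegaki center $\omega_{\alpha}^{\Um}(\rho\|\sigma)$ commutes with neither $\rho$ nor $\sigma$, exactly as you argue, and the passage from the $Q$-inequality to \eqref{Um strict inequality3} is the same normalization step for $\alpha\in(0,1)$. Your side remarks (the detour through condition \ref{strict ineq cond1} and the duality reduction) are unnecessary, since the non-commutation of $\omega_{\alpha}^{\Um}$ handles whichever coordinate is strict, but they do not affect correctness.
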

\begin{proof}
Follows immediately from 
Lemma \ref{lemma:strict ineq for barycentric} (using condition \ref{strict ineq cond2})
and the fact that $\omega_{\alpha}^{\Um}$ commutes with neither $\rho$ nor $\sigma$, as we have seen in the proof of Proposition \ref{prop:Umegaki and smaller}.
\end{proof}

\begin{cor}\label{cor:Umegaki larger}
Let $\DU\le D^{\qn},D^{\qo}\le D^{\max}$, and let $\rho,\sigma\in\B(\hil)\pp$ be non-commuting, and $\alpha\in(0,1)$. If there exist an
$\alpha$-weighted $D^{(\qn,\max)}$-center and 
an $\alpha$-weighted $D^{(\max,\qo)}$-center for $(\rho,\sigma)$ then 
\begin{align}\label{Um strict inequality4}
D_{\alpha}^{\bary,\max}(\rho\|\sigma)\ge\left\{
\begin{array}{ll}
D_{\alpha}^{\bary,(\qn,\max)}(\rho\|\sigma)\\
D_{\alpha}^{\bary,(\max,\qo)}(\rho\|\sigma)
\end{array}
\right\}
>D_{\alpha}^{\bary,\Um}(\rho\|\sigma).
\end{align}
\end{cor}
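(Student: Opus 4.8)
\textbf{Proof proposal for Corollary \ref{cor:Umegaki larger}.}
The plan is to derive the statement as a direct application of Proposition \ref{prop:Umegaki and larger}, combined with the trivial monotonicity of barycentric divergences in their generating relative entropies (Lemma \ref{lemma:trivorder}) and the known equality $D_{\alpha}^{\bary,\Um}=D_{\alpha,+\infty}$ from Remark \ref{rem:Umebary=a-infty}. First I would observe that since $\DU\le D^{\qn}\le D^{\max}$ and $\DU\le D^{\qo}\le D^{\max}$, Lemma \ref{lemma:trivorder} gives immediately, for $\alpha\in(0,1)$,
\begin{align*}
D_{\alpha}^{\bary,(\qn,\max)}(\rho\|\sigma)\le D_{\alpha}^{\bary,\max}(\rho\|\sigma),
\ds\ds\ds
D_{\alpha}^{\bary,(\max,\qo)}(\rho\|\sigma)\le D_{\alpha}^{\bary,\max}(\rho\|\sigma),
\end{align*}
which yields the two ``$\ge$'' relations in \eqref{Um strict inequality4} (reading them from right to left).

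For the strict inequalities, I would apply Proposition \ref{prop:Umegaki and larger} twice: once with the pair of generating relative entropies $(D^{\qn},D^{\max})$, and once with $(\max,D^{\qo})$. In the first application, the hypotheses of Proposition \ref{prop:Umegaki and larger} are met because $D^{\qn}\ge\DU$, $D^{\max}\ge\DU$, and at least one of them is \emph{strictly} larger than $\DU$ --- indeed $D^{\max}>\DU$ by \eqref{relentr strict inequalities} (Definition \ref{def:divorder2} applies since $\rho,\sigma$ are invertible and non-commuting, so in particular $\rho^0\le\sigma^0$ and $\rho\sigma\ne\sigma\rho$). The existence hypothesis of Proposition \ref{prop:Umegaki and larger} --- that there is an $\alpha$-weighted $D^{(\qn,\max)}$-center for $(\rho,\sigma)$ --- is precisely one of the standing assumptions of the corollary. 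Hence $D_{\alpha}^{\bary,(\qn,\max)}(\rho\|\sigma)>D_{\alpha}^{\bary,\Um}(\rho\|\sigma)=D_{\alpha,+\infty}(\rho\|\sigma)$. The second application, with the pair $(\max,D^{\qo})$, is symmetric: again $D^{\max}>\DU$ and $D^{\qo}\ge\DU$, and the existence of an $\alpha$-weighted $D^{(\max,\qo)}$-center is assumed; this gives $D_{\alpha}^{\bary,(\max,\qo)}(\rho\|\sigma)>D_{\alpha}^{\bary,\Um}(\rho\|\sigma)$.

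Putting these together yields \eqref{Um strict inequality4} in full. I do not anticipate a genuine obstacle here, since every ingredient is already available: the only point requiring a little care is bookkeeping the two roles of $D^{\max}$ --- as the ``larger'' entry once in the first coordinate (pair $(\max,D^{\qo})$) and once in the second coordinate (pair $(D^{\qn},\max)$) --- and checking that in each case the strictness $D^{\max}>\DU$ (rather than strictness of $D^{\qn}$ or $D^{\qo}$) is what powers Proposition \ref{prop:Umegaki and larger}; this is exactly why no strictness assumption on $D^{\qn}$ or $D^{\qo}$ themselves is needed. Finally I would note in passing, for the reader, that the equality $D_{\alpha}^{\bary,\Um}=D_{\alpha,+\infty}$ makes \eqref{Um strict inequality4} a statement about how much the maximal and intermediate barycentric divergences exceed the log-Euclidean divergence, matching the parenthetical in \eqref{Um strict inequality3}.
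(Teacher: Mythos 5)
Your proposal is correct and follows essentially the same route as the paper: the non-strict inequalities come from Lemma \ref{lemma:trivorder} together with the hypothesis $D^{\qn},D^{\qo}\le D^{\max}$, and the strict inequalities come from applying Proposition \ref{prop:Umegaki and larger} to the pairs $(D^{\qn},D^{\max})$ and $(D^{\max},D^{\qo})$, with the strictness supplied by $\DU<D^{\max}$ from \eqref{relentr strict inequalities} and the existence of the relevant $\alpha$-weighted centers taken from the corollary's hypotheses. You merely spell out the two applications of the proposition explicitly, which the paper leaves implicit.
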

\begin{proof}
The first inequality in \eqref{Um strict inequality4} is obvious from 
Lemma \ref{lemma:trivorder} and \eqref{relentr strict inequalities}.
The strict inequality in \eqref{Um strict inequality4} follows from 
Proposition \ref{prop:Umegaki and larger} and
\eqref{relentr strict inequalities}.
\end{proof} 

\begin{cor}
For any two non-commuting invertible $(\rho,\sigma)$, 
\begin{align*}
D_{\alpha}^{\bary,\Um}(\rho\|\sigma)<
D_{\alpha}^{\bary,\max}(\rho\|\sigma),\ds\ds\ds\alpha\in(0,1).
\end{align*}
\end{cor}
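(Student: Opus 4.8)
The statement is the special case of Corollary~\ref{cor:Umegaki larger} (equivalently, of Proposition~\ref{prop:Umegaki and larger}) obtained by taking $D^{\qn}=D^{\qo}=D^{\max}$, but to get it for \emph{all} non-commuting invertible pairs without the extra hypothesis about existence of a weighted center, I would invoke Corollary~\ref{cor:notbary} instead. So the plan is to apply Corollary~\ref{cor:notbary} with $D_{\alpha}^q:=D_{\alpha}^{\bary,\Um}=D_{\alpha,+\infty}$ (using Remark~\ref{rem:Umebary=a-infty}) and with $D^{\tilde q_0}=D^{\tilde q_1}=D^{\max}$.

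\textbf{Key steps.} First, I would check that $D_{\alpha,+\infty}$ satisfies the hypotheses of Corollary~\ref{cor:notbary}: it is a quantum R\'enyi $\alpha$-divergence for $\alpha\in(0,1)$; it satisfies $D_{\alpha,+\infty}(\rho\|\sigma)=+\infty\iff\rho\perp\sigma$ (immediate from \eqref{D alpha infty}, since for $\alpha\in(0,1)$ the quantity is finite as soon as $P=\rho^0\wedge\sigma^0\ne0$, i.e.~$\rho\not\perp\sigma$); and it is jointly continuous in its arguments on $\B(\hil)\pne^2$ (this is standard for the log-Euclidean R\'enyi divergence on a fixed Hilbert space for $\alpha\in(0,1)$, and can be cited from \cite{AD,HP_GT} or verified directly from the formula). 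Second, $D^{\max}$ is a quantum relative entropy that is jointly lower semi-continuous in its arguments, by Remark~\ref{rem:relentr lsc} (and it is in the restrictive sense of Section~\ref{sec:barycentric}, by Example~\ref{ex:qrelative entropies}). Third, Corollary~\ref{cor:notbary} then gives: for any $\rho_0,\sigma_0$ with $\rho_0^0\wedge\sigma_0^0=0$ and $\rho_0\not\perp\sigma_0$, and for all invertible $\rho,\sigma$ in a neighbourhood, $D_{\alpha}^{\bary,\max}(\rho\|\sigma)>D_{\alpha,+\infty}(\rho\|\sigma)=D_{\alpha}^{\bary,\Um}(\rho\|\sigma)$. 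Such pairs $(\rho_0,\sigma_0)$ exist in every dimension $\ge 2$, hence $D_{\alpha}^{\bary,\Um}\ne D_{\alpha}^{\bary,\max}$ as divergences; combined with $D_{\alpha}^{\bary,\Um}\le D_{\alpha}^{\bary,\max}$ (Lemma~\ref{lemma:trivorder}, using $\DU\le D^{\max}$), this already gives strict inequality on a nonempty open set.

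However, the statement asks for the strict inequality to hold at \emph{every} non-commuting invertible pair, which Corollary~\ref{cor:notbary} alone does not deliver. For this I would instead argue directly via Proposition~\ref{prop:Umegaki and larger}: given non-commuting invertible $\rho,\sigma$ and $\alpha\in(0,1)$, I need an $\alpha$-weighted $D^{\max}$-center for $(\rho,\sigma)$ to exist, and then the conclusion $D_{\alpha}^{\bary,\max}(\rho\|\sigma)>D_{\alpha}^{\bary,\Um}(\rho\|\sigma)$ follows since $D^{\max}>\DU$ on non-commuting invertible pairs (this is \eqref{relentr strict inequalities}, i.e.~\cite[Theorem 4.3]{HiaiMosonyi2017}) and since (as shown in the proof of Proposition~\ref{prop:Umegaki and smaller}) the Umegaki center $\omega_\alpha^{\Um}(\rho\|\sigma)>0$ commutes with neither $\rho$ nor $\sigma$, so condition~\ref{strict ineq cond2} of Lemma~\ref{lemma:strict ineq for barycentric} applies. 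Existence of the $D^{\max}$-center: since $D^{\max}$ is lower semi-continuous in its first argument (Remark~\ref{rem:relentr lsc}) and, for invertible $\rho,\sigma$, the relevant infimum in \eqref{Renyi bq2} is over the compact set $\S(\hil)$, the infimum is attained. Thus the main obstacle is simply assembling these cited facts correctly — there is no genuinely new computation.

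\textbf{Proof.}
\begin{proof}
By Lemma~\ref{lemma:trivorder} and $\DU\le D^{\max}$ we have
$D_{\alpha}^{\bary,\Um}\le D_{\alpha}^{\bary,\max}$ for every $\alpha\in(0,1)$, so only
the strict inequality at non-commuting invertible pairs needs proof. Fix
$\alpha\in(0,1)$ and $\rho,\sigma\in\B(\hil)\pp$ with $\rho\sigma\ne\sigma\rho$.
Since $D^{\max}$ is lower semi-continuous in its first argument
(Remark~\ref{rem:relentr lsc}) and, by \eqref{Renyi bq2}, $D_{\alpha}^{\bary,\max}(\rho\|\sigma)$
is determined by an infimum of
$\omega\mapsto\frac{1}{1-\alpha}\{\alpha D^{\max}(\omega\|\rho/\Tr\rho)+(1-\alpha)D^{\max}(\omega\|\sigma)\}$
over the compact set $\S(\hil)=\S(\rho^0\hil)$ (as $\rho$ is invertible), this infimum
is attained; i.e.~an $\alpha$-weighted $D^{\max}$-center for $(\rho,\sigma)$ exists.
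Now apply Proposition~\ref{prop:Umegaki and larger} with $D^{\qn}=D^{\qo}=D^{\max}$:
since $D^{\max}\ge\DU$ and $D^{\max}>\DU$ on non-commuting invertible pairs
(by \eqref{relentr strict inequalities}), and since the $\alpha$-weighted
$\DU$-center $\omega_{\alpha}^{\Um}(\rho\|\sigma)$ commutes with neither $\rho$ nor
$\sigma$ (as shown in the proof of Proposition~\ref{prop:Umegaki and smaller}), all
hypotheses of Proposition~\ref{prop:Umegaki and larger} are met, and we conclude
\begin{align*}
D_{\alpha}^{\bary,\Um}(\rho\|\sigma)
<D_{\alpha}^{\bary,\max}(\rho\|\sigma).
\end{align*}
\end{proof}
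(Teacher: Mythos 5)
Your proof is correct and follows essentially the same route as the paper: the paper likewise establishes existence of the $\alpha$-weighted $D^{\max}$-center from lower semi-continuity of $D^{\max}$ (plus compactness of the state space) and then invokes Corollary~\ref{cor:Umegaki larger}, which rests on Proposition~\ref{prop:Umegaki and larger} — the result you apply directly with $D^{\qn}=D^{\qo}=D^{\max}$. The only difference is cosmetic (skipping the intermediate corollary and the extra remarks about Corollary~\ref{cor:notbary} in your plan), so no further comparison is needed.
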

\begin{proof}
Note that $D^{\max}$ is jointly lower semi-continuous in its arguments, and
hence for any $\rho,\sigma\in\B(\hil)\pne$ 
and any
$\alpha\in(0,1)$, there exists an $\alpha$-weighted $D^{\max}$-center for 
$(\rho,\sigma)$. Thus, the assertion follows from 
Corollary \ref{cor:Umegaki larger}.
\end{proof}

\subsection{Maximal relative entropy and a smaller relative entropy}

\begin{lemma}\label{lemma:max center noncomm}
Let $\rho,\sigma\in\B(\hil)\pne$ be such that 
$\rho^0\wedge\sigma^0\ne 0$, and 
$\rho$ and $\sigma$ do not have a common eigenvector.
Then for any $\alpha\in(0,1)$, there exists an 
$\alpha$-weighted $D^{\max}$-center for $(\rho,\sigma)$ 
such that it commutes with neither
$\rho$ nor $\sigma$. 
\end{lemma}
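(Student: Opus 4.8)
The plan is to exploit the explicit description of the maximal relative entropy via its optimal reverse test from Example \ref{ex:max Renyi}, together with the variational characterization of the $\alpha$-weighted $D^{\max}$-center. First I would reduce to the case where $\rho$ and $\sigma$ are both invertible. Indeed, restricting attention to $\hil':=(\rho^0\wedge\sigma^0)\hil$ is not enough, since $D^{\max}$ does not behave simply under such restriction when $\rho^0\not\le\sigma^0$; instead, by Corollary \ref{cor:smoothing} (in particular \eqref{Qalpha smoothing}) and Proposition \ref{prop:smoothing2}, the quantity $Q_{\alpha}^{\bary,(\max,\max)}(\rho\|\sigma)$ is the limit of $Q_{\alpha}^{\bary,(\max,\max)}(\rho+\ep I\|\sigma+\ep I)$, and the centers of the perturbed problem should converge (by a compactness argument on $\S(\hil)$) to a center of the original problem. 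The ``no common eigenvector'' hypothesis is stable under adding $\ep I$, so it suffices to produce, for invertible $\rho,\sigma$ with no common eigenvector, an $\alpha$-weighted $D^{\max}$-center commuting with neither one, and to argue that the limiting center inherits non-commutativity — this last inheritance is where I expect to have to be careful, since non-commutativity is not a closed condition.

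For the invertible case, I would use the formula for the optimal $\tau$. By Corollary \ref{cor:commuting Qalpha} the center is explicit in the commuting case; in general, one knows that the $\alpha$-weighted $D^{\max}$-center $\omega$ is characterized (via Lemma \ref{lemma:psi rep} and Remark \ref{rem:mean divcenter relations}) as the unique minimizer of $\omega\mapsto \alpha D^{\max}(\omega\|\rho)+(1-\alpha)D^{\max}(\omega\|\sigma)$, which exists and is unique because $D^{\max}$ is jointly lower semi-continuous (Remark \ref{rem:relentr lsc}) and strictly convex in its first argument on the relevant set. The key computation is: if $\omega$ commutes with $\rho$, derive from the stationarity condition of this convex optimization that $\omega$ must also commute with $\sigma$, hence $\rho$ and $\sigma$ share an eigenbasis — contradicting the hypothesis. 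Concretely, using $D^{\max}(\omega\|\rho)=\Tr\omega\nlog(\omega^{1/2}\rho\inv\omega^{1/2})$ from \eqref{BS def2} and the Fréchet-derivative formula \eqref{opfunction derivative}, the first-order optimality condition reads, schematically, that a certain operator built from $\log(\omega^{1/2}\rho\inv\omega^{1/2})$ and $\log(\omega^{1/2}\sigma\inv\omega^{1/2})$ is a scalar multiple of $\omega^0$ on $\ran\omega$; if $\omega$ commutes with $\rho$ then the first term is a function of $\omega$ and $\rho$ that commutes with $\omega$, forcing the $\sigma$-term to commute with $\omega$ as well, whence $\omega$ commutes with $\sigma$.

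The main obstacle I anticipate is twofold. First, writing down and manipulating the first-order optimality condition for the $D^{\max}$-center rigorously: $D^{\max}(\omega\|\sigma)$ is not differentiable everywhere (only where supports match), and one must handle the constraint $\Tr\omega=1$ and positivity via Lagrange multipliers or a perturbation argument $\omega\mapsto\omega+t(\eta-\omega)$, $\eta\in\S(\hil)$. Second, the approximation step: showing that a sequence of centers $\omega_{\ep}$ for $(\rho+\ep I,\sigma+\ep I)$ — each commuting with neither perturbed operator — has a subsequential limit $\omega$ that is still a center for $(\rho,\sigma)$ commuting with neither $\rho$ nor $\sigma$. The ``center'' part follows from lower semi-continuity and the Berge-type stability already implicit in Corollary \ref{cor:smoothing}; the ``non-commuting'' part I would handle not by a limiting argument but by applying the invertible-case dichotomy directly to $(\rho,\sigma)$ after the reduction, i.e., restricting to $\hil'$ where both are invertible and the hypothesis of no common eigenvector still holds, and observing that on $\hil'$ the center lies in $\B(\hil')\pp$ so the stationarity argument applies verbatim. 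This keeps the whole argument inside the invertible regime and avoids the fragile limit of non-commutativity.
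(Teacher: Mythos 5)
There is a genuine gap at the core of your argument: the first-order condition for $\omega\mapsto\alpha D^{\max}(\omega\|\rho)+(1-\alpha)D^{\max}(\omega\|\sigma)$ is not of the form you write. For invertible $\sigma$ one has $D^{\max}(\omega\|\sigma)=\Tr\sigma\,\eta\bz\sigma^{-1/2}\omega\sigma^{-1/2}\jz$ with $\eta(x)=x\log x$, so by \eqref{opfunction derivative} its derivative in $\omega$ in a direction $X$ is $\Tr X\,\sigma^{-1/2}(D\eta)[\sigma^{-1/2}\omega\sigma^{-1/2}](\sigma)\sigma^{-1/2}$, a divided-difference expression depending jointly on $\omega$ and $\sigma$ (compare the computations in Appendix \ref{sec:proof of lemma}); it is \emph{not} an operator built from $\log\bz\omega^{1/2}\sigma^{-1}\omega^{1/2}\jz$, and it simplifies to a function of $\omega\sigma^{-1}$ only when $[\omega,\sigma]=0$. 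Hence your key inference --- if $\omega$ commutes with $\rho$ then stationarity forces the $\sigma$-term to commute with $\omega$, whence $[\omega,\sigma]=0$ --- is unsubstantiated: even granting an equality-type stationarity condition (which itself requires knowing the minimizer is of full rank, a point you leave open, as is the uniqueness you assert via an unproved strict convexity of $D^{\max}$ in its first argument), commutation of that divided-difference operator with $\omega$ does not visibly yield $[\omega,\sigma]=0$. The paper never faces this difficulty: it first uses DPI under the pinching onto the minimal projections $P_1,\ldots,P_r$ of $\{\rho\}'\cap\{\sigma\}'$ to obtain a block-diagonal center $\omega_{\alpha}$; assuming (for contradiction) commutation, minimality of the blocks forces $P_i\omega_{\alpha}P_i=cP_i$ with $c>0$ on some block, where finiteness of the objective makes the restrictions $\tilde\rho,\tilde\sigma$ invertible; and the derivative of $t\mapsto D^{\max}(cI+tX\|\tilde\rho)$ is computed only at this very special point, where everything commutes and the condition collapses to $\frac{\alpha}{1-\alpha}\log\tilde\rho+\log\tilde\sigma=\kappa I$, producing a common eigenvector and the contradiction.

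Your preliminary reduction is also flawed. Compressing to $\hil'=(\rho^0\wedge\sigma^0)\hil$ changes the functional: for $\omega$ supported in $P\hil$ with $P=\rho^0\wedge\sigma^0$, $D^{\max}(\omega\|\rho)$ involves $P\rho^{-1}P$, not $(P\rho P)^{-1}$, so centers for $(P\rho P,P\sigma P)$ need not be centers for $(\rho,\sigma)$ (the correct compression would rather be the Schur-complement-type operator $\rho_{P,\ac}$), and the no-common-eigenvector hypothesis is neither inherited by nor reflected from such compressions; the $\ep$-perturbation route suffers from the non-closedness of non-commutativity that you yourself flag. The paper needs no invertibility reduction at all: it works with general $\rho,\sigma$ satisfying $\rho^0\wedge\sigma^0\ne 0$ and localizes invertibility to the single relevant block via the finiteness of the objective. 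If your stationarity argument could be repaired it would in fact establish a stronger dichotomy (commuting with one of the operators forces commuting with the other) than the paper's block argument uses, but as it stands the proposal does not prove the lemma.
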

\begin{proof}
Let $\A:=\{\rho\}'\cap\{\sigma\}'$ be the ${}^*$-subalgebra
of operators commuting with both $\rho$ and $\sigma$, and let 
$P_1,\ldots,P_r$ be a sequence of minimal projections in $\A$ summing to $I$
(in particular, $P_i\perp P_j$ for $i\ne j$).
Let $\map(\valt)=\sum_{i=1}^r P_i(\valt)P_i$ be the corresponding pinching operation.
By the lower semi-continuity of $D^{\max}$ in its first argument, there exists at least one state $\omega\in\S(\rho^0\hil)$ attaining the infimum in \eqref{Renyi bq2}
with $D^{\qn}=D^{\qo}=D^{\max}$. 
Moreover, for any $\omega\in \S(\rho^0\hil)$, we have 
\begin{align*}
&\frac{\alpha}{1-\alpha}D^{\bs}(\omega\|\rho)+D^{\bs}(\omega\|\sigma)\\
&\ds\ge
\frac{\alpha}{1-\alpha}D^{\bs}(\map(\omega)\|\underbrace{\map(\rho)}_{=\rho})+D^{\bs}(\map(\omega_{\alpha})\|\underbrace{\map(\sigma)}_{=\sigma})\\
&\ds=
\frac{\alpha}{1-\alpha}D^{\bs}(\map(\omega)\|\rho)+D^{\bs}(\map(\omega)\|\sigma),
\end{align*}
where the  inequality follows by the monotonicity of $D^{\bs}$ under CPTP maps,
and the equality is due to $\map(\rho)=\rho$ and $\map(\sigma)=\sigma$.
Hence, there exists an optimal $\omega$ that also satisfies
$\omega=\map(\omega)=\sum_iP_i\omega P_i$.
Let $\omega_{\alpha}$ be such an optimal state.
Using then the block additivity of $D^{\bs}$, we get 
\begin{align*}
D_{\alpha}^{\bary,\bs}(\rho\|\sigma)
=
\sum_{i=1}^r \frac{\alpha}{1-\alpha}D^{\bs}(P_i\omega_{\alpha}P_i\|P_i\rho P_i)+
D^{\bs}(P_i\omega_{\alpha}P_i\|P_i\sigma P_i)-\frac{1}{\alpha-1}\log\Tr\rho.
\end{align*}

Assume now that $\omega_{\alpha}$ commutes with both $\rho$ and $\sigma$, or equivalently, 
that $P_i\omega_{\alpha}P_i$ commutes with both $P_i\rho P_i$ and $P_i\sigma P_i$
for every $i=1,\ldots,r$.
Note that, by the definition of the $P_i$, the only operators of the form $P_iXP_i$ that 
commute with both $P_i\rho P_i$ and $P_i\sigma P_i$ 
are constant multiples of $P_i$. 
Therefore our assumption yields that there exists an $i$ such that 
$P_i\omega_{\alpha} P_i=cP_i$ with some $c\in(0,+\infty)$. 
For the rest we may restrict the Hilbert space to 
$\ran P_i$, and use the notations $I$ for $P_i$, $\tilde\rho$ for $P_i\rho P_i$,
and $\tilde\sigma$ for $P_i\sigma P_i$. 
Note that the assumption that $\rho^0\wedge\sigma^0\ne 0$ yields that 
$D_{\alpha}^{\bary,\bs}(\rho\|\sigma)<+\infty$, according to \eqref{Renyi bq infty}.
Thus,
$D^{\bs}(cI\|\tilde\rho)<+\infty$, $D^{\bs}(cI\|\tilde\sigma)<+\infty$, and therefore
$\tilde\rho^0=I=\tilde\sigma^0$, according to \eqref{BS def2}.
By the definition of $\omega_{\alpha}$, for any self-adjoint traceless operator $X\in\B(\ran P_i)$, and any 
$t\in(-c/\norm{X},c/\norm{X})$,
\begin{align}\label{BS larger proof2}
f_X(t):=\frac{\alpha}{1-\alpha}D^{\bs}(cI+tX\|\tilde\rho)+
D^{\bs}(cI+tX\|\tilde\sigma)
\ge\frac{\alpha}{1-\alpha}D^{\bs}(cI\|\tilde\rho)+
D^{\bs}(cI\|\tilde\sigma).
\end{align}
By the joint convexity of $D^{\bs}$ \cite{Matsumoto_newfdiv}, $f_X(\valt)$ is a convex function, and hence, by the 
above, it has a global minimum at $t=0$. 
Since it is also differentiable at $t=0$, as we show below, we get that 
\begin{align*}
f_X'(0)=0,\ds\ds\ds X\in\B(\ran P_i)_{\sa},\ds\Tr X=0.
\end{align*}
We have
\begin{align*}
&\frac{d}{dt}D^{\max}(cI+tX\|\tilde\rho)\Big\vert_{t=0}\\
&\ds=
\frac{d}{dt}
\Tr\tilde\rho^{1/2}(cI+tX)\tilde\rho^{-1/2}
\log\bz\tilde\rho^{-1/2}(cI+tX)\tilde\rho^{-1/2}\jz
\Big\vert_{t=0}\\
&\ds=
\Tr\tilde\rho^{1/2}X\tilde\rho^{-1/2}
\log\big(\underbrace{\tilde\rho^{-1/2}cI\tilde\rho^{-1/2}}_{=\,c\tilde\rho\inv}\big)
+
\Tr\underbrace{\tilde\rho^{1/2}cI\tilde\rho^{-1/2}}_{=\,cI}
\frac{d}{dt}\log\bz\tilde\rho^{-1/2}(cI+tX)\tilde\rho^{-1/2}\jz\Big\vert_{t=0}.
\end{align*}
Let $\tilde\rho=\sum_{i=1}^r\lambda_i Q_i$ be the spectral decomposition of $\tilde\rho$.
Then the Fr\'echet derivative of $\log$ at $c\tilde\rho\inv$ is the linear 
operator
\begin{align*}
\FD\log(c\tilde\rho\inv):\,
A\mapsto
\sum_{i,j=1}^r\log^{[1]}\bz c/\lambda_i,c/\lambda_j\jz Q_iAQ_j,\ds\ds\ds
A\in\B(\ran P_i),
\end{align*}
where 
\begin{align*}
\log^{[1]}(x,y)=\begin{cases}
\frac{\log x-\log y}{x-y},&x\ne y,\\
1/x,&x=y,
\end{cases}
\end{align*}
is the first divided difference function of $\log$ (see \eqref{opfunction derivative}).
Thus,
\begin{align*}
&\Tr cI\frac{d}{dt}\log\bz\tilde\rho^{-1/2}(cI+tX)\tilde\rho^{-1/2}\jz\Big\vert_{t=0}\\
&\ds=
\Tr cI\sum_{i,j=1}^r\log^{[1]}\bz c/\lambda_i,c/\lambda_j\jz 
\underbrace{Q_i \tilde\rho^{-1/2}X\tilde\rho^{-1/2} Q_j}_{=\,\lambda_i^{-1/2}\lambda_j^{-1/2}
Q_iXQ_j}\\
&\ds=
c\sum_{i,j=1}^r\log^{[1]}\bz c/\lambda_i,c/\lambda_j\jz 
\lambda_i^{-1/2}\lambda_j^{-1/2}\underbrace{\Tr Q_iXQ_j}_{=\delta_{i,j}\Tr Q_iX}\\
&\ds=
c\sum_{i=1}^r\frac{\lambda_i}{c}\frac{1}{\lambda_i} \Tr Q_iX=\Tr X.
\end{align*}
By an exactly analogous computation for 
$\frac{d}{dt}D^{\max}(cI+tX\|\tilde\sigma)\Big\vert_{t=0}$, we finally get
\begin{align*}
0=f_X'(0)
=
\Tr X\left[\frac{\alpha}{1-\alpha}\bz I+\log(c\tilde\rho\inv)\jz+I
+\log(c\tilde\sigma\inv) \right]
=
-\Tr X\left[\frac{\alpha}{1-\alpha}\log\tilde\rho+\log\tilde\sigma\right],
\end{align*}
for any $X\in\B(\ran P_i)_{\sa}$ with $\Tr X=0$. 
This is equivalent to the existence of some $\kappa\in\bR$ such that 
\begin{align*}
\frac{\alpha}{1-\alpha}\log\tilde\rho+\log\tilde\sigma=\kappa I,
\end{align*}
i.e.,
\begin{align*}
\tilde\sigma=e^{\kappa}\tilde\rho^{\frac{\alpha}{\alpha-1}}.
\end{align*}
In particular, $\tilde\rho$ and $\tilde\sigma$ have a common eigenvector 
$\psi\in\ran P_i$, which is also a common eigenvector of $\rho$ and $\sigma$, 
contradicting our initial assumptions.
\end{proof}

\begin{thm}\label{thm:BS larger}
Let $D^{\qn}$ and $D^{\qo}$ be quantum relative entropies 
such that $D^{\qn}\le D^{\max}$, $D^{\qo}\le D^{\max}$, 
and at least one of the inequalities is strict.
Let $\rho,\sigma\in\B(\hil)\pne$ be such that 
$\rho^0\wedge\sigma^0\ne 0$, and 
$\rho$ and $\sigma$ do not have a common eigenvector. Then 
\begin{align}\label{BS larger}
D_{\alpha}^{\bary,\qv}(\rho\|\sigma)
<
D_{\alpha}^{\bary,\bs}(\rho\|\sigma),
\ds\ds\ds \alpha\in(0,1).
\end{align}
\end{thm}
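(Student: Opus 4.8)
The plan is to use the key structural result of Lemma~\ref{lemma:strict ineq for barycentric} together with the geometric fact established in Lemma~\ref{lemma:max center noncomm}. We are comparing $D_\alpha^{\bary,\qv}$ with $D_\alpha^{\bary,\bs}$ (that is, $D^{\tilde q_0}=D^{\tilde q_1}=D^{\max}$), where $D^{q_x}\le D^{\max}$ for $x\in\{0,1\}$ and at least one inequality is strict in the sense of Definition~\ref{def:divorder2}. Recall that in the two-variable setup, $-\log Q_{P}^{\bary,\qv}(\rho\|\sigma)$ is, up to the normalization factor $1/((\alpha-1)(-\alpha))$ and additive trace terms, essentially $D_\alpha^{\bary,\qv}(\rho\|\sigma)$, and for $\alpha\in(0,1)$ this normalization is positive, so it suffices to prove the strict inequality $-\log Q_\alpha^{\bary,\qv}(\rho\|\sigma)<-\log Q_\alpha^{\bary,\bs}(\rho\|\sigma)$ (equivalently, for the normalized inputs). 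So the goal reduces to verifying the hypotheses of Lemma~\ref{lemma:strict ineq for barycentric} with $P=(\alpha,1-\alpha)$, $\tilde q_0=\tilde q_1=\max$, and $y\in\{0,1\}$ the index where $D^{q_y}<D^{\max}$.

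\textbf{Key steps.}
First I would fix $\alpha\in(0,1)$ and the pair $\rho,\sigma$ with $\rho^0\wedge\sigma^0\ne 0$ and no common eigenvector. By Lemma~\ref{lemma:max center noncomm}, there exists an $\alpha$-weighted $D^{\max}$-center $\omega_\alpha$ for $(\rho,\sigma)$ — i.e., a state attaining the infimum in \eqref{Renyi bq2} with $D^{\qn}=D^{\qo}=D^{\max}$ — that commutes with neither $\rho$ nor $\sigma$. In the language of Definition~\ref{defin:weighted Q}, this is precisely a $P$-weighted $D^{\tilde\qv}$-center for $W=(\rho,\sigma)$ with $P=(\alpha,1-\alpha)$. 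Now: if $y=0$ (so $D^{q_0}<D^{\max}$), then $\omega_\alpha$ does not commute with $W_0=\rho$; if $y=1$ (so $D^{q_1}<D^{\max}$), then $\omega_\alpha$ does not commute with $W_1=\sigma$. In either case condition~\ref{strict ineq cond1} of Lemma~\ref{lemma:strict ineq for barycentric} is satisfied: there is a $P$-weighted $D^{\tilde\qv}$-center for $W$ that does not commute with $W_y$. We also need $D^{q_x}\le D^{\tilde q_x}=D^{\max}$ for all $x\in\supp P$, which is the hypothesis, and $D^{q_y}<D^{\max}$ in the sense of Definition~\ref{def:divorder2}, which is the strictness hypothesis. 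Note $P\notin\{\egy_{\{x\}}:x\in\X\}$ since $\alpha\in(0,1)$. Lemma~\ref{lemma:strict ineq for barycentric} then yields $-\log Q_P^{\bary,\qv}(W)<-\log Q_P^{\bary,\tilde\qv}(W)$, and translating back through \eqref{barycentric Dalpha def} (dividing by the positive quantity $(1-\alpha)$ on the normalized inputs and using Corollary~\ref{cor:barycentric sclaing} to pass between normalized and unnormalized arguments) gives \eqref{BS larger}.

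\textbf{Main obstacle.}
The substantive content has been front-loaded into Lemma~\ref{lemma:max center noncomm}, so the remaining argument is essentially bookkeeping. The one point requiring a little care is the reduction from the R\'enyi-divergence inequality \eqref{BS larger} to the $Q_P^{\bary,\qv}$ inequality: one must confirm that the quantities $D^{q_x}(\omega_\alpha\|\rho)$ and $D^{q_x}(\omega_\alpha\|\sigma)$ entering the definition of the center are finite (which follows from $\rho^0\wedge\sigma^0\ne 0$ together with the support condition \eqref{supp const} and trace-monotonicity, exactly as in the finiteness discussion around \eqref{Renyi bq infty}), so that the strict inequality is not between two infinities; and that $D^{q_y}<D^{\max}$ in the strict sense of Definition~\ref{def:divorder2} applies to the pair $(\omega_\alpha,W_y)$, which it does because $\omega_\alpha^0\le W_y^0$ (since $D^{q_y}(\omega_\alpha\|W_y)<+\infty$ forces this via the support condition) and $\omega_\alpha W_y\ne W_y\omega_\alpha$. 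With these checks in place the statement follows.
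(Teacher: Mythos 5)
Your proposal is correct and follows the paper's own route exactly: the paper proves this theorem by combining Lemma \ref{lemma:max center noncomm} (a $D^{\max}$-center commuting with neither $\rho$ nor $\sigma$) with condition \ref{strict ineq cond1} of Lemma \ref{lemma:strict ineq for barycentric}, precisely as you do. Your additional bookkeeping (positivity of the normalization for $\alpha\in(0,1)$, finiteness via \eqref{supp const}, and the support check needed to invoke the strict order of Definition \ref{def:divorder2} on the pair $(\omega_\alpha,W_y)$) is consistent with, and slightly more explicit than, the paper's "immediate" argument.
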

\begin{proof}
Immediate from Lemma \ref{lemma:max center noncomm} and 
(i) of Lemma \ref{lemma:strict ineq for barycentric}.
\end{proof}

\begin{cor}\label{cor:strictly between Umand max}
Let $D^{\qn},D^{\qo}$ be lower semi-continuous in their first arguments, and assume that 
$\DU\le D^{\qn},D^{\qo}\le D^{\max}$. Assume, moreover, that 
$\DU<D^{\qn}$ or $\DU<D^{\qo}$, and 
$D^{\qn}<D^{\max}$ or $D^{\qo}<D^{\max}$. Then for any two non-commuting 
invertible $\rho,\sigma$ 
that do not have a common eigenvector,
\begin{align*}
D_{\alpha}^{\bary,\Um}(\rho\|\sigma)
<
D_{\alpha}^{\bary,\qv}(\rho\|\sigma)
<
D_{\alpha}^{\bary,\max}(\rho\|\sigma),\ds\ds\ds\alpha\in(0,1).
\end{align*}
\end{cor}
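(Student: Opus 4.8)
The plan is to obtain the two strict inequalities as instances of the results already established in the two preceding subsections, combined with the fact (from Remark~\ref{rem:relentr maxmin}) that $D^{\meas}<\DU<D^{\max}$. Concretely, the claim $D_{\alpha}^{\bary,\Um}(\rho\|\sigma)<D_{\alpha}^{\bary,\qv}(\rho\|\sigma)$ is exactly the conclusion of Corollary~\ref{cor:Umegaki larger}, applied with the pair $(\qn,\qo)$: its hypotheses are that $\DU\le D^{\qn},D^{\qo}\le D^{\max}$ (assumed here), that $\DU<D^{\qn}$ or $\DU<D^{\qo}$ (assumed here), and that there exist an $\alpha$-weighted $D^{(\qn,\max)}$-center and an $\alpha$-weighted $D^{(\max,\qo)}$-center for $(\rho,\sigma)$. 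Since $D^{\max}$ is jointly lower semi-continuous in its arguments (Remark~\ref{rem:relentr lsc}) and $D^{\qn},D^{\qo}$ are assumed lower semi-continuous in their first arguments, each of the functionals $\omega\mapsto \alpha D^{q_x}(\omega\|\cdot)+(1-\alpha)D^{\max}(\omega\|\cdot)$ is lower semi-continuous on the compact set $\S(\rho^0\hil)$, so the required centers exist, and Corollary~\ref{cor:Umegaki larger} gives the left-hand strict inequality.

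For the right-hand strict inequality $D_{\alpha}^{\bary,\qv}(\rho\|\sigma)<D_{\alpha}^{\bary,\max}(\rho\|\sigma)$, I would invoke Theorem~\ref{thm:BS larger} with the same pair $(\qn,\qo)$. Its hypotheses are: $D^{\qn}\le D^{\max}$, $D^{\qo}\le D^{\max}$ (assumed), at least one inequality strict (this is the assumption $D^{\qn}<D^{\max}$ or $D^{\qo}<D^{\max}$), $\rho^0\wedge\sigma^0\ne 0$ (automatic here since $\rho,\sigma$ are invertible, so $\rho^0\wedge\sigma^0=I\ne 0$), and $\rho,\sigma$ have no common eigenvector (assumed). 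Hence Theorem~\ref{thm:BS larger} applies verbatim and yields $D_{\alpha}^{\bary,\qv}(\rho\|\sigma)<D_{\alpha}^{\bary,\bs}(\rho\|\sigma)=D_{\alpha}^{\bary,\max}(\rho\|\sigma)$ for every $\alpha\in(0,1)$, using the notational identification $D^{\bs}=D^{\max}$.

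The only genuine point to check, beyond quoting the two results, is that the \emph{local} strict-ordering hypotheses of Corollary~\ref{cor:Umegaki larger} (via Proposition~\ref{prop:Umegaki and larger} and Lemma~\ref{lemma:strict ineq for barycentric}) are actually met. Proposition~\ref{prop:Umegaki and larger} needs that $\DU$ is strictly below $D^{\qn}$ or $D^{\qo}$ in the strict-ordering sense of Definition~\ref{def:divorder2}, i.e.\ that $D^{\qn}(\rho'\|\sigma')>\DU(\rho'\|\sigma')$ whenever $\rho'^0\le\sigma'^0$ and $\rho'\sigma'\ne\sigma'\rho'$; the hypothesis ``$\DU<D^{\qn}$ or $\DU<D^{\qo}$'' in the present corollary is precisely this strict-ordering statement (Definition~\ref{def:divorder2}), so nothing extra is needed. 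Symmetrically, the hypothesis $D^{\qn}<D^{\max}$ or $D^{\qo}<D^{\max}$ supplies what Theorem~\ref{thm:BS larger} calls ``at least one of the inequalities is strict,'' interpreted in the same strict-ordering sense. So the proof is a two-line deduction; I expect no obstacle beyond matching hypotheses to the cited statements and recording the existence of the optimizing centers from lower semi-continuity and compactness.

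\begin{proof}
Since $\rho,\sigma$ are invertible, $\rho^0=\sigma^0=I$, so $\rho^0\wedge\sigma^0=I\ne 0$, and every relative entropy functional $\omega\mapsto\alpha D^{q}(\omega\|\rho)+(1-\alpha)D^{q'}(\omega\|\sigma)$ with $D^{q},D^{q'}$ lower semi-continuous in their first arguments attains its infimum on the compact set $\S(\hil)=\S(\rho^0\hil)$; in particular, since $\DU$, $D^{\max}$, $D^{\qn}$, and $D^{\qo}$ are all lower semi-continuous in their first arguments, there exist $\alpha$-weighted $D^{(\qn,\max)}$- and $D^{(\max,\qo)}$-centers for $(\rho,\sigma)$. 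Hence Corollary~\ref{cor:Umegaki larger}, applied with the given $D^{\qn},D^{\qo}$ (for which $\DU\le D^{\qn},D^{\qo}\le D^{\max}$ and $\DU<D^{\qn}$ or $\DU<D^{\qo}$), yields
\begin{align*}
D_{\alpha}^{\bary,\Um}(\rho\|\sigma)<D_{\alpha}^{\bary,\qv}(\rho\|\sigma),\ds\ds\ds\alpha\in(0,1).
\end{align*}
For the second inequality, the hypotheses of Theorem~\ref{thm:BS larger} hold for the same pair: $D^{\qn}\le D^{\max}$, $D^{\qo}\le D^{\max}$, and (by assumption) $D^{\qn}<D^{\max}$ or $D^{\qo}<D^{\max}$, while $\rho^0\wedge\sigma^0=I\ne 0$ and $\rho,\sigma$ have no common eigenvector. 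Therefore, recalling $D^{\bs}=D^{\max}$,
\begin{align*}
D_{\alpha}^{\bary,\qv}(\rho\|\sigma)<D_{\alpha}^{\bary,\bs}(\rho\|\sigma)=D_{\alpha}^{\bary,\max}(\rho\|\sigma),\ds\ds\ds\alpha\in(0,1).
\end{align*}
Combining the two displays gives the claim.
\end{proof}
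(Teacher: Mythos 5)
Your right-hand inequality is handled exactly as in the paper: Theorem \ref{thm:BS larger} applies verbatim (invertibility gives $\rho^0\wedge\sigma^0=I\ne 0$, the strictness hypothesis is the assumed $D^{\qn}<D^{\max}$ or $D^{\qo}<D^{\max}$, and $D^{\bs}=D^{\max}$), so that half is fine.

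The left-hand step, however, rests on a misquoted result. Corollary \ref{cor:Umegaki larger} does \emph{not} conclude $D_{\alpha}^{\bary,\Um}(\rho\|\sigma)<D_{\alpha}^{\bary,\qv}(\rho\|\sigma)$, nor does it have ``$\DU<D^{\qn}$ or $\DU<D^{\qo}$'' among its hypotheses; its conclusion is only that $D_{\alpha}^{\bary,(\qn,\max)}(\rho\|\sigma)$ and $D_{\alpha}^{\bary,(\max,\qo)}(\rho\|\sigma)$ exceed $D_{\alpha}^{\bary,\Um}(\rho\|\sigma)$, the strictness there coming from $\DU<D^{\max}$. This is useless for your purpose: by Lemma \ref{lemma:trivorder} and $D^{\qo}\le D^{\max}$ one has $D_{\alpha}^{\bary,(\qn,\qo)}\le D_{\alpha}^{\bary,(\qn,\max)}$, i.e.\ the mixed-pair quantities dominate $D_{\alpha}^{\bary,\qv}$ from above, so no lower bound on $D_{\alpha}^{\bary,\qv}$ follows from them. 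Consequently the proof block, read literally (existence of the $D^{(\qn,\max)}$- and $D^{(\max,\qo)}$-centers, then ``hence Corollary \ref{cor:Umegaki larger} \ldots yields'' the pair-$(\qn,\qo)$ inequality), does not establish the claim. The repair is the route you already gesture at in your plan and is what the paper does: apply Proposition \ref{prop:Umegaki and larger} directly to the pair $(D^{\qn},D^{\qo})$. Its strictness hypothesis is exactly the assumed ``$\DU<D^{\qn}$ or $\DU<D^{\qo}$'' (in the sense of Definition \ref{def:divorder2}), and its remaining hypothesis — existence of an $\alpha$-weighted $D^{(\qn,\qo)}$-center for $(\rho,\sigma)$ — follows from your own lower semi-continuity/compactness argument, which you should state for the pair $(\qn,\qo)$ rather than only for the mixed pairs. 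With that substitution the two-line deduction is correct and coincides with the paper's proof.
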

\begin{proof}
Immediate from Proposition \ref{prop:Umegaki and larger} and Theorem \ref{thm:BS larger}.
\end{proof}

\begin{example}\label{ex:equality}
Let $D^{\qn}$  and $D^{\qo}$ be quantum relative entropies as in Theorem \ref{thm:BS larger}.
Let 
\begin{align*}
\rho:=p\oplus(1-p)\pr{\psi_1}\ds\ds\text{and}\ds\ds\sigma:=q\oplus(1-q)\pr{\psi_2}
\end{align*}
be PSD operators on 
$\hil=\bC\oplus\bC^d$ for some $d>1$, where $p,q\in(0,1)$, and 
$\psi_1,\psi_2\in\bC^d$ are unit vectors that are neither parallel nor orthogonal. 
Then $\rho^0\wedge \sigma^0=1\oplus 0$, and hence the unique optimal 
$\omega_{\alpha}$ for any $\alpha\in(0,1)$ and any barycentric R\'enyi $\alpha$-divergence is
$\omega_{\alpha}=1\oplus 0$. Thus,
\begin{align*}
D_{\alpha}^{\bary,\max}(\rho\|\sigma)
&=
\frac{\alpha}{1-\alpha}D^{\max}(\omega_{\alpha}\|\rho)+D^{\max}(\omega_{\alpha}\|\sigma)
-\frac{1}{\alpha-1}\log\Tr\rho\\
&=
\frac{\alpha}{1-\alpha}(-\log p)-\log q-\frac{1}{\alpha-1}\log\Tr\rho\\
&=
\frac{\alpha}{1-\alpha}D^{\qn}(\omega_{\alpha}\|\rho)+D^{\qo}(\omega_{\alpha}\|\sigma)
-\frac{1}{\alpha-1}\log\Tr\rho\\
&=
D_{\alpha}^{\bary,\qv}(\rho\|\sigma).
\end{align*}
This shows that the assumption that $\rho$ and $\sigma$ do not have a common eigenvector
cannot be completely omitted in Lemma \ref{lemma:max center noncomm} or
in Theorem \ref{thm:BS larger}.
\end{example}

\subsection{Maximal R\'enyi divergences vs.~the barycentric maximal R\'enyi divergences}
\label{sec:maximal Renyi}

By Proposition \ref{prop:trivorder}, for every $\alpha\in(0,1)$,
$D_{\alpha}^{\bary,\max}\le D_{\alpha}^{\max}$. Our aim in this section is to show that 
equality does not hold. In fact, we conjecture the stronger relation that for 
non-commuting invertible PSD operators $\rho,\sigma$, 
$D_{\alpha}^{\bary,\max}(\rho\|\sigma)< D_{\alpha}^{\max}(\rho\|\sigma)$, $\alpha\in(0,1)$, which is supported by numerical examples. We will prove this below in the special case where the inputs are $2$-dimensional.
Of course, this already gives at least that 
\begin{align*}
D_{\alpha}^{\bary,\max}\lneq D_{\alpha}^{\max},\ds\ds\ds\alpha\in(0,1).
\end{align*}

Let $\rho,\sigma\in\B(\hil)\pne$ be such that 
$P:=\rho^0\wedge\sigma^0\ne 0$, 
and let $(\hat p,\hat q,\hat\rt)$ be the reverse test given in 
\eqref{optimal reverse test}--\eqref{optimal reverse test gamma}, defined in terms of the spectral decomposition 
\begin{align*}
\sigma^{-1/2}\rho_{P,\ac}\sigma^{-1/2}
=
\sum_{i=1}^r\lambda_i E_i.
\end{align*}
Let 
\begin{align}
&Q_{\alpha}^{\max}:=Q_{\alpha}^{\max}(\rho\|\sigma)
=
Q_{\alpha}(\hat p\|\hat q)
=
\sum_{i=1}^r\lambda_i^{\alpha}\Tr\sigma E_i
=
\Tr\sigma (\sigma^{-1/2}\rho_{P,\ac}\sigma^{-1/2})^{\alpha}
=\Tr\sigma\#_{\alpha}\rho,\\
&\omega_{\alpha}:=\omega_{\alpha}(\hat p\|\hat q)=
\frac{1}{Q_{\alpha}(\hat p\|\hat q)}\sum_{i:\,\hat p(i)\hat q(i)>0}\hat p(i)^{\alpha}\hat q(i)^{1-\alpha}\egy_{\{i\}}
=
\sum_{i=1}^r
\frac{\lambda_i^{\alpha}\Tr\sigma E_i}{Q_{\alpha}^{\max}(\rho\|\sigma)}\egy_{\{i\}}
\label{omega alpha maxbary},
\end{align}
so that 
\begin{align}
&\hat\rt(\omega_{\alpha})=
\frac{1}{Q_{\alpha}^{\max}(\rho\|\sigma)}
\sum_i\lambda_i^{\alpha}\sigma^{1/2}E_i\sigma^{1/2}
=
\frac{1}{Q_{\alpha}^{\max}(\rho\|\sigma)}
\underbrace{\sigma^{1/2}(\sigma^{-1/2}\rho_{P,\ac}\sigma^{-1/2})^{\alpha}\sigma^{1/2}}_{=
\sigma\#_{\alpha}\rho}=:\what{\sigma\#_{\alpha}\rho},
\label{normalized alpha-mean}
\end{align}
where $\sigma\#_{\alpha}\rho$
is the $\alpha$-weighted Kubo-Ando geometric mean
of $\rho$ and $\sigma$ (see Appendix \ref{sec:KA}).

Then for any $\alpha\in(0,1)$, and also for $\alpha\in(1,2]$ if $\rho^0\le\sigma^0$,
\begin{align}
\frac{1}{\alpha-1}\log\Tr\rho+D_{\alpha}^{\max}(\rho\|\sigma)
&=
\frac{1}{\alpha-1}\log\Tr\hat p+D_{\alpha}(\hat p\|\hat q)\\
&=
\frac{\alpha}{1-\alpha}D(\omega_{\alpha}\|\hat p)+D(\omega_{\alpha}\|\hat q)
\label{maxmax inequality1}\\
&\ge
\frac{\alpha}{1-\alpha}D^{\max}(\hat\rt(\omega_{\alpha})\|\underbrace{\hat\rt(\hat p)}_{=\rho})+
D^{\max}(\hat\rt(\omega_{\alpha})\|\underbrace{\hat\rt(\hat q)}_{=\sigma})
\label{maxmax inequality2}\\
&=
\frac{\alpha}{1-\alpha}
D^{\max}\bz\frac{\sigma\#_{\alpha}\rho}{Q_{\alpha}^{\max}(\rho\|\sigma)}
\Big\|\rho\jz+
D^{\max}\bz\frac{\sigma\#_{\alpha}\rho}{Q_{\alpha}^{\max}(\rho\|\sigma)}
\Big\|\sigma\jz,
\end{align}
where the first equality is due to the optimality of the reverse test (see Example \ref{ex:max Renyi}), the second equality follows by a straightforward computation  
(see also \eqref{classical variational2}--\eqref{classical optimal state}),
the inequality is due to the monotonicity of $D^{\max}$ under positive trace-preserving maps,
and the third equality is by the definition of $\hat\rt$ 
and \eqref{normalized alpha-mean}. 

The inequality in \eqref{maxmax inequality2}
is in fact an equality, according to Proposition \ref{prop:max Renyi as geometric} below.
For the proof, we will need the following special case of \cite[Theorem 3.34]{HiaiMosonyi2017}:

\begin{lemma}\label{lemma:maxrelentr preservation}
Let $A_1,A_2\in\B(\hil)\pne$ be such that $A_1^0\le A_2^0$, and let 
$\map:\,\B(\hil)\to\B(\kil)$ be a positive trace-preserving map. Then the following are equivalent:
\begin{enumerate}
\item
$D^{\max}(\map(A_1)\|\map(A_2))=D^{\max}(A_1\|A_2)$;
\item
for some (equivalently, for all) $\beta\in(0,1)$, 
$\map(A_2)\#_{\beta}\map(A_1)=\map(A_2\#_{\beta}A_1)$.
\end{enumerate}
\end{lemma}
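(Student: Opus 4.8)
The statement to prove is Lemma~\ref{lemma:maxrelentr preservation}, which I claim follows directly from the general equality-condition analysis of maximal $f$-divergences, namely \cite[Theorem 3.34]{HiaiMosonyi2017}. The plan is to recall that $D^{\max}(A_1\|A_2) = \Tr\persp{\eta}(A_1,A_2)$ with $\eta(x)=x\log x$ (see \eqref{BS def1}), so that $D^{\max}$ is the maximal $f$-divergence associated with the operator convex function $\eta$; moreover $\eta$ is not affine, so the cited theorem applies nontrivially. First I would invoke the general result that for a positive trace-preserving map $\map$ and $A_1^0\le A_2^0$, equality $S_f^{\max}(\map(A_1)\|\map(A_2))=S_f^{\max}(A_1\|A_2)$ holds (for a non-affine operator convex $f$) if and only if $\map$ intertwines the relevant Kubo--Ando means of $A_1$ and $A_2$; specialized to $f=\eta$ this is exactly the asserted equivalence with the condition $\map(A_2)\#_{\beta}\map(A_1)=\map(A_2\#_{\beta}A_1)$ for some/all $\beta\in(0,1)$.

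The key steps, in order, would be: (1) reduce to the case where $A_1,A_2$ are supported on a common subspace by restricting to $A_2^0\hil$, using that $A_1^0\le A_2^0$ and that both $D^{\max}$ and the weighted geometric means behave well under such restriction (the geometric mean $A_2\#_\beta A_1$ already has support $A_1^0\wedge A_2^0$ by \eqref{wmean support}); (2) translate the equality of maximal $f$-divergences into the language of the representing measure of $\eta$ --- recall that $\eta$ admits an integral representation $\eta(x) = a + b(x-1) + c(x-1)^2 + \int_{(0,\infty)} \frac{(x-1)^2}{x+s}\,d\mu(s)$ with $c=0$ and $\mu$ a positive measure with full support, so that $\persp{\eta}$ decomposes as an affine part plus an integral over operator-perspective building blocks $(A_1,A_2)\mapsto \persp{f_s}(A_1,A_2)$ where $f_s(x)=(x-1)^2/(x+s)$; (3) apply the equality case of DPI for each such elementary perspective (this is the substance of \cite[Theorem 3.34]{HiaiMosonyi2017}) to conclude that equality for $D^{\max}$ forces $\map$ to preserve the parallel sum $A_2 : (sA_1)$ for every $s>0$, which by a standard functional-calculus/resolvent argument is equivalent to preservation of $A_2\#_\beta A_1$ for one, hence all, $\beta\in(0,1)$; (4) conversely, if $\map$ preserves the weighted geometric mean for all $\beta$, then by the integral representation $\map$ preserves $\persp{\eta}(A_1,A_2)$ up to the affine part, which is automatically preserved by trace-preservation, giving equality of $D^{\max}$.

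The main obstacle is step~(3): matching the precise hypotheses of \cite[Theorem 3.34]{HiaiMosonyi2017} --- which is stated for maximal $f$-divergences under general positive trace-preserving maps with a support condition --- to the concrete situation here, and extracting from its conclusion the clean ``for some (equivalently, for all) $\beta\in(0,1)$'' formulation in terms of $\#_\beta$ rather than in terms of the resolvent/parallel-sum family $\{A_2:(sA_1)\}_{s>0}$. The equivalence between ``$\map$ preserves $A_2\#_\beta A_1$ for some $\beta$'' and ``for all $\beta$'' is itself a small lemma: it rests on the fact that the map $\beta\mapsto A_2\#_\beta A_1$ (extended to a holomorphic family, cf.\ the argument in the proof of Proposition~\ref{prop:DU Dmax interpolation}) is determined on all of $(0,1)$ by its value at one point together with the semigroup-type identity in Lemma~\ref{lemma:gamma mean comp}, or alternatively by analytic continuation in $\beta$ after conjugating to the form $A_2^{1/2}(A_2^{-1/2}A_1 A_2^{-1/2})^\beta A_2^{1/2}$. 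Once these identifications are in place, the lemma is immediate; I would therefore present the proof as: cite \cite[Theorem 3.34]{HiaiMosonyi2017} for the equivalence of the $D^{\max}$-equality with preservation of $A_2\#_\beta A_1$ for all $\beta$, and then record the short argument (via Lemma~\ref{lemma:gamma mean comp}) that ``for all'' can be weakened to ``for some''.
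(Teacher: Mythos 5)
Your proposal takes essentially the same route as the paper: the paper gives no proof of this lemma at all, but simply records it as a special case of \cite[Theorem 3.34]{HiaiMosonyi2017}, so your citation of that theorem (with $D^{\max}=\wtilde D_{\eta}$, $\eta(x)=x\log x$) is exactly what is intended, and the additional scaffolding (support reduction, integral representation of $\eta$, parallel sums) is internal to the cited theorem rather than something you need to redo. One caveat: the ``for some $\beta$ iff for all $\beta$'' equivalence is already part of the statement/proof of the cited theorem (via the equality analysis for the power functions, whose representing measures have full support), so no separate lemma is needed, and your proposed alternative via analytic continuation in $\beta$ would not work as stated, since agreement of two real-analytic families at a single point $\beta_0$ gives nothing, and the semigroup identity of Lemma \ref{lemma:gamma mean comp} only relates means of the pair $(A_1,A_2\#_{\beta_0}A_1)$, which is not the pair you have information about.
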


\begin{prop}\label{prop:max Renyi as geometric}
Let $\rho,\sigma\in\B(\hil)\pne$ be such that $\rho^0\wedge\sigma^0\ne 0$, and 
assume that $\alpha\in(0,1)$, or that $\alpha\in(1,2]$ and $\rho^0=\sigma^0$.
Then
\begin{align}\label{max Renyi var}
D_{\alpha}^{\max}(\rho\|\sigma)
&=
\frac{\alpha}{1-\alpha}D^{\max}\bz\frac{\sigma\#_{\alpha}\rho}{Q_{\alpha}^{\max}(\rho
\|\sigma)}\Big\|\rho\jz+
D^{\max}\bz\frac{\sigma\#_{\alpha}\rho}{Q_{\alpha}^{\max}(\rho\|\sigma)}\Big\|\sigma\jz
-\frac{1}{\alpha-1}\log\Tr\rho.
\end{align}
\end{prop}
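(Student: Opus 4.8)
The inequality ``$\ge$'' (in the form of \eqref{maxmax inequality1}--\eqref{maxmax inequality2}) is already established in the text preceding the statement, so the only thing left is to prove the reverse inequality, i.e., that the monotonicity inequality \eqref{maxmax inequality2} is actually an equality. By Lemma \ref{lemma:maxrelentr preservation} applied to the positive trace-preserving map $\hat\rt$ (the reverse test from \eqref{optimal reverse test}--\eqref{optimal reverse test gamma}), it suffices to verify, for each of the two pairs $(A_1,A_2)=(\omega_\alpha,\hat p)$ and $(A_1,A_2)=(\omega_\alpha,\hat q)$, that $A_1^0\le A_2^0$ and that for some $\beta\in(0,1)$,
\begin{align*}
\hat\rt(A_2)\#_\beta\hat\rt(A_1)=\hat\rt(A_2\#_\beta A_1).
\end{align*}
Since $\hat p,\hat q,\omega_\alpha$ are all diagonal in the basis $(\egy_{\{i\}})_{i}$, the classical weighted geometric means on the right are computed entrywise, and $\hat\rt$ is affine, so the right-hand side is a sum $\sum_i (\text{mean of the }i\text{-th entries})\cdot\hat\rt(\egy_{\{i\}})$; the content of the claim is that the noncommutative Kubo-Ando mean of the images $\hat\rt(A_1),\hat\rt(A_2)$ equals this sum.

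\textbf{Key steps, in order.} First I would record the supports: $\omega_\alpha^0\le \hat p^0$ and $\omega_\alpha^0\le\hat q^0$ hold because $\omega_\alpha(i)>0$ exactly when $\hat p(i)\hat q(i)>0$ (indeed $\hat q(i)>0\iff\Tr\sigma E_i>0$ for $i\in[r]$, and then $\hat p(i)>0\iff\lambda_i>0$; the point $i=r+1$ has $\hat q(r+1)=0$ and $\omega_\alpha(r+1)=0$). Second, I compute $\hat\rt(\hat p)=\rho$, $\hat\rt(\hat q)=\sigma$, and $\hat\rt(\omega_\alpha)=\widehat{\sigma\#_\alpha\rho}=(\sigma\#_\alpha\rho)/Q_\alpha^{\max}(\rho\|\sigma)$, which is already done in \eqref{normalized alpha-mean} using $\sigma\#_\alpha\rho=\sigma^{1/2}(\sigma^{-1/2}\rho_{P,\ac}\sigma^{-1/2})^\alpha\sigma^{1/2}$ and, for $\alpha\in(0,1)$, the identity $\sigma\#_\alpha\rho=\sigma\#_\alpha\rho_{P,\ac}$; for $\alpha\in(1,2]$ with $\rho^0=\sigma^0$ we have $\rho_{P,\ac}=\rho$ so the formula is immediate. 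Third, I verify the geometric-mean preservation identities. For the pair $(\omega_\alpha,\hat q)$: the classical mean $\hat q\#_\beta\omega_\alpha$ has $i$-th entry $\hat q(i)^{1-\beta}\omega_\alpha(i)^\beta$ supported on $\{i:\lambda_i\Tr\sigma E_i>0\}$; applying $\hat\rt$ and using $\hat\rt(\egy_{\{i\}})=\sigma^{1/2}E_i\sigma^{1/2}/\Tr\sigma E_i$ gives $Q_\alpha^{\max}(\rho\|\sigma)^{-\beta}\sum_i\lambda_i^{\alpha\beta}\sigma^{1/2}E_i\sigma^{1/2}$. On the other side, $\sigma\#_\beta\widehat{\sigma\#_\alpha\rho}=\sigma^{1/2}\big(\sigma^{-1/2}\widehat{\sigma\#_\alpha\rho}\,\sigma^{-1/2}\big)^\beta\sigma^{1/2}=\sigma^{1/2}\big(Q_\alpha^{\max-1}\sum_i\lambda_i^\alpha E_i\big)^\beta\sigma^{1/2}$, and since the $E_i$ are orthogonal spectral projections this equals $Q_\alpha^{\max-\beta}\sum_i\lambda_i^{\alpha\beta}\sigma^{1/2}E_i\sigma^{1/2}$ — the two sides agree. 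For the pair $(\omega_\alpha,\hat p)$ the analogous computation uses $\hat p(i)=\lambda_i\Tr\sigma E_i$, and one checks $\rho\#_\beta\widehat{\sigma\#_\alpha\rho}$ reduces, on the support $P\hil$, to a power of $\sigma^{-1/2}\rho_{P,\ac}\sigma^{-1/2}$ conjugated appropriately, matching $\hat\rt$ of the classical mean; here I would work on the reduced space $P\hil$ where $\sigma$ is invertible and use the transformer identity $\rho\#_\beta X=\rho^{1/2}(\rho^{-1/2}X\rho^{-1/2})^\beta\rho^{1/2}$ together with $\rho_{P,\ac}=\sigma^{1/2}(\sigma^{-1/2}\rho_{P,\ac}\sigma^{-1/2})\sigma^{1/2}$. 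Fourth, invoking Lemma \ref{lemma:maxrelentr preservation} twice turns \eqref{maxmax inequality2} into an equality, and combining with the chain \eqref{maxmax inequality1}--\eqref{maxmax inequality2} (already in the text) yields \eqref{max Renyi var} after rearranging the $\frac{1}{\alpha-1}\log\Tr\rho$ term.

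\textbf{Main obstacle.} The routine bookkeeping is the verification of the two geometric-mean preservation identities, and the only genuinely delicate point is handling the case $\alpha\in(0,1)$ when $\rho^0\not\le\sigma^0$: there $\rho\ne\rho_{P,\ac}$, so one must be careful that the relevant Kubo-Ando means and the map $\hat\rt$ only ``see'' the absolutely continuous part, i.e., that $\sigma\#_\alpha\rho=\sigma\#_\alpha\rho_{P,\ac}$ and that the identity $\rho\#_\beta\big(\sigma\#_\alpha\rho\big)=\rho_{P,\ac}\#_\beta\big(\sigma\#_\alpha\rho\big)$ holds — both following from the support formula $(\sigma\#_\alpha\rho)^0=\rho^0\wedge\sigma^0$ and the fact that $\#_\beta$ depends on its first argument only through its absolutely continuous part relative to the second. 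Once this is pinned down, the preservation identities follow by direct computation in the spectral basis of $\sigma^{-1/2}\rho_{P,\ac}\sigma^{-1/2}$, and the rest is formal.
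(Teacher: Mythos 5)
Your proposal is correct, and for the main case $\alpha\in(0,1)$ it is essentially the paper's own argument: take the lower bound \eqref{maxmax inequality1}--\eqref{maxmax inequality2} coming from the optimal reverse test, and upgrade the monotonicity inequality to an equality via Lemma \ref{lemma:maxrelentr preservation} by checking that $\hat\rt$ intertwines the weighted geometric means of $(\omega_\alpha,\hat p)$ and of $(\omega_\alpha,\hat q)$ with those of their images. The differences are in the execution. First, the paper verifies the mean-preservation identities through \eqref{gamma mean3} and the composition rule of Lemma \ref{lemma:gamma mean comp}, i.e., $\rho\#_\beta\bz\sigma\#_\alpha\rho\jz=\sigma\#_{1-\beta(1-\alpha)}\rho$, whereas you compute both sides directly in the spectral basis of $\sigma^{-1/2}\rho_{P,\ac}\sigma^{-1/2}$; the ``delicate point'' you flag (that the mean sees $\rho$ only through $\rho_{P,\ac}$ when $\rho^0\nleq\sigma^0$) is indeed true — it follows from \eqref{KA explicit1}--\eqref{KA explicit2} together with the fact that the absolutely continuous part depends only on the support of the second operator — and it is exactly the issue that Lemma \ref{lemma:gamma mean comp} packages via $\ep$-regularization; your route works but you should carry out the computation for the $(\omega_\alpha,\hat p)$ pair in full (e.g.\ via the transformer property of the perspective on $\sigma^0\hil$), since that is where the paper's composition lemma does real work. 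Second, the paper treats the case $\alpha\in(1,2]$, $\rho^0=\sigma^0$ by a separate direct evaluation of the two Belavkin--Staszewski terms using \eqref{BS def2} and the transpose identity \eqref{persp trans}, while you run the same reverse-test/Lemma \ref{lemma:maxrelentr preservation} mechanism in that case too; this is legitimate (the chain \eqref{maxmax inequality1}--\eqref{maxmax inequality2} is stated for $\alpha\in(1,2]$ with $\rho^0\le\sigma^0$, and with $\rho^0=\sigma^0$ the support hypotheses of the lemma and the spectral computation go through), so your argument buys a more uniform proof, at the cost of relying on the optimality of the reverse test for $\alpha\in(1,2]$ rather than the self-contained trace computation the paper gives there.
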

\begin{proof}
Assume first that 
$\rho^0=\sigma^0$, in which case we may equivalently assume that
$\rho$
and $\sigma$ are invertible, and
recall that in this case, 
\begin{align}\label{sigma sharp rho}
\sigma\#_{\alpha}\rho
=
\rho\#_{1-\alpha}\sigma
=
\rho^{1/2}(\rho^{-1/2}\sigma\rho^{-1/2})^{1-\alpha}\rho^{1/2};
\end{align}
see \eqref{gamma mean1}--\eqref{gamma mean0}.
Thus, by \eqref{BS def1},
\begin{align}
D^{\max}\bz\frac{\sigma\#_{\alpha}\rho}{Q_{\alpha}^{\max}(\rho
\|\sigma)}\Big\|\rho\jz&=
\frac{1}{Q_{\alpha}^{\max}}\Tr\rho(\rho^{-1/2}\sigma\rho^{-1/2})^{1-\alpha}
\log\frac{(\rho^{-1/2}\sigma\rho^{-1/2})^{1-\alpha}}{Q_{\alpha}^{\max}}
\nn\\
&=
-\frac{\log Q_{\alpha}^{\max}}{Q_{\alpha}^{\max}}\underbrace{\Tr \rho(\rho^{-1/2}\sigma\rho^{-1/2})^{1-\alpha}}_{=\Tr\sigma\#_{\alpha}\rho=Q_{\alpha}^{\max}}
\nn\\
&\ds\ds
+\frac{1-\alpha}{Q_{\alpha}^{\max}}
\Tr\rho(\rho^{-1/2}\sigma\rho^{-1/2})^{1-\alpha}
\log(\rho^{-1/2}\sigma\rho^{-1/2})\nn\\
&=
-\log Q_{\alpha}^{\max}-\frac{1-\alpha}{Q_{\alpha}^{\max}}
\Tr\sigma(\sigma^{-1/2}\rho\sigma^{-1/2})^{\alpha}
\log(\sigma^{-1/2}\rho\sigma^{-1/2}),\label{max Renyi var proof 1}
\end{align}
where in the last equality we used that the 
transpose function of $f(\valt):=(\valt)^{1-\alpha}\log(\valt)$ is 
$\tilde f(\valt):=-(\valt)^{\alpha}\log(\valt)$, whence
\begin{align}\label{max Renyi var proof 3}
\Tr\rho(\rho^{-1/2}\sigma\rho^{-1/2})^{1-\alpha}
\log(\rho^{-1/2}\sigma\rho^{-1/2})=
-\Tr\sigma(\sigma^{-1/2}\rho\sigma^{-1/2})^{\alpha}
\log(\sigma^{-1/2}\rho\sigma^{-1/2}),
\end{align}
according to \eqref{persp trans}.
Similarly, 
\begin{align}
D^{\max}\bz\frac{\sigma\#_{\alpha}\rho}{Q_{\alpha}^{\max}(\rho\|\sigma)}\Big\|\sigma\jz
&=
\frac{1}{Q_{\alpha}^{\max}}\Tr\sigma(\sigma^{-1/2}\rho\sigma^{-1/2})^{\alpha}
\log\frac{(\sigma^{-1/2}\rho\sigma^{-1/2})^{\alpha}}{Q_{\alpha}^{\max}}
\nn\\
&=
-\frac{\log Q_{\alpha}^{\max}}{Q_{\alpha}^{\max}}\underbrace{
\Tr \sigma(\sigma^{-1/2}\rho\sigma^{-1/2})^{\alpha}}_{=Q_{\alpha}^{\max}}
\nn\\
&\ds\ds
+\frac{\alpha}{Q_{\alpha}^{\max}}
\Tr\sigma(\sigma^{-1/2}\rho\sigma^{-1/2})^{\alpha}
\log(\sigma^{-1/2}\rho\sigma^{-1/2}).\label{max Renyi var proof 2}
\end{align}
From \eqref{max Renyi var proof 1} and \eqref{max Renyi var proof 2} we obtain that the RHS of \eqref{max Renyi var} is 
\begin{align*}
\frac{1}{\alpha-1}\log Q_{\alpha}^{\max}-\frac{1}{\alpha-1}\log\Tr\rho
=
D_{\alpha}^{\max}(\rho\|\sigma),
\end{align*}
where the equality is by definition. 

Assume next that $\alpha\in(0,1)$. 
For any $\beta\in(0,1)$, we have
\begin{align}
\hat\rt(\hat p)\#_{\beta}\hat\rt(\omega_{\alpha})
&=
\rho\#_{\beta}\bz\frac{1}{Q_{\alpha}^{\max}}\sigma\#_{\alpha}\rho\jz
=
\frac{1}{(Q_{\alpha}^{\max})^{\beta}}
\bz\sigma\#_{\alpha}\rho\jz\#_{1-\beta}\rho
=
\frac{1}{(Q_{\alpha}^{\max})^{\beta}}
\sigma\#_{1-\beta(1-\alpha)}\rho,
\label{maxmax proof1}
\end{align}
where we used \eqref{normalized alpha-mean} in the first equality, 
\eqref{gamma mean3} in the second, and Lemma \ref{lemma:gamma mean comp} in the third.
On the other hand,
\begin{align}
\hat\rt\bz\hat p\#_{\beta}\omega_{\alpha}\jz
&=
\frac{1}{(Q_{\alpha}^{\max})^{\beta}}
\hat\rt\bz
\sum_{i:\,\Tr\sigma E_i>0}\bz\lambda_i^{\alpha}\Tr\sigma E_i\jz^{\beta}
\bz\lambda_i\Tr\sigma E_i\jz^{1-\beta}\egy_{\{i\}}\jz\nn\\
&=
\frac{1}{(Q_{\alpha}^{\max})^{\beta}}
\sum_{i=1}^{r}\lambda_i^{\alpha\beta+(1-\beta)}\sigma^{1/2} E_i\sigma^{1/2}\nn\\
&=
\frac{1}{(Q_{\alpha}^{\max})^{\beta}}
\sigma^{1/2}\bz\sigma^{-1/2}\rho_{P,\ac}\sigma^{-1/2}\jz^{\alpha\beta+(1-\beta)}\sigma^{1/2}\nn\\
&=
\frac{1}{(Q_{\alpha}^{\max})^{\beta}}
\sigma\#_{\alpha\beta+(1-\beta)}\rho,
\label{maxmax proof2}
\end{align}
where each equality follows by definition.
Since \eqref{maxmax proof1} is equal to \eqref{maxmax proof2},
Lemma \ref{lemma:maxrelentr preservation} yields 
$D^{\max}(\omega_{\alpha}\|\hat p)=
D^{\max}(\hat\rt(\omega_{\alpha})\|\hat\rt(\hat p))$.
An exactly analogous argument
yields $D^{\max}(\omega_{\alpha}\|\hat q)=
D^{\max}(\hat\rt(\omega_{\alpha})\|\hat\rt(\hat q))$.
Thus, the inequality in \eqref{maxmax inequality2} holds as an equality, proving \eqref{max Renyi var}.
\end{proof}

Our aim now is to prove that $\what{\sigma\#_{\alpha}\rho}$ is not an optimal 
$\omega$ in the variational formula \eqref{Renyi bq2} for
$D_{\alpha}^{\bary,\max}$ when $\alpha\in(0,1)$. We prove this (at least in the $2$-dimensional case)
by showing that any state $\omega$ on the line segment connecting 
$\what{\sigma\#_{\alpha}\rho}$ and the maximally mixed state 
$\pi_{\hil}:=I/d$, $d:=\dim\hil$, that is close enough to 
$\what{\sigma\#_{\alpha}\rho}$ but is not equal to it, 
gives a strictly lower value than the RHS of \eqref{max Renyi var}
when substituted into 
$\frac{\alpha}{1-\alpha}D^{\max}\bz\valt\|\rho\jz+
D^{\max}\bz\valt\|\sigma\jz
-\frac{1}{\alpha-1}\log\Tr\rho$.

\begin{lemma}\label{lemma:Dmax derivative}
Let $\rho,\sigma\in\B(\hil)\pp$, and let $P_1,\ldots,P_r\in\bP(\hil)$ and
$\lambda_1,\ldots,\lambda_r\in\bR$, be such that $\sum_{i=1}^r P_i=I$, and
\begin{align*}
\sigma^{-1/2}\rho\sigma^{-1/2}=\sum_{i=1}^r \lambda_i P_i.
\end{align*}
Then
\begin{align}
\partial_{\pi_{\hil}}&:=\frac{d}{dt}\left[\alpha D^{\max}\bz (1-t)\what{\sigma\#_{\alpha}\rho}+t\pi_{\hil}\big\|\rho\jz
+
(1-\alpha)D^{\max}\bz (1-t)\what{\sigma\#_{\alpha}\rho}+t\pi_{\hil}\big\|\sigma\jz\right]
\Bigg\vert_{t=0}\nn\\
&=
-1+\frac{1}{d}\sum_{i,j}\Tr P_i\sigma P_j\sigma\inv
\underbrace{\Big[
\alpha\log^{[1]}\bz\lambda_i^{\alpha-1},\lambda_j^{\alpha-1}\jz\lambda_i^{\alpha-1}
+(1-\alpha)\log^{[1]}\bz\lambda_i^{\alpha},\lambda_j^{\alpha}\jz\lambda_i^{\alpha}
\Big]}_{=:\Lambda_{\alpha,i,j}},
\label{Dmax derivative1}
\end{align}
where
\begin{align}\label{Lambda def}
\Lambda_{\alpha,i,j}
=
\begin{cases}\alpha(1-\alpha)
\bz\log\lambda_i-\log\lambda_j\jz\frac{
(\lambda_i-\lambda_j)}{(\lambda_i^{\alpha}-\lambda_j^{\alpha})
(\lambda_i^{1-\alpha}-\lambda_j^{1-\alpha})},&\lambda_i\ne \lambda_j,\\
1,&\lambda_i=\lambda_j.
\end{cases}
\end{align}
\end{lemma}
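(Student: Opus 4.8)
The statement is an explicit formula for a one-sided directional derivative at $t=0$ of the function $t\mapsto \alpha D^{\max}((1-t)\what{\sigma\#_{\alpha}\rho}+t\pi_{\hil}\|\rho)+(1-\alpha)D^{\max}((1-t)\what{\sigma\#_{\alpha}\rho}+t\pi_{\hil}\|\sigma)$. The plan is to compute the derivative of each of the two $D^{\max}$ terms separately using the closed formula \eqref{BS def2}, namely $D^{\max}(\tau\|\rho)=\Tr\rho^{1/2}\tau\rho^{-1/2}\nlog(\rho^{-1/2}\tau\rho^{-1/2})$ (valid here since all operators are positive definite), together with the Fr\'echet derivative formula \eqref{opfunction derivative} for operator functions and its first divided difference $\log^{[1]}$. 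The key simplification comes from the identity \eqref{normalized alpha-mean}: $\what{\sigma\#_{\alpha}\rho}=\sigma^{1/2}(\sigma^{-1/2}\rho\sigma^{-1/2})^{\alpha}\sigma^{1/2}/Q_\alpha^{\max}$, so that the relevant operators $\rho^{-1/2}\what{\sigma\#_{\alpha}\rho}\,\rho^{-1/2}$ and $\sigma^{-1/2}\what{\sigma\#_{\alpha}\rho}\,\sigma^{-1/2}$ are functions of $\sigma^{-1/2}\rho\sigma^{-1/2}=\sum_i\lambda_iP_i$, which is why the answer is naturally expressed in terms of the spectral projections $P_i$ and eigenvalues $\lambda_i$.

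First I would treat the term $D^{\max}(\valt\|\sigma)$, which is the easier of the two because $\sigma^{-1/2}\what{\sigma\#_{\alpha}\rho}\,\sigma^{-1/2}=(\sigma^{-1/2}\rho\sigma^{-1/2})^\alpha/Q_\alpha^{\max}$ commutes nicely with the spectral data. Writing $\tau(t):=(1-t)\what{\sigma\#_{\alpha}\rho}+t\pi_{\hil}$ and $A(t):=\sigma^{-1/2}\tau(t)\sigma^{-1/2}$, one has $A(0)=\sum_i(\lambda_i^\alpha/Q_\alpha^{\max})P_i$ and $A'(0)=\sigma^{-1/2}(\pi_{\hil}-\what{\sigma\#_{\alpha}\rho})\sigma^{-1/2}$. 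Then $\frac{d}{dt}\Tr\sigma^{1/2}A(t)\sigma^{-1/2}\nlog A(t)|_{t=0}$ expands by the product rule into a term $\Tr\sigma^{1/2}A'(0)\sigma^{-1/2}\nlog A(0)$ and a term $\Tr\sigma^{1/2}A(0)\sigma^{-1/2}(D\nlog)[A(0)](A'(0))$; using cyclicity of the trace, the commutativity of $A(0)$ with $P_i$, and the fact that $\Tr(D\nlog)[A(0)](A'(0))\cdot(\text{something diagonal in }P_i)$ collapses via $\log^{[1]}(a,a)=1/a$, the second term simplifies substantially. A parallel but slightly more involved computation handles $D^{\max}(\valt\|\rho)$, where $\rho^{-1/2}\what{\sigma\#_{\alpha}\rho}\,\rho^{-1/2}$ must first be rewritten: using $\rho^{1/2}(\rho^{-1/2}\sigma\rho^{-1/2})^{1-\alpha}\rho^{1/2}=\sigma\#_\alpha\rho$ (i.e.\ \eqref{sigma sharp rho}) and the transpose-function relation \eqref{persp trans}/\eqref{max Renyi var proof 3}, one expresses everything again through the eigenbasis of $\sigma^{-1/2}\rho\sigma^{-1/2}$, producing the $\lambda_i^{\alpha-1}$ exponents and the $\log^{[1]}(\lambda_i^{\alpha-1},\lambda_j^{\alpha-1})$ factors.

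After assembling the two derivatives, the $-1$ in \eqref{Dmax derivative1} arises from the ``$\nlog$ of the argument'' terms: since $\what{\sigma\#_{\alpha}\rho}$ is (up to normalization) exactly the optimizer characterized in Proposition \ref{prop:max Renyi as geometric}, the zeroth-order contributions combine to give $\Tr(\pi_{\hil}-\what{\sigma\#_{\alpha}\rho})\cdot(\text{appropriate log})= -\Tr\what{\sigma\#_{\alpha}\rho}\cdot(\ldots) = -\Tr\what{\sigma\#_{\alpha}\rho}= -1$ (recall $\Tr\what{\sigma\#_{\alpha}\rho}=1$ by \eqref{omega alpha maxbary}); more precisely one uses that $\alpha\nlog(\rho^{-1/2}\what{\sigma\#_{\alpha}\rho}\rho^{-1/2})+(1-\alpha)\nlog(\sigma^{-1/2}\what{\sigma\#_{\alpha}\rho}\sigma^{-1/2})$ equals a scalar on $\ran(\sigma^{-1/2}\rho\sigma^{-1/2})$, reflecting the first-order optimality condition, which kills the cross terms and leaves precisely $-1$ plus the double sum. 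The remaining double sum is the $(D\nlog)$-derivative contribution, and collecting the $\alpha$- and $(1-\alpha)$-weighted pieces yields the bracket $\Lambda_{\alpha,i,j}$; the explicit evaluation \eqref{Lambda def} in the case $\lambda_i\ne\lambda_j$ is then a routine (if tedious) algebraic manipulation of $\alpha\log^{[1]}(\lambda_i^{\alpha-1},\lambda_j^{\alpha-1})\lambda_i^{\alpha-1}+(1-\alpha)\log^{[1]}(\lambda_i^{\alpha},\lambda_j^{\alpha})\lambda_i^{\alpha}$, using $\log\lambda_i^{\beta}=\beta\log\lambda_i$ to factor out, and the $\lambda_i=\lambda_j$ case follows by continuity (or directly, giving $\alpha(\alpha-1)+(1-\alpha)\alpha$ times a common factor, simplifying to $1$). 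The main obstacle I anticipate is bookkeeping: one must carefully keep track of which operator the functional calculus is applied to (conjugation by $\rho^{\pm1/2}$ versus $\sigma^{\pm1/2}$), correctly apply \eqref{opfunction derivative} with the right spectral projections, and verify that all the cross terms combine exactly into the symmetric-looking $\Tr P_i\sigma P_j\sigma^{-1}$ coefficient rather than into something asymmetric; the transpose-function trick \eqref{persp trans} is what makes this symmetry work out, and checking it is applied correctly is the delicate point.
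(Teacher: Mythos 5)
Your route is the same as the paper's: differentiate each $D^{\max}$ term using the explicit Belavkin--Staszewski formula \eqref{BS def2} and the Fr\'echet derivative \eqref{opfunction derivative}, use \eqref{normalized alpha-mean} and \eqref{sigma sharp rho} to express the relevant operators spectrally, cancel the first-order logarithmic terms, and collect the remaining double sums. However, there is a genuine gap at precisely the step you yourself call ``the delicate point''. Differentiating $D^{\max}\bz\valt\big\|\rho\jz$ produces a double sum whose coefficients are $\Tr\rho R_i\rho\inv R_j$, where $R_i$ are the spectral projections of $\rho^{-1/2}\sigma\rho^{-1/2}$ (eigenvalues $\lambda_i\inv$), \emph{not} the projections $P_i$ of $\sigma^{-1/2}\rho\sigma^{-1/2}$; to arrive at \eqref{Dmax derivative1} one must prove $\Tr\rho R_i\rho\inv R_j=\Tr\sigma P_i\sigma\inv P_j$. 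Your proposal attributes this to ``the transpose-function trick \eqref{persp trans}'', but \eqref{persp trans} only yields single-operator trace identities such as \eqref{max Renyi var proof 3} (which indeed kill the first-order cross terms); it gives no information about the two-index quantities $\Tr\rho R_i\rho\inv R_j$. The missing ingredient is the intertwining relation $R_i=UP_iU^*$ coming from the polar decomposition $\rho^{1/2}\sigma^{-1/2}=U\,|\rho^{1/2}\sigma^{-1/2}|$, together with the two expressions $U=\rho^{1/2}\sigma^{-1/2}\bz\sigma^{-1/2}\rho\sigma^{-1/2}\jz^{-1/2}=\rho^{-1/2}\sigma^{1/2}\bz\sigma^{-1/2}\rho\sigma^{-1/2}\jz^{1/2}$; inserting these and using that $\bz\sigma^{-1/2}\rho\sigma^{-1/2}\jz^{\pm 1/2}$ commutes with every $P_i$ gives the required identity. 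Without this (or an equivalent argument) the two double sums cannot be merged into the single coefficient $\Tr P_i\sigma P_j\sigma\inv$, so the proposal as written does not reach the stated formula.

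A secondary, fixable inaccuracy: the combination $\alpha\nlog\bz\rho^{-1/2}\what{\sigma\#_{\alpha}\rho}\,\rho^{-1/2}\jz+(1-\alpha)\nlog\bz\sigma^{-1/2}\what{\sigma\#_{\alpha}\rho}\,\sigma^{-1/2}\jz$ equals $\alpha(1-\alpha)\sum_i(\log\lambda_i)(P_i-R_i)-(\log Q_{\alpha}^{\max})I$, which is a scalar only when $\rho$ and $\sigma$ commute; what is actually used is the trace-level cancellation $\Tr\log\bz\rho^{-1/2}\sigma\rho^{-1/2}\jz+\Tr\log\bz\sigma^{-1/2}\rho\sigma^{-1/2}\jz=0$ (since $\Tr R_i=\Tr P_i$) together with \eqref{max Renyi var proof 3}. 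Likewise, the $-1$ does not come from an operator-level optimality identity but from the diagonal part of the Fr\'echet-derivative contribution in the direction $-\what{\sigma\#_{\alpha}\rho}$, via $\log^{[1]}(a,a)\,a=1$, with the $\pm\log Q_{\alpha}^{\max}$ terms cancelling within each summand.
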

\begin{proof}
We defer the slightly lengthy proof to Appendix \ref{sec:proof of lemma}.
\end{proof}

Our aim is therefore to prove that $\partial_{\pi_{\hil}}<0$. For this, we will need the following:

\begin{lemma}
The following equivalent inequalities are true:
for every $\alpha\in(0,1)$, 
\begin{align}
\frac{\log\lambda-\log\eta}{\lambda-\eta}
&>
\frac{1}{\alpha}\frac{\lambda^{\alpha}-\eta^{\alpha}}{\lambda-\eta}\cdot
\frac{1}{1-\alpha}\frac{\lambda^{1-\alpha}-\eta^{1-\alpha}}{\lambda-\eta},
&\lambda,\eta\in(0,+\infty),\,\lambda\ne \eta,\label{logmean bound}\\
\frac{\log x}{x-1}
&>
\frac{1}{\alpha}\frac{x^{\alpha}-1}{x-1}
\frac{1}{1-\alpha}\frac{x^{1-\alpha}-1}{x-1},
& x\in(0,+\infty)\setminus\{1\},\\
\int_{0}^1 \frac{1}{tx+1-t}\,dt 
&>
\int_{0}^1 \frac{1}{(tx+1-t)^{\alpha}}\,dt
\int_{0}^1 \frac{1}{(tx+1-t)^{1-\alpha}}\,dt,
& x\in(0,+\infty)\setminus\{1\}.\label{integral inequality}
\end{align}
\end{lemma}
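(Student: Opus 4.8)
The three displayed inequalities \eqref{logmean bound}--\eqref{integral inequality} are equivalent by elementary substitutions: setting $x=\lambda/\eta$ and dividing numerators and denominators by $\eta$ turns \eqref{logmean bound} into the second inequality (using homogeneity of both sides in $(\lambda,\eta)$), and the integral representations
\[
\frac{\log x}{x-1}=\int_0^1\frac{dt}{tx+1-t},\qquad
\frac{x^\beta-1}{x-1}=\beta\int_0^1\frac{dt}{(tx+1-t)^{1-\beta}}
\]
(the second valid for $\beta\in(0,1)$, obtained by differentiating $s\mapsto (tx+1-t)^{\beta}$ and integrating, or directly from the integral formula for the difference quotient of $u\mapsto u^\beta$) convert the second inequality into \eqref{integral inequality}. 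So it suffices to prove \eqref{integral inequality}, and I would state the equivalences briefly and then focus there.

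For \eqref{integral inequality}, write $g(t):=tx+1-t=1+t(x-1)>0$ for $t\in[0,1]$, and let $h(t):=g(t)^{-1}$, a strictly positive continuous function on $[0,1]$ which is strictly monotone (decreasing if $x>1$, increasing if $x<1$) and in particular non-constant since $x\ne 1$. The claim becomes
\[
\int_0^1 h(t)\,dt > \left(\int_0^1 h(t)^{\alpha}\,dt\right)\left(\int_0^1 h(t)^{1-\alpha}\,dt\right).
\]
The natural tool is a Chebyshev-type correlation inequality: the functions $t\mapsto h(t)^{\alpha}$ and $t\mapsto h(t)^{1-\alpha}$ are both monotone in $t$ in the \emph{same} direction (both increasing, or both decreasing, according to the sign of $x-1$, since $u\mapsto u^\alpha$ and $u\mapsto u^{1-\alpha}$ are increasing on $(0,\infty)$). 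Hence by Chebyshev's sum/integral inequality (with the uniform probability measure on $[0,1]$),
\[
\int_0^1 h^{\alpha}h^{1-\alpha}\,dt\ge\left(\int_0^1 h^{\alpha}\,dt\right)\left(\int_0^1 h^{1-\alpha}\,dt\right),
\]
and $\int_0^1 h^{\alpha}h^{1-\alpha}\,dt=\int_0^1 h\,dt$, giving the non-strict inequality immediately. For strictness I would invoke the equality case of Chebyshev's inequality: equality holds only if $h^{\alpha}$ or $h^{1-\alpha}$ is a.e.\ constant, i.e.\ only if $h$ is a.e.\ constant, i.e.\ only if $x=1$, which is excluded. (If one prefers a self-contained argument: equality in Chebyshev forces $\int_0^1\int_0^1 (h^\alpha(s)-h^\alpha(t))(h^{1-\alpha}(s)-h^{1-\alpha}(t))\,ds\,dt=0$, and since the integrand is pointwise $\ge 0$ and continuous, it vanishes identically, forcing $h$ constant.) This yields the strict inequality \eqref{integral inequality}.

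The only mild subtlety — and the step I would be most careful about — is the rigorous justification of the integral representation $\frac{x^\beta-1}{x-1}=\beta\int_0^1(tx+1-t)^{\beta-1}\,dt$ and its endpoint behavior, so that the equivalences among \eqref{logmean bound}--\eqref{integral inequality} are airtight (including the degenerate-looking cases $\lambda=\eta$, where both sides of \eqref{logmean bound} should be read as limits and reduce to $1/\lambda > (1/\lambda)\cdot 1\cdot 1$ trivially — actually an equality, so one should note that \eqref{logmean bound} is only asserted for $\lambda\ne\eta$). Beyond that bookkeeping, the heart of the matter is a one-line Chebyshev correlation estimate plus its equality case, so no lengthy computation is needed. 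I would present it in the order: (1) reduce all three to \eqref{integral inequality}; (2) rewrite as a correlation inequality for $h=g^{-1}$; (3) apply Chebyshev and identify $h^\alpha h^{1-\alpha}=h$; (4) rule out equality using $x\ne1$.
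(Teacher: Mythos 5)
Your proposal is correct, and the equivalence step (the substitution $x=\lambda/\eta$ together with the representations $\frac{\log x}{x-1}=\int_0^1\frac{dt}{tx+1-t}$ and $\frac{1}{\beta}\frac{x^\beta-1}{x-1}=\int_0^1(tx+1-t)^{\beta-1}\,dt$) is exactly what the paper has in mind when it declares the equivalences "straightforward to verify"; spelling them out as you do is a useful addition. Where you genuinely diverge is in the proof of \eqref{integral inequality} itself: the paper applies strict Jensen (strict concavity of $u\mapsto u^\gamma$, $\gamma\in(0,1)$, with $h=g^{-1}$ non-constant) twice, obtaining $\int_0^1 h^{\gamma}\,dt<\bigl(\int_0^1 h\,dt\bigr)^{\gamma}$ for $\gamma=\alpha$ and $\gamma=1-\alpha$, and multiplies the two bounds, so that $\int h^{\alpha}\int h^{1-\alpha}<\bigl(\int h\bigr)^{\alpha}\bigl(\int h\bigr)^{1-\alpha}=\int h$. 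You instead use Chebyshev's correlation inequality for the similarly ordered monotone functions $h^{\alpha}$ and $h^{1-\alpha}$, identify $h^{\alpha}h^{1-\alpha}=h$, and settle strictness via the equality case (the double-integral identity with a strictly positive integrand off the diagonal). Both are one-line elementary arguments and both are airtight; the paper's route needs only a single pointwise concavity fact and gets strictness for free from non-constancy of $h$, while yours trades that for the Chebyshev kernel identity but makes the statement's nature as a positive-correlation inequality more transparent. Either proof would serve.
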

\begin{proof}
It is straightforward to verify that the above inequalities are equivalent to each other.
The inequality in \eqref{integral inequality} follows from the strict concavity of the power functions, as 
\begin{align*}
\int_{0}^1 \frac{1}{(tx+1-t)^{\gamma}}\,dt<\bz\int_{0}^1 \frac{1}{tx+1-t}\,dt\jz^{\gamma},\ds\ds\ds\gamma\in(0,1).
\end{align*}
\end{proof}

\begin{cor}\label{Lambda bound}
In the setting of Lemma \ref{lemma:Dmax derivative},
\begin{align}
\Lambda_{\alpha,i,j}>1,\ds\ds\ds i\ne j.
\end{align}
\end{cor}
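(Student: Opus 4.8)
The plan is to recognize that Corollary \ref{Lambda bound} follows by unwinding the definition \eqref{Lambda def} of $\Lambda_{\alpha,i,j}$ and matching it against the inequality \eqref{logmean bound} just established in the preceding lemma. First I would dispose of the diagonal case: if $\lambda_i=\lambda_j$, then $\Lambda_{\alpha,i,j}=1$ by definition, but the claim asserts a \emph{strict} inequality; so in fact the statement is only about the off-diagonal entries $i\ne j$, for which $\lambda_i\ne\lambda_j$ (because $P_1,\dots,P_r$ correspond to \emph{distinct} eigenvalues of $\sigma^{-1/2}\rho\sigma^{-1/2}$), and there the first branch of \eqref{Lambda def} applies.

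For the off-diagonal case, I would write $\lambda:=\lambda_i$, $\eta:=\lambda_j$ (both strictly positive, distinct), so that
\begin{align*}
\Lambda_{\alpha,i,j}
=
\alpha(1-\alpha)\,\bigl(\log\lambda-\log\eta\bigr)\,
\frac{\lambda-\eta}{(\lambda^{\alpha}-\eta^{\alpha})(\lambda^{1-\alpha}-\eta^{1-\alpha})}.
\end{align*}
Dividing numerator and denominator by $(\lambda-\eta)^2$ and noting that $(\lambda-\eta)$, $(\lambda^\alpha-\eta^\alpha)$, $(\lambda^{1-\alpha}-\eta^{1-\alpha})$ all have the same sign (all three functions $x\mapsto x$, $x\mapsto x^\alpha$, $x\mapsto x^{1-\alpha}$ are strictly increasing on $(0,+\infty)$), the right-hand side is manifestly positive and equals
\begin{align*}
\Lambda_{\alpha,i,j}
=
\frac{\dfrac{\log\lambda-\log\eta}{\lambda-\eta}}
{\dfrac{1}{\alpha}\dfrac{\lambda^{\alpha}-\eta^{\alpha}}{\lambda-\eta}\cdot
\dfrac{1}{1-\alpha}\dfrac{\lambda^{1-\alpha}-\eta^{1-\alpha}}{\lambda-\eta}}.
\end{align*}
By \eqref{logmean bound} the numerator strictly exceeds the denominator (the denominator being positive), hence $\Lambda_{\alpha,i,j}>1$, which is the claim.

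This is an entirely routine verification; there is essentially no obstacle, only bookkeeping — the one point requiring a word of care is the sign analysis ensuring the denominator in the displayed quotient is genuinely positive (so that dividing through preserves the direction of the strict inequality), and the observation that the distinctness of the $\lambda_i$'s is exactly what makes the off-diagonal indices fall into the first branch of \eqref{Lambda def}. I would present the argument in the two bullet-free sentences above, with the two displays, and close with ``which proves the corollary.''

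\begin{proof}
If $\lambda_i=\lambda_j$ then $\Lambda_{\alpha,i,j}=1$ by \eqref{Lambda def}; since $P_1,\ldots,P_r$ are the spectral projections corresponding to the \emph{distinct} eigenvalues $\lambda_1,\ldots,\lambda_r$ of $\sigma^{-1/2}\rho\sigma^{-1/2}$, for $i\ne j$ we have $\lambda_i\ne\lambda_j$, so the first branch of \eqref{Lambda def} applies. Writing $\lambda:=\lambda_i$, $\eta:=\lambda_j\in(0,+\infty)$ with $\lambda\ne\eta$, and using that the three functions $x\mapsto x$, $x\mapsto x^{\alpha}$, $x\mapsto x^{1-\alpha}$ are strictly increasing on $(0,+\infty)$ — so that $\lambda-\eta$, $\lambda^{\alpha}-\eta^{\alpha}$, $\lambda^{1-\alpha}-\eta^{1-\alpha}$ all share the same sign and hence their relevant products are positive — we may divide through by $(\lambda-\eta)^2>0$ to obtain
\begin{align*}
\Lambda_{\alpha,i,j}
=
\frac{\dfrac{\log\lambda-\log\eta}{\lambda-\eta}}
{\dfrac{1}{\alpha}\dfrac{\lambda^{\alpha}-\eta^{\alpha}}{\lambda-\eta}\cdot
\dfrac{1}{1-\alpha}\dfrac{\lambda^{1-\alpha}-\eta^{1-\alpha}}{\lambda-\eta}}>0,
\end{align*}
where the denominator is strictly positive. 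By the inequality \eqref{logmean bound}, the numerator strictly exceeds the denominator, and therefore $\Lambda_{\alpha,i,j}>1$.
\end{proof}
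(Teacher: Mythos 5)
Your proof is correct and follows exactly the paper's route: the paper's own proof of this corollary is literally ``Immediate from \eqref{logmean bound}'', and your argument is just that one-line proof with the routine bookkeeping (sign analysis and rewriting $\Lambda_{\alpha,i,j}$ as the quotient of the two sides of \eqref{logmean bound}) written out. Your reading that $i\ne j$ forces $\lambda_i\ne\lambda_j$ — i.e.\ that the $P_i$ are the spectral projections of the \emph{distinct} eigenvalues — is the interpretation the corollary needs and the one used in its application (where $\lambda_1\ne\lambda_2$ is guaranteed by $\rho\sigma\ne\sigma\rho$), so no gap there.
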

\begin{proof}
Immediate from \eqref{logmean bound}.
\end{proof}

Note that we may take the $P_i$ in Lemma \ref{lemma:Dmax derivative} to be rank $1$, i.e., $P_i=\pr{e_i}$, $i=1,\ldots,d$,
for some orthonormal eigenbasis of $\sigma^{-1/2}\rho \sigma^{-1/2}$. Then 
\eqref{Dmax derivative1} can be rewritten as
\begin{align}
\partial_{\pi_{\hil}}
&=
-1+\frac{1}{d}\sum_{i,j}
\underbrace{\inner{e_i}{\sigma e_j}}_{=:S_{i,j}}
\underbrace{\inner{e_j}{\sigma\inv e_i}}_{=(S\inv)\trans_{i,j}}
\cdot
\Lambda_{\alpha,i,j}\label{Dmax derivative2}\\
&=
-1+\inner{u}{(S\star (S\inv)\trans\star\Lambda_{\alpha})u},
\label{Dmax derivative3}
\end{align}
where $u=\frac{1}{\sqrt{d}}(1,1,\ldots,1)$ and $A\star B$ denotes the component-wise 
(also called Hadamard, or Schur) product of two matrices $A$ and $B$. 

Next, note that  
\begin{align*}
(S\inv)_{j,i}=(-1)^{i+j}\frac{\det([S]_{i,j})}{\det S},
\end{align*}
where $[S]_{i,j}$ is the matrix that we get by omitting the $i$-th row and $j$-th 
column of $S$. 
Thus, \eqref{Dmax derivative2} can be rewritten as 
\begin{align*}
\partial_{\pi_{\hil}}
&=
-1+\frac{1}{d}\sum_{i=1}^d\frac{1}{\det S}
\sum_{j=1}^d (-1)^{i+j}S_{i,j}\det([S]_{i,j})\Lambda_{\alpha,i,j}.
\end{align*}
Note that for every $i$, 
\begin{align*}
\frac{1}{\det S}\sum_{j=1}^d (-1)^{i+j}S_{i,j}\det([S]_{i,j})=
(SS\inv)_{i,i}=1.
\end{align*}
\medskip

\begin{thm}\label{thm:maxRenyi strictly larger}
Let $\rho,\sigma\in\B(\hil)\pp$, where $\dim\hil=2$, and assume that
$\rho\sigma\ne\sigma\rho$. Then 
\begin{align*}
D_{\alpha}^{\bary,\max}(\rho\|\sigma)<D_{\alpha}^{\max}(\rho\|\sigma),
\ds\ds\alpha\in(0,1).
\end{align*}
\end{thm}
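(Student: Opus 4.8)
The plan is to combine Proposition \ref{prop:max Renyi as geometric} with Lemma \ref{lemma:Dmax derivative}, Corollary \ref{Lambda bound}, and the explicit $2\times 2$ structure developed in the paragraphs preceding the statement. By Proposition \ref{prop:max Renyi as geometric}, the normalized Kubo--Ando mean $\what{\sigma\#_{\alpha}\rho}$ is an optimal $\omega$ in the variational formula \eqref{max Renyi var} for $D_{\alpha}^{\max}(\rho\|\sigma)$ (up to the additive constant $-\tfrac{1}{\alpha-1}\log\Tr\rho$), that is, it minimizes $g(\omega):=\frac{\alpha}{1-\alpha}D^{\max}(\omega\|\rho)+D^{\max}(\omega\|\sigma)$ over the analogue of the reverse-test image. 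The key point is that $\what{\sigma\#_{\alpha}\rho}$ need \emph{not} be the minimizer of the same functional over \emph{all} states $\omega\in\S(\hil)$, and the latter infimum is (up to the same additive constant and the overall factor $\tfrac{1}{1-\alpha}$) exactly $D_{\alpha}^{\bary,\max}(\rho\|\sigma)$ by \eqref{Renyi bq2}. So it suffices to exhibit one state $\omega$ with $g(\omega)<g(\what{\sigma\#_{\alpha}\rho})$; since $\rho,\sigma$ are invertible, $D^{\max}$ is finite and continuous near $\what{\sigma\#_{\alpha}\rho}$, so it is enough to show that the directional derivative of $g$ at $\what{\sigma\#_{\alpha}\rho}$ toward the maximally mixed state $\pi_{\hil}=I/2$ is strictly negative.

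That directional derivative is precisely the quantity $\partial_{\pi_{\hil}}$ computed in Lemma \ref{lemma:Dmax derivative}; note that $g = \frac{1}{1-\alpha}\bigl[\alpha D^{\max}(\cdot\|\rho)+(1-\alpha)D^{\max}(\cdot\|\sigma)\bigr]$, so showing $\partial_{\pi_{\hil}}<0$ is equivalent to showing the $g$-derivative is negative. First I would invoke the rank-$1$ form \eqref{Dmax derivative2}--\eqref{Dmax derivative3}: with $e_1,e_2$ an orthonormal eigenbasis of $\sigma^{-1/2}\rho\sigma^{-1/2}$ with eigenvalues $\lambda_1,\lambda_2$, and $S_{i,j}=\inner{e_i}{\sigma e_j}$, we have
\begin{align*}
\partial_{\pi_{\hil}}=-1+\frac{1}{2}\sum_{i,j=1}^2 S_{i,j}(S\inv)_{j,i}\,\Lambda_{\alpha,i,j}.
\end{align*}
Using $S_{i,i}(S\inv)_{i,i}+S_{i,j}(S\inv)_{j,i}=(SS\inv)_{i,i}=1$ for $i\ne j$, and $\Lambda_{\alpha,i,i}=1$, a two-line rearrangement gives
\begin{align*}
\partial_{\pi_{\hil}}=-1+\frac{1}{2}\sum_{i=1}^2\bigl[S_{i,i}(S\inv)_{i,i}+S_{i,j(i)}(S\inv)_{j(i),i}\Lambda_{\alpha,i,j(i)}\bigr]
=\frac{1}{2}\sum_{i=1}^2 S_{i,j(i)}(S\inv)_{j(i),i}(\Lambda_{\alpha,i,j(i)}-1),
\end{align*}
where $j(1)=2$, $j(2)=1$. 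Now the noncommutativity hypothesis $\rho\sigma\ne\sigma\rho$ forces $\lambda_1\ne\lambda_2$ (otherwise $\sigma^{-1/2}\rho\sigma^{-1/2}$ is scalar, hence $\rho$ is a multiple of $\sigma$) and also forces the off-diagonal entry $S_{1,2}=\inner{e_1}{\sigma e_2}\ne 0$ (otherwise $\sigma$ is diagonal in the $e_i$ basis, whence $\rho=\sigma^{1/2}(\sum\lambda_i\pr{e_i})\sigma^{1/2}$ commutes with $\sigma$). For a positive definite $2\times 2$ matrix $S$ one computes directly that $S_{1,2}(S\inv)_{2,1}=S_{2,1}(S\inv)_{1,2}=-|S_{1,2}|^2/\det S<0$. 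By Corollary \ref{Lambda bound}, $\Lambda_{\alpha,1,2}=\Lambda_{\alpha,2,1}>1$, so each of the two summands is strictly negative, giving $\partial_{\pi_{\hil}}<0$, as desired.

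Putting it together: the strict negativity of $\partial_{\pi_{\hil}}$ means that for small $t>0$ the state $\omega_t:=(1-t)\what{\sigma\#_{\alpha}\rho}+t\pi_{\hil}$ satisfies $\frac{\alpha}{1-\alpha}D^{\max}(\omega_t\|\rho)+D^{\max}(\omega_t\|\sigma)<\frac{\alpha}{1-\alpha}D^{\max}(\what{\sigma\#_{\alpha}\rho}\|\rho)+D^{\max}(\what{\sigma\#_{\alpha}\rho}\|\sigma)$, which by \eqref{max Renyi var} and \eqref{Renyi bq2} (with $D^{\qn}=D^{\qo}=D^{\max}$) yields $D_{\alpha}^{\bary,\max}(\rho\|\sigma)<D_{\alpha}^{\max}(\rho\|\sigma)$. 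The main obstacle is the correct handling of Lemma \ref{lemma:Dmax derivative} --- both the proof of that lemma (deferred to Appendix \ref{sec:proof of lemma}, which I would need to carry out: it is a Fréchet-derivative computation for $D^{\max}$ using \eqref{opfunction derivative} and \eqref{BS def2}, together with the transpose-function trick from \eqref{persp trans}, as in \eqref{max Renyi var proof 1}--\eqref{max Renyi var proof 3}) and the algebraic simplification of $\partial_{\pi_{\hil}}$ in the $2\times 2$ case. The restriction to $\dim\hil=2$ enters only through the explicit identity $S_{1,2}(S\inv)_{2,1}<0$; in higher dimensions one would need a more careful estimate of the Hadamard product $S\star(S\inv)^{\mathrm T}\star\Lambda_\alpha$ against the all-ones vector, which is exactly why the general case is left as a conjecture.
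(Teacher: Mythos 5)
Your proposal is correct and follows essentially the same route as the paper: reduce the claim, via Proposition \ref{prop:max Renyi as geometric} and the variational formula \eqref{Renyi bq2}, to showing $\partial_{\pi_{\hil}}<0$ from Lemma \ref{lemma:Dmax derivative} and Corollary \ref{Lambda bound}, and then use that noncommutativity forces both $\lambda_1\ne\lambda_2$ and a nonzero off-diagonal entry of $S$ in dimension $2$. The only cosmetic difference is in the final algebra: you conclude with the rearrangement via $(SS\inv)_{i,i}=1$ and the cofactor identity $S_{1,2}(S\inv)_{2,1}=-|S_{1,2}|^2/\det S$ (machinery the paper itself sets up just before the theorem), whereas the paper's proof finishes with a Bloch-type parameterization of $S$; both yield the same sign conclusion.
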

\begin{proof}
By Corollary \ref{cor:barycentric sclaing}, we may assume that 
$\Tr\rho=\Tr\sigma=1$.
By the above, it is sufficient to prove that $\partial_{\pi_{\hil}}<0$.
Let $(e_1,e_2)$ be an orthonormal eigenbasis of 
$\sigma^{-1/2}\rho \sigma^{-1/2}$ with corresponding eigenvalues $\lambda_1,\lambda_2$. 
By assumption, $\rho\sigma\ne\sigma\rho$, which implies that $\lambda_1\ne\lambda_2$.
(In fact, $\lambda_1=\lambda_2$ $\iff$ $\rho=c\sigma$ for some $c>0$, in which case
$c=\lambda_1=\lambda_2$.)
Writing out everything in the ONB $(e_1,e_2)$, we have
\begin{align*}
S=\half\begin{bmatrix}1+z & x-iy \\ x+iy & 1-z\end{bmatrix}
\end{align*}
with some $r:=(x,y,z)\in\bR^3$ such that $\norm{r}^2=x^2+y^2+z^2<1$, and 
\begin{align*}
(S\inv)\trans=\frac{4}{1-\norm{r}^2}\cdot
\half\begin{bmatrix}1-z & -x-iy \\ -x+iy & 1+z\end{bmatrix},
\end{align*}
whence
\begin{align*}
S\star (S\inv)\trans=\frac{1}{1-\norm{r}^2}
\begin{bmatrix}1-z^2 & -(x^2+y^2) \\ -(x^2+y^2) & 1-z^2\end{bmatrix}\,.
\end{align*}
Hence, by \eqref{Dmax derivative2} and the symmetry $\Lambda_{\alpha,1,2}=\Lambda_{\alpha,2,1}$, 
\begin{align}
\partial_{\pi_{\hil}}
&=
-1+\frac{1}{1-\norm{r}^2}\Big[1-z^2-(x^2+y^2)\Lambda_{\alpha,1,2}\Big].
\end{align}
Since $\sigma$ is not diagonal in the given ONB (otherwise it would commute with $\rho$),
we have $(x^2+y^2)>0$. Combining this with Corollary \ref{Lambda bound}, 
we get $\partial_{\pi_{\hil}}<0$, as required.
\end{proof}

\section{Conclusion and outlook}

We defined a new family of additive and monotone quantum relative entropies
in Section \ref{sec:gamma relentr}, and 
in Section \ref{sec:barycentric}, we
gave a general procedure of defining monotone multi-variate quantum R\'enyi divergences from 
monotone quantum relative entropies via a variational formula. 
For the latter, probably the biggest open question is additivity. While it is clear 
from the definition that 
if all the generating relative entropies are additive on tensor products then 
$D_P^{\bary,\qv}$ is subadditive for any $P\in\P_f(\X)$, but it is not clear whether superadditivity holds, and 
at the moment we see no general argument to establish it. Of course, additivity may be 
easily established when an explicit expression for 
$D_P^{\bary,\qv}$ is available, as is the case when all the generating relative entropies 
coincide with the Umegaki relative entropy. This leads to another open question:
to find explicit expressions for $D_P^{\bary,\qv}$ at least for the most important quantum 
relative entropies; e.g., when $D^{q_x}=D^{\max}$ for all $x$, or more generally, 
when $D^{q_x}=D^{\Um,\#_{\gamma}}$ for some $\gamma\in(0,1)$. Of course, it might turn out 
that additivity does not hold in general; in this case the natural continuation 
would be the study of the regularized quantity
\begin{align*}
\oll{D}_P^{\bary,\qv}(W)
:=
\inf_{n\in\bN}\frac{1}{n}D_P^{\bary,\qv}(W^{\potimes n})
=
\lim_{n\to+\infty}\frac{1}{n}D_P^{\bary,\qv}(W^{\potimes n}),
\end{align*}
where the equality follows from subadditivity.
While this quantity is weakly additive by definition, 
establishing its additivity is still a non-trivial problem.
We remark that there are a number of notable $2$-variable quantum R\'enyi 
divergences that are not additive, but have very important regularizations; 
for instance, $D_{\alpha}^{\#}$ from \cite{FawziFawzi2021},
or the R\'enyi divergences considered very recently in 
\cite{Frenkel_integral,HircheTomamichel_integral}.

The study of multivariate quantum R\'enyi divergences seems to be a new initiative; 
the only other paper that we are aware of dealing with the subject is 
\cite{FuruyaLashkariOuseph2023}, and partly \cite{bunth2021equivariant}. There are of course a host of interesting open problems
in this direction, the most important probably being finding 
multi-variate extensions of the ($2$-variable) Petz-type and the sandwiched R\'enyi 
divergences. These have great operational significance in quantum information theory
as quantifiers of the trade-off between the operational quantities in problems characterized 
by two competing operational quantities, 
and the goal would be to find multi-variate extensions of these R\'enyi divergences that play 
a similar role in problems characterized by multiple competing operational quantities. 
Such problems include multi-state conversion problems and state exclusion, where 
definitive results in terms of multi-variate R\'enyi divergences have been 
obtained very recently in the classical case 
\cite{farooq2023asymptotic,MishraWildeNussbaum2023}.
As for the multi-variate extension of the sandwiched R\'enyi divergences, probably the most 
natural candidate is the regularized measured R\'enyi divergence
\begin{align}\label{regmeas multiR}
\oll{D}^{\meas}_P(W):=\lim_{n\to+\infty}\frac{1}{n}D^{\meas}_P(W^{\potimes n})=
\inf_{n\in\bN}\frac{1}{n}D^{\meas}_P(W^{\potimes n}).
\end{align}
The question of course is whether this has a closed-form expression, as is the case
for two variables, and whether this quantity has analogous operational interpretations
to the $2$-variable version. 
An alternative approach could be to extend the definition given in \cite{FawziFawzi2021} to 
the multi-variate case and then take regularization; here the choice of a multi-variate 
extension of the Kubo-Ando weighted geometric means emerges as a question at the first step.

While the maximal R\'enyi divergences have no known direct operational interpretation, 
they nevertheless have a distinguished role as the largest monotone R\'enyi divergences,
as well as because of their close connection to the weighted Kubo-Ando geometric 
means, which are fundamental quantities in matrix analysis. While their definition is straightforward also in the multi-variate case,
it is an open question whether they can be given by an explicit formula
(at least when the weights are non-negative), similarly to the $2$-variable case. 
Here it may be expected that a relation of the form 
$Q_P^{\max}(W)=\Tr G_P^q(W)$ holds, where $G_P^q$ is some multi-variate extension of the 
Kubo-Ando weighted geometric means; one natural candidate would be the Karcher mean
\cite{Bhatia_Holbrook2006,Moakher_matrixmean}. 
A closely related, but not completely identical question is whether a multi-variate extension of \eqref{max Renyi var} holds as
\begin{align*}
-\log Q_P^{\max}(W)=\sum_{x\in\X}P(x)D^{\max}\bz\frac{G_P^q(W)}{\Tr G_P^q(W)}\Big\|W_x\jz
\end{align*}
with some multi-variate non-commutative weighted geometric mean $G_P^q$.

\section*{Acknowledgments}

This work was partially funded by the
National Research, Development and 
Innovation Office of Hungary via the research grants K124152, KH129601, 
K146380, FK146643, and
by the Ministry of Culture and Innovation and the National Research, Development and Innovation Office within the Quantum Information National Laboratory of Hungary (Grant No. 2022-2.1.1-NL-2022-00004).
The work of P\'eter Vrana was further supported by a Bolyai fellowship of the Hungarian Academy of Sciences, and 
by the \'UNKP-21-5 New National Excellence Program of the
Ministry for Innovation and Technology from the source of the National
Research, Development and Innovation Fund.
Gergely Bunth was partially supported by the Momentum Program of the Hungarian Academy of Sciences (grant no. LP2021-15/2021).
We thank Mark Wilde for comments on a previous version of the paper, 
and an anonymous referee for several detailed comments that greatly helped to improve the presentation.
\appendix

\section{Operator perspective and absolutely continuous part}
\label{sec:oppersp}

\subsection{Absolutely continuous part of PSD operators}
\label{sec:abscont}

Given two positive semi-definite operators $\rho,\sigma\in\B(\hil)\p$, 
the \ki{absolutely continuous part} of $\rho$ with respect to $\sigma$ is defined as
\cite{AndoLebesgue1976}
\begin{align*}
\rho_{\sigma,\ac}:=\acc{\rho}{\sigma}:=[\sigma]\rho&:=
\max\{Z\in\B(\hil)\p:\,Z^0\le \sd^0,\,Z\le \ft\}\\
&=\max\{Z\in\B(\hil)\p:\,Z^0\le \sd^0\wedge \ft^0,\,Z\le \ft\}.
\end{align*}
The existence of the maximum is not obvious. However, using the block decomposition 
\begin{align*}
\rho=
\begin{bmatrix}\sigmasupp\ft \sigmasupp & \sigmasupp\ft \sigmasupp^{\perp} \\ \sigmasupp^{\perp}\ft \sigmasupp & \sigmasupp^{\perp}\ft \sigmasupp^{\perp}\end{bmatrix}
\end{align*}
where $\sigmasupp:=\sd^0$, the variational representation for the Schur complement 
(see, e.g., \cite[Theorem 2.4]{HiaiPetzMatrix}) gives
\begin{align}
\acc{\ft}{\sd}
&=\sigmasupp\rho \sigmasupp-\sigmasupp\rho (\sigmasupp^{\perp}\rho \sigmasupp^{\perp})\inv \rho \sigmasupp,\label{ac1}
\end{align} 
where, as usual, the inverse is in the generalized sense (i.e., on the support).

\begin{rem}
Ando used the notation $[\sigma]\rho$ in \cite{AndoLebesgue1976}, while we use 
$\rho_{\sigma,\ac}$ in the main body of this paper. For some of the discussions in this Appendix, however, the notation $\acc{\rho}{\sigma}$ seems more convenient.
\end{rem}

\begin{rem}
Note that 
\begin{align}\label{ac wrt support}
\acc{\ft}{\sd}=\acc{\ft}{\sigma^0}=
\acc{\rho}{\sigma^0\wedge\rho^0};
\end{align}
in particular, it only depends on the support of $\sigma$, and not on $\sigma$ itself.
It is also obvious from the definition that 
\begin{align}\label{ac support}
\acc{\ft}{\sd}^0=\rho^0\wedge\sigma^0.
\end{align}
Another easily verifiable identity is the iteration relation
\begin{align}\label{iterated abscont}
\acc{\acc{\rho}{\sigma_1}}{\sigma_2}=\acc{\rho}{\sigma_1^0\wedge\sigma_2^0}.	
\end{align}
\end{rem}

\begin{rem}\label{rem:invertible}
It is obvious that if $V:\,\ran\rho\to\hil$ is the canonical embedding then 
\begin{align*}
\acc{\rho}{\sigma}
=V\bz\acc{V^*\rho V}{V^*(\sigma^0\wedge\rho^0)V}\jz V^*.
\end{align*}
Note that here $V^*\rho V$ is an invertible PSD operator on $\B(\ran\rho)$. 
\end{rem}

The following properties are straightforward to verify from the definition:

\begin{itemize}
\item Isometric covariance: For any $\rho,\sigma\in\B(\hil)\p$ and isometry
$V:\,\hil\to\kil$, 
\begin{align}\label{abscont isometric cov}
\acc{V\rho V^*}{V\sigma V^*}=V\acc{\rho}{\sigma}V^*.
\end{align}

\item
Block additivity: If $\rho_1^0\vee\sigma_1^0\perp\rho_2^0\vee\sigma_2^0$ then 
$\acc{\rho_1}{\sigma_1}\perp \acc{\rho_2}{\sigma_2}$, and 
\begin{align*}
\acc{\rho_1+\rho_2}{\sigma_1+\sigma_2}
=
\acc{\rho_1}{\sigma_1}+\acc{\rho_2}{\sigma_2}.
\end{align*}

\item
Joint monotonicity: 
\begin{align*}
\rho_1\le\rho_2,\,\sigma_1\le\sigma_2\ds\imp\ds
\acc{\rho_1}{\sigma_1}\le \acc{\rho_2}{\sigma_2}.
\end{align*}

\item
Joint concavity: For any
$\rho_0,\rho_1,\sigma_0,\sigma_1\in\B(\hil)\p$ and any $t\in[0,1]$, 
\begin{align*}
\acc{(1-t)\rho_0+t\rho_1}{(1-t)\sigma_0+t\sigma_1}
\ge
(1-t)\acc{\rho_0}{\sigma_0}+
t\acc{\rho_1}{\sigma_1}.
\end{align*}

\item
Monotonicity under positive maps: For any $\rho,\sigma\in\B(\hil)\p$ and 
positive linear map $\map:\,\B(\hil)\to\B(\kil)$, 
\begin{align}\label{abscont mon}
\acc{\map(\rho)}{\map(\sigma)}\ge\map\bz\acc{\rho}{\sigma}\jz.
\end{align}
\end{itemize}


It was proved recently in \cite{Kosaki_ac} that 
\begin{align}\label{ac mult}
\acc{\rho_1\otimes\rho_2}{\sigma_1\otimes \sigma_2}= \acc{\rho_1}{\sigma_1}\otimes\acc{\rho_2}{\sigma_2},
\end{align}
where $\rho_k,\sigma_k$ are bounded positive semi-definite operators on possibly infinite-dimensional Hilbert spaces. Below we give an elementary proof (that is different from the one in \cite{Kosaki_ac}) in the finite-dimensional case. 

Recall that by definition, for any $C\in\B(\hil)\p$, $\egy_{\{0\}}(C)$ is the projection onto the kernel of $C$.

\begin{lemma}\label{lemma:ac1}
For any $\rho,\sigma\in\B(\hil)\p$,
\begin{align}\label{ac2}
	\acc{\ft}{\sd}=\sigma^0\rho^{1/2}\egy_{\{0\}}(\rho^{1/2}(\sigma^0)^{\perp}\rho^{1/2})\rho^{1/2} \sigma^0.
\end{align}
\end{lemma}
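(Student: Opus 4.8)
\textbf{Proof plan for \eqref{ac2}.}
The plan is to reduce to the invertible case via Remark~\ref{rem:invertible} and then compute the Schur-complement formula \eqref{ac1} explicitly. First I would restrict attention to $\ran\rho$: let $V:\,\ran\rho\to\hil$ be the canonical embedding, so that by Remark~\ref{rem:invertible}, $\acc{\rho}{\sigma}=V\bz\acc{V^*\rho V}{V^*(\sigma^0\wedge\rho^0)V}\jz V^*$, with $V^*\rho V$ invertible on $\B(\ran\rho)$. Using \eqref{ac wrt support}, only $\sigma^0$ (indeed $\sigma^0\wedge\rho^0$) matters, so I may replace $\sigma$ by the projection $S:=\sigma^0$. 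Thus it suffices to prove, for invertible $\rho\in\B(\hil)_{>0}$ and a projection $S\in\bP(\hil)$, that
\begin{align*}
\acc{\rho}{S}=S\rho^{1/2}\egy_{\{0\}}\bz\rho^{1/2}S^{\perp}\rho^{1/2}\jz\rho^{1/2}S,
\end{align*}
after which the general identity \eqref{ac2} follows by conjugating back with $V$ and checking that the right-hand side of \eqref{ac2} is compatible with this conjugation (this last compatibility is a short computation using $\sigma^0=SVV^*$-type bookkeeping and the fact that $\egy_{\{0\}}$ of a compression relates to the compression of $\egy_{\{0\}}$ on the relevant subspace).

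Next, for invertible $\rho$ and projection $S$, I would write everything in the block decomposition $\hil=\ran S\oplus\ran S^{\perp}$, with $\rho=\begin{bmatrix} A & B\\ B^* & C\end{bmatrix}$ where $A=S\rho S$, $B=S\rho S^{\perp}$, $C=S^{\perp}\rho S^{\perp}$, and $C$ is invertible on $\ran S^{\perp}$ (since $\rho>0$). The Schur-complement formula \eqref{ac1} gives $\acc{\rho}{S}=A-BC\inv B^*$, viewed as an operator supported on $\ran S$. So the task becomes the purely algebraic identity
\begin{align*}
A-BC\inv B^*=S\rho^{1/2}\egy_{\{0\}}\bz\rho^{1/2}S^{\perp}\rho^{1/2}\jz\rho^{1/2}S.
\end{align*}
To see this, I would analyze the operator $\egy_{\{0\}}\bz\rho^{1/2}S^{\perp}\rho^{1/2}\jz$: since $\rho^{1/2}$ is invertible, $\ker\bz\rho^{1/2}S^{\perp}\rho^{1/2}\jz=\rho^{-1/2}\ker(S^{\perp})=\rho^{-1/2}\ran S$, so $\egy_{\{0\}}\bz\rho^{1/2}S^{\perp}\rho^{1/2}\jz$ is the orthogonal projection onto $\rho^{-1/2}\ran S$. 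Writing $P_0$ for this projection, one has the standard formula $P_0=\rho^{-1/2}S\bz S\rho\inv S\jz\inv S\rho^{-1/2}$ (the inverse taken on $\ran S$), because $\rho^{-1/2}\ran S$ is the range of $\rho^{-1/2}S$ and the Gram operator of that map is $S\rho^{-1}S$. Substituting,
\begin{align*}
\rho^{1/2}P_0\rho^{1/2}=S\bz S\rho\inv S\jz\inv S,
\end{align*}
so the claimed identity reduces to $A-BC\inv B^*=\bz S\rho\inv S\jz\inv$ on $\ran S$, i.e. the classical fact that the Schur complement of $C$ in $\rho$ equals the inverse of the corresponding diagonal block of $\rho^{-1}$ (the $(1,1)$ block of $\rho^{-1}$ is $(A-BC^{-1}B^*)^{-1}$ by the block-matrix inversion formula). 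This is elementary linear algebra.

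The main obstacle I anticipate is not any single step but the careful bookkeeping in the reduction from the general PSD case to the invertible-$\rho$, projection-$\sigma$ case: one must verify that conjugation by the embedding $V:\ran\rho\to\hil$ transforms the closed-form right-hand side of \eqref{ac2} correctly, in particular that the outer factors $\sigma^0\rho^{1/2}(\cdots)\rho^{1/2}\sigma^0$ restrict properly and that $\egy_{\{0\}}$ of the compressed operator $V^*\rho^{1/2}(\sigma^0)^{\perp}\rho^{1/2}V$ matches the compression of $\egy_{\{0\}}\bz\rho^{1/2}(\sigma^0)^{\perp}\rho^{1/2}\jz$ to $\ran\rho$ — this uses that $\rho^{1/2}(\sigma^0)^{\perp}\rho^{1/2}$ already has range inside $\ran\rho$ and kernel containing $(\ran\rho)^{\perp}$. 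Once this is set up cleanly, everything else is the block computation above together with \eqref{ac1}. I would also double-check consistency with \eqref{ac support} ($\acc{\rho}{\sigma}^0=\rho^0\wedge\sigma^0$) as a sanity check on the final formula.
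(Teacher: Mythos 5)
Your argument is correct, but it follows a genuinely different route from the paper's. The paper never reduces to the invertible case: it works directly with general $\rho,\sigma\in\B(\hil)\p$, rewriting \eqref{ac1} as $\acc{\rho}{\sigma}=S\rho^{1/2}\bigl[I-X(X^*X)\inv X^*\bigr]\rho^{1/2}S$ with $S=\sigma^0$, $X=\rho^{1/2}S^{\perp}$ and the generalized inverse, and then showing by a Lagrange interpolation polynomial that $X(X^*X)\inv X^*=(XX^*)^0$, so that $I-(XX^*)^0=\egy_{\{0\}}(\rho^{1/2}S^{\perp}\rho^{1/2})$ and \eqref{ac2} drops out in one line. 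You instead pass through Remark~\ref{rem:invertible} and \eqref{ac wrt support} to an invertible $\rho$ and a projection, identify $\egy_{\{0\}}(\rho^{1/2}S^{\perp}\rho^{1/2})$ as the orthogonal projection onto $\rho^{-1/2}\ran S$ via the Gram/range-projection formula, and close with block-matrix inversion, i.e.\ $A-BC\inv B^*=(S\rho\inv S)\inv$ on $\ran S$. Note that your range-projection formula $T(T^*T)\inv T^*$ is essentially the same algebraic fact as the paper's polynomial step, only applied to $T=\rho^{-1/2}S$ rather than to $\rho^{1/2}S^{\perp}$; what your detour buys is the pleasant intermediate identity $\acc{\rho}{S}=(S\rho\inv S)\inv$ (the shorted operator as the inverse of the compressed inverse), while what it costs is the reduction bookkeeping you flagged. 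That bookkeeping does go through: with $\tilde\rho:=V^*\rho V$ and $\tilde S:=V^*(\sigma^0\wedge\rho^0)V$ one has $\rho^{1/2}V=V\tilde\rho^{1/2}$, the kernel projection of $\rho^{1/2}(\sigma^0)^{\perp}\rho^{1/2}$ splits as the compressed kernel projection plus $(\rho^0)^{\perp}$ (the latter killed by the surrounding $\rho^{1/2}$'s), and the middle operator $\rho^{1/2}\egy_{\{0\}}(\rho^{1/2}(\sigma^0)^{\perp}\rho^{1/2})\rho^{1/2}$ already has range inside $\ran(\sigma^0\wedge\rho^0)$, so the outer factors $\sigma^0$ act as the identity on it and match the compressed formula; a sanity check against \eqref{ac support} is then automatic. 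So your proposal is complete in substance, just longer than the paper's direct generalized-inverse computation.
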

\begin{proof}
Let $\sigmasupp:=\sigma^0$. Note that \eqref{ac1} can be written as 
\begin{align*}
\acc{\ft}{\sd}&=\sigmasupp\rho^{1/2}\rho^{1/2} \sigmasupp-\sigmasupp\rho^{1/2}\rho^{1/2}\sigmasupp^{\perp} (\sigmasupp^{\perp}\rho \sigmasupp^{\perp})\inv \sigmasupp^{\perp}\rho^{1/2}\rho^{1/2} \sigmasupp\\
&=
\sigmasupp\rho^{1/2}\left[I-\rho^{1/2}\sigmasupp^{\perp} (\sigmasupp^{\perp}\rho \sigmasupp^{\perp})\inv \sigmasupp^{\perp}\rho^{1/2}\right]\rho^{1/2} \sigmasupp\\
&=
\sigmasupp\rho^{1/2}\left[I-X(X^*X)\inv X^*\right]\rho^{1/2} \sigmasupp,
\end{align*}
where $X:=\rho^{1/2}\sigmasupp^{\perp}$. 
Let $f(x):=0$, $x=0$, and $f(x)=1/x$, $x>0$. Then there exists a 
(Lagrange interpolation) polynomial 
$p(x)=\sum_{k=1}^n c_kx^k$ such that $p(x)=f(x)$ at every $x\in\spec(X^*X)\cup\spec(XX^*)\cup\{0\}$, whence
$p(X^*X)=(X^*X)\inv$ and $p(XX^*)=(XX^*)\inv$. 
Thus, 
\begin{align*}
X(X^*X)\inv X^*=X\sum_{k=1}^n c_k(X^*X)^k X^*=\sum_{k=1}^n c_k(XX^*)^{k+1}
=(XX^*)\sum_{k=1}^n c_k(XX^*)^k=(XX^*)^0.
\end{align*}
Hence,
\begin{align}
\acc{\rho}{\sigma}
&=
\sigmasupp\rho^{1/2}\left[I-(XX^*)^0\right]\rho^{1/2} \sigmasupp=
\sigmasupp\rho^{1/2}\egy_{\{0\}}(\rho^{1/2}\sigmasupp^{\perp}\rho^{1/2})\rho^{1/2} \sigmasupp,
\end{align}
as stated.
\end{proof}

\begin{rem}\label{rem:kernel}
Note that $\psi\in\ker (\rho^{1/2}\sigmasupp^{\perp}\rho^{1/2})=\ker(\sigmasupp^{\perp}\rho^{1/2})$ if and only if $\sigmasupp\rho^{1/2}\psi=\rho^{1/2}\psi$. 
If, moreover, $\rho$ is invertible, then the above is further equivalent to 
$\psi\in\rho^{-1/2}(\ran \sigmasupp):=\{\rho^{-1/2}\phi:\,\phi\in\ran \sigmasupp\}$.
\end{rem}

Recall that for two projections $\sigmasupp_k\in\bP(\hil_k)$, $k=1,2$, 
\begin{align}\label{tensor}
\ran(\sigmasupp_1\otimes \sigmasupp_2)=(\ran \sigmasupp_1)\otimes(\ran \sigmasupp_2)=
\spann\{\psi_1\otimes\psi_2:\,\psi_k\in\ran \sigmasupp_k,\,k=1,2\}.
\end{align}

\begin{lemma}\label{lemma:ac2}
Let $\rho_k\in\B(\hil_k)\pp$, be invertible, and $\sigmasupp_k\in\bP(\hil_k)$, $k=1,2$. Then 
\begin{align*}
\egy_{\{0\}}((\rho_1\otimes\rho_2)^{1/2}(\sigmasupp_1\otimes \sigmasupp_2)^{\perp}(\rho_1\otimes\rho_2)^{1/2})
=
\egy_{\{0\}}(\rho_1^{1/2}\sigmasupp_1^{\perp}\rho_1^{1/2})
\otimes
\egy_{\{0\}}(\rho_2^{1/2}\sigmasupp_2^{\perp}\rho_2^{1/2}).
\end{align*}
\end{lemma}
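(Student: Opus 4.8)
The statement is essentially a multiplicativity fact for the kernel projection appearing in Lemma~\ref{lemma:ac1}, and the plan is to reduce it to the description of that kernel given in Remark~\ref{rem:kernel} together with the tensor-product identity \eqref{tensor}. First I would note that both sides are orthogonal projections, so it suffices to show that their ranges coincide. By Remark~\ref{rem:kernel} (applied to the invertible operator $\rho_k$), the range of $\egy_{\{0\}}(\rho_k^{1/2}\sigmasupp_k^{\perp}\rho_k^{1/2})$ is exactly $\rho_k^{-1/2}(\ran\sigmasupp_k)$, and likewise the range of the left-hand side projection is $(\rho_1\otimes\rho_2)^{-1/2}\bigl(\ran(\sigmasupp_1\otimes\sigmasupp_2)\bigr)$, since $\rho_1\otimes\rho_2$ is invertible on $\hil_1\otimes\hil_2$.

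The second step is to use $(\rho_1\otimes\rho_2)^{-1/2}=\rho_1^{-1/2}\otimes\rho_2^{-1/2}$ and the identity \eqref{tensor}, which gives $\ran(\sigmasupp_1\otimes\sigmasupp_2)=(\ran\sigmasupp_1)\otimes(\ran\sigmasupp_2)$. Applying the invertible operator $\rho_1^{-1/2}\otimes\rho_2^{-1/2}$ to this subspace yields $\bigl(\rho_1^{-1/2}(\ran\sigmasupp_1)\bigr)\otimes\bigl(\rho_2^{-1/2}(\ran\sigmasupp_2)\bigr)$; here one uses the elementary fact that for a factorized invertible operator $A_1\otimes A_2$ and subspaces $\mathcal V_k\subseteq\hil_k$ one has $(A_1\otimes A_2)(\mathcal V_1\otimes\mathcal V_2)=(A_1\mathcal V_1)\otimes(A_2\mathcal V_2)$, which follows immediately by evaluating on a spanning set of product vectors and invoking \eqref{tensor} once more for the image. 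Combining the two steps, the range of the left-hand side equals $\bigl(\rho_1^{-1/2}(\ran\sigmasupp_1)\bigr)\otimes\bigl(\rho_2^{-1/2}(\ran\sigmasupp_2)\bigr)$, which by the first step and \eqref{tensor} is precisely the range of $\egy_{\{0\}}(\rho_1^{1/2}\sigmasupp_1^{\perp}\rho_1^{1/2})\otimes\egy_{\{0\}}(\rho_2^{1/2}\sigmasupp_2^{\perp}\rho_2^{1/2})$. Since both sides are projections with the same range, they are equal.

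There is no serious obstacle here; the only point requiring a little care is the bookkeeping around generalized inverses versus genuine inverses. Because $\rho_k$ is assumed invertible, $\rho_k^{1/2}$ and $\rho_k^{-1/2}$ are honest bounded invertible operators, so $(\rho_1\otimes\rho_2)^{1/2}=\rho_1^{1/2}\otimes\rho_2^{1/2}$ and its inverse factorizes as well, and one never has to worry about supports shrinking. I would state explicitly at the start that invertibility of $\rho_1,\rho_2$ makes all the operators involved invertible on the respective (finite-dimensional) Hilbert spaces, and then the argument is a two-line manipulation of ranges. An alternative, slightly more computational route would avoid Remark~\ref{rem:kernel} and instead observe directly that $\rho_1^{1/2}\sigmasupp_1^{\perp}\rho_1^{1/2}\otimes I + I\otimes\rho_2^{1/2}\sigmasupp_2^{\perp}\rho_2^{1/2}$ and $(\rho_1\otimes\rho_2)^{1/2}(\sigmasupp_1\otimes\sigmasupp_2)^{\perp}(\rho_1\otimes\rho_2)^{1/2}$ have the same kernel using $(\sigmasupp_1\otimes\sigmasupp_2)^{\perp}=\sigmasupp_1^{\perp}\otimes I+\sigmasupp_1\otimes\sigmasupp_2^{\perp}$ and positivity, but the range-based argument above is cleaner and is the one I would write up.
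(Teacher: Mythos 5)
Your proof is correct and follows essentially the same route as the paper's: both identify the kernel of the tensor-product operator via Remark \ref{rem:kernel} (using invertibility of $\rho_1\otimes\rho_2$) as $(\rho_1\otimes\rho_2)^{-1/2}\bigl(\ran(\sigmasupp_1\otimes\sigmasupp_2)\bigr)$, factorize this through \eqref{tensor} into $\rho_1^{-1/2}(\ran\sigmasupp_1)\otimes\rho_2^{-1/2}(\ran\sigmasupp_2)$, and conclude that the two projections agree. Your extra remarks (equality of projections via equality of ranges, factorization of $(\rho_1\otimes\rho_2)^{\pm 1/2}$) just make explicit what the paper leaves implicit.
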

\begin{proof}
By Remark \ref{rem:kernel},
\begin{align*}
\ker\bz(\rho_1\otimes\rho_2)^{1/2}(\sigmasupp_1\otimes \sigmasupp_2)^{\perp}(\rho_1\otimes\rho_2)^{1/2}\jz
&=
(\rho_1\otimes\rho_2)^{-1/2}(\underbrace{\ran(\sigmasupp_1\otimes \sigmasupp_2)}_{=(\ran \sigmasupp_1)\otimes(\ran \sigmasupp_2)})\\
&=
\rho_1^{-1/2}(\ran \sigmasupp_1)\otimes\rho_2^{-1/2}(\ran \sigmasupp_2),
\end{align*}
and the assertion follows by \eqref{tensor}.
\end{proof}

\begin{prop}
Let $\rho_k,\sigma_k\in\B(\hil_k)\p$, $k=1,2$. Then 
\eqref{ac mult} holds.
\end{prop}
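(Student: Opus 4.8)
The plan is to reduce the general statement to the case of invertible operators, where Lemmas~\ref{lemma:ac1} and~\ref{lemma:ac2} apply directly, and then handle the general (possibly singular) case by passing to the supports. Recall that by \eqref{ac wrt support}, $\acc{\rho_k}{\sigma_k}=\acc{\rho_k}{\sigma_k^0\wedge\rho_k^0}$, and by \eqref{ac support}, $\acc{\rho_k}{\sigma_k}^0=\rho_k^0\wedge\sigma_k^0$. So the first step is to restrict each $\rho_k$ to its own support: let $V_k:\,\ran\rho_k\to\hil_k$ be the canonical embedding, so that $V_k^*\rho_k V_k$ is invertible on $\B(\ran\rho_k)$, and by Remark~\ref{rem:invertible},
\begin{align*}
\acc{\rho_k}{\sigma_k}=V_k\bz\acc{V_k^*\rho_k V_k}{V_k^*(\sigma_k^0\wedge\rho_k^0)V_k}\jz V_k^*.
\end{align*}
Using isometric covariance of the tensor product of embeddings (i.e.\ $V_1\otimes V_2$ is an isometry) together with \eqref{abscont isometric cov}, the claim \eqref{ac mult} for $(\rho_1,\sigma_1),(\rho_2,\sigma_2)$ reduces to the claim for $(V_1^*\rho_1V_1, V_1^*(\sigma_1^0\wedge\rho_1^0)V_1)$ and $(V_2^*\rho_2V_2, V_2^*(\sigma_2^0\wedge\rho_2^0)V_2)$, where now the first entry in each pair is invertible. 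So we may assume from the start that $\rho_1,\rho_2$ are invertible, and the projections $\sigmasupp_k:=\sigma_k^0$ are arbitrary.

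Now the second step is a direct computation in the invertible case. By Lemma~\ref{lemma:ac1},
\begin{align*}
\acc{\rho_1\otimes\rho_2}{\sigma_1\otimes\sigma_2}
&=
(\sigmasupp_1\otimes\sigmasupp_2)(\rho_1\otimes\rho_2)^{1/2}\egy_{\{0\}}\bz(\rho_1\otimes\rho_2)^{1/2}(\sigmasupp_1\otimes\sigmasupp_2)^{\perp}(\rho_1\otimes\rho_2)^{1/2}\jz(\rho_1\otimes\rho_2)^{1/2}(\sigmasupp_1\otimes\sigmasupp_2),
\end{align*}
where I have used that $(\sigma_1\otimes\sigma_2)^0=\sigmasupp_1\otimes\sigmasupp_2$. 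Applying Lemma~\ref{lemma:ac2} to rewrite the middle indicator function as $\egy_{\{0\}}(\rho_1^{1/2}\sigmasupp_1^{\perp}\rho_1^{1/2})\otimes\egy_{\{0\}}(\rho_2^{1/2}\sigmasupp_2^{\perp}\rho_2^{1/2})$, and using multiplicativity of the tensor product together with $(\rho_1\otimes\rho_2)^{1/2}=\rho_1^{1/2}\otimes\rho_2^{1/2}$, the right-hand side factors as
\begin{align*}
&\bz\sigmasupp_1\rho_1^{1/2}\egy_{\{0\}}(\rho_1^{1/2}\sigmasupp_1^{\perp}\rho_1^{1/2})\rho_1^{1/2}\sigmasupp_1\jz
\otimes
\bz\sigmasupp_2\rho_2^{1/2}\egy_{\{0\}}(\rho_2^{1/2}\sigmasupp_2^{\perp}\rho_2^{1/2})\rho_2^{1/2}\sigmasupp_2\jz
=
\acc{\rho_1}{\sigma_1}\otimes\acc{\rho_2}{\sigma_2},
\end{align*}
where the last equality is again Lemma~\ref{lemma:ac1}. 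This proves \eqref{ac mult} in the invertible case, and combined with the reduction in the first step, it proves it in general.

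I do not expect a genuine obstacle here: both the support reduction and the factorization are routine once Lemmas~\ref{lemma:ac1} and~\ref{lemma:ac2} are in place. The one point requiring a little care is making sure that after restricting to $\ran\rho_k$ the second argument is replaced by $\sigma_k^0\wedge\rho_k^0$ rather than $\sigma_k^0$ (so that it is a genuine projection on $\ran\rho_k$), and that $(\sigma_1^0\wedge\rho_1^0)\otimes(\sigma_2^0\wedge\rho_2^0)=(\sigma_1\otimes\sigma_2)^0\wedge(\rho_1\otimes\rho_2)^0$, which holds by \eqref{tensor}; with that bookkeeping, the isometric covariance \eqref{abscont isometric cov} does the rest. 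An alternative to the support reduction would be a continuity/limiting argument using $\rho_k+\ep I$, but this would require knowing that $\acc{\valt}{\valt}$ is continuous along $\rho_k+\ep I\searrow\rho_k$, which is less immediate than the algebraic reduction above, so I would prefer the support-restriction route.
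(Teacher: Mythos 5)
Your proof is correct and follows essentially the same route as the paper's: reduce to invertible $\rho_1,\rho_2$ via the canonical embeddings and Remark \ref{rem:invertible} (noting $(\sigma_1^0\wedge\rho_1^0)\otimes(\sigma_2^0\wedge\rho_2^0)=(\sigma_1\otimes\sigma_2)^0\wedge(\rho_1\otimes\rho_2)^0$), then combine Lemmas \ref{lemma:ac1} and \ref{lemma:ac2} with multiplicativity of the tensor product. The only difference is that you spell out the factorization step that the paper leaves as "immediate".
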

\begin{proof}
Let $\sigmasupp_k:=\sigma_k^0$, and 
$V_k:\,\ran\rho_k\to\hil_k$, $k=1,2$, be the canonical embeddings. 
Since $V_1\otimes V_2$ is the canonical embedding of $\ran(\rho_1\otimes\rho_2)$ into 
$\hil_1\otimes\hil_2$, and 
$(\sigmasupp_1\otimes \sigmasupp_2)\wedge(\rho_1\otimes\rho_2)^0
=\bz \sigmasupp_1\wedge\rho_1^0\jz\otimes \bz \sigmasupp_2\wedge\rho_2^0\jz$,
Remark \ref{rem:invertible} shows that it is sufficient to prove 
\eqref{ac mult} in the case when $\rho_1,\rho_2$ are invertible. 
The statement then follows immediately from Lemmas \ref{lemma:ac1}
and \ref{lemma:ac2}.
\end{proof}

\subsection{Operator perspective function}
\label{sec:oppersp2}

With the help of the absolute continuous part, an explicit expression can be given for the 
limit in the definition \eqref{persp def}--\eqref{persp extension} of the 
operator perspective function of an operator convex function $f:\,(0,+\infty)\to\bR$, under 
the extra assumption that 
\begin{align}\label{limits}
f(0^+):=\lim_{x\searrow 0}f(x),\ds\ds\ds
\tilde f(0^+):=\lim_{x\searrow 0}\tilde f(x)=
\lim_{x\to+\infty}f(x)/x,
\end{align}
are finite, where $\tilde f(x):=xf(1/x)$, $x\in(0,+\infty)$, is the transpose function of $f$.
For the rest, we assume that $f:\,(0,+\infty)\to\bR$ is continuous. 
We will always use the following convention:
\begin{align*}
&\text{if}\ds f(0^+)\in\bR\ds\text{then}\ds f(0):=f(0^+),\\
&\text{if}\ds \tilde f(0^+)\in\bR\ds\text{then}\ds \tilde f(0):=\tilde f(0^+),
\end{align*}
i.e., if $f(0^+)$ is finite then we extend 
$f$ to be a function on $[0,+\infty)$ with $f(0):=f(0^+)$, and if 
$\tilde f(0^+)$ is finite then we extend 
$\tilde f$ to be a function on $[0,+\infty)$ with $\tilde f(0):=\tilde f(0^+)$.
Note that 
\begin{align}
f(0^+)\in\bR,\ds \rho^0\le\sigma^0\ds\imp\ds &\exists\,\lim_{\ep\searrow 0}
(\sigma+\ep I)^{1/2}f\bz(\sigma+\ep I)^{-1/2}(\rho+\ep I)
(\sigma+\ep I)^{-1/2}\jz(\sigma+\ep I)^{1/2}\nn\\
&=
\sigma^{1/2}f\bz\sigma^{-1/2}\rho\sigma^{-1/2}\jz\sigma^{1/2},\label{persp expp2}\\
\tilde f(0^+)\in\bR,\ds \rho^0\ge\sigma^0\ds\imp\ds &\exists\,\lim_{\ep\searrow 0}
(\rho+\ep I)^{1/2}\tilde f\bz(\rho+\ep I)^{-1/2}(\sigma+\ep I)
(\rho+\ep I)^{-1/2}\jz(\rho+\ep I)^{1/2}\nn\\
&=
\rho^{1/2}\tilde f\bz\rho^{-1/2}\sigma\rho^{-1/2}\jz\rho^{1/2},\label{persp expp4}
\end{align}
simply due to the continuity of functional calculus. 
In particular,
\begin{align}
&f(0^+)\in\bR\ds\imp\ds 
\persp{f}(\rho,\sigma+\ep I)=
(\sigma+\ep I)^{1/2}f\bz(\sigma+\ep I)^{-1/2}\rho(\sigma+\ep I)^{-1/2}\jz(\sigma+\ep I)^{1/2},\ds\ds\ep\in(0,+\infty)\label{persp expp5}\\
&\tilde f(0^+)\in\bR\ds\imp\ds 
\persp{\tilde f}(\sigma,\rho+\ep I)
=
(\rho+\ep I)^{1/2}\tilde f\bz(\rho+\ep I)^{-1/2}\sigma(\rho+\ep I)^{-1/2}\jz(\rho+\ep I)^{1/2},
\ds\ds\ep\in(0,+\infty).\label{persp expp6}
\end{align}
Moreover, 
\begin{align}\label{oppersp equal supports}
\rho^0=\sigma^0\ds\imp\ds
\persp{f}(\rho\|\sigma)
=\sigma^{1/2}f\bz\sigma^{-1/2}\rho\sigma^{-1/2}\jz\sigma^{1/2}
=
\rho^{1/2}\tilde f\bz\rho^{-1/2}\sigma\rho^{-1/2}\jz\rho^{1/2},
\end{align}
where the last two expressions are interpreted with considering $\rho$ and $\sigma$
as invertible operators on their joint support, and therefore are well-defined even if
$f(0^+)$ or $\tilde f(0^+)$ is not finite.

It has been shown in \cite[Theorem 17]{Kosaki_ac} that
if $f:\,(0,+\infty)\to[0,+\infty)$ is an operator monotone function such that 
$f(0^+)=0=\tilde f(0^+)$, then 
for any $\rho,\sigma\in\B(\hil)\p$,
\begin{align}\label{perspective abscont}
\persp{f}(\rho,\sigma)
=
\persp{f}(\rho_{\sigma,\ac},\sigma)
=
\persp{f}(\rho,\sigma_{\rho,\ac})
=
\persp{f}(\rho_{\sigma,\ac},\sigma_{\rho,\ac}),
%
\end{align}
combining which with \eqref{ac wrt support}
yields also the expression
\begin{align}\label{perspective abscont2}
	\persp{f}(\rho,\sigma)&=
	\persp{f}(\rho_{P,\ac},\sigma_{P,\ac}),
\end{align}
where $P:=\rho^0\wedge\sigma^0$.
Combining the above with \cite[Proposition 3.25(i)]{HiaiMosonyi2017} yields
the expressions \eqref{persp explicit2-7}--\eqref{persp explicit2-6} 
below for the operator perspective function; we give the 
details for readers' convenience.

\begin{cor}\label{cor:oppersp explicit2}
Let $f:\,(0,+\infty)\to\bR$ be an operator convex function such that 
$f(0^+),\tilde f(0^+)<+\infty$. Then for any $\rho,\sigma\in\B(\hil)\p$, 
\begin{align}
\exists\,\persp{f}(\rho,\sigma)
&:=
\lim_{\ep\searrow 0}
(\sigma+\ep I)^{1/2}f\bz(\sigma+\ep I)^{-1/2}(\rho+\ep I)
(\sigma+\ep I)^{-1/2}\jz(\sigma+\ep I)^{1/2}\label{persp explicit2-7}
\\
&=\lim_{\ep\searrow 0}
(\sigma+\ep I)^{1/2}f\bz(\sigma+\ep I)^{-1/2}\rho
(\sigma+\ep I)^{-1/2}\jz(\sigma+\ep I)^{1/2}\label{persp explicit2-0}\\
%
&=
\sigma^{1/2}f\bz\sigma^{-1/2}\rho_{\sigma,\ac}\sigma^{-1/2}\jz\sigma^{1/2}
+\tilde f(0^+)(\rho-\rho_{\sigma,\ac})\label{persp explicit2-1}\\
&=
\sigma_{P,\ac}^{1/2}f\bz\sigma_{P,\ac}^{-1/2}\rho_{P,\ac}\sigma_{P,\ac}^{-1/2}\jz\sigma_{P,\ac}^{1/2}+\tilde f(0^+)(\rho-\rho_{P,\ac})+f(0^+)(\sigma-\sigma_{P,\ac})
\label{persp explicit2-2}\\
&=\lim_{\ep\searrow 0}
(\rho+\ep I)^{1/2}\tilde f\bz(\rho+\ep I)^{-1/2}(\sigma+\ep I)(\rho+\ep I)^{-1/2}\jz(\rho+\ep I)^{1/2}\label{persp explicit2-3}\\
&=\lim_{\ep\searrow 0}
(\rho+\ep I)^{1/2}\tilde f\bz(\rho+\ep I)^{-1/2}\sigma(\rho+\ep I)^{-1/2}\jz(\rho+\ep I)^{1/2}\label{persp explicit2-4}\\
&=
\rho^{1/2}\tilde f\bz\rho^{-1/2}\sigma_{\rho,\ac}\rho^{-1/2}\jz\rho^{1/2}
+f(0^+)(\sigma-\sigma_{\rho,\ac})\label{persp explicit2-5}\\
&=
\rho_{P,\ac}^{1/2}\tilde f\bz\rho_{P,\ac}^{-1/2}\sigma_{P,\ac}\rho_{P,\ac}^{-1/2}\jz\rho_{P,\ac}^{1/2}+\tilde f(0^+)(\rho-\rho_{P,\ac})+f(0^+)(\sigma-\sigma_{P,\ac}),
\label{persp explicit2-6}
\end{align}
where $P:=\rho^0\wedge\sigma^0$. 
\end{cor}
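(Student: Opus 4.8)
The plan is to establish the chain of equalities in Corollary~\ref{cor:oppersp explicit2} by reducing everything to the already-available results, rather than computing limits directly. The operator convexity of $f$ together with the finiteness of $f(0^+)$ and $\tilde f(0^+)$ is exactly the hypothesis of \cite[Proposition 3.25(i)]{HiaiMosonyi2017}, which gives that the limit in \eqref{persp explicit2-7} exists and equals the expression \eqref{persp explicit2-1}; this is the backbone of the argument, and I would invoke it at the outset. The finiteness assumptions also let us write $f=f_0+c\,\id$ where $f_0$ is operator convex with $f_0(0^+)=0$ and $\widetilde{f_0}(0^+)=0$ (subtract off the affine part carefully — actually one uses the standard integral representation of operator convex functions to split off the parts controlling the two endpoints), so that the deep input \eqref{perspective abscont}--\eqref{perspective abscont2} from \cite[Theorem~17]{Kosaki_ac} about the absolutely continuous part can be applied to the relevant pieces.

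First I would prove \eqref{persp explicit2-7}$=$\eqref{persp explicit2-0}: the difference between $f((\sigma+\ep I)^{-1/2}(\rho+\ep I)(\sigma+\ep I)^{-1/2})$ and $f((\sigma+\ep I)^{-1/2}\rho(\sigma+\ep I)^{-1/2})$ involves an $\ep I$ perturbation inside the argument, which vanishes as $\ep\searrow 0$ by continuity of functional calculus and a norm bound (the arguments stay in a fixed compact interval away from the issues, using $f(0^+)$ finite to control behaviour near $0$). Then \eqref{persp explicit2-0}$=$\eqref{persp explicit2-1}: here I would use \eqref{persp expp5} to rewrite, for $\ep>0$, the $\ep$-regularized expression as $\persp{f}(\rho,\sigma+\ep I)$, and then pass to the limit using the explicit formula \cite[Proposition 3.25(i)]{HiaiMosonyi2017} (or directly: split $\sigma^{-1/2}\rho_{\sigma,\ac}\sigma^{-1/2}$ on $\ran\sigma$ and note $\rho-\rho_{\sigma,\ac}$ contributes via $\tilde f(0^+)$ from the directions where $\rho$ escapes $\ran\sigma$). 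The identity \eqref{persp explicit2-1}$=$\eqref{persp explicit2-2} is then a matter of unfolding $\rho_{\sigma,\ac}$ onto $P\hil$ using the iteration relation \eqref{iterated abscont} and \eqref{ac wrt support}, together with $\rho_{\sigma,\ac}=\rho_{P,\ac}$ and the bookkeeping of the leftover terms $\tilde f(0^+)(\rho-\rho_{P,\ac})$ and $f(0^+)(\sigma-\sigma_{P,\ac})$; one checks that $f(0^+)(\sigma-\sigma_{P,\ac})$ is absent in \eqref{persp explicit2-1} precisely because there $\sigma^{1/2}f(\cdots)\sigma^{1/2}$ already ``sees'' all of $\sigma$.

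For the second half, \eqref{persp explicit2-3}--\eqref{persp explicit2-6}, I would invoke the transpose symmetry \eqref{persp trans}: $\persp{f}(\rho,\sigma)=\persp{\tilde f}(\sigma,\rho)$ whenever both sides are well-defined, and $\widetilde{\tilde f}=f$, $(\tilde f)(0^+)=\tilde f(0^+)$, $\widetilde{\tilde f}(0^+)=f(0^+)$ are all finite, so the already-proven equalities \eqref{persp explicit2-7}--\eqref{persp explicit2-2} applied to $\tilde f$ with $\rho,\sigma$ swapped yield \eqref{persp explicit2-3}--\eqref{persp explicit2-6} verbatim after renaming. The one subtlety is that \eqref{persp trans} is only asserted ``whenever both sides are well-defined'', so I must first confirm well-definedness of $\persp{\tilde f}(\sigma,\rho)$, which follows from $\tilde f$ being operator convex (since $f$ is) with $\tilde f(0^+),\,\widetilde{\tilde f}(0^+)=f(0^+)$ finite — i.e. $\tilde f$ satisfies the same standing hypotheses. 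Finally \eqref{persp explicit2-2}$=$\eqref{persp explicit2-6} by uniqueness of the limit (both equal $\persp{f}(\rho,\sigma)$), so the chain closes.

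The main obstacle I anticipate is the careful handling of the regularization-and-limit step \eqref{persp explicit2-0}$\to$\eqref{persp explicit2-1}, i.e. showing that the na\"ive limit $\lim_{\ep\searrow0}(\sigma+\ep I)^{1/2}f((\sigma+\ep I)^{-1/2}\rho(\sigma+\ep I)^{-1/2})(\sigma+\ep I)^{1/2}$ genuinely converges to the closed-form expression involving $\rho_{\sigma,\ac}$ and $\tilde f(0^+)$ — the directions along which $\rho$ has support outside $\ran\sigma$ cause $(\sigma+\ep I)^{-1/2}\rho(\sigma+\ep I)^{-1/2}$ to blow up like $1/\ep$, and one needs the asymptotics $f(x)\sim \tilde f(0^+)x$ as $x\to+\infty$ to extract the finite limit $\tilde f(0^+)(\rho-\rho_{\sigma,\ac})$. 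This is exactly what \cite[Proposition 3.25(i)]{HiaiMosonyi2017} is designed to supply, so provided I am allowed to cite it as stated, the difficulty is bookkeeping rather than genuinely new analysis; the remaining identities are purely algebraic manipulations with $\rho_{P,\ac}$, $\sigma_{P,\ac}$ and the two boundary constants.
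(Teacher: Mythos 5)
There is a genuine gap in the two steps that carry the analytic weight, namely \eqref{persp explicit2-7}$=$\eqref{persp explicit2-0} and \eqref{persp explicit2-0}$=$\eqref{persp explicit2-1}. Your argument for the first is that the perturbation inside the argument of $f$ vanishes and that ``the arguments stay in a fixed compact interval''; neither is true. The difference of the two arguments is $\ep(\sigma+\ep I)^{-1}$, whose norm tends to $1$ (not $0$) whenever $\sigma$ is singular, and on directions where $\rho^0\nleq\sigma^0$ the arguments $(\sigma+\ep I)^{-1/2}\rho(\sigma+\ep I)^{-1/2}$ blow up like $1/\ep$, so continuity of functional calculus on a compact set cannot be invoked. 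For the second step, rewriting the one-sidedly regularized expression as $\persp{f}(\rho,\sigma+\ep I)$ via \eqref{persp expp5} is fine, but \cite[Proposition 3.25(i)]{HiaiMosonyi2017} concerns the doubly regularized limit in \eqref{persp extension}, not $\lim_{\ep\searrow 0}\persp{f}(\rho,\sigma+\ep I)$; your ``or directly: split $\ldots$'' parenthetical is a heuristic, not a proof, and it is exactly here that the finite limit $\tilde f(0^+)(\rho-\rho_{\sigma,\ac})$ has to be extracted. A related imprecision: a decomposition $f=f_0+c\,\id$ with $f_0$ operator convex and $f_0(0^+)=\tilde f_0(0^+)=0$ does not exist in the form you state (such an $f_0$ is necessarily nonpositive, and a single linear term cannot normalize both endpoints); one must subtract the full affine part $f(0^+)+\tilde f(0^+)x$, and the remainder is the negative of a nonnegative operator monotone function --- this is precisely Lemma \ref{lemma:opconv to opmon}. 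Finally, \eqref{persp explicit2-1}$=$\eqref{persp explicit2-2} is not mere bookkeeping with $\rho_{\sigma,\ac}=\rho_{P,\ac}$: since $\sigma$ need not commute with $P$, it requires \eqref{perspective abscont2}, i.e.\ Kosaki's theorem applied after the reduction.

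The paper closes these gaps by a different mechanism than the one you sketch: it first treats nonnegative operator monotone $f$ with $f(0^+)=\tilde f(0^+)=0$, where existence of \eqref{persp explicit2-7} follows from monotone convergence, the one-sided limit \eqref{persp explicit2-0} is controlled by the sandwich $\persp{f}(\rho,\sigma)\le\persp{f}(\rho,\sigma+\ep I)\le\persp{f}(\rho+\ep I,\sigma+\ep I)$ (available only because $f$ is operator monotone), and \eqref{persp explicit2-1}--\eqref{persp explicit2-2} come from \eqref{perspective abscont}--\eqref{perspective abscont2}; the general operator convex case is then obtained by writing $g(x)=f(0^+)+\tilde f(0^+)x-f(x)$ and applying the special case to $g$, the affine part being trivial. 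Your transpose argument for \eqref{persp explicit2-3}--\eqref{persp explicit2-6} matches the paper and is fine. If you are permitted to cite a version of \cite[Proposition 3.25(i)]{HiaiMosonyi2017} that already covers the one-sided regularization, your skeleton can be salvaged; otherwise you need the monotone-case sandwich plus the $g$-reduction (or an equivalent argument) to justify \eqref{persp explicit2-0} and \eqref{persp explicit2-1}.
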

\begin{proof}
Assume first that $f$ is operator monotone with $f(0^+)=0=\tilde f(0^+)$.
The existence of the limit in \eqref{persp explicit2-7} follows from the fact that $f$ is 
non-negative and operator monotone, and by definition \eqref{persp extension}, the limit is equal to 
$\persp{f}(\rho,\sigma)$. 
By the operator monotonicity of $f$,
\begin{align*}
\persp{f}(\rho,\sigma)\le
\persp{f}(\rho,\sigma+\ep I)
\le 
\persp{f}(\rho+\ep I,\sigma+\ep I).
\end{align*}
Combining this with \eqref{persp expp5} yields the equality in 
\eqref{persp explicit2-0}.
By 
\eqref{perspective abscont} we have
\begin{align*}
\persp{f}(\rho,\sigma)=\persp{f}(\rho_{\sigma,\ac},\sigma)
&=
\sigma^{1/2}f\bz\sigma^{-1/2}\rho_{\sigma,\ac}\sigma^{-1/2}\jz\sigma^{1/2},
\end{align*} 
where the second equality is immediate from \eqref{persp expp2} due to $\rho_{\sigma,\ac}^0\le\sigma^0$.
This proves the equality in \eqref{persp explicit2-1}, and the 
equality in \eqref{persp explicit2-2} follows in the same way
using \eqref{perspective abscont2}. The equality in \eqref{persp explicit2-3} follows from 
\eqref{persp explicit2-7} by \eqref{persp trans}, and the rest of the equalities follow as above. 

Next we consider the general case. Let $g(x):=f(0^+)+\tilde f(0^+)x-f(x)$, $x\in[0,+\infty)$, which, by Lemma \ref{lemma:opconv to opmon} below, is a non-negative operator monotone function, and obviously, $g(0^+)=g(0)=0=\tilde g(0^+)$. Then 
\begin{align*}
&\lim_{\ep\searrow 0}
(\sigma+\ep I)^{1/2} f\bz(\sigma+\ep I)^{-1/2}(\rho+\ep I)(\sigma+\ep I)^{-1/2}\jz
(\sigma+\ep I)^{1/2}\\
&\ds=
\lim_{\ep\searrow 0}\left\{
f(0^+)(\sigma+\ep I)+\tilde f(0^+)(\rho+\ep I)-
(\sigma+\ep I)^{1/2} g\bz(\sigma+\ep I)^{-1/2}(\rho+\ep I)(\sigma+\ep I)^{-1/2}\jz
(\sigma+\ep I)^{1/2}\right\}\\
&\ds=
f(0^+)\sigma+\tilde f(0^+)\rho-
\sigma^{1/2}g\bz\sigma^{-1/2}\rho_{\sigma,\ac}\sigma^{-1/2}\jz\sigma^{1/2}\\
&\ds=
\sigma^{1/2}f\bz\sigma^{-1/2}\rho_{\sigma,\ac}\sigma^{-1/2}\jz\sigma^{1/2}+
\tilde f(0^+)(\rho-\rho_{\sigma,\ac}),
\end{align*}
where the second equality follows from 
applying the equality between 
\eqref{persp explicit2-7} and \eqref{persp explicit2-1} to $g$ in place of $f$,
and the rest are obvious. This proves the equality between 
\eqref{persp explicit2-7} and \eqref{persp explicit2-1}, 
and the equalities in  \eqref{persp explicit2-0} and \eqref{persp explicit2-2}
follow by analogous arguments. The equalities in  
\eqref{persp explicit2-3}--\eqref{persp explicit2-6} follow again from the above by using 
\eqref{persp trans}.
\end{proof}

The following observation can be proved from an integral representation of operator convex functions given in \cite[Proposition 8.4]{HMPB}, and 
has been used e.g., in the proof of \cite[Proposition 3.25]{HiaiMosonyi2017}
and in \cite[Section 6.1]{Matsumoto_newfdiv}. Here we give a very short and elementary 
proof that does not rely on any integral representation.

\begin{lemma}\label{lemma:opconv to opmon}
Let $f:\,[0,+\infty)\to\bR$ be an operator convex function such that 
$\tilde f(0^+)<+\infty$. Then 
\begin{align}\label{g def}
g(x):=f(0^+)+\tilde f(0^+)x-f(x),
\ds\ds\ds x\in[0,+\infty),
\end{align}
defines a non-negative operator concave function on $[0,+\infty)$, which is thus also operator monotone.
\end{lemma}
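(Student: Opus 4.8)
Let $f:\,[0,+\infty)\to\bR$ be operator convex with $\tilde f(0^+)<+\infty$. Then $g(x):=f(0^+)+\tilde f(0^+)x-f(x)$ is a non-negative operator concave function on $[0,+\infty)$.

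My plan is as follows.

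First, the proof that $g$ is operator concave is immediate once we note that $g$ is an affine function minus $f$: since $f$ is operator convex, $-f$ is operator concave, and adding the affine (hence simultaneously operator convex and operator concave) function $x\mapsto f(0^+)+\tilde f(0^+)x$ preserves operator concavity. Note also that $g(0)=g(0^+)=0$ and $\tilde g(0^+)=\lim_{x\to+\infty}g(x)/x=0$ by construction. The fact that a non-negative operator concave function on $[0,+\infty)$ is automatically operator monotone is standard (see, e.g., \cite[Theorem V.2.5]{Bhatia} or the references used elsewhere in the paper), so the only real content left is non-negativity of $g$, i.e. the inequality
\begin{align}\label{eq:goal}
f(x)\le f(0^+)+\tilde f(0^+)\,x,\ds\ds\ds x\in[0,+\infty).
\end{align}

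The key elementary observation is that for an operator convex (equivalently, here, just convex on the scalar level is enough for this particular inequality) function $f$ on $[0,+\infty)$, the difference quotient $x\mapsto \big(f(x)-f(0^+)\big)/x$ is monotone non-decreasing on $(0,+\infty)$; this is the standard ``slopes increase'' property of convex functions. Hence its limit as $x\to+\infty$ is its supremum, so
\begin{align}
\frac{f(x)-f(0^+)}{x}\le \lim_{y\to+\infty}\frac{f(y)-f(0^+)}{y}=\lim_{y\to+\infty}\frac{f(y)}{y}=\tilde f(0^+),\ds\ds\ds x\in(0,+\infty),
\end{align}
which rearranges to \eqref{eq:goal} for $x>0$, and the case $x=0$ is the trivial equality $f(0^+)=f(0^+)$. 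Thus $g(x)\ge 0$ for all $x\in[0,+\infty)$. I would spell out the ``slopes increase'' step only briefly: for $0<x_1<x_2$, writing $x_1=(1-t)\cdot 0^++t\cdot x_2$ with $t=x_1/x_2\in(0,1)$ and using convexity of $f$ at these scalar points gives $f(x_1)\le (1-t)f(0^+)+t f(x_2)$, i.e. $(f(x_1)-f(0^+))/x_1\le (f(x_2)-f(0^+))/x_2$.

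There is no serious obstacle here; the only point requiring a little care is making sure the limit $\tilde f(0^+)=\lim_{x\to+\infty}f(x)/x$ exists as a (finite) real number, which is exactly the hypothesis, and that the monotone difference-quotient argument indeed identifies this limit with the supremum so that \eqref{eq:goal} holds for every finite $x$. One should also recall explicitly (citing the relevant standard result) that a non-negative operator concave function on $[0,+\infty)$ is operator monotone, to justify the final clause of the statement; this is the one external fact the argument leans on, but it is classical and is of the same type as results already invoked in the paper.
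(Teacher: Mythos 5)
Your proposal is correct and follows essentially the same route as the paper: operator concavity of $g$ is immediate since $g$ is an affine function minus the operator convex $f$, non-negativity comes from the increasing-slopes property of convex functions giving $\frac{f(x)-f(0^+)}{x}\le\tilde f(0^+)$, and operator monotonicity follows from the standard fact that non-negative operator concave functions on $[0,+\infty)$ are operator monotone (Bhatia, Theorem V.2.5), exactly as in the paper's proof.
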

\begin{proof}
Note that the convexity of $f$ implies that 
$x\mapsto \frac{f(x)-f(0^+)}{x}$ is monotone increasing, whence
\begin{align*}
\tilde f(0^+)=\lim_{y\to+\infty} \frac{f(y)-f(0^+)}{y}\ge  \frac{f(x)-f(0^+)}{x},\ds\ds\ds
x\in(0,+\infty).
\end{align*}
Thus,
\begin{align*}
g(x):=f(0^+)+\tilde f(0^+)x-f(x)=x\bz\tilde f(0^+)-\frac{f(x)-f(0^+)}{x}\jz\ge 0,
\ds\ds\ds x\in(0,+\infty),
\end{align*}
whence $g$ is non-negative and operator concave, and this is well known to imply that it is also operator monotone; see, e.g., \cite[Theorem V.2.5.]{Bhatia}.
\end{proof}

\begin{rem}
It is easy to see that if $\rho$ and $\sigma$ are commuting then the equalities in 
\eqref{persp explicit2-7}--\eqref{persp explicit2-6} hold even under the much weaker assumption that $f$ is continuous and $f(0^+)^+$ and $\tilde f(0^+)$ are finite. 
\end{rem}

\subsection{Kubo-Ando geometric means}
\label{sec:KA}

Let us now apply the above for the power functions.
For positive definite operators $\rho,\sigma\in\B(\hil)\pp$ 
and $\gamma\in\bR\setminus\{0,1\}$, let  
\begin{align}
	\sigma\#_{\gamma}\rho
	&:=
	\sigma^{1/2}\bz \sigma^{-1/2}\rho\sigma^{-1/2}\jz^{\gamma}\sigma^{1/2}\label{gamma mean1}\\
	&=
	\persp{(\id^{\gamma})}(\rho,\sigma)
	=
	\persp{(\id^{1-\gamma})}(\sigma,\rho)\label{gamma mean2}\\
	&=
	\rho^{1/2}\bz \rho^{-1/2}\sigma\rho^{-1/2}\jz^{1-\gamma}\rho^{1/2}
	=
	\rho\#_{1-\gamma}\sigma,\label{gamma mean0}
\end{align}
(see \eqref{persp trans} for the equality in \eqref{gamma mean2}).
The definition is extended to pairs of general PSD operators
as
\begin{align}\label{KA extension}
	\sigma\#_{\gamma}\rho:=\lim_{\ep\searrow 0}(\sigma+\ep I)\#_{\gamma}(\rho+\ep I),
\end{align}
whenever the limit exists. In particular, 
we have
\begin{align}
	& \rho^0\le\sigma^0\ds\imp\ds
	\sigma\#_{\gamma}\rho
	=
	\sigma^{1/2}\bz \sigma^{-1/2}\rho\sigma^{-1/2}\jz^{\gamma}\sigma^{1/2},\ds\ds
	\gamma\in(0,+\infty)\setminus\{1\},\label{gamma mean4}\\
	& \rho^0\ge\sigma^0\ds\imp\ds
	\sigma\#_{\gamma}\rho
	=
	\rho^{1/2}\bz \rho^{-1/2}\sigma\rho^{-1/2}\jz^{1-\gamma}\rho^{1/2},\ds\ds
	\gamma\in(-\infty,0),
\end{align}
where the negative powers are taken on the support of the respective operators;
see, e.g., \cite{KA} and \cite[Proposition 3.26]{HiaiMosonyi2017}.

For $\gamma\in(0,1)$, $\#_{\gamma}$ is called
the \ki{Kubo-Ando $\gamma$-weighted geometric mean} \cite{KA}.
In this case Corollary \ref{cor:oppersp explicit2} yields that the limit in \eqref{KA extension} exists for any $\rho,\sigma\in\B(\hil)\p$, and is given explicitly by 
\begin{align}
\sigma\#_{\gamma}\rho
&=
\sigma^{1/2}\bz\sigma^{-1/2}\rho_{\sigma,\ac}\sigma^{-1/2}\jz^{\gamma}\sigma^{1/2}
=
\sigma_{P,\ac}^{1/2}\bz\sigma_{P,\ac}^{-1/2}\rho_{P,\ac}\sigma_{P,\ac}^{-1/2}\jz^{\gamma}\sigma_{P,\ac}^{1/2}\label{KA explicit1}\\
&=
\rho^{1/2}\bz\rho^{-1/2}\sigma_{\rho,\ac}\rho^{-1/2}\jz^{1-\gamma}\rho^{1/2}
=
\rho_{P,\ac}^{1/2}\bz\rho_{P,\ac}^{-1/2}\sigma_{P,\ac}\rho_{P,\ac}^{-1/2}\jz^{1-\gamma}\rho_{P,\ac}^{1/2},\label{KA explicit2}
\end{align}
where again $P:=\rho^0\wedge\sigma^0$.

When $\gamma\in\{0,1\}$, we define
\begin{align}
\sigma\#_0\rho&:=\lim_{\gamma\searrow 0}\sigma\#_{\gamma}\rho=
\sigma^{1/2}\bz\sigma^{-1/2}\rho_{\sigma,\ac}\sigma^{-1/2}\jz^0\sigma^{1/2}
=
\rho^{1/2}\bz\rho^{-1/2}\sigma_{\rho,\ac}\rho^{-1/2}\jz\rho^{1/2}=\sigma_{\rho,\ac},
\label{KA0}\\
\sigma\#_1\rho&:=\lim_{\gamma\nearrow 1}\sigma\#_{\gamma}\rho=
\rho^{1/2}\bz\rho^{-1/2}\sigma_{\rho,\ac}\rho^{-1/2}\jz^0\rho^{1/2}=
\sigma^{1/2}\bz\sigma^{1/2}\rho_{\sigma,\ac}\sigma^{1/2}\jz\sigma^{1/2}
=
\rho_{\sigma,\ac},
\label{KA1}
\end{align}
where the equalities follow from \eqref{KA explicit1}--\eqref{KA explicit2}.
Note that this is different from the usual convention, where 
$\sigma\#_0\rho$ is defined to be $\sigma$ and 
$\sigma\#_1\rho$ is defined to be $\rho$, which is what we would obtain
if we defined these quantities via \eqref{gamma mean1}--\eqref{KA extension} for $\gamma=0$ and $\gamma=1$. Note, however, that with the usual convention, 
$\gamma\mapsto \sigma\#_{\gamma}\rho$ is not right continuous at $\gamma=0$
unless $\sigma^0\le\rho^0$, and 
it is not left continuous at $\gamma=1$
unless $\sigma^0\ge\rho^0$.

It is well known and easy to see that for $\gamma\in(0,1)$,
the $\gamma$-weighted geometric means are monotone continuous in the sense that for any 
functions $(0,+\infty)\ni\ep\mapsto \rho_{\ep}\in\B(\hil)\p$ and 
$(0,+\infty)\ni\ep\mapsto \sigma_{\ep}\in\B(\hil)\p$
that are monotone decreasing in the PSD order,
\begin{align}\label{geom moncont}
	(\sigma_{\ep}\#_{\gamma}\rho_{\ep})\searrow 
	(\lim\nolimits_{\ep\searrow 0}\sigma_{\ep})\#_{\gamma}(\lim\nolimits_{\ep\searrow 0}\rho_{\ep}),\ds\ds\ds
	\gamma\in(0,1),
\end{align}
as $\ep\searrow 0$. 
However, this is clearly not true for $\gamma\in\{0,1\}$ with the definitions in \eqref{KA0}--\eqref{KA1}.

It is clear from \eqref{gamma mean1}--\eqref{KA1} that for any $\gamma\in[0,1]$,
and any $\rho,\rho_1,\rho_2,\sigma,\sigma_1,\sigma_2\in\B(\hil)\p$,
\begin{align}
	& (\sigma\#_{\gamma}\rho)^0=\sigma^0\wedge\rho^0;\label{wmean support}\\
	& \sigma\#_{\gamma}\rho
	=
	\rho\#_{1-\gamma}\sigma;\label{gamma mean3}\\
	&\rho_1\le\rho_2,\,\sigma_1\le\sigma_2\ds\imp\ds
	\sigma_1\#_{\gamma}\rho_1\le
	\sigma_2\#_{\gamma}\rho_2,\label{wmean montone}
\end{align}
and \eqref{KA explicit1}--\eqref{KA1} combined with \eqref{ac mult} yield that 
for any $\rho_k,\sigma_k\in\B(\hil_k)\pne$, $k=1,2$, 
\begin{align}\label{mean tensor}
	(\sigma_1\otimes\sigma_2)\#_{\gamma}(\rho_1\otimes\rho_2)
	=
	(\sigma_1\#_{\gamma}\rho_1)\otimes(\sigma_2\#_{\gamma}\rho_2).
\end{align}
Tensor multiplicativity can also be seen by noting that
for 
any sequence $\ep_1>\ep_2>\ldots\to 0$, and any $\gamma\in(0,1)$,
\begin{align*}
	&\left[(\sigma_1+\ep_n I)\otimes(\sigma_2+\ep_n I)\right]\#_{\gamma}
	\left[(\rho_1+\ep_n I)\otimes(\rho_2+\ep_n I)\right]\\
	&\ds\ds\ds\ds=
	\left[(\sigma_1+\ep_n I)\#_{\gamma}(\rho_1+\ep_n I)\right]
	\otimes
	\left[(\sigma_2+\ep_n I)\#_{\gamma}(\rho_2+\ep_n I)\right],
\end{align*}
follows immediately from the definition \eqref{gamma mean1},
and taking the limit $n\to+\infty$ yields \eqref{mean tensor} for $\gamma\in(0,1)$, from 
which \eqref{mean tensor} follows also for $\gamma\in\{0,1\}$ by taking the respective limit according to \eqref{KA0} or \eqref{KA1}.

The following is well known:
\begin{lemma}\label{lemma:KA Holder}
	For any $\rho,\sigma\in\B(\hil)\p$ and any $\gamma\in[0,1]$, 
	\begin{align}\label{KA Holder}
		\Tr(\sigma\#_{\gamma}\rho)\le\Tr\rho^{\gamma}\sigma^{1-\gamma}
		\le
		(\Tr\rho)^{\gamma}(\Tr\sigma)^{1-\gamma},
	\end{align}
	and for $\gamma\in(0,1)$, equality holds in the second inequality if and only if 
	$\rho$ and $\sigma$ are linearly dependent. In particular, for states
	$\rho,\sigma\in\S(\hil)$ and $\gamma\in(0,1)$,
	\begin{align}\label{KA Holder2}
		\Tr(\sigma\#_{\gamma}\rho)\le 1,\ds\text{and}\ds
		\Tr(\sigma\#_{\gamma}\rho)= 1\ds\iff\ds\rho=\sigma.
	\end{align}
\end{lemma}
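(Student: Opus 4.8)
The plan is to prove the two inequalities in \eqref{KA Holder} separately and then identify the equality cases.

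\textbf{Step 1: The first inequality $\Tr(\sigma\#_{\gamma}\rho)\le\Tr\rho^{\gamma}\sigma^{1-\gamma}$.}
First I would reduce to positive definite $\rho,\sigma$ by the continuity statement \eqref{geom moncont}: both sides are monotone limits as $\ep\searrow 0$ of the corresponding expressions for $\rho+\ep I,\sigma+\ep I$, using that $\Tr$ is continuous and that $\id^{1-\gamma}$ is operator monotone. (For $\gamma\in\{0,1\}$ the inequality is immediate from \eqref{KA0}--\eqref{KA1}, since $\Tr\sigma_{\rho,\ac}\le\Tr\sigma$ and $\Tr\rho_{\sigma,\ac}\le\Tr\rho$.) For invertible operators, the cleanest route is to invoke the fact that $\Tr\persp{(f_\gamma)}(\rho,\sigma)=\Tr\sigma\#_\gamma\rho=Q_\gamma^{\max}(\rho\|\sigma)$ equals the maximal R\'enyi quantity, which is the infimum over reverse tests of $Q_\gamma(w)$ by \eqref{maxdiv def}, while $\Tr\rho^\gamma\sigma^{1-\gamma}=Q_{\gamma,1}(\rho\|\sigma)$ is the Petz-type quantity, which is monotone under measurements; since the measured quantity is dominated by the maximal one (see Remark~\ref{rem:relentr maxmin} and Example~\ref{ex:meas Renyi}), we get $Q_\gamma^{\max}\le Q_{\gamma}^{\meas}\le Q_{\gamma,1}$. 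Alternatively, a self-contained proof: for invertible $\rho,\sigma$ write $A:=\sigma^{-1/2}\rho\sigma^{-1/2}$, so $\Tr(\sigma\#_\gamma\rho)=\Tr\sigma A^\gamma$ and $\Tr\rho^\gamma\sigma^{1-\gamma}=\Tr(\sigma^{1/2}A\sigma^{1/2})^\gamma\sigma^{1-\gamma}$, and apply the Araki--Lieb--Thirring inequality $\Tr(\sigma^{1/2}A\sigma^{1/2})^\gamma\ge\Tr\sigma^{\gamma/2}A^\gamma\sigma^{\gamma/2}=\Tr\sigma^\gamma A^\gamma$ together with $\Tr\sigma^\gamma A^\gamma\sigma^{1-\gamma}\cdot$--- hmm, this needs care; I would instead cite the known inequality $\Tr\sigma\#_\gamma\rho\le\Tr\rho^\gamma\sigma^{1-\gamma}$ from the literature (e.g.\ \cite{HiaiMosonyi2017}) since the lemma is explicitly flagged as ``well known''.

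\textbf{Step 2: The second inequality $\Tr\rho^{\gamma}\sigma^{1-\gamma}\le(\Tr\rho)^{\gamma}(\Tr\sigma)^{1-\gamma}$.}
This is the classical trace H\"older inequality: $\Tr\rho^\gamma\sigma^{1-\gamma}\le(\Tr(\rho^\gamma)^{1/\gamma})^{\gamma}(\Tr(\sigma^{1-\gamma})^{1/(1-\gamma)})^{1-\gamma}=(\Tr\rho)^\gamma(\Tr\sigma)^{1-\gamma}$, valid for $\gamma\in(0,1)$; for $\gamma\in\{0,1\}$ it is trivial (and degenerates). For the equality case, standard H\"older equality analysis: equality holds iff $\rho^\gamma$ and $\sigma^{1-\gamma}$ are proportional as operators, i.e.\ $\rho$ and $\sigma$ are proportional (here one uses $\gamma\in(0,1)$ so that $x\mapsto x^\gamma$ is strictly monotone, and that proportional positive operators have proportional powers). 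I would spell out that linear dependence of the nonzero operators $\rho,\sigma$ means $\rho=c\sigma$ for some $c>0$.

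\textbf{Step 3: The normalized statement \eqref{KA Holder2}.}
When $\rho,\sigma\in\S(\hil)$ and $\gamma\in(0,1)$, \eqref{KA Holder} gives $\Tr(\sigma\#_\gamma\rho)\le(\Tr\rho)^\gamma(\Tr\sigma)^{1-\gamma}=1$. For the equality case $\Tr(\sigma\#_\gamma\rho)=1$: this forces equality in the second inequality of \eqref{KA Holder}, hence $\rho=c\sigma$; taking traces gives $c=1$, so $\rho=\sigma$. Conversely if $\rho=\sigma$ then $\sigma\#_\gamma\sigma=\sigma$ and $\Tr\sigma=1$. I expect the main obstacle to be organizing Step 1 cleanly --- in particular deciding whether to cite the first inequality of \eqref{KA Holder} as ``well known'' (consistent with the lemma's phrasing) or to give the ALT-based argument, and making sure the reduction to the invertible case via \eqref{geom moncont} is stated correctly for the trace. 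Everything else is routine.
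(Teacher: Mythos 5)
Your final route coincides with the paper's proof: the first inequality in \eqref{KA Holder} is cited from the literature (the paper cites \cite[Example 4.5]{HiaiMosonyi2017} and \cite{AndoHiai1994}), the second inequality and its equality case are attributed to H\"older, and \eqref{KA Holder2} and the endpoint cases $\gamma\in\{0,1\}$ follow exactly as you describe (the paper obtains the endpoints by taking the limits $\gamma\searrow 0$ and $\gamma\nearrow 1$ via \eqref{KA0}--\eqref{KA1}, which is equivalent to your direct treatment). One caution about your discarded alternative: for $\gamma\in(0,1)$ the ordering $D_\gamma^{\meas}\le D_{\gamma,1}\le D_\gamma^{\max}$ translates, because of the negative prefactor $1/(\gamma-1)$, into $Q_\gamma^{\max}\le Q_{\gamma,1}\le Q_\gamma^{\meas}$, so your chain $Q_\gamma^{\max}\le Q_\gamma^{\meas}\le Q_{\gamma,1}$ has its second inequality reversed; the correct one-line argument is simply $D_{\gamma,1}\le D_\gamma^{\max}$ (CPTP-monotonicity of the Petz-type divergence plus maximality of $D_\gamma^{\max}$), which already gives $Q_\gamma^{\max}\le Q_{\gamma,1}$ without invoking the measured quantity.
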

\begin{proof}
For $\gamma\in(0,1)$, the
second inequality in \eqref{KA Holder} and its equality condition is a simple consequence
	of H\"older's inequality,
	the first inequality in \eqref{KA Holder} follows, e.g., from 
	\cite[Example 4.5]{HiaiMosonyi2017} or 
	\cite{AndoHiai1994}, and the
	assertion in \eqref{KA Holder2} follows immediately from the above.
	The inequalities in \eqref{KA Holder} for $\gamma\in\{0,1\}$ follow from the case
	$\gamma\in(0,1)$ by taking the respective limits $\gamma\searrow 0$ and $\gamma\nearrow 1$, according to \eqref{KA0}--\eqref{KA1}.
\end{proof}

The inequality between the first and the last terms in \eqref{KA Holder} is a special case of the following lemma. For a proof of the lemma
for $\gamma\in(0,1)$ see, e.g., \cite[Proposition 3.30]{HiaiMosonyi2017};
the cases $\gamma\in\{0,1\}$ follow immediately 
by taking the respective limits $\gamma\searrow 0$ and $\gamma\nearrow 1$, according to \eqref{KA0}--\eqref{KA1}.
\begin{lemma}\label{lemma:geommean monotone}
	For any $\rho,\sigma\in\B(\hil)\pne$, any $\gamma\in[0,1]$, and any positive linear map
	$\map:\,\B(\hil)\to\B(\kil)$, 
	\begin{align*}
		\map\bz\rho\#_{\gamma}\sigma\jz\le \map(\rho)\#_{\gamma}\map(\sigma).
	\end{align*}
\end{lemma}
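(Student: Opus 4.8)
The final statement to prove is Lemma~\ref{lemma:geommean monotone}: for any $\rho,\sigma\in\B(\hil)\pne$, any $\gamma\in[0,1]$, and any positive linear map $\map:\,\B(\hil)\to\B(\kil)$, we have $\map\bz\rho\#_{\gamma}\sigma\jz\le \map(\rho)\#_{\gamma}\map(\sigma)$.

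The plan is to reduce to the invertible case by a smoothing argument and then invoke the operator-convexity machinery already set up in the excerpt. First I would handle $\gamma\in(0,1)$, which is the substantive case, and deal with $\gamma\in\{0,1\}$ at the end by a limiting argument. For $\gamma\in(0,1)$, the key identity is $\rho\#_{\gamma}\sigma=\persp{(f_{\gamma})}(\sigma,\rho)=\persp{(\id^{1-\gamma})}(\rho,\sigma)$ from \eqref{gamma mean2}, i.e.\ the geometric mean is an operator perspective of an operator concave function (since $\id^{1-\gamma}$ is operator concave on $[0,+\infty)$ for $1-\gamma\in(0,1)$, and $f_\gamma=\id^\gamma$ is operator concave as well). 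The general principle is that operator perspectives of operator concave functions are superadditive/monotone under positive maps; concretely, for positive definite $\rho,\sigma$ one writes, with $\Phi$ positive and using the Kadison–Schwarz-type estimate available for positive unital maps after a suitable rescaling, the inequality $\Phi(\sigma^{1/2}g(\sigma^{-1/2}\rho\sigma^{-1/2})\sigma^{1/2})\le \Phi(\sigma)^{1/2}g(\Phi(\sigma)^{-1/2}\Phi(\rho)\Phi(\sigma)^{-1/2})\Phi(\sigma)^{1/2}$ for operator concave $g$ with $g\ge 0$. This is exactly the content behind the monotonicity of the maximal $f$-divergence under positive trace-preserving maps (Matsumoto), and indeed the excerpt already cites joint concavity and monotonicity properties of $\persp{f}$ in Appendix~\ref{sec:oppersp}; I would cite \cite{HiaiMosonyi2017,Matsumoto_newfdiv} (or the standard reference \cite{KA} together with \cite[Proposition 3.30]{HiaiMosonyi2017}, which is invoked for the scalar-trace special case in Lemma~\ref{lemma:KA Holder}) for the operator-valued monotonicity statement.

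The concrete steps, in order: (1) Assume first $\rho,\sigma\in\B(\hil)\pp$ and $\gamma\in(0,1)$. Reduce the positive map to a unital one: replace $\Phi$ by $\tilde\Phi(X):=\Phi(\sigma)^{-1/2}\Phi(\sigma^{1/2}X\sigma^{1/2})\Phi(\sigma)^{-1/2}$, which is positive and unital, and note $\rho\#_\gamma\sigma=\sigma^{1/2}\bz\sigma^{-1/2}\rho\sigma^{-1/2}\jz^{1-\gamma}\sigma^{1/2}$ by \eqref{gamma mean1} (after swapping the roles via \eqref{gamma mean3}, so that the function applied is $t\mapsto t^{1-\gamma}$ with argument $\sigma^{-1/2}\rho\sigma^{-1/2}>0$). (2) Apply the operator Jensen inequality for the operator concave function $t\mapsto t^{1-\gamma}$ and the unital positive map $\tilde\Phi$: $\tilde\Phi\bz (\sigma^{-1/2}\rho\sigma^{-1/2})^{1-\gamma}\jz \le \bz\tilde\Phi(\sigma^{-1/2}\rho\sigma^{-1/2})\jz^{1-\gamma}$; here one uses the Choi/Davis operator Jensen inequality, valid for positive unital maps and operator concave functions. (3) Conjugate back by $\Phi(\sigma)^{1/2}$ and observe $\Phi(\sigma)^{1/2}\tilde\Phi(\sigma^{-1/2}\rho\sigma^{-1/2})\Phi(\sigma)^{1/2}=\Phi(\rho)$, which yields $\Phi(\rho\#_\gamma\sigma)\le\Phi(\rho)\#_\gamma\Phi(\sigma)$ for invertible arguments. (4) Pass to general $\rho,\sigma\in\B(\hil)\pne$ using \eqref{KA extension}: apply the invertible-case inequality to $\rho+\ep I,\sigma+\ep I$, take $\ep\searrow 0$ on the left using continuity of $\Phi$ and \eqref{KA extension}, and on the right using \eqref{geom moncont} (monotone continuity of $\#_\gamma$), noting $\Phi(\rho+\ep I)\searrow\Phi(\rho)$ and $\Phi(\sigma+\ep I)\searrow\Phi(\sigma)$ monotonically since $\Phi$ is positive. (5) Finally, obtain $\gamma\in\{0,1\}$ by taking the limits $\gamma\searrow 0$ and $\gamma\nearrow 1$ in the already-established inequality, using \eqref{KA0}--\eqref{KA1}; both sides converge (the left by continuity of $\Phi$ applied to the convergence $\sigma\#_\gamma\rho\to\sigma_{\rho,\ac}$, etc.).

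The main obstacle I anticipate is step (2): the operator Jensen inequality in the form I want requires $\tilde\Phi$ to be a positive \emph{unital} map between matrix algebras, which the rescaling achieves, but one must be careful that the argument $\sigma^{-1/2}\rho\sigma^{-1/2}$ is a bona fide self-adjoint operator with spectrum in $(0,+\infty)\subseteq$ the domain interval of the operator concave function, and that $t\mapsto t^{1-\gamma}$ is operator concave on $[0,+\infty)$ (true for $0\le 1-\gamma\le 1$). If one wishes to avoid invoking operator Jensen directly, the cleanest alternative — and the one I would actually write up given the paper's conventions — is simply to cite \cite[Proposition 3.30]{HiaiMosonyi2017} (already used in this form in the proof of Lemma~\ref{lemma:KA Holder}) or the corresponding statement for maximal $f$-divergences, which gives the operator inequality $\Phi(\rho\#_\gamma\sigma)\le\Phi(\rho)\#_\gamma\Phi(\sigma)$ for invertible arguments and positive maps directly, so that only steps (1), (4), (5) need be carried out in detail. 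Either way the argument is short; the care is entirely in the limiting/smoothing step and in the reduction to a unital map.
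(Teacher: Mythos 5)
Your proposal is correct, and in its fallback form it coincides with what the paper actually does: the paper gives no self-contained proof of Lemma \ref{lemma:geommean monotone}, but simply cites \cite[Proposition 3.30]{HiaiMosonyi2017} for $\gamma\in(0,1)$ and obtains $\gamma\in\{0,1\}$ by the limits $\gamma\searrow 0$, $\gamma\nearrow 1$ via \eqref{KA0}--\eqref{KA1} — exactly your steps (1),(4),(5) plus the citation, and your treatment of the endpoint cases is identical to the paper's.

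Your self-contained route for $\gamma\in(0,1)$ (rescale $\map$ to the unital positive map $\tilde\map(X)=\map(\sigma)^{-1/2}\map(\sigma^{1/2}X\sigma^{1/2})\map(\sigma)^{-1/2}$, apply the Choi--Davis--Jensen inequality to the operator concave function $t\mapsto t^{1-\gamma}$, conjugate back, then $\ep$-smooth using \eqref{geom moncont}) is a genuine and essentially standard alternative to the citation, and it is sound. The one point you flag only obliquely is that positivity of $\map$ alone does not make $\map(\sigma)$ invertible even when $\sigma$ is, so $\tilde\map$ is a priori unital only as a map into $\B\bz\map(\sigma)^0\kil\jz$ and the back-conjugation gives $\map(\sigma)^0\map(\rho)\map(\sigma)^0$ rather than $\map(\rho)$ on the nose. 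This is easily closed: for invertible $\rho,\sigma$ one has $\rho\le c\sigma$ and $\rho\#_{\gamma}\sigma\le(1-\gamma)\rho+\gamma\sigma\le c'\sigma$, so $\map(\rho)$ and $\map(\rho\#_{\gamma}\sigma)$ are supported in $\map(\sigma)^0$ and the compressions are harmless, with the right-hand side identified via \eqref{gamma mean4}; alternatively, replace $\map$ by $\map_{\delta}(X)=\map(X)+\delta(\Tr X)I_{\kil}$ and let $\delta\searrow 0$ using \eqref{geom moncont} again. With either patch (or with your proposed citation, which sidesteps the issue entirely) the argument is complete.
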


\subsection{Maximal $f$-divergences}
\label{sec:maxfdiv}

The observations in Appendix \ref{sec:oppersp2} can be used to prove a regularity property of Matsumoto's maximal $f$-divergences, defined in \cite{Matsumoto_newfdiv}.
Most of the what follows below have been covered in greater generality in 
\cite{Hiai_fdiv_max} in the most general von Neumann algebra setting; see also 
\cite{HiaiUeadaWada2022} for further extensions. 
Our presentation below is more elementary, aimed for the general readership with a background in matrix analysis or quantum information theory. 

Recall that 
for a convex function $f:\,(0,+\infty)\to\bR$ and 
$\rho,\sigma\in[0,+\infty)^{\I}\setminus\{0\}$, 
where $\I$ is a non-empty finite set,
the classical $f$-divergence \cite{Csiszar_fdiv} of $\rho$ and $\sigma$ is defined as
\begin{align}
D_f(\rho\|\sigma)&:=
\sum_{i\in\I}\persp{f}(\rho(i),\sigma(i))=
\lim_{\ep\searrow 0}\sum_{i\in\I}(\sigma(i)+\ep)
f\bz\frac{\rho(i)+\ep}{\sigma(i)+\ep}\jz\nn\\
&=
\sum_{i:\,\rho(i)\sigma(i)>0}\sigma(i)f\bz\frac{\rho(i)}{\sigma(i)}\jz
+f(0^+)\sum_{i:\,\rho(i)=0}\sigma(i)
+\tilde f(0^+)\sum_{i:\,\sigma(i)=0}\rho(i),
\label{classical fdiv def}
\end{align}
where $f(0^+)$ and $\tilde f(0^+)$ are given in \eqref{limits}.
For $\rho,\sigma\in\B(\hil)\pne$, Matsumoto's \ki{maximal $f$-divergence} 
\cite{Matsumoto_newfdiv} is defined as
\begin{align}\label{maxfdiv def}
D_f^{\max}(\rho\|\sigma):=\inf\{D_f(p\|q):\,(p,q,\rt)\text{ reverse test for }(\rho,\sigma)\};
\end{align}
see Example \ref{ex:minmax extension} for the definition of a reverse test.
It was shown in \cite[Theorem 3]{Matsumoto_newfdiv} that if $f$ is operator convex 
with $f(0^+)=0$ then 
\begin{align}\label{Matsumoto explicit}
D_f^{\max}(\rho\|\sigma)
=
\Tr\sigma f\bz\sigma^{-1/2}\rho_{\sigma,\ac}\sigma^{-1/2}\jz
+
\tilde f(0^+)\Tr(\rho-\rho_{\sigma,\ac}),
\end{align}
and a trivial argument shows that the same holds also in the more general case that
$f(0^+)<+\infty$.
Combining this with Corollary \ref{cor:oppersp explicit2} yields the following
regularity properties of the maximal $f$-divergences.

\begin{cor}\label{cor:maxfdiv regularity}
Let $f:\,(0,+\infty)\to\bR$ be an operator convex function such that 
$f(0^+)<+\infty$. Then 
\begin{align}
D_f^{\max}(\rho\|\sigma)
&=
\lim_{\ep\searrow 0}D_f^{\max}(\rho+\ep I\|\sigma+\ep I)\label{maxfdiv limit1}\\
&=
\lim_{\ep\searrow 0}
\Tr(\sigma+\ep I) f\bz(\sigma+\ep I)^{-1/2}(\rho+\ep I)(\sigma+\ep I)^{-1/2}\jz
\label{maxfdiv limit2}\\
&=\lim_{\ep\searrow 0}
\Tr(\rho+\ep I)^{1/2}\tilde f\bz(\rho+\ep I)^{-1/2}(\sigma+\ep I)(\rho+\ep I)^{-1/2}\jz(\rho+\ep I)^{1/2}\label{maxfdiv limit2-1}\\
&=
\lim_{\ep\searrow 0}D_f^{\max}(\rho\|\sigma+\ep I)\label{maxfdiv limit3}\\
&=
\lim_{\ep\searrow 0}
\Tr(\sigma+\ep I) f\bz(\sigma+\ep I)^{-1/2}\rho(\sigma+\ep I)^{-1/2}\jz.
\label{maxfdiv limit4}
\end{align}
\end{cor}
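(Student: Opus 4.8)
The statement to prove is Corollary~\ref{cor:maxfdiv regularity}, which asserts five equal expressions for $D_f^{\max}(\rho\|\sigma)$ when $f$ is operator convex with $f(0^+)<+\infty$. My plan is to reduce everything to the operator perspective identities already established in Corollary~\ref{cor:oppersp explicit2}, via the explicit formula \eqref{Matsumoto explicit} for the maximal $f$-divergence and the observation that taking traces commutes with the limits in question.

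First I would record the two preliminary facts needed. From Lemma~\ref{lemma:opconv to opmon} (applied after first checking $\tilde f(0^+)<+\infty$, which follows from operator convexity of $f$ together with $f(0^+)<+\infty$: the convexity of $f$ forces $x\mapsto (f(x)-f(0^+))/x$ to be monotone, and operator convexity on $(0,+\infty)$ forces this slope to be bounded above, so $\tilde f(0^+)<+\infty$), the function $g(x):=f(0^+)+\tilde f(0^+)x-f(x)$ is a non-negative operator monotone function with $g(0^+)=\tilde g(0^+)=0$, so Corollary~\ref{cor:oppersp explicit2} applies to $f$. Second, \eqref{Matsumoto explicit} gives $D_f^{\max}(\rho\|\sigma)=\Tr\sigma f(\sigma^{-1/2}\rho_{\sigma,\ac}\sigma^{-1/2})+\tilde f(0^+)\Tr(\rho-\rho_{\sigma,\ac})$, and by \eqref{persp explicit2-1} of Corollary~\ref{cor:oppersp explicit2} the operator on the right of that formula is precisely $\persp{f}(\rho,\sigma)$ (extended to PSD operators); hence $D_f^{\max}(\rho\|\sigma)=\Tr\persp{f}(\rho,\sigma)$.

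Next I would prove the chain of equalities. For \eqref{maxfdiv limit1}: by \eqref{Matsumoto explicit} applied to $\rho+\ep I$ and $\sigma+\ep I$ (both invertible, so $\rho_{\sigma,\ac}=\rho$ there), $D_f^{\max}(\rho+\ep I\|\sigma+\ep I)=\Tr(\sigma+\ep I)f((\sigma+\ep I)^{-1/2}(\rho+\ep I)(\sigma+\ep I)^{-1/2})$, which is $\Tr$ of the operator inside the limit \eqref{persp explicit2-7}; taking $\ep\searrow 0$ and using the existence of that operator limit together with continuity of $\Tr$ gives both \eqref{maxfdiv limit1} and \eqref{maxfdiv limit2}, and identifies the common value with $\Tr\persp{f}(\rho,\sigma)=D_f^{\max}(\rho\|\sigma)$. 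The identity \eqref{maxfdiv limit2-1} follows the same way from the expression in \eqref{persp explicit2-3} (the transpose-function form), using $\persp{f}(\rho,\sigma)=\persp{\tilde f}(\sigma,\rho)$ from \eqref{persp trans}. For \eqref{maxfdiv limit3} and \eqref{maxfdiv limit4}: now $D_f^{\max}(\rho\|\sigma+\ep I)$ computed via \eqref{Matsumoto explicit} with second argument $\sigma+\ep I$ invertible gives $\Tr(\sigma+\ep I)f((\sigma+\ep I)^{-1/2}\rho(\sigma+\ep I)^{-1/2})$ (since $\rho_{\sigma+\ep I,\ac}=\rho$), which is $\Tr$ of the operator in \eqref{persp explicit2-0}; taking $\ep\searrow 0$ and invoking the equality of \eqref{persp explicit2-0} with $\persp{f}(\rho,\sigma)$ from Corollary~\ref{cor:oppersp explicit2} finishes both lines.

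The main point requiring care — the only genuine obstacle — is to make sure the interchange of $\Tr$ with the $\ep\searrow0$ limits is legitimate, i.e.\ that the relevant operator-valued limits in Corollary~\ref{cor:oppersp explicit2} genuinely exist (not merely that the traces converge). This is already guaranteed by that corollary under the stated hypotheses ($f$ operator convex, $f(0^+),\tilde f(0^+)<+\infty$), so the argument is essentially bookkeeping: verify the hypotheses transfer (in particular $\tilde f(0^+)<+\infty$ as above), cite \eqref{Matsumoto explicit} to rewrite each $D_f^{\max}$ of a perturbed pair as a trace of an explicit operator function, and then read off each of \eqref{maxfdiv limit1}--\eqref{maxfdiv limit4} from the corresponding line of Corollary~\ref{cor:oppersp explicit2} after applying $\Tr$ and continuity. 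A secondary subtlety is that \eqref{Matsumoto explicit} as stated in \cite{Matsumoto_newfdiv} assumes $f(0^+)=0$; I would note (as the excerpt does) that the extension to $f(0^+)<+\infty$ is a trivial rescaling, since adding an affine function $c+c'x$ to $f$ changes $D_f^{\max}$ and $\Tr\persp{f}$ by the same amount $c\Tr\sigma+c'\Tr\rho$.
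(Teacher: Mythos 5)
There is a genuine gap, and it is located exactly where you flagged "the only genuine obstacle" as already settled: your claim that $\tilde f(0^+)<+\infty$ follows automatically from operator convexity of $f$ together with $f(0^+)<+\infty$ is false. Operator convexity does not bound the slope at infinity: $f(x)=x^2$, $f(x)=x^{\alpha}$ with $\alpha\in(1,2]$, and $f(x)=x\log x$ are all operator convex with $f(0^+)=0$, yet $\tilde f(x)=x f(1/x)$ equals $1/x$, $x^{1-\alpha}$, $-\log x$ respectively, so $\tilde f(0^+)=+\infty$. These are not pathological edge cases for this paper — they are precisely the generating functions of $Q^{\max}_{\alpha}$ for $\alpha\in(1,2]$ and of the Belavkin--Staszewski relative entropy $D^{\max}$, i.e.\ exactly the instances in which Corollary \ref{cor:maxfdiv regularity} is later invoked (e.g.\ for \eqref{max Renyi smooth}) and in which $D_f^{\max}(\rho\|\sigma)=+\infty$ when $\rho^0\nleq\sigma^0$. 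Since Corollary \ref{cor:oppersp explicit2} requires \emph{both} $f(0^+)<+\infty$ and $\tilde f(0^+)<+\infty$, your reduction of the whole statement to that corollary (and the identification $D_f^{\max}(\rho\|\sigma)=\Tr\persp{f}(\rho,\sigma)$, which presupposes that the operator-valued limit defining $\persp{f}(\rho,\sigma)$ exists) simply does not apply when $\tilde f(0^+)=+\infty$; indeed in that regime with $\rho^0\nleq\sigma^0$ the operator limits do not exist and the traces diverge, so a separate argument is unavoidable.

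The paper's proof handles precisely this missing case by a split: (a) if $\rho^0\le\sigma^0$ then $\rho_{\sigma,\mathrm{ac}}=\rho$ and the existence of the limits in \eqref{maxfdiv limit2}, \eqref{maxfdiv limit4} and their agreement with \eqref{Matsumoto explicit} follow directly (continuity of functional calculus, no perspective machinery needed); (b) if $\tilde f(0^+)=+\infty$ and $\rho^0\nleq\sigma^0$, then \eqref{Matsumoto explicit} gives $+\infty$, and a direct computation (the paper points to the proof of Proposition 3.29 of Hiai--Mosonyi 2017) shows that the trace expressions in \eqref{maxfdiv limit2} and \eqref{maxfdiv limit4} also tend to $+\infty$; only in the remaining case $\tilde f(0^+)<+\infty$ does it invoke Corollary \ref{cor:oppersp explicit2}, which is where your argument and the paper's coincide. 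The equalities \eqref{maxfdiv limit1}$=$\eqref{maxfdiv limit2}, \eqref{maxfdiv limit3}$=$\eqref{maxfdiv limit4} via \eqref{Matsumoto explicit} for invertible (or invertible-second-argument) perturbations, the transpose identity for \eqref{maxfdiv limit2-1}, and the affine-shift reduction from $f(0^+)=0$ to $f(0^+)<+\infty$ are all fine in your write-up. To repair the proof you must drop the false implication, add the dichotomy on $\tilde f(0^+)$ and on $\rho^0\le\sigma^0$ versus $\rho^0\nleq\sigma^0$, and in the divergent case verify directly that $\Tr(\sigma+\ep I)f\bz(\sigma+\ep I)^{-1/2}\rho(\sigma+\ep I)^{-1/2}\jz\to+\infty$ (and similarly for \eqref{maxfdiv limit2}).
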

\begin{proof}
The equality of \eqref{maxfdiv limit1} and \eqref{maxfdiv limit2} is obvious from 
\eqref{Matsumoto explicit} whenever the limit in 
\eqref{maxfdiv limit2} exists, and the same holds for the equality of 
\eqref{maxfdiv limit3} and \eqref{maxfdiv limit4}.
The equality of \eqref{maxfdiv limit2} and \eqref{maxfdiv limit2-1} is immediate from 
\eqref{persp trans}, whenever either of the limits in 
\eqref{maxfdiv limit2} or \eqref{maxfdiv limit2-1} exists.
 
If $\rho^0\le\sigma^0$ then $\rho_{\sigma,\ac}=\rho$, and
the existence of the limits in  \eqref{maxfdiv limit2} and \eqref{maxfdiv limit4}
and their equality to \eqref{Matsumoto explicit} 
are obvious.

If $\tilde f(0^+)=+\infty$ and $\rho^0\nleq\sigma^0$ then 
the expression in \eqref{Matsumoto explicit} is equal to $+\infty$, 
and a straightforward computation show the same for 
\eqref{maxfdiv limit2} and \eqref{maxfdiv limit4}; see the proof of 
\cite[Proposition 3.29]{HiaiMosonyi2017} for details. 

Finally, if $\tilde f(0^+)<+\infty$ then the existence of the limits in 
\eqref{maxfdiv limit2} and \eqref{maxfdiv limit4}, as well as their equality to 
\eqref{Matsumoto explicit}, follow from Corollary \ref{cor:oppersp explicit2}.
\end{proof}


In the rest of this section, we give some comments on the formula 
\eqref{Matsumoto explicit} and some alternative expressions for $D_f^{\max}$. 

First, we note that a seemingly different quantity was studied in \cite{HiaiMosonyi2017}, defined 
for any operator convex function $f:\,(0,+\infty)\to\bR$ and $\rho,\sigma\in\B(\hil)\pne$ as
\begin{align*}
\what D_f(\rho\|\sigma):=\lim_{\ep\searrow 0}\Tr \persp{f}(\rho+\ep I,\sigma+\ep I).
\end{align*}
The existence of the limit follows immediately from (operator) convexity, as well as the joint convexity in 
$\rho$ and $\sigma$, and one can easily see that this is a quantum extension of the classical $f$-divergence in the sense of Definition \ref{def:q-extension}. It is also straightforward to verify that this quantity is stable under tensoring with the maximally mixed state, and therefore, by joint convexity, it is also monotone under CPTP maps, according to Lemma \ref{lemma:jointconc from mon}. Hence, 
\begin{align*}
D_f^{\max}(\rho\|\sigma)\ge \what D_f(\rho\|\sigma),\ds\ds\ds\rho,\sigma\in\B(\hil), 
\end{align*}
as explained in Example \ref{ex:minmax extension}. 
It is straightforward to verify that 
\begin{align*}
D_f^{\max}(\rho\|\sigma)=+\infty=\what D_f(\rho\|\sigma),\ds\ds\text{if}\ds\ds
f(0^+)=+\infty\text{ and }\rho^0\ngeq\sigma^0,\ds\text{or}\ds
\tilde f(0^+)=+\infty\text{ and }\rho^0\nleq\sigma^0,
\end{align*}
where the first equality is obvious by definition, and for the second, 
see, e.g., the proof of 
\cite[Proposition 3.29]{HiaiMosonyi2017}.
Obviously, 
$\what D_f(\rho\|\sigma)=\Tr\persp{f}(\rho,\sigma)$ whenever $\persp{f}(\rho,\sigma)$ is well-defined; in particular, we trivially have
\begin{align}
D_f^{\max}(\rho\|\sigma)\ge \what D_f(\rho\|\sigma)
&=
\begin{cases}
\Tr\sigma f\bz\sigma^{-1/2}\rho\sigma^{-1/2}\jz=
\Tr\rho\tilde f\bz\rho^{-1/2}\sigma\rho^{-1/2}\jz,&\rho^0=\sigma^0,\\
\Tr\sigma f\bz\sigma^{-1/2}\rho\sigma^{-1/2}\jz,&f(0^+)<+\infty,\,\rho^0\le\sigma^0,\\
\Tr\rho\tilde f\bz\rho^{-1/2}\sigma\rho^{-1/2}\jz,&
\tilde f(0^+)<+\infty,\,\rho^0\ge\sigma^0,
\end{cases}\label{Dmax lower bound1}
\end{align}
and if both $f(0^+)$ and $\tilde f(0^+)$ are finite then 
by Corollary \ref{cor:oppersp explicit2},
$\what D_f(\rho\|\sigma)$ is equal to the trace of any of the expressions 
in \eqref{persp explicit2-1}, \eqref{persp explicit2-2}, \eqref{persp explicit2-5},
\eqref{persp explicit2-6}, giving 
\begin{align}
D_f^{\max}(\rho\|\sigma)&\ge \what D_f\rho\|\sigma)\nn\\
&=
\Tr \sigma_{P,\mathrm{ac}}^{1/2}f\bz \sigma_{P,\mathrm{ac}}^{-1/2}\rho_{P,\mathrm{ac}}
\sigma_{P,\mathrm{ac}}^{-1/2}\jz\sigma_{P,\mathrm{ac}}^{1/2}
+f(0^+)\Tr(\sigma-\sigma_{P,\mathrm{ac}})
+\tilde f(0^+)\Tr(\rho-\rho_{P,\mathrm{ac}}).
\label{Dmax lower bound2}
\end{align}
Note the close resemblance between \eqref{Dmax lower bound2} and \eqref{classical fdiv def}.

Conversely, the inequalities in \eqref{Dmax lower bound1} and 
\eqref{Dmax lower bound2} can be shown to be equalities by 
giving reverse tests attaining the lower bound in each case, as was done 
in \cite{Matsumoto_newfdiv} for \eqref{Matsumoto explicit}.
In the case of \eqref{Dmax lower bound2}, this can be given 
by the following variant of Matsumoto's optimal reverse test \cite{Matsumoto_newfdiv}
(see Example \ref{ex:max Renyi}). 
Let $P:=\rho^0\wedge\sigma^0$, let 
$\lambda_1,\ldots,\lambda_r$ be the different eigenvalues of 
$\sigma_{P,\ac}^{-1/2}\rho_{P,\ac}\sigma_{P,\ac}^{-1/2}$ with corresponding spectral projections $E_1,\ldots,E_r$, and define  $\I:=[r+2]$.
Let $\tau_0\in\S(\hil)$ be arbitrary, and 
\begin{align*}
\tau_1:=\begin{cases}
\frac{\rho-\rho_{P,\mathrm{ac}}}{\Tr(\rho-\rho_{P,\mathrm{ac}})},&\rho^0\not\le\sigma^0,\\
\tau_0,&\text{otherwise},
\end{cases}\ds\ds\ds
\tau_2:=\begin{cases}
\frac{\sigma-\sigma_{P,\mathrm{ac}}}{\Tr(\sigma-\sigma_{P,\mathrm{ac}})},&\sigma^0\not\le\rho^0,\\
\tau_0,&\text{otherwise}.
\end{cases}\ds\ds\ds
\end{align*}
It is straightforward to verify that $(\hat p, \hat q,\hat\Gamma)$ defined as
\begin{align}\label{optimal reverse test2}
&\hat p(i):=\begin{cases}
\lambda_i\Tr \sigma_{P,\mathrm{ac}} E_i,&i\in[r],\\
\Tr(\rho-\rho_{P,\mathrm{ac}}),&i=r+1,\\
0,&i=r+2,
\end{cases}\ds\ds\ds\ds
\hat q(i):=
\begin{cases}
\Tr\sigma_{P,\mathrm{ac}} E_i,&i\in[r],\\
0,&i=r+1,\\
\Tr(\sigma-\sigma_{P,\mathrm{ac}}),&i=r+2
\end{cases}\\
&\hat\rt(\egy_{\{i\}}):=\begin{cases}
\frac{\sigma_{P,\mathrm{ac}}^{1/2}E_i\sigma_{P,\mathrm{ac}}^{1/2}}{\Tr\sigma_{P,\mathrm{ac}} P_i},&i\in[r],\,\Tr\sigma_{P,\mathrm{ac}} E_i\ne 0,\\
\tau_0,&i\in[r],\,\Tr\sigma_{P,\mathrm{ac}} E_i=0,\\
\tau_1,&i=r+1,\\
\tau_2,&i=r+2,
\end{cases}
\end{align}
gives a reverse test for $(\rho,\sigma)$, and therefore
\begin{align}
D_f^{\max}(\rho\|\sigma)&\le D_f(\hat p\|\hat q)\nn\\
&=
\Tr \underbrace{\sigma_{P,\mathrm{ac}}^{1/2}f\bz \sigma_{P,\mathrm{ac}}^{-1/2}\rho_{P,\mathrm{ac}}
\sigma_{P,\mathrm{ac}}^{-1/2}\jz\sigma_{P,\mathrm{ac}}^{1/2}}_{
\persp{f}(\rho_{P,\mathrm{ac}},\sigma_{P,\mathrm{ac}})}
+f(0^+)\Tr(\sigma-\sigma_{P,\mathrm{ac}})
+\tilde f(0^+)\Tr(\rho-\rho_{P,\mathrm{ac}}),
\label{Matsumoto upper bound}
\\
&=:\wtilde D_f(\rho\|\sigma),
\end{align}
where the equality can be verified by a straightforward computation.
Note that this is valid even if the finiteness of 
$f(0^+)$ and $\tilde f(0^+)$ are not assumed, when the expression in 
\eqref{Matsumoto upper bound} in interpreted 
using the convention explained below \eqref{oppersp equal supports}.

Finally, we show an alternative proof of the fact that 
\eqref{Matsumoto upper bound} is a lower bound on 
$D_f^{\max}$ (see \eqref{Dmax lower bound2}) which does not rely on 
the results of \cite{Kosaki_ac} and Corollary \ref{cor:oppersp explicit2}.

In view of \eqref{perspective abscont}, the following is a special case of \cite[Proposition 3.30]{HiaiMosonyi2017}. We give a streamlined proof below for the special case in consideration, which does not rely on 
\eqref{perspective abscont}.

\begin{lemma}\label{lemma:maxfdiv mon}
Let $f:\,(0,+\infty)\to(-\infty,0)$ be a non-positive operator convex function
such that $f(0^+),\tilde f(0^+)<+\infty$. 
Then for any $\rho,\sigma\in\B(\hil)\pne$ and 
positive trace-preserving map $\map:\,\B(\hil)\to\B(\kil)$,
\begin{align*}
\persp{f}(\map(\rho)_{\map(\sigma),\ac},\map(\sigma)_{\map(\rho),\ac})
\le
\map\bz\persp{f}\bz\rho_{\sigma,\ac},\sigma_{\rho,\ac}\jz\jz.
\end{align*}
\end{lemma}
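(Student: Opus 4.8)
The statement I want to establish is the monotonicity inequality
\[
\persp{f}(\map(\rho)_{\map(\sigma),\ac},\map(\sigma)_{\map(\rho),\ac})
\le
\map\bz\persp{f}\bz\rho_{\sigma,\ac},\sigma_{\rho,\ac}\jz\jz
\]
for a non-positive operator convex $f$ with finite one-sided limits $f(0^+),\tilde f(0^+)$, a positive trace-preserving map $\map$, and $\rho,\sigma\in\B(\hil)\pne$. The plan is to reduce to the well-known monotonicity of the operator perspective under positive (trace-preserving) maps on pairs with matched supports, and then absorb the passage to absolutely continuous parts by purely support-theoretic identities already collected in Appendix \ref{sec:abscont}. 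First I would note that by \eqref{perspective abscont}--\eqref{perspective abscont2} and the convention below \eqref{oppersp equal supports}, both sides can be rewritten using the operators restricted to their common support; concretely, with $P:=\rho^0\wedge\sigma^0$ we have $\persp{f}\bz\rho_{\sigma,\ac},\sigma_{\rho,\ac}\jz=\persp{f}(\rho_{P,\ac},\sigma_{P,\ac})$, where $\rho_{P,\ac}$ and $\sigma_{P,\ac}$ have equal support $P$, so that the perspective is genuinely the ``matched-support'' perspective and the usual DPI applies to it.

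Next I would apply the known monotonicity of the operator perspective under positive unital (equivalently, of the transpose-normalized version under positive trace-preserving) maps — this is exactly the content underlying \cite[Proposition 3.30]{HiaiMosonyi2017} and Lemma \ref{lemma:geommean monotone} in the weighted-geometric-mean special case — to get
\[
\persp{f}\bz\map(\rho_{P,\ac}),\map(\sigma_{P,\ac})\jz\le\map\bz\persp{f}(\rho_{P,\ac},\sigma_{P,\ac})\jz,
\]
valid because $\rho_{P,\ac}^0=\sigma_{P,\ac}^0=P$ so the left side is an honest perspective on matched supports; here non-positivity of $f$ guarantees that $\persp{f}$ is the negative of a non-negative operator concave (hence operator monotone, super-multiplicative under positive maps) object, which is what makes the DPI direction come out as stated. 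Then I would use the monotonicity of the absolutely continuous part under positive maps, \eqref{abscont mon}, in the form $\map(A)_{\map(B),\ac}\ge\map(A_{B,\ac})$, together with joint monotonicity of $\persp{f}$ in its arguments (again for non-positive operator convex $f$, equivalently joint monotonicity of the operator concave $-\persp{f}$-type object) to replace $\map(\rho_{P,\ac})$ by the larger $\map(\rho)_{\map(\sigma),\ac}$ and similarly for $\sigma$, chaining the two inequalities. Care is needed with the direction of monotonicity: $\persp{f}$ for non-positive operator convex $f$ is jointly operator \emph{concave} and jointly operator \emph{monotone increasing}, so enlarging the arguments enlarges $\persp{f}$, and this is consistent with the claimed $\le$.

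The main obstacle I anticipate is bookkeeping the supports through the two reductions simultaneously rather than the analytic content: one must check that $\rho_{\sigma,\ac}$ and $\sigma_{\rho,\ac}$ have the \emph{same} support (both equal $\rho^0\wedge\sigma^0$, by \eqref{ac support}), that $\map$ sends this matched-support pair to another pair on which $\persp{f}(\map(\cdot),\map(\cdot))$ is well-defined via the matched-support formula, and that the two monotonicity steps — DPI for $\persp{f}$ and monotonicity \eqref{abscont mon} of the absolutely continuous part — can be composed without a support mismatch forcing the wrong branch of the convention for $\persp{f}$. A secondary technical point is the finiteness hypothesis $f(0^+),\tilde f(0^+)<+\infty$, which is exactly what licenses using the explicit closed forms of Corollary \ref{cor:oppersp explicit2} and the identities \eqref{perspective abscont}--\eqref{perspective abscont2} in the first place; with that in hand the inequality is the concatenation $\persp{f}(\map(\rho)_{\map(\sigma),\ac},\map(\sigma)_{\map(\rho),\ac})\ge^{?}\dots$ — here I must be careful to orient each step so that the composite reads $\le\map(\persp{f}(\rho_{P,\ac},\sigma_{P,\ac}))=\map(\persp{f}(\rho_{\sigma,\ac},\sigma_{\rho,\ac}))$, which I expect to work out once the sign conventions for non-positive $f$ are pinned down.
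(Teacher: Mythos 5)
Your overall plan coincides with the paper's two-stage argument (first pass from the absolutely continuous parts of the images to the images of the absolutely continuous parts, then apply a DPI for the perspective to the matched-support pair $(\rho_{P,\ac},\sigma_{P,\ac})$), but the step where you "chain the two inequalities" is justified by a false monotonicity claim, and as written the chain does not compose. For a non-positive operator convex $f$ with $f(0^+),\tilde f(0^+)<+\infty$, the function $-f$ is non-negative operator concave, hence operator monotone, and $\persp{(-f)}=-\persp{f}$ is (up to Kubo--Ando normalization) an operator connection, which is jointly monotone \emph{increasing}; consequently $\persp{f}$ is jointly operator \emph{convex} (not concave) and jointly monotone \emph{non-increasing} (not increasing). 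With your stated direction, $\map(\rho)_{\map(\sigma),\ac}\ge\map(\rho_{\sigma,\ac})$ and $\map(\sigma)_{\map(\rho),\ac}\ge\map(\sigma_{\rho,\ac})$ would give
$\persp{f}\bz\map(\rho)_{\map(\sigma),\ac},\map(\sigma)_{\map(\rho),\ac}\jz\ge\persp{f}\bz\map(\rho_{P,\ac}),\map(\sigma_{P,\ac})\jz$,
which cannot be combined with the DPI bound $\persp{f}\bz\map(\rho_{P,\ac}),\map(\sigma_{P,\ac})\jz\le\map\bz\persp{f}(\rho_{P,\ac},\sigma_{P,\ac})\jz$ to conclude the claimed $\le$: the two inequalities point in incompatible directions. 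What makes the argument work is precisely the opposite (true) fact, namely the joint \emph{anti}-monotonicity of $\persp{f}$, which turns \eqref{abscont mon} into
$\persp{f}\bz\map(\rho)_{\map(\sigma),\ac},\map(\sigma)_{\map(\rho),\ac}\jz\le\persp{f}\bz\map(\rho_{\sigma,\ac}),\map(\sigma_{\rho,\ac})\jz$,
after which the DPI finishes the proof. So the gap is a sign/direction error in a load-bearing step; the statement you need is true, but not for the reason you give, and your "consistency" check is what should have caught it.

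Once the direction is corrected, your proof is essentially the paper's. The paper realizes the anti-monotonicity slot by slot, using that both $f$ and $\tilde f$ are operator non-increasing and switching between the $f$- and $\tilde f$-representations of the perspective, and it then proves the matched-support DPI from scratch by applying the Hansen--Pedersen operator Jensen inequality to the unital positive map $\Psi(\valt)=\map(\sigma_{\rho,\ac})^{-1/2}\map\bz\sigma_{\rho,\ac}^{1/2}(\valt)\sigma_{\rho,\ac}^{1/2}\jz\map(\sigma_{\rho,\ac})^{-1/2}$ restricted to the commutative algebra generated by $\sigma_{\rho,\ac}^{-1/2}\rho_{\sigma,\ac}\sigma_{\rho,\ac}^{-1/2}$, rather than citing the DPI as known (the paper deliberately avoids invoking \eqref{perspective abscont} and the cited Proposition 3.30 so as to keep the proof self-contained). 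If you prefer to outsource that step, be careful that Lemma \ref{lemma:geommean monotone} is stated for means and gives the reversed inequality $\map\bz\rho\#_{\gamma}\sigma\jz\le\map(\rho)\#_{\gamma}\map(\sigma)$, so you must pass through $-f$ (Ando's inequality for connections under arbitrary positive linear maps) and negate to obtain $\persp{f}(\map(A),\map(B))\le\map(\persp{f}(A,B))$; no unitality or trace preservation is needed there.
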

\begin{proof}
Note that $\tilde f$ is also a non-positive operator convex function
(this follows from \cite{ENG,Effros,EH} and the relation between $\persp{f}$ and 
$\persp{\tilde f}$; see also \cite[Proposition A.1]{HiaiMosonyi2017}), and hence both $f$ and $\tilde f$ are 
operator non-increasing (see, e.g., \cite[Theorem V.2.5]{Bhatia}). Thus,
\begin{align*}
\persp{f}(\map(\rho)_{\map(\sigma),\mathrm{ac}},\map(\sigma)_{\map(\rho),\mathrm{ac}})
&=
\map(\sigma)_{\map(\rho),\mathrm{ac}}^{1/2}f\big(
\map(\sigma)_{\map(\rho),\mathrm{ac}}^{-1/2}
\underbrace{\map(\rho)_{\map(\sigma),\mathrm{ac}}}_{
\ge\map\bz \rho_{\sigma,\mathrm{ac}}\jz}
\map(\sigma)_{\map(\rho),\mathrm{ac}}^{-1/2}\big)
\map(\sigma)_{\map(\rho),\mathrm{ac}}^{1/2}\\
&\le
\map(\sigma)_{\map(\rho),\mathrm{ac}}^{1/2}f\big(
\map(\sigma)_{\map(\rho),\mathrm{ac}}^{-1/2}
\map(\rho_{\sigma,\mathrm{ac}})
\map(\sigma)_{\map(\rho),\mathrm{ac}}^{-1/2}\big)
\map(\sigma)_{\map(\rho),\mathrm{ac}}^{1/2}\\
&=
\map(\rho_{\sigma,\mathrm{ac}})^{1/2}
\tilde f\big(
\map(\rho_{\sigma,\mathrm{ac}})^{-1/2}
\underbrace{\map(\sigma)_{\map(\rho),\mathrm{ac}}}_{
\ge\map\bz\sigma_{\rho,\ac}\jz}
\map(\rho_{\sigma,\mathrm{ac}})^{-1/2}
\big)
\map(\rho_{\sigma,\mathrm{ac}})^{1/2}\\
&\le
\map(\rho_{\sigma,\mathrm{ac}})^{1/2}
\tilde f\big(
\map(\rho_{\sigma,\mathrm{ac}})^{-1/2}
\map\bz\sigma_{\rho,\ac}\jz
\map(\rho_{\sigma,\mathrm{ac}})^{-1/2}
\big)
\map(\rho_{\sigma,\mathrm{ac}})^{1/2}\\
&=
\map\bz\sigma_{\rho,\ac}\jz^{1/2}
f\bz
\map\bz\sigma_{\rho,\ac}\jz^{-1/2}
\map\bz\rho_{\sigma,\ac}\jz^{1/2}
\map\bz\sigma_{\rho,\ac}\jz^{-1/2}
\jz
\map\bz\sigma_{\rho,\ac}\jz^{1/2}\\
&=
\persp{f}\bz\map\bz\rho_{\sigma,\ac}\jz,
\map\bz\sigma_{\rho,\ac}\jz\jz,
\end{align*}
where all formulas are well-defined due to 
$\map(\rho_{\sigma,\ac})^0=
\map\bz\sigma_{\rho,\ac}\jz^0=\map(P)^0=
(\map(\rho)_{\map(\sigma),\mathrm{ac}})^0
=(\map(\sigma)_{\map(\rho),\mathrm{ac}})^0$, 
where $P:=\rho^0=\sigma^0$,
the inequalities follow due to \eqref{abscont mon},
and in the second and the third equalities we have used \eqref{persp trans}.
Now, using that 
\begin{align*}
\Psi(\valt):=\map\bz\sigma_{\rho,\ac}\jz^{-1/2}
\map\bz\sigma_{\rho,\ac}^{1/2}
(\valt)
\sigma_{\rho,\ac}^{1/2}\jz
\map\bz\sigma_{\rho,\ac}\jz^{-1/2}
\end{align*}
is a completely positive unital map on the commutative 
$C^*$-algebra generated by 
$\sigma_{\rho,\ac}^{-1/2}\rho_{\sigma,\ac}\sigma_{\rho,\ac}^{-1/2}$, 
the Kraus representation together with the operator Jensen inequality \cite{HansenPedersen2003} yields that 
\begin{align*}
f\bz\Psi\bz\sigma_{\rho,\ac}^{-1/2}\rho_{\sigma,\ac}\sigma_{\rho,\ac}^{-1/2}\jz\jz
\le
\Psi\bz f\bz
\sigma_{\rho,\ac}^{-1/2}\rho_{\sigma,\ac}\sigma_{\rho,\ac}^{-1/2}\jz\jz,
\end{align*}
which can be rewritten as 
\begin{align*}
\persp{f}\bz\map\bz\rho_{\sigma,\ac}\jz,
\map\bz\sigma_{\rho,\ac}\jz\jz
\le
\map\bz\persp{f}\bz\rho_{\sigma,\ac},\sigma_{\rho,\ac}\jz\jz,
\end{align*}
completing the proof.
\end{proof}

\begin{cor}\label{cor:Dfhat fdiv}
Let $f:\,(0,+\infty)\to(-\infty,0)$ be a non-positive operator convex function such that 
$f(0^+)=0=\tilde f(0^+)$. Then $\wtilde D_f$ 
defines a quantum extension of the classical $f$-divergence $D_f$
in the sense of Definition \ref{def:q-extension}, which is monotone under positive trace-preserving maps. 
\end{cor}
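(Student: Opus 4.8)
\textbf{Proof proposal for Corollary \ref{cor:Dfhat fdiv}.}

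The plan is to verify each of the three asserted properties of $\wtilde D_f$ in turn: (a) that $\wtilde D_f$ is well-defined, (b) that it is a quantum extension of the classical $f$-divergence $D_f$, and (c) that it is monotone under positive trace-preserving (PTP) maps. For (a), observe that under the hypotheses $f(0^+)=0=\tilde f(0^+)<+\infty$, so the expression in \eqref{Matsumoto upper bound} simplifies to $\wtilde D_f(\rho\|\sigma)=\Tr\persp{f}(\rho_{P,\ac},\sigma_{P,\ac})$ with $P:=\rho^0\wedge\sigma^0$, and this is unambiguously defined because $\rho_{P,\ac}$ and $\sigma_{P,\ac}$ have common support $P$ (by \eqref{ac support}), so $\persp{f}(\rho_{P,\ac},\sigma_{P,\ac})$ is given by the well-defined formula \eqref{oppersp equal supports}. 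Isometric covariance of $\wtilde D_f$ follows from \eqref{abscont isometric cov} together with isometric covariance of the operator perspective, so $\wtilde D_f$ is a quantum divergence.

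For (b), I would check the defining condition of Definition \ref{def:q-extension}: if $\rho=\sum_i\rho(i)\pr{i}$ and $\sigma=\sum_i\sigma(i)\pr{i}$ are diagonal in a common orthonormal basis, then $\rho_{P,\ac}=\sum_{i:\rho(i)\sigma(i)>0}\rho(i)\pr{i}$ and similarly for $\sigma_{P,\ac}$, so that $\Tr\persp{f}(\rho_{P,\ac},\sigma_{P,\ac})=\sum_{i:\rho(i)\sigma(i)>0}\sigma(i)f(\rho(i)/\sigma(i))$, which is exactly $D_f(\rho\|\sigma)$ as given in \eqref{classical fdiv def} (the two sums over $\{\rho(i)=0\}$ and $\{\sigma(i)=0\}$ vanish since $f(0^+)=\tilde f(0^+)=0$). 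Hence $\wtilde D_f$ agrees with $D_f$ on commuting families, which is precisely the requirement for being a quantum extension.

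For (c), I would combine Lemma \ref{lemma:maxfdiv mon} with the trace-preservation of $\map$. Indeed, applying $\Tr$ to the inequality in Lemma \ref{lemma:maxfdiv mon} gives
\begin{align*}
\wtilde D_f(\map(\rho)\|\map(\sigma))
&=\Tr\persp{f}(\map(\rho)_{\map(\sigma),\ac},\map(\sigma)_{\map(\rho),\ac})\\
&\le\Tr\map\bz\persp{f}(\rho_{\sigma,\ac},\sigma_{\rho,\ac})\jz
=\Tr\persp{f}(\rho_{\sigma,\ac},\sigma_{\rho,\ac})
=\wtilde D_f(\rho\|\sigma),
\end{align*}
where the last two equalities use that $\map$ is trace-preserving and that $\persp{f}(\rho_{\sigma,\ac},\sigma_{\rho,\ac})=\persp{f}(\rho_{P,\ac},\sigma_{P,\ac})$ by \eqref{perspective abscont} (or directly from the identities \eqref{KA explicit1}--\eqref{KA explicit2}-type reasoning specialized via \eqref{ac wrt support} and \eqref{iterated abscont}). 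One subtlety: Lemma \ref{lemma:maxfdiv mon} is stated for $\rho,\sigma$ with \emph{equal} support (its proof sets $P:=\rho^0=\sigma^0$); to handle general $\rho,\sigma\in\B(\hil)\pne$ I would first pass to the common support $P=\rho^0\wedge\sigma^0$, noting $\wtilde D_f(\rho\|\sigma)=\wtilde D_f(\rho_{P,\ac}\|\sigma_{P,\ac})$ by construction and $\map(\rho)_{\map(\sigma),\ac}=\map(\rho_{P,\ac})_{\map(\sigma_{P,\ac}),\ac}$ via \eqref{abscont mon} and the iteration relation \eqref{iterated abscont}, reducing the general case to the equal-support case. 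The main obstacle is precisely this reduction step — making sure that restricting to absolutely continuous parts commutes appropriately with the application of $\map$ so that Lemma \ref{lemma:maxfdiv mon} can be invoked in the form needed; once that bookkeeping is done, monotonicity is immediate.
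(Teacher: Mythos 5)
Your main argument coincides with the paper's proof: the commuting-case computation (which the paper leaves as ``clear''), isometric invariance via \eqref{abscont isometric cov}, and monotonicity by taking the trace of the operator inequality in Lemma \ref{lemma:maxfdiv mon} and using trace preservation. Since $\rho_{\sigma,\ac}=\rho_{P,\ac}$ and $\sigma_{\rho,\ac}=\sigma_{P,\ac}$ with $P=\rho^0\wedge\sigma^0$ by \eqref{ac wrt support}, the identification $\wtilde D_f(\rho\|\sigma)=\Tr\persp{f}(\rho_{\sigma,\ac},\sigma_{\rho,\ac})$ is immediate, and that part of your chain is correct.

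The problem is the ``subtlety'' you add at the end. First, it is unnecessary at the level of statements: Lemma \ref{lemma:maxfdiv mon} is formulated for arbitrary $\rho,\sigma\in\B(\hil)\pne$, not for equal supports (the line $P:=\rho^0=\sigma^0$ appears only inside its proof), so the corollary follows by direct invocation, exactly as in the paper. Second, the reduction you propose rests on a false identity: in general $\map(\rho)_{\map(\sigma),\ac}\ne\map(\rho_{P,\ac})_{\map(\sigma_{P,\ac}),\ac}$. For instance, on $\hil=\bC^2$ take $\rho=I$, $\sigma=\pr{0}$ and the PTP map $\map(X)=(\Tr X)\,I/2$; then the left-hand side equals $I$ while the right-hand side equals $I/2$. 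What \eqref{abscont mon} and the joint monotonicity of the absolutely continuous part actually give is only the inequality $\map(\rho)_{\map(\sigma),\ac}\ge\map(\rho_{P,\ac})_{\map(\sigma_{P,\ac}),\ac}$ (and the analogous one with $\rho$ and $\sigma$ exchanged). If you do insist on reducing to the equal-support case, this inequality is enough: by Lemma \ref{lemma:opconv to opmon}, $-f$ is a non-negative operator monotone function, so its perspective is jointly monotone in both arguments, hence $\Tr\persp{f}$ of the absolutely continuous parts can only decrease when the arguments increase; this yields $\wtilde D_f(\map(\rho)\|\map(\sigma))\le\wtilde D_f\bz\map(\rho_{P,\ac})\|\map(\sigma_{P,\ac})\jz\le\wtilde D_f(\rho_{P,\ac}\|\sigma_{P,\ac})=\wtilde D_f(\rho\|\sigma)$. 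So either drop the reduction step altogether or replace the claimed equality by this inequality argument; as written, that step is incorrect.
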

\begin{proof}
It is clear that $\wtilde D_f$ reduces to the classical $f$-divergence of the diagonals for jointly diagonalizable arguments. Isometric invariance of $\wtilde D_f$ follows from 
\eqref{abscont isometric cov}, and monotonicity from Lemma \ref{lemma:maxfdiv mon}.
\end{proof}

\begin{prop}
Let $f:\,(0,+\infty)\to\bR$ be an operator convex function such that 
$f(0^+),\tilde f(0^+)<+\infty$.  For any 
$\rho,\sigma\in\B(\hil)\pne$,
\begin{align}\label{Matsumoto explicit2}
D_f^{\max}(\rho\|\sigma)
&=
\wtilde D_f(\rho\|\sigma).
\end{align}
\end{prop}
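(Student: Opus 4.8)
The statement to prove is that for an operator convex $f:(0,+\infty)\to\bR$ with $f(0^+),\tilde f(0^+)<+\infty$, the maximal $f$-divergence equals the explicit quantity $\wtilde D_f(\rho\|\sigma)$ from \eqref{Matsumoto upper bound}. The general strategy is the usual two-sided sandwich: one inequality comes from exhibiting a reverse test, the other from monotonicity. Concretely, $D_f^{\max}(\rho\|\sigma)\le\wtilde D_f(\rho\|\sigma)$ was already established by the explicit reverse test $(\hat p,\hat q,\hat\rt)$ in \eqref{optimal reverse test2}, so only the reverse inequality $D_f^{\max}(\rho\|\sigma)\ge\wtilde D_f(\rho\|\sigma)$ needs genuine work, and the whole point is to avoid invoking \eqref{perspective abscont} / Corollary \ref{cor:oppersp explicit2}, i.e.\ to give a self-contained argument.

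\textbf{Step 1: reduce to a non-positive function.} First I would handle the case distinctions on $f(0^+),\tilde f(0^+)$. If $f$ is genuinely operator convex with both one-sided limits finite, write $f = f_0 + \ell$, where $\ell(x) = f(0^+) + \tilde f(0^+)\,x - g(x)$ decomposition already used in Lemma \ref{lemma:opconv to opmon}; more precisely set $g(x):=f(0^+)+\tilde f(0^+)x-f(x)$, which by Lemma \ref{lemma:opconv to opmon} is non-negative and operator concave, hence $-g$ is a non-positive operator convex function with $(-g)(0^+)=0=\widetilde{(-g)}(0^+)$. Both $D_f^{\max}$ and $\wtilde D_f$ are affine in $f$ under the additive perturbations $x\mapsto a+bx$ (for $D_f^{\max}$ this follows from the scaling/affinity of the classical $f$-divergence under such shifts and the fact that the optimization over reverse tests does not see the constant and linear parts; for $\wtilde D_f$ it is immediate from \eqref{Matsumoto upper bound} and $\Tr\rho_{P,\ac}$, $\Tr\sigma_{P,\ac}$ being fixed). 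So it suffices to prove the identity for $f$ non-positive operator convex with $f(0^+)=0=\tilde f(0^+)$.

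\textbf{Step 2: lower bound via monotonicity, using Corollary \ref{cor:Dfhat fdiv}.} For such $f$, Corollary \ref{cor:Dfhat fdiv} says $\wtilde D_f$ is a quantum extension of the classical $f$-divergence that is monotone under positive trace-preserving maps. Now let $(p,q,\rt)$ be any reverse test for $(\rho,\sigma)$, i.e.\ $\rt:\ell^\infty(\I)\to\B(\hil)$ positive trace-preserving with $\rt(p)=\rho$, $\rt(q)=\sigma$. Applying monotonicity of $\wtilde D_f$ under $\rt$ and using that $\wtilde D_f$ reduces to the classical $f$-divergence on the commutative algebra $\ell^\infty(\I)$ gives
\begin{align*}
D_f(p\|q) = \wtilde D_f(p\|q) \ge \wtilde D_f(\rt(p)\|\rt(q)) = \wtilde D_f(\rho\|\sigma).
\end{align*}
Taking the infimum over all reverse tests yields $D_f^{\max}(\rho\|\sigma)\ge\wtilde D_f(\rho\|\sigma)$. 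Combined with the reverse-test upper bound $D_f^{\max}(\rho\|\sigma)\le D_f(\hat p\|\hat q)=\wtilde D_f(\rho\|\sigma)$ from \eqref{Matsumoto upper bound}, this gives \eqref{Matsumoto explicit2}.

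\textbf{Step 3: wrapping up the general case and checking edge cases.} Finally I would unwind the reduction of Step 1: since $D_f^{\max}$ and $\wtilde D_f$ differ by the same affine-in-$f$ term when $f$ is shifted by $a+bx$, the identity for $-g$ (or equivalently $g$, after a sign flip handled by the fact that $D^{\max}_{-g}=-D^{\max}_g$ holds only up to the subtlety that $D^{\max}$ is defined via inf of $D_f$, so more carefully I would write $f = (f(0^+)+\tilde f(0^+)\,\id) - g$ and use affinity directly, never negating $D^{\max}$) transfers to $f$. The main obstacle I anticipate is precisely being careful that the ``affinity in $f$'' of $D_f^{\max}$ is legitimate: the maximal $f$-divergence is defined as an \emph{infimum} over reverse tests, and one must check that adding $a\cdot\mathbf 1 + b\cdot\id$ to $f$ changes $D_f(p\|q)$ by exactly $a\sum_i q(i) + b\sum_i p(i) = a\Tr\sigma + b\Tr\rho$ uniformly over all reverse tests (this uses $\rt$ trace-preserving, so $\sum p(i)=\Tr\rho$ and $\sum q(i)=\Tr\sigma$ for every reverse test), hence the infimum shifts by the same constant; and that the same shift appears in $\wtilde D_f$ via $\Tr\persp{(\mathrm{affine})}(\rho_{P,\ac},\sigma_{P,\ac}) + (\text{affine terms}) = a\Tr\sigma + b\Tr\rho$ after using $\Tr\rho_{P,\ac} + \Tr(\rho-\rho_{P,\ac}) = \Tr\rho$ and similarly for $\sigma$. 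Once that bookkeeping is in place the proof is short, since all the analytic content is already in Corollary \ref{cor:Dfhat fdiv} (whose proof in turn rests on Lemma \ref{lemma:maxfdiv mon} and the operator Jensen inequality).
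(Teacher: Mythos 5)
Your proof is correct and follows essentially the same route as the paper's: the reverse test \eqref{optimal reverse test2} gives the upper bound, the decomposition $f(x)=f(0^+)+\tilde f(0^+)x-g(x)$ from Lemma \ref{lemma:opconv to opmon} reduces to the case $f(0^+)=0=\tilde f(0^+)$, and there the lower bound is exactly the monotonicity argument of Corollary \ref{cor:Dfhat fdiv} applied to an arbitrary reverse test (which is what the paper's citation of Example \ref{ex:minmax extension} encodes). Your extra bookkeeping about the affine parts shifting both $D_f^{\max}$ and $\wtilde D_f$ by $f(0^+)\Tr\sigma+\tilde f(0^+)\Tr\rho$ is precisely the computation the paper carries out in its displayed equations.
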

\begin{proof}
According to \eqref{Matsumoto upper bound}, we only need to prove 
$D_f^{\max}(\rho\|\sigma)\ge
\wtilde D_f(\rho\|\sigma)$.
If $f(0^+)=0=\tilde f(0^+)$ then this follows immediately from Corollary \ref{cor:Dfhat fdiv} 
 as explained in Example \ref{ex:minmax extension}.

Next, assume that $f(0^+)$ and $\tilde f(0^+)$ are finite, and let 
$g$ be defined as in \eqref{g def}, so that 
$f(x)=f(0^+)+\tilde f(0^+)x-g(x)$, $x\in[0,+\infty)$. 
For any reverse test $(\hat p,\hat q)$ for $(\rho,\sigma)$, 
\begin{align*}
D_f(\hat p\|\hat q)
=
\underbrace{D_{f(0^+)}(\hat p\|\hat q)}_{=f(0^+)\Tr\sigma}
+
\underbrace{D_{\tilde f(0^+)\id}(\hat p\|\hat q)}_{=\tilde f(0^+)\Tr\rho}
+
D_{-g}(\hat p\|\hat q),
\end{align*}
whence 
\begin{align*}
D_f^{\max}(\rho\|\sigma)
&=
f(0^+)\Tr\sigma+\tilde f(0^+)\Tr\rho
+\underbrace{D_{-g}^{\max}(\rho\|\sigma)}_{=\wtilde D_{-g}(\rho\|\sigma))}\\
&=
f(0^+)\Tr\sigma+\tilde f(0^+)\Tr\rho-
\Tr \sigma_{P,\mathrm{ac}}^{1/2}g\bz \sigma_{P,\mathrm{ac}}^{-1/2}\rho_{P,\mathrm{ac}}
\sigma_{P,\mathrm{ac}}^{-1/2}\jz\sigma_{P,\mathrm{ac}}^{1/2}\\
&=
\wtilde D_f(\rho\|\sigma),
\end{align*}
where the second equality follows from the special case above, and the
last equality follows the same way as in the proof of
Corollary \ref{cor:oppersp explicit2}.
\end{proof}

\section{Proof of Lemma \ref{lemma:jointconc from mon}}
\label{sec:jointconc}

\begin{proof}
(i) Let $\W^{(i)}\in\cD_{\hil}(\divv)$,
$i\in[r]$, let $(t_i)_{i\in[r]}$ be a probability distribution, and let
$(\ket{i})_{i\in[r]}$ be an orthonormal system in some Hilbert space $\kil$. 
Then $\bz\sum_{i=0}^{r-1} t_i\Wx{x}^{(i)}\otimes\pr{i}\jz_{x\in\X}\in
\cD_{\hil\otimes\kil}(\divv)$ due to the isometric invariance of $\divv$ and the 
convexity of $\cD_{\hil\otimes\kil}(\divv)$, and 
\begin{align*}
\divv\bz\bz\sum_{i=0}^{r-1} t_i\Wx{x}^{(i)}\jz_{x\in\X}\jz
&=
\divv\bz\bz\Tr_{\kil}\sum_{i=0}^{r-1} t_i\Wx{x}^{(i)}\otimes\pr{i}\jz_{x\in\X}\jz\\
&\ge
\divv\bz\bz\sum_{i=0}^{r-1} t_i\Wx{x}^{(i)}\otimes\pr{i}\jz_{x\in\X}\jz\\
&\ge
\sum_{i=0}^{r-1}\divv\bz\bz t_i\Wx{x}^{(i)}\otimes\pr{i}\jz_{x\in\X}\jz\\
&=
\sum_{i=0}^{r-1} t_i\divv\bz\bz\Wx{x}^{(i)}\otimes\pr{i}\jz_{x\in\X}\jz\\
&=
\sum_{i=0}^{r-1} t_i\divv\bz\bz\Wx{x}^{(i)}\jz_{x\in\X}\jz,
\end{align*}
where the first equality is obvious, the first inequality is by the assumption that 
$\divv$ is monotone non-decreasing under partial traces,
the second inequality is due to the block superadditivity of $\divv$, the second equality follows from homogeneity, and the last equality is due to the 
isometric invariance of $\divv$.
This proves joint concavity, and joint superadditivity follows from it immediately due to 
homogeneity.
The assertion about joint convexity and joint subadditivity follows the same way, or by applying the above to $-\divv$ in place of $\divv$.

(ii)
Let $W\in\cD_{\hil}(\divv)$ and $\map:\,\B(\hil)\to\B(\kil)$ be a CPTP map
such that $\map(W)\in\cD_{\kil}(\divv)$.  
Let $\map(.)=\Tr_E V(.)V^*$ be a Stinespring representation of $\map$, where 
$V:\,\hil\to\hil_E\otimes\kil$ is an isometry.
Let $(U_{ab})_{a,b=0}^{d_E-1}$ be the discrete Weyl unitaries in some ONB of $\hil_E$, so that 
$(1/d_E^2)\sum_{a,b=0}^{d_E-1}U_{a,b}(.)U_{a,b}^*=(1/d_E)I_E\Tr(.)$
Then 
\begin{align*}
\divv(\map(W))
&=
\divv\bz\Tr_E VW V^*\jz\\
&=
\divv\bz (1/d_E)I_E\otimes\Tr_E VW V^*\jz\\
&=
\divv\bz \frac{1}{d_E^2}\sum_{a,b=0}^{d_E-1}(U_{a,b}\otimes I_{\kil}) VW V^* (U_{a,b}\otimes I_{\kil})^*\jz\\
&\ge
\frac{1}{d_E^2}\sum_{a,b=0}^{d_E-1}
\divv\bz (U_{a,b}\otimes I_{\kil}) VW V^*(U_{a,b}\otimes I_{\kil})^*\jz\\
&=
\frac{1}{d_E^2}\sum_{a,b=0}^{d_E-1}
\divv\bz VW V^* \jz
=
\divv\bz W \jz,
\end{align*}
where the second equality is due to stability,
the inequality is due to the joint concavity of $\divv$, 
and the fourth and the fifth equalities are due to isomeric invariance.
Again, the assertion about the opposite inequality under the assumption of joint convexity follows by applying the above to $-\divv$ in place of $\divv$.
\end{proof}

\section{Proof of Lemma \ref{lemma:Dmax derivative}.}
\label{sec:proof of lemma}

\begin{proof}
Let us introduce the notation 
\begin{align*}
(\rho/\sigma):=\sigma^{-1/2}\rho\sigma^{-1/2}=\sum_i\lambda_i P_i.
\end{align*}
Let 
\begin{align}\label{rho per sigma polar}
X:=\rho^{1/2}\sigma^{-1/2},\ds\ds\text{and}\ds\ds
X=U|X|=U\sum_i\lambda_i^{1/2} P_i
\end{align}
be its polar decomposition. Then
\begin{align}\label{sigma per rho spectral}
\rho^{-1/2}\sigma\rho^{-1/2}=(X\inv)^*(X\inv)=
U\bz\sum_i\lambda_i\inv P_i\jz U^*=\sum_i\lambda_i\inv \underbrace{UP_i U^*}_{=:R_i}
\end{align}
is a spectral decomposition of $\rho^{-1/2}\sigma\rho^{-1/2}$. 
Recall from \eqref{normalized alpha-mean} that
\begin{align*}
\what{\sigma\#_{\alpha}\rho}:=\frac{1}{Q_{\alpha}^{\max}}\sigma\#_{\alpha}\rho
=\frac{1}{Q_{\alpha}^{\max}}\rho\#_{1-\alpha}\sigma,
\end{align*}
where $Q_{\alpha}^{\max}:=Q_{\alpha}^{\max}(\rho\|\sigma)$, 
and note the following identities:
\begin{align*}
\sigma^{-1/2}\what{\sigma\#_{\alpha}\rho}\,\sigma^{-1/2}
&=
\frac{1}{Q_{\alpha}^{\max}}(\sigma^{-1/2}\rho\sigma^{-1/2})^{\alpha}=
\sum_i\frac{\lambda_i^{\alpha}}{Q_{\alpha}^{\max}}P_i,\\
\rho^{-1/2}\what{\sigma\#_{\alpha}\rho}\,\rho^{-1/2}
&=
\frac{1}{Q_{\alpha}^{\max}}(\rho^{-1/2}\sigma\rho^{-1/2})^{1-\alpha}=
\sum_i\frac{\lambda_i^{\alpha-1}}{Q_{\alpha}^{\max}}R_i,
\end{align*}
where in the last line we used \eqref{sigma sharp rho}.

Recall that $\pi_{\hil}=I/d$ denotes the maximally mixed state on $\hil$. We have
\begin{align*}
&\frac{d}{dt}D^{\max}\bz (1-t)\what{\sigma\#_{\alpha}\rho}+t\pi_{\hil}\big\|\sigma\jz\Big\vert_{t=0}\\
&\ds=
\Tr
\underbrace{\sigma^{1/2}\bz \pi_{\hil}-\what{\sigma\#_{\alpha}\rho}\jz\sigma^{-1/2}}_{
=I/d-\frac{1}{Q_{\alpha}^{\max}}\sigma(\sigma^{-1/2}\rho\sigma^{-1/2})^{\alpha}}
\log\underbrace{\sigma^{-1/2}\what{\sigma\#_{\alpha}\rho}\,\sigma^{-1/2}}_{=
\frac{1}{Q_{\alpha}^{\max}}(\sigma^{-1/2}\rho\sigma^{-1/2})^{\alpha}}\\
&\ds +\Tr\underbrace{\sigma^{1/2}\what{\sigma\#_{\alpha}\rho}\,\sigma^{-1/2}}_{=
\frac{1}{Q_{\alpha}^{\max}}\sigma(\sigma^{-1/2}\rho\sigma^{-1/2})^{\alpha}}
\sum_{i,j}\log^{[1]}\bz\frac{\lambda_i^{\alpha}}{Q_{\alpha}^{\max}},\frac{\lambda_j^{\alpha}}{Q_{\alpha}^{\max}}\jz
\underbrace{P_i\sigma^{-1/2}\bz\pi_{\hil}-\what{\sigma\#_{\alpha}\rho}\jz \sigma^{-1/2}P_j}_{=d\inv P_i\sigma\inv P_j-\frac{1}{Q_{\alpha}^{\max}}\delta_{i,j}\lambda_i^{\alpha}P_i}\\
&\ds=
-\log Q_{\alpha}^{\max}+\frac{\alpha}{d}\Tr\log(\sigma^{-1/2}\rho\sigma^{-1/2})\\
&\ds\ds+\frac{1}{Q_{\alpha}^{\max}}(\log Q_{\alpha}^{\max})
\underbrace{\Tr\sigma(\sigma^{-1/2}\rho\sigma^{-1/2})^{\alpha}}_{=Q_{\alpha}^{\max}}
-\frac{\alpha}{Q_{\alpha}^{\max}}\Tr\sigma(\sigma^{-1/2}\rho\sigma^{-1/2})^{\alpha}
\log(\sigma^{-1/2}\rho\sigma^{-1/2})\\
&\ds\ds+
\frac{1}{dQ_{\alpha}^{\max}}
\sum_{i,j}\log^{[1]}\bz\frac{\lambda_i^{\alpha}}{Q_{\alpha}^{\max}},\frac{\lambda_j^{\alpha}}{Q_{\alpha}^{\max}}\jz
\Tr\sigma\underbrace{(\sigma^{-1/2}\rho\sigma^{-1/2})^{\alpha}P_i\sigma\inv P_j}_{=
\lambda_i^{\alpha}P_i\sigma\inv P_j}\\
&\ds\ds-
\frac{1}{Q_{\alpha}^{\max}}\Tr\sigma(\sigma^{-1/2}\rho\sigma^{-1/2})^{\alpha}
\sum_i\underbrace{\log^{[1]}\bz\frac{\lambda_i^{\alpha}}{Q_{\alpha}^{\max}},\frac{\lambda_i^{\alpha}}{Q_{\alpha}^{\max}}\jz\frac{\lambda_i^{\alpha}}{Q_{\alpha}^{\max}}}_{=1}P_i\\
&\ds=
\frac{\alpha}{d}\Tr\log(\sigma^{-1/2}\rho\sigma^{-1/2})
-\frac{\alpha}{Q_{\alpha}^{\max}}\Tr\sigma(\sigma^{-1/2}\rho\sigma^{-1/2})^{\alpha}
\log(\sigma^{-1/2}\rho\sigma^{-1/2})\\
&\ds\ds+\frac{1}{dQ_{\alpha}^{\max}}
\sum_{i,j}\log^{[1]}\bz\frac{\lambda_i^{\alpha}}{Q_{\alpha}^{\max}},\frac{\lambda_j^{\alpha}}{Q_{\alpha}^{\max}}\jz\lambda_i^{\alpha}
\Tr\sigma P_i\sigma\inv P_j-
\frac{1}{Q_{\alpha}^{\max}}\underbrace{\Tr\sigma(\sigma^{-1/2}\rho\sigma^{-1/2})^{\alpha}}_{
=Q_{\alpha}^{\max}}\,.
\end{align*}
An exactly analogous calculation yields 
(by exchanging $\rho$ and $\sigma$ and replacing $\alpha$ with $1-\alpha$, and noting 
\eqref{sigma per rho spectral})
\begin{align*}
&\frac{d}{dt}D^{\max}\bz (1-t)\what{\sigma\#_{\alpha}\rho}+t\pi_{\hil}\big\|\rho\jz\Big\vert_{t=0}\\
&\ds=
\frac{1-\alpha}{d}\Tr\log(\rho^{-1/2}\sigma\rho^{-1/2})
-\frac{1-\alpha}{Q_{\alpha}^{\max}}\Tr\rho(\rho^{-1/2}\sigma\rho^{-1/2})^{1-\alpha}
\log(\rho^{-1/2}\sigma\rho^{-1/2})\\
&\ds\ds+\frac{1}{dQ_{\alpha}^{\max}}
\sum_{i,j}\log^{[1]}\bz\frac{\lambda_i^{\alpha-1}}{Q_{\alpha}^{\max}},\frac{\lambda_j^{\alpha-1}}{Q_{\alpha}^{\max}}\jz\lambda_i^{\alpha-1}
\Tr\rho R_i\rho\inv R_j-1.
\end{align*}
Thus,
\begin{align}
\partial_{\pi_{\hil}}&=\frac{d}{dt}\left[\alpha D^{\max}\bz (1-t)\what{\sigma\#_{\alpha}\rho}+t\pi_{\hil}\big\|\rho\jz
+
(1-\alpha)D^{\max}\bz (1-t)\what{\sigma\#_{\alpha}\rho}+t\pi_{\hil}\big\|\sigma\jz\right]
\Bigg\vert_{t=0}\nn\\
&\ds=
\frac{\alpha(1-\alpha)}{d}
\underbrace{\Big[\underbrace{\Tr\log(\rho^{-1/2}\sigma\rho^{-1/2})}_{=\sum_i\log\lambda_i\inv}+
\underbrace{\Tr\log(\sigma^{-1/2}\rho\sigma^{-1/2})}_{=\sum_i\log\lambda_i}
\Big]}_{=0}\nn\\
&\ds\ds
-\frac{\alpha(1-\alpha)}{Q_{\alpha}^{\max}}
\underbrace{\Big[\Tr\rho(\rho^{-1/2}\sigma\rho^{-1/2})^{1-\alpha}\log(\rho^{-1/2}\sigma\rho^{-1/2})
+\Tr\sigma(\sigma^{-1/2}\rho\sigma^{-1/2})^{\alpha}
\log(\sigma^{-1/2}\rho\sigma^{-1/2})
\Big]}_{=0}\nn\\
&\ds\ds+
\frac{\alpha}{d}
\sum_{i,j}\log^{[1]}\bz\lambda_i^{\alpha-1},\lambda_j^{\alpha-1}\jz\lambda_i^{\alpha-1}
\Tr\rho R_i\rho\inv R_j+\frac{1-\alpha}{d}
\sum_{i,j}\log^{[1]}\bz\lambda_i^{\alpha},\lambda_j^{\alpha}\jz\lambda_i^{\alpha}
\Tr\sigma P_i\sigma\inv P_j\s-1\nn\\
&\ds=
\frac{\alpha}{d}
\sum_{i,j}\log^{[1]}\bz\lambda_i^{\alpha-1},\lambda_j^{\alpha-1}\jz\lambda_i^{\alpha-1}
\Tr\rho R_i\rho\inv R_j
+\frac{1-\alpha}{d}
\sum_{i,j}\log^{[1]}\bz\lambda_i^{\alpha},\lambda_j^{\alpha}\jz\lambda_i^{\alpha}
\Tr\sigma P_i\sigma\inv P_j\s-1,\label{derivative proof1}
\end{align}
where the first expression above is equal to $0$ due to \eqref{sigma per rho spectral}, and the second expression is equal to $0$ according to
\eqref{max Renyi var proof 3}.

Note that by \eqref{rho per sigma polar}, 
\begin{align*}
U=X|X|\inv=\rho^{1/2}\sigma^{-1/2}(\rho/\sigma)^{-1/2},
\end{align*}
whence
\begin{align*}
\begin{array}{ccc}
U^* &=& (\rho/\sigma)^{-1/2}\sigma^{-1/2}\rho^{1/2}\\
\verteq & & \\
U\inv &=& (\rho/\sigma)^{1/2}\sigma^{1/2}\rho^{-1/2},
\end{array}
\end{align*}
which in turn yields
\begin{align*}
U=(U\inv)^*=\rho^{-1/2}\sigma^{1/2}(\rho/\sigma)^{1/2}.
\end{align*}
Thus,
\begin{align*}
\Tr\rho R_i\rho\inv R_j&=\Tr\rho (UP_iU^*)\rho\inv (UP_jU^*)\\
&=
\Tr\rho\underbrace{\rho^{-1/2}\sigma^{1/2}(\rho/\sigma)^{1/2}}_{=U}P_i
\underbrace{(\rho/\sigma)^{-1/2}\sigma^{-1/2}\rho^{1/2}}_{=U^*}\rho\inv
\underbrace{\rho^{1/2}\sigma^{-1/2}(\rho/\sigma)^{-1/2}}_{=U}P_j
\underbrace{(\rho/\sigma)^{1/2}\sigma^{1/2}\rho^{-1/2}}_{=U^*}\\
&=\Tr\sigma^{1/2}\underbrace{(\rho/\sigma)^{1/2}P_i(\rho/\sigma)^{-1/2}}_{=P_i}
\sigma\inv
\underbrace{(\rho/\sigma)^{-1/2}P_j(\rho/\sigma)^{1/2}}_{=P_j}\sigma^{1/2}\\
&=
\Tr\sigma P_i\sigma\inv P_j.
\end{align*}
Writing this back into \eqref{derivative proof1}, we get 
\begin{align}\label{derivative1}
\partial_{\pi_{\hil}}
&=
-1+\frac{1}{d}\sum_{i,j}\Tr\sigma P_i\sigma\inv P_j
\underbrace{\Big[
\alpha\log^{[1]}\bz\lambda_i^{\alpha-1},\lambda_j^{\alpha-1}\jz\lambda_i^{\alpha-1}
+(1-\alpha)\log^{[1]}\bz\lambda_i^{\alpha},\lambda_j^{\alpha}\jz\lambda_i^{\alpha}
\Big]}_{=:\Lambda_{\alpha,i,j}}.
\end{align}
It follows by a straightforward computation that $\Lambda_{\alpha,i,j}$ can be written as in 
\eqref{Lambda def}.
Note that $\Lambda_{\alpha}$ is symmetric, i.e., 
$\Lambda_{\alpha,i,j}=\Lambda_{\alpha,j,i}$. Exchanging the indices $i$ and $j$ in 
\eqref{derivative1} yields \eqref{Dmax derivative1}.
\end{proof}

\newpage

\section{Different approaches to multi-variate R\'enyi divergences}
\label{sec:multiRenyi}

Figure \ref{fig:multiRenyi} below shows some relations between $2$-variable and multi-variate
weighted geometric means, R\'enyi divergences, weighted power means, divergence radii and 
centers, and fixpoint equations, an arrow indicating that one quantity can be obtained from 
the other. These work when all operators are commuting; non-commutative extensions of any of these quantities may be obtained by extending any quantity in the diagram to non-commutative variables, and trying to follow the arrows to obtain non-commutative extensions of other quantities. 
For instance, a non-commutative multi-variate weighted geometric mean (the so-called Karcher mean \cite{Bhatia_Holbrook2006,Moakher_matrixmean}) was obtained in \cite{Lim_Palfia2012} by taking the $\alpha\to0$ limit of 
the solution of the fixed point equation $\gamma=\sum_xP(x)\gamma\#_{\alpha}W_x$.
In Sections \ref{sec:barycentric}--\ref{sec:ex} of this paper we focused on the top right corner of the diagram. One might try the same procedure starting with a different 
$2$-variable weighted geometric mean; for instance, 
Proposition A.24 and Remark A.27 in \cite{MO-cqconv-cc} show that 
the Tsallis divergence center and the solution of the fixed point equation 
corresponding to the Petz-type weighted geometric mean 
$G_{\alpha,1}(\rho\|\sigma):=\sigma^{\frac{1-\alpha}{2}}\rho^{\alpha} \sigma^{\frac{1-\alpha}{2}}$
give the same result $\bz\sum_xP(x)W_x^{\alpha}\jz^{1/\alpha}$, and it is easy to see that 
(at least when $W_x^0=I$, $x\in\supp P$) its limit at $\alpha\to 1$ is 
$G_P^{\DU}(W)=\exp(\sum_x P(x)\log W_x)$.
This shows, in particular, that the diagram is not commutative in the quantum case. 
Another example of this is that the solution of the fixpoint equation might differ from 
the Tsallis divergence center corresponding to the same 
non-commutative $2$-variable weighted geometric mean; see, e.g., \cite{PV_Hellinger} for the case $G_{\alpha}=\#_{\alpha}$.

\vspace{1cm}

\begin{figure}[h]
\caption{Different approaches to multi-variate R\'enyi divergences}
\label{fig:multiRenyi}

\vspace{1cm}

\includegraphics[trim = 0cm 0cm 0cm 0cm, clip, width=16cm]{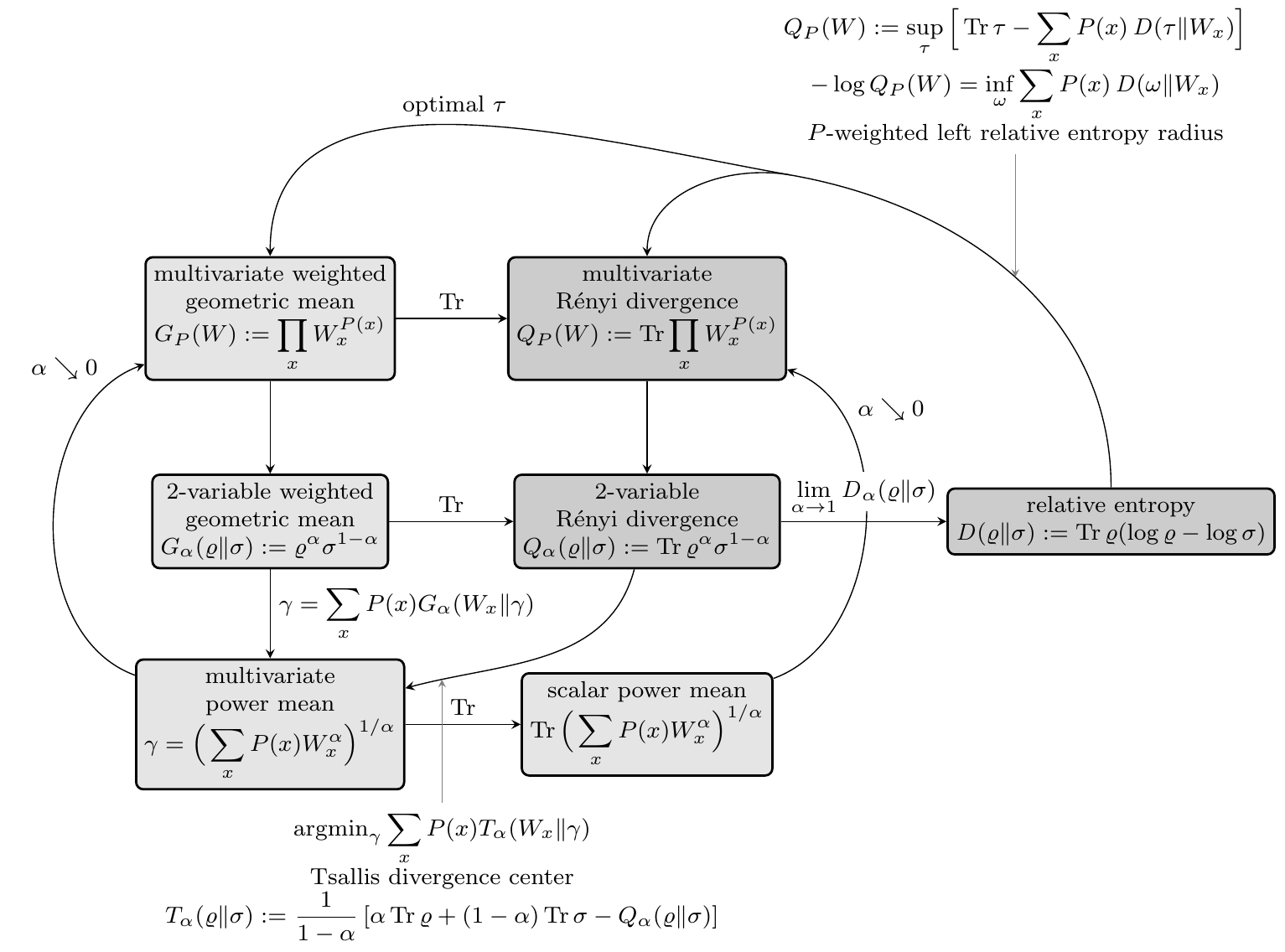}

\end{figure}

\newpage

\section{Order of relative entropies and R\'enyi divergences}
\label{sec:ordering}

Figure \ref{fig:barycentric} below illustrates the known relations between
the most relevant relative entropies and $2$-variable R\'enyi divergences studied previously 
in the literature, and the new relative entropies and R\'enyi divergences
introduced in this paper. A divergence higher in the picture gives a higher value 
when evaluated on a given pair of inputs than a divergence lower in the picture with the same 
$\alpha$ coordinate. 
It is easy to see from the Araki-Lieb-Thirring inequality \cite{Araki,LT} that the 
R\'enyi $(\alpha,z)$-divergences are monotone increasing in the $z$ parameter when 
$\alpha\in(0,1)$, and monotone decreasing when $\alpha>1$, 
(see, e.g., \cite{LinTomamichel15}), which is why 
the $z$ coordinate is transformed into $1/z$
for $\alpha>1$ in the representation of the R\'enyi $(\alpha,z)$-divergences. 
The diagonally shaded area shows the region in the $(\alpha,z)$-plane where the 
R\'enyi $(\alpha,z)$-divergences are monotone \cite{Zhang2018}.
The vertically shaded area is an illustration of the position of the barycentric R\'enyi 
divergences compared to the R\'enyi $(\alpha,z)$-divergences when the 
generating relative entropies are monotone and additive, and hence are between 
$\DU$ and $D^{\max}$.

\vspace{1cm}

\begin{figure}[h]
\caption{Order of $2$-variable quantum R\'enyi divergences}
\label{fig:barycentric}
\vspace{1cm}

\includegraphics[trim = 0cm 0cm 0cm 0cm, clip, width=13cm]{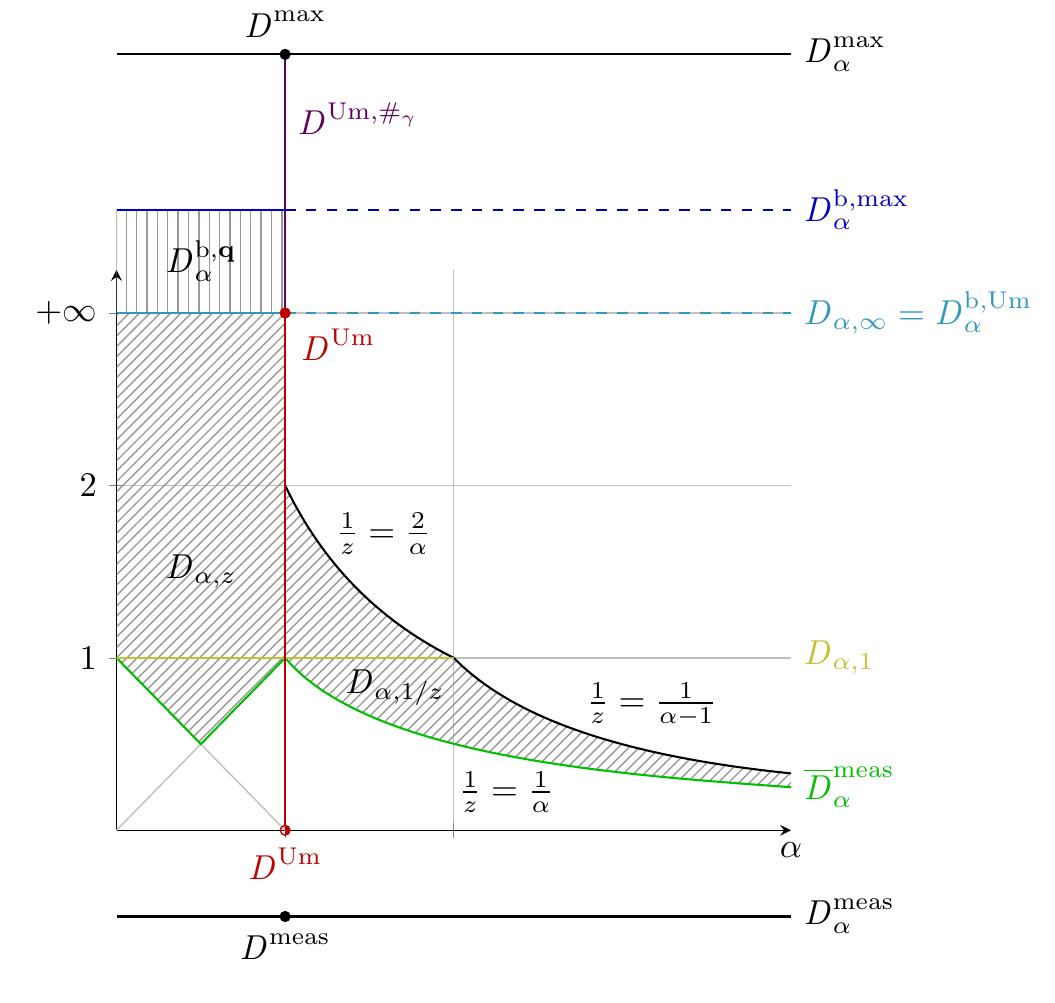}

\end{figure}

\newpage

\bibliography{bibliography240102}

\end{document}